\numberwithin{equation}{section} 
\numberwithin{table}{section} 
\numberwithin{figure}{section} 
\theoremstyle{plain}
\newtheorem{thm}{Theorem}[section]
\newtheorem{defn}[thm]{Definition}
\newtheorem{lem}[thm]{Lemma}
\newtheorem{cor}[thm]{Corrolary}
\newtheorem{prop}[thm]{Proposition}
\newtheorem{assumption}[thm]{Assumption}
\newtheorem{remark}[thm]{Remark}
\theoremstyle{nonumberplain}
\newtheorem{example}{Example}
\newtheorem{proof}{Proof}
\SetMathAlphabet{\mathcal}{normal}{OMS}{cmsy}{m}{n} 
\SetMathAlphabet{\mathcal}{bold}{OMS}{cmsy}{m}{n} 
\colorlet{chapter}{black!75}
\renewcommand*{\chapterformat}{%
\begingroup
\setlength{\unitlength}{1mm}%
\begin{picture}(20,40)(0,5)%
\setlength{\fboxsep}{0pt}%
\put(20,15){\line(1,0){\dimexpr
\textwidth-20\unitlength\relax\@gobble}}%
\put(0,0){\makebox(20,20)[r]{%
\fontsize{28\unitlength}{28\unitlength}\selectfont\thechapter
\kern-.04em
}}%
\put(20,15){\makebox(\dimexpr
\textwidth-20\unitlength\relax\@gobble,\ht\strutbox\@gobble)[l]{%
\ \normalsize\color{black}\chapapp~\thechapter\autodot
}}%
\end{picture} 
\endgroup
}
\newcolumntype{Y}{>{\small\raggedright}X} 
\providecommand{\FourierTrafo}{\mathfrak{F}}
\providecommand{\sympForm}{\sigma}
\providecommand{\Pspace}{\Xi}
\providecommand{\Cont}{\mathcal{C}}
\providecommand{\BCont}{\mathcal{BC}}
\providecommand{\Cpol}{\Cont_{\mathrm{pol}}}
\providecommand{\Fourier}{\FourierTrafo}
\providecommand{\spec}{\mathrm{spec}}
\providecommand{\hooklongrightarrow}{\lhook\joinrel\longrightarrow}
\providecommand{\R}{\mathbb{R}}
\providecommand{\C}{\mathbb{C}}
\providecommand{\T}{\mathbb{T}}
\renewcommand{\R}{\mathbb{R}}
\renewcommand{\C}{\mathbb{C}}
\renewcommand{\T}{\mathbb{T}}
\providecommand{\ie}{i.~e.~}
\providecommand{\eg}{e.~g.~}
\providecommand{\ie}{i.~e.~}
\providecommand{\eg}{e.~g.~}
\providecommand{\R}{\mathbb{R}}
\providecommand{\C}{\mathbb{C}}
\renewcommand{\C}{\mathbb{C}}
\providecommand{\T}{\mathbb{T}}
\renewcommand{\T}{\mathbb{T}}
\providecommand{\N}{\mathbb{N}}
\providecommand{\Z}{\mathbb{Z}}
\renewcommand{\Re}{\mathrm{Re} \,}
\renewcommand{\Im}{\mathrm{Im} \,}
\providecommand{\Hil}{\mathcal{H}}
\providecommand{\eps}{\varepsilon}
\providecommand{\Cont}{\mathcal{C}}
\providecommand{\ker}{\mathrm{ker} \, }
\providecommand{\supp}{\mathrm{supp} \,}
\providecommand{\ker}{\mathrm{ker} \,}
\providecommand{\trace}{\mathrm{Tr} \,}
\providecommand{\dd}{\mathrm{d}}
\providecommand{\id}{\mathrm{id}}
\providecommand{\order}{\mathcal{O}}
\providecommand{\Fourier}{\mathcal{F}}
\providecommand{\noverk}[2]{{#1 \choose #2}}
\providecommand{\trace}{\mathrm{Tr}}
\providecommand{\abs}[1]{\left \lvert #1 \right \rvert}
\providecommand{\sabs}[1]{\lvert #1 \vert}
\providecommand{\babs}[1]{\bigl \lvert #1 \bigr \rvert}
\providecommand{\Babs}[1]{\Bigl \lvert #1 \Bigr \rvert}
\providecommand{\norm}[1]{\left \lVert #1 \right \rVert}
\providecommand{\snorm}[1]{\lVert #1 \rVert}
\providecommand{\bnorm}[1]{\bigl \lVert #1 \bigr \rVert}
\providecommand{\Bnorm}[1]{\Bigl \lVert #1 \Bigr \rVert}
\providecommand{\scpro}[2]{\left \langle #1 , #2 \right \rangle}
\providecommand{\sscpro}[2]{\langle #1 , #2 \rangle}
\providecommand{\bscpro}[2]{\bigl \langle #1 , #2 \bigr \rangle}
\providecommand{\Bscpro}[2]{\Bigl \langle #1 , #2 \Bigr \rangle}
\providecommand{\sopro}[2]{\vert #1 \rangle \langle #2 \vert}
\providecommand{\expval}[1]{\left \langle #1 \right \rangle}
\providecommand{\sexpval}[1]{\langle #1 \rangle}
\providecommand{\bexpval}[1]{\bigl \langle #1 \bigr \rangle}
\providecommand{\Op}{\mathfrak{Op}}
\providecommand{\OpA}{\Op^A}
\providecommand{\Fs}{\FourierTrafo_{\sympForm}}
\providecommand{\PA}{\mathsf{P}^A}
\providecommand{\Weyl}{\sharp}
\providecommand{\magW}{\Weyl^B}
\providecommand{\magWel}{\magW_{\eps,\lambda}}
\providecommand{\Hoer}[1]{S^{#1}}
\providecommand{\Hoerr}[1]{S^{#1}_{\rho,0}}
\providecommand{\Hoerrd}[1]{S^{#1}_{\rho,\delta}}
\providecommand{\Hoermrd}[3]{S^{#1}_{#2,#3}}
\providecommand{\Hoerran}[1]{S^{#1}_{\rho,0}(\syspace)}
\providecommand{\Hoerrdan}[1]{S^{#1}_{\rho,\delta}(\syspace)}
\providecommand{\Hoermrdan}[3]{S^{#1}_{#2,#3}(\syspace)}
\providecommand{\Hoeran}[1]{S^{#1}(\syspace)}
\providecommand{\WeylSys}{W}
\providecommand{\WignerTrafo}{\mathcal{W}}
\providecommand{\Pe}{{\mathsf{P}}}
\providecommand{\PA}{{\P^A}}
\providecommand{\Qu}{\mathsf{Q}}
\providecommand{\GAe}{{\Gamma^A_{\eps}}}
\providecommand{\GAe}{{\Gamma^A_{\eps}}}
\providecommand{\gBe}{{\gamma^B_{\eps}}}
\providecommand{\oBel}{\omega^B_{\eps,\lambda}} 
\providecommand{\spec}{\mathrm{spec}}
\providecommand{\Qe}{\mathsf{Q}}
\providecommand{\PSpace}{T^{\ast} \R^d}
\providecommand{\ordere}[1]{\mathcal{O}(\eps^{#1})}
\providecommand{\orderl}[1]{\mathcal{O}(\lambda^{#1})}
\providecommand{\orderone}{\mathcal{O}(1)}
\providecommand{\eprec}{\epsilon}
\providecommand{\ordereprec}{\mathcal{O}(\eprec +)}
\providecommand{\order}{\mathcal{O}}
\providecommand{\Piref}{\Pi_{\mathrm{ref}}}
\providecommand{\piref}{\pi_{\mathrm{ref}}}
\providecommand{\Hfast}{\mathcal{H}_{\mathrm{fast}}}
\providecommand{\Hslow}{\mathcal{H}_{\mathrm{slow}}}
\providecommand{\Fs}{\mathcal{F}_{\sigma}}
\providecommand{\Weyl}{\star}
\providecommand{\Weyle}{\Weyl_{\eps}}
\providecommand{\magW}{\sharp^B}
\providecommand{\magWel}{\Weyl^B_{\eps,\lambda}}
\providecommand{\Ael}{A^{\eps,\lambda}}
\providecommand{\Bte}{\tilde{B}^{\eps}}
\providecommand{\Op}{\mathrm{Op}}
\providecommand{\OpAel}{\mathrm{Op}^A_{\eps,\lambda}}
\providecommand{\Ope}{\mathrm{Op}_{\eps}}
\providecommand{\WeylSysAel}{{W^A_{\eps,\lambda}}}
\providecommand{\WignerTrafoel}{{\mathcal{W}^A_{\eps,\lambda}}}
\providecommand{\minsAl}{\vartheta^A_{\lambda}}
\providecommand{\Hoer}[1]{\mathcal{S}^{#1}}
\providecommand{\SemiHoer}[1]{\mathrm{A}\Hoer{#1}}
\providecommand{\SemiHoerr}[1]{\mathrm{A}\Hoerr{#1}}
\providecommand{\Schwartz}{\mathcal{S}}
\providecommand{\sfnorm}[3]{\Vert #1 \Vert_{#2 , #3}}
\providecommand{\Int}{\mathfrak{Int}}
\providecommand{\ad}{\mathrm{ad}}
\providecommand{\adfrak}{\mathfrak{ad}}
\providecommand{\Adfrak}{\mathfrak{Ad}}
\providecommand{\Hil}{\mathcal{H}}
\providecommand{\Cont}{\mathcal{C}}
\providecommand{\BCont}{\mathcal{C}_{\mathrm{b}}}
\providecommand{\Weyl}{\sharp}
\providecommand{\piref}{\pi_{\mathrm{ref}}}
\providecommand{\Hper}{H_{\mathrm{per}}}
\providecommand{\Zak}{\mathcal{Z}}
\providecommand{\BZ}{M^*}
\providecommand{\WS}{M}
\providecommand{\heff}{{h_{\mathrm{eff}}}}
\providecommand{\El}{{\mathsf{E}}}
\providecommand{\Hoer}[1]{\mathcal{S}^{#1}}
\providecommand{\WeylProd}{\star}
\providecommand{\Weyle}{\WeylProd_{\eps}}
\providecommand{\PSpace}{\Xi}
\providecommand{\Op}{\mathrm{Op}}
\providecommand{\minsAl}{\vartheta^A_{\lambda}}
\providecommand{\Bte}{\tilde{B}^{\eps}}
\providecommand{\eprec}{\epsilon}
\providecommand{\eprec}{\epsilon_{\mathrm{prec}}}
\providecommand{\ordereprec}{\mathcal{O}(\eprec +)}
\providecommand{\ordere}[1]{\mathcal{O}(\eps^{#1})}
\providecommand{\orderl}[1]{\mathcal{O}(\lambda^{#1})}
\providecommand{\orderone}{\mathcal{O}(1)}
\providecommand{\heff}{h_{\mathrm{eff}}}
\providecommand{\Cpol}{\mathcal{C}^{\infty}_{\mathrm{pol}}}
\providecommand{\heff}{{h_{\mathrm{eff}}}}
\providecommand{\spec}{\sigma}
\providecommand{\PA}{\mathsf{P}^A}
\providecommand{\Qe}{\mathsf{Q}}
\providecommand{\WeylSysAel}{{W^A_{\eps,\lambda}}}
\providecommand{\WeylSys}{W}
\providecommand{\Htau}{\Hil_{\tau}}
\providecommand{\Schwartz}{\mathcal{S}}
\providecommand{\magW}{\Weyl^B}
\providecommand{\magWel}{\magW_{\eps,\lambda}}
\providecommand{\WS}{M}
\providecommand{\Hper}{H_{\mathrm{per}}}
\providecommand{\Weyl}{\mathrm{W}^M_{\varepsilon}}
\providecommand{\noverk}[2]{\left ( 
\begin{matrix}
	#1 \\
	#2 \\
\end{matrix}
\right )}
\providecommand{\Schwartz}{\mathcal{S}}
\providecommand{\Piref}{\Pi_{\mathrm{ref}}}
\providecommand{\Zak}{\mathcal{Z}}
\providecommand{\Hfast}{\mathcal{H}_{\mathrm{fast}}}
\providecommand{\Hslow}{\mathcal{H}_{\mathrm{slow}}}
\providecommand{\Hper}{H_{\mathrm{per}}} 
\providecommand{\piref}{\pi_{\mathrm{ref}}}
\providecommand{\Index}{\mathcal{I}}
\providecommand{\Htau}{L^2_{\tau}(M^{\ast},L^2(M))}
\providecommand{\OpA}{\mathrm{Op}^A}
\providecommand{\Fs}{\mathcal{F}_{\sigma}}
\providecommand{\GAe}{{\Gamma^A_{\eps}}}
\providecommand{\Ael}{A^{\eps,\lambda}}
\providecommand{\Hslow}{\mathcal{H}_{\mathrm{slow}}}
\providecommand{\Hfast}{\mathcal{H}_{\mathrm{fast}}}
\providecommand{\Qe}{{Q_{\eps}}}
\providecommand{\OpAel}{\mathrm{Op}^A_{\eps,\lambda}}
\providecommand{\Ope}{\mathrm{Op}_{\eps}}
\providecommand{\WignerTrafoel}{{\mathcal{W}^A_{\eps,\lambda}}}
\providecommand{\magBel}{\WeylProd^B_{\eps,\lambda}}
\providecommand{\WeylSysAel}{{W^A_{\eps,\lambda}}}
\providecommand{\GAe}{{\Gamma^A_{\eps}}}
\providecommand{\gBe}{{\gamma^B_{\eps}}}
\providecommand{\oBel}{\omega^B_{\eps,\lambda}}
\providecommand{\Hoer}[1]{S^{#1}}
\providecommand{\Hoerr}[2]{S^{#1}_{#2}}
\providecommand{\Hoermr}[2]{S^{#1}_{#2}}
\providecommand{\Hoermrd}[3]{S^{#1}_{#2,#3}}
\providecommand{\SemiHoer}[1]{\mathrm{A}\Hoer{#1}}
\providecommand{\SemiHoermr}[2]{\mathrm{A}\mathcal{S}^{#1}_{#2}}
\providecommand{\heff}{h_{\mathrm{eff}}}
\providecommand{\reff}{{r_{\mathrm{eff}}}}
\providecommand{\keff}{{k_{\mathrm{eff}}}}
\providecommand{\BerryC}{\mathcal{A}}
\providecommand{\Eb}{E_{\ast}}
\providecommand{\KA}{{\mathsf{K}^A}}
\providecommand{\Reps}{{\mathsf{R}}}
\providecommand{\order}{\mathcal{O}}
\providecommand{\ordern}{\mathcal{O}_{\norm{\cdot}}}
\providecommand{\BZak}{Zak~}
\providecommand{\Opx}{\mathrm{Op}}
\providecommand{\Opk}{\mathfrak{Op}}
\providecommand{\SemiTau}{A \Cont^{\infty}_{\tau}}
\providecommand{\Rep}{\mathfrak{Rep}} 
\providecommand{\repcom}{\star}
\providecommand{\reptau}{\repcom_{\theta,\tau}}
\providecommand{\repom}{\repcom^{\omega}_{\theta,\tau}}
\providecommand{\Xgroup}{\mathcal{X}}
\providecommand{\AB}{\mathfrak{A}^B}
\providecommand{\Alg}{\mathcal{A}}
\providecommand{\Hil}{\mathcal{H}}
\providecommand{\CA}{\mathfrak{C}_{\Alg}}
\providecommand{\CBA}{\CA^B}
\providecommand{\Ralg}{{\R^d};\Alg}
\providecommand{\syspace}{{\R^d}^*;\Alg^{\infty}}
\providecommand{\CBA}{\mathfrak{C}^B_{\Alg}}
\providecommand{\MBA}{\mathfrak{M}^B_{\Alg}}
\providecommand{\Salg}{\mathcal{S}_{\Alg}}
\providecommand{\SalgComp}{F_{\Alg}}
\providecommand{\Quasi}{\mathsf{Q}}
\providecommand{\WeylSyslt}{W^{\lambda}_{\tau}}
\providecommand{\Aut}{\mathrm{Aut}}
\providecommand{\dualX}{\hat{\Xgroup}}
\providecommand{\Xprod}{\Alg \rtimes_{\theta,\tau} \Xgroup}
\providecommand{\sXprod}{\mathfrak{C}_{\Alg}}
\providecommand{\twistedXprod}{\Alg \rtimes_{\theta,\tau}^{\omega} \Xgroup}
\providecommand{\stwistedXprod}{\sXprod^{\omega}}
\providecommand{\Ideal}{\mathfrak{I}}
\providecommand{\sFXprod}{\Fourier^{-1} \mathfrak{C}^{\omega}_{\Alg}}
\providecommand{\Quasi}{\mathcal{Q}}
\providecommand{\Orbit}{\mathcal{O}}
\providecommand{\Quasialg}{\mathbf{Q}_{\Alg}}
\providecommand{\Gelf}{\mathcal{G}_{\Alg}}
\providecommand{\crossB}{\repcom^B_{\theta}}
\providecommand{\mult}{\mathcal{M}}
\providecommand{\fxb}{\mathfrak{B}^B_{\Alg}}
\providecommand{\BopL}{\mathcal{B} \bigl ( L^2(\R^d) \bigr )}
\providecommand{\Calg}{\mathfrak{C}}
\providecommand{\sXprodR}{\mathfrak{C}^B_{\Alg}}
\providecommand{\sFXprodR}{\Fourier \sXprodR}
\providecommand{\Index}{\mathcal{I}}
\providecommand{\Quasicover}{\textbf{Q}}
\begin{document}

\frontmatter

%
%


\begin{titlepage}
	\vspace{-64mm}
	\begin{center}
	\large TECHNISCHE UNIVERSITÄT MÜNCHEN\\
	Zentrum Mathematik
	\end{center}
	\vspace{19mm}
	\begin{center}
	\bf\huge  Semiclassical Dynamics and \linebreak Magnetic Weyl Calculus\\
	\end{center}
	\vspace{8mm}
	\begin{center}
	{\Large Maximilian Stefan Lein\\}
	\end{center}
	\vspace{19mm}
	\noindent 
	Vollständiger Abdruck der von der Fakultät für Mathematik der Technischen Universität München zur Erlangung des akademischen Grades eines
	\begin{center}
	Doktors der Naturwissenschaften (Dr.~rer.~nat.)
	\end{center}
	genehmigten Dissertation.\vspace{19mm}\\
	\begin{tabular}{lrl}
	$\qquad$ Vorsitzende: && Univ.-Prof. Dr. Simone Warzel\\
	\vspace{-4mm}\\
	$\qquad$ Prüfer der Dissertation:
	& 1. & Univ.-Prof. Dr. Herbert Spohn\\ \vspace{-4mm}\\
	& 2. & Prof. Dr. Radu Purice, IMAR, Bukarest, Rumänien \\ \vspace{-4mm}\\
	& 3. & Univ.-Prof. Dr. Stefan Teufel, Eberhard-Karls-Universität Tübingen \\
	&    & (schriftliches Gutachten)\\ \vspace{-4mm}\\
	\end{tabular}
	\vspace{12mm}\\
	Die Dissertation wurde am 18.10.2010 bei der Technischen Universität München eingereicht und durch die Fakultät für Mathematik am 19.01.2011 angenommen.
\end{titlepage}
%
%

\chapter*{Acknowledgements}
I would like to express my gratitude towards a few people who have directly or indirectly contributed to this thesis. 

First and foremost, I would like to thank Herbert Spohn for his support and faith in my abilities. He has encouraged and supported frequent visits to conferences and other countries. But the thing I am by far most grateful for is his instrumental role in the inception of my lecture `Quantization and Semiclassics' in winter 2009/2010. This was a very, very exhilarating experience for me that I will not forget. 

Secondly, I would like to thank my coworkers Fabian Belmonte, Giuseppe De Nittis, Martin Fürst, Marius Măntoiu, Serge Richard David Sattlegger, and Marcello Seri. Without their contributions, this thesis would be at most a third of what it is now. I thank Giuseppe for his friendship and hospitality not just during my visit at SISSA. Marius' invitation to Santiago enabled me to get to know a new continent. I continue to draw a lot of pleasure from discussions and interaction with you and other colleagues in the field. 

There are other people who deserve special mention: I owe Radu Purice a debt of gratitude for many stimulating discussions and the possibility of organizing ROGER 2009 in Sibiu as well as his invitation to Bukarest in 2007. I appreciate he has agreed to be one of the referees to this thesis. Similarly, I would like to mention Stefan Teufel whose lecture on Quantum Dynamics in 2002 gave me the last push I needed towards the mathematical aspects of physics. Furthermore, the graduate seminar he gave together with Herbert Spohn and Detlef Dürr was the single most enjoyable class I have sat in during my studies. He has also kindly agreed to be the third referee to this thesis. 

I benefitted from the work environment at M5 which all its members helped create. One person deserves mention here and that is Wilma Ghamam who is the good soul of the department. She always knows where things are, who to ask and where people are. Her unpretentious attitude is really unique. 

I also recognize the kind support of DAAD for a $4$-month short term scholarship that allowed me to visit Prof.~Littlejohn at UC Berkeley and the financial support of DFG and Fondecyt under the Grant 1090008 which have made my $4$-week visit to Santiago possible. In that context, I also thank Georgi Raikov and Rafael de Tiedra Aldecoa for preparing my visit to Chile. 
\medskip

\noindent
The other, non-scientific half of my life is just as important: my family and friends have kept my life interesting and fun. The memories from the last four years I share with Julius Bahr, Ann Etienne, Stephanie Farnleitner, Sabrina Oberacher, my brother Fabian and my sister Alexandra are invaluable.

\chapter*{Abstracts}

\section*{Kurzzusammenfassung} 
\label{abstract:german}
\selectlanguage{ngerman}
Weyl Quantisierung und semiklassische Techniken können benutzt werden, um Leitungseigenschaften von kristallinen Festkörpern zu verstehen, die externen, langsam variierenden elektromagnetischen Feldern ausgesetzt werden. Der Fall, in dem das Magnetfeld schwach, aber konstant ist, wird von bisherigen mathematischen Ergebnissen nicht abgedeckt. Genau das ist das Regime des Quanten-Hall-Effekts und es gilt zu verstehen, wieso die transversale Leitfähigkeit quantisiert ist. Möchte man für diesen Fall semiklassische Bewegungsgleichungen rigoros herleiten, muss man den konventionellen Weyl-Kalkül durch einen magnetischen ersetzen, der einen semiklassischen Parameter enthält. 

Mathematisch gesehen hat man es mit magnetischen Pseudodifferentialoperatoren zu tun, die auch für sich gesehen von Interesse sind. Daher widmen wir diesen zwei weitere Kapitel, die sich mit deren Eigenschaften befassen. 


\section*{Abstract} 
\label{abstract:english}
\selectlanguage{english}
Weyl quantization and related semiclassical techniques can be used to study conduction properties of crystalline solids subjected to slowly-varying, external electromagnetic fields. The case where the external magnetic field is constant, is not covered by existing theory as proofs involving usual Weyl calculus break down. This is the regime of the so-called quantum Hall effect where quantization of transverse conductance is observed. To rigorously derive semiclassical equations of motion, one needs to systematically develop a \emph{magnetic} Weyl calculus which contains a semiclassical parameter. 

Mathematically, the operators involved in the analysis are magnetic pseudodifferential operators, a topic which by itself is of interest for the mathematics and mathematical physics community alike. Hence, we will devote two additional chapters to further understanding of properties of those operators. 


\tableofcontents

\mainmatter

\chapter{Introduction} 
\label{intro}
Initially, the word quantization referred to the fact that in some physical systems, atoms could only absorb or emit light of certain frequencies, for instance, \ie that energy could only be exchanged in certain chunks of fixed size called \emph{quanta}. Nowadays, it refers to the task of associating quantum analogs to all parts of a classical system, \ie to find a set of procedures that associates quantum observables to classical observables, (quasi-)classical states to quantum states and an evolution equation that governs the dynamics of quantum states and quantum observables. Historically, this happened rather quickly after the inception of modern quantum mechanics around 1926 when the forefathers such as Bohr, Dirac, Heisenberg and Schrödinger have tried -- and eventually succeeded -- to conceptually `derive' quantum mechanics for a particle moving in $\R^d$ by analogy from classical mechanics. 

In contrast to Chapter~\ref{magWQ} where I will give a modern introduction to the subject of quantization, let me sketch the task from the perspective of early quantum mechanics: to the co-founders of modern quantum mechanics, the keys to understanding quantum mechanics were the commutation relations\index{commutation relations!non-magnetic} \cite[pp.~100]{Dirac:foundations_qm_ger:1930} \cite[Chapter~IV.1]{Heisenberg:prinzipien_quantentheorie:1930}
\begin{align*}
	q_k q_j - q_j q_k &= 0 \\
	p_k p_j - p_j p_k &= 0 \\
	p_k q_j - q_j p_k &= i \hbar 
\end{align*}
of position $q$ and momentum $p$ as well as a vague intuition that the classical Poisson bracket $\{ \cdot , \cdot \}$\index{Poisson bracket!non-magnetic} needed to be replaced by the quantum commutator $\tfrac{i}{\hbar} [\cdot , \cdot]$.\index{commutator} Both are bilinear and antisymmetric in their arguments, act as derivations and satisfy the Jacobi identity \cite[p.~99, eq.~(7)--(9)]{Dirac:foundations_qm_ger:1930}. These similarities suggested them to propose 
\begin{align*}
	\partial_t F(t) &= \tfrac{i}{\hbar} [ H , F(t) ] 
\end{align*}
as the equations of motion\index{Heisenberg equations of motion} for a quantum observable $F$ in lieu of 
\begin{align*}
	\partial_t f(t) &= \{ h , f(t) \} 
\end{align*}
for a classical observable $f$. 

In the first edition of `The Principles of Quantum Mechanics,' Dirac somewhat haphazardly explains the link between the `non-commutative observables' $q$ and $p$ and the operators multiplication by $q$ and $- i \hbar \nabla_q$.\footnote{Compared to the first edition of Dirac's book, the presentation of Chapter~IV in the third edition has been much improved. } Furthermore, Dirac \cite[p.~109]{Dirac:foundations_qm_ger:1930}, Schrödinger \cite{Schroedinger:quantisierung_ew_problem_1:1926,Schroedinger:quantisierung_ew_problem_2:1926,Schroedinger:quantisierung_ew_problem_3:1926,Schroedinger:quantisierung_ew_problem_4:1926} and Heisenberg \cite[p.~86]{Heisenberg:prinzipien_quantentheorie:1930} proposed $f(q,-i \hbar \nabla_q)$ as the quantization of the classical observable $f(q,p)$. Dirac noticed an inherent ambiguity in this procedure: one needs to make a choice of operator ordering \cite[p.~103]{Dirac:foundations_qm_ger:1930}\index{operator ordering}: 
\begin{quote}
	Es darf nicht übersehen werden, daß die Reihenfolge der Faktoren in Produkten, die im Ausdruck für $H$ vorkommen, von Bedeutung sein kann, da unsere Variablen nicht alle vertauschbar sind. 
\end{quote}
Hence, there may be more than one candidate as the quantization of a classical observable and the prescription `in position representation, replace $q$ by multiplication with $q$ and $p$ by $-i \hbar \nabla_q$' is incomplete. 

The first step in the right direction was taken by Weyl in 1927 \cite{Weyl:qm_gruppentheorie:1927} who wrote down a consistent quantization formula: if $f : T^* \R^d \longrightarrow \R$ is a suitable physical observable on phase space $T^* \R^d \cong \R^d \times {\R^d}^*$, then he defined its quantization to be\footnote{For consistency with the remainder of this thesis, we choose a different sign convention compared to \cite{Weyl:qm_gruppentheorie:1927}. } 
\begin{align*}
	\Op(f) := f(\Qe,\Pe) := \frac{1}{(2\pi)^d} \int_{\R^d} \dd x \int_{{\R^d}^*} \dd \xi \, (\Fs f)(x,\xi) \, e^{- i (\xi \cdot \Qe - x \cdot \Pe)} 
\end{align*}
where $q$ and $p$ are elevated to operators $\Qe = \hat{q}$ and $\Pe = - i \hbar \nabla_q$ on $L^2(\R^d)$ and $\Fs$ denotes a symmetrized Fourier transform\index{Fourier transform!symplectic} on $\R^d \times {\R^d}^*$ defined as 
\begin{align}
	(\Fs f)(x,\xi) := \frac{1}{(2\pi)^d} \int_{\R^d} \dd y \int_{{\R^d}^*} \dd \eta \, e^{i (\eta \cdot x - y \cdot \xi)} \, f(y,\eta) 
	. 
\end{align}
Thus it seems appropriate that the calculus which emerged from this point of view bears his name. 

With seemingly no connection to Weyl's work, Wigner \cite{Wigner:Wigner_transform:1932} showed how quantum states $\psi \in L^2(\R^d)$ can be written as pseudo probability measures on phase space. It is truly remarkable and an indication of Wigner's genius that he had found the correct formula just by `staring at the problem.' 

It took until 1947 to put all pieces of the puzzle together: in a seminal work by Moyal \cite{Moyal:Weyl_calculus:1949}, the relation between Weyl quantization, Wigner transform and the Moyal product (also known as Weyl product) has been worded out systematically and it is justified to say that this marked the birth of Weyl calculus as we know it today.

\section{Physical aspects: quantization of magnetic systems} 
\label{intro:physical_aspects}
Let us consider a single classical particle without spin moving in $\R^d$. In the framework of hamiltonian mechanics, the state of the particle is represented by a point in phase space $\Pspace := (T^* \R^d , \omega)$ where $T^* \R^d \cong \R^d \times {\R^d}^*$ is the cotangent bundle of configuration space $\R^d$ and $\omega$ the so-called symplectic form. Points in phase space will be denoted by capital letters $X = (x,\xi) , Y = (y,\eta) , Z = (z , \zeta) \in \Pspace$ with space components $x , y, z \in \R^d$ and momentum components $\xi , \eta , \zeta \in {\R^d}^*$. The symplectic form $\omega$ is a two-form, \ie a skew-symmetric bilinear form on the space of vector fields on $\R^d \times {\R^d}^*$ whose representation matrix $\bigl ( \omega_{kj} \bigr )_{1 \leq k,j \leq d}$ in terms of coordinates is invertible. Its purpose is to associate vector fields $X_h$ to functions $h : \Pspace \longrightarrow \R$ on phase space via 
\begin{align}
	\omega (X_h , \cdot) := \dd h 
	. 
	\label{intro:physical_aspects:eqn:vector_field}
\end{align}
In the absence of a magnetic field, the \emph{canonical symplectic form} $\omega^0 = \sum_{j = 1}^d \dd x_j \wedge \dd \xi_j$ with matrix representation 
\begin{align*}
	\bigl ( \omega^0_{kj} \bigr )_{1 \leq k,j \leq d} &= \left (
	\begin{matrix}
		0 & - \id \\
		+ \id & 0 \\
	\end{matrix}
	\right ) 
\end{align*}
relates the gradient of the energy function to the hamiltonian vector field\index{hamiltonian vector field} associated to the energy function $h$ via 
\begin{align*}
	X_h = \left (
	\begin{matrix}
		0 & - \id \\
		+ \id & 0 \\
	\end{matrix}
	\right )^{-1} \left (
	\begin{matrix}
		\nabla_x h \\
		\nabla_{\xi} h \\
	\end{matrix}
	\right ) 
	. 
\end{align*}
This vector field determines the hamiltonian flow $\phi_t$\index{hamiltonian flow} through Hamilton's equations of motion
\begin{align*}
	\left (
	\begin{matrix}
		\dot{x} \\
		\dot{\xi} \\
	\end{matrix}
	\right ) &= X_h 
	= \left (
	\begin{matrix}
		0 & - \id \\
		+ \id & 0 \\
	\end{matrix}
	\right )^{-1} \left (
	\begin{matrix}
		\nabla_x h \\
		\nabla_{\xi} h \\
	\end{matrix}
	\right ) 
	. 
\end{align*}
The flow $\phi_t$ tracks the trajectory $\bigl ( x(t) , \xi(t) \bigr ) = \phi_t(x_0,\xi_0)$ for the initial conditions $(x_0 , \xi_0) \in \Pspace$. Instead of looking at the special observables position and momentum, we can write down equations of motion for an arbitrary observable $f \in \Cont^{\infty}(\Pspace,\R)$: the symplectic form $\omega^0$ induces a Poisson bracket via 
\begin{align*}
	\{ f , g \} := - \omega^0 \bigl ( X_f , X_g \bigr ) = \sum_{j = 1}^d \bigl ( \partial_{\xi_j} f \, \partial_{x_j} g - \partial_{x_j} f \, \partial_{\xi_j} g \bigr ) 
\end{align*}
where $f , g \in \Cont^{\infty}(\Pspace,\R)$ are observables and $X_f , X_g$ are the associated vector fields (see equation~\eqref{intro:physical_aspects:eqn:vector_field}). Now one can show that the equations of motion for a time-evolved observable $f(t) := f \circ \phi_t$ are given by 
\begin{align*}
	\partial_t f(t) &= \{ h , f(t) \}
	. 
\end{align*}
If the particle is subjected to a magnetic field, $B$, we have two options to integrate it into the classical formalism: (i) we use minimal substitution or (ii) we incorporate $B$ into the symplectic form. First of all, the magnetic field can be seen as a closed two-form $B \in \bigwedge^2 (\R^d)$, $\dd B = 0$. Since $\R^d$ is star-shaped, a $k$-form $\omega \in \bigwedge^k (\R^d)$ is closed if and only if it is exact, \ie there exists a $k-1$-form $\alpha$ such that $\omega = \dd \alpha$. Hence, to each $B$ there exist vector potentials $A \in \bigwedge^1(\R^d)$ such that $B = \dd A$. The components of $B$ (with respect to an orthonormal basis of $\R^d$) are related to those of the vector potential via $B_{kj} = \partial_{x_k} A_j - \partial_{x_j} A_k$. Vector potentials are highly non-unique and worse-behaved than the magnetic field they represent. 

If $A$ is a vector potential to $B$, then only \emph{kinetic} momentum $\xi_{\mathrm{kin}} := \xi - A(x)$ is a physically relevant observable that does not depend on the choice of gauge. This follows from the Lagrangian approach to magnetic systems \cite[Chapter~7.6]{Marsden_Ratiu:intro_mechanics_symmetry:1999}. Minimal coupling\index{minimal coupling} is the recipe to replace $f(x,\xi)$ by $f \bigl ( x , \xi - A(x) \bigr ) =: f_A(x,\xi)$ as observables and to consider the equations of motion given by 
\begin{align}
	\partial_t f_A(t) = \{ h_A , f_A(t) \} 
	, 
	&&
	f_A(0) = f_A 
	, 
	\label{intro:physical_aspects:eqn:non_mag_eom_Poisson}
\end{align}
where $f_A$ and $h_A$ are the minimally substituted observables and $\{ \cdot , \cdot \}$ the usual, non-magnetic Poisson bracket. If one considers only minimally substituted observables, then the corresponding equations of motion essentially do not depend on the choice of vector potential $A$, but only on the magnetic field $B$. 
\medskip

\noindent
Alternatively, the geometry of phase space $\Pspace$ can be changed: we equip $T^* \R^d$ with the \emph{magnetic symplectic form}\index{symplectic form!magnetic} 
\begin{align}
	\omega^B &= \sum_{j = 1}^d \dd x_j \wedge \dd \xi_j + \frac{1}{2} \sum_{k , j = 1}^d B_{kj} \dd x_k \wedge \dd x_j 
\end{align}
which induces the \emph{magnetic Poisson bracket} 
\begin{align}
	\{ f , g \}_B = \sum_{j = 1}^d \bigl ( \partial_{\xi_j} f \, \partial_{x_j} g - \partial_{x_j} f \, \partial_{\xi_j} g \bigr ) - \sum_{k , j = 1}^d B_{kj} \, \partial_{x_k} f \, \partial_{x_j} g 
	. 
\end{align}
The hamiltonian vector field\index{hamiltonian vector field} now depends on $B$, 
\begin{align*}
	X_h^B &= \left (
	\begin{matrix}
		B & - \id \\
		+ \id & 0 \\
	\end{matrix}
	\right )^{-1} \left (
	\begin{matrix}
		\nabla_x h \\
		\nabla_{\xi} h \\
	\end{matrix}
	\right )
	. 
\end{align*}
By a simple calculation, we get $\{ f_A , g_A \}(x,\xi) = \{ f , g \}_B \bigl ( x , \xi - A(x) \bigr )$ and thus the solution $f(t)$ of 
\begin{align*}
	\partial_t f(t) &= \{ h , f(t) \}_B 
	, 
	&&
	f(0) = f 
	, 
\end{align*}
and $f_A(t)$ which solves equation~\eqref{intro:physical_aspects:eqn:non_mag_eom_Poisson} are the same, written down in different coordinates. In this more geometric formulation, we are simply working with kinetic momentum all along. Both descriptions of a \textbf{classical} magnetic system are equivalent.\medskip

\noindent
As we will see, this is \textbf{not} the case for quantum systems. Let us start with the usual recipe: since both \emph{classical} descriptions of magnetic systems are equivalent (and minimal substitution being the more popular choice), it is quite sensible to define the magnetic quantization of an observable $f$ as the usual Weyl quantization of the minimally substituted observable $f_A$, 
\begin{align}
	\Op_A(f) := \Op(f_A) = \frac{1}{(2\pi)^d} \int_{\Pspace} \dd x \, \dd \xi \, (\Fs f_A)(x,\xi) \, e^{- i (\xi \cdot Q - x \cdot P)} 
	. 
	\label{intro:physical_aspects:eqn:OpA_wrong}
\end{align}
Although we will elaborate on the drawbacks of this prescription in detail in Chapter~\ref{magWQ:magnetic_weyl_calculus:naive_ansatz}, let us sketch the origin of the flaws: in magnetic quantum systems on $\R^d$, the building block observables\index{building block observables!magnetic} position and momentum are position and kinetic momentum, 
\begin{align*}
	\Qe &= \hat{x} 
	, \\
	\Pe^A &= - i \eps \nabla_x - A(\hat{x})
	, 
\end{align*}
where $A$ is a vector potential representing $B = \dd A$ and $\eps \leq 1$ a dimensionless parameter that sits in the same place as $\hbar$. Formally, these operators satisfy the commutation relations\index{commutation relations!magnetic} 
\begin{align}
	i [\Qe_l , \Qe_j] = 0 
	&&
	i [\Pe^A_l , \Qe_j] = \eps \delta_{lj} 
	&&
	i [\Pe^A_l , \Pe^A_j] = - \eps B_{lj}(\Qe) 
\end{align}
which \emph{should} be encoded into the composition law 
\begin{align*}
	\WeylSys(X) \WeylSys(Y) &= e^{i \frac{\eps}{2} \sigma(X,Y)} \WeylSys(X+Y) \in \mathcal{B} \bigl ( L^2(\R^d) \bigr ) 
\end{align*}
of the so-called Weyl system\index{Weyl system!non-magnetic} $\WeylSys(X) := e^{- i (\xi \cdot \Qe - x \cdot \Pe)}$. However, since the magnetic field does not appear in the definition of $\WeylSys(X)$, the presence of the magnetic field is not properly taken into account. This is the reason why $\Op_A$ is in general not gauge-covariant:\index{gauge-covariance} if $A' = A + \dd \chi$ is an equivalent gauge, then $\Op_{A'}(f)$ and $\Op_A(f)$ are in general not unitarily equivalent. On the other hand, physical properties such as the spectrum of the system must not depend on the choice of gauge! 
\medskip

\noindent
Even though bits and pieces of the correct solution were used as early as 1951 \cite{Luttinger:magnetic_field_periodic_potential:1951}, it was not until 1999 that a gauge-covariant Weyl calculus was first written down in its entirety by Müller \cite{Mueller:product_rule_gauge_invariant_Weyl_symbols:1999}. The idea is to replace translations by magnetic translations and to put the magnetic vector potential into the Weyl system $\WeylSys^A(X) := e^{- i \sigma(X,(\Qe,\Pe^A))}$. A simple trotterization shows that $\WeylSys^A(Y)$ acts on $u \in L^2(\R^d)$ by 
\begin{align*}
	\bigl ( \WeylSys^A(Y) u \bigr )(x) = e^{- \frac{i}{\eps} \Gamma^A([x,x+\eps y])} e^{- i \eta \cdot (x + \frac{\eps}{2} y)} \, u(x + \eps y) 
\end{align*}
where 
\begin{align*}
	\Gamma^A([x,x+\eps y]) := \int_{[x,x + \eps y]} A 
\end{align*}
is the circulation along the line segment connecting $x$ and $x + \eps y$. If $A' = A + \dd \chi$ is an equivalent gauge, then 
\begin{align*}
	\int_{[x,x + \eps y]} A' = \int_{[x,x + \eps y]} A + \int_{\partial [x , x + \eps y]} \chi 
	= \int_{[x,x + \eps y]} A + \chi(x + \eps y) - \chi(x)
\end{align*}
holds by Stokes theorem and $\WeylSys^{A'}(X) = e^{+ \frac{i}{\eps} \chi(\Qe)} \WeylSys^A(X) e^{- \frac{i}{\eps} \chi(\Qe)}$ is unitarily equivalent to $\WeylSys^A(X)$. Hence, magnetic Weyl quantization\index{Weyl quantization!magnetic} 
\begin{align}
	\Op^A(f) := \frac{1}{(2\pi)^d} \int_{\Pspace} \dd X \, (\Fs f)(X) \, \WeylSys^A(X) 
	. 
	\label{intro:physical_aspects:eqn:OpA_right}
\end{align}
inherits the gauge-covariance of the Weyl system and the quantizations with respect to equivalent gauges define \emph{unitarily equivalent} operators. The composition law of $\WeylSys^A$\index{Weyl system!composition law} now contains an additional magnetic contribution, 
\begin{align}
	\WeylSys^A(X) \WeylSys^A(Y) &= e^{i \frac{\eps}{2} \sigma(X,Y)} \, e^{- i \Gamma^B_{\eps}(\sexpval{\Qe,\Qe + \eps x, \Qe + \eps x + \eps y})} \, \WeylSys^A(X+Y) 
	, 
	&& 
	\forall X , Y \in \Pspace , 
\end{align}
which is the exponential of the scaled magnetic flux $\Gamma^B_{\eps}$ through the triangle with corners $\Qe$, $\Qe + \eps x$ and $\Qe + \eps x + \eps y$ (see Theorem~\ref{asymptotics:thm:equivalenceProduct}). Associated to $\Op^A$, there is a magnetic Wigner transform $\WignerTrafo^A$\index{Wigner transform!magnetic} which connects quantum expectation values to phase space averages, 
\begin{align*}
	\bscpro{u}{\Op^A(f) v} = \frac{1}{(2\pi)^{\nicefrac{d}{2}}} \int_{\Pspace} \dd X \, f(X) \, \bigl ( \WignerTrafo^A(v,u) \bigr )(X)
	, 
\end{align*}
as well as a non-commutative product $\magW$\index{Weyl product!magnetic} on the level of functions on phase space which emulates the operator product, \ie 
\begin{align*}
	\Op^A(f \magW g) := \Op^A(f) \, \Op^A(g) 
	. 
\end{align*}
By covariance of $\Op^A$, the product only depends on the magnetic field and after some effort, one arrives at an explicit integral expression for $\magW$: 
\begin{align}
	(f \magW g)(X) &= \frac{1}{(2 \pi)^{2d}} \int_{\Pspace} \dd Y \int_{\Pspace} \dd Z \, e^{+i \sigma(X,Y+Z)} \, e^{i \frac{\eps}{2} \, \sigma(Y,Z)} \, 
	\cdot \notag \\
	&\qquad \qquad \quad \cdot 
	e^{- \frac{i}{\eps} \Gamma^B(\sexpval{x-\frac{\eps}{2}(y+z),x + \frac{\eps}{2}(y-z) , x+\frac{\eps}{2}(y+z))}} \, 
	\bigl ( \Fs f \bigr )(Y) \, \bigl ( \Fs g \bigr )(Z) 
	\label{intro:physical_aspects:eqn:magW}
\end{align}
If $\eps \ll 1$, the product $\magW$ can be expanded asymptotically in $\eps$ and we get an expansion of the operator product, 
\begin{align*}
	\Op^A(f \magW g) \asymp \Op^A \Bigl ( \mbox{$\sum_{n = 1}^{\infty}$} \eps^n \, (f \magW g)_{(n)} \Bigr ) 
	= \sum_{n = 0}^{\infty} \eps^n \, \Op^A \bigl ( (f \magW g)_{(n)} \bigr ) 
	. 
\end{align*}
This idea has been put to good use when studying perturbation expansions and semiclassical limits (see \eg \cite{Littlejohn_Weigert:diagonalization_multi_wave:1993,PST:sapt:2002} for ordinary Weyl calculus and Chapter~\ref{magsapt} and \cite{Mueller:product_rule_gauge_invariant_Weyl_symbols:1999,DeNittis_Lein:Bloch_electron:2009,Fuerst_Lein:scaling_limits_Dirac:2008} for applications of magnetic Weyl calculus). 


\section{Mathematical aspects: magnetic $\Psi$DOs} 
\label{intro:mathematical_aspects}
%
Motivated by problems in mathematical physics, mathematicians and mathematical physicists sought to apply and generalize pseudodifferential techniques to magnetic problems. We start by example: let $H^A$ be a covariant selfadjoint magnetic operator, \eg 
\begin{align*}
	H^A = \tfrac{1}{2} {\Pe^A}^2 + V(\Qe) 
\end{align*}
on $L^2(\R^d)$ where $A$ is a vector potential to the magnetic field $B$. Functions of $H^A$ inherit its gauge-covariance, \ie for any equivalent gauge $A' = A + \dd \chi$, 
\begin{align*}
	f(H^{A'}) &= e^{+ i \chi(\Qe)} \, f(H^A) \, e^{- i \chi(\Qe)}
\end{align*}
holds true. Particular examples are resolvents $(H^A - \zeta)^{-1}$, $\zeta \not \in \sigma(H^A)$, and the semigroup $e^{- t H^A}$ (if $H^A$ is bounded from below). The Green function $G^A(x,y;\zeta)$, \ie the operator kernel of the resolvent $(H^A - \zeta)^{-1}$, contains many of the operator's properties and often occurs in the analysis of magnetic systems. Since $G^A(\cdot,\cdot;\zeta)$ depends explicitly on the choice of vector potential and thus a convenient choice of vector potential may be necessary (\eg the symmetric gauge in case of a constant magnetic field for $d = 2,3$). Physical properties such as the spectrum and the conductivity tensor to name just two, however, should only depend on the magnetic field $B$ and not on the vector potential $A$. The components of the conductivity tensor and other expectation values can be written as trace (per volume) of covariant operators, 
\begin{align*}
	\sigma^B := \mathrm{Tr} \, f(H^A) 
	, 
\end{align*}
which are independent of the choice of gauge since the trace is invariant under conjugation with unitary operators. Hence, the idea was to somehow get rid of the dependence on the vector potential $A$ and to start the analysis with an expression that only depends on the magnetic field $B$. Eventually, it was noticed that the dependence of the kernel $K^A_H$ of a covariant operator $T^A$ on the vector potential is relatively simple: the function 
\begin{align*}
	\tilde{K}^B_T := e^{+ i \Gamma^A([x,y])} \, K^A_T(x,y)
\end{align*}
no longer depends on the choice of vector potential (as can be checked explicitly by noting that the extra phase factor cancels a phase factor that stems from replacing translations with magnetic translations). This was noticed early on in articles by Peierls \cite{Peierls:Diamagnetismus:1933}, Luttinger \cite{Luttinger:magnetic_field_periodic_potential:1951} and Schwinger  \cite{Schwinger:vac_polarization:1951}, and used extensively later on in rigorous works as well, see \eg \cite{Cornean_Nenciu_Pederson:Faraday_effect:2005}. Magnetic fields $B$ are always better-behaved than vector potentials $A$ representing them and thus it is easier to analyze $\tilde{K}^B_T$ than the original operator kernel. For instance, if one wants to show exponential decay of the Green function $G^A(\cdot,\cdot;\zeta)$ for $\zeta \not \in \sigma(H^A)$ in $x-y$, then equivalently, one can prove exponential decay of $\tilde{G}^B(x,y;\zeta) := e^{+i \Gamma^A([x,y])} \, G^A(x,y;\zeta)$. 

The dependence of products of two gauge-covariant operators $T^A$ and $S^A$ also contain characteristic magnetic phase factors: let $F^A := T^A \, S^A$ be the product of two bounded gauge-covariant operators with kernels $K^A_T$ and $K^A_S$. Then 
\begin{align*}
	\tilde{K}^B_F(x,y) &= e^{+ i \Gamma^A([x,y])} \, \int_{\R^d} \dd z \, K^A_T(x,z) \, K^A_S(z,y) 
	\\
	&
	= \int_{\R^d} \dd z \, e^{+ i \Gamma^A([x,y])} e^{- i \Gamma^A([x,z])} e^{- i \Gamma^A([z,y])} \, \tilde{K}^B_T(x,z) \, \tilde{K}^B_S(z,y) 
	\\
	&= \int_{\R^d} \dd z \, e^{- i \Gamma^B(\sexpval{x,y,z})} \, \tilde{K}^B_T(x,z) \, \tilde{K}^B_S(z,y) 
\end{align*}
contains the exponential of the magnetic flux through the triangle with corners $x$, $y$ and $z$. In view of equations~\eqref{intro:physical_aspects:eqn:magW} and \eqref{intro:physical_aspects:eqn:OpA_right}, these additional magnetic phase factors are hardly surprising. 

Their universality suggests a more systematic approach to magnetic pseudodifferential operators which gives access to a rich toolbox of results that can be re-used and exploit the structure of magnetic problems. Măntoiu and Purice have laid the foundation in \cite{Mantoiu_Purice:magnetic_Weyl_calculus:2004} and transcribed the most fundamental results of pseudodifferential theory to the magnetic context (\eg \cite{Iftimie_Mantiou_Purice:magnetic_psido:2006,Iftimie_Mantoiu_Purice:commutator_criteria:2008}). In addition, an algebraic approach in the spirit of \cite{Mantoiu:Cstar_algebras_dynamical_systems_at_infinity:2002} and \cite{Amrein_Boutet_Georgescu:C0_groups_commutator_methods:1996} was proposed in \cite{Mantoiu_Purice_Richard:twisted_X_products:2004} so that pseudodifferential and algebraic techniques may be combined to one's advantage \cite{Lein_Mantoiu_Richard:anisotropic_mag_pseudo:2009,Mantoiu_Purice:continuity_spectra:2009}. 


\section{Structure and main results} 
\label{intro:scope}
This thesis is based on four publications by the author \cite{Lein:two_parameter_asymptotics:2008}, Lein, Măntoiu and Richard \cite{Lein_Mantoiu_Richard:anisotropic_mag_pseudo:2009}, De~Nittis and Lein \cite{DeNittis_Lein:Bloch_electron:2009} and Belmonte, Lein and Măntoiu \cite{Belmonte_Lein_Mantoiu:mag_twisted_actions:2010}. A fifth article with Fürst is in preparation \cite{Fuerst_Lein:scaling_limits_Dirac:2008}. 
\medskip

\noindent
Chapter~\ref{magWQ} outlines the basic formalism of magnetic Weyl calculus. It starts with a pedagogical introduction to usual Weyl calculus that sets the stage for developing a magnetic version. This is done in Chapters~\ref{magWQ:magnetic_weyl_calculus} and \ref{magWQ:extension}. A list of important results that are needed in the remainder of this thesis is given in Chapter~\ref{magWQ:important_results}. Among them are $L^2$-boundedness of operators in $\Op^A(\Hoerrd{0})$, basic facts on selfadjointness, Beals and Bony commutator criteria and results on inversion. The material is mostly taken from publications by Măntoiu and Purice \cite{Mantoiu_Purice:magnetic_Weyl_calculus:2004} and Iftimie, Măntoiu and Purice \cite{Iftimie_Mantiou_Purice:magnetic_psido:2006,Iftimie_Mantoiu_Purice:commutator_criteria:2008}. 
\medskip

\noindent
Chapter~\ref{asymptotics} which is based on \cite{Lein:two_parameter_asymptotics:2008} is devoted to the development of a functional calculus for observables $\Qe$ and $\Pe^A$ that satisfy the following commutation relations:\index{commutation relations!magnetic} 
\begin{align}
	i [ \Qe_l , \Qe_j ] = 0 
	&&
	i [ \Pe^A_l , \Qe_j ] = \eps \delta_{lj} 
	&&
	i [ \Pe^A_l , \Pe^A_j ] = - \eps \lambda B_{lj}(\Qe) 
\end{align}
Here, $\eps \ll 1$ is a semiclassical parameter and $\lambda \leq 1$ quantifies the coupling to the magnetic field. We start with a brief discussion concerning the realization of the above commutation relations as operators on Hilbert spaces $\Hil \cong L^2(\R^d)$: the results in the remainder of Chapter~\ref{asymptotics} -- in particular the form of the product $\magW$ and its asymptotic expansions as well as the semiclassical limit -- hold true as long as $\Qe$ and $\Pe^A$ are unitarily equivalent to 
\begin{align}
	\Qe' &:= \hat{x} \\
	{\Pe'}^A &:= - i \eps \nabla_x - \lambda A(\hat{x}) \notag 
\end{align}
equipped with the usual domains as operators on $L^2(\R^d)$. This will be of importance in Chapter~\ref{magsapt}. We reiterate formulas for magnetic Weyl quantization and Wigner transform in Chapters~\ref{asymptotics:weyl_quantization}--\ref{asymptotics:mag_Wigner_transform} where the two parameter $\eps$ and $\lambda$ are put in the right places. 

The first main result of this thesis is contained in Chapter~\ref{asymptotics:expansions}. We prove asymptotic expansions for the magnetic Weyl product $f \magW g$ of two Hörmander class symbols $f \in \Hoermr{m_1}{\rho}$ and $g \in \Hoermr{m_2}{\rho}$ using oscillatory integral techniques: a two-parameter expansion in $\eps$ \emph{and} $\lambda$ (Theorem~\ref{asymptotics:thm:asymptotic_expansion}) is shown first, a one-parameter expansion in $\eps$ is an immediate corollary (Corollary~\ref{asymptotics:cor:asymptotic_expansion_eps}) and finally, an expansion in the coupling constant $\lambda$ (Theorem~\ref{asymptotics:thm:lambda_expansion}) is proven as well. This makes the formal derivation by Müller rigorous \cite{Mueller:product_rule_gauge_invariant_Weyl_symbols:1999}. The expansion of $\magW$ in terms of the semiclassical parameter is immediately put to good use in the proof of an Egorov-type theorem, Theorem~\ref{asymptotics:semiclassical_limit:thm:semiclassical_limit_observables}, which connects quantum and classical time evolution. 

The last section establishes that to first order, perturbation expansions in $\eps$ derived with the help of usual Weyl calculus and magnetic Weyl calculus must agree up to errors of order $\order(\eps^2)$. This explains why usual Weyl calculus reproduces the correct results in applications, although typically, stronger assumptions need to be placed on the magnetic field. 
\medskip

\noindent
Chapter~\ref{magsapt} deals with an application which stimulated the author's interest in magnetic pseudodifferential operators in the first place: the derivation for effective dynamics for the magnetic Bloch electron. Here, a single particle is subjected to a periodic potential and a slowly varying electromagnetic field. The results which have been obtained in collaboration with G.~De~Nittis in \cite{DeNittis_Lein:Bloch_electron:2009} generalize the work by Panati, Spohn and Teufel \cite{PST:effective_dynamics_Bloch:2003}. 

After introducing the model in Chapter~\ref{magsapt:intro} and rewriting it in a suitable form in Chapter~\ref{magsapt:rewriting:Zak}, an equivariant version of magnetic Weyl calculus is introduced in Chapter~\ref{magsapt:rewriting:magnetic_weyl_calculus}. Since magnetic Weyl calculus incorporates the magnetic field in a natural manner, some points in the original publication can be simplified. For instance, it is not necessary to work with weighted symbol classes $S^w$, traditional Hörmander symbols are here used instead. 

The next section explains the physical content behind the philosophy of space-adiabatic perturbation theory \cite{PST:sapt:2002}, the main tool employed in the derivation of effective quantum and semiclassical dynamics in Chapter~\ref{magsapt:modulation_field}. Only the necessary modifications are mentioned since the proofs carry over from \cite{PST:effective_dynamics_Bloch:2003} \emph{mutadis mutandis.}\medskip

\noindent
Chapter~\ref{algebraicPOV} introduces magnetic quantization and magnetic pseudodifferential theory from an algebraic point of view and serves as preparation for Chapter~\ref{psiDO_reloaded}. The first two sections draw heavily from \cite{Mantoiu_Purice_Richard:twisted_X_products:2004,Mantoiu_Purice_Richard:Cstar_algebraic_framework:2007} while the last relies on \cite{Amrein_Boutet_Georgescu:C0_groups_commutator_methods:1996} and \cite{Mantoiu:Cstar_algebras_dynamical_systems_at_infinity:2002}. Properties of resolvents and spectra of magnetic $\Psi$DOs can be linked to special $C^*$-subalgebras of $\mathcal{B} \bigl ( L^2(\R^d) \bigr )$. They are representations of so-called twisted crossed products $\twistedXprod$ which are the topic of Chapter~\ref{algebraicPOV:twisted_crossed_products}. After we recall Gelfand theory, crossed products are introduced as completions of $L^1(\Xgroup ; \Alg)$ where $\Xgroup$ is an abelian Polish group acting on an abelian $C^*$-algebra $\Alg$ via $\theta : \Xgroup \longrightarrow \mathrm{Aut}(\Alg)$. Typically, $\Alg$ consists of bounded, uniformly continuous functions on $\Xgroup = \R^d , \T^d , \Z^d$ and this `anisotropy algebra' characterizes the behavior of the magnetic field $B$ as well as that of functions on phase space $\Pspace = \Xgroup \times \hat{\Xgroup}$ in the position variable, \ie $x \mapsto h(x,\xi) \in \Alg$. Next, a magnetic twist $\omega(q ; x , y) := e^{- i \Gamma^B(\sexpval{q,q+x,q+x+y})}$ is introduced in Chapter~\ref{algebraicPOV:twisted_crossed_products:twisted_crossed_products}. It enters in the twisted convolution $\repom$  which serves as a product on $L^1(\Xgroup ; \Alg)$ and $\twistedXprod$ and is related to the magnetic Weyl product $\magW$ by partial Fourier transform. The associativity of $\repom$ is ensured by the so-called $2$-cocycle property of $\omega$. In case $\Alg$ is a $C^*$-subalgebra of $\BCont_u(\Xgroup)$, more can be said about the structure of the twisted crossed products and their representations: there exists a natural representation on $L^2(\Xgroup)$ called Schrödinger representation. The twisted crossed product $\Cont_{\infty} \rtimes^{\omega}_{\theta,\tau} \Xgroup$ plays a special role since it is mapped onto the compact operators $\mathcal{K} \bigl ( L^2(\Xgroup) \bigr )$ by the Schrödinger representation. This characterization of the compact operators enters in the analysis of essential spectra in Chapter~\ref{psiDO_reloaded:spectral:ess_spec}. Gauge-covariance of representations is explained in terms of cohomology. The connex to Weyl calculus is made in Chapter~\ref{algebraicPOV:generalized_weyl_calculus}. 

The last section introduces the concept of affiliation which is the abstract analog of a functional calculus for selfadjoint operators, \ie it is a morphism $\Phi : \Cont_{\infty}(\R) \longrightarrow \Calg$ mapping to a $C^*$-algebra $\Calg$. However, $\Calg$ need not be the algebra of bounded operators on a Hilbert space $\mathcal{B}(\Hil)$. Spectra and essential spectra as sets can be recovered in this formalism as well: if $\pi : \Calg \longrightarrow \Calg'$ is a morphism between $C^*$-algebras and $\Phi : \Cont_{\infty}(\R) \longrightarrow \Calg$ an observable affiliated to $\Calg$, then $\pi \circ \Phi : \Cont_{\infty}(\R) \longrightarrow \Calg'$ is an observable affiliated to $\Calg'$. The spectrum of $\pi \circ \Phi$ tends to be smaller as that of $\Phi$ since morphisms are norm-decreasing. Two particular examples of morphisms $\pi$ are representations and projections onto $\Calg / \Ideal$ where $\Ideal \subseteq \Calg$ is a two-sided ideal. The latter is used in the characterization of essential spectra of pseudodifferential operators. The chapter finishes with a short discussion on tensor products of $C^*$-algebras which can be used to treat observables which are `direct integrals' or sequences of observables within algebraic framework. 
\medskip

\noindent
Chapter~\ref{psiDO_reloaded} combines algebraic and pseudodifferential methods to analyze properties of Moyal resolvents and essential spectra of magnetic pseudodifferential operators with certain behavior in the $x$ variable. It is based on a joint work with Măntoiu and Richard \cite{Lein_Mantoiu_Richard:anisotropic_mag_pseudo:2009}. This part of the thesis is dedicated to the study of magnetic pseudodifferential operators whose behavior in the position variable is characterized by some algebra $\Alg$ composed of bounded, uniformly continuous functions on $\R^d$. In other words, these operators are magnetic quantizations of functions $f : \Pspace \longrightarrow \C$ for which $x \mapsto f(x,\xi) \in \Alg$ holds for all $\xi \in {\R^d}^*$, and the components of the magnetic field $B$ are also elements of $\Alg$. This algebra is called \emph{anisotropy}. After introducing the smooth elements of $\Alg$, anisotropic Hörmander classes are defined in Chapter~\ref{psiDO_reloaded:magB_anisotropic_symbols:anisotropic_symbol_spaces}. The task of showing that the anisotropy is preserved under the Moyal product $\magW$ is taken up in Chapter~\ref{psiDO_reloaded:magB_anisotropic_symbols:symbol_composition}. It is also shown that the asymptotic expansions obtained in Chapter~\ref{asymptotics:expansions} are compatible with the anisotropy. 

Chapter~\ref{psiDO_reloaded:relevant_cStar_algebras} introduces a few $C^*$-algebras that are relevant for the results on inversion and affiliation in Chapter~\ref{psiDO_reloaded:inversion_and_affiliation}. The first main result is that the anisotropy is preserved under inversion: if $f$ is a real-valued, elliptic anisotropic Hörmander symbol $\Hoermr{m}{\rho} ({\R^d}^* ; \Alg^{\infty})$ of positive order $m$, then Moyal resolvents $(f - z)^{(-1)_B} \in \Hoermr{-m}{\rho} ({\R^d}^* ; \Alg^{\infty})$ are -- if they exist -- also \emph{anisotropic} Hörmander symbols of order $-m$ (Theorem~\ref{psiDO_reloaded:inversion_affiliation:thm:inversion}). The elegant proof is not based on a parametrix construction, but rather on a combination of an analytic result, Proposition~6.31 in \cite{Iftimie_Mantoiu_Purice:commutator_criteria:2008}, and a fact from the intersection of analysis and algebra \cite[Corollary~2.5]{Lauter:operator_theoretical_approach_melrose_algebras:1998}. The existence of the family of Moyal resolvents yields a principle of affiliation of $f$ to the twisted crossed product $\Fourier \bigl ( \Alg \rtimes^{\omega^B}_{\theta} \R^d \bigr )$ (Theorem~\ref{psiDO_reloaded:inversion_affiliation:thm:affiliation}). 

Chapter~\ref{psiDO_reloaded:spectral} is dedicated to the spectral analysis of magnetic pseudodifferential operators. Affiliating suitable functions on phase space to twisted crossed products allows the treatment of potentially unbounded pseudodifferential operators. For these observables, it is shown how the spectrum and the essential spectrum as sets can be recovered. Assume the anisotropy algebra is unital and contains $\Cont_{\infty}(\R^d)$, the functions on $\R^d$ vanishing at infinity. Then, the intuitive notion that the behavior of the potentials and magnetic fields at infinity is responsible for the essential spectrum is made rigorous in Theorem~\ref{psiDO_reloaded:spectral:thm:essential_spectrum}: the essential spectrum of a magnetic pseudodifferential operator is written as the union of spectra of magnetic $\Psi$DOs that `live on orbits at infinity.' This notion is made precise by borrowing tools from Gelfand theory and $C^*$-dynamical systems. Although the link between the Calkin algebra, the quotient of bounded operators by the ideal of the compact operators, and the essential spectrum is well-known [citation], we have obtained much more detailed information: depending on the anisotropy algebra $\Alg$, we may even be able to calculate the essential spectrum from the spectrum of a few simpler pseudodifferential operators. The fact that there is no Hilbert space analog of this decomposition highlights the usefulness of the combination of abstract algebraic and pseudodifferential methods. 
\medskip

\noindent
A brief outlook is given in the last chapter. An Appendix contains some additional information and auxiliary results needed in some of the proofs. 


\chapter{Magnetic Weyl Calculus} 
\label{magWQ}
%
The problem of `consistently' assigning operators on a Hilbert space to classical functions on phase space has seen quite a few attempts. As one of the basic questions, a coherent answer first written up in its entirety by Moyal in 1949~\cite{Moyal:Weyl_calculus:1949} who proposed to use 
\begin{align}
	\bigl ( \Op(f) u \bigr )(x) = \frac{1}{(2\pi)^d} \int_{\R^d} \dd y \int_{{\R^d}^*} \dd \eta \, e^{- i (y - x) \cdot \eta} \, f \bigl ( \tfrac{1}{2} (x + y) , \eta \bigr ) \, u(y) 
	\label{magWQ:eqn:non_mag_WQ}
\end{align}
for suitable functions $f : T^* \R^d \longrightarrow \C$ and $u \in \Hil$ where the Hilbert space $\Hil$ is typically $L^2(\R^d)$ or some Sobolev space, for instance. The problem is how to extend this definition to the case of a particle subjected to a magnetic field. Up until very late in the game, the standard recipe has been to utilize minimal coupling, \ie apply equation~\eqref{magWQ:eqn:non_mag_WQ} to $f \circ \vartheta^A(x,\xi) := f \bigl ( x , \xi - A(x) \bigr )$. Before we explain why this is not the correct solution, let us review standard Weyl calculus first.

\section{Standard Weyl calculus} 
\label{magWQ:standard_weyl_calculus}
There are many texts on standard Weyl calculus, \eg \cite{Folland:harmonic_analysis_hase_space:1989,Hoermander:Weyl_calculus:1979,Hoermander:analysis_PDO1:1983,Stein:harmonic_analysis:1993}, and although we will not stick to any of them in particular, we do not make any claims of originality. Our presentation emphasizes the structural aspects and introduces the paradigms which make the generalization to magnetic Weyl calculus logical and intuitive.

\subsection{Comparison of classical and quantum mechanical frameworks} 
\label{magWQ:standard_weyl_calculus:structural_frameworks}
Understanding of quantization requires knowledge of classical \emph{and} quantum mechanics. A quantization procedure is not merely a method to `consistently assign operators on $L^2(\R^d)$ to functions on phase space,' but rather a collection of procedures. A nice overview of the two frameworks can be found in the first few sections of chapter~5 in \cite{Waldmann:deformation_quantization:2008} and we will give a condensed account here: physical theories consist roughly of three parts: 
\begin{enumerate}[(i)]
	\item \emph{State space: }
	states describe the current configuration of the system and need to be encoded in a mathematical structure. 
	\item \emph{Observables: }
	they represent quantities physicists would like to measure. Related to this is the idea of spectrum as the set of possible outcomes of measurements as well as expectation values (if ones deals with distributions of states). 
	\item \emph{Evolution equation: }
	usually, one is interested in the time evolution of states as well as observables. As energy is the observable conjugate to time, energy functions generate time evolution. 
\end{enumerate}

\subsubsection{Hamiltonian framework of classical mechanics} 
\label{magWQ:standard_weyl_calculus:structural_frameworks:framework_classical_mechanics}
Pure states in classical mechanics are simply points on phase space $\Pspace := (T^* \R^d , \omega)$, \ie the cotangent bundle $T^* \R^d \cong \R^d \times {\R^d}^*$ endowed with a symplectic form $\omega$ that is usually taken to be $\omega_0 = \dd x \wedge \dd \xi$. This symplectic form determines the form of the evolution equation associated to the energy function $h$ called the hamiltonian. Mixed states are merely probability measures $\mu$ on $\Pspace$, \ie positive Borel measures normalized to $1$. An observable $f$ is a smooth function on $\Pspace$ with values in $\R$. Then the expectation value\index{expectation value!classical} of $f$ with respect to the state $\mu$ is given by the phase space average 
\begin{align*}
	\mathbb{E}_{\mu}(f) := \int_{T^* \R^d} \dd \mu(X) \, f(X) 
	. 
\end{align*}
The symplectic structure on $T^* \R^d$ induces a Poisson structure on $\Cont^{\infty}(T^* \R^d)$: with pointwise addition, multiplication and complex conjugation as involution, $\Cont^{\infty}(T^* \R^d)$ forms a Poisson-$\ast$-algebra. The Poisson bracket\index{Poisson bracket} defined via the symplectic form as 
\begin{align}
	\{ f , g \}_{\omega} := - \omega(X_f , X_g) 
\end{align}
where $X_f$ and $X_g$ satisfy $\omega(X_f , \cdot) = \dd f$ and $\omega(X_g , \cdot) = \dd g$, respectively. The evolution of observables is generated by 
\begin{align}
	\frac{\dd }{\dd t} f(t) = \bigl \{ h , f(t) \bigr \}_{\omega} 
	\label{magWQ:eqn:classical_evolution}
\end{align}
with $f(t) := f \circ \phi_t$ where $\phi_t$ is the hamiltonian flow\index{hamiltonian flow} generated by $h$. Equivalently, pure or mixed states can be time-evolved instead of the observables. Put in quantum mechanical terms, evolving observables corresponds to the Heisenberg picture\index{Heisenberg picture!classical}, evolving states corresponds to the Schrödinger picture\index{Schrödinger picture!classical}. 


\subsubsection{Quantum mechanics} 
\label{magWQ:standard_weyl_calculus:structural_frameworks:quantum_mechanics}
Here, pure states are rays in a Hilbert space $\Hil$, or, equivalently, orthogonal projections onto a state $\psi \in \Hil$. Mixed states are density operators $\rho$ that are positive trace-class operators normalized to $1$. Physical observables are selfadjoint, densely defined operators on $\Hil$. With the adjoint as involution and addition and multiplication defined as usual, they form a $\ast$-algebra. The role of the Poisson bracket is played by the commutator $[A , B] := A \, B - B \, A$. Expectation values\index{expectation value!quantum} are computed via the trace 
\begin{align}
	\mathbb{E}_{\rho}(A) := \mathrm{tr} \bigl ( \rho \, A \bigr ) 
	. 
\end{align}
The dynamics of the observables in the Heisenberg picture\index{Heisenberg picture!quantum} are generated by 
\begin{align}
	\frac{\dd }{\dd t} A(t) = \frac{i}{\hbar} \bigl [ H , A(t) \bigr ] 
\end{align}
which is structurally equivalent to equation~\eqref{magWQ:eqn:classical_evolution}. The unitary time-evolution group $U(t) = e^{- i \frac{t}{\hbar} H}$ satisfies the \emph{Schrödinger equation}\index{Schrödinger equation} 
\begin{align}
	i \hbar \frac{\dd }{\dd t} U(t) = H \, U(t) 
	. 
\end{align}
Then the time-evolved observable is given by 
\begin{align}
	A(t) = \mathrm{ad} \bigl (U(t) \bigr ) \bigl ( A \bigr ) := U(t)^* \, A \, U(t)  = e^{+ i \frac{t}{\hbar} H} \, A \, e^{- i \frac{i}{\hbar} H} 
	. 
\end{align}
The spectrum of an observable $\spec(A)$ defined in the usual functional analytic sense gives the possible outcomes of measurements in experiments while the projection-valued measure contains the statistics. 


\subsubsection{Comparison of the two frameworks} 
\label{magWQ:standard_weyl_calculus:structural_frameworks:comparison}
Now that we have an understanding of the mathematical structures which we have juxtaposed in Table~\ref{magWQ:table:overview_frameworks}, what can we deduce from this? First of all, what is usually considered a quantization only gives a third of the total answer: a map $\Op$ from suitable functions on phase space to operators on the Hilbert space $L^2(\R^d)$. However, it is clear that $\Op$ cannot simply map `functions onto operators:' the quantum algebra of observables $\Alg_{\mathrm{qm}}$ is significantly different from the classical algebra $\Alg_{\mathrm{cl}}$ -- it is \emph{noncommutative}. If we `map back' from $\Alg_{\mathrm{qm}}$ to $\Alg_{\mathrm{cl}}$, assuming that is possible, then we get a \emph{modified, noncommutative product $\Weyl$} of functions (that is the point of view of deformation quantization, see \eg \cite{Waldmann:deformation_quantization:2008}) which satisfies 
\begin{align*}
	\Op(f) \, \Op(g) = \Op(f \Weyl g) 
	. 
\end{align*}
The associated dequantization map $\Op^{-1}$ should also connect quantum states (written as projections or, more generally, density operators) with measures on phase space. So let $\rho_u := \sopro{u}{u}$ be a pure state, $\snorm{u}_{L^2} = 1$, and consider the expectation value of $\Op(f)$ with respect to $\rho_u$, $\mathbb{E}_{\rho_u} \bigl ( \Op(f) \bigr ) = \trace \bigl ( \rho_u \, \Op(f) \bigr ) = \sscpro{u}{\Op(f) u}$. Does there exist a measure $\mu_u$ on $T^* \R^d$ such that 
\begin{align*}
	\mathbb{E}_{\rho_u} \bigl ( \Op(f) \bigr ) = \sscpro{u}{\Op(f) u} \overset{?}{=} \int_{T^* \R^d} \dd \mu_u(X) \, f(X) = \mathbb{E}_{\mu_u}(f) 
	? 
\end{align*}
If so, what properties does $\mu_u = (2\pi)^{-\nicefrac{d}{2}} \, \WignerTrafo(\rho_u)$ have? It is clear that the properties of $\WignerTrafo$ should follow from the properties of $\Op$. Quite naturally, we demand the following from a `good' quantization procedure: 

\paragraph{Linearity} 
\label{magWQ:standard_weyl_calculus:structural_frameworks:comparison:linearity}
The map $\Op$ should be linear, \ie for two classical observables $f , g \in \mathfrak{A}_{\mathrm{cl}}$ taken from the algebra of classical observables and $\alpha , \beta \in \C$, we should have 
\begin{align*}
	\Op \bigl ( \alpha f + \beta g \bigr ) = \alpha \, \Op(f) + \beta \, \Op(g) \in \mathfrak{A}_{\mathrm{qm}} 
	. 
\end{align*}
Here, $\mathfrak{A}_{\mathrm{qm}}$ is an algebra of quantum observables. 


\paragraph{Compatibility with involution} 
\label{magWQ:standard_weyl_calculus:structural_frameworks:comparison:involution}
$\Op$ should intertwine complex conjugation and taking adjoints, \ie for all $f \in \mathfrak{A}_{\mathrm{cl}}$ 
\begin{align*}
	\Op (f^*) = \Op(f)^* \in \mathfrak{A}_{\mathrm{qm}} 
	. 
\end{align*}
%

\paragraph{Products} 
\label{magWQ:standard_weyl_calculus:structural_frameworks:comparison:products}
The two products \emph{cannot} be equivalent: the operator product is noncommutative and hence for general $f , g \in \mathfrak{A}_{\mathrm{cl}}$ 
\begin{align*}
	\Op(f \cdot g) \neq \Op(f) \cdot \Op(g) 
	. 
\end{align*}
Instead, for \emph{suitable} functions $f,g$, we can define a \emph{non-commutative} product $\Weyl$ on the level of functions on phase space such that 
\begin{align*}
	\Op(f \Weyl g) := \Op(f) \cdot \Op(g) \in \mathfrak{A}_{\mathrm{qm}} 
	. 
\end{align*}
\emph{A priori} it is not at all clear whether $f \Weyl g \in \mathfrak{A}_{\mathrm{cl}}$. 


\paragraph{Poisson bracket and commutator} 
\label{magWQ:standard_weyl_calculus:structural_frameworks:comparison:brackets}
The classical Poisson bracket and the quantum commutator play similar roles: they are both derivations, satisfy the Jacobi identity, and, in some sense, should be analogs of one another, 
\begin{align*}
	\bigl \{ f , g \bigr \} \leftrightsquigarrow \frac{i}{\hbar} \bigl [ \Op(f) , \Op(g) \bigr ] 
	. 
\end{align*}
Just like with the product, the quantization of the Poisson bracket usually does not coincide with with $\nicefrac{i}{\hbar}$ times the commutator. 

%
\begin{table}
	\begin{tabularx}{\textwidth}{>{\small\raggedright\hsize=3cm}X | >{\small\raggedright}X >{\small\raggedright\arraybackslash}X}
		\makebox[1cm]{} & \textit{Classical} & \textit{Quantum} \\ \hline 
		\textit{States} & positive normalized Borel measures $\mu$ on phase space $\Pspace$ & density operators on $L^2(\R^d)$ \\ [0.75ex] 
		\textit{Observables} & commutative Poisson-$\ast$-algebra $\Alg_{\mathrm{cl}}$ of functions on $\Pspace$ & noncommutative $\ast$-algebra $\Alg_{\mathrm{qm}}$ of operators acting on the Hilbert space $L^2(\R^d)$ \\ [0.75ex] 
		\textit{Building block observables} & position $x$ and momentum $p$ & position $\Qe$ and momentum $\Pe$ \\ [0.75ex] 
		\textit{Possible results of measurements} & $\mathrm{im}(f)$ & $\spec(A)$ \\ [0.75ex] 
		\textit{Generator of evolution} & hamiltonian function $h : \Pspace \longrightarrow \R$ & hamiltonian operator $H : \mathcal{D}(H) \longrightarrow L^2(\R^d)$ \\ [0.75ex] 
		\textit{Infinitesimal time evolution equation} & $\frac{\dd }{\dd t} f(t) = \bigl \{ h , f(t) \bigr \}$ & $\frac{\dd }{\dd t} A(t) = \frac{i}{\hbar} \bigl [ H , A(t) \bigr ]$ \\ [0.75ex] 
		\textit{Integrated time evolution} & hamiltonian flow $\phi_t$ as one-parameter group of automorphisms & $\mathrm{ad} \bigl ( e^{- i \frac{t}{\hbar} H} \bigr ) \bigl ( \cdot \bigr ) = e^{+ i \frac{t}{\hbar} H} \, \cdot \, e^{- i \frac{t}{\hbar} H}$ as one-parameter group of automorphisms \\ 
	\end{tabularx}
	\caption{Comparison of classical and quantum framework}
	\label{magWQ:table:overview_frameworks}
\end{table}
%



\subsection{The Weyl system} 
\label{magWQ:standard_weyl_calculus:weyl_system}
In position representation, the `building block operators' of quantum mechanics on $\R^d$, position $\Qe = \hat{x}$ which acts as multiplication by $x$, $(\Qe u)(x) = x \, u(x)$, and momentum $\Pe = - i \nabla_x$, $(\Pe u)(x) = - i (\nabla_x u)(x)$, are characterized by their commutation relations\index{commutation relations!non-magnetic} 
\begin{align}
	i [\Qe_l , \Qe_j] = 0 
	&&
	i [\Pe_l , \Pe_j] = 0 
	&&
	i [\Pe_l , \Qe_j] = \delta_{lj} 
	. 
\end{align}
Since commutators of unbounded operators are problematic \cite[Chapter~VIII.5]{Reed_Simon:M_cap_Phi_1:1972}, it is technically more convenient to encode the commutation relations into the so-called Weyl system which is a collection of unitary operators $\{ \WeylSys(X) \}_{X \in \Pspace}$ defined via the non-magnetic symplectic form $\sigma(X,Y) := \xi \cdot y - x \cdot \eta$ as\index{Weyl system!non-magnetic} 
\begin{align}
	\WeylSys(X) := e^{- i \sigma(X,(\Qe,\Pe))} = e^{- i (\xi \cdot \Qe - x \cdot \Pe)} 
	. 
\end{align}
Here $X \equiv (x,\xi)$, $Y \equiv (y,\eta)$ and $Z \equiv (z,\zeta)$ are points on phase space $\Pspace := T^* \R^d \cong \R^d \times {\R^d}^*$. Greek letters $\xi$, $\eta$ and $\zeta$ denote the momenta associated to $x$, $y$ and $z$. A simple Trotter argument shows that $\WeylSys(Y)$ acts on $u \in L^2(\R^d)$ as 
\begin{align}
	\bigl ( \WeylSys(Y) u \bigr )(x) = e^{- i (x + \frac{y}{2}) \cdot \eta} u(x + y) 
	\label{magWQ:eqn:usual_Weyl_System_applied}
	.
\end{align}
$\WeylSys : \Pspace \longrightarrow \mathcal{U} \bigl ( L^2(\R^d) \bigr )$, $X \mapsto \WeylSys(X)$, forms a strongly continuous projective representation of the group $\Pspace \cong \R^d \times {\R^d}^*$: for any $X , Y \in \Pspace$, the product of two Weyl operators gives another Weyl operator times a phase, 
\begin{align}
	\WeylSys(X) \, \WeylSys(Y) = e^{\frac{i}{2} \sigma(X,Y)} \WeylSys(X + Y) 
	. 
	\label{magWQ:eqn:usual_Weyl_System_composition_law}
\end{align}
This will be the key ingredient when determining the product formula. 


\subsection{Weyl quantization} 
\label{magWQ:standard_weyl_calculus:weyl_quantization}
We can define a convenient variant of the Fourier transform on $\Pspace$ via the symplectic form $\sigma$: for $f \in \Schwartz(\Pspace)$, we define\index{symplectic Fourier transform}\index{symplectic Fourier transform} 
\begin{align}
	\bigl ( \Fs f \bigr )(X) := \frac{1}{(2\pi)^d} \int_{\Pspace} \dd X' \, e^{i \sigma(X,X')} \, f(X') 
	. 
\end{align}
One easily checks that $\Fs$ is an involution, $\Fs^2 = \id_{\Schwartz}$, and thus $\Fs^{-1} = \Fs$. For Schwartz functions, we can now replace one exponential factor in 
\begin{align*}
	f(Y) = \frac{1}{(2\pi)^{2d}} \int_{\Pspace} \dd X \int_{\Pspace} \dd X' \, e^{i \sigma(Y , X)} e^{i \sigma(X,X')} \, f(X') 
\end{align*}
by $\WeylSys(X)$ and get the \emph{Weyl quantization of $f$},\index{Weyl quantization!non-magnetic} 
\begin{align}
	\Op(f) := \frac{1}{(2\pi)^d} \int_{\Pspace} \dd X \, (\Fs f)(X) \, \WeylSys(X) 
	. 
	\label{magWQ:eqn:usual_WQ}
\end{align}
The fact that $\WeylSys(X)$ is a projective group representation and the definition of the symplectic Fourier transform imply $\bigl ( (\Fs f)(X) \, \WeylSys(X) \bigr )^* = (\Fs f^*)(-X) \, \WeylSys(-X)$ and thus Weyl quantization corresponds to symmetric operator ordering,\index{operator ordering} 
\begin{align*}
	\Op(f)^* &= \frac{1}{(2\pi)^d} \int_{\Pspace} \dd X \, (\Fs f)^*(X) \, \WeylSys(X)^* 
	= \frac{1}{(2\pi)^d} \int_{\Pspace} \dd X \, (\Fs f^*)(-X) \, \WeylSys(-X) 
	\\
	&
	= \Op(f^*) 
	. 
\end{align*}
Morally, the right-hand side of the above definition reduces to $f(\Qe,\Pe)$: 
\begin{align*}
	\frac{1}{(2\pi)^d} \int_{\Pspace} \dd X \, &e^{- i \sigma(X,(\Qe,\Pe))} \, \frac{1}{(2\pi)^d} \int_{\Pspace} \dd X' e^{i \sigma(X,X')} \, f(X') = 
	\\
	&= \frac{1}{(2\pi)^{2d}} \int_{\Pspace} \dd X' \left ( \int_{\Pspace} \dd X e^{i \sigma(X,X' - (\Qe,\Pe))} \right ) \, f(X') 
	\\
	&= \int_{\Pspace} \dd X' \, \delta \bigl ( X' - (\Qe,\Pe) \bigr ) \, f(X') = f(\Qe,\Pe) 
\end{align*}
Obviously, the reader should add quotation marks to the above. We can write down the action of a Weyl quantized operator on a wave function explicitly: 
\begin{lem}\label{magWQ:standard_weyl_calculus:weyl_quantization:lem:action_Op}
	The Weyl quantization of $h \in \Schwartz(\Pspace)$ defines a bounded operator on $L^2(\R^d)$ whose operator norm is bounded by 
	\begin{align*}
		\bnorm{\Op(h)}_{\mathcal{B}(L^2(\R^d))} \leq (2\pi)^{-d} \bnorm{\Fs h}_{L^1(\Pspace)} < \infty 
	\end{align*}
	and for all $u \in L^2(\R^d)$, we have 
	\begin{align}
		\bigl ( \Op(h) u \bigr )(x) &= \frac{1}{(2\pi)^d} \int_{\R^d} \dd y \int_{{\R^d}^*} \dd \eta \, e^{- i(y - x) \cdot \eta} \, h \bigl ( \tfrac{1}{2}(x + y) , \eta \bigr ) \, u(y) 
		\\
		&=: \frac{1}{(2\pi)^{\nicefrac{d}{2}}} \int_{\R^d} \dd y \, K_h(x,y) \, u(y) 
		\label{magWQ:standard_weyl_calculus:eqn:def_integral_kernel}
		. 
	\end{align}
\end{lem}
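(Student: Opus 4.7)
My plan is to verify both claims directly from the definition \eqref{magWQ:eqn:usual_WQ} of $\Op(h)$ and the explicit action \eqref{magWQ:eqn:usual_Weyl_System_applied} of the Weyl system on $L^2(\R^d)$. First, I would observe that the symplectic Fourier transform preserves the Schwartz class, so $\Fs h \in \Schwartz(\Pspace) \subset L^1(\Pspace)$; since each $\WeylSys(X)$ is unitary with operator norm one, the map $X \mapsto (\Fs h)(X) \, \WeylSys(X)$ is Bochner integrable in $\mathcal{B}(L^2(\R^d))$ (strong operator topology suffices), and the triangle inequality yields
\begin{align*}
	\bnorm{\Op(h)}_{\mathcal{B}(L^2(\R^d))} \leq \frac{1}{(2\pi)^d} \int_{\Pspace} \dd X \, \babs{(\Fs h)(X)} \, \bnorm{\WeylSys(X)}_{\mathcal{B}(L^2(\R^d))} = \frac{1}{(2\pi)^d} \bnorm{\Fs h}_{L^1(\Pspace)} .
\end{align*}

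Next, to derive the explicit kernel formula, I would apply $\Op(h)$ to $u \in \Schwartz(\R^d)$ (then extend to $L^2(\R^d)$ by density and the bound just established). Using \eqref{magWQ:eqn:usual_Weyl_System_applied} with $Y = (y,\eta)$,
\begin{align*}
	\bigl ( \Op(h) u \bigr )(x) = \frac{1}{(2\pi)^d} \int_{\R^d} \dd y \int_{{\R^d}^*} \dd \eta \, (\Fs h)(y,\eta) \, e^{- i (x + \frac{y}{2}) \cdot \eta} \, u(x + y) .
\end{align*}
I would then insert the definition of $\Fs h$, use Fubini (justified by absolute integrability on the Schwartz side), and perform the $\eta$-integration as a Fourier inversion, producing a delta distribution $\delta(y' - x - \tfrac{y}{2})$ that fixes the first argument of $h$. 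A final change of variable $\tilde{y} = x + y$ converts the expression into $(y - x) \cdot \eta$ in the exponential and $h \bigl ( \tfrac{1}{2}(x + y), \eta \bigr )$ in the integrand, matching the claimed formula; the identification with an integral kernel $K_h$ is then just reading off the result.

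The main subtlety is the rigorous justification of the formal Fourier-inversion step. For $u \in \Schwartz(\R^d)$ there is no genuine analytic difficulty, but one should either invoke Fubini directly on the absolutely convergent quadruple integral or regard the inner $\eta$-integration as a distributional pairing against the Schwartz function $y' \mapsto h(y',\eta') \, u(x+y)$. Once the formula is established for Schwartz $u$, density of $\Schwartz(\R^d) \subset L^2(\R^d)$ combined with the operator-norm bound extends it to all of $L^2(\R^d)$, and the claim is proven.
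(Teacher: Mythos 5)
Your proof is correct and follows essentially the same route as the paper: interpret $\Op(h)$ as a Bochner integral, use unitarity of $\WeylSys(X)$ together with the triangle inequality to obtain the norm bound, and then obtain the kernel formula from the explicit action of the Weyl system via straightforward Fourier manipulations (the paper compresses this second part to a single sentence, but your elaboration matches the intended computation and justification).
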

\begin{proof}
	We can interpret $\Op(h)$ as a Bochner integral with respect to the operator norm on $L^2(\R^d)$ which immediately leads to the desired bound: 
	\begin{align*}
		\bnorm{\Op^A(h)}_{\mathcal{B}(L^2(\R^d))} &\leq \frac{1}{(2\pi)^d} \int_{\Pspace} \dd X \, \bnorm{(\Fs)(X) \, \WeylSys(X)}_{\mathcal{B}(L^2(\R^d))} 
		\\
		&
		= \frac{1}{(2\pi)^d} \int_{\Pspace} \dd X \, \babs{(\Fs)(X)} 
		= (2\pi)^{-d} \, \bnorm{\Fs f}_{L^1(\Pspace)}
	\end{align*}
	Elementary manipulations using equation~\eqref{magWQ:eqn:usual_Weyl_System_applied} yield equation~\eqref{magWQ:standard_weyl_calculus:eqn:def_integral_kernel}. 
\end{proof}
%
%
\begin{remark}\label{magWQ:standard_weyl_calculus:rem:kernel_map}
	The kernel map\index{kernel map!non-magnetic} $h \mapsto K_h$ is an isomorphism between $\Schwartz(\Pspace)$ and $\Schwartz(\R^d \times \R^d)$ and thus extends to tempered distributions. This is the starting point for defining Weyl calculus on distributions. 
\end{remark}
To be able to treat the prefactor in a coherent manner, we will add one more defintion: $\Int$ is just the regular integral modulo a factor of $(2\pi)^{-\nicefrac{d}{2}}$. 
\begin{defn}[Integral map]
	Let $h \in \Schwartz(\Pspace)$ be a function and $K_h$ be the integral kernel defined via equation~\eqref{magWQ:standard_weyl_calculus:eqn:def_integral_kernel}. Then we define 
	\begin{align*}
		\Op(h) =: \Int(K_h) 
		. 
	\end{align*}
\end{defn}
%


\subsection{The Wigner transform} 
\label{magWQ:standard_weyl_calculus:wigner_trafo}
If we look at the definition of $\Op(f)$, equation~\eqref{magWQ:eqn:usual_WQ}, then it is quite natural to see how the expectation values of the form $\bscpro{v}{\Op(f) u}$ can be rewritten as phase space averages of $f$: rewriting the expectation value as 
\begin{align}
	\sscpro{v}{\Op(f) u} &= \frac{1}{(2\pi)^{\nicefrac{d}{2}}} \int_{\Pspace} \dd X \, (\Fs f)(X) \, \bscpro{v}{\WeylSys(X) u} 
	\label{magWQ:standard_weyl_calculus:eqn:phase_space_exp_value} 
	\\
	&= \frac{1}{(2\pi)^d} \int_{\Pspace} \dd X \, f(X) \, \Fs \bigl ( \bscpro{v}{\WeylSys(\cdot) u} \bigr )(-X) 
	, 
	\notag 
\end{align}
suggests to look at the symplectic Fourier transform of the expectation value of the Weyl system. Let us start with the first building block: 
\begin{defn}[Fourier-Wigner transform\index{Fourier-Wigner transform!non-magnetic}]\label{magWQ:standard_weyl_calculus:fourier_wigner_trafo}
	Let $u,v \in \Schwartz(\R^d)$. Then we define the magnetic Fourier-Wigner transform $\rho(u,v)$ to be 
	\begin{align}
		\bigl ( \rho(u,v) \bigr )(X) := (2\pi)^{-\nicefrac{d}{2}} \, \bscpro{v}{\WeylSys(X) u}
	\end{align}
\end{defn}
\begin{remark}
	Our choice not to include the factor of $(2\pi)^{-\nicefrac{d}{2}}$ in the definition of the Fourier-Wigner transform will lead to a `missing' factor of $(2\pi)^{-\nicefrac{d}{2}}$ in the Wigner transform. This way, the Wigner transform is unitary and the inverse of the kernel map as defined in Remark~\ref{magWQ:standard_weyl_calculus:rem:kernel_map}. 
\end{remark}
\begin{lem}\label{magWQ:standard_weyl_calculus:lem:Fourier-Wigner_trafo}
	Let $u,v \in \Schwartz(\R^d)$. Then it holds 
	\begin{align*}
		\bigl ( \rho(u,v) \bigr )(X) = (2\pi)^{-\nicefrac{d}{2}} \, \bscpro{v}{\WeylSys(X) u} 
		&= \frac{1}{(2\pi)^{\nicefrac{d}{2}}} \int_{\R^d} \dd y \, e^{-i y \cdot \xi} \, {v}^{\ast} \bigl ( y - \tfrac{x}{2} \bigr ) \, u \bigl ( y + \tfrac{x}{2} \bigr ) 
	\end{align*}
	and $\rho(u,v) \in \Schwartz(\Pspace)$
\end{lem}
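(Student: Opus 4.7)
The plan is to verify the integral formula by direct computation, then deduce the Schwartz property from standard stability results for products and partial Fourier transforms.

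First I would use the explicit action of the Weyl system established in equation~\eqref{magWQ:eqn:usual_Weyl_System_applied}: $(\WeylSys(X) u)(y) = e^{-i(y + x/2) \cdot \xi} \, u(y + x)$. Inserting this into the definition of the Fourier-Wigner transform yields
\begin{align*}
    \bigl ( \rho(u,v) \bigr )(X) = (2\pi)^{-\nicefrac{d}{2}} \int_{\R^d} \dd y \, v^*(y) \, e^{-i(y + x/2) \cdot \xi} \, u(y + x) .
\end{align*}
A change of variable $y \mapsto y - x/2$ symmetrizes the arguments of $u$ and $v$, eliminates the $x/2$ in the exponent, and reproduces the claimed formula.

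For the Schwartz property, I would observe that the right-hand side is (up to a normalization constant) the partial Fourier transform in the $y$-variable of the function
\begin{align*}
    F_{u,v}(x,y) := v^*(y - x/2) \, u(y + x/2) .
\end{align*}
Since $u, v \in \Schwartz(\R^d)$, the map $(x,y) \mapsto \bigl(u(y + x/2), v^*(y - x/2)\bigr)$ is the composition of $(u,v^*) \in \Schwartz(\R^d) \times \Schwartz(\R^d) \subset \Schwartz(\R^{2d})$ (via the tensor product) with an invertible linear change of variables, and hence $F_{u,v} \in \Schwartz(\R^d \times \R^d)$. Since the partial Fourier transform $\Fourier_{y \to \xi}$ is a continuous automorphism of $\Schwartz(\R^d \times \R^d)$, it follows that $\rho(u,v) \in \Schwartz(\Pspace)$.

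The only step requiring any care is the change of variables, and no real obstacle should arise: everything is an absolutely convergent integral of Schwartz functions, so Fubini and the substitution are justified without difficulty. The Schwartz conclusion then reduces to the standard fact that $\Schwartz$ is preserved under smooth linear changes of variable and partial Fourier transforms.
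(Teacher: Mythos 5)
Your proof is correct and takes essentially the same route as the paper: insert the explicit action of the Weyl system from equation~\eqref{magWQ:eqn:usual_Weyl_System_applied}, substitute $y \mapsto y - \tfrac{x}{2}$ to symmetrize, and then observe that the result is a partial Fourier transform of a Schwartz function, hence Schwartz. The paper states the last step more tersely; your added detail (the invertible linear change of variables sending the tensor product $v^* \otimes u$ to $F_{u,v}$) is just a fuller justification of the same observation.
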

\begin{proof}
	Plugging equation~\eqref{magWQ:eqn:usual_Weyl_System_applied} into the scalar product, we get 
	\begin{align*}
		\bigl ( \rho(u,v) \bigr )(X) &= (2\pi)^{-\nicefrac{d}{2}} \, \bscpro{v}{\WeylSys(X) u} 
		= \frac{1}{(2\pi)^{\nicefrac{d}{2}}} \int \dd y \, {v}^{\ast}(y) \, \bigl ( \WeylSys(X) u \bigr )(y) \\
		&= \frac{1}{(2\pi)^{\nicefrac{d}{2}}} \int_{\R^d} \dd y \, {v}^{\ast}(y) \, e^{-i (y + \frac{x}{2}) \cdot \xi} \, u(y+x) \\ 
		&= \frac{1}{(2\pi)^{\nicefrac{d}{2}}} \int_{\R^d} \dd y \, e^{-i y \cdot \xi} \, {v}^{\ast} \bigl ( y - \tfrac{x}{2} \bigr ) \, u \bigl ( y + \tfrac{x}{2} \bigr ) 
		. 
	\end{align*}
	Since $\rho(u,v)$ is the partial Fourier transform of a Schwartz function, $\rho(u,v)$ exists in $\Schwartz(\Pspace)$. 
\end{proof}
To write the quantum expectation value as a phase space averate, we still have to push over the Fourier transform. 
\begin{defn}[Wigner transform\index{Wigner transform!non-magnetic}]
	Let $u,v \in \Schwartz(\R^d)$. The Wigner transform $\WeylSys(u,v)$ is defined as the symplectic Fourier transform of $\rho(u,v)$, 
	\begin{align*}
		\bigl ( \WignerTrafo(u,v) \bigr )(X) := \bigl ( \Fs \rho(u,v) \bigr )(-X) 
		. 
	\end{align*}
\end{defn}
\begin{remark}
	There is a reason why we need to use $\WeylSys(-X)$ and not $\WeylSys(+X)$: the symplectic Fourier transform is unitary on $L^2(\Pspace)$ and 
	\begin{align*}
		\bscpro{f}{g}_{L^2(\Pspace)} = \bscpro{\Fs f}{\Fs g}_{L^2(\Pspace)} = \bigl ( (\Fs f)^* , \Fs g \bigr ) = \bigl ( (\Fs f^*)(- \, \cdot) , \Fs g \bigr )
	\end{align*}
	holds. The extra sign stems from the fact that we are missing complex conjugation in integral~\eqref{magWQ:standard_weyl_calculus:eqn:phase_space_exp_value}. 
\end{remark}
\begin{lem}\label{magWQ:lem:WignerTransform}
	The Wigner transform $\WignerTrafo(u,v)$\index{Wigner transform!non-magnetic} with respect to $u,v \in \Schwartz(\R^d)$ is an element of $\Schwartz(\Pspace)$ and given by 
	\begin{align*}
		\bigl ( \WignerTrafo(u,v) \bigr )(X) = \frac{1}{(2\pi)^{\nicefrac{d}{2}}} \int_{\R^d} \dd y \, e^{- i y \cdot \xi} \, {v}^{\ast} \bigl ( x - \tfrac{y}{2} \bigr ) \, u \bigl ( x + \tfrac{y}{2} \bigr ) 
		. 
	\end{align*}
\end{lem}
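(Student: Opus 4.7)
The plan is to start directly from the definition $\bigl(\WignerTrafo(u,v)\bigr)(X) = \bigl(\Fs \rho(u,v)\bigr)(-X)$ and insert the explicit integral formula for $\rho(u,v)$ established in Lemma~\ref{magWQ:standard_weyl_calculus:lem:Fourier-Wigner_trafo}. This gives a triple integral over $x'$, $\xi'$ and an auxiliary variable $y$ coming from the Fourier-Wigner representation. The key observation is that $\rho(u,v)$ is itself a partial Fourier transform in the momentum variable, so the $\xi'$-integration in $\Fs$ will decouple and collapse to a delta distribution, eliminating one variable.

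Concretely, I would substitute the formula for $\rho(u,v)(x',\xi')$ into
\begin{align*}
\bigl(\Fs \rho(u,v)\bigr)(X) = \frac{1}{(2\pi)^d} \int_{\R^d} \dd x' \int_{{\R^d}^*} \dd \xi' \, e^{i(\xi \cdot x' - x \cdot \xi')}\, \bigl(\rho(u,v)\bigr)(x',\xi')
\end{align*}
and then rearrange the order of integration so that the $\xi'$-integral is innermost. Since the only $\xi'$-dependence is through $e^{-i(x+y)\cdot \xi'}$, this integral produces $(2\pi)^d \delta(x+y)$, which allows me to evaluate the $y$-integral at $y=-x$. After this, a single change of variables $x' \mapsto -y$ puts the remaining integral into the stated form once $X$ is replaced by $-X$. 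The sign flip inherent in evaluating at $-X$ is exactly what is needed to turn the arguments $-x \pm x'/2$ into $x \mp y/2$ with the correct sign in the exponential.

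For the Schwartz property, the argument is immediate: by Lemma~\ref{magWQ:standard_weyl_calculus:lem:Fourier-Wigner_trafo} we already know $\rho(u,v) \in \Schwartz(\Pspace)$, and the symplectic Fourier transform $\Fs$ is a continuous automorphism of $\Schwartz(\Pspace)$ (being essentially an ordinary Fourier transform composed with a reflection), so $\WignerTrafo(u,v) = (\Fs\rho(u,v))(-\,\cdot\,) \in \Schwartz(\Pspace)$ as well.

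The only nontrivial step is bookkeeping the signs and prefactors: there are three sources of sign conventions that have to be reconciled, namely the $\xi$-Fourier transform hidden in the definition of $\rho$, the symplectic Fourier transform $\Fs$, and the evaluation at $-X$. None of these is deep, but experience suggests this is where errors creep in; I would do the calculation carefully once and verify at the end that the normalization $(2\pi)^{-d/2}$ matches (the $(2\pi)^{-d}$ from $\Fs$ cancels half of the factor produced by the $\delta$, leaving $(2\pi)^{-d/2}$, which is consistent with $\WignerTrafo$ being the inverse of the kernel map as remarked earlier).
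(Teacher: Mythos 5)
Your proposal follows the same route the paper takes: insert the explicit formula for $\rho(u,v)$ into the definition of $\Fs$, reorder so that the momentum integral collapses to a $\delta$, contract, and then obtain the Schwartz property from the fact that $\Fs$ is an automorphism of $\Schwartz(\Pspace)$. One caveat on your bookkeeping: after the $\delta$-contraction, evaluating at $-X$ alone already yields $v^{\ast}\bigl(x-\tfrac{x'}{2}\bigr)\,u\bigl(x+\tfrac{x'}{2}\bigr)\,e^{-i\xi\cdot x'}$, so the additional change of variables $x'\mapsto -y$ you mention is superfluous and, if carried out literally on top of the $-X$ substitution, would introduce an unwanted sign in the exponent --- exactly the sort of slip you flagged as the danger spot.
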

\begin{proof}
	As a symplectic Fourier transform of the Schwartz function $\bigl ( \rho(u,v) \bigr )(- \cdot)$ (see Lemma~\ref{magWQ:standard_weyl_calculus:lem:Fourier-Wigner_trafo}), the Wigner transform $\WignerTrafo(u,v)$ is again in $\Schwartz(\Pspace)$. Hence, the following integrals exist and we compute 
	\begin{align*}
		\bigl ( \WignerTrafo(u,v) \bigr )(X) &= \frac{1}{(2\pi)^d} \int_{\Pspace} \dd Y \, e^{i \sigma(-X,Y)} \rho(u,v) (Y) \\ 
		&= \frac{1}{(2\pi)^{\nicefrac{3d}{2}}} \int_{\Pspace} \dd Y \, \int_{\R^d} \dd z \, e^{- i (\xi \cdot y - x \cdot \eta)} e^{- i z \cdot \eta} \, v^{\ast} \bigl ( z - \tfrac{y}{2} \bigr ) \, u \bigl ( z + \tfrac{y}{2} \bigr ) \\ 
		&= \frac{1}{(2\pi)^{\nicefrac{3d}{2}}} \int_{\R^d} \dd y \, \int_{\R^d} \dd z \, \int_{{\R^d}^*} \dd \eta \, e^{i (x-z) \cdot \eta} e^{- i y \cdot \xi} \, v^{\ast} \bigl ( z - \tfrac{y}{2} \bigr ) \, u \bigl ( z + \tfrac{y}{2} \bigr ) \\
		&= \frac{1}{(2\pi)^{\nicefrac{d}{2}}} \int_{\R^d} \dd y \, e^{- i y \cdot \xi} \, v^{\ast} \bigl ( x - \tfrac{y}{2} \bigr ) \, u \bigl ( x + \tfrac{y}{2} \bigr ) 
		. 
	\end{align*}
	This concludes the proof. 
\end{proof}
\begin{remark}
	We can easily extend the Wigner transform of operator kernels: if $K_T$ is the kernel of the operator $T = \Int(K_T)$, then we define 
	\begin{align}
		\WignerTrafo K_T (x,\xi) := \frac{1}{(2\pi)^{\nicefrac{d}{2}}} \int_{\R^d} \dd y \, e^{- i y \cdot \xi} \, K_T \bigl ( x + \tfrac{y}{2} , x - \tfrac{y}{2} \bigr )
		\label{magWQ:standard_weyl_calculus:eqn:Wigner_trafo_kernel}
	\end{align}
	and $\WignerTrafo$ is a bijection between $\Schwartz(\R^d \times \R^d)$ and $\Schwartz(\Pspace)$. 
\end{remark}
%
The Wigner transform of even a pure state in general does not define a true probability measure: 
\begin{example}
	Take $d = 1$ and consider $u(x) = x \, e^{- \frac{x^2}{4}}$, for instance, the first excited state of the harmonic oscillator. Then we calculate the Wigner transform to be 
	\begin{align*}
		\bigl ( \WignerTrafo(u,u) \bigr )(x,\xi) &= \frac{1}{\sqrt{2\pi}} \int_{\R} \dd y \, e^{- i y \cdot \xi} \, u^* \bigl ( x - \tfrac{y}{2} \bigr ) \, u \bigl ( x + \tfrac{y}{2} \bigr ) 
		\\
		&= \frac{1}{\sqrt{2\pi}} \int_{\R^d} \dd y \, e^{- i y \cdot \xi} \, \bigl ( x - \tfrac{y}{2} \bigr ) \, \bigl ( x + \tfrac{y}{2} \bigr ) \, e^{- \frac{1}{4} [(x - \frac{y}{2})^2 + (x + \frac{y}{2})^2]} 
		\\
		&= 2 e^{- \frac{x^2}{2}} \frac{1}{\sqrt{2\pi}} \int_{\R} \dd y \, e^{- i y \cdot 2 \xi} \, (x^2 - y^2) \, e^{- \frac{y^2}{2}} 
		\\ 
		&= 2 e^{- \frac{x^2}{2}} \, \Bigl ( x^2 e^{- 2 \xi^2} + \tfrac{1}{2} \partial_{\xi}^2 \bigl ( e^{- 2 \xi^2} \bigr ) \Bigr )
		\\
		&
		= 2 \Bigl ( x^2  + \tfrac{1}{2} (4 \xi)^2 - 2 \Bigr ) \, e^{- 2 x^2} e^{- 2 \xi^2}
		. 
	\end{align*}
	Hence $\WignerTrafo(u,u) \not\geq 0$ and $\WignerTrafo(u,u) \, \dd X$ is not a probability measure. 
\end{example}
%
For convenience of the reader, we list some properties of the Wigner transform which are easy to prove: 
\begin{thm}[Properties of the Wigner transform]
	Let $u , v \in \Schwartz(\R^d)$, $x \in \R^d$ and $\xi \in {\R^d}^*$. 
	\begin{enumerate}[(i)]
		\item $\WignerTrafo(v,v)$ is a real-valued function, but not necessarily positive. 
		\item The marginals of the Wigner transform of $u , v \in \Schwartz(\R^d)$ with respect to $x$ and $\xi$ are 
		\begin{align*}
			\frac{1}{(2\pi)^{\nicefrac{d}{2}}} \int_{\R^d} \dd x \, \WignerTrafo(v,u) (x,\xi) &= (\Fourier u)^*(\xi) \, (\Fourier v)(\xi) 
			, 
			\\
			\frac{1}{(2\pi)^{\nicefrac{d}{2}}} \int_{{\R^d}^*} \dd \xi \, \WignerTrafo(v,u) (x,\xi) &= u^*(x) \, v(x) 
			. 
		\end{align*}
		\item $\displaystyle \int_{\Pspace} \dd X \, \WignerTrafo(v,u)(X) = (2\pi)^{\nicefrac{d}{2}} \, \scpro{u}{v}$ 
		\item $\bnorm{\WignerTrafo(u,v)}_{L^2(\Pspace)} = \norm{u}_{L^2(\R^d)} \norm{v}_{L^2(\R^d)}$
		\item Let $R$ be the reflection operator defined by $(R u)(x) := u(-x)$. Then $\WignerTrafo(R v , R u)(X) = \WignerTrafo(v , u)(-X)$ holds. 
		\item $\WignerTrafo(v^*,u^*)(x,\xi) = \WignerTrafo(u,v)(x,-\xi)$ 
		\item $\WignerTrafo \bigl ( U(y) v , U(y) u \bigr )(x,\xi) = \WignerTrafo(v , u)(x - y,\xi)$ for all $y \in \R_x^d$
	\end{enumerate}
\end{thm}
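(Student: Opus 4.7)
My approach is to work directly from the explicit integral representation
\[
  \bigl ( \WignerTrafo(v,u) \bigr )(x,\xi) = \frac{1}{(2\pi)^{\nicefrac{d}{2}}} \int_{\R^d} \dd y \, e^{- i y \cdot \xi} \, v^{\ast} \bigl ( x - \tfrac{y}{2} \bigr ) \, u \bigl ( x + \tfrac{y}{2} \bigr )
\]
proved in Lemma~\ref{magWQ:lem:WignerTransform}. Most of the seven statements reduce to elementary changes of variables, together with the Fourier inversion formula and Plancherel's theorem; the fact that $u,v\in\Schwartz(\R^d)$ will ensure that every integral encountered converges absolutely and that the usual manipulations (Fubini, differentiation under the integral) are legitimate.

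For \textit{(i)}, I would take the complex conjugate of the integral defining $\WignerTrafo(v,v)$ and then substitute $y\mapsto -y$; the factor $e^{-iy\cdot\xi}$ becomes $e^{+iy\cdot\xi}$ under conjugation and is restored by the change of variables, while $v^\ast(x-\tfrac{y}{2})\,v(x+\tfrac{y}{2})$ is exchanged with $v(x-\tfrac{y}{2})\,v^\ast(x+\tfrac{y}{2})$, which matches the original integrand. The same substitution strategy yields \textit{(v)} (replace $u,v$ by $Ru,Rv$ and substitute $y\mapsto -y$ together with $x\mapsto -x$ inside the arguments), \textit{(vi)} (take the complex conjugate of $\WignerTrafo(u,v)(x,-\xi)$), and \textit{(vii)} (shift the integration variable or argue directly that the arguments $x\pm\tfrac{y}{2}$ become $(x-y_0)\pm\tfrac{y}{2}$ after translating $u$ and $v$).

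For the marginals in \textit{(ii)}, integrating in $\xi$ produces $(2\pi)^{\nicefrac{d}{2}}\,\delta(y)$ (pairing it against the Schwartz integrand in $y$), and setting $y=0$ yields $v^\ast(x)u(x)$. For the $x$-marginal I would perform the substitutions $y_+ := x+\tfrac{y}{2}$, $y_- := x-\tfrac{y}{2}$, whose Jacobian is $1$, decoupling the integrand into $u(y_+)\,v^\ast(y_-)$ times $e^{-i(y_+-y_-)\cdot\xi}$, and then recognise the result as $(\Fourier u)^\ast(\xi)(\Fourier v)(\xi)$. Statement \textit{(iii)} follows by integrating the $x$-marginal against $1$ in $\xi$ (or directly from the definition together with Parseval on $\R^d$).

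For \textit{(iv)}, which I expect to be the main computational point, I would use that $\WignerTrafo(u,v)$ is (up to a sign in the argument) the symplectic Fourier transform of the Fourier--Wigner transform $\rho(u,v)$; since $\Fs$ is unitary on $L^2(\Pspace)$, it suffices to show $\snorm{\rho(u,v)}_{L^2(\Pspace)} = \snorm{u}_{L^2} \snorm{v}_{L^2}$. To verify this, I write
\[
  \bnorm{\rho(u,v)}_{L^2(\Pspace)}^2 = \frac{1}{(2\pi)^d}\int_{\R^d}\dd x \int_{{\R^d}^\ast}\dd\xi \, \babs{\mbox{$\int_{\R^d}$}\dd y\, e^{-iy\cdot\xi}\,v^\ast(y-\tfrac{x}{2})\,u(y+\tfrac{x}{2})}^2,
\]
apply Plancherel in $\xi$ to eliminate the exponential, and then make the same substitution $y_\pm := y\pm\tfrac{x}{2}$ as above to decouple the integrals into $\int \abs{v(y_-)}^2\,\dd y_- \cdot \int \abs{u(y_+)}^2\,\dd y_+$, giving the stated identity. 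The remaining subtlety is only bookkeeping of the factors of $(2\pi)^{\nicefrac{d}{2}}$ introduced in Definition~\ref{magWQ:standard_weyl_calculus:fourier_wigner_trafo} and the preceding Remark, which must be tracked carefully to land on the clean formula without prefactors.
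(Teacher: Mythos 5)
Your approach is essentially the paper's: items (i), (ii), (iii), (v) and (vii) are all established by the same elementary changes of variables, the identity $\int_{{\R^d}^*}\dd\xi\,e^{-iy\cdot\xi}=(2\pi)^d\,\delta(y)$, and Plancherel.

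The one genuine difference is (iv). The paper expands $\abs{\WignerTrafo(v,u)(x,\xi)}^2$ as a fourfold integral, integrates out $\xi$ to produce $(2\pi)^d\,\delta(y-y')$, and then decouples by a change of variables. You instead observe that $\WignerTrafo(u,v)$ is, up to the reflection $X\mapsto -X$, the symplectic Fourier transform of $\rho(u,v)$; since both $\Fs$ and reflection are unitary on $L^2(\Pspace)$, the claim reduces to $\snorm{\rho(u,v)}_{L^2(\Pspace)}=\norm{u}_{L^2}\norm{v}_{L^2}$, which you establish by Plancherel in $\xi$ followed by the same decoupling substitution. The two routes have essentially equal length; yours has the small structural advantage of recycling the unitarity of $\Fs$ and isolating the Fourier-analytic step, whereas the paper's direct expansion never has to invoke $\rho(u,v)$ at all.

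Two minor slips, neither fatal. In (ii) the delta function carries a factor $(2\pi)^d$, not $(2\pi)^{\nicefrac{d}{2}}$; and tracking which argument of $\WignerTrafo(v,u)$ is conjugated in the explicit formula, the $\xi$-marginal is $u^*(x)\,v(x)$, not $v^*(x)\,u(x)$. In (vi) no complex conjugation is needed: write out $\WignerTrafo(v^*,u^*)(x,\xi)$, simplify $(u^*)^*=u$, and substitute $y\mapsto -y$ to arrive at $\WignerTrafo(u,v)(x,-\xi)$. Taking the complex conjugate of $\WignerTrafo(u,v)(x,-\xi)$ as you suggest instead yields $\WignerTrafo(v,u)(x,-\xi)=\WignerTrafo(u^*,v^*)(x,\xi)$, which is (vi) with $u$ and $v$ interchanged, not the stated identity itself.
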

\begin{proof}
	\begin{enumerate}[(i)]
		\item We have to show $\WignerTrafo(u,u)^* = \WignerTrafo(u,u)$: plugging in the complex conjugate of $\WignerTrafo(u,u)$, we get 
		\begin{align*}
			\WignerTrafo(u,u)^*(x,\xi) &= \biggl ( \frac{1}{(2\pi)^{\nicefrac{d}{2}}} \int_{\R^d} \dd y \, e^{- i y \cdot \xi} \, u^{\ast} \bigl ( x - \tfrac{y}{2} \bigr ) \, u \bigl ( x + \tfrac{y}{2} \bigr )  \biggr )^* 
			\\
			&= \frac{1}{(2\pi)^{\nicefrac{d}{2}}} \int_{\R^d} \dd y \, e^{+ i y \cdot \xi} \, u \bigl ( x - \tfrac{y}{2} \bigr ) \, u^* \bigl ( x + \tfrac{y}{2} \bigr ) 
			\\
			&= \frac{1}{(2\pi)^{\nicefrac{d}{2}}} \int_{\R^d} \dd y \, e^{- i y \cdot \xi} \, u \bigl ( x + \tfrac{y}{2} \bigr ) \, u^* \bigl ( x - \tfrac{y}{2} \bigr ) 
			= \WignerTrafo(u,u)(x,\xi) 
			. 
		\end{align*}
		We have already given an example where $\WignerTrafo(u,u)$ is not positive. 
		\item If we take the marginals with respect to $x$, then up to a factor of $(2\pi)^{\nicefrac{d}{2}}$ that is due to the choice of convention in Definition~\ref{magWQ:standard_weyl_calculus:fourier_wigner_trafo}, we get 
		\begin{align*}
			\int_{\R^d} \dd x \, \WignerTrafo(v,u)(x,\xi) &= \frac{1}{(2\pi)^{\nicefrac{d}{2}}} \int_{\R^d} \dd x \int_{\R^d} \dd y \, e^{-i y \cdot \xi} \, u \bigl ( x - \tfrac{y}{2} \bigr )^* \, v \bigl ( x + \tfrac{y}{2} \bigr ) 
			\\
			&= \frac{1}{(2\pi)^{\nicefrac{d}{2}}} \int_{\R^d} \dd x' \int_{\R^d} \dd y \, e^{-i y \cdot \xi} \, u (x')^* \, v (x' + y) 
			\\
			&= \frac{1}{(2\pi)^{\nicefrac{d}{2}}} \int_{\R^d} \dd x' \int_{\R^d} \dd y' \, e^{-i (y'-x') \cdot \xi} \, u (x')^* \, v (y') 
			\\
			&= (2\pi)^{\nicefrac{d}{2}} \, (\Fourier u)^*(\xi) \, (\Fourier v)(\xi)
			. 
		\end{align*}
		The other marginal can be obtained analogously. 
		\item This follows immediately from (ii), the fact that $u$ and $v$ are square integrable and the Plancherel theorem. 
		\item We plug in the definition of the Wigner transform and compute 
		\begin{align*}
			\bnorm{\WignerTrafo(v,u)}_{L^2(\Pspace)}^2 &= \int_{\R^d} \dd x \int_{{\R^d}^*} \dd \xi \, \babs{\WignerTrafo(v,u)(x,\xi)}^2 
			\\
			&= \frac{1}{(2\pi)^d} \int_{\R^d} \dd x \int_{{\R^d}^*} \dd \xi \int_{\R^d} \dd y \, \int_{\R^d} \dd y' \, e^{+i y \cdot \xi} e^{- i y' \cdot \xi}
			\cdot \\
			&\qquad \qquad \qquad \cdot 
			u^* \bigl ( x - \tfrac{y}{2} \bigr ) \, v \bigl ( x + \tfrac{y}{2} \bigr ) \, u \bigl ( x - \tfrac{y'}{2} \bigr ) \, v^* \bigl ( x + \tfrac{y'}{2} \bigr ) 
			\\
			&= \frac{1}{(2\pi)^d} \int_{\R^d} \dd x \int_{\R^d} \dd y \, \int_{\R^d} \dd y' \, \biggl ( \int_{{\R^d}^*} \dd \xi \, e^{+i (y - y') \cdot \xi} \biggr ) 
			\cdot \\
			&\qquad \qquad \qquad \cdot 
			u^* \bigl ( x - \tfrac{y}{2} \bigr ) \, v \bigl ( x + \tfrac{y}{2} \bigr ) \, u \bigl ( x - \tfrac{y'}{2} \bigr ) \, v^* \bigl ( x + \tfrac{y'}{2} \bigr ) 
			\\
			&= \int_{\R^d} \dd x \int_{\R^d} \dd y \, u^* \bigl ( x - \tfrac{y}{2} \bigr ) \, v \bigl ( x + \tfrac{y}{2} \bigr ) \, u \bigl ( x - \tfrac{y}{2} \bigr ) \, v^* \bigl ( x + \tfrac{y}{2} \bigr ) 
			. 
		\end{align*}
		After two changes of variables, this simplifies to 
		\begin{align*}
			\ldots &= \biggl ( \int_{\R^d} \dd x \, u^* (x) \, u (x) \biggr ) \, \biggl ( \int_{\R^d} \dd y \, v (y) \, v^* (y) \biggr ) 
			\\
			&= \norm{u}_{L^2(\R^d)}^2 \norm{v}_{L^2(\R^d)}^2 
			. 
		\end{align*}
		\item This is a direct consequence of the definition, 
		\begin{align*}
			\WignerTrafo \bigl ( R v , R u \bigr )(x,\xi) &= \frac{1}{(2 \pi)^{\nicefrac{d}{2}}} \int_{\R^d} \dd y \, e^{- i y \cdot \xi} \, (R u)^* \bigl ( x - \tfrac{y}{2} \bigr ) \, (R v) \bigl ( x + \tfrac{y}{2} \bigr ) 
			\\
			&= \frac{1}{(2 \pi)^{\nicefrac{d}{2}}} \int_{\R^d} \dd y \, e^{- i y \cdot \xi} \, u^* \bigl ( - x + \tfrac{y}{2} \bigr ) \, v \bigl ( - x - \tfrac{y}{2} \bigr ) 
			\\
			&= \frac{1}{(2 \pi)^{\nicefrac{d}{2}}} \int_{\R^d} \dd y' \, e^{- i y' \cdot (- \xi)} \, u^* \bigl ( (- x) - \tfrac{y'}{2} \bigr ) \, v \bigl ( (- x) + \tfrac{y'}{2} \bigr ) 
			\\
			&= \WignerTrafo(v,u)(- x, -\xi) 
			. 
		\end{align*}
		\item Follows directly from the definition of the Wigner transform. 
		\item Follows directly from the definition of the Wigner transform. 
	\end{enumerate}
\end{proof}
Some other classical results concerning Wigner functions and Wigner measures can be found in \cite{Lions_Paul:Wigner_measures:1993} and \cite{Mecklenbraeuker_Hlawatsch:Wigner_distribution:1997}. 
\begin{cor}
	For $u,v \in \Schwartz(\R^d)$ and $f \in \Schwartz(\Pspace)$ we have 
	\begin{align*}
		\bscpro{v}{\Op(f) u} = \frac{1}{(2\pi)^{\nicefrac{d}{2}}} \int_{\Pspace} \dd X \, f(X) \, \bigl ( \WignerTrafo(u,v) \bigr )(X) 
		. 
	\end{align*}
\end{cor}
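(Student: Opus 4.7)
The proof is essentially a one-line unwinding of the definitions, packaged around a Parseval-type identity for the symplectic Fourier transform.

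My plan is to start from the definition of $\Op(f)$ in equation~\eqref{magWQ:eqn:usual_WQ} and pull the inner product inside the Bochner integral (justified by boundedness, \cf Lemma~\ref{magWQ:standard_weyl_calculus:weyl_quantization:lem:action_Op}):
\begin{align*}
    \bscpro{v}{\Op(f) u} = \frac{1}{(2\pi)^d} \int_{\Pspace} \dd X \, (\Fs f)(X) \, \bscpro{v}{\WeylSys(X) u}
    .
\end{align*}
By Definition~\ref{magWQ:standard_weyl_calculus:fourier_wigner_trafo}, $\bscpro{v}{\WeylSys(X) u} = (2\pi)^{\nicefrac{d}{2}} \, \bigl ( \rho(u,v) \bigr )(X)$, so this becomes
\begin{align*}
    \bscpro{v}{\Op(f) u} = \frac{1}{(2\pi)^{\nicefrac{d}{2}}} \int_{\Pspace} \dd X \, (\Fs f)(X) \, \bigl ( \rho(u,v) \bigr )(X)
    .
\end{align*}

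Next I would apply a Parseval-style identity for $\Fs$. Writing out $(\Fs f)(X)$ and using Fubini (both factors are Schwartz, so everything converges absolutely) yields
\begin{align*}
    \int_{\Pspace} \dd X \, (\Fs f)(X) \, \bigl ( \rho(u,v) \bigr )(X) = \int_{\Pspace} \dd X' \, f(X') \, \biggl ( \frac{1}{(2\pi)^d} \int_{\Pspace} \dd X \, e^{i \sigma(X,X')} \, \bigl ( \rho(u,v) \bigr )(X) \biggr )
    .
\end{align*}
The bracketed integral is $\bigl ( \Fs \rho(u,v) \bigr )(-X')$, because the antisymmetry $\sigma(X,X') = \sigma(-X',X)$ rearranges it into the defining form of $\Fs$ evaluated at $-X'$. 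By definition, this is exactly $\bigl ( \WignerTrafo(u,v) \bigr )(X')$, so the desired formula follows.

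There is no real obstacle here; the only small subtlety is keeping track of the sign in the Parseval identity (which is precisely what motivates the $-X$ in the definition of $\WignerTrafo(u,v)$, as noted in the remark preceding Lemma~\ref{magWQ:lem:WignerTransform}). All manipulations stay within Schwartz classes, so interchanges of integrals and the involutivity $\Fs^2 = \id_{\Schwartz}$ need no additional justification.
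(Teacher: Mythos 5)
Your proof is correct and takes essentially the same route as the paper: insert the Bochner-integral definition of $\Op(f)$ into the inner product, recognize $\bscpro{v}{\WeylSys(X)u}$ as $(2\pi)^{\nicefrac{d}{2}}\rho(u,v)(X)$, and shift the symplectic Fourier transform onto $\rho(u,v)$ via the Parseval identity $\int (\Fs f)\,g = \int f\,(\Fs g)(-\,\cdot)$, which the sign in $\sigma$ produces. This is precisely the computation sketched in the display~\eqref{magWQ:standard_weyl_calculus:eqn:phase_space_exp_value} leading up to the Wigner transform definitions; your bookkeeping of the powers of $2\pi$ is in fact cleaner than the paper's, whose first line of that display contains a misprint ($(2\pi)^{-\nicefrac{d}{2}}$ where the definition of $\Op$ requires $(2\pi)^{-d}$).
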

%
The Wigner transform is essentially the inverse of Weyl quantization: 
\begin{prop}\label{weyl_calculus:wigner_transform:prop:inverse_weyl_quantization}
	Let $T \in \Op \bigl ( \Schwartz(\Pspace) \bigr ) \subset \mathcal{B} \bigl ( L^2(\R^d) \bigr )$ be an operator with operator kernel $K_T$. The map $\Op^{-1} : \Op \bigl ( \Schwartz(\Pspace) \bigr ) \longrightarrow \Schwartz(\Pspace)$ defined by 
	\begin{align*}
		\Op^{-1} (T) := \WignerTrafo K_T
	\end{align*}
	is the inverse to Weyl quantization, \ie we have $T = \Op \bigl ( \WignerTrafo K_T \bigr)$ for all $T \in \Op \bigl ( \Schwartz(\Pspace) \bigr )$. Conversely, if we take any $f \in \Schwartz(\Pspace)$ with Weyl kernel $K_f$, then $\Op^{-1} \bigl ( \Op(f) \bigr ) = \WignerTrafo K_f = f$ holds. 
\end{prop}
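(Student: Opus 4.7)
The plan is to reduce both statements of the proposition to the single identity $\WignerTrafo K_f = f$ for all $f \in \Schwartz(\Pspace)$. Indeed, once this is established, the second statement is immediate, and the first follows because any $T \in \Op(\Schwartz(\Pspace))$ is of the form $T = \Op(g)$ for some $g \in \Schwartz(\Pspace)$, so that $K_T = K_g$ and consequently $\Op(\WignerTrafo K_T) = \Op(\WignerTrafo K_g) = \Op(g) = T$. Linearity of both maps together with Remark~\ref{magWQ:standard_weyl_calculus:rem:kernel_map} (the kernel map is an isomorphism of Schwartz spaces) ensures that all objects involved live in $\Schwartz$, so there are no convergence issues.

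To prove the core identity $\WignerTrafo K_f = f$, I would just unravel the two definitions and perform Fourier inversion. From Lemma~\ref{magWQ:standard_weyl_calculus:weyl_quantization:lem:action_Op} the kernel of $\Op(f)$ is
\begin{align*}
	K_f(x,y) = \frac{1}{(2\pi)^{\nicefrac{d}{2}}} \int_{{\R^d}^*} \dd \eta \, e^{- i (y - x) \cdot \eta} \, f \bigl ( \tfrac{x+y}{2} , \eta \bigr ) .
\end{align*}
Substituting the arguments $x + \tfrac{y}{2}$ and $x - \tfrac{y}{2}$ (so that the midpoint becomes $x$ and the difference becomes $-y$) and inserting into \eqref{magWQ:standard_weyl_calculus:eqn:Wigner_trafo_kernel}, one obtains
\begin{align*}
	\bigl ( \WignerTrafo K_f \bigr )(x,\xi)
	&= \frac{1}{(2 \pi)^d} \int_{\R^d} \dd y \int_{{\R^d}^*} \dd \eta \, e^{- i y \cdot \xi} \, e^{+ i y \cdot \eta} \, f(x,\eta)
	\\
	&= \int_{{\R^d}^*} \dd \eta \, \delta(\eta - \xi) \, f(x,\eta) = f(x,\xi) ,
\end{align*}
where in the last line we have used the standard Fourier representation of the delta distribution. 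This calculation is completely elementary for $f \in \Schwartz(\Pspace)$, since the intermediate integrals are absolutely convergent after interpreting them via the inverse Fourier transform on $\Schwartz({\R^d}^*)$ (no oscillatory integral technology is needed here).

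The only delicate point is the order of integration in the display above, but this can be justified without any real work: since $K_f$ is the partial inverse Fourier transform of $f$ in the momentum variable (after the linear change of coordinates $(x,y) \mapsto (\tfrac{x+y}{2}, y-x)$), the map $f \mapsto K_f$ is a composition of a linear change of variables and a partial Fourier transform, both of which are isomorphisms of the relevant Schwartz spaces. The Wigner transform $K \mapsto \WignerTrafo K$ is the inverse composition, so the identity $\WignerTrafo K_f = f$ is nothing but the Fourier inversion theorem combined with the inverse linear change of variables. This structural observation also makes the second statement of Remark~\ref{magWQ:standard_weyl_calculus:rem:kernel_map} transparent and confirms that $\Op^{-1}$ and $\Op$ are mutually inverse bijections between $\Op(\Schwartz(\Pspace))$ and $\Schwartz(\Pspace)$.
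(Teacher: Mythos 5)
Your proof is correct and follows the same essential route as the paper: reduce everything to unwinding the kernel map and Wigner transform as mutually inverse compositions of a linear change of variables and a partial Fourier transform, and verify the identity $\WignerTrafo K_f = f$ by direct Fourier inversion. The one genuine refinement is that you observe $\Op(\WignerTrafo K_T) = T$ follows automatically from $\WignerTrafo K_f = f$ once $T = \Op(g)$, whereas the paper verifies both directions by separate explicit computations; your version is slightly more economical but mathematically equivalent.
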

\begin{proof}
	Let $K_T$ be the operator kernel associated to the operator $T \in \Op \bigl ( \Schwartz(\Pspace) \bigr )$. Then $K_T$ has to be in $\Schwartz(\R^d \times \R^d)$: as $T$ is an operator that has been obtained by Weyl quantization, there is a unique preimage $f_T \in \Schwartz(\Pspace)$ and its Weyl kernel $K_{f_T}$ has to be in $\Schwartz(\R^d \times \R^d)$. 
	
	We have to confirm that $\Op \bigl ( \Op^{-1}(T) \bigr ) = T$ and $\Op^{-1} \bigl ( \Op(f) \bigr ) = K_f$ hold. Plugging in the definition and making a change of variables, we get 
	\begin{align*}
		\bigl ( \Op \bigl ( \Op^{-1}&(T) \bigr ) \varphi \bigr )(x) = \frac{1}{(2\pi)^d} \int_{\R^d_x} \dd y \int_{\R^d_{\xi}} \dd \eta \, e^{- i (y - x) \cdot \eta} \, \bigl ( \WignerTrafo K_T \bigr ) \bigl ( \tfrac{1}{2}(x + y) , \eta \bigr ) \, \varphi(y) 
		\\
		&= \frac{1}{(2\pi)^{\nicefrac{d}{2}}} \int_{\R^d_x} \dd y \, K_T \bigl ( \tfrac{1}{2}(x + y + (x - y)) , \tfrac{1}{2} (x + y - (x - y)) \bigr ) \, \varphi(y) 
		\\
		&= \frac{1}{(2\pi)^{\nicefrac{d}{2}}} \int_{\R^d_x} \dd y \, K_T (x,y) \, \varphi(y) 
		= \bigl ( T \varphi \bigr )(x) 
		. 
	\end{align*}
	On the other hand, let $T = \Op(f)$ be the Weyl quantization of $f \in \Schwartz(\Pspace)$. Then $\Op^{-1} T = f$ follows from direct calculation: using 
	\begin{align*}
		K_f \bigl ( x + \tfrac{y}{2} , x - \tfrac{y}{2} \bigr ) = (\Fourier_2 f)(x , -y) 
		= (\Fourier_2^{-1} f)(x , y) 
		, 
	\end{align*}
	we get 
	\begin{align*}
		\Op^{-1} \bigl ( \Op(f) \bigr ) &= \WignerTrafo K_f(x,\xi) =  \frac{1}{(2\pi)^{\nicefrac{d}{2}}} \int_{\R^d_x} \dd y \, e^{- i y \cdot \xi} \, K_f \bigl ( x - \tfrac{1}{2} y , x + \tfrac{1}{2} y \bigr ) 
		\\
		&= \frac{1}{(2\pi)^{d}} \int_{\R^d_x} \dd y \int_{\R^d_{\xi}} \dd \eta \, e^{- i y \cdot \xi} e^{+ i y \cdot \eta} \, f ( x , \eta ) 
		= f(x , \xi) 
		. 
	\end{align*}
	To show that the dequantization $\Op^{-1}$ maps $\Op(\Schwartz(\Pspace))$ onto $\Schwartz(\Pspace)$, we invoke Remark~\ref{magWQ:standard_weyl_calculus:rem:kernel_map} and Lemma~\ref{magWQ:lem:WignerTransform} which state that the kernel map $K : \Schwartz(\Pspace) \longrightarrow \Schwartz(\R^d \times \R^d)$, $f \mapsto K_f$, and the Wigner transform $\WignerTrafo : \Schwartz(\R^d \times \R^d) \longrightarrow \Schwartz(\Pspace)$ are bijective. Hence the composition of the kernel map $K$ and the Wigner transform $\WignerTrafo$ is a bijection as well. In fact, 
	\begin{align*}
		\WignerTrafo \circ K : \Schwartz(\Pspace) \longrightarrow \Schwartz(\Pspace) 
	\end{align*}
	is the identity map by the above calculation. 
\end{proof}
%


\subsection{The Weyl product} 
\label{magWQ:standard_weyl_calculus:weyl_product}
The Weyl product\index{Weyl product!non-magnetic} emulates the operator product on the level of functions (or later: tempered distributions) on phase space: for two Schwartz functions $f , g \in \Schwartz(\Pspace)$, there is a Schwartz function $f \Weyl g \in \Schwartz(\Pspace)$ such that 
\begin{align*}
	\Op ( f \Weyl g ) = \Op(f) \, \Op(g) 
\end{align*}
holds. Obviously, $\Weyl$ inherits the noncommutativity of the operator product. 
\begin{thm}
	For two Schwartz functions $f , g \in \Schwartz(\Pspace)$, the tempered distribution $f \Weyl g$ which satisfies $\Op \bigl ( f \Weyl g \bigr ) = \Op(f) \, \Op(g)$ is the Schwartz function given by 
	\begin{align}
		( f \Weyl g )(X) &= \frac{1}{(2 \pi)^{2d}} \int_{\Pspace} \dd Y \int_{\Pspace} \dd Z \, e^{i \sigma(X , Y + Z)} \, e^{\frac{i}{2} \sigma(Y,Z)} \, (\Fs f)(Y) \, (\Fs g)(Z) 
		\\
		&= \frac{1}{\pi^{2d}} \int_{\Pspace} \dd Y \int_{\Pspace} \dd Z \, e^{- i 2 \sigma(X - Y', X - Z')} \, f(Y') \, g(Z') 
		\notag 
		. 
	\end{align}
\end{thm}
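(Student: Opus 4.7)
The plan is to use only the composition law of the Weyl system and the fact that Weyl quantization recovers a symbol uniquely (via Proposition~\ref{weyl_calculus:wigner_transform:prop:inverse_weyl_quantization}). Since both $\Op(f)$ and $\Op(g)$ are defined as Bochner integrals against $\WeylSys(\cdot)$, I would first write
\begin{align*}
	\Op(f)\,\Op(g) = \frac{1}{(2\pi)^{2d}} \int_{\Pspace}\!\dd Y \int_{\Pspace}\!\dd Z\, (\Fs f)(Y)\,(\Fs g)(Z)\, \WeylSys(Y)\,\WeylSys(Z),
\end{align*}
which is absolutely convergent in operator norm because $\Fs f, \Fs g \in \Schwartz(\Pspace)$ and $\snorm{\WeylSys(\cdot)} = 1$. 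Then I would apply the composition law~\eqref{magWQ:eqn:usual_Weyl_System_composition_law} to replace $\WeylSys(Y)\WeylSys(Z)$ by $e^{\frac{i}{2}\sigma(Y,Z)}\WeylSys(Y+Z)$.

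The next step is to recognize the right-hand side as $\Op$ applied to some symbol. Substituting $X' := Y+Z$ (keeping $Y$ as the other variable) and comparing with the defining formula~\eqref{magWQ:eqn:usual_WQ} for $\Op$, one reads off that $\Op(f)\Op(g) = \Op(h)$ with
\begin{align*}
	(\Fs h)(X') = \frac{1}{(2\pi)^{d}} \int_{\Pspace} \dd Y\, e^{\frac{i}{2}\sigma(Y,\,X'-Y)}\, (\Fs f)(Y)\,(\Fs g)(X'-Y).
\end{align*}
Applying $\Fs$ once more (using $\Fs^2 = \id$ on $\Schwartz(\Pspace)$), undoing the change of variables back to $(Y,Z)$ and inserting the explicit exponential $e^{i\sigma(X,X')} = e^{i\sigma(X,Y+Z)}$ yields the first displayed formula for $f \Weyl g$. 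Schwartz-ness of $f\Weyl g$ follows because $\Fs$ preserves $\Schwartz(\Pspace)$ and the expression for $\Fs h$ above is a smooth convolution-type integral of Schwartz functions weighted by a unimodular quadratic-phase factor, which is easily seen to be Schwartz by differentiating under the integral and bounding each derivative.

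For the second representation, I would insert the defining integrals $(\Fs f)(Y) = (2\pi)^{-d}\int e^{i\sigma(Y,Y')}f(Y')\dd Y'$ and similarly for $\Fs g$ into the first formula, then perform the two $Y$ and $Z$ integrals, which become Gaussian/plane-wave integrals in the phase $\sigma(X,Y+Z)+\tfrac{1}{2}\sigma(Y,Z)+\sigma(Y,Y')+\sigma(Z,Z')$. Completing the ``square'' (an affine shift in $Y$ and $Z$) reduces these to delta-type integrals that evaluate to the kernel $e^{-2i\sigma(X-Y',X-Z')}$ up to the prefactor $\pi^{-2d}$. The main obstacle is bookkeeping: choosing the order of integration carefully so that Fubini is justified at every step (which is immediate for Schwartz data) and tracking the signs and factors of $2$ that arise from the symplectic form, especially in the affine change of variables used to evaluate the oscillatory integral in closed form.
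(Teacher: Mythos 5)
Your proposal follows essentially the same route as the paper's sketch: expand $\Op(f)\Op(g)$ as a double Bochner integral against the Weyl system, apply the composition law~\eqref{magWQ:eqn:usual_Weyl_System_composition_law}, change variables so that the sum $Y+Z$ becomes the new integration variable, recognize the inner integral as $\Fs(f\Weyl g)$, and invert the symplectic Fourier transform. The derivation of the second form by plugging in the explicit $\Fs$ integrals and integrating out $Y,Z$ is also what the paper does (it defers the details to the analogous magnetic computation). The one place where you diverge is the Schwartz-ness of $f\Weyl g$: the paper postpones this to the magnetic chapter, where it is deduced from the fact that the composition of Schwartz operator kernels is Schwartz (Lemma~\ref{magWQ:magnetic_weyl_calculus:lem:composition_of_operator_kernels}) together with the inverse Wigner transform, whereas you argue it directly from the twisted-convolution formula for $\Fs(f\Weyl g)$. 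Your direct argument is fine: since $\sigma(Y,X'-Y)=\sigma(Y,X')$, the phase is linear in each slot, so differentiating in $X'$ produces polynomial weights in $Y$ (absorbed by $\Fs f$), and decay in $X'$ follows from $X'^a=\sum_{b+c=a}\binom{a}{b}(X'-Y)^b Y^c$ with each factor absorbed by one of the two Schwartz factors. One small imprecision: calling the $Y,Z$ integrals in the second-form derivation ``Gaussian'' is a misnomer here --- because $\tfrac{1}{2}\sigma(Y,Z)$ is linear in $Y$ for fixed $Z$, integrating out $Y$ directly yields a delta in $Z$ (a plane-wave integral), so no genuine Fresnel/Gaussian evaluation is needed; the ``complete the square'' picture is more relevant to the magnetic version where the extra flux factor complicates the phase.
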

\begin{proof}[Sketch]
	Using the definition of $\Op$ and the composition law of the Weyl system (equation~\eqref{magWQ:eqn:usual_Weyl_System_composition_law}), we get 
	\begin{align*}
		\Op(f) \, &\Op(g) u = \frac{1}{(2\pi)^{2d}} \int_{\Pspace} \dd Y \int_{\Pspace} \dd Z \, (\Fs f)(Y) \, (\Fs g)(Z) \, \WeylSys(Y) \, \WeylSys(Z) u 
		\\
		&= \frac{1}{(2\pi)^{2d}} \int_{\Pspace} \dd Y \int_{\Pspace} \dd Z \, (\Fs f)(Y) \, (\Fs g)(Z) \, e^{\frac{i}{2} \sigma(Y,Z)} \, \WeylSys(Y+Z) u 
		\\
		&= \frac{1}{(2\pi)^d} \int_{\Pspace} \dd Z \left ( \frac{1}{(2\pi)^d} \int_{\Pspace} \dd Y \, e^{\frac{i}{2} \sigma(Y,Z-Y)} \, (\Fs f)(Y) \, (\Fs g)(Z-Y) \right ) \WeylSys(Z) u 
	\end{align*}
	for any $u \in \Schwartz(\R^d)$. We recognize the inner integral as $\bigl ( \Fs (f \Weyl g) \bigr )(Z)$ and thus we add a Fourier transform to obtain the first of the two equivalent forms of the product formula: 
	\begin{align*}
		(f \Weyl g)(X) &= \Fs \left ( \frac{1}{(2\pi)^d} \int_{\Pspace} \dd Y \, e^{\frac{i}{2} \sigma(Y,\cdot-Y)} \, (\Fs f)(Y) \, (\Fs g)(\cdot-Y) \right )(X) 
		\\
		&= \frac{1}{(2\pi)^{2d}} \int_{\Pspace} \dd Z \int_{\Pspace} \dd Y \, e^{i \sigma(X,Z)} \, e^{\frac{i}{2} \sigma(Y,Z-Y)} \, (\Fs f)(Y) \, (\Fs g)(Z-Y)
		\\
		&= \frac{1}{(2\pi)^{2d}} \int_{\Pspace} \dd Y \int_{\Pspace} \dd Z \, e^{i \sigma(X,Y+Z)} \, e^{\frac{i}{2} \sigma(Y,Z)} \, (\Fs f)(Y) \, (\Fs g)(Z) 
		. 
	\end{align*}
	One can derive the second form of $f \Weyl g$ from elementary manipulations which we will detail only for the magnetic case. We postpone the proof that $f \Weyl g \in \Schwartz(\Pspace)$ to Chapter~\ref{magWQ:magnetic_weyl_calculus:magnetic_product} where we consider the magnetic Weyl product. 
\end{proof}
%


\subsection{Quantization of Hörmander symbols} 
\label{magWQ:standard_weyl_calculus:symbol_quantization}
So we have a nice quantization procedure, but we are still missing something up to now: we cannot quantize $h(x,\xi) = \tfrac{1}{2} \xi^2 + V(x)$ yet, because even for rapidly decaying potentials, $h \not \in \Schwartz(\Pspace)$. We will show how to use duality techniques derived from two papers by Gracia-Bondìa and Várilly \cite{Gracia_Bondia_Varilly:distributions_phasespace_1:1988,Gracia_Bondia_Varilly:distributions_phasespace_2:1988} to extend \emph{magnetic} Weyl calculus from $\Schwartz(\Pspace)$ to $\Schwartz'(\Pspace)$ in Chapter~\ref{magWQ:extension}. Thus, we content ourselves stating facts at this point. 
\begin{defn}[Hörmander class symbol $\Hoerrd{m}$]
	The Hörmander symbols of order $m$ and type $(\rho,\delta)$, $0 \leq \delta \leq \rho \leq 1$, are defined as 
	\begin{align*}
		\Hoerrd{m} := \Bigl \{ f \in \Cont^{\infty}(\Pspace) \; \vert \; \forall a, \alpha \in \N_0^d \exists C_{a \alpha} > 0 : \babs{\partial_x^a \partial_{\xi}^{\alpha} f(x,\xi)} \leq C_{a \alpha} \expval{\xi}^{m - \rho \abs{\alpha} + \delta \abs{a}} \Bigr \} 
		. 
	\end{align*}
	The Fréchet topology is generated by the following family of seminorms: 
	\begin{align}
		\bnorm{f}_{m , a \alpha} := \sup_{(x,\xi) \in \Pspace} \expval{\xi}^{-m + \rho \abs{\alpha} - \delta \abs{a}} \, \babs{\partial_x^a \partial_{\xi}^{\alpha} f(x,\xi)} 
	\end{align}
\end{defn}
%
The following fact is proven in \cite{Iftimie_Mantiou_Purice:magnetic_psido:2006}, for instance: 
\begin{thm}
	Let $f \in \Hoerrd{m}$, $0 \leq \delta < \rho \leq 1$ or $\rho = 0 = \delta$, regarded as a tempered distribution on $\Pspace$. Then $\Op(f)$ is a continuous map from $\Schwartz(\R^d)$ to $\Schwartz'(\R^d)$. 
\end{thm}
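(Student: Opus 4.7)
The strategy is to interpret $\Op(f)$ via its Schwartz kernel and reduce the claim to the Schwartz kernel theorem. Recall from Lemma~\ref{magWQ:standard_weyl_calculus:weyl_quantization:lem:action_Op} that for $f \in \Schwartz(\Pspace)$, $\Op(f)$ is the integral operator with kernel $K_f(x,y) = (2\pi)^{-\nicefrac{d}{2}} (\Fourier_2^{-1} f) \bigl ( \tfrac{1}{2}(x+y), y - x \bigr )$. The kernel map $f \mapsto K_f$ is thus the composition of a partial (inverse) Fourier transform in the momentum variable with the linear change of coordinates $(X,z) \mapsto (X + \tfrac{z}{2}, X - \tfrac{z}{2})$ on $\R^d \times \R^d$. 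Both operations are continuous automorphisms of the relevant Schwartz spaces, and by transposition they extend to continuous linear isomorphisms on the corresponding spaces of tempered distributions. Hence $K$ extends to a continuous linear bijection $\Schwartz'(\Pspace) \longrightarrow \Schwartz'(\R^d \times \R^d)$.

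Next I would show that $\Hoerrd{m} \hookrightarrow \Schwartz'(\Pspace)$ with continuous inclusion. Any $f \in \Hoerrd{m}$ is smooth and satisfies the polynomial bound $\babs{f(x,\xi)} \leq \bnorm{f}_{m,00} \, \expval{\xi}^m$, so for any $\varphi \in \Schwartz(\Pspace)$ the pairing $\int_{\Pspace} f(X) \, \varphi(X) \, \dd X$ converges absolutely and is controlled by $\bnorm{f}_{m,00}$ times a fixed Schwartz seminorm of $\varphi$. This establishes continuity of the inclusion with respect to the Fréchet topology on $\Hoerrd{m}$ and the strong dual topology on $\Schwartz'(\Pspace)$.

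Combining these two observations with the Schwartz kernel theorem concludes the proof: for $f \in \Hoerrd{m}$ one obtains $K_f \in \Schwartz'(\R^d \times \R^d)$, and the kernel theorem delivers a unique continuous linear map $\Op(f) : \Schwartz(\R^d) \longrightarrow \Schwartz'(\R^d)$ characterized by $\bscpro{\Op(f) u}{v} = \bscpro{K_f}{v \otimes u}$ for $u,v \in \Schwartz(\R^d)$. Continuity of $f \mapsto \Op(f)$ itself then follows by composing the continuous maps in the chain $\Hoerrd{m} \hookrightarrow \Schwartz'(\Pspace) \xrightarrow{K} \Schwartz'(\R^d \times \R^d) \xrightarrow{\sim} \mathcal{L}\bigl ( \Schwartz(\R^d), \Schwartz'(\R^d) \bigr )$.

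The main obstacle is really just bookkeeping of topologies; there is no hard analysis, since the polynomial bound makes $f$ automatically tempered. Incidentally, the hypothesis $\delta < \rho$ or $\rho = 0 = \delta$ is not needed for this weak mapping property --- only the crude pointwise bound is used. The condition becomes essential for finer statements, such as $\Op(f) : \Schwartz(\R^d) \longrightarrow \Schwartz(\R^d)$ or stability of $\Hoerrd{m}$ under the Weyl product, where one must integrate by parts repeatedly and each such integration trades $\rho$ units of decay in $\xi$ against $\delta$ units of growth.
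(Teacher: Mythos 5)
Your proof is correct and takes the same route the paper itself sketches: in Chapter~\ref{magWQ:extension:extension_via_duality} the author develops precisely this argument for the magnetic quantization $\Op^A$ --- the kernel map $h \mapsto K^A_h$ factors into a partial Fourier transform, a linear change of variables, and (in the magnetic case) multiplication by a $\Cont^{\infty}_{\mathrm{pol}}$ phase, hence extends to an isomorphism of $\Schwartz'(\Pspace)$ onto $\Schwartz'(\R^d \times \R^d)$, and one then invokes the two Tr\`eves isomorphisms $\Int : \Schwartz(\R^d\times\R^d) \to \mathcal{L}(\Schwartz'(\R^d),\Schwartz(\R^d))$ and $\Int : \Schwartz'(\R^d\times\R^d) \to \mathcal{L}(\Schwartz(\R^d),\Schwartz'(\R^d))$. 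For the theorem as stated the paper itself gives no proof and merely cites Iftimie--M\u{a}ntoiu--Purice, so your write-up actually fills in the argument rather than paralleling one. Two minor remarks: (a) with the conventions of Lemma~\ref{magWQ:standard_weyl_calculus:weyl_quantization:lem:action_Op} the kernel is $K_f(x,y) = (\Fourier_2 f)\bigl(\tfrac{1}{2}(x+y),\, y-x\bigr)$, not $\Fourier_2^{-1}$ --- immaterial here, since either is a topological isomorphism of the Schwartz spaces; (b) your observation that the restriction on $(\rho,\delta)$ is unnecessary for this weak mapping property is correct and worth stating --- only the zeroth-order bound $\abs{f(x,\xi)}\leq C\expval{\xi}^m$ enters, so $\Hoerrd{m}$ embeds continuously in $\Schwartz'(\Pspace)$ for all $0 \leq \delta \leq \rho \leq 1$, and the hypothesis $\delta < \rho$ only matters for the finer statements (continuity on $\Schwartz(\R^d)$, $L^2$-boundedness at $m=0$, closure under $\Weyl$) proved later.
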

Hörmander class symbols have nice composition properties. 
\begin{thm}
	For $0 \leq \delta < \rho \leq 1$ or $\rho = 0 = \delta$, we have $\Hoerrd{m_1} \Weyl \Hoerrd{m_2} \subseteq \Hoerrd{m_1 + m_2}$. 
\end{thm}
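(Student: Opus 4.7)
The plan is to combine a derivation-type (Leibniz) identity for $\Weyl$ with a single oscillatory-integral estimate on $\snorm{(f \Weyl g)(\cdot)}_{L^\infty_x \langle \xi \rangle^{-(m_1+m_2)}}$, so that the full family of seminorms $\snorm{\cdot}_{m,a\alpha}$ follows by induction on $\abs a + \abs\alpha$.

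First I would establish that $\partial_{x_j}$ and $\partial_{\xi_j}$ are derivations of the Weyl product, i.e.\
\begin{align*}
	\partial_{X_j}(f \Weyl g) = (\partial_{X_j} f) \Weyl g + f \Weyl (\partial_{X_j} g).
\end{align*}
This is a direct calculation from the first form of the product formula, using that $(\Fs \partial_{X_j} h)(Y) = \pm i\, Y'_j \, (\Fs h)(Y)$ (the dual coordinate depending on whether $X_j$ is a position or momentum component) and that the prefactor $\partial_{X_j} e^{i\sigma(X,Y+Z)}$ pulls out precisely $-i(\eta_j+\zeta_j)$ or $+i(y_j+z_j)$, which splits as a sum. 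Since differentiation in $x$ maps $\Hoerrd{m}$ into $\Hoerrd{m+\delta}$ and differentiation in $\xi$ maps $\Hoerrd{m}$ into $\Hoerrd{m-\rho}$, an induction on the order of the multi-indices reduces the theorem to the single estimate
\begin{align*}
	\babs{(f \Weyl g)(x,\xi)} \leq C_{f,g} \, \langle \xi \rangle^{m_1 + m_2}
\end{align*}
with $C_{f,g}$ controlled by finitely many seminorms of $f$ and $g$.

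To prove this $L^\infty$ bound, I would work with the second form of the product, shifting the integration variables by $X$ to obtain
\begin{align*}
	(f \Weyl g)(x,\xi) = \pi^{-2d} \int_{\Pspace} \int_{\Pspace} e^{-2i(\eta \cdot z - y \cdot \zeta)} \, f(x+y,\xi+\eta) \, g(x+z,\xi+\zeta) \, \dd y \dd \eta \dd z \dd \zeta.
\end{align*}
Because of the growth $f \cdot g = \order(\langle \xi + \eta \rangle^{m_1} \langle \xi + \zeta \rangle^{m_2})$, this is only an oscillatory integral, which I would regularize by integrating by parts with the transposes of
\begin{align*}
	L_1 := \tfrac{1 + 2 i\, y \cdot \nabla_\zeta}{1 + 4\abs y^2}, \qquad L_2 := \tfrac{1 - 2 i\, \zeta \cdot \nabla_y}{1 + 4\abs\zeta^2},
\end{align*}
and the analogous pair in the $(z,\eta)$ pairing. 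Applying $L_1$ $N_1$ times buys decay $\langle y \rangle^{-2 N_1}$ at the price of $N_1$ $\zeta$-derivatives on $g$, each of which improves the symbol order by $\rho$; applying $L_2$ $N_2$ times analogously buys $\langle \zeta \rangle^{-2 N_2}$ at the price of $N_2$ $y$-derivatives on $f$, each worsening the order by $\delta$. After the integration by parts I would use the Peetre inequality $\langle \xi + \eta \rangle^{m} \leq 2^{\abs m} \, \langle \xi \rangle^{m} \, \langle \eta \rangle^{\abs m}$ to factor $\langle \xi \rangle^{m_1+m_2}$ out of the integrand; what remains is a product of $\langle\eta\rangle^{\abs{m_1}}$, $\langle\zeta\rangle^{\abs{m_2}}$ and the $\langle y \rangle^{-2N_1}$, $\langle \zeta \rangle^{-2N_2}$ weights (together with their analogues from the $(z,\eta)$ regularization). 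Choosing the $N_i$'s sufficiently large makes the whole integrand integrable, and the resulting bound is uniform in $(x,\xi)$ with constants given by finitely many seminorms of $f$ and $g$.

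The main obstacle is the bookkeeping in the last step. One must verify that the combined price of $\delta|a|$ from $y$-derivatives on $f$ and $\delta$-analogue from $z$-derivatives on $g$ (generated by the regularization) can always be compensated by taking the $N_i$'s larger, \emph{without} spoiling the target order $m_1+m_2$. This is precisely where the strict inequality $\delta < \rho$ (or the degenerate case $\rho = \delta = 0$) is used: it ensures that each additional derivative in a given direction can be absorbed by choosing more regularization in the dual direction, so that the net exponent of $\langle \xi \rangle$ in the pointwise bound for $\partial_x^a \partial_\xi^\alpha (f \Weyl g)$ remains $m_1 + m_2 + \delta \abs a - \rho \abs\alpha$, which is exactly the seminorm bound that places $f \Weyl g$ in $\Hoerrd{m_1+m_2}$.
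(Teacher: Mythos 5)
Your proposal is essentially correct and does reach the stated inclusion, but it takes a structurally different route from the one the paper uses. The paper does not prove this non-magnetic statement on its own: it refers forward to Theorem~\ref{magWQ:extension:important_subclasses:Hoermander} (cited from Iftimie--M\u{a}ntoiu--Purice) for the general $0 \leq \delta < \rho \leq 1$ case, and supplies an independent proof only for $\delta = 0$, subsumed in Theorem~\ref{asymptotics:thm:equivalenceProduct} via the appendix lemma \ref{appendix:asymptotics:existence_osc_int:lem:remainder}. That lemma estimates \emph{all} seminorms $\partial_x^n \partial_\xi^\nu (f \Weyl g)$ directly: it expands the derivatives under the oscillatory integral, converts the resulting monomials $y^a \eta^\alpha z^b \zeta^\beta$ into $\tilde Y$- and $\tilde Z$-derivatives of $f$ and $g$, then regularizes with $L_y = 1 - \Delta_y$, $L_\eta = 1 - \Delta_\eta$, etc., and closes with Peetre's inequality. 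Your proof instead inserts a preliminary reduction: since $\partial_{x_j}$ and $\partial_{\xi_j}$ are derivations of the (non-magnetic, translation-invariant) Weyl product, an induction on $\abs{a} + \abs{\alpha}$ reduces everything to a single $L^\infty$ bound $\babs{f \Weyl g} \leq C \sexpval{\xi}^{m_1+m_2}$ for symbols of the shifted orders. This is a genuine simplification of the bookkeeping: you estimate one seminorm uniformly in the orders $(m_1, m_2)$ rather than tracking the index growth through the integral, and it makes the Fr\'echet continuity of $\Weyl$ transparent. Note that this Leibniz step would \emph{not} carry over verbatim to the magnetic product $\magW$, since $\partial_{x_j}$ is no longer a derivation there (the flux factor $\omega^B$ is $x$-dependent); that is presumably one reason the paper works with the brute-force estimate instead.

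Two small remarks. First, your $L_2$ has a sign error: with the phase $e^{2 i y \cdot \zeta}$, the operator whose transpose fixes the exponential should be $L_2 = (1 + 2 i \zeta \cdot \nabla_y)/(1 + 4 \abs{\zeta}^2)$, not $1 - 2 i \zeta \cdot \nabla_y$ in the numerator; the $(z,\eta)$ pairing needs the analogous operators with the opposite sign because the phase there is $e^{-2 i \eta \cdot z}$. This is cosmetic. Second, your last paragraph slightly overstates the role of $\delta < \rho$. After the Leibniz reduction you apply the $L^\infty$ estimate to $(\partial_x^{a'}\partial_\xi^{\alpha'} f) \Weyl (\partial_x^{a''}\partial_\xi^{\alpha''} g)$, whose orders already carry the $+ \delta\abs{a} - \rho\abs{\alpha}$; the final seminorm bound comes out automatically, and the coupled choice of the $N_i$ in the $L^\infty$ estimate itself closes for any $\delta \leq \rho$. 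The strict inequality $\delta < \rho$ is really needed for the asymptotic expansion of $\Weyl$ later in the paper, not for the bare symbol-class inclusion; the paper's hypothesis simply matches what it needs downstream.
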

The proofs rely on oscillatory integral techniques (see Chapter~\ref{magWQ:extension:important_subclasses} and Appendix~\ref{appendix:oscillatory_integrals}) and we postpone them until we treat the magnetic case. 



\section{Magnetic Weyl calculus} 
\label{magWQ:magnetic_weyl_calculus}
%
Now we would like to quantize a particle in $\R^d$ subjected to a magnetic field $B$. The correct building block operators are position and \emph{kinetic} momentum, 
\begin{align*}
	\Qe &= \hat{x} \\
	\Pe^A &= \Pe - A(\Qe) = - i \nabla_x - A(\hat{x}) 
	. 
\end{align*}
Different components of kinetic momentum no longer commute, 
\begin{align*}
	i [ \Pe^A_l , \Pe^A_j ] = - B_{lj}(\Qe) 
	, 
\end{align*}
and we expect this to complicate things considerably. We would like a quantization procedure that maps momenta onto the kinetic momentum operator. The quantization formula, 
\begin{align*}
	\Op(f) = \frac{1}{(2\pi)^d} \int_{\Pspace} \dd X \, (\Fs f)(X) \, \WeylSys(X) 
	, 
\end{align*}
has only two slots where the magnetic field could enter: (i) We could use minimal substitution and quantize $f \bigl ( x , \xi - A(x) \bigr ) =: f \circ \vartheta^A(x,\xi)$ where $\dd A = B$. The Weyl system is not touched. (ii) We could alter the Weyl system, but leave $f$ unchanged. 

The standard recipe is (i) which comes from the observation that \emph{classically} minimal substitution gives an accurate description of the physics and is \emph{equivalent} to putting the magnetic field into the symplectic form \cite[Chapters~6.7 and 7.6]{Marsden_Ratiu:intro_mechanics_symmetry:1999}. Before we demonstrate the shortcomings of this attempt, we will state two common assumptions on the magnetic fields $B$ and associated vector potentials $A$ for brevity. 
\begin{assumption}[Polynomially bounded fields]
	We assume that the components of the magnetic fields $B$ and associated vector potentials $A$ have components in $\Cont^{\infty}_{\mathrm{pol}}(\R^d)$. 
\end{assumption}
In many instances, we need to work with a more restricted class of magnetic fields: 
\begin{assumption}[Bounded magnetic fields]
	We assume that the components of the magnetic fields $B$ are $\BCont^{\infty}(\R^d)$ functions, \ie smooth, bounded functions with bounded derivatives to any order. Associated vector potentials $A$, \ie $\dd A = B$, are always assumed to have components in $\Cont^{\infty}_{\mathrm{pol}}(\R^d)$. 
\end{assumption}
\emph{Whenever we say bounded or polynomially bounded magnetic field, we actually invoke one of these two assumptions. }
\begin{remark}
	If a magnetic field $B$ is bounded or polynomially bounded, it is always possible to choose a polynomially bounded vector potential, \eg we may use the transversal gauge (equation~\eqref{appendix:eqn:transversal_gauge}). 
\end{remark}

\subsection{Standard ansatz: minimal coupling} 
\label{magWQ:magnetic_weyl_calculus:naive_ansatz}
The standard recipe used throughout most of the literature is to combine minimal coupling with usual Weyl quantization: if $A$ is a vector potential associated to a polynomially bounded $B$, then we define 
\begin{align*}
	\Op_A(h) := \Op(h \circ \vartheta^A) 
	= \frac{1}{(2\pi)^d} \int_{\Pspace} \dd X \, \bigl ( \Fs (h \circ \vartheta^A) \bigr )(X) \, \WeylSys(X) 
\end{align*}
as magnetic Weyl quantization of $h$. For suitable functions, \eg $h \in \Schwartz(\Pspace)$ and $u \in \Schwartz(\R^d)$, we easily convince ourselves that 
\begin{align*}
	\bigl ( \Op_A(h) u \bigr )(x) &= \frac{1}{(2\pi)^d} \int_{\R^d} \dd y \int_{{\R^d}^*} \negmedspace \negmedspace \dd \eta \, e^{- i (y - x) \cdot \eta} \, h \bigl ( \tfrac{1}{2}(x + y) , \eta - A \bigl ( \tfrac{1}{2} (x + y) \bigr ) \bigr ) \, u(y) 
	\\
	&= \frac{1}{(2\pi)^d} \int_{\R^d} \dd y \int_{{\R^d}^*} \negmedspace \negmedspace \dd \eta \, e^{- i (y - x) \cdot (\eta + A(\frac{1}{2}(x + y)))} \, h \bigl ( \tfrac{1}{2}(x + y) , \eta \bigr ) \, u(y) 
	. 
\end{align*}
By hand, one can check that indeed, we get 
\begin{align*}
	\Op_A(\xi) = \Pe^A 
\end{align*}
and 
\begin{align*}
	\Op_A \bigl ( \tfrac{1}{2} \xi^2 + V \bigr ) = \tfrac{1}{2} {\Pe^A}^2 + V(Q) 
	. 
\end{align*}
The operators on the right-hand side are gauge-covariant, \ie if $A' = A + \dd \chi$ is an equivalent gauge, then $\Op_{A + \dd \chi} \bigl ( \tfrac{1}{2} \xi^2 + V \bigr ) = e^{+ i \chi(Q)} \, \Op_A \bigl ( \tfrac{1}{2} \xi^2 + V \bigr ) \, e^{-i \chi(Q)}$, for instance. The unitary operator $U_{\chi} = e^{+ i \chi(Q)}$ relates wave functions and operators in the gauges $A$ and $A' = A + \dd \chi$. The quadratic hamiltonian $\tfrac{1}{2} \xi^2 + V$ is the most frequently studied and thus the lack of gauge-covariance of the $\Op_A$ for generic suitable functions (\eg those in $\Cont^{\infty}_{\mathrm{pol \, u}}(\Pspace)$) was not discovered until 1999 by Müller \cite{Mueller:product_rule_gauge_invariant_Weyl_symbols:1999}: if we choose another, equivalent gauge $A' = A + \dd \chi$, then $\Op_A(h)$ and $\Op_{A + \dd \chi}(h)$ generally fail to be unitarily equivalent, \ie the \emph{physical and mathematical properties would depend on the choice of gauge}. To see this, let us calculate the difference explicitly: 
\begin{align*}
	\Bigl ( \bigl ( &\Op_{A + \dd \chi}(h) - e^{+ i \chi(Q)} \, \Op_A(h) \, e^{-i \chi(Q)} \bigr ) u \Bigr )(x) 
	= \\
	&= \frac{1}{(2\pi)^d} \int_{\R^d} \dd y \biggl ( \int_{{\R^d}^*} \dd \eta \, \Bigl (
	e^{- i (y - x) \cdot (\eta + A(\frac{1}{2}(x + y)) + \nabla_x \chi(\frac{1}{2}(x + y)))} \, h \bigl ( \tfrac{1}{2}(x + y) , \eta \bigr ) 
	+ \Bigr . \biggr . \\
	&\qquad \qquad \qquad \qquad \left . \Bigl . - e^{+ i \chi(x)} e^{- i (y - x) \cdot (\eta + A(\frac{1}{2}(x + y)))} \, h \bigl ( \tfrac{1}{2}(x + y) , \eta \bigr ) e^{- i \chi(y)} \Bigr ) \right ) \, u(y) 
	\\
	&= \frac{1}{(2\pi)^d} \int_{\R^d} \dd y \biggl ( \int_{{\R^d}^*} \dd \eta \, e^{- i (y - x) \cdot (\eta + A(\frac{1}{2}(x + y)))} \, h \bigl ( \tfrac{1}{2}(x + y) , \eta \bigr ) \, 
	\cdot \biggr . \\
	&\qquad \qquad \qquad \qquad \qquad \cdot \biggl .
	\Bigl (
	e^{- i (y - x) \cdot \nabla_x \chi(\frac{1}{2}(x + y))} - e^{- i (\chi(y) - \chi(x))} \Bigr ) \, \biggr ) \, u(y) 
\end{align*}
In case $\Op_A(h)$ transforms covariantly, then the above expression equals $0$. This is the case if and only if 
\begin{align*}
	\bigl ( \Fourier_2 (h \circ \vartheta^A) \bigr ) \bigl ( \tfrac{1}{2}(x + y) , y - x \bigr ) \, \Bigl ( e^{- i (y - x) \cdot \nabla_x \chi(\frac{1}{2}(x + y))} - e^{- i (\chi(y) - \chi(x))} \Bigr ) = 0
\end{align*}
vanishes in the distributional sense. We can check by hand that this is the case if $h$ is a polynomial of degree $\leq 2$ in $\xi$. In all other cases, this expression has no reason to vanish as one can check by plugging in $h = \xi_l \xi_j \xi_k$, for instance. 


\subsection{Covariant quantization formula} 
\label{magWQ:magnetic_weyl_calculus:covariant_quantization}
The second approach, a modification of the Weyl system, does not suffer from this defect. If the fields are polynomially bounded, then we define the 
\begin{defn}[Magnetic Weyl system\index{Weyl system!magnetic}]
	Assume $B$ is polynomially bounded. Then the Weyl system associated to $\Pspace \cong T^* \R^d$ is the strongly continuous map 
	\begin{align}
		\WeylSys^A : \Pspace \longrightarrow \mathcal{U} \bigl ( L^2(\R^d) \bigr ) 
	\end{align}
	which for each $X \in \Pspace$ is defined as 
	\begin{align*}
		\WeylSys^A(X) := e^{- i \sigma(X,(\Qe,\Pe^A))} 
		. 
	\end{align*}
\end{defn}
The map $\WeylSys^A : X \mapsto \WeylSys^A(X)$ is a strongly continuous projective representation of the group $\Pspace \cong \R^d \times {\R^d}^*$ and acts on wave functions as magnetic translations times a phase. 
\begin{lem}\label{magWQ:magnetic_weyl_calculus:lem:action_Weyl_system}
	For $u \in \Schwartz(\R^d)$, we have 
	\begin{align*}
		\bigl ( \WeylSys^A(Y) u \bigr )(x) = e^{- i (x + \frac{y}{2}) \cdot \eta} e^{- i \Gamma^A([x,x+y])} u(x+y) 
		=: e^{- i (x + \frac{y}{2}) \cdot \eta} \, \lambda^A(x;y) \, u(x+y) 
		\label{magWQ:magnetic_weyl_calculus:eqn:action_Weyl_system}
	\end{align*}
	where $\lambda^A(x;y) := e^{- i \int_{[x,x+y]} A}$ is the exponential of the magnetic circulation through the line segment connecting $x$ and $x+y$. In particular, $\WeylSys^A(Y)$ is a covariant operator, \ie 
	\begin{align*}
		\WeylSys^{A + \dd \chi}(Y) = e^{+ i \chi(\Qe)} \, \WeylSys^A(Y) \, e^{- i \chi(\Qe)} 
		. 
	\end{align*}
\end{lem}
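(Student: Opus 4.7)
The plan is to view $t \mapsto \WeylSys^A(tY)$, for a fixed $Y = (y,\eta)$, as the one-parameter unitary group generated (in the sense of Stone's theorem) by the selfadjoint operator $H := \sigma(Y,(\Qe,\Pe^A)) = \eta \cdot \Qe - y \cdot \Pe^A$, which is essentially selfadjoint on $\Schwartz(\R^d)$ under our polynomial-boundedness assumption on $A$. I will write down the explicit candidate
\[
	(\tilde V_t u)(x) := e^{- i t (x + \frac{t}{2} y) \cdot \eta} \, e^{- i \Gamma^A([x , x + t y])} \, u(x + t y)
\]
and verify that $\tilde V_0 = \id$ and $i \partial_t \tilde V_t u = H \, \tilde V_t u$ for every $u \in \Schwartz(\R^d)$. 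Since the generator determines the group uniquely, we conclude $\WeylSys^A(tY) = \tilde V_t$, and evaluating at $t = 1$ yields the stated formula. This avoids a direct Trotter product analysis.

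The key ingredient in the verification is a lemma about differentiating the line-circulation: parametrising $[x , x + ty]$ by $r \mapsto x + ry$ gives $\Gamma^A([x , x + ty]) = \int_0^t y \cdot A(x + r y) \, \dd r$, so that $\partial_t \Gamma^A = y \cdot A(x + ty)$, while
\[
	y \cdot \nabla_x \Gamma^A([x , x + ty]) = \int_0^t y_j y_k (\partial_j A_k)(x + r y) \, \dd r = y \cdot A(x + ty) - y \cdot A(x) .
\]
The second equality uses $B_{jk} = \partial_j A_k - \partial_k A_j$ contracted with the symmetric tensor $y_j y_k$, which forces $y_j y_k \partial_j A_k = y_j y_k \partial_k A_j = \frac{\dd}{\dd r} (y \cdot A)(x + r y)$ so that the integrand is a total derivative. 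With these two identities, computing $\partial_t \tilde V_t u$ and, separately, $H \tilde V_t u = (\eta \cdot \hat{x} + i y \cdot \nabla_x + y \cdot A(\hat{x})) \tilde V_t u$ shows both equal
\[
	\bigl [ x \cdot \eta + t y \cdot \eta + y \cdot A(x + ty) \bigr ] \, \tilde V_t u(x) + i \, e^{- i t (x + \frac{t}{2} y) \cdot \eta} e^{- i \Gamma^A} \, (y \cdot \nabla u)(x + ty) .
\]
The cancellation of the $y \cdot A(x)$-terms -- produced respectively by $y \cdot \nabla_x \Gamma^A$ and by $y \cdot A(\hat{x})$ -- is the whole point, and is the step I expect to be the main obstacle until the total-derivative identity above is observed.

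For the gauge covariance, if $A' = A + \dd \chi$ then by the fundamental theorem of calculus
\[
	\Gamma^{A'}([x , x+y]) = \Gamma^A([x,x+y]) + \int_{[x,x+y]} \dd \chi = \Gamma^A([x,x+y]) + \chi(x+y) - \chi(x) .
\]
Substituting this into the explicit formula at $t = 1$, the extra phase factorises as $e^{-i\chi(x+y)} \cdots e^{+i\chi(x)}$, and absorbing $e^{- i \chi(x+y)}$ into $u(x+y) \mapsto (e^{-i\chi(\hat{x})} u)(x+y)$ while pulling out $e^{+i\chi(x)} = (e^{+i\chi(\hat{x})} \,\cdot\,)(x)$ gives $\WeylSys^{A'}(Y) = e^{+ i \chi(\Qe)} \, \WeylSys^A(Y) \, e^{- i \chi(\Qe)}$, as claimed.
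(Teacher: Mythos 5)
Your proof is correct, and it takes a genuinely different route from the paper's. The paper simply invokes a Trotter argument: $\WeylSys^A(Y) = e^{-i(\eta\cdot\Qe - y\cdot\Pe^A)}$ is approximated by $\bigl(e^{-i\eta\cdot\Qe/n}\,e^{+iy\cdot\Pe^A/n}\bigr)^n$, and the line integral $\Gamma^A([x,x+y])$ emerges as the limit of a Riemann sum of circulations over short segments. You instead run the argument in reverse: you write down the candidate $\tilde V_t$, check $\tilde V_0 = \id$, and verify the Schr\"odinger equation $i\partial_t\tilde V_t u = H\,\tilde V_t u$ on $\Schwartz(\R^d)$ with $H = \sigma(Y,(\Qe,\Pe^A))$. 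The Trotter route has the virtue of \emph{producing} the formula rather than checking a guess; yours is shorter and more elementary once the formula is known, and your key identity $y\cdot\nabla_x\Gamma^A([x,x+ty]) = y\cdot A(x+ty) - y\cdot A(x)$ (the integrand $y_j y_k (\partial_j A_k)(x+ry)$ being symmetric in $j,k$ and hence a total $r$-derivative) isolates precisely the cancellation that makes the generator be $\Pe^A$ rather than $\Pe$. One small point of rigor: when you say ``the generator determines the group uniquely'' you are implicitly assuming that $H$ is essentially selfadjoint on $\Schwartz$, which is not given a priori. The cleaner ordering is to observe first that $\tilde V_t$ is a strongly continuous unitary group leaving $\Schwartz(\R^d)$ invariant; by the standard criterion (Reed--Simon, Thm.~VIII.10) this already forces $H$ to be essentially selfadjoint on $\Schwartz$ \emph{and} yields $\tilde V_t = e^{-it\bar H}$ simultaneously, so no separate selfadjointness argument is needed. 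Your treatment of gauge covariance via the fundamental theorem of calculus matches the paper's.
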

\begin{proof}
	Equation~\eqref{magWQ:magnetic_weyl_calculus:eqn:action_Weyl_system} follows from a simple Trotter argument. The covariance is a direct consequence as well if one takes 
	\begin{align*}
		\int_{[x,x+y]} (A + \dd \chi) = \int_{[x,x+y]} A + \bigl ( \chi(x+y) - \chi(x) \bigr ) 
	\end{align*}
	into account. 
\end{proof}
\begin{remark}
	As $\Schwartz(\R^d) \subset L^2(\R^d)$ is dense, all of these statements extend immediately to $u \in L^2(\R^d)$ and $\WeylSys^A(X)$ is a unitary operator for each $X \in \Pspace$. We will show later on in Corollary~\ref{magWQ:extension:cor:irreducibility_Weyl_system} that $\{ \WeylSys^A(X) \}_{X \in \Pspace}$ is irreducible. 
\end{remark}
Just as in the non-magnetic case, the composition properties of the Weyl system encapsulate the commutation relations of the building block operators $\Qe$ and $\Pe^A$. Compared to the non-magnetic Weyl system, we get an extra phase factor, the exponential of a magnetic flux through a triangle. 
\begin{lem}\label{magWQ:magnetic_weyl_calculus:lem:composition_Weyl_system}
	For all $X,Y \in \Pspace$ the following holds: 
	\begin{align}
		\WeylSys^A(X) \WeylSys^A(Y) = e^{\frac{i}{2} \sigma(X,Y)} \omega^B(\Qe;x,y) \WeylSys^A(X+Y) 
		\label{magWQ:magnetic_weyl_calculus:eqn:composition_Weyl_system}
	\end{align}
	where 
	\begin{align}
		\omega^B(q;x,y) := e^{- i \int_{\sexpval{q,q+x,q+x+y}} B} =: e^{- i \Gamma^B(\sexpval{q,q+x,q+x+y})}
	\end{align}
	is the exponential of the magnetic flux through the triangle with corners $q$, $q + x$ and $q + x + y$. 
\end{lem}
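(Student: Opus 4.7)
The plan is to verify the identity by applying both sides to an arbitrary $u \in \Schwartz(\R^d)$, reading off the explicit action from Lemma~\ref{magWQ:magnetic_weyl_calculus:lem:action_Weyl_system}, and then comparing the resulting phases. Since $\Schwartz(\R^d)$ is dense in $L^2(\R^d)$ and all operators involved are bounded (in fact unitary), pointwise equality on $\Schwartz(\R^d)$ extends to the required operator identity.

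Concretely, writing $X = (x,\xi)$ and $Y = (y,\eta)$, Lemma~\ref{magWQ:magnetic_weyl_calculus:lem:action_Weyl_system} yields, after applying $\WeylSys^A(Y)$ first and then $\WeylSys^A(X)$,
\begin{align*}
	\bigl ( \WeylSys^A(X) \WeylSys^A(Y) u \bigr )(z)
	&= e^{- i (z + x/2) \cdot \xi} \, e^{- i \Gamma^A([z,z+x])} \cdot \\
	&\qquad \cdot e^{- i (z + x + y/2) \cdot \eta} \, e^{- i \Gamma^A([z+x,z+x+y])} \, u(z+x+y) ,
\end{align*}
while
\begin{align*}
	\bigl ( \WeylSys^A(X+Y) u \bigr )(z) = e^{- i (z + (x+y)/2) \cdot (\xi + \eta)} \, e^{- i \Gamma^A([z,z+x+y])} \, u(z+x+y) .
\end{align*}
The proof then reduces to checking that the ratio of the two prefactors equals $e^{\frac{i}{2} \sigma(X,Y)} \, \omega^B(z;x,y)$ for every $z \in \R^d$.

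The kinetic phases split off by an elementary identity. A direct expansion gives
\begin{align*}
	(z + x/2) \cdot \xi + (z + x + y/2) \cdot \eta - (z + (x+y)/2) \cdot (\xi + \eta) = \tfrac{1}{2} (x \cdot \eta - y \cdot \xi) = - \tfrac{1}{2} \sigma(X,Y) ,
\end{align*}
which accounts exactly for the factor $e^{\frac{i}{2} \sigma(X,Y)}$. The remaining magnetic phases combine by Stokes' theorem applied to the triangle with vertices $z$, $z+x$, $z+x+y$: since $\dd A = B$ and the oriented boundary is $[z,z+x] \cup [z+x, z+x+y] \cup [z+x+y,z]$,
\begin{align*}
	\Gamma^A([z,z+x]) + \Gamma^A([z+x,z+x+y]) - \Gamma^A([z,z+x+y]) = \Gamma^B \bigl ( \sexpval{z,z+x,z+x+y} \bigr ) ,
\end{align*}
which is precisely the phase $-i \Gamma^B$ in $\omega^B(z;x,y)$. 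Finally, replacing the variable $z$ by the multiplication operator $\Qe$ produces the operator identity~\eqref{magWQ:magnetic_weyl_calculus:eqn:composition_Weyl_system}.

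I do not expect real obstacles here: the content lies in applying Stokes' theorem to the triangle, and the remainder is bookkeeping of signs and a polarization-like identity for $\sigma$. The only point requiring mild care is that $\Gamma^A$ makes sense pathwise thanks to $A \in \Cont^{\infty}_{\mathrm{pol}}(\R^d,\R^d)$, so both sides are well-defined smooth functions of $z$ and the Stokes computation is legitimate.
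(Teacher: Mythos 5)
Your proof is correct and takes essentially the same approach as the paper's: apply both sides to a test function $u \in \Schwartz(\R^d)$, use the explicit action of $\WeylSys^A$ from Lemma~\ref{magWQ:magnetic_weyl_calculus:lem:action_Weyl_system}, collect the kinetic phases into $e^{\frac{i}{2}\sigma(X,Y)}$ by elementary algebra, and turn the sum of circulations into the magnetic flux via Stokes' theorem.
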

\begin{proof}
	Let $u \in \Schwartz(\R^d)$. Then we have 
	\begin{align*}
		\bigl ( \WeylSys^A(X) \WeylSys^A(Y) u \bigr ) (q) &= e^{- i (q + \frac{x}{2}) \cdot \xi} \lambda^A(q;x) \bigl ( \WeylSys^A(Y) u \bigr )(q+x) \\
		&= e^{- i (q + \frac{x}{2}) \cdot \xi} \lambda^A(q;x) \, e^{- i (q + x + \frac{y}{2}) \cdot \eta} \lambda^A(q+x;y) \, u(q + x + y) \\ 
		&= e^{- i (q + \frac{x}{2}) \cdot \xi} e^{- i (q + x + \frac{y}{2}) \cdot \eta}  e^{i (q + \frac{1}{2}(x+y)) \cdot (\xi + \eta)} 
		\cdot \\
		&\qquad \qquad \cdot \lambda^A(q;x) \lambda^A(q+x;y) \lambda^A(q;x+y)^{-1} \cdot \\ 
		&\qquad \qquad \cdot e^{- i (q + \frac{1}{2}(x+y)) \cdot (\xi + \eta)} \lambda^A(q;x+y) u(q + x + y) \\
		&= e^{\frac{i}{2} \sigma(X,Y)} \omega^B(\Qe;x,y) \, \bigl ( \WeylSys^A(X+Y) u \bigr )(q) 
		. 
	\end{align*}
	We have used that $\lambda^A(q;x)$ is the exponential of $\Gamma^A([q,q+x]) = \int_{[q,q+x]} A$ and applied Stoke's Theorem to rewrite the sum of the circulations along the edges of the triangle as magnetic flux through the enclosed area. 
\end{proof}
Since we have proven in Lemma~\ref{magWQ:magnetic_weyl_calculus:lem:action_Weyl_system} that the magnetic Weyl system transforms covariantly under a change of gauge, magnetic Weyl quantization inherits this property: 
\begin{defn}[Magnetic Weyl quantization\index{Weyl quantization!magnetic}]
	For all functions $h \in \Schwartz(\Pspace)$ and polynomially bounded magnetic fields, we define 
	\begin{align}
		\Op^A(h) := \frac{1}{(2\pi)^d} \int_{\Pspace} \dd X \, (\Fs h)(X) \, \WeylSys^A(X) 
		\label{magWQ:magnetic_weyl_calculus:eqn:magnetic_Weyl_quantization}
	\end{align}
	%
	in the weak sense. 
\end{defn}
\begin{lem}
	Assume the magnetic field is polynomially bounded. Then the magnetic Weyl quantization of $h \in \Schwartz(\Pspace)$ defines a bounded operator on $L^2(\R^d)$ whose operator norm is bounded by 
	\begin{align*}
		\bnorm{\Op^A(h)}_{\mathcal{B}(L^2(\R^d))} \leq (2\pi)^{-d} \bnorm{\Fs h}_{L^1(\Pspace)} < \infty 
	\end{align*}
	and acts on $u \in \Schwartz(\R^d)$ as 
	\begin{align}
		\bigl ( \Op^A(h) u \bigr )(x) &= \frac{1}{(2\pi)^d} \int_{\Pspace} \dd Y \, e^{- i (y-x) \cdot \eta} \, e^{-i \Gamma^A([x,y])} \, h \bigl ( \tfrac{1}{2} (x+y) , \eta \bigr ) \, u(y) 
		\label{magWQ:magnetic_weyl_calculus:eqn:Op_A_applied} \\
		&= \frac{1}{(2\pi)^{\nicefrac{d}{2}}} \int_{\R^d} \dd y \, \lambda^A(x;y-x) \, (\Fourier_2 h) \bigl ( \tfrac{1}{2} (x+y) , y-x \bigr ) \, u(y) 
		\notag \\
		&=: \frac{1}{(2\pi)^{\nicefrac{d}{2}}} \int_{\R^d} \dd y \, K^A_h(x,y) \, u(y) 
		\notag 
	\end{align}
	where $\Fourier_2 h$ is the Fourier transform in the second argument of $h$. The magnetic Weyl quantization is $\Op^A$ covariant: for an equivalent gauge $A' = A + \dd \chi$, we have 
	\begin{align}
		\Op^{A + \dd \chi}(h) = e^{+ i \chi(\Qe)} \, \Op^A(h) \, e^{- i \chi(\Qe)} 
		. 
	\end{align}
\end{lem}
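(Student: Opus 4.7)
The plan is to follow the pattern of Lemma~\ref{magWQ:standard_weyl_calculus:weyl_quantization:lem:action_Op} (the non-magnetic case), making only those modifications forced by the magnetic phase factor $\lambda^A(x;y-x)$. There are three things to establish: (i) the operator norm bound, (ii) the explicit kernel formula~\eqref{magWQ:magnetic_weyl_calculus:eqn:Op_A_applied}, and (iii) gauge covariance.

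For (i), I would first observe that since $h \in \Schwartz(\Pspace)$, its symplectic Fourier transform $\Fs h$ also lies in $\Schwartz(\Pspace) \subseteq L^1(\Pspace)$. Since $\WeylSys^A(X)$ is unitary for every $X \in \Pspace$ (Lemma~\ref{magWQ:magnetic_weyl_calculus:lem:action_Weyl_system} together with density of $\Schwartz(\R^d)$ in $L^2(\R^d)$), the integrand $X \mapsto (\Fs h)(X) \, \WeylSys^A(X)$ is strongly continuous and norm-dominated by $\babs{(\Fs h)(X)}$. Thus the integral converges as a Bochner integral in $\mathcal{B}(L^2(\R^d))$, defines $\Op^A(h)$ in the strong sense (consistent with the weak definition in \eqref{magWQ:magnetic_weyl_calculus:eqn:magnetic_Weyl_quantization}), and satisfies
\begin{align*}
	\bnorm{\Op^A(h)}_{\mathcal{B}(L^2(\R^d))} \leq \frac{1}{(2\pi)^d} \int_{\Pspace} \dd X \, \babs{(\Fs h)(X)} \, \bnorm{\WeylSys^A(X)}_{\mathcal{B}(L^2(\R^d))} = \frac{1}{(2\pi)^d} \, \bnorm{\Fs h}_{L^1(\Pspace)}.
\end{align*}

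For (ii), I would apply $\Op^A(h)$ to $u \in \Schwartz(\R^d)$, insert the formula for $\bigl(\WeylSys^A(X) u\bigr)(x)$ from Lemma~\ref{magWQ:magnetic_weyl_calculus:lem:action_Weyl_system}, and expand $(\Fs h)$ as an integral over $\Pspace$. This yields a quadruple integral in $(x', \xi', x'', \eta')$ with phases; performing the $x'$ integration produces a delta measure that identifies $x' + x = y$ (after renaming), collapsing two integrals and leaving a partial Fourier transform in the second variable of $h$. The line-segment circulation $\Gamma^A([x,y])$ appears as the single surviving magnetic phase because the translation in $\WeylSys^A$ carries exactly this phase. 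Cleaning up the change of variables $y \mapsto y - x$ gives both forms in \eqref{magWQ:magnetic_weyl_calculus:eqn:Op_A_applied} and identifies the integral kernel $K^A_h(x,y) = \lambda^A(x;y-x) \, (\Fourier_2 h)\bigl(\tfrac{1}{2}(x+y), y-x\bigr)$. This is a mechanical but bookkeeping-heavy calculation; the only subtlety is keeping track of where the factor $(2\pi)^{-d/2}$ goes.

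For (iii), gauge covariance is immediate from the corresponding property of the Weyl system: Lemma~\ref{magWQ:magnetic_weyl_calculus:lem:action_Weyl_system} gives $\WeylSys^{A+\dd\chi}(X) = e^{+i\chi(\Qe)} \, \WeylSys^A(X) \, e^{-i\chi(\Qe)}$, and since the prefactor $(\Fs h)(X)$ is scalar, conjugation by $e^{+i\chi(\Qe)}$ passes through the Bochner integral, yielding
\begin{align*}
	\Op^{A+\dd\chi}(h) = \frac{1}{(2\pi)^d} \int_{\Pspace} \dd X \, (\Fs h)(X) \, e^{+i\chi(\Qe)} \WeylSys^A(X) e^{-i\chi(\Qe)} = e^{+i\chi(\Qe)} \, \Op^A(h) \, e^{-i\chi(\Qe)}.
\end{align*}

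The only genuine technical step is part (ii), where one must carefully reorder the iterated integrals and recognize the partial Fourier transform; everything else is a routine consequence of the structural properties of $\WeylSys^A$ already established. There is no analytic obstacle beyond this, since for $h \in \Schwartz(\Pspace)$ all integrals are absolutely convergent and Fubini is applicable throughout.
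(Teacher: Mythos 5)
Your proposal follows essentially the same route as the paper: the norm bound via the Bochner-integral argument, the kernel formula by inserting the action of $\WeylSys^A$ from Lemma~\ref{magWQ:magnetic_weyl_calculus:lem:action_Weyl_system}, writing out $\Fs h$, and integrating out one momentum variable to collapse a position integral via a delta function, and gauge covariance inherited directly from the covariance of $\WeylSys^A$. The only cosmetic difference is the bookkeeping of which variable the delta eliminates (the paper integrates out $\eta$ to fix $z = \tfrac{1}{2}(x+y)$), but the mechanics are identical.
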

\begin{proof}
	The $L^2$ operator norm bound follows just as in the non-magnetic case (see proof of Lemma~\ref{magWQ:standard_weyl_calculus:weyl_quantization:lem:action_Op}). 
	
	We use Lemma \ref{magWQ:magnetic_weyl_calculus:lem:action_Weyl_system} and write out the symplectic Fourier transform: 
	\begin{align*}
		\bigl ( \OpA&(h) u \bigr )(x) = \frac{1}{(2\pi)^d} \int_{\Pspace} \dd Y \, \bigl ( \Fs h \bigr )(Y) \bigl ( \WeylSys^A(Y) u \bigr )(x) \\
		&= \frac{1}{(2\pi)^{2d}} \int_{\Pspace} \dd Y \int_{\Pspace} \dd Z \, e^{i \sigma(Y,Z)} \, h(Z) \, e^{-i (x + \frac{y}{2}) \cdot \eta} \lambda^A(x;y) \, u(x+y) \\
		%
		%
		&= \frac{1}{(2\pi)^{2d}} \int_{\R^d} \dd y \int_{{\R^d}^*} \negmedspace \dd \zeta \int_{\R^d} \dd z \int_{{\R^d}^*} \negmedspace \dd \eta \, e^{i (z - \frac{1}{2} (x+y)) \cdot \eta} \, e^{-i (y-x) \cdot \zeta} 
		\cdot \\
		&\qquad \qquad \qquad \qquad \qquad \qquad \qquad \qquad \; \; \cdot 
		h(z,\zeta) \, \lambda^A(x;y-x) \, u(y) \\
		&= \frac{1}{(2\pi)^d} \int_{\R^d} \dd y \, \int _{{\R^d}^*} \negmedspace \dd \zeta \, e^{-i (y-x) \cdot \zeta} \, \lambda^A(x;y-x) \, h \bigl ( \tfrac{1}{2} (x+y) ,\zeta \bigr ) \, u(y) 
	\end{align*}
	%
	The covariance of $\Op^A(h)$ follows immediately from the covariance of the Weyl system proven in Lemma~\ref{magWQ:magnetic_weyl_calculus:lem:action_Weyl_system}. 
\end{proof}
\begin{remark}\label{magWQ:magnetic_weyl_calculus:rem:unitarity_kernel_map}
	Also here, the kernel map $K^A : h \mapsto K^A_h$\index{kernel map!magnetic} which associates to each $h \in \Schwartz(\Pspace)$ the operator kernel of $\Op^A(h) = \Int(K^A_h)$ is an isomorphism between $\Schwartz(\Pspace)$ and $\Schwartz(\R^d \times \R^d)$. 
\end{remark}
%
%
\begin{remark}
	Just as usual Weyl quantization, the magnetic quantization rule defined via equation~\eqref{magWQ:magnetic_weyl_calculus:eqn:magnetic_Weyl_quantization} orders operators symmetrically and real-valued functions $f \in \Schwartz(\Pspace)$ are mapped onto bounded, selfadjoint operators. If the function takes values in the complex numbers, the operator adjoint of the magnetic Weyl quantization of $f$ is equal to the quantization of the complex conjugated function, 
	\begin{align*}
		\Op^A(f)^* = \Op^A(f^*) 
		. 
	\end{align*}
	We could opt for a different operator ordering\index{operator ordering} by modifying equation~\eqref{magWQ:magnetic_weyl_calculus:eqn:Op_A_applied}: if $\tau \in [0,1]$, then we could equally well set 
	\begin{align*}
		\bigl ( \Op^A_{\tau}(f) u \bigr )(x) &:= \frac{1}{(2\pi)^d} \int_{\Pspace} \dd Y \, e^{- i (y-x) \cdot \eta} \, e^{-i \Gamma^A([x,y])} \, h \bigl ( (1-\tau) x + \tau y , \eta \bigr ) \, u(y) 
	\end{align*}
	as magnetic Weyl quantization of $f$. Then taking adjoints on the level of operators no longer reduces to complex conjugation, it has to be replaced by $f^{\magW}$ defined in equation~\eqref{algebraicPOV:generalized_weyl_calculus:eqn:repom_involution}. 
\end{remark}
%


\subsection{The Magnetic Wigner transform} 
\label{magWQ:magnetic_weyl_calculus:magnetic_wigner_transform}
The magnetic Wigner transform is the `inverse' of $\Op^A$ and can be used to connect quantum states (projections and density operators) with signed probability measures on phase space~$\Pspace$. 
\begin{defn}[Magnetic Wigner transform\index{Wigner transform!magnetic}]
	Let $u,v \in \Schwartz(\R^d)$. Then the magnetic Wigner transform $\WignerTrafo^A(u,v)$ for polynomially bounded $B$ is defined as 
	\begin{align*}
		\WignerTrafo^A(u,v)(X) := (2\pi)^{-\nicefrac{d}{2}} \, \bigl ( \Fs \bscpro{v}{\WeylSys^A(\cdot) u} \bigr )(-X) . 
	\end{align*}
\end{defn}
\begin{lem}\label{magWQ:magnetic_weyl_calculus:lem:Wigner_transform}
	For a polynomially bounded magnetic field $B$ and associated vector potential $A$, the magnetic Wigner transform $\WignerTrafo^A(u,v)$ of $u,v \in \Schwartz(\R^d)$ calculates to be 
	\begin{align}
		\WignerTrafo^A(u,v)(X) &= \frac{1}{(2\pi)^{\nicefrac{d}{2}}} \int_{\R^d} \dd y \, e^{- i y \cdot \xi} e^{-i \Gamma^A([x - \frac{y}{2} ,x + \frac{y}{2} ])} \, {v}^{\ast} \bigl ( x - \tfrac{y}{2} \bigr ) \, u \bigl ( x + \tfrac{y}{2} \bigr ) 
		\label{magWQ:magnetic_weyl_calculus:eqn:Wigner_transform}
	\end{align}
	and maps $\Schwartz(\R^d) \otimes \Schwartz(\R^d) \cong \Schwartz(\R^d \times \R^d)$ bijectively onto $\Schwartz(\Pspace)$. 
\end{lem}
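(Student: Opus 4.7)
The plan is to mirror the proof of Lemma~\ref{magWQ:lem:WignerTransform} from the non-magnetic setting, carrying the magnetic circulation factor along. First I would introduce the auxiliary magnetic Fourier-Wigner transform
\begin{align*}
\rho^A(u,v)(Y) := (2\pi)^{-\nicefrac{d}{2}} \, \bscpro{v}{\WeylSys^A(Y) u}
\end{align*}
and evaluate it using Lemma~\ref{magWQ:magnetic_weyl_calculus:lem:action_Weyl_system}. Writing out the scalar product, performing the change of variable $z = q + y/2$, and using $\lambda^A(z-y/2;y) = e^{-i\Gamma^A([z-y/2,\,z+y/2])}$, one obtains
\begin{align*}
\rho^A(u,v)(y,\eta) = \frac{1}{(2\pi)^{\nicefrac{d}{2}}} \int_{\R^d} \dd z \, e^{- i z \cdot \eta} \, e^{- i \Gamma^A([z-y/2,\,z+y/2])} \, v^* \bigl ( z - \tfrac{y}{2} \bigr ) \, u \bigl ( z + \tfrac{y}{2} \bigr ) .
\end{align*}

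Next I would insert this into the definition $\WignerTrafo^A(u,v)(X) := (\Fs \rho^A(u,v))(-X)$, exchange the order of integration, and carry out the $\eta$-integral, which produces $(2\pi)^d \delta(z - x)$. Taking care of the sign coming from $\sigma(-X,Y) = -\xi \cdot y + x \cdot \eta$ then yields the claimed formula~\eqref{magWQ:magnetic_weyl_calculus:eqn:Wigner_transform}. All integrals are absolutely convergent because $u,v \in \Schwartz(\R^d)$ and the magnetic phase has modulus one.

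For the bijectivity statement, I would factor $\WignerTrafo^A$ as the composition of three maps on Schwartz-type spaces: (i) the bilinear pairing $(u,v) \mapsto u(x+y/2) \, v^*(x-y/2)$, which, via the invertible linear change of coordinates $(a,b) \mapsto (x+y/2,\, x-y/2)$, realises the nuclear identification $\Schwartz(\R^d) \otimes \Schwartz(\R^d) \cong \Schwartz(\R^d \times \R^d)$ and is bijective onto itself; (ii) multiplication by the magnetic phase $e^{-i\Gamma^A([x-y/2,\,x+y/2])}$ on $\Schwartz(\R^d \times \R^d)$, whose inverse is multiplication by the complex conjugate phase; and (iii) the partial Fourier transform in the $y$-variable, which is a topological isomorphism from $\Schwartz(\R^d \times \R^d)$ onto $\Schwartz(\Pspace)$. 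As each factor is a bijection, the composition $\WignerTrafo^A$ is a bijection as well.

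The only step that is not purely formal is (ii). Here I need to check that multiplication by the magnetic phase preserves $\Schwartz(\R^d \times \R^d)$. Parametrising the line segment by $t \mapsto x + (t - \tfrac{1}{2}) y$ with $t \in [0,1]$, one can rewrite the circulation as $\Gamma^A([x-y/2,\,x+y/2]) = \int_0^1 \dd t \, A(x+(t-\tfrac{1}{2})y) \cdot y$; since by assumption the components of $A$ lie in $\Cpol(\R^d)$, this line integral is smooth in $(x,y)$ with polynomially bounded derivatives. By the Faà~di~Bruno formula, the unimodular exponential of this function has derivatives polynomially bounded in $(x,y)$ as well, and multiplication by such a smooth factor is known to leave $\Schwartz(\R^d \times \R^d)$ invariant. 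This verification is the only place where the polynomial boundedness of $B$ (hence of $A$) enters and is the main point distinguishing the magnetic from the non-magnetic case.
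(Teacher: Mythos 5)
Your proposal is correct and follows essentially the same approach as the paper: a direct computation of the Fourier--Wigner transform via Lemma~\ref{magWQ:magnetic_weyl_calculus:lem:action_Weyl_system} followed by a symplectic Fourier transform, and the bijectivity via the factorization into change-of-variables, multiplication by the unimodular magnetic phase, and partial Fourier transform. The paper's own proof is terse (it simply asserts that $e^{-i\Gamma^A([x-\nicefrac{y}{2},x+\nicefrac{y}{2}])} v^*(x-\nicefrac{y}{2}) u(x+\nicefrac{y}{2})$ is Schwartz and invokes the partial Fourier isomorphism), whereas you supply the missing details — in particular the $\Cpol^\infty$ bound on the circulation and the Faà~di~Bruno argument showing multiplication by the phase preserves $\Schwartz(\R^d\times\R^d)$ — which is exactly the point where the polynomial boundedness of $B$ enters.
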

\begin{proof}
	Formally, the result follows from direct calculation. 
	The second claim, 
	\begin{align*}
		\Schwartz(\R^d \times \R^d) \times \Schwartz(\R^d \times \R^d) \ni (u,v) \mapsto \WignerTrafo^A(u,v) \in \Schwartz(\Pspace)
	\end{align*}
	follows from $e^{- i \Gamma^A([x - \frac{y}{2} , x + \frac{y}{2}])} \, v^{\ast} \bigl ( x - \tfrac{y}{2} \bigr ) \, u \bigl ( x + \tfrac{y}{2} \bigr ) \in \Schwartz(\R^d \times \R^d)$ and the fact that the partial Fourier transformation is an isomorphism on $\Schwartz$. 
\end{proof}
\begin{remark}
	The Wigner transform can be easily extended to a map from $L^2(\R^d \times \R^d)$ to $L^2(\Pspace) \cap \Cont_{\infty}(\Pspace)$ where $\Cont_{\infty}(\Pspace)$ is the space of continuous functions on phase space which decay at $\infty$. For more details, see \cite[Proposition~1.92]{Folland:harmonic_analysis_hase_space:1989}, for example. 
\end{remark}
\begin{lem}\label{magWQ:magnetic_weyl_calculus:magnetic_wigner_transform:lem:qm_exp_val_phase_space_average}
	For polynomially bounded $B$, $u,v \in \Schwartz(\R^d)$ and $h \in \Schwartz(\Pspace)$ the quantum expectation value of $\Op^A(h)$ with respect to $u$ and $v$ can be expressed as the phase space average of $f$ with respect to the Wigner transform of $\sopro{u}{v}$, 
	\begin{align*}
		\bscpro{v}{\Op^A(h) u} = \frac{1}{(2\pi)^{\nicefrac{d}{2}}} \int_{\Pspace} \dd X \, h(X) \, \WignerTrafo^A(u,v)(X) 
		. 
	\end{align*}
\end{lem}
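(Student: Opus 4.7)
The plan is to unfold the definition of $\Op^A(h)$ in the matrix element, shift the symplectic Fourier transform from the symbol onto the matrix element of the Weyl system, and then recognize the resulting object as the magnetic Wigner transform. Concretely, I would start with
\begin{align*}
  \bscpro{v}{\Op^A(h) u} = \frac{1}{(2\pi)^d} \int_{\Pspace} \dd X \, (\Fs h)(X) \, \bscpro{v}{\WeylSys^A(X) u},
\end{align*}
which follows directly from the definition of $\Op^A$ in equation~\eqref{magWQ:magnetic_weyl_calculus:eqn:magnetic_Weyl_quantization} together with strong continuity of $X \mapsto \WeylSys^A(X)$ (so that the weak integral can be paired against $v$ and $u$).

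Next I would use the Parseval-type identity for the symplectic Fourier transform. Since $\Fs$ is an involution on $\Schwartz(\Pspace)$ (and is formally self-adjoint up to a sign flip in the argument), one has
\begin{align*}
  \int_{\Pspace} \dd X \, (\Fs h)(X) \, F(X) = \int_{\Pspace} \dd X \, h(X) \, (\Fs F)(-X)
\end{align*}
for any $F$ for which both sides make sense. Applied with $F(X) := \bscpro{v}{\WeylSys^A(X) u}$, this produces
\begin{align*}
  \bscpro{v}{\Op^A(h) u} = \frac{1}{(2\pi)^d} \int_{\Pspace} \dd X \, h(X) \, \bigl( \Fs \bscpro{v}{\WeylSys^A(\cdot) u} \bigr)(-X).
\end{align*}
Comparing with the definition of the magnetic Wigner transform, the inner factor is precisely $(2\pi)^{d/2}\, \WignerTrafo^A(u,v)(X)$, which gives the claimed identity after collecting the $(2\pi)$-prefactors.

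The only nontrivial point is justifying the interchange implicit in the Parseval step, i.e.\ making sure the Schwartz-class dualities are legitimate. For this I would invoke Lemma~\ref{magWQ:magnetic_weyl_calculus:lem:Wigner_transform}, which says that the map $(u,v) \mapsto \WignerTrafo^A(u,v)$ lands in $\Schwartz(\Pspace)$ when $B$ is polynomially bounded; equivalently, $X \mapsto \bscpro{v}{\WeylSys^A(X) u}$ is a Schwartz function of $X$ (the magnetic phase $e^{-i \Gamma^A([x - y/2, x + y/2])}$ is smooth with polynomially bounded derivatives, and it multiplies the Schwartz function $v^\ast(\cdot - y/2)\, u(\cdot + y/2)$, which is preserved by partial Fourier transform). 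Since $\Fs h \in \Schwartz(\Pspace)$ as well, Fubini and the Parseval identity apply without obstruction, and no further estimate is needed. The main, and essentially only, obstacle is thus the bookkeeping of sign conventions in $\Fs$ together with verifying that the matrix element of the magnetic Weyl system is Schwartz — both of which reduce to results already recorded in the excerpt.
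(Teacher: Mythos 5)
Your proof is correct and fills in exactly the ``direct computation'' that the paper's proof omits. The two moves you make --- unfolding the weak integral defining $\Op^A(h)$ against $v$, then using the Parseval-type identity $\int (\Fs h)\, F = \int h \, (\Fs F)(-\cdot)$ to land on the Wigner transform --- mirror the non-magnetic computation in equation~\eqref{magWQ:standard_weyl_calculus:eqn:phase_space_exp_value}, so this is the intended route. Your appeal to Lemma~\ref{magWQ:magnetic_weyl_calculus:lem:Wigner_transform} to ensure $X \mapsto \bscpro{v}{\WeylSys^A(X)u}$ is Schwartz (so that Fubini and the Parseval swap are legitimate) is the right way to justify the interchange, and the $(2\pi)$-bookkeeping is correct.
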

\begin{proof}
	The claim follows from direct computation. 
\end{proof}
%
The Wigner transform can also be used to `dequantize' operators: if the operator kernel $K_T$ of an operator $T$ is of class $\Schwartz(\R^d \times \R^d)$, then it is the quantization of an $h \in \Schwartz(\Pspace)$. That is, $\WignerTrafo^A$ is the inverse of $h \mapsto K^A_h$: 
\begin{align*}
	\bigl ( \WignerTrafo^A K^A_h \bigr )(x,\xi) &= \frac{1}{(2\pi)^{\nicefrac{d}{2}}} \int_{\R^d} \dd y \, e^{-i y \cdot \xi} \, e^{-i \Gamma^A([x - \frac{y}{2} , x + \frac{y}{2} ])} \,  K^A_h \bigl ( x + \tfrac{y}{2} , x - \tfrac{y}{2} \bigr ) 
	\\ 
	&= \frac{1}{(2\pi)^d} \int_{\R^d} \dd y \, e^{-i y \cdot \xi} \, e^{-i \Gamma^A([x - \frac{y}{2} , x + \frac{y}{2} ])} \, 
	\int_{{\R^d}^*} \dd \eta \, e^{- i ((x - \frac{y}{2}) - (x + \frac{y}{2})) \cdot \eta} 
	\cdot \\ 
	&\qquad \qquad \qquad \qquad \cdot 
	e^{- i \Gamma^A([x + \frac{y}{2} , x - \frac{y}{2}])} \, h \bigl ( \tfrac{1}{2} \bigl ( x + \tfrac{y}{2} \bigr ) + \tfrac{1}{2} \bigl ( x - \tfrac{y}{2} \bigr ) , \eta \bigr ) 
	\\ 
	&= \frac{1}{(2\pi)^d} \int_{\R^d} \dd y \int_{{\R^d}^*} \dd \eta \, e^{-i y \cdot \xi} \, e^{+ i y \cdot \eta} \, h ( x , \eta ) 
	= h(x,\xi) 
\end{align*}
Hence, we have just proven 
\begin{lem}\label{magWQ:magnetic_weyl_calculus:lem:inverseWeylQ}
	Assume $T = \Int(K_T) \in \BopL$ is an operator whose operator kernel $K_T$ is a Schwartz function. Then the \emph{inverse magnetic quantization} is in $\Schwartz(\Pspace)$ given by 
	\begin{align}
		{\Op^A}^{-1}(T)(x,\xi) &= \WignerTrafo^A K_{T}(x,\xi) 
		\notag \\
		%
		&= \frac{1}{(2\pi)^{\nicefrac{d}{2}}} \int_{\R^d} \dd y \, e^{-i y \cdot \xi} \, e^{-i \Gamma^A([x - \frac{y}{2} , x + \frac{y}{2} ])} \,  K_{T} \bigl ( x + \tfrac{y}{2} , x - \tfrac{y}{2} \bigr ) 
		. 
	\end{align}
\end{lem}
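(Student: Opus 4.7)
The assertion is a straightforward consequence of the unitarity of the kernel map $K^A$ (Remark~\ref{magWQ:magnetic_weyl_calculus:rem:unitarity_kernel_map}), combined with the explicit computation carried out in the paragraph preceding the lemma. My approach is to reduce the claim to a Fourier inversion.

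First, since $K_T \in \Schwartz(\R^d \times \R^d)$ by assumption, and since $K^A : \Schwartz(\Pspace) \to \Schwartz(\R^d \times \R^d)$, $h \mapsto K^A_h$, is a bijection by Remark~\ref{magWQ:magnetic_weyl_calculus:rem:unitarity_kernel_map}, there exists a unique $h \in \Schwartz(\Pspace)$ with $T = \Op^A(h)$, i.e.\ $K_T = K^A_h$. The claim is therefore equivalent to $\WignerTrafo^A K^A_h = h$. Unwrapping the definition of $K^A_h$ from Lemma~\ref{magWQ:magnetic_weyl_calculus:lem:action_Weyl_system} and its surrounding display, we have
\begin{align*}
	K^A_h \bigl ( x + \tfrac{y}{2} , x - \tfrac{y}{2} \bigr ) = \frac{1}{(2\pi)^{\nicefrac{d}{2}}} \int_{{\R^d}^*} \dd \eta \, e^{+ i y \cdot \eta} \, e^{- i \Gamma^A([x + \frac{y}{2} , x - \frac{y}{2}])} \, h ( x , \eta )
	,
\end{align*}
using that the midpoint of $x + \tfrac{y}{2}$ and $x - \tfrac{y}{2}$ is $x$ and their difference is $y$.

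Next, I substitute this into the explicit formula~\eqref{magWQ:magnetic_weyl_calculus:eqn:Wigner_transform} for the magnetic Wigner transform. The crucial observation is that $\Gamma^A([x - \frac{y}{2} , x + \frac{y}{2}]) + \Gamma^A([x + \frac{y}{2} , x - \frac{y}{2}]) = 0$ since reversing the orientation of a line segment flips the sign of the magnetic circulation; hence the two magnetic phase factors cancel exactly. What remains is
\begin{align*}
	\bigl ( \WignerTrafo^A K^A_h \bigr )(x,\xi) = \frac{1}{(2\pi)^d} \int_{\R^d} \dd y \int_{{\R^d}^*} \dd \eta \, e^{- i y \cdot \xi} \, e^{+ i y \cdot \eta} \, h(x,\eta) = h(x,\xi) ,
\end{align*}
by Fourier inversion in the momentum variable. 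Since all functions involved are Schwartz, Fubini's theorem and the usual inversion formula apply without issue.

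The ``hard'' part of the argument is really just bookkeeping: one has to track which line segment appears with which orientation so that the magnetic phases cancel. Beyond that, there is no genuine analytic obstacle, because everything lives inside Schwartz space, making all integral manipulations absolutely convergent. The final identity $\WignerTrafo^A K_T = h$ combined with $T = \Op^A(h)$ gives exactly ${\Op^A}^{-1}(T) = \WignerTrafo^A K_T$, establishing both that the map is well-defined on the image $\Op^A \bigl ( \Schwartz(\Pspace) \bigr )$ and that it inverts magnetic Weyl quantization on this image.
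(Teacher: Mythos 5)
Your proof is correct and takes essentially the same route as the paper: the paper likewise plugs the explicit formula for $K^A_h$ into the magnetic Wigner transform, notes the cancellation of the two oppositely-oriented circulation phases, and concludes by Fourier inversion. The only cosmetic difference is that you explicitly invoke the bijectivity of the kernel map (Remark~\ref{magWQ:magnetic_weyl_calculus:rem:unitarity_kernel_map}) to justify $T = \Op^A(h)$, a step the paper leaves implicit.
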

%


\subsection{The magnetic Weyl product} 
\label{magWQ:magnetic_weyl_calculus:magnetic_product}
The derivation of the product formula is slightly more involved than in the non-mag\-netic case and we need to use the magnetic Wigner transform. By covariance, the magnetic Weyl product $\magW$\index{Weyl product!magnetic} only depends on the magnetic field rather than the vector potential, 
\begin{align*}
	\Op^A(f) \, \Op^A(g) = \Op^A (f \magW g) 
	. 
\end{align*}
\begin{thm}[\cite{Mueller:product_rule_gauge_invariant_Weyl_symbols:1999, Iftimie_Mantiou_Purice:magnetic_psido:2006}]\label{magWQ:magnetic_weyl_calculus:thm:equivalence_product}
	Assume the magnetic field $B$ is polynomially bounded. Then for two symbols $f , g \in \Schwartz(\Pspace)$, the magnetic composition law is given by 
	\begin{align}
		(f \magW g)(X) &= \frac{1}{(2 \pi)^{2d}} \int_{\Pspace} \dd Y \, \int_{\Pspace} \dd Z \, e^{+i \sigma(X,Y+Z)} \,  e^{\frac{i}{2} \sigma(Y,Z)} \, 
		\cdot \notag \\
		&\qquad \qquad \qquad \qquad \quad \cdot 
		\omega^B \bigl ( x-\tfrac{1}{2}(y+z),x + \tfrac{1}{2}(y-z) , x+\tfrac{1}{2}(y+z) \bigr ) 
		\cdot \notag \\
		&\qquad \qquad \qquad \qquad \quad \cdot 
		\bigl ( \Fs f \bigr )(Y) \, \bigl ( \Fs g \bigr )(Z) 
		\label{magWQ:magnetic_weyl_calculus:eqn:Fourier_form_magnetic_composition} \\
		&= \frac{1}{\pi^{2d}} \, \int_{\Pspace} \dd \tilde{Y} \, \int_{\Pspace} \dd \tilde{Z} \, e^{- i 2 \sigma(\tilde{Y} - X , \tilde{Z} - X)} 
		\cdot \notag \\
		&\qquad \qquad \qquad \qquad \quad \cdot 
		\omega^B \bigl ( x - \tilde{y} + \tilde{z} , -x + \tilde{y} + \tilde{z} , x + \tilde{y} + \tilde{z} \bigr ) \,  f(\tilde{Y}) \, g(\tilde{Z}) 
		\notag 
	\end{align}
	and the product $f \magW g \in \Schwartz(\R^d)$ is also a Schwartz function. 
\end{thm}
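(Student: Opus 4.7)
My plan is to derive the product formula by composing integral kernels and then inverting via the magnetic Wigner transform, since the operator product in kernel form is straightforward while the composition law of the Weyl system produces a cocycle factor $\omega^B(\Qe;\cdot,\cdot)$ that is awkward to re-express as a single Weyl system element.

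First, I would write the kernel of $\Op^A(f) \Op^A(g)$ as
\begin{align*}
K^A_{f \magW g}(x,y) = \int_{\R^d} \dd z \, K^A_f(x,z) \, K^A_g(z,y)
\end{align*}
using the explicit formula $K^A_f(x,y) = (2\pi)^{-\nicefrac{d}{2}} \lambda^A(x;y-x) \, (\Fourier_2 f)\bigl(\tfrac{1}{2}(x+y), y-x\bigr)$ from Lemma on the action of $\Op^A$, and similarly for $g$. The key algebraic step is the Stokes identity
\begin{align*}
\lambda^A(x;z-x) \, \lambda^A(z;y-z) \, \lambda^A(x;y-x)^{-1} = e^{- i \Gamma^B(\sexpval{x,z,y})} ,
\end{align*}
which converts the product of three line circulations of $A$ along the edges of the triangle $\sexpval{x,z,y}$ into the magnetic flux of $B$ through its interior. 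This eliminates any explicit dependence on the gauge.

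Next I would apply the inverse magnetic Wigner transform of Lemma~\ref{magWQ:magnetic_weyl_calculus:lem:inverseWeylQ} to recover $f \magW g = \WignerTrafo^A K^A_{f \magW g}$. Observing that $e^{-i\Gamma^A([x-y/2,x+y/2])} \lambda^A(x+y/2;-y) = 1$, the remaining vector-potential factor cancels, and one is left with a double integral in which only $B$ appears (through $\omega^B$) together with $(\Fourier_2 f)$ and $(\Fourier_2 g)$. Changing variables to center-of-mass and relative coordinates yields directly the second (\emph{direct}) form of equation~\eqref{magWQ:magnetic_weyl_calculus:eqn:Fourier_form_magnetic_composition}. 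The first (\emph{Fourier}) form is then obtained by unfolding $(\Fourier_2 f)$ and $(\Fourier_2 g)$ back to $\Fs f$ and $\Fs g$ via one further pair of Fourier integrals, at which point the quadratic phase $e^{\frac{i}{2}\sigma(Y,Z)}$ emerges from completing the squares in the momentum exponentials; equivalently, one can verify this second form by inserting the definition of $\Fs$ into the first and checking that the Gaussian-type integrals reproduce the direct formula.

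Finally, to show $f \magW g \in \Schwartz(\Pspace)$, I would argue on the kernel side: since $K^A_f, K^A_g \in \Schwartz(\R^d \times \R^d)$ by Remark~\ref{magWQ:magnetic_weyl_calculus:rem:unitarity_kernel_map}, the composed kernel $K^A_{f \magW g}$ is also Schwartz because the integral $\int \dd z \, K^A_f(x,z) \, K^A_g(z,y)$ preserves Schwartz regularity (the phase $e^{-i\Gamma^B(\sexpval{x,z,y})}$ is smooth with polynomially bounded derivatives under the standing assumption on $B$, and $\Schwartz$ is stable under multiplication by such factors and under partial integration). Then $f \magW g = \WignerTrafo^A K^A_{f \magW g} \in \Schwartz(\Pspace)$ by the bijectivity of $\WignerTrafo^A$ established in Lemma~\ref{magWQ:magnetic_weyl_calculus:lem:Wigner_transform}.

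The main obstacle will be the bookkeeping of variable changes and the careful verification that the oriented triangle in the Stokes step matches the triangle $\sexpval{x - \tfrac{1}{2}(y+z), x + \tfrac{1}{2}(y-z), x + \tfrac{1}{2}(y+z)}$ appearing in the statement; getting the signs and the orientation right—rather than any deep analytic difficulty—is where the computation can easily go wrong. The Schwartz-stability step also requires justifying polynomial bounds on the derivatives of the magnetic phase, which is precisely where the polynomial-boundedness assumption on $B$ is invoked.
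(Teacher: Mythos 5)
Your proposal is correct, and the Schwartz-stability argument (composing kernels and invoking the stability of $\Schwartz(\R^d\times\R^d)$ under the $\diamond$-composition, then transporting through $\WignerTrafo^A$) is essentially identical to the paper's preliminary step. But the derivation of the product formula itself takes a genuinely different route. You compose the explicit kernels $K^A_f$ and $K^A_g$, recombine the three circulation factors $\lambda^A(x;z-x)\,\lambda^A(z;y-z)\,\lambda^A(x;y-x)^{-1}$ via Stokes into the flux phase $e^{-i\Gamma^B(\sexpval{x,z,y})}$, and then apply the inverse Wigner transform; this naturally yields the \emph{direct} form first, with the Fourier form following by inserting the symplectic Fourier transforms and completing squares. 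The paper instead works at the level of the Weyl system: it inserts the definition $\Op^A(f) = (2\pi)^{-d}\int (\Fs f)(Y)\,\WeylSys^A(Y)\,\dd Y$ for both factors, uses the composition law $\WeylSys^A(Y)\WeylSys^A(Z) = e^{\frac{i}{2}\sigma(Y,Z)}\,\omega^B(\Qe;y,z)\,\WeylSys^A(Y+Z)$, then computes the kernel and Wigner transform of the auxiliary operator $\omega^B(\Qe,\Qe+y,\Qe+z)\,\WeylSys^A(Z)$ — producing the \emph{Fourier} form first, and then the direct form by unfolding the $\Fs$'s. Your route is closer to the kernel-based derivation of Măntoiu–Purice, the paper's to Müller's more algebraic one; they are equally rigorous, and the paper itself acknowledges both by citing both references in the theorem. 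The one small thing worth flagging: your $K^A_{f\magW g}(x,y)=\int\dd z\,K^A_f(x,z)K^A_g(z,y)$ is missing the normalization $(2\pi)^{-\nicefrac{d}{2}}$ that the paper's $\diamond$-composition carries (a consequence of the $(2\pi)^{-\nicefrac{d}{2}}$ in $\Op^A(h)=\frac{1}{(2\pi)^{\nicefrac{d}{2}}}\int\dd y\,K^A_h(x,y)u(y)$), and your parenthetical about the phase $e^{-i\Gamma^B}$ is not really the point at the kernel-composition stage — what matters there is simply that $\diamond$ preserves $\Schwartz(\R^d\times\R^d)$; the polynomial bounds on the magnetic phase only become relevant when you apply $\WignerTrafo^A$.
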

Before we can prove this statement, we need an auxiliary result: take two operators $T$ and $S$ whose operator kernels $K_T$ and $K_S$ are in $\Schwartz(\R^d \times \R^d)$. Then the operator kernel of $T S$ is given by 
\begin{align*}
	(K_T \diamond K_S)(x,y) := \frac{1}{(2\pi)^{\nicefrac{d}{2}}} \int_{\R^d} \dd z \, K_T(x,z) \, K_S(z,y) 
	. 
\end{align*}
\begin{lem}\label{magWQ:magnetic_weyl_calculus:lem:composition_of_operator_kernels}
	For any $K_T , K_S \in \Schwartz(\R^d \times \R^d)$, the product $K_T \diamond K_S$ is also in $\Schwartz(\R^d \times \R^d)$, \ie $\diamond : \Schwartz(\R^d \times \R^d) \times \Schwartz(\R^d \times \R^d) \longrightarrow \Schwartz(\R^d \times \R^d)$. 
\end{lem}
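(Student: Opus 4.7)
The plan is to verify directly that $K_T \diamond K_S$ belongs to $\Schwartz(\R^d \times \R^d)$ by checking that it is smooth and that all of its derivatives decay faster than any polynomial in $(x,y)$. Since $\Schwartz(\R^d \times \R^d)$ is characterized by the seminorms
\begin{align*}
	\bnorm{f}_{\alpha,\beta,\gamma,\delta} := \sup_{(x,y) \in \R^d \times \R^d} \babs{x^{\alpha} y^{\beta} \, \partial_x^{\gamma} \partial_y^{\delta} f(x,y)},
\end{align*}
it suffices to show that each of these seminorms is finite when applied to $K_T \diamond K_S$.

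First I would justify differentiating under the integral sign. Because $K_T$ and $K_S$ are Schwartz, the integrand $K_T(x,z) \, K_S(z,y)$ and all of its $(x,y)$-derivatives are dominated uniformly on compact $(x,y)$-sets by a rapidly decaying function of $z$; dominated convergence then yields
\begin{align*}
	\partial_x^{\gamma} \partial_y^{\delta} (K_T \diamond K_S)(x,y) = \frac{1}{(2\pi)^{\nicefrac{d}{2}}} \int_{\R^d} \dd z \, \bigl ( \partial_x^{\gamma} K_T \bigr )(x,z) \, \bigl ( \partial_y^{\delta} K_S \bigr )(z,y),
\end{align*}
so derivatives of $K_T \diamond K_S$ are again $\diamond$-products of Schwartz kernels. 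In particular the question reduces to controlling $\bnorm{K_T \diamond K_S}_{\alpha,\beta,0,0}$ for arbitrary $K_T,K_S \in \Schwartz(\R^d \times \R^d)$.

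For the decay bound I would use the standard characterization of Schwartz functions: for any $M,N \in \N_0$ there are constants with $\babs{K_T(x,z)} \leq C_{M,N} \expval{x}^{-M} \expval{z}^{-N}$ and $\babs{K_S(z,y)} \leq C'_{M',N'} \expval{z}^{-M'} \expval{y}^{-N'}$. Combining these gives
\begin{align*}
	\babs{K_T \diamond K_S (x,y)} \leq \frac{C_{M,N} \, C'_{M',N'}}{(2\pi)^{\nicefrac{d}{2}}} \, \expval{x}^{-M} \expval{y}^{-N'} \int_{\R^d} \dd z \, \expval{z}^{-(N + M')},
\end{align*}
and choosing $N + M' > d$ makes the $z$-integral finite. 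Since $M$ and $N'$ are arbitrary, multiplying by $x^{\alpha} y^{\beta}$ and taking $M \geq \abs{\alpha}$, $N' \geq \abs{\beta}$ shows that every seminorm $\bnorm{K_T \diamond K_S}_{\alpha,\beta,0,0}$ is finite. Combined with the first step, this yields $K_T \diamond K_S \in \Schwartz(\R^d \times \R^d)$.

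There is no real obstacle here — the argument is a textbook Schwartz-class estimate. The only point requiring a modicum of care is the interchange of integration and differentiation, but this is immediate from the Schwartz decay of $K_T$ and $K_S$ in all variables. The continuity of $\diamond$ with respect to the Fréchet topology on $\Schwartz(\R^d \times \R^d)$ drops out of the same estimate, which will be useful later when extending $\magW$ and $\Op^A$ to tempered distributions.
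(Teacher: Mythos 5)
Your proof is correct. It differs mildly from the paper's argument: the paper first isolates an auxiliary lemma bounding $L^p$-norms of a Schwartz function by finitely many of its seminorms (Lemma~\ref{appendix:oscillatory_integrals:lem:schwartz_functions_Lp_estimate}), and then applies it slice-wise in $z$ to the function $\Phi(x,y,z) := (2\pi)^{-\nicefrac{d}{2}} K_T(x,z) K_S(z,y)$, swapping $\sup_{x,y}$ with the $z$-integral to get the seminorm estimate. You bypass that machinery and go straight through the peak-decay characterization of Schwartz functions, $\abs{K_T(x,z)} \leq C_{M,N} \expval{x}^{-M} \expval{z}^{-N}$ for all $M,N$, and pick the exponents so the $z$-integral converges. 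Both routes rest on the same underlying fact (Schwartz decay separates cleanly in the variables and beats any polynomial), so they are not deeply different; your version is marginally more self-contained since it needs no auxiliary lemma, while the paper's version packages the $L^p$-by-seminorms bound for reuse elsewhere in Appendix~A. One small stylistic advantage of your approach is that the Fréchet continuity of $\diamond$ genuinely does drop out of the explicit constants, whereas in the paper's formulation this would require tracing the constants through the auxiliary lemma.
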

The proof can be found in Appendix~\ref{appendix:oscillatory_integrals}. 
\begin{proof}[Theorem~\ref{magWQ:magnetic_weyl_calculus:thm:equivalence_product}]
	The Weyl product is implicitly defined through 
	\begin{align*}
		\Op^A(f) \, \Op^A(g) &=: \Op^A(f \magW g) 
		. 
	\end{align*}
	The integral kernels of $\Op^A(f)$ and $\Op^A(g)$ are of Schwartz class, $K_f^A , K_g^A \in \Schwartz(\R^d \times \R^d)$, and hence, by Lemma~\ref{magWQ:magnetic_weyl_calculus:lem:composition_of_operator_kernels}, the integral kernel of $\Op^A(f) \, \Op^A(g)$ is also an element of $\Schwartz(\R^d \times \R^d)$. If we combine this with Lemmas~\ref{magWQ:magnetic_weyl_calculus:lem:Wigner_transform} and \ref{magWQ:magnetic_weyl_calculus:lem:inverseWeylQ}, we conclude 
	\begin{align*}
		f \magW g = \WignerTrafo^A K_{\Op^A(f) \, \Op^A(g)} \in \Schwartz(\Pspace)
	\end{align*}
	where $K_{\Op^A(f) \, \Op^A(g)} \in \Schwartz(\R^d \times \R^d)$ is the kernel of $\Op^A(f) \, \Op^A(g)$. 
	\medskip
	
	\noindent
	\textbf{Step 1: Rewrite in terms of Weyl system. }
	Plugging in the definition of $\Op^A$, we get 
	\begin{align*}
		\Op^A(f) \, \Op^A(g) 
		&= \frac{1}{(2\pi)^{2d}} \int_{\Pspace} \dd Y \, \int_{\Pspace} \dd Z \, \bigl ( \Fs f \bigr )(Y) \, \bigl ( \Fs g \bigr )(Z) \, \WeylSys^A(Y) \WeylSys^A(Z) \\
		&= \frac{1}{(2\pi)^{2d}} \int_{\Pspace} \dd Y \, \int_{\Pspace} \dd Z \, \bigl ( \Fs f \bigr )(Y) \, \bigl ( \Fs g \bigr )(Z) \, e^{\frac{i}{2} \sigma(Y,Z)} 
		\cdot \\
		&\qquad \qquad \qquad \qquad \qquad \; \cdot
		 \omega^B(\Qu,\Qu+y,\Qu+y+z) \WeylSys^A(Y+Z) \\
		&= \frac{1}{(2\pi)^{2d}} \int_{\Pspace} \dd Z \biggl ( \int_{\Pspace} \dd Y \, \bigl ( \Fs f \bigr )(Y) \, \bigl ( \Fs g \bigr )(Z-Y) \, e^{\frac{i}{2} \sigma(Y,Z)} 
		\biggr . \cdot \\
		\biggl . &\qquad \qquad \qquad \qquad \qquad \; \cdot 
		 \omega^B(\Qu,\Qu+y,\Qu+z) \biggr ) \WeylSys^A(Z) 
		. 
	\end{align*}
	In order to find the kernel of this operator, we need to find the kernel for $\hat{L}(y,Z) := \omega^B(\Qu,\Qu+y,\Qu+z) \WeylSys^A(Z)$ which parametrically depends on $y$ and $Z = (z,\zeta)$. 
	\medskip

	\noindent
	\textbf{Step 2: Find the operator kernel for $\hat{L}(y,Z)$. } 
	Let $\varphi \in L^2(\R^d)$. Then we have 
	\begin{align*}
		\bigl ( \hat{L}(y,Z) u \bigr )(q) 
		&= \omega^B(q,q+y,q+z) \, e^{-i (q + \frac{z}{2}) \cdot \eta} \, e^{-i \Gamma^A([q,q+z])} \, u(q+z) \\
		&= \int_{\R^d} \dd q' \, e^{-i (q' - \frac{z}{2}) \cdot \eta} \, e^{-i \Gamma^A([q'-z,q'])} 
		\cdot \\
		&\qquad \qquad \; \cdot 
		\omega^B(q'-z,q'+y-z,q') \, \delta \bigl ( q'-(q+z) \bigr ) \, u(q') \\
		%
		&=: (2\pi)^{-\nicefrac{d}{2}} \int_{\R^d} \dd q' \, K_{L}(y,Z;q,q') \, u(q') 
		, 
	\end{align*}
	and we need to find $\WignerTrafo^A K_{L}(y,Z; \cdot , \cdot) (X)$, 
	\begin{align*}
		\WignerTrafo^A K_{L}(y,Z; \cdot , \cdot)(X) 
		&= \int_{\R^d} \dd q \, e^{-i q \cdot \xi} \, e^{-i \Gamma^A([x - \frac{q}{2} , x + \frac{q}{2}])} \, K_{L} \bigl ( y,Z;x + \tfrac{q}{2},x - \tfrac{q}{2} \bigr ) \\
		&= e^{i \sigma(X,Z)} \, \omega^B \bigl ( x - \tfrac{z}{2},x - \tfrac{z}{2} + y , x + \tfrac{z}{2} \bigr ) =: L(y,Z;X) 
		. 
	\end{align*}
	\medskip

	\noindent
	\textbf{Step 3: Magnetic composition law. }%
	Now we plug $L(y,Z;X)$ back into the operator equation and obtain 
	\begin{align}
		(f \magW g)(X) 
		&= \frac{1}{(2\pi)^{2d}} \int_{\Pspace} \dd Z \, \int_{\Pspace} \dd Y \, \bigl ( \Fs f \bigr )(Y) \, \bigl ( \Fs g \bigr )(Z-Y) \;  e^{\frac{i}{2} \sigma(Y,Z)} \, L(y,Z;X) 
		\notag \\ 
		&= \frac{1}{(2\pi)^{2d}} \int_{\Pspace} \dd Y \, \int_{\Pspace} \dd Z \, e^{i \sigma(X,Y+Z)} \,  e^{\frac{i}{2} \sigma(Y,Z)} 
		\cdot \label{magWQ:magnetic_weyl_calculus:eqn:equivalence:eqn1} \\
		&\quad \cdot 
		\omega^B \bigl ( x - \tfrac{1}{2} (y+z),x + \tfrac{1}{2} (y-z) , x + \tfrac{1}{2} (y+z) \bigr ) \bigl ( \Fs f \bigr )(Y) \, \bigl ( \Fs g \bigr )(Z)
		. 
		\notag 
	\end{align}
	This formula is the starting point for Müller's and our derivation of the asymptotic expansion of the product. However, we can show the equivalence to the product formula obtained by two of the authors in \cite{Mantoiu_Purice:magnetic_Weyl_calculus:2004} by writing out the symplectic Fourier transforms, 
	\begin{align*}
		\mbox{RHS of \eqref{magWQ:magnetic_weyl_calculus:eqn:equivalence:eqn1}}
		&= \frac{1}{(2\pi)^{4d}} \int_{\Pspace} \dd Y \, \int_{\Pspace} \dd \tilde{Y} \, \int_{\Pspace} \dd Z \, \int_{\Pspace} \dd \tilde{Z} \, e^{i \sigma(X - \tilde{Y},Y)} \, e^{i \sigma(X - \tilde{Z},Z)} \, e^{\frac{i}{2} \sigma(Y,Z)} \, \cdot \\ 
		&\qquad \qquad \quad \cdot \omega^B \bigl ( x - \tfrac{1}{2} (y+z),x + \tfrac{1}{2} (y-z) , x + \tfrac{1}{2} (y+z) \bigr ) \, f(\tilde{Y}) \, g(\tilde{Z}) 
		. 
	\end{align*}
	If one writes out the exponential prefactors explicitly, sorts all terms containing $\xi$ and $\eta$ and then integrates over those variables, one obtains 
	\begin{align*}
		\frac{1}{\pi^{2d}} \, \int_{\Pspace} \dd \tilde{Y} \, \int_{\Pspace} \dd \tilde{Z} \, &e^{- i 2 \sigma(X - \tilde{Y} , X - \tilde{Z})} 
		\cdot \\
		&\cdot 
		\omega^B \bigl ( \tilde{y} - \tilde{z} + x , \tilde{y} + \tilde{z} - x , - \tilde{y} + \tilde{z} + x \bigr ) \, f(\tilde{Y}) \, g(\tilde{Z}) 
		. 
	\end{align*}
	This concludes the proof. 
\end{proof}
%



\section{Extension to larger classes of functions} 
\label{magWQ:extension}
Up to now, we only know how to quantize and compose Schwartz functions on phase space. This is certainly not satisfactory, not only does that exclude functions depending only on one variable $x$ or $\xi$, but also the most common hamiltonian function, $h(x,\xi) = \tfrac{1}{2} \xi^2 + V(x)$. A rather straightforward, but implicit approach to extend magnetic Weyl calculus is to proceed as in \cite{Gracia_Bondia_Varilly:distributions_phasespace_1:1988,Gracia_Bondia_Varilly:distributions_phasespace_2:1988} and employ duality techniques. However, if one wants to show that a certain class of functions is contained in the magnetic Moyal algebra (an algebra composed of tempered distributions with `nice' composition properties, see~Definition~\ref{magWQ:extension:defn:mag_Moyal_algebra}), one has to employ hands-on oscillatory integral techniques \cite{Hoermander:Weyl_calculus:1979} (see also Appendix~\ref{appendix:oscillatory_integrals}). Our presentation here follows \cite{Mantoiu_Purice:magnetic_Weyl_calculus:2004}. 
\begin{assumption}
	Throughout this section, we will assume the fields to be polynomially bounded, \ie the components of $B$ and $A$ are $\Cont^{\infty}_{\mathrm{pol}}$ functions. 
\end{assumption}

\subsection{Extension via duality} 
\label{magWQ:extension:extension_via_duality}
The first step is to extend $\Op^A$ for polynomially bounded magnetic fields $B$ from Schwartz functions to tempered distributions. It is helpful to think in terms of integral kernels: if $h \in \Schwartz(\Pspace)$, then the integral kernel $K^A_h$ of $\Op^A(h)$ reads 
\begin{align}
	K^A_h(x,y) &= \frac{1}{(2\pi)^{\nicefrac{d}{2}}} e^{-i \Gamma^A([x,y])} \, \int_{{\R^d}^*} \dd \eta \, e^{-i (y - x) \cdot \eta} \, h \bigl ( \tfrac{1}{2}(x + y) , \eta \bigr ) 
	\notag \\
	&
	= \lambda^A(x;y-x) \, \bigl ( \Fourier_2 h \bigr) \bigl ( \tfrac{1}{2}(x + y) , y - x \bigr ) 
	. 
	\label{magWQ:extension:eqn:operator_kernel}
\end{align}
Hence, we can define $K^A : \Schwartz(\Pspace) \longrightarrow \Schwartz(\R^d \times \R^d)$, $h \mapsto K^A_h$, as the map which associates to any $h \in \Schwartz(\Pspace)$ the operator kernel of $\Op^A(h)$. Since $K^A$ decomposes into a partial Fourier transform, a linear coordinate transform and a multiplication with a $\Cont^{\infty}_{\mathrm{pol}}(\R^d)$ function, it defines a linear topological isomorphism between $\Schwartz(\Pspace)$ and $\Schwartz(\R^d \times \R^d)$ which extends to an isomorphism between $\Schwartz'(\Pspace)$ and $\Schwartz'(\R^d \times \R^d)$. If we define $\Int$ in the obvious way, $\Op^A(h) =: \Int(K^A_h)$, and use that $\Int$ induces two isomorphisms \cite[Section~50, Theorem~51.6]{Treves:topological_vector_spaces:1967}, 
\begin{align*}
	\Int &: \Schwartz(\R^d \times \R^d) \longrightarrow \mathcal{L} \bigl ( \Schwartz'(\R^d) , \Schwartz(\R^d) \bigr ) 
	\\
	\Int &: \Schwartz'(\R^d \times \R^d) \longrightarrow \mathcal{L} \bigl ( \Schwartz(\R^d) , \Schwartz'(\R^d) \bigr ) 
	, 
\end{align*}
we conclude we can write \emph{any} continuous map from $\Schwartz(\R^d)$ to $\Schwartz'(\R^d)$ as the Weyl quantization of a tempered distribution $\Schwartz'(\Pspace)$. If we endow $\mathcal{L} \bigl ( \Schwartz'(\R^d) , \Schwartz(\R^d) \bigr )$ and $\mathcal{L} \bigl ( \Schwartz(\R^d) , \Schwartz'(\R^d) \bigr )$ with the topology of uniform convergence on bounded subsets, we have two linear, continuous injections 
\begin{align*}
	\mathcal{L} \bigl ( \Schwartz'(\R^d) , \Schwartz(\R^d) \bigr ) \hooklongrightarrow \BopL \hooklongrightarrow \mathcal{L} \bigl ( \Schwartz(\R^d) , \Schwartz'(\R^d) \bigr ) 
	. 
\end{align*}
In particular, these embeddings imply that \emph{any} bounded operator $T$ on $L^2(\R^d)$ has a distributional operator kernel. Putting all this together, we have proven 
\begin{prop}\label{magWQ:extension:prop:extension_Op_A}
	If the magnetic field $B$ is polynomially bounded with vector potential $A$, then $\Op^A$ defines topological linear isomorphisms 
	\begin{align*}
		\Op^A &: \Schwartz(\Pspace) \longrightarrow \mathcal{L} \bigl ( \Schwartz'(\R^d) , \Schwartz(\R^d) \bigr ) 
		\\
		\Op^A &: \Schwartz'(\Pspace) \longrightarrow \mathcal{L} \bigl ( \Schwartz(\R^d) , \Schwartz'(\R^d) \bigr ) 
		. 
	\end{align*}
	If $A'$ is an equivalent polynomially bounded vector potential, \ie $\dd A' = B = \dd A$, then there exists $\chi \in \Cont^{\infty}_{\mathrm{pol}}(\R^d)$ such that $A' = A + \dd \chi$ and for all $h \in \Schwartz'(\Pspace)$, the operators $\Op^A(h)$ and $\Op^{A + \dd \chi}(h)$ are unitarily equivalent as maps in $\mathcal{L} \bigl ( \Schwartz(\R^d) , \Schwartz'(\R^d) \bigr )$, 
	\begin{align*}
		e^{+ i \chi(Q)} \, \Op^A(h) \, e^{- i \chi(Q)} = \Op^{A + \dd \chi}(h) 
		. 
	\end{align*}
\end{prop}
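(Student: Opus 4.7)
The plan is to follow the roadmap already sketched in the paragraph preceding the proposition, filling in the technical details. The key observation is the explicit formula \eqref{magWQ:extension:eqn:operator_kernel} for the operator kernel $K^A_h$, which exhibits the kernel map $K^A \colon h \mapsto K^A_h$ as the composition $M_{\lambda^A} \circ C \circ \Fourier_2$, where $\Fourier_2$ is the partial Fourier transform in the momentum variable, $C$ is the linear coordinate change $(x,y) \mapsto \bigl(\tfrac{1}{2}(x+y),\, y-x\bigr)$ on $\R^d \times \R^d$, and $M_{\lambda^A}$ is multiplication by the phase $\lambda^A(x;y-x) = e^{-i\Gamma^A([x,y])}$. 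First I would check that each of these three factors is a topological linear isomorphism of $\Schwartz(\R^d \times \R^d)$ onto itself which extends by transposition to $\Schwartz'(\R^d \times \R^d)$. For $\Fourier_2$ and $C$ this is standard. For $M_{\lambda^A}$, the crucial point is that the components of $A$ lie in $\Cont^{\infty}_{\mathrm{pol}}(\R^d)$, so using $\Gamma^A([x,y]) = \int_0^1 A\bigl(x + s(y-x)\bigr) \cdot (y-x)\, \dd s$ one checks that $\lambda^A$ is a smooth phase with polynomially bounded derivatives in $(x,y)$; multiplication by such a function is a continuous bijection on $\Schwartz$ (and on $\Schwartz'$) with inverse $M_{\overline{\lambda^A}}$. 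Composing shows that $K^A$ is a topological isomorphism $\Schwartz(\Pspace) \to \Schwartz(\R^d \times \R^d)$, and its transpose extends it to one between $\Schwartz'(\Pspace)$ and $\Schwartz'(\R^d \times \R^d)$.

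Next I would invoke the two Schwartz-kernel isomorphisms quoted in the text from Treves, namely $\Int \colon \Schwartz(\R^d \times \R^d) \to \mathcal{L}\bigl(\Schwartz'(\R^d), \Schwartz(\R^d)\bigr)$ and $\Int \colon \Schwartz'(\R^d \times \R^d) \to \mathcal{L}\bigl(\Schwartz(\R^d), \Schwartz'(\R^d)\bigr)$, both with the topology of uniform convergence on bounded sets. Since by construction $\Op^A = \Int \circ K^A$, composing $\Int$ with the two versions of $K^A$ yields the two claimed topological isomorphisms for $\Op^A$ at once.

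For the gauge statement, polynomial boundedness of $A$ and $A'$ together with $\dd(A' - A) = 0$ and the transversal-gauge formula produces $\chi \in \Cont^{\infty}_{\mathrm{pol}}(\R^d)$ with $A' = A + \dd\chi$. Because $\chi$ has polynomial growth, $e^{+i\chi(Q)}$ is unitary on $L^2(\R^d)$ and restricts (respectively extends) to a topological isomorphism of $\Schwartz(\R^d)$ and of $\Schwartz'(\R^d)$. For $h \in \Schwartz(\Pspace)$ the intertwining $\Op^{A+\dd\chi}(h) = e^{+i\chi(Q)}\, \Op^A(h)\, e^{-i\chi(Q)}$ has already been established via the covariance of the magnetic Weyl system. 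To transport this identity to arbitrary $h \in \Schwartz'(\Pspace)$ I would work at the level of kernels: conjugation by $e^{+i\chi(Q)}$ multiplies a distributional kernel by $e^{+i(\chi(x)-\chi(y))}$, and by Stokes' theorem this factor is precisely the ratio $\lambda^{A+\dd\chi}(x; y-x)\, \lambda^A(x; y-x)^{-1}$, which matches \eqref{magWQ:extension:eqn:operator_kernel} with $A$ replaced by $A+\dd\chi$. The main obstacle is the careful verification that $M_{\lambda^A}$ is a bicontinuous multiplier on $\Schwartz$, i.e.\ that its polynomial-growth derivatives really do ensure continuity in both directions; this is a routine Leibniz estimate, but it is where the hypothesis $A \in \Cont^{\infty}_{\mathrm{pol}}$ is genuinely used.
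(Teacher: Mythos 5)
Your proposal is correct and follows essentially the same route as the paper: decompose the kernel map $K^A$ into a partial Fourier transform, a linear change of variables, and multiplication by the polynomially bounded phase $\lambda^A$, then invoke the two Schwartz-kernel isomorphisms $\Int$ from Treves and inherit gauge covariance from the Weyl system. The only stylistic quibble is that your appeal to the ``transversal-gauge formula'' to produce $\chi$ is a slight mislabeling---that formula produces a vector potential from $B$, whereas here one wants the Poincar\'e-lemma line integral $\chi(x)=\int_0^1 (A'-A)(sx)\cdot x\,\dd s$ applied to the closed one-form $A'-A$; the underlying construction is the same in spirit and the polynomial bound on $\chi$ follows as you say.
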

Similarly, the Wigner transform also admits an extension to $\Schwartz'$: 
\begin{prop}\label{magWQ:extension:prop:extension_Wigner_trafo}
	For polynomially bounded fields $B$, the magnetic Fourier transform extends from $\WignerTrafo^A : \Schwartz(\R^d \times \R^d) \longrightarrow \Schwartz(\Pspace)$ to an isomorphism between distributions, 
	\begin{align*}
		\WignerTrafo^A : \Schwartz'(\R^d \times \R^d) \longrightarrow \Schwartz'(\Pspace)
		. 
	\end{align*}
\end{prop}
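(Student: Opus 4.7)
The plan is to identify $\WignerTrafo^A$ with the inverse of the kernel map $K^A : h \mapsto K^A_h$ and then invoke the extension of $K^A$ already obtained in the proof of Proposition~\ref{magWQ:extension:prop:extension_Op_A}. Indeed, by Lemma~\ref{magWQ:magnetic_weyl_calculus:lem:inverseWeylQ}, on Schwartz functions the magnetic Wigner transform acts as $\WignerTrafo^A = (K^A)^{-1}$ when applied to integral kernels, so it suffices to show that $K^A$ extends to a topological isomorphism $\Schwartz'(\Pspace) \longrightarrow \Schwartz'(\R^d \times \R^d)$ (the claim for $\WignerTrafo^A$ is then just the inverse map).

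Concretely, I would read off from equation~\eqref{magWQ:extension:eqn:operator_kernel} the factorization
\begin{align*}
    K^A = M_{\lambda^A} \circ \Phi^{\ast} \circ \Fourier_2,
\end{align*}
where $\Fourier_2$ denotes partial Fourier transform in the second variable, $\Phi^{\ast}$ is the pullback under the linear change of coordinates $(x,y) \mapsto \bigl( \tfrac{1}{2}(x+y), y-x \bigr)$, and $M_{\lambda^A}$ is multiplication by the phase $\lambda^A(x;y-x) = e^{-i\Gamma^A([x,y])}$. The first two factors are classical topological isomorphisms of $\Schwartz(\R^d\times\R^d)$ onto itself (resp.\ onto $\Schwartz(\Pspace)$), and they extend by transposition to isomorphisms at the level of tempered distributions.

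The only point requiring a remark is the multiplication operator $M_{\lambda^A}$. Because $A$ is polynomially bounded, so is $\Gamma^A([x,y]) = \int_{[x,y]} A$ together with all its derivatives; hence both $\lambda^A$ and its pointwise inverse $\overline{\lambda^A}$ belong to $\Cont^{\infty}_{\mathrm{pol}}(\R^d \times \R^d)$. Multiplication by a $\Cont^{\infty}_{\mathrm{pol}}$ function is continuous on $\Schwartz(\R^d\times\R^d)$, and the existence of an inverse with the same property makes $M_{\lambda^A}$ a topological isomorphism of $\Schwartz(\R^d\times\R^d)$; it then extends by duality to $\Schwartz'(\R^d\times\R^d)$.

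Composing the three extended isomorphisms yields the extension of $K^A$ to $\Schwartz'(\Pspace) \cong \Schwartz'(\R^d \times \R^d)$, and defining $\WignerTrafo^A := (K^A)^{-1}$ on $\Schwartz'(\R^d \times \R^d)$ produces the desired topological isomorphism onto $\Schwartz'(\Pspace)$. Consistency with the Schwartz-level definition is immediate from Lemma~\ref{magWQ:magnetic_weyl_calculus:lem:inverseWeylQ}. The main (mild) obstacle is just the bookkeeping for $M_{\lambda^A}$, i.e.\ verifying polynomial bounds on the derivatives of $\Gamma^A$; everything else is a routine combination of Fourier duality and linear change of variables on $\Schwartz'$.
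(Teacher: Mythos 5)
Your proof is correct and takes essentially the same route as the paper: both hinge on the observation that the map factors into a partial Fourier transform, a linear change of variables, and multiplication by a $\Cont^{\infty}_{\mathrm{pol}}$ phase, each of which extends to tempered distributions by duality. The only cosmetic difference is that you route through the kernel map $K^A$ and invert, whereas the paper applies the same factorization directly to the explicit formula for $\WignerTrafo^A$ from Lemma~\ref{magWQ:magnetic_weyl_calculus:lem:Wigner_transform}.
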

\begin{proof}
	From the explicit formula, equation~\eqref{magWQ:magnetic_weyl_calculus:lem:Wigner_transform}, we again see that $\WignerTrafo^A$ is a combination of Fourier transform, linear change of variables and multiplication by a phase (a $\Cont^{\infty}_{\mathrm{pol}}$ function). Thus, it extends to tempered distributions by duality. 
\end{proof}
An important consequence is the irreducibility of the Weyl system: 
\begin{cor}\label{magWQ:extension:cor:irreducibility_Weyl_system}
	The magnetic Weyl system $\WeylSys^A : \Pspace \longrightarrow \mathcal{U} \bigl ( L^2(\R^d) \bigr )$ for polynomially bounded fields is irreducible, \ie there are no nontrivial subspaces of $L^2(\R^d)$ invariant under $\{ \WeylSys^A(X) \}_{X \in \Pspace}$. 
\end{cor}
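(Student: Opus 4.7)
The plan is to reduce irreducibility to a duality argument using the surjectivity half of Proposition~\ref{magWQ:extension:prop:extension_Op_A}. Let $V \subseteq L^2(\R^d)$ be a closed subspace invariant under every $\WeylSys^A(X)$; I want to conclude $V = \{0\}$ or $V = L^2(\R^d)$.

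First I would pick any $u \in V$ and any $w \in V^{\perp}$, and observe that for every $h \in \Schwartz(\Pspace)$ the defining formula~\eqref{magWQ:magnetic_weyl_calculus:eqn:magnetic_Weyl_quantization} realises $\Op^A(h)\,u$ as an absolutely convergent Bochner integral in $L^2(\R^d)$, since $\Fs h \in \Schwartz(\Pspace) \subset L^1(\Pspace)$ and each $\WeylSys^A(X)$ is unitary. Every integrand $\WeylSys^A(X) u$ lies in the closed subspace $V$ by invariance, so the integral does too, and therefore
\[
\bscpro{w}{\Op^A(h) u} = 0 \qquad \forall h \in \Schwartz(\Pspace).
\]

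Next I would exploit surjectivity: by Proposition~\ref{magWQ:extension:prop:extension_Op_A} the map $\Op^A$ sends $\Schwartz(\Pspace)$ onto $\mathcal{L}\bigl(\Schwartz'(\R^d), \Schwartz(\R^d)\bigr)$. For any $v, w' \in \Schwartz(\R^d)$, the rank-one operator $\psi \mapsto \bscpro{w'}{\psi} \, v$ belongs to this space (the $L^2$-pairing with $w' \in \Schwartz(\R^d)$ extends continuously to $\Schwartz'(\R^d)$, and its image lies in $\C \, v \subset \Schwartz(\R^d)$). Hence there exists $h_{v,w'} \in \Schwartz(\Pspace)$ with $\Op^A(h_{v,w'}) u = \bscpro{w'}{u} \, v$, and the vanishing above becomes
\[
\bscpro{w'}{u} \, \bscpro{w}{v} = 0 \qquad \forall v, w' \in \Schwartz(\R^d).
\]
Assuming $V \neq \{0\}$ and $u \neq 0$, I can pick $w' \in \Schwartz(\R^d)$ with $\bscpro{w'}{u} \neq 0$ by density of $\Schwartz(\R^d)$ in $L^2(\R^d)$; then $\bscpro{w}{v} = 0$ for all $v \in \Schwartz(\R^d)$, which forces $w = 0$. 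Thus $V^{\perp} = \{0\}$ and $V = L^2(\R^d)$.

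The main (and mild) obstacle is the bookkeeping between the three incarnations of $\Op^A(h)$ -- as a bounded operator on $L^2(\R^d)$ via Lemma~\ref{magWQ:standard_weyl_calculus:weyl_quantization:lem:action_Op}, as an element of $\mathcal{L}(\Schwartz(\R^d), \Schwartz'(\R^d))$, and as an element of $\mathcal{L}(\Schwartz'(\R^d), \Schwartz(\R^d))$ -- together with the verification that the $L^2$-action of $\Op^A(h_{v,w'})$ on $u$, computed via the Bochner integral, coincides with its $\Schwartz' \to \Schwartz$ action on $u \in L^2(\R^d) \subset \Schwartz'(\R^d)$. Once these identifications are in place, the rest of the argument is purely Schur-type: surjectivity of $\Op^A$ supplies enough rank-one operators to separate any $w \in V^{\perp}$ from any nonzero $u \in V$.
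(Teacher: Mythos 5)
Your proof is correct, but it follows a genuinely different path from the paper's. The paper argues directly in Hilbert space: starting from a nonzero $u$ in the invariant subspace and a nonzero $u_\perp$ in its orthogonal complement, invariance forces $\bscpro{u_\perp}{\WeylSys^A(X) u} = 0$ for all $X$, so the Fourier--Wigner transform of the pair $(u_\perp, u)$ vanishes; the contradiction then comes from the orthogonality relation $\bnorm{\bscpro{u_\perp}{\WeylSys^A(\cdot) u}}_{L^2(\Pspace)} = \snorm{u_\perp}_{L^2}\,\snorm{u}_{L^2}$, \ie from the (unitarity of the) Wigner transform established via Proposition~\ref{magWQ:extension:prop:extension_Wigner_trafo}. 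Your argument instead is a Schur-type one through the quantization map: you integrate the invariance against $\Fs h$, note that the Bochner integral stays in the closed subspace, and then invoke the surjectivity half of Proposition~\ref{magWQ:extension:prop:extension_Op_A} to hit every smoothing operator $\Schwartz'(\R^d) \to \Schwartz(\R^d)$, in particular enough rank-one ones to separate $V$ from $V^\perp$. What the paper's route buys is concreteness: it needs only the $L^2$-isometry of the Wigner transform, an explicit one-line computation, and stays entirely on the Hilbert-space level. What your route buys is that the Wigner-transform norm identity never enters; you trade it for the full strength of the kernel-theorem isomorphism $\Op^A : \Schwartz(\Pspace) \to \mathcal{L}(\Schwartz'(\R^d),\Schwartz(\R^d))$, which makes the argument look like the standard proof that a nondegenerate representation of $\mathcal{K}(\Hil)$ is irreducible. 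Your caveat about matching the three incarnations of $\Op^A(h)$ is real but harmless: all three restrict from the same densely defined operator on $\Schwartz(\R^d)$, so they agree on $L^2(\R^d)$, and the Bochner-integral realisation you use for invariance is exactly the one bounded on $L^2$ from Lemma~\ref{magWQ:standard_weyl_calculus:weyl_quantization:lem:action_Op}. The only detail worth making explicit is the conjugation convention in the rank-one operator, since the $\Schwartz'$--$\Schwartz$ pairing is bilinear while $\scpro{\cdot}{\cdot}$ is sesquilinear; replacing $\bscpro{w'}{\psi}$ by the antiduality $(\,\psi, (w')^*\,)$ makes the operator land in $\mathcal{L}(\Schwartz'(\R^d),\Schwartz(\R^d))$ unambiguously and agree with the $L^2$-inner product on $u \in L^2(\R^d)$.
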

\begin{proof}
	Assume there exists a nontrivial invariant subspace $\mathcal{K}$. Let $u \in \mathcal{K} \setminus \{ 0 \}$ and $u_{\perp} \in \mathcal{K}^{\perp} \setminus \{ 0 \}$. Then for all $X \in \Pspace$ 
	\begin{align*}
		\bscpro{u_{\perp}}{\WeylSys^A(X) u} = 0 
	\end{align*}
	holds by assumption. This also implies $\Fs \bigl ( \bscpro{u_{\perp}}{\WeylSys^A(\cdot) u} \bigr )(- X) = \WignerTrafo^A(u_{\perp},u)(X) = 0$ and by Proposition~\ref{magWQ:extension:prop:extension_Wigner_trafo} 
	\begin{align*}
		\bnorm{\bscpro{u_{\perp}}{\WeylSys^A(\cdot) u}}_{L^2(\Pspace)} = \bnorm{\Fs \bigl ( \bscpro{u_{\perp}}{\WeylSys^A(\cdot) u} \bigr )}_{L^2(\Pspace)} = 0 
		. 
	\end{align*}
	On the other hand, we can calculate the norm of $\bscpro{u_{\perp}}{\WeylSys^A(X) u}$ explicitly, 
	\begin{align*}
		\bnorm{\bscpro{u_{\perp}}{\WeylSys^A(\cdot) u}}_{L^2(\Pspace)} = \snorm{u_{\perp}}_{L^2(\R^d)} \, \snorm{u}_{L^2(\R^d)} \neq 0 
		, 
	\end{align*}
	and we have arrived at a contradiction. 
\end{proof}
We can also characterize the space of compact and Hilbert-Schmidt operators with functions on phase space. This result should be compared to Propositions~\ref{algebraicPOV:twisted_crossed_products:prop:basic_properties_rep}, \ref{psiDO_reloaded:relevant_cStar_algebras:thm:description_FXprod_via_S} and \ref{psiDO_reloaded:relevant_cStar_algebras:prop:another_description_FXprod} which basically say that the `missing' elements are those which lack smoothness and perhaps even continuity, but can be approximated by smooth functions. 
\begin{prop}
	\begin{enumerate}[(i)]
		\item $\Op^A$ induces a unitary map from $L^2(\Pspace)$ to $\mathcal{B}_2 \bigl ( L^2(\R^d) \bigr )$, the ideal of Hilbert-Schmidt operators.\index{Hilbert-Schmidt operators} 
		\item The family $\Op^A \bigl ( \Fs L^1(\Pspace) \bigr )$ is dense in the closed ideal of compact operators $\mathcal{K} \bigl ( L^2(\R^d) \bigr )$.\index{compact operators} 
	\end{enumerate}
\end{prop}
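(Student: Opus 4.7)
For part (i), the plan is to factor $\Op^A$ as the composition $\Int\circ K^A$ of the kernel map with the integral operator map, and to verify that each factor extends to a unitary (up to the normalizations already baked into the definitions). The explicit formula for $K^A_h$ in equation~\eqref{magWQ:extension:eqn:operator_kernel} decomposes $K^A$ into three ingredients: the partial Fourier transform $\Fourier_2$ in the momentum slot, which is unitary on $L^2(\Pspace)$; the linear change of variables $(u,v)\mapsto(u-v/2,u+v/2)=(x,y)$, whose Jacobian determinant equals $1$ (by a block-matrix computation) and which therefore induces a unitary on $L^2(\R^d\times\R^d)$; and pointwise multiplication by the unimodular phase $\lambda^A(x;y-x)$. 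The composition is thus a unitary $K^A\colon L^2(\Pspace)\to L^2(\R^d\times\R^d)$, extending the isomorphism already noted on the Schwartz level in Remark~\ref{magWQ:magnetic_weyl_calculus:rem:unitarity_kernel_map}. Combining this with the classical Hilbert--Schmidt identification $\Int\colon L^2(\R^d\times\R^d)\to\mathcal{B}_2(L^2(\R^d))$, which is unitary with our convention by a standard trace computation, exhibits $\Op^A=\Int\circ K^A$ as a unitary from $L^2(\Pspace)$ onto $\mathcal{B}_2(L^2(\R^d))$.

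For part (ii), I would deduce density from (i) together with two classical facts. First, $\Op^A\bigl(\Schwartz(\Pspace)\bigr)$ is dense in $\mathcal{B}_2(L^2(\R^d))$ in Hilbert--Schmidt norm, since $\Schwartz(\Pspace)\subset L^2(\Pspace)$ is dense and $\Op^A$ is unitary by (i). Second, $\mathcal{B}_2(L^2(\R^d))$ is dense in $\mathcal{K}(L^2(\R^d))$ in operator norm, because finite-rank operators sit inside $\mathcal{B}_2$ and are already dense in $\mathcal{K}$. Since the operator norm is dominated by the Hilbert--Schmidt norm, $\Op^A\bigl(\Schwartz(\Pspace)\bigr)$ is dense in $\mathcal{K}$ in operator norm. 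As $\Schwartz(\Pspace)\subset\Fs L^1(\Pspace)$ (because $\Fs\Schwartz=\Schwartz\subset L^1$), the larger family $\Op^A\bigl(\Fs L^1(\Pspace)\bigr)$ is a fortiori dense in $\mathcal{K}$.

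What remains is to verify the containment $\Op^A\bigl(\Fs L^1(\Pspace)\bigr)\subseteq\mathcal{K}(L^2(\R^d))$, which is not immediate from the operator-norm bound $\snorm{\Op^A(\Fs g)}_{\mathcal{B}(L^2)}\leq(2\pi)^{-d}\snorm{g}_{L^1(\Pspace)}$ alone. The plan is a standard approximation argument: given $h=\Fs f$ with $f\in L^1(\Pspace)$, pick a sequence $f_n\in\Schwartz(\Pspace)$ with $f_n\to f$ in $L^1$-norm, use the above bound to conclude $\Op^A(\Fs f_n)\to\Op^A(h)$ in operator norm, and observe that each $\Op^A(\Fs f_n)$ lies in $\mathcal{B}_2\subset\mathcal{K}$ by (i); norm-closedness of $\mathcal{K}$ then forces $\Op^A(h)\in\mathcal{K}$. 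The main obstacle I anticipate is the careful bookkeeping of normalization constants so that $K^A$ and $\Int$ come out as honest unitaries rather than merely bounded bijections, but once those factors are pinned down the rest of the argument is routine.
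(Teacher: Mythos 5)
Your proposal follows essentially the same route as the paper: factor $\Op^A = \Int \circ K^A$, observe that $K^A$ extends to a unitary $L^2(\Pspace)\to L^2(\R^d\times\R^d)$ (you make this explicit via the partial Fourier transform, a Jacobian-one change of variables, and multiplication by a unimodular phase), and then pass from Hilbert--Schmidt density to density in $\mathcal{K}$ via the operator-norm bound. The one place you go beyond the paper's text is in explicitly verifying the containment $\Op^A\bigl(\Fs L^1(\Pspace)\bigr)\subseteq\mathcal{K}\bigl(L^2(\R^d)\bigr)$ by an $L^1$-approximation argument -- the paper only establishes density of the smaller family $\Op^A\bigl(\Fs L^1(\Pspace)\cap L^2(\Pspace)\bigr)$ and leaves that containment implicit, so your added step is a sensible tightening of the argument rather than a deviation from it.
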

\begin{proof}
	\begin{enumerate}[(i)]
		\item As $K^A \Schwartz(\Pspace) = \Schwartz(\R^d \times \R^d)$ is dense in $L^2(\R^d \times \R^d)$, we can approximate any $L^2$ function by a sequence of Schwartz functions. Furthermore, $\Int : L^2(\R^d \times \R^d) \longrightarrow \mathcal{B}_2 \bigl ( L^2(\R^d) \bigr )$ is unitary and thus also the composition $\Op^A = \Int \circ K^A$ is a unitary map between $L^2(\R^d \times \R^d)$ and $\mathcal{B}_2 \bigl ( L^2(\R^d) \bigr )$. 
		\item All operators with kernels in $\Fs L^1(\Pspace) \cap L^2(\Pspace)$ are Hilbert-Schmidt and thus also compact and $\Op^A \bigl ( \Fs L^1(\Pspace) \cap L^2(\Pspace) \bigr )$ are dense in $\mathcal{B}_2 \bigl ( L^2(\R^d) \bigr )$ with respect to the Hilbert-Schmidt as well as the operator norm. Hence, $\Op^A \bigl ( \Fs L^1(\Pspace) \cap L^2(\Pspace) \bigr )$ is also dense in $\mathcal{K} \bigl ( L^2(\R^d) \bigr )$. 
	\end{enumerate}
\end{proof}
Now that we have successfully extended $\Op^A$ and $\WignerTrafo^A$ to tempered distributions, we turn attention to the product $\magW$. To be able to invoke duality, we crucially need the next Lemma: 
\begin{lem}\label{magWQ:extension:lem:magnetic_product_scalar_product}
	For any $f , g \in \Schwartz(\Pspace)$, we have 
	\begin{align*}
		\int_{\Pspace} \dd X \, (f \magW g)(X) = \int_{\Pspace} \dd X \, f(X) \, g(X) = \bscpro{f^*}{g}_{L^2(\Pspace)} = \bigl ( f , g \bigr ) 
		. 
	\end{align*}
\end{lem}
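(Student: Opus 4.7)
The idea is to use the Fourier form of the magnetic Moyal product, equation~\eqref{magWQ:magnetic_weyl_calculus:eqn:Fourier_form_magnetic_composition}, and carry out the $X$-integration first. Since $f,g \in \Schwartz(\Pspace)$ have symplectic Fourier transforms $\Fs f, \Fs g \in \Schwartz(\Pspace)$, and $\omega^B$ is a bounded $\Cont^\infty$ function of its arguments, Fubini applies throughout and all intermediate expressions are absolutely convergent (or can be interpreted as oscillatory integrals in the sense of Appendix~\ref{appendix:oscillatory_integrals}).

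Writing $X=(x,\xi)$, $Y=(y,\eta)$, $Z=(z,\zeta)$, the phase $e^{i\sigma(X,Y+Z)} = e^{i\xi\cdot(y+z)}\,e^{-i x\cdot(\eta+\zeta)}$ separates cleanly in $x$ and $\xi$. I would perform the $\xi$-integration first, which produces $(2\pi)^d\delta(y+z)$ since the integrand is independent of $\xi$ apart from this exponential. On the support $z=-y$ the three vertices of the magnetic flux triangle become
$$x-\tfrac{1}{2}(y+z)=x,\quad x+\tfrac{1}{2}(y-z)=x+y,\quad x+\tfrac{1}{2}(y+z)=x,$$
so the triangle degenerates to a line segment and $\Gamma^B=0$, hence $\omega^B=1$. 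The subsequent $x$-integration then yields $(2\pi)^d\delta(\eta+\zeta)$, forcing $Z=-Y$, at which point $\sigma(Y,Z)=\sigma(Y,-Y)=0$ and the Schwinger phase $e^{\frac{i}{2}\sigma(Y,Z)}$ also trivializes. Collecting constants, one arrives at
$$\int_{\Pspace}\dd X\,(f\magW g)(X) \;=\; \int_{\Pspace}\dd Y\,(\Fs f)(Y)\,(\Fs g)(-Y).$$

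To finish, I would invoke the involutive/unitary properties of $\Fs$ already recorded in Chapter~\ref{magWQ:standard_weyl_calculus:weyl_quantization}. Concretely, using antisymmetry of $\sigma$ one has $(\Fs g)(-Y)=\frac{1}{(2\pi)^d}\int e^{-i\sigma(Y,X)}g(X)\,\dd X$; substituting this into the right-hand side and performing the $Y$-integration against $(\Fs f)(Y)$ reproduces $\Fs\Fs f = f$ evaluated at $X$, so the expression collapses to $\int_{\Pspace}f(X)g(X)\,\dd X$. The remaining equalities $\int fg\,\dd X = \bscpro{f^*}{g}_{L^2(\Pspace)} = (f,g)$ are merely a restatement of the notation introduced earlier.

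The only substantive point is the observation that the magnetic triangle degenerates on the $\delta$-support, which kills the single potentially dangerous $x$-dependent factor; once that is noted, the rest is a routine manipulation of symplectic Fourier transforms. Everything else (Fubini, passing to $\delta$-distributions) is standard for Schwartz symbols, and the bounded-magnetic-field assumption enters only through the boundedness of $\omega^B$.
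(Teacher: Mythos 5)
Your proof is correct and follows essentially the same route as the paper: write the product via the Fourier-form formula, integrate first in $\xi$ to force $z=-y$, observe that the flux triangle collapses to the degenerate triangle with vertices $x$, $x+y$, $x$ so that $\omega^B\equiv 1$ on the $\delta$-support, then integrate in $x$ to force $Z=-Y$ and trivialize the Schwinger phase, and finally use $\Fs^2=\id$. The paper's own proof is just a slightly more compressed display of the same computation, with the same single substantive observation that the collapsed triangle has zero flux.
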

\begin{proof}
	Formally, we obtain the result by some easy manipulations which can be made rigorous by regularizing the integral, Fubini's theorem and dominated convergence: 
	\begin{align*}
		\int_{\Pspace} \dd X \, &(f \magW g)(X) = \frac{1}{(2\pi)^{2d}} \int_{\Pspace} \dd X \int_{\Pspace} \dd Y \int_{\Pspace} \dd Z \, e^{i \sigma(X , Y + Z)} \, e^{\frac{i}{2} \sigma(Y,Z)} 
		\cdot \\
		&\qquad \qquad \qquad \qquad \qquad \qquad \cdot 
		\omega^B \bigl ( x - \tfrac{1}{2}(y + z) ; x + \tfrac{1}{2} (y - z) , x + \tfrac{1}{2} (y + z) \bigr ) 
		\cdot \\
		&\qquad \qquad \qquad \qquad \qquad \qquad \cdot 
		\bigl ( \Fs f \bigr )(Y) \, \bigl ( \Fs g \bigr )(Z) 
		\\
		&= \int_{\Pspace} \dd Y \, e^{\frac{i}{2} \sigma(Y,-Y)} 
		\, 
		\omega^B \bigl ( x - \tfrac{1}{2}(y - y) ; x + \tfrac{1}{2} (y + y) , x + \tfrac{1}{2} (y - y) \bigr ) 
		\cdot \\
		&\qquad \qquad \cdot 
		\bigl ( \Fs f \bigr )(Y) \, \bigl ( \Fs g \bigr )(-Y) 
		\\
		&= \int_{\Pspace} \dd Y \, \bigl ( \Fs f \bigr )(Y) \, \bigl ( \Fs g \bigr )(-Y) 
		= \int_{\Pspace} \dd Y \, f(Y) \, g(Y) = \bigl ( f , g \bigr ) = \bscpro{f^*}{g} 
	\end{align*}
	The magnetic flux is $0$ as the area of the collapsed triangle vanishes. 
\end{proof}
The basis for the extension of $\magW$ to tempered distributions is the following Lemma: 
\begin{cor}
	For $f , g , h \in \Schwartz(\Pspace)$, we have 
	\begin{align*}
		\bigl ( f \magW g , h \bigr ) = \bigl ( f , g \magW h \bigr ) = \bigl ( g , h \magW f \bigr ) 
		. 
	\end{align*}
\end{cor}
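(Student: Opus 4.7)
The plan is to combine the preceding Lemma~\ref{magWQ:extension:lem:magnetic_product_scalar_product} with the associativity of $\magW$ on $\Schwartz(\Pspace)$. Associativity is a free consequence of quantization: by Theorem~\ref{magWQ:magnetic_weyl_calculus:thm:equivalence_product}, $f \magW g \in \Schwartz(\Pspace)$ whenever $f,g \in \Schwartz(\Pspace)$, and $\Op^A(f \magW g) = \Op^A(f)\,\Op^A(g)$ holds. Proposition~\ref{magWQ:extension:prop:extension_Op_A} asserts that $\Op^A$ restricted to $\Schwartz(\Pspace)$ is a topological isomorphism onto its image, in particular injective. Since operator composition is associative,
\begin{align*}
\Op^A\bigl((f \magW g) \magW h\bigr) = \bigl(\Op^A(f)\,\Op^A(g)\bigr)\,\Op^A(h) = \Op^A(f)\,\bigl(\Op^A(g)\,\Op^A(h)\bigr) = \Op^A\bigl(f \magW (g \magW h)\bigr),
\end{align*}
and injectivity yields $(f \magW g) \magW h = f \magW (g \magW h) \in \Schwartz(\Pspace)$.

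With associativity in hand, the first identity follows from two applications of Lemma~\ref{magWQ:extension:lem:magnetic_product_scalar_product}, applied to the Schwartz functions $f \magW g$ and $h$, respectively $f$ and $g \magW h$:
\begin{align*}
(f \magW g , h) = \int_{\Pspace} \dd X \, \bigl((f \magW g) \magW h\bigr)(X) = \int_{\Pspace} \dd X \, \bigl(f \magW (g \magW h)\bigr)(X) = (f , g \magW h).
\end{align*}

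For the remaining identity, I would exploit the manifest symmetry of the bilinear pairing $(u,v) = \int_{\Pspace} u(X)\,v(X)\,\dd X = (v,u)$, which reduces the task to showing $(g \magW h , f) = (g , h \magW f)$. Another application of associativity and the preceding lemma then yields
\begin{align*}
(g \magW h , f) = \int_{\Pspace} \dd X \, \bigl((g \magW h) \magW f\bigr)(X) = \int_{\Pspace} \dd X \, \bigl(g \magW (h \magW f)\bigr)(X) = (g , h \magW f).
\end{align*}

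The argument is essentially formal: Lemma~\ref{magWQ:extension:lem:magnetic_product_scalar_product} converts the integral of a $\magW$-product into a bilinear pairing, associativity lets one regroup factors, and symmetry of the pairing produces the cyclic identity. The only non-trivial input is associativity of $\magW$ on $\Schwartz(\Pspace)$, which itself reduces to injectivity of $\Op^A$ together with associativity of operator composition; there is no serious obstacle to overcome here.
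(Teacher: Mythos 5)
Your proof is correct and is precisely the argument the paper leaves implicit (the corollary is stated without proof, as an immediate consequence of Lemma~\ref{magWQ:extension:lem:magnetic_product_scalar_product}). The two ingredients you isolate — associativity of $\magW$ on $\Schwartz(\Pspace)$ via injectivity of $\Op^A$ together with Theorem~\ref{magWQ:magnetic_weyl_calculus:thm:equivalence_product}, and the symmetry of the pairing $(u,v) = \int_{\Pspace} u\,v\,\dd X$ — are exactly what is needed, and the chain $(f \magW g , h) = \int (f\magW g)\magW h = \int f\magW(g\magW h) = (f, g\magW h)$ followed by the cyclic step is the natural route.
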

Starting from the above equality, we can define the Weyl product of a tempered distribution $F$ and a Schwartz function $f$. 
\begin{defn}[Extension of $\magW$\index{Weyl product!extension by duality} via duality]\label{magWQ:extension:defn:duality_extension_product}
	For $F \in \Schwartz'(\Pspace)$ and $g \in \Schwartz(\Pspace)$, we define the Weyl product $F \magW g$ by duality as 
	\begin{align*}
		\bigl ( F \magW g , h \bigr ) &:= \bigl ( F , g \magW h \bigr ) 
		\\
		\bigl ( g \magW F , h \bigr ) &:= \bigl ( F , h \magW g \bigr ) 
	\end{align*}
	for all $h \in \Schwartz(\Pspace)$. 
\end{defn}
\begin{prop}
	Defintion~\ref{magWQ:extension:defn:duality_extension_product} extends the magnetic Weyl product to the case where one factor is a tempered distribution and we get two continuous bilinear maps 
	\begin{align*}
		\magW &: \Schwartz'(\Pspace) \times \Schwartz(\Pspace) \longrightarrow \Schwartz'(\Pspace) \\
		\magW &: \Schwartz(\Pspace) \times \Schwartz'(\Pspace) \longrightarrow \Schwartz'(\Pspace) 
		. 
	\end{align*}
\end{prop}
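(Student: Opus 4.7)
My plan is to verify three things in order: (i) for each fixed $g \in \Schwartz(\Pspace)$ and $F \in \Schwartz'(\Pspace)$, the prescription $h \mapsto (F, g \magW h)$ really defines a tempered distribution; (ii) the resulting two bilinear maps are separately (and hence, by standard arguments on $\Schwartz' \times \Schwartz$, jointly) continuous; (iii) the extension is consistent with the original product when $F$ happens to lie in $\Schwartz(\Pspace)$.

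For (iii), consistency is immediate from the corollary preceding Definition~\ref{magWQ:extension:defn:duality_extension_product}: if $F = f \in \Schwartz(\Pspace)$, then $(f, g \magW h) = (f \magW g, h)$, so the distributional definition agrees with the one obtained from Theorem~\ref{magWQ:magnetic_weyl_calculus:thm:equivalence_product}. Bilinearity of both maps is an obvious consequence of the linearity of the pairing $(\cdot,\cdot)$ and the bilinearity of $\magW$ on $\Schwartz(\Pspace) \times \Schwartz(\Pspace)$, the latter being already established. So the substance of the proof lies in (i) and (ii), and both reduce to the same fact: the magnetic Weyl product is a continuous bilinear map $\magW : \Schwartz(\Pspace) \times \Schwartz(\Pspace) \longrightarrow \Schwartz(\Pspace)$.

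Granting this continuity, (i) follows because $h \mapsto g \magW h$ is continuous from $\Schwartz(\Pspace)$ to $\Schwartz(\Pspace)$, and $F$ is continuous on $\Schwartz(\Pspace)$, so their composition is a continuous linear functional on $\Schwartz(\Pspace)$, \ie an element of $\Schwartz'(\Pspace)$. For (ii), equip $\Schwartz'(\Pspace)$ with its strong dual topology. Separate continuity in the Schwartz slot: if $g_n \to g$ in $\Schwartz(\Pspace)$, then for every $h \in \Schwartz(\Pspace)$ we have $g_n \magW h \to g \magW h$ in $\Schwartz(\Pspace)$, hence $(F \magW g_n, h) = (F, g_n \magW h) \to (F, g \magW h) = (F \magW g, h)$, and a uniform bound on bounded subsets of $h$ follows from the continuity estimates for $\magW$. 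Separate continuity in the distribution slot is even simpler: if $F_n \to F$ weakly in $\Schwartz'$, then $(F_n \magW g, h) = (F_n, g \magW h) \to (F, g \magW h)$ for each $h$. The argument for $g \magW F$ is identical mutatis mutandis.

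The main obstacle, therefore, is establishing the continuity of $\magW : \Schwartz(\Pspace) \times \Schwartz(\Pspace) \longrightarrow \Schwartz(\Pspace)$, which Theorem~\ref{magWQ:magnetic_weyl_calculus:thm:equivalence_product} asserts only as a well-definedness statement. The cleanest way to obtain it is to exploit Proposition~\ref{magWQ:extension:prop:extension_Op_A}: $\Op^A : \Schwartz(\Pspace) \longrightarrow \mathcal{L} \bigl ( \Schwartz'(\R^d) , \Schwartz(\R^d) \bigr )$ is a topological isomorphism, operator composition on $\mathcal{L} \bigl ( \Schwartz'(\R^d) , \Schwartz(\R^d) \bigr )$ is continuous (when equipped with the topology of uniform convergence on bounded subsets), and $\magW = {\Op^A}^{-1} \circ (\text{composition}) \circ (\Op^A \times \Op^A)$ is therefore a composition of continuous maps. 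Alternatively, one can extract seminorm estimates directly from the integral formula~\eqref{magWQ:magnetic_weyl_calculus:eqn:Fourier_form_magnetic_composition} via oscillatory integral techniques along the lines of Appendix~\ref{appendix:oscillatory_integrals}, but the operator-theoretic route is much shorter and uses tools already at hand.
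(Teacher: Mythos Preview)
Your argument is correct and, in fact, considerably more detailed than the paper's own proof. The paper's proof is essentially a one-line sketch: it notes that associativity and compatibility with involution follow from direct computation and the $2$-cocycle property, and records that $1 \magW f = f = f \magW 1$. It does not explicitly address the continuity claim in the proposition at all, presumably treating it as routine.

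You, by contrast, correctly isolate the substance of the matter --- continuity of $\magW : \Schwartz(\Pspace) \times \Schwartz(\Pspace) \to \Schwartz(\Pspace)$ --- and supply a clean argument via the topological isomorphism of Proposition~\ref{magWQ:extension:prop:extension_Op_A}, transporting the problem to composition of operators in $\mathcal{L}(\Schwartz'(\R^d), \Schwartz(\R^d))$. This is the right idea and uses only tools already established. One small caveat: your parenthetical ``separately (and hence \ldots jointly) continuous'' is a bit quick --- the passage from separate to joint continuity for bilinear maps on $\Schwartz' \times \Schwartz$ relies on barrelledness of $\Schwartz$ to get hypocontinuity, not full joint continuity, but that is the standard meaning of ``continuous bilinear'' in this context anyway. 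The algebraic remarks in the paper's proof (associativity, unit) are useful ancillary facts but not part of what the proposition literally asserts; your focus on well-definedness, continuity, and consistency is the more faithful reading.
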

\begin{proof}
	Associativity and compatibility with the involution can be checked easily by direct computation and using the $2$-cocycle property of the exponential of the magnetic flux. Lemma~\ref{magWQ:extension:lem:magnetic_product_scalar_product} also yields $1 \magW f = f = f \magW 1$ for $1 \in \Schwartz'(\Pspace)$ and $f \in \Schwartz(\Pspace)$. 
\end{proof}
\begin{prop}\label{magWQ:extension:prop:Op_A_product}
	Assume the magnetic field is polynomially bounded. Then $\Op^A$ is an involutive linear continuous map $\Schwartz'(\Pspace) \longrightarrow \mathcal{L} \bigl ( \Schwartz(\R^d) , \Schwartz'(\R^d) \bigr )$ satisfying $\Op^A (F \magW g) = \Op^A(F) \, \Op^A(g)$ and $\Op^A (g \magW F) = \Op^A(g) \, \Op^A(F)$ for all $F \in \Schwartz'(\Pspace)$ and $g \in \Schwartz(\Pspace)$. 
\end{prop}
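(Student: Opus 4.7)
The plan is to leverage the extension machinery of Proposition~\ref{magWQ:extension:prop:extension_Op_A}, which already supplies the linear topological isomorphism $\Op^A : \Schwartz'(\Pspace) \longrightarrow \mathcal{L} \bigl ( \Schwartz(\R^d) , \Schwartz'(\R^d) \bigr )$, so that only involutivity and the two product rules remain. The unifying tool is the intrinsic pairing formula
\[
	\bscpro{v}{\Op^A(F) u} = (2\pi)^{-\nicefrac{d}{2}} \, \bigl ( F , \WignerTrafo^A(u,v) \bigr )
	,
	\qquad u,v \in \Schwartz(\R^d) ,
\]
obtained by extending Lemma~\ref{magWQ:magnetic_weyl_calculus:magnetic_wigner_transform:lem:qm_exp_val_phase_space_average} via duality; the right-hand side is the canonical $\Schwartz',\Schwartz$-pairing and is well defined because $\WignerTrafo^A(u,v) \in \Schwartz(\Pspace)$ by Lemma~\ref{magWQ:magnetic_weyl_calculus:lem:Wigner_transform}. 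Involutivity then reduces to the symmetry $\WignerTrafo^A(v,u)^{\ast} = \WignerTrafo^A(u,v)$, which is immediate from the explicit formula of that Lemma and transports to $\Schwartz'(\Pspace)$ by duality.

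The heart of the proof is the auxiliary identity
\[
	g \magW \WignerTrafo^A(u,v) = \WignerTrafo^A \bigl ( \Op^A(g) u , v \bigr )
	\quad \text{in } \Schwartz(\Pspace) ,
\]
valid for $g \in \Schwartz(\Pspace)$ and $u,v \in \Schwartz(\R^d)$. I would establish it by pairing both sides with an arbitrary $h \in \Schwartz(\Pspace)$: the corollary following Lemma~\ref{magWQ:extension:lem:magnetic_product_scalar_product} yields $\bigl ( h , g \magW \WignerTrafo^A(u,v) \bigr ) = \bigl ( h \magW g , \WignerTrafo^A(u,v) \bigr )$, and the right-hand side rewrites as $(2\pi)^{\nicefrac{d}{2}} \bscpro{v}{\Op^A(h) \, \Op^A(g) u} = \bigl ( h , \WignerTrafo^A(\Op^A(g) u , v) \bigr )$ by invoking Lemma~\ref{magWQ:magnetic_weyl_calculus:magnetic_wigner_transform:lem:qm_exp_val_phase_space_average} combined with the Schwartz-level product rule of Theorem~\ref{magWQ:magnetic_weyl_calculus:thm:equivalence_product}. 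Since both members of the purported equality lie in $\Schwartz(\Pspace)$ and $h$ is arbitrary, they coincide.

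With this identity in hand, the first product formula collapses to a short duality computation: starting from Definition~\ref{magWQ:extension:defn:duality_extension_product},
\[
	\bscpro{v}{\Op^A(F \magW g) u} = (2\pi)^{-\nicefrac{d}{2}} \bigl ( F , g \magW \WignerTrafo^A(u,v) \bigr ) = (2\pi)^{-\nicefrac{d}{2}} \bigl ( F , \WignerTrafo^A(\Op^A(g) u , v) \bigr ) ,
\]
which equals $\bscpro{v}{\Op^A(F) \, \Op^A(g) u}$ because $\Op^A(g) u \in \Schwartz(\R^d)$. The companion formula $\Op^A(g \magW F) = \Op^A(g) \, \Op^A(F)$ is entirely analogous once one notes that $\Op^A(g)$ extends to a continuous map $\Schwartz'(\R^d) \to \Schwartz(\R^d)$, so that the composition makes sense in $\mathcal{L} \bigl ( \Schwartz(\R^d) , \Schwartz'(\R^d) \bigr )$. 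The main obstacle is the auxiliary identity: morally it just transports the Schwartz-level product rule across the Wigner transform, but one must simultaneously juggle three distinct pairings together with the $2$-cocycle structure implicit in $\magW$; every other step is bookkeeping.
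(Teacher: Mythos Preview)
Your proof is correct and takes a genuinely different route from the paper. The paper dispatches the product rules in a single sentence, invoking ``a simple approximation argument'': one approximates $F \in \Schwartz'(\Pspace)$ weakly by Schwartz functions, applies the Schwartz-level product rule (Theorem~\ref{magWQ:magnetic_weyl_calculus:thm:equivalence_product}) term by term, and passes to the limit using the continuity of $\Op^A$ and of $\magW$ in the appropriate topologies. Your argument instead avoids approximation altogether by isolating the algebraic identity $g \magW \WignerTrafo^A(u,v) = \WignerTrafo^A(\Op^A(g) u , v)$ and pushing everything through the duality pairing $\bscpro{v}{\Op^A(F) u} = (2\pi)^{-\nicefrac{d}{2}} \bigl ( F , \WignerTrafo^A(u,v) \bigr )$. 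This is cleaner in that every step is an exact equality rather than a limit, and it makes transparent why the product rule survives the extension: it is encoded in how $\magW$ interacts with the Wigner transform, which is precisely the content of Lemma~\ref{magWQ:extension:lem:magnetic_product_scalar_product} and its corollary. The paper's approach is shorter to write down but leaves the reader to check that the relevant maps are jointly continuous in the right topologies; yours trades that for a slightly longer but fully explicit chain of identities. Both are standard, and your self-assessment that the only non-bookkeeping step is the auxiliary identity is accurate.
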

\begin{proof}
	In Proposition~\ref{magWQ:extension:prop:extension_Op_A}, it has been established that $\Op^A$ is a linear topological isomorphism between $\Schwartz'(\Pspace)$ and $\mathcal{L} \bigl ( \Schwartz(\R^d) , \Schwartz'(\R^d) \bigr )$. The involution is defined as usual via the scalar product (which is an antiduality) and we conclude $\Op^A(F)^* = \Op^A(F^*)$ for $F \in \Schwartz'(\Pspace)$. In other words, the adjoint in the sense of operators becomes complex conjugation. The equality involving the products follows from a simple approximation argument. 
\end{proof}
%


\subsection{The magnetic Moyal algebra} 
\label{magWQ:extension:magnetic_moyal_algebra}
The next step is to isolate a class of distributions with `good' composition properties. The ideas in \cite{Mantoiu_Purice:magnetic_Weyl_calculus:2004} stem from Gracia-Bondía and Várilly \cite{Gracia_Bondia_Varilly:distributions_phasespace_1:1988,Gracia_Bondia_Varilly:distributions_phasespace_2:1988}. In essence, we would like to be able to multiply two distributions. However, so far, we have only achieved to replace \emph{one} of the factors in 
\begin{align*}
	f_1 \magW \cdots \magW f_n && f_j \in \Schwartz(\Pspace), \; j = 1 , \ldots , n 
\end{align*}
by a tempered distribution. The distributions with good composition properties are in the 
\begin{defn}[Magnetic Moyal algebra $\mathcal{M}^B(\Pspace)$\index{Moyal algebra!magnetic}]\label{magWQ:extension:defn:mag_Moyal_algebra}
	The spaces of distributions 
	\begin{align*}
		\mathcal{M}^B_L(\Pspace) &:= \bigl \{ F \in \Schwartz'(\Pspace) \; \vert \; F \magW g \in \Schwartz(\Pspace) \; \forall g \in \Schwartz(\Pspace) \bigr \} 
		\\
		\mathcal{M}^B_R(\Pspace) &:= \bigl \{ F \in \Schwartz'(\Pspace) \; \vert \; g \magW F \in \Schwartz(\Pspace) \; \forall g \in \Schwartz(\Pspace) \bigr \} 
	\end{align*}
	are left and right magnetic Moyal algebra. Their intersection 
	\begin{align*}
		\mathcal{M}^B(\Pspace) := \mathcal{M}^B_L(\Pspace) \cap \mathcal{M}^B_R(\Pspace)
	\end{align*}
	is called the \emph{magnetic Moyal algebra}. 
\end{defn}
Left- and right Moyal algebra are related by involution ${}^*$. Later on, we will see that
\begin{align*}
	\Op^A \bigl ( \mathcal{M}^B_L(\Pspace) \bigr ) = \mathcal{L} \bigl ( \Schwartz(\R^d) \bigr ) \end{align*}
and
\begin{align*}
	\Op^A \bigl ( \mathcal{M}^B_R(\Pspace) \bigr ) = \mathcal{L} \bigl ( \Schwartz'(\R^d) \bigr ) = \mathcal{L} \bigl ( \Schwartz(\R^d) \bigr )^* 
	. 
\end{align*}
Elements of the magnetic Moyal algebra are elements of $\mathcal{L} \bigl ( \Schwartz(\R^d) \bigr )$ that can be continuously extended to $\mathcal{L} \bigl ( \Schwartz'(\R^d) \bigr )$. 

A quick verification of the claims verifies that $\mathcal{M}^B(\Pspace)$ really deserves to be called an algebra. The product of $F , G \in \mathcal{M}^B(\Pspace)$ is defined via duality, 
\begin{align}
	\bigl ( F \magW G , h \bigr ) := \bigl ( F , G \magW h \bigr ) 
	&& \forall h \in \Schwartz(\Pspace) 
	, 
	\label{magWQ:extension:eqn:extension_magnetic_product}
\end{align}
and the involution is the extension of complex conjugation to distributions. 
\begin{prop}
	The triple $\bigl ( \mathcal{M}^B(\Pspace) , \magW , {}^* \bigr )$ forms a unital $*$-algebra of tempered distributions that contains $\Schwartz(\Pspace)$ as a selfadjoint two-sided ideal. 
\end{prop}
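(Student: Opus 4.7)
The proof consists in unwinding the definitions and reducing every identity to the known case where at least one factor lies in $\Schwartz(\Pspace)$. The main task is twofold: first, extend $\magW$ in a well-defined way to $\mathcal{M}^B(\Pspace)\times\mathcal{M}^B(\Pspace)$ and check closure; second, verify the ideal property and the behavior of the involution.

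The plan is to start from the duality formula \eqref{magWQ:extension:eqn:extension_magnetic_product}: given $F,G\in\mathcal{M}^B(\Pspace)$ and $h\in\Schwartz(\Pspace)$, the right-hand side $(F, G\magW h)$ makes sense because $G\magW h\in\Schwartz(\Pspace)$ by $G\in\mathcal{M}^B_L(\Pspace)$, and the map $h\mapsto G\magW h$ is continuous on $\Schwartz(\Pspace)$ by Proposition~\ref{magWQ:extension:prop:Op_A_product} combined with the fact that $\Op^A$ is a topological isomorphism onto the appropriate space of continuous linear maps. Composition with the tempered distribution $F$ therefore defines $F\magW G\in\Schwartz'(\Pspace)$. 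To see that the result lies in $\mathcal{M}^B(\Pspace)$, I would use the associativity identity available on Schwartz: for $h\in\Schwartz(\Pspace)$,
\begin{align*}
(F\magW G)\magW h = F\magW(G\magW h), \qquad h\magW(F\magW G)=(h\magW F)\magW G.
\end{align*}
Since $G\magW h\in\Schwartz(\Pspace)$ (as $G\in\mathcal{M}^B_L$) and $F\in\mathcal{M}^B_L$, the first expression is in $\Schwartz(\Pspace)$, so $F\magW G\in\mathcal{M}^B_L(\Pspace)$; the mirror argument using $\mathcal{M}^B_R$ shows $F\magW G\in\mathcal{M}^B_R(\Pspace)$. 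Associativity of $\magW$ on $\mathcal{M}^B(\Pspace)$ is then obtained by testing against arbitrary $h\in\Schwartz(\Pspace)$ and reducing to the associativity identity already established when one (then two) of the factors is Schwartz, invoking Proposition~\ref{magWQ:extension:prop:Op_A_product} and the injectivity of $\Op^A$.

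For the involution, note that ${}^{\ast}$ extends from $\Schwartz(\Pspace)$ to $\Schwartz'(\Pspace)$ by the usual antiduality, and a direct computation shows $(f\magW g)^{\ast} = g^{\ast}\magW f^{\ast}$ on Schwartz functions (either by unwinding the integral formula in Theorem~\ref{magWQ:magnetic_weyl_calculus:thm:equivalence_product}, using that $\omega^B$ is a phase, or by translating through $\Op^A$ and using $\Op^A(h)^{\ast}=\Op^A(h^{\ast})$ from Proposition~\ref{magWQ:extension:prop:Op_A_product}). Given $F\in\mathcal{M}^B(\Pspace)$ and $g\in\Schwartz(\Pspace)$, I then write $F^{\ast}\magW g = (g^{\ast}\magW F)^{\ast}\in\Schwartz(\Pspace)^{\ast}=\Schwartz(\Pspace)$, and symmetrically for multiplication from the right, which shows that ${}^{\ast}$ preserves $\mathcal{M}^B(\Pspace)$. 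The anti-multiplicativity $(F\magW G)^{\ast}=G^{\ast}\magW F^{\ast}$ then follows by testing against $h\in\Schwartz(\Pspace)$ and reducing to the Schwartz case. The unit $1\in\Schwartz'(\Pspace)$ lies in $\mathcal{M}^B(\Pspace)$ by the identity $1\magW g=g=g\magW 1$ already noted after Definition~\ref{magWQ:extension:defn:duality_extension_product}.

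Finally, $\Schwartz(\Pspace)\subseteq\mathcal{M}^B(\Pspace)$ because the magnetic Weyl product of two Schwartz functions is again Schwartz (Theorem~\ref{magWQ:magnetic_weyl_calculus:thm:equivalence_product}), and it is a selfadjoint two-sided ideal exactly by the defining property of $\mathcal{M}^B_L(\Pspace)\cap\mathcal{M}^B_R(\Pspace)$: for $F\in\mathcal{M}^B(\Pspace)$ and $g\in\Schwartz(\Pspace)$, both $F\magW g$ and $g\magW F$ lie in $\Schwartz(\Pspace)$, and $\Schwartz(\Pspace)$ is clearly closed under ${}^{\ast}$. The step I expect to be the most delicate is verifying continuity of the duality bracket defining $F\magW G$, i.e.\ that $h\mapsto (F, G\magW h)$ is continuous on $\Schwartz(\Pspace)$; this reduces to the continuity of $h\mapsto G\magW h$ as a map $\Schwartz(\Pspace)\to\Schwartz(\Pspace)$, which in turn follows from the closed graph theorem applied to the already established linear map $\magW(G,\cdot):\Schwartz(\Pspace)\to\Schwartz(\Pspace)$ (its graph is closed because $\magW:\Schwartz(\Pspace)\times\Schwartz'(\Pspace)\to\Schwartz'(\Pspace)$ is separately continuous).
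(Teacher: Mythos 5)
The paper does not actually give a proof of this proposition — it is presented as a ``quick verification'' left to the reader — so there is no argument in the source to compare against. Your verification is correct and is the natural way to fill in the details: unwinding the duality definition $(F\magW G,h)=(F,G\magW h)$, reducing all identities (closure, associativity, anti-multiplicativity of ${}^*$, the ideal property) to the already-established Schwartz case by testing against $h,k\in\Schwartz(\Pspace)$, and invoking the closed graph theorem to get continuity of $h\mapsto G\magW h$ so that $F\magW G$ is a well-defined tempered distribution. This last step is exactly the device the paper itself uses in the proof of Proposition~\ref{magWQ:extension:magnetic_moyal_algebra:prop:mag_Moyal_algebra_L_S_L_Sprime}, so your argument is fully consistent with the paper's methods; the only small point worth tightening is that, in the associativity check, the reduction chain runs through $(G\magW H)\magW h=G\magW(H\magW h)$ and then $H\magW(h\magW k)=(H\magW h)\magW k$ before finally landing on the pure Schwartz identity, rather than being a one-shot invocation of $\Op^A$-injectivity as your prose might suggest.
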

Furthermore, the image of the Moyal algebra under $\Op^A$ has a concise characterization in terms of continuous operators. 
\begin{prop}\label{magWQ:extension:magnetic_moyal_algebra:prop:mag_Moyal_algebra_L_S_L_Sprime}
	$\Op^A : \mathcal{M}^B(\Pspace) \longrightarrow \mathcal{L} \bigl ( \Schwartz(\R^d) \bigr ) \cap \mathcal{L} \bigl ( \Schwartz'(\R^d) \bigr )$ is an isomorphism between $*$-algebras. 
\end{prop}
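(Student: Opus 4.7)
My plan is to exploit the isomorphism $\Op^A:\Schwartz'(\Pspace)\to\mathcal{L}\bigl(\Schwartz(\R^d),\Schwartz'(\R^d)\bigr)$ from Proposition~\ref{magWQ:extension:prop:extension_Op_A} and translate the defining conditions of $\mathcal{M}^B(\Pspace)$ into statements about which operators in the image additionally preserve $\Schwartz(\R^d)$ respectively $\Schwartz'(\R^d)$. The technical heart of the argument will be a factorization lemma: every Schwartz function $u$ should arise as $\Op^A(h)\,v$ for some $h\in\Schwartz(\Pspace)$ and $v\in\Schwartz(\R^d)$. Concretely, I would fix $v\in\Schwartz(\R^d)$ with $\snorm{v}_{L^2}=1$ and set $h:=\WignerTrafo^A(u,v)$, which by Lemma~\ref{magWQ:magnetic_weyl_calculus:lem:Wigner_transform} sits in $\Schwartz(\Pspace)$; the inversion formula of Lemma~\ref{magWQ:magnetic_weyl_calculus:lem:inverseWeylQ} should then identify $\Op^A(h)$ with the rank-one operator $(2\pi)^{-\nicefrac{d}{2}}\,\sopro{u}{v}$, yielding $u=(2\pi)^{\nicefrac{d}{2}}\,\Op^A(h)\,v$.

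For the inclusion $\Op^A\bigl(\mathcal{M}^B(\Pspace)\bigr)\subseteq\mathcal{L}(\Schwartz(\R^d))\cap\mathcal{L}(\Schwartz'(\R^d))$, given $F\in\mathcal{M}^B_L(\Pspace)$ and $u\in\Schwartz(\R^d)$, I would write $u$ as above and apply Proposition~\ref{magWQ:extension:prop:Op_A_product} to conclude
\[
\Op^A(F)\,u = (2\pi)^{\nicefrac{d}{2}}\,\Op^A(F\magW h)\,v,
\]
which lies in $\Schwartz(\R^d)$ because $F\magW h\in\Schwartz(\Pspace)$ by definition of $\mathcal{M}^B_L$. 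Continuity of the restriction $\Op^A(F):\Schwartz(\R^d)\to\Schwartz(\R^d)$ should then follow from the closed graph theorem for Fréchet spaces, since the graph is \emph{a fortiori} closed in the coarser target $\Schwartz'(\R^d)$ on which $\Op^A(F)$ is already continuous. For the right ideal $\mathcal{M}^B_R$ I would use $(f\magW g)^*=g^*\magW f^*$ to see that $F\in\mathcal{M}^B_R$ iff $F^*\in\mathcal{M}^B_L$, whence $\Op^A(F^*)=\Op^A(F)^*\in\mathcal{L}(\Schwartz(\R^d))$; transposition then produces the desired continuous extension of $\Op^A(F)$ to $\mathcal{L}(\Schwartz'(\R^d))$.

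For the converse, starting from $T\in\mathcal{L}(\Schwartz(\R^d))\cap\mathcal{L}(\Schwartz'(\R^d))$ I would take the unique $F\in\Schwartz'(\Pspace)$ with $T=\Op^A(F)$ given by Proposition~\ref{magWQ:extension:prop:extension_Op_A} and, for $g\in\Schwartz(\Pspace)$, observe that $T\circ\Op^A(g)$ factors continuously as $\Schwartz'(\R^d)\to\Schwartz(\R^d)\to\Schwartz(\R^d)$; it thus lies in $\mathcal{L}\bigl(\Schwartz'(\R^d),\Schwartz(\R^d)\bigr)=\Op^A(\Schwartz(\Pspace))$, and inverting $\Op^A$ gives $F\magW g\in\Schwartz(\Pspace)$, \ie $F\in\mathcal{M}^B_L$. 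The analogous composition $\Op^A(g)\circ T:\Schwartz'(\R^d)\to\Schwartz'(\R^d)\to\Schwartz(\R^d)$, which now genuinely uses the hypothesis $T\in\mathcal{L}(\Schwartz'(\R^d))$, would similarly place $F$ in $\mathcal{M}^B_R$.

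The algebra-homomorphism properties should come for free: $\Op^A$ is linear, the involution identity $\Op^A(F^*)=\Op^A(F)^*$ is recorded in Proposition~\ref{magWQ:extension:prop:Op_A_product}, and $\Op^A(F\magW G)=\Op^A(F)\,\Op^A(G)$ for $F,G\in\mathcal{M}^B$ follows by extending the mixed case $(\mathcal{M}^B,\Schwartz)$ already known from the same proposition through the duality definition~\eqref{magWQ:extension:eqn:extension_magnetic_product}. The only genuinely nontrivial step I anticipate is the rank-one factorization that feeds both inclusions; once it is verified, everything else reduces to routine duality and a single application of the closed graph theorem.
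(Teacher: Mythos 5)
Your proof is correct and follows essentially the same route as the paper's: both reduce the question to the algebraic characterizations of $\mathcal{M}^B_L$ and $\mathcal{M}^B_R$ via the factorization of Schwartz functions through smoothing operators, invoke the closed graph theorem to promote $\Op^A(F)\Schwartz(\R^d)\subseteq\Schwartz(\R^d)$ to continuity in $\mathcal{L}(\Schwartz(\R^d))$, and pass to $\mathcal{L}(\Schwartz'(\R^d))$ by duality and the involution. The only notable difference is that you make the factorization completely explicit through the rank-one operator $(2\pi)^{-\nicefrac{d}{2}}\sopro{u}{v}=\Op^A\bigl(\WignerTrafo^A(u,v)\bigr)$, whereas the paper just asserts the existence of $h\in\Schwartz(\Pspace)$ and $U\in\Schwartz'(\R^d)$ with $u=\Op^A(h)U$ — a small expository improvement, not a different argument.
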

\begin{proof}
	First of all, we can view $\mathcal{L} \bigl ( \Schwartz(\R^d) \bigr )$ as a subspace of $\mathcal{L} \bigl ( \Schwartz(\R^d) , \Schwartz'(\R^d) \bigr )$. The closed graph theorem ensures that any $T \in \mathcal{L} \bigl ( \Schwartz(\R^d) , \Schwartz'(\R^d) \bigr )$ with $T \Schwartz(\R^d) \subseteq \Schwartz(\R^d)$ is also continuous as a map in $\mathcal{L} \bigl ( \Schwartz(\R^d) \bigr )$. 
	
	Similarly, elements in $\mathcal{L} \bigl ( \Schwartz'(\R^d) \bigr )$ can be thought of as being composed of those $T \in \mathcal{L} \bigl ( \Schwartz(\R^d) , \Schwartz'(\R^d) \bigr )$ that admit a continuous extension to $\Schwartz'(\R^d)$. As the involution ${}^*$ is defined via the antiduality $\scpro{\cdot}{\cdot}$, once we write out the definition of the adjoint, we see that $\mathcal{L} \bigl ( \Schwartz(\R^d) \bigr )^* = \mathcal{L} \bigl ( \Schwartz'(\R^d) \bigr )$ as well as $\mathcal{L} \bigl ( \Schwartz'(\R^d) \bigr )^* = \mathcal{L} \bigl ( \Schwartz(\R^d) \bigr )$. Hence, $\mathcal{L} \bigl ( \Schwartz(\R^d) \bigr ) \cap \mathcal{L} \bigl ( \Schwartz'(\R^d) \bigr )$ is a $*$-algebra. 
	
	Now let $T \in \Op^A \bigl ( \mathcal{M}^B(\Pspace) \bigr )$. Then, by definition, we have 
	\begin{align*}
		&T \mathcal{L} \bigl ( \Schwartz'(\R^d) , \Schwartz(\R^d) \bigr ) \subseteq \mathcal{L} \bigl ( \Schwartz'(\R^d) , \Schwartz(\R^d) \bigr ) 
		, 
		\\
		&\mathcal{L} \bigl ( \Schwartz'(\R^d) , \Schwartz(\R^d) \bigr ) T \subseteq \mathcal{L} \bigl ( \Schwartz'(\R^d) , \Schwartz(\R^d) \bigr ) 
		, 
	\end{align*}
	where the latter is equivalent to 
	\begin{align*}
		T^* \mathcal{L} \bigl ( \Schwartz'(\R^d) , \Schwartz(\R^d) \bigr ) \subseteq \mathcal{L} \bigl ( \Schwartz'(\R^d) , \Schwartz(\R^d) \bigr ) 
		. 
	\end{align*}
	However, these implications are only satisfied if and only if $T \in \mathcal{L} \bigl ( \Schwartz(\R^d) \bigr )$. This again follows from the closed graph theorem and the fact that to each $u \in \Schwartz(\R^d)$, there exists a distribution $U \in \Schwartz'(\R^d)$ and a map $T \in \mathcal{L} \bigl ( \Schwartz'(\R^d) , \Schwartz(\R^d) \bigr )$ such that 
	\begin{align*}
		u = T U
		. 
	\end{align*}
	The relations involving $T^*$ completes the argument and we conclude $\Op^A \bigl ( \mathcal{M}^B(\Pspace) \bigr )$ coincides with $\mathcal{L} \bigl ( \Schwartz(\R^d) \bigr ) \cap \mathcal{L} \bigl ( \Schwartz'(\R^d) \bigr )$. 
	\medskip
	
	\noindent
	It remains to show that $\magW$ is the counterpart of operator multiplication on the level of the magnetic Moyal algebra. Let $F , G \in \mathcal{M}^B(\Pspace)$, $h \in \Schwartz(\Pspace)$ and $U \in \Schwartz'(\R^d)$. Then 
	\begin{align*}
		\Op^A (F \magW G) \, \bigl ( \Op^A(h) U \bigr ) &= \Op^A (F \magW G \magW h) U = \Op^A(F) \, \bigl ( \Op^A(G \magW h) U \bigr ) 
		\\
		&= \bigl ( \Op^A(F) \, \Op^A(G) \bigr ) \, \Op^A(h) U 
	\end{align*}
	by Proposition~\ref{magWQ:extension:prop:Op_A_product}. As any $u \in \Schwartz(\R^d)$ can be written as a product of the form $\Op^A(h) U$ with $h \in \Schwartz(\Pspace)$ and $U \in \Schwartz'(\R^d)$, the product $F \magW G$ maps $u \in \Schwartz(\R^d)$ onto another Schwartz function. This concludes the proof. 
\end{proof}
The next proposition gives an idea what kind of elements are contained in $\mathcal{M}^B(\Pspace)$: 
\begin{prop}
	For polynomially bounded magnetic fields, one has $\Schwartz'(\Pspace) \magW \Schwartz(\Pspace) \subsetneq \mathcal{M}^B_R(\Pspace)$ and $\Schwartz(\Pspace) \magW \Schwartz'(\Pspace) \subsetneq \mathcal{M}^B_L(\Pspace)$. 
\end{prop}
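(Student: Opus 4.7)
The argument splits cleanly into the inclusion and its strictness. For the inclusion $\Schwartz'(\Pspace) \magW \Schwartz(\Pspace) \subseteq \mathcal{M}^B_R(\Pspace)$, I would pass to the operator side and use the Schwartz kernel theorem. Let $H := F \magW g$ with $F \in \Schwartz'(\Pspace)$ and $g \in \Schwartz(\Pspace)$; I need to verify $h \magW H \in \Schwartz(\Pspace)$ for every $h \in \Schwartz(\Pspace)$. By associativity and Proposition~\ref{magWQ:extension:prop:Op_A_product} one has $\Op^A(h \magW H) = \Op^A(h) \, \Op^A(F) \, \Op^A(g)$. Since $g$ and $h$ are Schwartz functions, Remark~\ref{magWQ:magnetic_weyl_calculus:rem:unitarity_kernel_map} identifies their kernels $K^A_g, K^A_h$ as elements of $\Schwartz(\R^d \times \R^d)$, so both $\Op^A(g)$ and $\Op^A(h)$ extend to continuous maps $\Schwartz'(\R^d) \to \Schwartz(\R^d)$; meanwhile $\Op^A(F) \in \mathcal{L}\bigl(\Schwartz(\R^d), \Schwartz'(\R^d)\bigr)$ by Proposition~\ref{magWQ:extension:prop:extension_Op_A}. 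Chaining the mappings
$\Schwartz'(\R^d) \xrightarrow{\Op^A(g)} \Schwartz(\R^d) \xrightarrow{\Op^A(F)} \Schwartz'(\R^d) \xrightarrow{\Op^A(h)} \Schwartz(\R^d)$
yields a continuous operator $\Schwartz'(\R^d) \to \Schwartz(\R^d)$. By the Schwartz kernel theorem (Treves~51.6, as invoked in Chapter~\ref{magWQ:extension:extension_via_duality}) such operators have kernels in $\Schwartz(\R^d \times \R^d)$, and the inverse kernel map $\WignerTrafo^A$ places the corresponding symbol in $\Schwartz(\Pspace)$. Hence $h \magW H \in \Schwartz(\Pspace)$, i.e.\ $H \in \mathcal{M}^B_R(\Pspace)$.

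For strictness the natural witness is the unit $1 \in \mathcal{M}^B(\Pspace) \subseteq \mathcal{M}^B_R(\Pspace)$: I will show $1 \notin \Schwartz'(\Pspace) \magW \Schwartz(\Pspace)$ by contradiction. Suppose $1 = F \magW g$ with $F \in \Schwartz'(\Pspace)$, $g \in \Schwartz(\Pspace)$. Since $\Op^A(1) = \id$ (a direct computation from equation~\eqref{magWQ:magnetic_weyl_calculus:eqn:magnetic_Weyl_quantization} using $\Fs 1 = (2\pi)^d \delta$), taking the formal $L^2$-adjoint and using $\Op^A(k)^* = \Op^A(k^*)$ gives $1 = 1^* = g^* \magW F^*$, so
\begin{align*}
	\Op^A(g^*) \, \Op^A(F^*) = \id
\end{align*}
wherever the composition is defined. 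The key observation is that $g^* \in \Schwartz(\Pspace)$ means $\Op^A(g^*)$ has a Schwartz integral kernel, so it extends to a continuous map $\Op^A(g^*) \colon \Schwartz'(\R^d) \to \Schwartz(\R^d)$; in particular its total range satisfies $\Op^A(g^*)\bigl(\Schwartz'(\R^d)\bigr) \subseteq \Schwartz(\R^d)$. Applied to arbitrary $u \in L^2(\R^d) \subset \Schwartz'(\R^d)$, the identity $u = \Op^A(g^*) \bigl( \Op^A(F^*) u \bigr)$ would then force $u \in \Schwartz(\R^d)$, yielding the absurd inclusion $L^2(\R^d) \subseteq \Schwartz(\R^d)$.

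The delicate point is interpreting $\Op^A(F^*) u$ for $u \in L^2(\R^d) \setminus \Schwartz(\R^d)$, since $\Op^A(F^*)$ is a priori only continuous on $\Schwartz(\R^d)$. The cleanest way I see is an approximation argument: pick $u_n \in \Schwartz(\R^d)$ with $u_n \to u$ in $L^2$; then $u_n = \Op^A(g^*) \bigl( \Op^A(F^*) u_n \bigr)$ lies in $\Op^A(g^*)\bigl(\Schwartz'(\R^d)\bigr) \subseteq \Schwartz(\R^d)$ for each $n$, and since $\Op^A(g^*)$ is Hilbert--Schmidt (hence bounded and compact on $L^2$), both sides converge in $L^2$-norm, placing the range identity on $L^2$ rigorously. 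Equivalently, and perhaps more elegantly, one can observe that $\Op^A(g^*) \, \Op^A(F^*) = \id$ would make the compact operator $\Op^A(g^*)$ surjective on $L^2(\R^d)$, which is impossible on an infinite-dimensional Hilbert space by the open mapping theorem. Either way, the main obstacle in turning this sketch into a full proof is precisely this passage from the distributional identity on $\Schwartz$ to the operator identity on $L^2$; once that is handled, the contradiction with the compactness/range constraint of $\Op^A(g^*)$ is immediate. The left-hand statement $\Schwartz(\Pspace) \magW \Schwartz'(\Pspace) \subsetneq \mathcal{M}^B_L(\Pspace)$ follows by the symmetric argument (or by applying the involution $^*$ which swaps $\mathcal{M}^B_L$ and $\mathcal{M}^B_R$).
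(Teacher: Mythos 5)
Your inclusion argument is correct and, modulo packaging, coincides with the paper's: both pass to the operator side via Proposition~\ref{magWQ:extension:prop:extension_Op_A} and the isomorphism $\Op^A(\mathcal{M}^B_R(\Pspace)) = \mathcal{L}\bigl(\Schwartz'(\R^d)\bigr)$, and observe that $\Op^A(F)\,\Op^A(g)$ factors as $\Schwartz'(\R^d) \to \Schwartz(\R^d) \to \Schwartz'(\R^d)$, hence lies in $\mathcal{L}\bigl(\Schwartz'(\R^d)\bigr)$. The paper does this in a single line; your detour through testing against each $h$ and reinvoking the kernel theorem is correct but redundant.

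The strictness is where your argument breaks, and the gap you flag yourself is genuine. After taking adjoints you only have $\Op^A(g^*)\,\Op^A(F^*) = \id_{\Schwartz(\R^d)}$, an identity on $\Schwartz(\R^d)$. Your approximation repair fails because $u_n \to u$ in $L^2$ gives no control whatsoever on $\Op^A(F^*)u_n$: the $u_n$ cannot converge in $\Schwartz(\R^d)$ (since $u \notin \Schwartz(\R^d)$), so the $\Schwartz$-to-$\Schwartz'$ continuity of $\Op^A(F^*)$ is useless, and nothing guarantees the right-hand side converges. Your compactness repair fails for a parallel reason: the identity on $\Schwartz$ only gives $\Op^A(g^*)\bigl(\Schwartz'(\R^d)\bigr) = \Schwartz(\R^d)$, which is perfectly compatible with $\Op^A(g^*)\vert_{L^2}$ being compact and having small range in $L^2$ — surjectivity of the $\Schwartz'\!\to\!\Schwartz$ map does not imply surjectivity of its $L^2\!\to\!L^2$ restriction. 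In fact the adjoint step sent you the wrong way: working directly with $\Op^A(F)\,\Op^A(g) = \id_{\Schwartz'(\R^d)}$ (both sides are continuous on $\Schwartz'$ and agree on the dense subspace $\Schwartz$), one reads off that $\Op^A(F) : \Schwartz(\R^d) \to \Schwartz'(\R^d)$ is \emph{surjective}, since every $w \in \Schwartz'(\R^d)$ equals $\Op^A(F)\bigl(\Op^A(g)w\bigr)$ with $\Op^A(g)w \in \Schwartz(\R^d)$. That \emph{is} impossible: a continuous linear surjection from the Fréchet space $\Schwartz(\R^d)$ onto $\Schwartz'(\R^d)$ would, via the open mapping theorem for maps from Fréchet spaces to (LF)-spaces, realize $\Schwartz'(\R^d)$ as a topological quotient of a metrizable space and hence metrizable, which the strong dual topology on $\Schwartz'(\R^d)$ is not. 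For what it is worth, the paper itself just asserts the strict inclusion $\mathcal{L}(\Schwartz,\Schwartz')\cdot\mathcal{L}(\Schwartz',\Schwartz) \subsetneq \mathcal{L}(\Schwartz')$ without argument, so your instinct that this step needs justification was right — but the contradiction has to be extracted from surjectivity of $\Op^A(F)$, not from compactness of $\Op^A(g^*)$.
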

\begin{proof}
	This follows from the previous Proposition and Proposition~\ref{magWQ:extension:prop:extension_Op_A}: 
	\begin{align*}
		\Op^A \bigl ( \Schwartz'(\Pspace) \magW \Schwartz(\Pspace) \bigr ) &= \Op^A \bigl ( \Schwartz'(\R^d) \bigr ) \, \Op^A \bigl ( \Schwartz(\R^d) \bigr ) 
		\\
		&= \mathcal{L} \bigl ( \Schwartz(\R^d) , \Schwartz'(\R^d) \bigr ) \, \mathcal{L} \bigl ( \Schwartz'(\R^d) , \Schwartz(\R^d) \bigr ) 
		\\
		&\subsetneq \mathcal{L} \bigl ( \Schwartz'(\R^d) \bigr ) = \Op^A \bigl ( \mathcal{M}^B_R(\Pspace) \bigr ) 
	\end{align*}
	Thus, $\Schwartz'(\Pspace) \magW \Schwartz(\Pspace) \subsetneq \mathcal{M}^B_R(\Pspace)$. The other claim is proven in the same manner. 
\end{proof}
%


\subsection{Important subclasses} 
\label{magWQ:extension:important_subclasses}
%
In this subsection, we quote results from \cite{Mantoiu_Purice:magnetic_Weyl_calculus:2004} and \cite{Iftimie_Mantiou_Purice:magnetic_psido:2006} regarding important classes of functions which are contained in the magnetic Moyal algebra. For proofs, we refer to the original publications. 
\begin{defn}[Uniformly polynomially bounded functions\index{uniformally polynomially bounded functions $\Cont^{\infty}_{\mathrm{pol \, u}}(\Pspace)$}]
	The space $\Cont^{\infty}_{\mathrm{pol \, u}}(\Pspace)$ consists of smooth functions with uniform polynomial growth at infinity, \ie for each $f \in \Cont^{\infty}_{\mathrm{pol} \, u}(\Pspace)$ we can find $m \in \R$, $m \geq 0$, such that for all multiindices $a,\alpha \in \N_0^d$ there is a $C_{a \alpha} > 0$ with 
	\begin{align*}
		\babs{\partial_{\xi}^{\alpha} \partial_x^{a} f(x,\xi)} < C_{a \alpha} \expval{x}^m \expval{\xi}^m
		, 
		&& 
		\forall (x,\xi) \in \Pspace
		. 
	\end{align*}
\end{defn}
\begin{thm}[\cite{Mantoiu_Purice:magnetic_Weyl_calculus:2004}]\label{magWQ:extension:important_subclasses:uniformly_poly_bounded}
	For polynomially bounded magnetic fields, $\Cont^{\infty}_{\mathrm{pol \, u}}(\Pspace)$ functions are in the magnetic Moyal algebra $\mathcal{M}^B(\Pspace)$. 
\end{thm}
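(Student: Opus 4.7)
The plan is to verify that any $f\in\Cont^{\infty}_{\mathrm{pol\,u}}(\Pspace)$ satisfies both $f\magW g\in\Schwartz(\Pspace)$ and $g\magW f\in\Schwartz(\Pspace)$ for every $g\in\Schwartz(\Pspace)$. By the identity $g\magW f=(f^{\ast}\magW g^{\ast})^{\ast}$ and the fact that $\Cont^{\infty}_{\mathrm{pol\,u}}(\Pspace)$ and $\Schwartz(\Pspace)$ are both stable under complex conjugation, it suffices to handle the left product $f\magW g$. The natural starting point is the integral representation
\begin{align*}
	(f\magW g)(X)=\frac{1}{\pi^{2d}}\int_{\Pspace}\!\dd\tilde Y\int_{\Pspace}\!\dd\tilde Z\,e^{-i2\sigma(\tilde Y-X,\tilde Z-X)}\,\omega^B(x-\tilde y+\tilde z,-x+\tilde y+\tilde z,x+\tilde y+\tilde z)\,f(\tilde Y)\,g(\tilde Z),
\end{align*}
from Theorem~\ref{magWQ:magnetic_weyl_calculus:thm:equivalence_product}, interpreted as an oscillatory integral. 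After the change of variables $Y=\tilde Y-X$, $Z=\tilde Z-X$, the $X$ dependence is moved into the arguments of $\omega^B$, $f$, and $g$, and the phase becomes $-2\sigma(Y,Z)$, which is a non-degenerate quadratic form in $(Y,Z)$ with associated differential operators acting in $Y$ and $Z$ separately.

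The first step is to show $f\magW g$ is a well-defined smooth function. Since $g\in\Schwartz(\Pspace)$, its translates $g(X+\,\cdot\,)$ are uniformly Schwartz in $X$; since $f\in\Cont^{\infty}_{\mathrm{pol\,u}}$, $f(X+\,\cdot\,)$ has polynomial growth bounds uniform in $X$; and since the components of $B$ are $\Cpol$, the magnetic phase $\omega^B$ is a smooth function of all arguments whose derivatives are polynomially bounded (by the transversal gauge construction and the fact that $\omega^B$ is a unimodular complex number, only its derivatives carry polynomial factors). Standard oscillatory integral techniques (see Appendix~\ref{appendix:oscillatory_integrals}) apply: one uses the differential operator
\begin{align*}
	L:=\frac{1-\tfrac{1}{2i}\bigl(\nabla_Z\sigma(Y,Z)\cdot\nabla_Y+\nabla_Y\sigma(Y,Z)\cdot\nabla_Z\bigr)}{1+\abs{Y}^2+\abs{Z}^2}
\end{align*}
(or any equivalent version) satisfying $L^{N}e^{-2i\sigma(Y,Z)}=e^{-2i\sigma(Y,Z)}$, and integrates by parts sufficiently many times. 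Each application of $L^{\ast}$ produces an extra factor of $(1+\abs Y^2+\abs Z^2)^{-1}$ and differentiates $\omega^B\cdot f(X+\,\cdot\,)\cdot g(X+\,\cdot\,)$. Because $g$ provides Schwartz decay in $Z$ and dominates any polynomial coming from derivatives of $\omega^B$ and $f$, after enough integrations by parts the integral converges absolutely and depends smoothly on $X$; differentiating under the integral in $X$ is justified by the same estimates.

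The second and decisive step is to upgrade pointwise finiteness to Schwartz decay in $X$. The trick is to exploit translation: derivatives $\partial_{X}^{\alpha}(f\magW g)$ fall on $\omega^B$, $f(X+Y)$, and $g(X+Z)$, producing at worst a polynomial factor in $X,Y,Z$ times another oscillatory integral of the same structure, which is again uniformly bounded by the integration-by-parts argument. To obtain decay in $X$, one inserts the identity $1=\langle X\rangle^{-2N}\langle X\rangle^{2N}$ and absorbs the factor $\langle X\rangle^{2N}$ into the integrand by rewriting it through the phase: in the variables $(Y,Z)$ the integrand is Schwartz in $Z$ for each fixed $X,Y$, and one can use operators of the form $(1-\Delta_{\tilde\zeta})^{N}$ acting on $e^{-2i\sigma(Y,Z)}$ which, after integration by parts, produce weights $\langle Z-X\rangle^{-2N}$; combined with the polynomial bounds on $f$ one converts $\langle X\rangle^{2N}$ into factors absorbed by the Schwartz decay of $g(X+Z)$. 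Iterating this for all $N$ and all multi-indices $\alpha$ yields $(f\magW g)\in\Schwartz(\Pspace)$.

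The main obstacle is bookkeeping the polynomial growth contributions coming simultaneously from derivatives of the magnetic phase $\omega^B$ (whose derivatives involve the polynomially bounded components of $B$ and of any transversal-gauge vector potential), from derivatives of $f\in\Cont^{\infty}_{\mathrm{pol\,u}}$, and from the weights produced by the integration-by-parts operator $L$. The key observation that makes the scheme work is that the order of growth in $(Y,Z)$ produced by these three sources is finite and uniform in $X$, whereas $g$ contributes arbitrary Schwartz decay in $Z$, so enough integrations by parts in $(Y,Z)$ always render the integral absolutely convergent and arbitrarily small at infinity in $X$.
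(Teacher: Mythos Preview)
The paper itself does not give a self-contained proof of this statement; it refers to \cite{Mantoiu_Purice:magnetic_Weyl_calculus:2004} and only records the hint that the argument there passes through auxiliary Fréchet spaces $R^m_0$ whose seminorms are of $L^1$ rather than $L^\infty$ type, with the uniformly polynomially bounded symbols sandwiched between $R^m_0$ and $R^{m+2N+\eps}_0$. Your direct oscillatory-integral approach is therefore genuinely different in strategy, and the reduction to left products via $(g\magW f)=(f^\ast\magW g^\ast)^\ast$ is perfectly valid.

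There is, however, a concrete problem in your ``decisive second step''. After the shift $Y=\tilde Y-X$, $Z=\tilde Z-X$ the phase is $e^{-2i\sigma(Y,Z)}=e^{-2i(\eta\cdot z-y\cdot\zeta)}$, so $(1-\Delta_{\zeta})^N$ acting on it produces a factor $(1+4|y|^2)^N$, i.e.\ weights in $y$, not the claimed $\langle Z-X\rangle^{-2N}$. The honest mechanism for generating $X$-decay is rather to write $x_j=(x_j+z_j)-z_j=\tilde z_j-z_j$ and $\xi_j=\tilde\zeta_j-\zeta_j$, then convert $z_j$ into $\tfrac{i}{2}\partial_{\eta_j}$ (which only hits $f$, harmless) and $\zeta_j$ into $-\tfrac{i}{2}\partial_{y_j}$ (which hits both $f$ and $\omega^B$). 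The first conversion is benign, but the second is where the real difficulty sits: since the magnetic field here is only \emph{polynomially} bounded, each $\partial_{y_j}$ landing on $\omega^B$ produces a factor that grows polynomially not just in $y,z$ but also in the external variable $x$ (the triangle is centred near $x$ and the flux involves $B$ along it). Your final paragraph acknowledges ``bookkeeping'' but asserts that the growth is ``finite and uniform in $X$''; for polynomially bounded $B$ this is exactly what fails, and one has to argue more carefully that these newly created $x$-powers can themselves be re-absorbed by iterating the first conversion, and that the iteration closes. As written, the Schwartz decay in $X$ is not established.

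In short: your route can be made to work, but the step you flag as the ``trick'' is mis-stated, and the genuinely delicate interaction between the $\Cpol$ growth of $\partial\omega^B$ and the desired $X$-decay is precisely what the $L^1$-based framework in \cite{Mantoiu_Purice:magnetic_Weyl_calculus:2004} is designed to finesse.
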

\begin{proof}
	Since the proof is rather technical, we refer to \cite{Mantoiu_Purice:magnetic_Weyl_calculus:2004} for details. One uses a family of Fréchet spaces $\{ R^m_0 \}_{m \in \R}$ whose family of seminorms is defined in terms of $L^1$ rather than $L^{\infty}$ norms. The space of functions $\tilde{S}^m_0$ which is uniformly bounded by a polynomial of $m$th degree in $x$ and $\xi$ is sandwiched between $R^m_0$ and $R^{m + 2N + \eps}_0$. 
\end{proof}
In case $\delta > 0$, Hörmander class symbols $\Hoerrd{m}$ may not be uniformly polynomially bounded and something remains to be proven. We refer to~\cite{Iftimie_Mantiou_Purice:magnetic_psido:2006} for details. 
\begin{thm}[\cite{Iftimie_Mantiou_Purice:magnetic_psido:2006}]\label{magWQ:extension:important_subclasses:Hoermander}
	For bounded magnetic fields $B$ and $0 \leq \delta < \rho \leq 1$ or $\delta = 0 = \rho$, $\Hoerrd{m} \subset \mathcal{M}^B(\Pspace)$. 
\end{thm}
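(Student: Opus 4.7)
Here is the plan. The goal is to show that for $f\in\Hoerrd{m}$ and any $g\in\Schwartz(\Pspace)$, both $f\magW g$ and $g\magW f$ lie in $\Schwartz(\Pspace)$; the second is reduced to the first by the involution $f\magW g = (g^*\magW f^*)^*$ together with the fact that $\Hoerrd{m}$ is stable under complex conjugation. The starting point is the second integral representation from Theorem~\ref{magWQ:magnetic_weyl_calculus:thm:equivalence_product}. After the affine change of variables $\tilde Y=X+Y$, $\tilde Z=X+Z$ one obtains
\begin{align*}
  (f\magW g)(X) \;=\; \frac{1}{\pi^{2d}}\int_{\Pspace}\int_{\Pspace} e^{-i2\sigma(Y,Z)}\, \Omega^B_X(Y,Z)\, f(X+Y)\, g(X+Z)\,\dd Y\,\dd Z,
\end{align*}
where $\Omega^B_X(Y,Z):=\omega^B(x-y+z,\,-x+y+z,\,x+y+z)$ is a unimodular magnetic phase. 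The integrand is not absolutely integrable because $f(X+Y)$ only has polynomial control in $Y$, so the expression has to be interpreted as an oscillatory integral (cf.\ Appendix~\ref{appendix:oscillatory_integrals}). The plan is then standard: multiply by a Schwartz cutoff $\chi_\eps(Y,Z)\to 1$, integrate by parts to make the integrand absolutely convergent uniformly in $\eps$, and pass to the limit.

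For the integration by parts, use the two dual differential operators
\begin{align*}
  L_Y:=\frac{1-\Delta_{\zeta}-\Delta_z}{1+4|Y|^{2}},\qquad L_Z:=\frac{1-\Delta_{\eta}-\Delta_y}{1+4|Z|^{2}},
\end{align*}
which satisfy $L_Y^{*N_1}L_Z^{*N_2}e^{-i2\sigma(Y,Z)} = e^{-i2\sigma(Y,Z)}$; applying $L_Y^{N_1}L_Z^{N_2}$ to the rest of the integrand pulls out the crucial denominators $\langle Y\rangle^{-2N_1}\langle Z\rangle^{-2N_2}$ at the cost of producing derivatives of $\Omega^B_X$, of $f(X+Y)$ and of $g(X+Z)$. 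Because $g\in\Schwartz$, derivatives of $g$ remain Schwartz. Because $f\in\Hoerrd{m}$, its derivatives are controlled by the seminorms $\snorm{f}_{m,a\alpha}$; the assumption $\delta<\rho$ (or $\rho=0=\delta$) means that each additional $\partial_Y$ gives a gain or does not destroy the gain we already have. The magnetic phase $\Omega^B_X$ is written via Stokes as an exponential of an iterated line integral of $B$; since $B\in\BCont^\infty(\R^d)$, each derivative in $X$, $Y$, $Z$ produces uniformly bounded factors times polynomial growth in $|Y|,|Z|$, and two integrations by parts always suffice to absorb one such polynomial factor.

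To extract Schwartz decay in $X$, apply $X^\beta\partial_X^\alpha$ to the integral: the $X$-derivatives are distributed by Leibniz between $f(X+Y)$, $g(X+Z)$ and $\Omega^B_X$, all of which stay in the same class of estimates. The $X$-growth is converted into $Z$-growth by the Peetre inequality $\langle X\rangle^{N}\le C_N\langle X+Z\rangle^{N}\langle Z\rangle^{N}$ applied to the $g(X+Z)$ factor: the factor $\langle X+Z\rangle^{N}g(X+Z)$ together with its derivatives is bounded uniformly in $X$ and Schwartz in $Z$, while the factor $\langle Z\rangle^{N}$ is easily absorbed by choosing $N_2$ large enough in the integration by parts. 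The surviving $f(X+Y)$ and its derivatives are, after enough $L_Y$'s, absolutely integrable against $\langle Y\rangle^{m+\delta|\alpha|-2N_1}\langle Y\rangle^{\text{polynomial from }\Omega^B_X}$, which is finite for $N_1$ large enough because $\rho\le 1$ only enters through the allowed order of growth of $f$ and not through the control of the $L_Y$ gain. Passing $\eps\downarrow 0$ by dominated convergence yields the desired bound, which shows that every Schwartz seminorm of $f\magW g$ is finite.

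The main obstacle is the careful bookkeeping of the derivatives of $\Omega^B_X$. These are not products of the coordinates $Y,Z$ with bounded functions but, through the triangle flux, involve iterated integrals along the three edges. The cleanest way is to parametrize the triangle $\triangle(x-y+z,\,-x+y+z,\,x+y+z)$ by two affine parameters $s,t\in[0,1]$, expand $\Gamma^B(\triangle)$ as a double integral of the tensor $B_{ij}$ evaluated along the parametrization, and differentiate under the integral sign; each derivative then produces at most one power of $|Y|$ or $|Z|$ multiplied by a uniformly bounded derivative of $B$, which is exactly what the integration-by-parts budget is designed to absorb. Once these estimates are in place, the statement follows, and the fact that the argument uses only the Schwartz seminorms of $g$ and the Hörmander seminorms $\snorm{f}_{m,a\alpha}$ also shows that $\magW$ is a separately continuous bilinear map $\Hoerrd{m}\times\Schwartz(\Pspace)\to\Schwartz(\Pspace)$, which will be useful later.
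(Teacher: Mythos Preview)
The paper does not give its own proof of this theorem; it simply refers to \cite{Iftimie_Mantiou_Purice:magnetic_psido:2006}. The closest it comes to the kind of argument you propose is Lemma~\ref{appendix:asymptotics:existence_osc_int:lem:remainder}, and comparing with that lemma exposes a genuine gap in your integration-by-parts scheme.

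The problem is your choice of combined operators $L_Y=(1-\Delta_\zeta-\Delta_z)/(1+4|Y|^2)$ and $L_Z=(1-\Delta_\eta-\Delta_y)/(1+4|Z|^2)$. The operator $L_Y$ contains $\partial_z$, and $\partial_z$ hits the magnetic phase $\Omega^B_X(y,z)$; by Corollary~\ref{appendix:asymptotics:cor:properties_flux} each such derivative produces a factor bounded by $\langle y\rangle+\langle z\rangle$ (the flux derivative $\partial_{z_j}\gamma^B$ is genuinely linear in \emph{both} $y$ and $z$, cf.\ Lemma~\ref{appendix:asymptotics:properties_mag_flux:lem:boundedness_mag_flux}). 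In the worst Leibniz term all $2N_1$ $z$-derivatives fall on $\Omega^B_X$, producing growth $(\langle y\rangle+\langle z\rangle)^{2N_1}$; the analogous worst term from $L_Z^{N_2}$ adds $(\langle y\rangle+\langle z\rangle)^{2N_2}$. The decay you gain, $\langle Y\rangle^{-2N_1}$, can be split at best as $\langle y\rangle^{-p}\langle\eta\rangle^{-(2N_1-p)}$ with $p\le 2N_1$, so in the $y$-variable you are left with $\langle y\rangle^{2N_1+2N_2-p}\ge\langle y\rangle^{2N_2}$, which is not integrable for any $N_2\ge 0$. Your claim that ``two integrations by parts always suffice to absorb one such polynomial factor'' is precisely where this breaks: each additional application of $L_Y$ can itself create a new $\langle y\rangle$-factor via $\partial_z\Omega^B_X$, so the budget never closes. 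Note that $g(X+Z)$ carries no $y$-dependence and cannot rescue this integral.

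The fix, which is the method of Lemma~\ref{appendix:asymptotics:existence_osc_int:lem:remainder}, is to decouple position and momentum derivatives: use four separate operators $\frac{1-\tfrac14\Delta_\zeta}{\langle y\rangle^2}$, $\frac{1-\tfrac14\Delta_\eta}{\langle z\rangle^2}$, $\frac{1-\tfrac14\Delta_z}{\langle\eta\rangle^2}$, $\frac{1-\tfrac14\Delta_y}{\langle\zeta\rangle^2}$. The first two involve only $\partial_\zeta,\partial_\eta$, which do \emph{not} touch $\Omega^B_X$ (no momentum dependence), so they produce arbitrary $\langle y\rangle^{-2N_1}$, $\langle z\rangle^{-2N_2}$ decay at zero cost in $\Omega$-growth. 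The last two do hit $\Omega^B_X$, but one needs only a \emph{fixed} number $K_1,K_2$ of them (determined by $d$ and the symbol order of $f$) to make the $\eta,\zeta$ integrals converge; the resulting growth $(\langle y\rangle+\langle z\rangle)^{2K_1+2K_2}$ is then absorbed by choosing $N_1,N_2\gg K_1+K_2$. The circularity is broken because $K_1,K_2$ are chosen first, independently of $N_1,N_2$. With this correction the remainder of your outline --- Leibniz for $X$-derivatives, the Peetre trick $\langle X\rangle^N\le C\langle X+Z\rangle^N\langle Z\rangle^N$ to feed $X$-growth into the Schwartz factor $g(X+Z)$, and the reduction of $g\magW f$ to $f\magW g$ via the involution --- goes through.
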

\begin{remark}
	This theorem does not ensure that the product of two Hörmander symbols of order $m_1$ and $m_2$ is again a symbol of order $m_1 + m_2$. 
\end{remark}
\begin{remark}
	There are many operators which are \emph{not} in the magnetic Moyal algebra, \eg the rank-one operator $\sopro{u}{u}$ for $u \in L^2(\R^d) \setminus \Schwartz(\R^d)$. 
\end{remark}
For practical applications, the next theorem on the composition of Hörmander symbols is essential. 
\begin{thm}[\cite{Iftimie_Mantiou_Purice:magnetic_psido:2006}]
	Assume the components of the magnetic field $B$ are of class $\BCont^{\infty}$. Then $\Hoerrd{m_1} \magW \Hoerrd{m_2} \subset \Hoerrd{m_1 + m_2}$ holds. 
\end{thm}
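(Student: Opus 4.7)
Fix $f \in \Hoerrd{m_1}$, $g \in \Hoerrd{m_2}$. The goal is bounds of the form $\babs{\partial_x^a \partial_\xi^\alpha (f \magW g)(x,\xi)} \leq C_{a\alpha} \expval{\xi}^{m_1 + m_2 - \rho\abs{\alpha} + \delta\abs{a}}$. I would start from the Fourier representation~\eqref{magWQ:magnetic_weyl_calculus:eqn:Fourier_form_magnetic_composition}. Undoing the symplectic Fourier transforms of $f$ and $g$ and eliminating two of the four integrations leads to a representation of the form
\begin{align*}
  (f \magW g)(x,\xi) = \frac{1}{(2\pi)^{2d}} \int_{\R^{2d}} \int_{\R^{2d}} e^{- i (y \cdot \zeta - z \cdot \eta)} \, \Omega^B(x;y,z) \, f \bigl ( x + \tfrac{z}{2} , \xi + \eta \bigr ) \, g \bigl ( x - \tfrac{y}{2} , \xi + \zeta \bigr ) \, \dd y \, \dd z \, \dd \eta \, \dd \zeta ,
\end{align*}
where $\Omega^B(x;y,z) = e^{-i \Gamma^B(\sexpval{x - \frac{1}{2}(y+z), x + \frac{1}{2}(y-z), x + \frac{1}{2}(y+z)})}$ is the magnetic phase from Theorem~\ref{magWQ:magnetic_weyl_calculus:thm:equivalence_product}. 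The advantage is that the $\xi$-dependence now sits entirely inside $f$ and $g$, while the non-trivial $B$-dependence is isolated in $\Omega^B$.

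\textbf{Regularization and integration by parts.} The integral is only oscillatory, not absolutely convergent, so interpret it by introducing the standard regularizers
\begin{align*}
  L_\eta := \frac{1 - \eta \cdot \nabla_z}{1 + \abs{\eta}^2} , \qquad L_y := \frac{1 - y \cdot \nabla_\zeta}{1 + \abs{y}^2} ,
\end{align*}
and their transposes (and the analogous ones swapping $(\eta,z) \leftrightarrow (\zeta,y)$). Each leaves the phase $y \cdot \zeta - z \cdot \eta$ invariant and, after sufficiently many applications, turns the integrand into an absolutely integrable expression. I would then differentiate $\partial_x^a \partial_\xi^\alpha$ under the integral using Leibniz, distributing the derivatives among $f$, $g$, and $\Omega^B$.

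\textbf{Bookkeeping and the magnetic factor.} The $\xi$-derivatives on $f$ and $g$ yield factors $\expval{\xi+\eta}^{m_1 - \rho\abs{\alpha_1} + \delta\abs{a_1}}$ and $\expval{\xi+\zeta}^{m_2 - \rho\abs{\alpha_2} + \delta\abs{a_2}}$ by the Hörmander hypothesis. The Peetre inequality transfers these to $\expval{\xi}^{\,\cdot\,}$ at the cost of a polynomial in $\expval{\eta}, \expval{\zeta}$, which is absorbed by choosing the number of integrations by parts large enough. The crucial point is the magnetic factor: since the components of $B$ lie in $\BCont^\infty(\R^d)$, every $x$-derivative of $\Gamma^B$ is bounded uniformly in $(x,y,z)$ by $C_a \, \abs{y} \, \abs{z}$ (the flux is through a triangle of area $\order(\abs{y}\abs{z})$), while $y$- and $z$-derivatives grow at most linearly in $\abs{z}$ respectively $\abs{y}$. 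After exponentiation, this gives for $\Omega^B$ bounds of the form $\babs{\partial_x^a \partial_y^b \partial_z^c \Omega^B(x;y,z)} \leq P_{abc}(y,z)$ with polynomial $P_{abc}$, all such polynomial growth being likewise absorbed by further integrations by parts.

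\textbf{Main obstacle.} The delicate point is the combinatorial/analytic tracking of exponents when choosing the number of integrations by parts: one must simultaneously (i) produce enough decay in $\eta,\zeta$ to render everything absolutely convergent, (ii) compensate for the loss incurred when passing from $\expval{\xi+\eta}^s$ to $\expval{\xi}^s \expval{\eta}^{\abs{s}}$, and (iii) absorb the polynomial growth of the derivatives of $\Omega^B$ in $(y,z)$ without compromising the $\delta$-part of the target estimate. Once this is set up carefully, dominated convergence yields the claimed bound with the correct Hörmander exponents, giving $f \magW g \in \Hoerrd{m_1 + m_2}$; smoothness follows from the same estimates applied with arbitrary $a, \alpha$, and the $\BCont^\infty$ assumption on $B$ is used precisely to keep the seminorms of $\Omega^B$ finite.
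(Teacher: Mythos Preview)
Your approach is essentially the same as the paper's. The paper does not give a self-contained proof of this statement in the main text (it cites Iftimie--M\u{a}ntoiu--Purice for the general case), but it proves the case $\delta = 0$ independently via Theorem~\ref{asymptotics:thm:equivalenceProduct}, which in turn appeals to the oscillatory-integral Lemma~\ref{appendix:asymptotics:existence_osc_int:lem:remainder} in the appendix. That lemma does precisely what you outline: rewrite the product so that $f$ and $g$ appear evaluated at shifted arguments with all $\xi$-dependence inside them, insert powers of $\sexpval{y}^{-2}$, $\sexpval{z}^{-2}$, $\sexpval{\eta}^{-2}$, $\sexpval{\zeta}^{-2}$ via repeated application of $L$-operators (the paper uses the second-order $L_y = 1 - \Delta_y$ rather than your first-order variant, a cosmetic difference), apply the Peetre-type inequality $\sexpval{\xi-\eta}^m \leq \sexpval{\xi}^m \sexpval{\eta}^{\abs{m}}$, and absorb the polynomial growth of derivatives of the magnetic phase (Corollary~\ref{appendix:asymptotics:cor:properties_flux}) by choosing the number of integrations by parts large enough. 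Your identification of the bookkeeping of exponents as the main obstacle matches the content of that appendix proof.

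One small imprecision: your claim that $x$-derivatives of the flux $\Gamma^B$ are bounded by $C_a \abs{y}\abs{z}$ is not quite the right form---the paper's Lemma~\ref{appendix:asymptotics:properties_mag_flux:lem:boundedness_mag_flux} shows that each first-order derivative is a linear form in $(y,z)$ with $\BCont^\infty$ coefficients, and higher derivatives iterate this---but the conclusion you draw for $\Omega^B$ (polynomial bounds in $(y,z)$ uniformly in $x$) is correct and is exactly what is used.
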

We will give an independent proof in Chapter~\ref{asymptotics:expansions} for the case $\delta = 0$. 



\section{Important results} 
\label{magWQ:important_results}
Most of the standard results of usual Weyl calculus have been transcribed to the magnetic context, most of which can be found in \cite{Iftimie_Mantiou_Purice:magnetic_psido:2006} and \cite{Iftimie_Mantoiu_Purice:commutator_criteria:2008}. For convenience of the reader, we quickly present some of them here. The properties of the operators usually only depend on properties of the magnetic field rather than the vector potential which is highly arbitrary.

\subsection{$L^2$-continuity and selfadjointness} 
\label{magWQ:important_results:continuity_and_selfadjointness}
Properties of certain pseudodifferential operators can be related to properties of the functions. One of the most basic theorems of this sort is the Caldéron-Vaillancourt theorem which ensures the boundedness of the quantizations of $\BCont^{\infty}(\Pspace)$ functions (among others). 
\begin{thm}[Magnetic Caldéron-Vaillancourt theorem\index{Caldéron-Vaillancourt theorem!magnetic} \cite{Iftimie_Mantiou_Purice:magnetic_psido:2006}]\label{magWQ:important_results:continuity_and_selfadjointness:thm:magnetic_Calderon_Vaillancourt}
	Assume $B$ is a bounded magnetic field and $A$ is a polynomially bounded vector potential. Then the magnetic quantization of $f \in \Hoermrd{0}{\rho}{\delta}$, $0 \leq \rho = \delta < 1$ or $0 \leq \delta < \rho \leq 1$, is bounded, $\Op^A(f) \in \BopL$ and the operator norm can be bounded by 
	\begin{align*}
		\bnorm{\Op^A(f)}_{\mathcal{B}(L^2(\R^d))} \leq C(d) \, \sup_{\abs{a} , \abs{\alpha} \leq p(d)} \sup_{(x,\xi) \in \Pspace} \expval{\xi}^{\rho \abs{\alpha} - \delta \abs{a}} \, \babs{\partial_x^a \partial_{\xi}^{\alpha}f(x,\xi)} 
	\end{align*}
	where $C(d)$ and $p(d)$ are constants that only depend on the dimension $d$ and can be determined explicitly. 
\end{thm}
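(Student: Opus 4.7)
My plan is to follow the classical Cotlar-Stein strategy, adapting the oscillatory-integral estimates so that the magnetic phase factors do not spoil integration by parts. Since the bound only involves finitely many Hörmander seminorms, and since $\Schwartz(\Pspace)$ is dense in $\Hoermrd{0}{\rho}{\delta}$ for the topology induced by these seminorms (with uniform control on lower-order approximants), I would first prove the estimate for $f \in \Schwartz(\Pspace)$ and then extend by density/approximation, using Proposition~\ref{magWQ:extension:prop:Op_A_product} and the fact that $\Op^A(f)$ is \emph{a priori} a continuous map $\Schwartz(\R^d) \to \Schwartz(\R^d)$ for Hörmander symbols by Theorem~\ref{magWQ:extension:important_subclasses:Hoermander}.

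The first main step is a dyadic (or lattice) phase-space decomposition. Pick $\chi \in \Cont^{\infty}_c(\Pspace)$ with $\sum_{k \in \Z^{2d}} \chi( \cdot - X_k) \equiv 1$ for some lattice $\{X_k\} \subset \Pspace$, and set $f_k := \chi(\cdot - X_k) \, f$. Each $f_k$ is smooth, compactly supported in a translate of a fixed ball, and its derivatives are controlled by the Hörmander seminorms $\bnorm{f}_{0, a\alpha}$ times $\expval{\xi_k}^{-\rho \abs{\alpha} + \delta \abs{a}}$. Set $T_k := \Op^A(f_k)$. By Lemma~\ref{magWQ:standard_weyl_calculus:weyl_quantization:lem:action_Op} (magnetic version), each $T_k$ is bounded with a uniform $\order(1)$ bound, but summing these bounds naively is hopeless; this is precisely where Cotlar--Stein intervenes.

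The core step is the off-diagonal decay estimate for $T_j^* T_k = \Op^A(f_j^{\,\magW} \magW f_k)$ and $T_j T_k^* = \Op^A(f_j \magW f_k^{\,\magW})$. Using the product formula~\eqref{magWQ:magnetic_weyl_calculus:eqn:Fourier_form_magnetic_composition}, the symbol $f_j^{\,\magW} \magW f_k$ is an oscillatory integral whose phase is $\sigma(X,Y+Z) + \tfrac{1}{2}\sigma(Y,Z)$ modulated by the magnetic factor $\omega^B$. The gradient of the phase with respect to the integration variables is non-stationary when $X_j$ and $X_k$ are far apart, so one performs repeated integration by parts with the vector field adapted to that gradient. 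The magnetic twist $\omega^B(q;y,z) = e^{-i\Gamma^B(\langle q, q+y, q+y+z\rangle)}$ contributes only through its $q$-derivatives, and because components of $B$ lie in $\BCont^{\infty}(\R^d)$ by assumption, $\partial^a_q \Gamma^B$ is polynomially bounded in $y,z$ and uniformly bounded in $q$; hence the differential operator used for integration by parts remains well-behaved. The outcome is a bound of the shape
\begin{align*}
	\bnorm{T_j^* T_k}_{\mathcal{B}(L^2)} + \bnorm{T_j T_k^*}_{\mathcal{B}(L^2)} \leq C_N \, \expval{X_j - X_k}^{-N} \, \sup_{\abs{a},\abs{\alpha} \leq p(d)} \bnorm{f}_{0,a\alpha}^2
\end{align*}
for every $N$, with $p(d)$ determined by the number of integrations by parts needed in dimension $d$.

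With these off-diagonal estimates in hand, the Cotlar--Stein lemma immediately yields boundedness of $\sum_k T_k = \Op^A(f)$ with norm controlled by a finite seminorm sum of $f$, which is precisely the stated bound. The main obstacle I anticipate is the bookkeeping in the integration by parts: one has to choose the non-stationary phase vector field carefully so that each integration by parts produces the claimed $\expval{X_j - X_k}^{-1}$ gain \emph{uniformly} in $(j,k)$, and one has to verify that the contributions coming from the $q$-derivatives of $\omega^B$ never cancel this gain. The rest of the argument (density extension, identification of $p(d)$, verification that $\rho = \delta < 1$ or $\delta < \rho$ suffices so that the $\expval{\xi_k}^{-\rho\abs{\alpha}+\delta\abs{a}}$ factors remain summable for sufficiently large $\abs{\alpha}$) is standard.
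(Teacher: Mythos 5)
Your plan is the standard Cotlar--Stein strategy and is sound in principle for the $S^0_{0,0}$ case, but as written it has a genuine gap for the case $\rho = \delta > 0$, which is precisely where the Calder\'on--Vaillancourt theorem is hard. The issue is the fixed uniform lattice $\{X_k\} \subset \Pspace$ with pieces supported in translates of a fixed ball: for $f \in \Hoermrd{0}{\rho}{\rho}$ the derivatives satisfy $\babs{\partial_x^a \partial_{\xi}^{\alpha} f} \leq C_{a\alpha} \sexpval{\xi}^{\rho(\abs{a} - \abs{\alpha})}$, so the $x$-derivatives of $f_k$ \emph{grow} like $\sexpval{\xi_k}^{\rho\abs{a}}$. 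In the almost-orthogonality estimate for $T_j^* T_k$ with $X_j, X_k$ in the same dyadic $\xi$-annulus (say $\abs{\xi_j},\abs{\xi_k} \approx 2^n$) but adjacent in $x$, each integration by parts in the $y$-variable gains a factor $\abs{\xi_j - \xi_k}^{-1}$ but costs an $x$-derivative on the symbol, producing an extra $\sexpval{\xi_k}^{\rho} \approx 2^{n\rho}$; the net effect per integration by parts is $(\abs{\xi_j - \xi_k}/2^{n\rho})^{-1}$, which does \emph{not} decay when $\abs{\xi_j - \xi_k}$ is of order one, and in fact the Cotlar sum $\sum_k \sqrt{\snorm{T_j^* T_k}}$ cannot be bounded uniformly in $j$ as $n \to \infty$. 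The remark at the end of your proposal (``verification that $\rho = \delta < 1$ \ldots is standard'') glosses over exactly this: the factors $\sexpval{\xi_k}^{-\rho\abs{\alpha} + \delta\abs{a}}$ are \emph{not} summable when $\abs{a} > \abs{\alpha}$ and $\rho = \delta > 0$, they diverge.

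To repair this you would need a metric-adapted phase-space decomposition: first Littlewood--Paley in $\xi$ into dyadic shells $\sexpval{\xi} \approx 2^n$, then within each shell a lattice in $x$ with spacing $\sim 2^{-n\delta}$ (and correspondingly in $\xi$ with spacing $\sim 2^{n\rho}$), so that the pieces all become, after rescaling, $S^0_{0,0}$-type symbols with uniformly bounded seminorms. This is essentially the Coifman--Meyer reduction. Alternatively, a kernel-estimate route \`a la Hwang (bound the Schwartz kernel of $\OpA(f)$ by a Gaussian with parameters read off from the seminorms, then apply Schur's test) sidesteps Cotlar--Stein entirely, and this is closer in spirit to what \cite{Iftimie_Mantiou_Purice:magnetic_psido:2006} actually does --- the paper cites that reference and supplies no proof of its own. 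One minor further point: when you invoke a reduction for $\delta < \rho$, the correct inclusion is $\Hoermrd{0}{\rho}{\delta} \subset \Hoermrd{0}{\delta}{\delta}$ (not $\Hoermrd{0}{0}{0}$, which is false for $\delta > 0$), so the $\rho = \delta$ case with $\delta$ arbitrarily close to $1$ must still be handled. Your treatment of the magnetic twist $\omega^B$ is fine given $B \in \BCont^{\infty}$, but the polynomial growth in $y,z$ of $\partial_q^a \Gamma^B_{\eps}$ (see Corollary~\ref{appendix:asymptotics:cor:properties_flux}) does need to be tracked against the integration-by-parts gains; with the metric-adapted decomposition this bookkeeping closes, but with the fixed lattice it compounds the problem above.
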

An impportant property of symbols is ellipticity which characterizes the behavior of the functions in momentum at infinity. 
\begin{defn}[Elliptic symbol\index{elliptic symbol}]\label{magWQ:important_results:continuity_and_selfadjointness:defn:elliptic_symbol}
	A symbol $f \in \Hoerrd{m}$ is called elliptic if there exist two positive constants $R$ and $C$ such that 
	\begin{align*}
		C \expval{\xi}^m \leq \abs{f(x,\xi)} 
		&& \forall x \in \R^d, \; \abs{\xi} \geq R 
		. 
	\end{align*}
	We also call the associated operator $\Op^A(f)$ elliptic. 
\end{defn}
Natural domains for quantizations of elliptic symbols are (magnetic) Sobolev spaces: 
\begin{defn}[Magnetic Sobolev space $H^m_A(\R^d)$\index{magnetic Sobolev space $H^m_A(\R^d)$}]\label{magWQ:important_results:continuity_and_selfadjointness:defn:magnetic_Sobolev_space}
	For $m > 0$, we define the magnetic Sobolev space associated to the magnetic field $B$ and vector potential $A$ to be 
	\begin{align*}
		H^m_A(\R^d) := \Bigl \{ u \in L^2(\R^d) \; \big \vert \; \Op^A(\expval{\xi}^m) u \in L^2(\R^d) \Bigr \} 
		. 
	\end{align*}
	The associated scalar product on $H^m_A(\R^d)$ is defined as 
	\begin{align*}
		\bscpro{u}{v}_{H^m_A} := \bscpro{u}{v}_{L^2} + \bscpro{\Op^A(\expval{\xi}^m) u}{\Op^A(\expval{\xi}^m) v}_{L^2} 
		. 
	\end{align*}
	We define $H^{-m}_A(\R^d)$ as the anti-dual to $H^m_A(\R^d)$ with norm 
	\begin{align*}
		\snorm{u}_{H^{-m}_A} := \sup_{v \in H^m_A(\R^d) \setminus \{ 0 \}} \frac{\abs{\scpro{v}{u}}}{\norm{v}_{H^m_A}}
		. 
	\end{align*}
	The scalar product is obtained by polarization. We also define $H^{\infty}_A(\R^d) := \bigcap_{m \in \R} H^m_A(\R^d)$ and $H^{-\infty}_A(\R^d) := \bigcup_{m \in \R} H^m_A(\R^d)$ endowed with the projective limit and inductive limit topology, respectively. 
\end{defn}
\begin{remark}
	One can show that this definition is in fact equivalent to the usual one (which can be found in \cite{Lieb_Loss:analysis:2001}, for instance). 
\end{remark}
This allows one to characterize magnetic pseudodifferential operators that are not necessarily bounded in the $L^2(\R^d)$ sense. 
\begin{prop}[Boundedness of pseudodifferential operators]\label{magWQ:important_results:continuity_and_selfadjointness:prop:boundedness_psido}
	If $B$ is a bounded magnetic field, then for any $f \in \Hoerrd{m}$, $m \geq 0$, $0 \leq \delta < \rho \leq 1$, $\Op^A(f) : H^m_A(\R^d) \longrightarrow L^2(\R^d)$ is bounded. More generally, if $m \leq s$, then $\Op^A(f)$ defines a bounded operator from $H^s_A(\R^d)$ to $H^{s-m}_A(\R^d)$. 
\end{prop}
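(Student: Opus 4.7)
The plan is to reduce the statement to the magnetic Calderón-Vaillancourt theorem (Theorem~\ref{magWQ:important_results:continuity_and_selfadjointness:thm:magnetic_Calderon_Vaillancourt}) via the composition law $\Hoerrd{m_1} \magW \Hoerrd{m_2} \subset \Hoerrd{m_1+m_2}$ quoted at the end of Chapter~\ref{magWQ:extension}. The key auxiliary object is a magnetic Moyal parametrix of $\expval{\xi}^m$, which is elliptic in the sense of Definition~\ref{magWQ:important_results:continuity_and_selfadjointness:defn:elliptic_symbol}.

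First I would treat the case $s=m\ge 0$ mapping into $L^2=H^0_A(\R^d)$. By ellipticity of $\expval{\xi}^m \in \Hoerrd{m}$, a standard inductive construction using the asymptotic expansion of $\magW$ from Chapter~\ref{asymptotics:expansions} and a Borel summation produces a symbol $r_m \in \Hoerrd{-m}$ with $r_m \magW \expval{\xi}^m = 1 + s_m$, where $s_m \in \Schwartz(\Pspace)$ is a regularizing remainder. Then for $f \in \Hoerrd{m}$ the composition theorem yields $f \magW r_m \in \Hoerrd{0}$ and $f \magW s_m \in \Schwartz(\Pspace)$, so that on $\Schwartz(\R^d)$
\begin{align*}
	\Op^A(f) = \Op^A(f \magW r_m) \, \Op^A(\expval{\xi}^m) - \Op^A(f \magW s_m).
\end{align*}
The magnetic Calderón-Vaillancourt theorem gives $\Op^A(f \magW r_m) \in \mathcal{B}\bigl(L^2(\R^d)\bigr)$, while $\Op^A(\expval{\xi}^m) : H^m_A(\R^d) \to L^2(\R^d)$ is bounded by the very definition of the magnetic Sobolev norm. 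The remainder $\Op^A(f \magW s_m)$ is a regularizing operator and is in particular bounded from $L^2(\R^d) \supset H^m_A(\R^d)$ to $L^2(\R^d)$. Combining these estimates
\begin{align*}
	\bnorm{\Op^A(f) u}_{L^2} \le \bnorm{\Op^A(f\magW r_m)}_{\mathcal{B}(L^2)} \bnorm{\Op^A(\expval{\xi}^m) u}_{L^2} + \bnorm{\Op^A(f\magW s_m)}_{\mathcal{B}(L^2)} \snorm{u}_{L^2} \le C \, \snorm{u}_{H^m_A}
\end{align*}
on the dense subspace $\Schwartz(\R^d)$, and the bound extends by continuity.

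For the general case $m \le s$ I would sandwich $\Op^A(f)$ between magnetic Bessel potentials. Using parametrices $r_s \in \Hoerrd{-s}$ for $\expval{\xi}^s$ and $r_{s-m} \in \Hoerrd{m-s}$ for $\expval{\xi}^{s-m}$, the composition theorem gives $r_{s-m} \magW f \magW r_s \in \Hoerrd{(m-s)+m+(-s)+2s-2m} = \Hoerrd{0}$ after correctly accounting for orders; more carefully, one writes
\begin{align*}
	\Op^A(\expval{\xi}^{s-m}) \, \Op^A(f) \, \Op^A(r_s) = \Op^A\bigl(\expval{\xi}^{s-m} \magW f \magW r_s\bigr) + R,
\end{align*}
where the principal symbol is now in $\Hoerrd{0}$ and $R$ is regularizing. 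By Calderón-Vaillancourt the right-hand side is $L^2$-bounded. Using that $\Op^A(r_s)$ is a bounded left inverse of $\Op^A(\expval{\xi}^s)$ modulo smoothing operators and that $\Op^A(\expval{\xi}^{s-m})$ realizes an isomorphism $L^2(\R^d) \to H^{s-m}_A(\R^d)$ (either by definition for $s-m \le 0$, or by duality with the case already proved), one concludes that $\Op^A(f)$ maps $H^s_A(\R^d)$ continuously into $H^{s-m}_A(\R^d)$. The case $s-m<0$ is most cleanly handled by duality: the adjoint of $\Op^A(f) : H^s_A \to H^{s-m}_A$ is $\Op^A(f^*) : H^{m-s}_A \to H^{-s}_A$, which falls into the previous regime after relabeling.

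The main technical obstacle is not the boundedness argument itself, which is just bookkeeping with Calderón-Vaillancourt and the composition theorem, but rather establishing existence of the magnetic Moyal parametrix $r_m \in \Hoerrd{-m}$ for the elliptic symbol $\expval{\xi}^m$. This requires the full two-parameter asymptotic expansion of $\magW$ for $\BCont^\infty$ magnetic fields developed in Chapter~\ref{asymptotics:expansions} together with a Borel-type summation to promote the formal inverse series into a genuine Hörmander symbol. Once this ingredient is in place, as shown in detail in \cite{Iftimie_Mantiou_Purice:magnetic_psido:2006}, the rest of the argument is routine.
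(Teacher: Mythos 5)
The paper does not prove this proposition itself --- it is quoted, together with the other results in this subsection, from \cite{Iftimie_Mantiou_Purice:magnetic_psido:2006}, so there is no in-paper argument to compare against; your parametrix strategy is precisely the one that reference employs, and the plan is sound. A few remarks. The identity $\Op^A(g)\,\Op^A(h) = \Op^A(g \magW h)$ is exact (that is how $\magW$ is defined), so the spurious ``$+R$'' in your displayed formula for $\Op^A(\expval{\xi}^{s-m})\,\Op^A(f)\,\Op^A(r_s)$ should not appear. Your initial order count $r_{s-m} \magW f \magW r_s \in \Hoerrd{2(m-s)}$ is a slip which you then catch yourself, and catching it is exactly the reason the correct sandwiching keeps $\expval{\xi}^{s-m}$ itself on the left instead of its parametrix. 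For the general case $m \le s$ with $s-m \ge 0$ there is a cleaner route that avoids a second parametrix entirely: unwinding Definition~\ref{magWQ:important_results:continuity_and_selfadjointness:defn:magnetic_Sobolev_space} gives
\begin{align*}
	\bnorm{\Op^A(f)u}_{H^{s-m}_A}^2 = \bnorm{\Op^A(f)u}_{L^2}^2 + \bnorm{\Op^A\bigl(\expval{\xi}^{s-m} \magW f\bigr) u}_{L^2}^2 ,
\end{align*}
and since $f \in \Hoerrd{m} \subseteq \Hoerrd{s}$ and $\expval{\xi}^{s-m} \magW f \in \Hoerrd{s}$, both terms are controlled by $\norm{u}_{H^s_A}^2$ via the base case $\Op^A : H^s_A(\R^d) \to L^2(\R^d)$ already established; the regime $s - m < 0$ is then handled by duality, as you indicate. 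Finally, the need for a Borel-summed parametrix of $\expval{\xi}^m$ can be sidestepped altogether by invoking the exact Moyal inverse $\mathfrak{p}_{m,\lambda}^{(-1)_B} \in \Hoermr{-m}{1}$ of $\mathfrak{p}_{m,\lambda} = \expval{\xi}^m + \lambda$ (for $\lambda$ large enough), which the paper already quotes in Chapter~\ref{magWQ:important_results:commutator_criteria}: one then has $\Op^A(f) = \Op^A\bigl(f \magW \mathfrak{p}_{m,\lambda}^{(-1)_B}\bigr)\bigl(\Op^A(\expval{\xi}^m) + \lambda\bigr)$ with $f \magW \mathfrak{p}_{m,\lambda}^{(-1)_B} \in \Hoerrd{0}$, and the base case follows directly from Theorem~\ref{magWQ:important_results:continuity_and_selfadjointness:thm:magnetic_Calderon_Vaillancourt} with no smoothing remainder to dispose of.
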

If in addition to being elliptic, the symbol is real-valued, then the associated operator will be selfadjoint. 
\begin{thm}[Selfadjointness of elliptic symbols]\label{magWQ:important_results:continuity_and_selfadjointness:thm:selfadjointness_elliptic_symbols}
	Assume $B$ is a bounded magnetic field and $f \in \Hoerrd{m}$, $0 \leq \delta < \rho \leq 1$ and $m \geq 0$, a real-valued symbol. If $m > 0$, in addition, we assume $f$ to be elliptic. Then $\Op^A(f)$ defines a selfadjoint operator on the domain $\mathcal{D} = H^m_A(\R^d)$ and $\Schwartz(\R^d)$ is a core. 
\end{thm}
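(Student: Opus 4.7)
The strategy is to handle the two cases $m = 0$ and $m > 0$ separately. When $m = 0$ the magnetic Caldéron–Vaillancourt theorem (Theorem~\ref{magWQ:important_results:continuity_and_selfadjointness:thm:magnetic_Calderon_Vaillancourt}) bounds $\Op^A(f)$ on $L^2(\R^d) = H^0_A(\R^d)$, and the identity $\Op^A(f)^* = \Op^A(f^*) = \Op^A(f)$ from Proposition~\ref{magWQ:extension:prop:Op_A_product}, combined with $f^* = f$, makes the operator bounded and selfadjoint. For $m > 0$ the task reduces to proving essential selfadjointness of $\Op^A(f)$ on $\Schwartz(\R^d)$ and then identifying its closure with the operator on $H^m_A(\R^d)$.

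Symmetry on $\Schwartz(\R^d)$ is immediate from the same identity, so by the basic von~Neumann criterion it suffices to verify that $\ker\!\bigl( (\Op^A(f)|_{\Schwartz(\R^d)})^* \mp i \bigr) = \{ 0 \}$. Any element $u$ of this kernel satisfies $\Op^A(f) u = \pm i u$ in $\Schwartz'(\R^d)$, and the key step is a regularity statement asserting that such a $u$ already belongs to $H^m_A(\R^d)$. Granted that, density of $\Schwartz(\R^d)$ in $H^m_A(\R^d)$ together with continuity of $\Op^A(f) \colon H^m_A(\R^d) \to L^2(\R^d)$ (Proposition~\ref{magWQ:important_results:continuity_and_selfadjointness:prop:boundedness_psido}) allow symmetry to be transported to the limit, giving $\pm i \, \snorm{u}_{L^2}^2 = \bscpro{u}{\Op^A(f) u} = \bscpro{\Op^A(f) u}{u} = \mp i \, \snorm{u}_{L^2}^2$ and hence $u = 0$.

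The regularity statement rests on a left parametrix. Because $f$ is real and elliptic, the asymptotic expansion of $\magW$ developed in Chapter~\ref{asymptotics:expansions} produces $g \in \Hoerrd{-m}$ and $r \in \bigcap_{k} \Hoerrd{-k}$ with $g \magW f = 1 + r$: one starts from a smoothed reciprocal $g_0$ agreeing with $1/f$ for $\abs{\xi}$ large, observes that $g_0 \magW f - 1 \in \Hoerrd{-(\rho - \delta)}$ from the leading correction of the expansion, and iterates this correction scheme, closing it up by a Borel-type summation inside the Hörmander classes. Applying $\Op^A(g)$ to $(\Op^A(f) \mp i) u = 0$ yields $u = \mp i \, \Op^A(g) u - \Op^A(r) u$; since $g$ has order $-m$ one gains $m$ derivatives via Proposition~\ref{magWQ:important_results:continuity_and_selfadjointness:prop:boundedness_psido}, while $\Op^A(r)$ is smoothing, so that $u \in H^m_A(\R^d)$ (and a bootstrap would even land $u$ in $H^{\infty}_A(\R^d)$).

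Finally, $\Op^A(f) \colon H^m_A(\R^d) \to L^2(\R^d)$ is bounded and hence closed, while $\Schwartz(\R^d)$ is dense in $H^m_A(\R^d)$ in the $H^m_A$-norm, which dominates the graph norm; so $\Schwartz(\R^d)$ is a core, and the closure of the essentially selfadjoint operator on $\Schwartz(\R^d)$ is exactly $\Op^A(f)\!$ on $H^m_A(\R^d)$. The main obstacle is the parametrix construction: it relies both on the magnetic asymptotic expansion of $\magW$ developed in the next chapter and on checking that the iterative symbol scheme can be summed inside the Hörmander classes $\Hoerrd{\cdot}$ in a way compatible with the magnetic composition. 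Once this technology is in place, the rest follows the standard elliptic pseudodifferential template.
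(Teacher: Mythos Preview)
The paper does not prove this theorem: it is one of the results quoted without proof in Section~\ref{magWQ:important_results} from \cite{Iftimie_Mantiou_Purice:magnetic_psido:2006}. So there is no in-paper argument to compare against.

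Your approach is the standard elliptic-parametrix route and is essentially what one finds in the cited reference. The overall architecture is correct: symmetry on $\Schwartz(\R^d)$, von~Neumann deficiency criterion, elliptic regularity via a left parametrix $g \magW f = 1 + r$ with $r$ smoothing, and identification of the domain. Two points deserve slightly more care. First, the density of $\Schwartz(\R^d)$ in $H^m_A(\R^d)$ is used but not justified; it is true but not entirely trivial in the magnetic setting and should be cited or checked. Second, in the last paragraph you claim that boundedness of $\Op^A(f) \colon H^m_A(\R^d) \to L^2(\R^d)$ makes it closed as an unbounded operator on $L^2(\R^d)$; this is correct only once you know the graph norm and the $H^m_A$-norm are equivalent on $H^m_A(\R^d)$, which in turn is exactly the elliptic regularity estimate $\snorm{u}_{H^m_A} \leq C \bigl( \snorm{u}_{L^2} + \snorm{\Op^A(f) u}_{L^2} \bigr)$ coming from the same parametrix. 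With those two clarifications the argument is complete and matches the approach in \cite{Iftimie_Mantiou_Purice:magnetic_psido:2006}.
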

Also, lower semiboundedness is preserved for certain types of symbols: 
\begin{thm}[Gårding inequality]\label{magWQ:important_results:continuity_and_selfadjointness:thm:Garding}
	Let $B$ be a bounded magnetic field and $f \in \Hoerrd{m}$, $m \in \R$, $0 \leq \delta < \rho \leq 1$. Assume there exist two constants $R > 0$ and $C > 0$ such that $\Re f(x,\xi) \geq C \abs{\xi}^m$ for $\abs{\xi} \geq R$. Then for all $s \in \R$, there exist two finite constants $K_1, K_2 \in \R^+$ such that 
	\begin{align*}
		\Re \bscpro{u}{\Op^A(f) u} \geq K_1 \norm{u}_{H^{\nicefrac{m}{2}}_A}^2 - K_2 \norm{u}_{H^s_A}^2
	\end{align*}
	for all $u \in H^{\infty}_A(\R^d)$. 
\end{thm}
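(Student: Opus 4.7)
The plan is to reduce to a real symbol and then construct an approximate Moyal square root. Since $\Op^A(f)^{\ast} = \Op^A(f^{\ast})$ by Proposition~\ref{magWQ:extension:prop:Op_A_product}, one has $\Re \bscpro{u}{\Op^A(f) u} = \bscpro{u}{\Op^A(\Re f) u}$, so I may replace $f$ by $\Re f \in \Hoerrd{m}$ and assume $f$ is real-valued without altering the hypothesis. The goal is then to produce a real symbol $a \in \Hoerrd{m/2}$ and a remainder $r$ of sufficiently low order such that
\begin{align*}
a \magW a = f - K_1 \langle\xi\rangle^m + C_0\, \chi - r
\end{align*}
for a suitable $K_1 > 0$, a cutoff $\chi \in \Cont^{\infty}_c({\R^d}^{\ast})$, and a constant $C_0 > 0$. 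Completing the square will then give the inequality.

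Fix $K_1 := C/2$ and choose $\chi, C_0$ so that the symbol $p(x,\xi) := f(x,\xi) - K_1 \langle\xi\rangle^m + C_0\, \chi(\xi)$ satisfies $p(x,\xi) \geq \tfrac{C}{4} \langle\xi\rangle^m$ uniformly on $\Pspace$. For $\abs{\xi} \geq R$ this uses the hypothesis directly, while for $\abs{\xi} \leq R$ the bump $C_0 \chi$ absorbs all negative contributions since both $f$ and $\langle\xi\rangle^m$ are bounded there. Then $p \in \Hoerrd{m}$ is strictly positive and elliptic, so $a_0 := \sqrt{p} \in \Hoerrd{m/2}$ and $1/a_0 \in \Hoerrd{-m/2}$. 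Now I iterate: using the fact that $\Hoerrd{m_1} \magW \Hoerrd{m_2} \subseteq \Hoerrd{m_1 + m_2}$ together with the asymptotic expansion of $\magW$ whose principal part is pointwise multiplication (stated at the end of Chapter~\ref{magWQ} and made quantitative in Chapter~\ref{asymptotics}), the defect $a_0 \magW a_0 - p$ lies in $\Hoerrd{m - (\rho-\delta)}$. I inductively solve the pointwise transport equation $2\, a_0 \cdot a_j \equiv - (\text{defect at step } j)$ modulo lower-order symbols, which is possible thanks to $1/a_0 \in \Hoerrd{-m/2}$, yielding real-valued $a_j \in \Hoerrd{m/2 - j(\rho - \delta)}$. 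Truncating at step $N$ large enough that $m - (N+1)(\rho - \delta) \leq 2s$ and setting $a := \sum_{j=0}^{N} a_j$, the remainder $r_N := a \magW a - p$ lies in $\Hoerrd{2s}$.

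Combining everything gives the operator identity
\begin{align*}
\Op^A(f) = \Op^A(a)^2 + K_1\, \Op^A(\langle\xi\rangle^m) - C_0\, \Op^A(\chi) - \Op^A(r_N) ,
\end{align*}
noting that $\Op^A(a)^{\ast} = \Op^A(a)$ because $a$ is real. Taking the expectation in $u \in H^{\infty}_A(\R^d)$, the first term is $\snorm{\Op^A(a) u}_{L^2}^2 \geq 0$; the $K_1$-term, rewritten via $\langle\xi\rangle^m = \langle\xi\rangle^{m/2} \magW \langle\xi\rangle^{m/2} - \tilde r$ with $\tilde r \in \Hoerrd{m - (\rho - \delta)}$, reproduces $\snorm{u}_{H^{m/2}_A}^2$ up to a lower-order error; and the three remaining operators $\Op^A(\chi)$, $\Op^A(r_N)$, $\Op^A(\tilde r)$ are bounded as maps $H^s_A(\R^d) \to H^{-s}_A(\R^d)$ by Proposition~\ref{magWQ:important_results:continuity_and_selfadjointness:prop:boundedness_psido} and are therefore absorbable into $K_2 \snorm{u}_{H^s_A}^2$ for a suitable $K_2$. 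The main obstacle is the iterative Moyal-square-root construction: it rests entirely on the asymptotic expansion of $\magW$ in the Hörmander scale (each correction dropping the order by $\rho - \delta$) and on the invertibility $1/a_0 \in \Hoerrd{-m/2}$ for strictly positive elliptic $a_0$. Both belong to the magnetic symbol calculus developed in this chapter and in Chapter~\ref{asymptotics}; once granted, the rest is algebra and Sobolev bookkeeping.
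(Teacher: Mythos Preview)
The paper does not prove this theorem; it is quoted from \cite{Iftimie_Mantiou_Purice:magnetic_psido:2006} as one of several known results listed for the reader's convenience. Your approximate-Moyal-square-root construction is precisely the standard strategy used there, and the reduction to real $f$, the choice of $p$, and the iterative correction $a_j = -\tfrac{1}{2a_0}(\text{defect})_j$ are all correct.

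There is one genuine gap in the bookkeeping. You claim that $\Op^A(\tilde r)$, with $\tilde r \in \Hoerrd{m-(\rho-\delta)}$ coming from $\langle\xi\rangle^m = \langle\xi\rangle^{m/2}\magW\langle\xi\rangle^{m/2} - \tilde r$, is bounded $H^s_A \to H^{-s}_A$ and hence absorbable into $K_2\snorm{u}_{H^s_A}^2$. But the order of $\tilde r$ is \emph{fixed} at $m-(\rho-\delta)$, independent of $N$; when $2s < m-(\rho-\delta)$ the claimed boundedness fails, so unlike $r_N$ this term cannot be pushed to arbitrarily low order by enlarging $N$. The cleanest fix is to avoid $\tilde r$ altogether: set $p := f - K_1\bigl(\langle\xi\rangle^{m/2}\magW\langle\xi\rangle^{m/2}\bigr) + C_0\chi$ from the outset. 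Since $\langle\xi\rangle^{m/2}\magW\langle\xi\rangle^{m/2} = \langle\xi\rangle^m + \Hoerrd{m-(\rho-\delta)}$, the lower bound $p \geq \tfrac{C}{4}\langle\xi\rangle^m$ still holds after possibly adjusting $C_0$ and $\chi$, and now the $K_1$-term gives exactly $K_1\snorm{\Op^A(\langle\xi\rangle^{m/2})u}_{L^2}^2$, which together with the $\snorm{u}_{L^2}^2$ contribution (handled by interpolation between $H^{m/2}_A$ and $H^s_A$) reproduces $K_1\snorm{u}_{H^{m/2}_A}^2$ without leftover symbol errors of uncontrolled order.

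A smaller remark: the asymptotic expansion you invoke---each Moyal correction dropping the order by $\rho-\delta$---is the one from \cite{Iftimie_Mantiou_Purice:magnetic_psido:2006}, not the $\eps$-expansion of Chapter~\ref{asymptotics}, which is a semiclassical expansion in a small parameter and does not directly yield the symbol-order filtration you need here.
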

An immediate consequence is that real-valued elliptic symbols which are bounded from below in the sense of functions are quantized to selfadjoint operators which are bounded from below. 
\begin{cor}
	Under the assumptions of Theorem~\ref{magWQ:important_results:continuity_and_selfadjointness:thm:Garding}, if the real-valued elliptic symbol $f$ is bounded from below, then so is its quantization $\Op^A(f)$. 
\end{cor}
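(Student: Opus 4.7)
The plan is to invoke the Gårding inequality (Theorem~\ref{magWQ:important_results:continuity_and_selfadjointness:thm:Garding}) essentially as a black box. Since $f$ is real-valued, $\Re\bscpro{u}{\Op^A(f) u} = \bscpro{u}{\Op^A(f) u}$, and choosing the free parameter $s = 0$ in that theorem gives
\begin{align*}
	\bscpro{u}{\Op^A(f) u} \geq K_1 \snorm{u}_{H^{\nicefrac{m}{2}}_A}^2 - K_2 \snorm{u}_{L^2}^2
\end{align*}
for all $u \in H^{\infty}_A(\R^d)$ and suitable constants $K_1, K_2 > 0$.

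For $m \geq 0$, the first term on the right-hand side is non-negative, because the norm on $H^{\nicefrac{m}{2}}_A(\R^d)$ is defined so as to dominate the $L^2$-norm (see Definition~\ref{magWQ:important_results:continuity_and_selfadjointness:defn:magnetic_Sobolev_space}). Hence $\bscpro{u}{\Op^A(f) u} \geq -K_2 \snorm{u}_{L^2}^2$ on $H^{\infty}_A(\R^d)$. Since $\Op^A(f)$ is selfadjoint on $H^m_A(\R^d)$ with $\Schwartz(\R^d) \subset H^{\infty}_A(\R^d)$ as a core (Theorem~\ref{magWQ:important_results:continuity_and_selfadjointness:thm:selfadjointness_elliptic_symbols}), this lower bound extends to the whole operator domain by standard closure arguments, yielding $\Op^A(f) \geq - K_2 \cdot \id$.

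For $m < 0$, the corollary is even simpler and Gårding is not needed: the Hörmander estimate forces $\babs{f(x,\xi)} \leq C \expval{\xi}^m \leq C$, so $f$ lies in $\Hoermrd{0}{\rho}{\delta}$, and the magnetic Calderón-Vaillancourt theorem (Theorem~\ref{magWQ:important_results:continuity_and_selfadjointness:thm:magnetic_Calderon_Vaillancourt}) produces $\Op^A(f) \in \BopL$, which is trivially bounded from below.

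There is essentially no obstacle: the corollary is a one-line consequence of Gårding in the non-trivial case $m \geq 0$. It is worth noting that the hypothesis that $f$ be bounded from below as a function is a natural physical assumption but is not actually exploited in the argument; the conclusion follows solely from the one-sided ellipticity $\Re f \geq C \abs{\xi}^m$ for $\abs{\xi} \geq R$ that is already built into the hypotheses of Theorem~\ref{magWQ:important_results:continuity_and_selfadjointness:thm:Garding}.
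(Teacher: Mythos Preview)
Your proof is correct and matches the paper's treatment, which states the corollary as an immediate consequence of the G{\aa}rding inequality without giving a separate argument. Note that your case distinction for $m<0$ is unnecessary: the term $K_1 \snorm{u}_{H^{\nicefrac{m}{2}}_A}^2$ is non-negative for any $m\in\R$ simply because it is $K_1>0$ times a squared norm (the domination of the $L^2$-norm is irrelevant here), so setting $s=0$ in G{\aa}rding already yields the lower bound uniformly in $m$.
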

%


\subsection{Commutator criteria} 
\label{magWQ:important_results:commutator_criteria}
%
So far, we have only presented results connecting properties of the function or distribution $f$ with properties of its magnetic quantization $\Op^A(f)$. Can we say something in the reverse direction: if $\Op^A(f)$ has certain properties, can we deduce some properties of $f$? 

In the context of usual Weyl calculus, two standard results are the commutator criteria of Beals \cite{Beals:characterization_psido:1977} and Bony \cite{Bony:characterization_psido:1996}. For any two operators $S , T \in \mathcal{B} \bigl ( L^2(\R^d) \bigr )$, let us define 
%
\begin{align*}
	\adfrak_S(T) := [S , T] = S T - T S 
	. 
\end{align*}
$\adfrak_S$ acts as a derivation. If $S$ and $T$ are unbounded operators, then one needs to be careful as to how to define the above expression. Then the Beals criterion reads as follows: 
\begin{thm}[Beals criterion\index{Beals criterion!non-magnetic} \cite{Beals:characterization_psido:1977}]\label{magWQ:important_results:commutator_criteria:thm:non_mag_Beals}
	An operator $T \in \mathcal{B} \bigl ( L^2(\R^d) \bigr )$ is the (usual) Weyl quantization of $f \in \BCont^{\infty}(\Pspace) = \Hoermrd{0}{0}{0}$ if and only if for all $a , \alpha \in \N_0^d$ the commutators 
	\begin{align}
		\adfrak_{\Qe_1}^{a_1} \cdots \adfrak_{\Qe_d}^{a_d} \adfrak_{\Pe_1}^{\alpha_1} \cdots \adfrak_{\Pe_d}^{\alpha_d}(T) 
	\end{align}
	define bounded operators on $L^2(\R^d)$. 
\end{thm}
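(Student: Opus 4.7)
The plan is to prove the two directions separately, with the forward direction being a direct computation and the backward direction requiring a coherent-state argument.

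For the forward direction, I would first establish the key commutator identities on Schwartz symbols: for $f \in \Schwartz(\Pspace)$ one has
\begin{align*}
	[\Qe_j , \Op(f)] &= \Op(i \partial_{\xi_j} f)
	,
	&
	[\Pe_j , \Op(f)] &= \Op(-i \partial_{x_j} f)
	.
\end{align*}
These follow by writing out the integral kernel from Lemma~\ref{magWQ:standard_weyl_calculus:weyl_quantization:lem:action_Op} and integrating by parts in $\eta$ (for the $\Qe$-commutator, the factor $x_j - y_j$ is produced as $-i\partial_{\eta_j}$ acting on $e^{-i(y-x)\cdot\eta}$), or equivalently from the canonical commutation relations applied to the Weyl system $\WeylSys(X)$. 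By the duality extension of $\Op$ (Proposition~\ref{magWQ:extension:prop:extension_Op_A}) these identities persist for $f \in \Schwartz'(\Pspace)$, interpreting partial derivatives in the distributional sense. Iterating, I obtain
\begin{align*}
	\adfrak_{\Qe_1}^{a_1} \cdots \adfrak_{\Pe_d}^{\alpha_d}(\Op(f)) = i^{\abs{a}} (-i)^{\abs{\alpha}} \, \Op \bigl ( \partial_\xi^a \partial_x^\alpha f \bigr )
	,
\end{align*}
and for $f \in \BCont^{\infty}(\Pspace) = \Hoermrd{0}{0}{0}$ all these symbols have bounded derivatives to all orders, so the (non-magnetic analogue of) Theorem~\ref{magWQ:important_results:continuity_and_selfadjointness:thm:magnetic_Calderon_Vaillancourt} yields $L^2$-boundedness.

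For the converse, let $T \in \mathcal{B}(L^2(\R^d)) \subset \mathcal{L}(\Schwartz(\R^d),\Schwartz'(\R^d))$. By (the non-magnetic version of) Proposition~\ref{magWQ:extension:prop:extension_Op_A}, there is a unique $f \in \Schwartz'(\Pspace)$ with $T = \Op(f)$, and the commutator identities above give $\Op(\partial_\xi^a \partial_x^\alpha f) = (-i)^{\abs{a}} i^{\abs{\alpha}} \adfrak_{\Qe}^a \adfrak_{\Pe}^{\alpha}(T)$ as bounded operators for every $a,\alpha \in \N_0^d$. To extract the pointwise regularity of $f$, I would test against translated Gaussian coherent states: fix $\phi_0 \in \Schwartz(\R^d)$ with $\snorm{\phi_0}_{L^2} = 1$, and for $X_0 \in \Pspace$ set $\phi_{X_0} := \WeylSys(X_0) \phi_0$. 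Using the formula for expectation values in terms of the Wigner transform together with the translation covariance of $\WignerTrafo$, one obtains
\begin{align*}
	\bscpro{\phi_{X_0}}{T \phi_{X_0}} = (2\pi)^{-\nicefrac{d}{2}} \, (f * H)(X_0)
	,
\end{align*}
where $H = \WignerTrafo(\phi_0,\phi_0)$ is a Gaussian on $\Pspace$ (in particular, a Schwartz function whose symplectic Fourier transform vanishes nowhere). Applying the same identity to each $\adfrak_{\Qe}^a \adfrak_{\Pe}^{\alpha}(T)$ and using that convolution commutes with partial derivatives yields uniform bounds
\begin{align*}
	\bnorm{\partial_x^a \partial_\xi^\alpha (f * H)}_{L^\infty(\Pspace)} \leq C_{a\alpha} \, \bnorm{\adfrak_{\Qe}^a \adfrak_{\Pe}^{\alpha}(T)}_{\mathcal{B}(L^2)}
	,
\end{align*}
so $f * H \in \BCont^{\infty}(\Pspace)$.

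The remaining step, and the \emph{main obstacle}, is to deconvolve $H$: from $f * H \in \BCont^{\infty}$ one must deduce $f \in \BCont^{\infty}$. I would handle this by varying the coherent state: replacing $\phi_0$ by $\WeylSys(Y) \phi_0$ and translating in both slots produces a whole family of Gaussian convolutions of $f$; combined with the fact that the symplectic Fourier transform of $H$ is a nowhere vanishing Gaussian, a standard Paley--Wiener-type argument (or an explicit mollifier-inverse built from a second Gaussian) recovers $f$ from $f * H$ while preserving uniform bounds on all derivatives. Alternatively, one can apply the same coherent-state identity directly to each commutator $\adfrak_{\Qe}^a \adfrak_{\Pe}^{\alpha}(T)$, let the width of $\phi_0$ shrink along a sequence, and control the error between $(\partial^{a,\alpha} f) * H$ and $\partial^{a,\alpha} f$ using the next-order derivative bound --- this is essentially the argument in Beals' original paper. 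Either way, the output is $\partial_x^a \partial_\xi^\alpha f \in L^\infty(\Pspace)$ for all $a,\alpha$, which is precisely $f \in \BCont^{\infty}(\Pspace)$, completing the characterization.
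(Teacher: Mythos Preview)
The paper does not prove this theorem; it is quoted from Beals' 1977 paper as a classical result and only motivated heuristically, so there is no paper-proof to compare against. Your forward direction is correct and standard: the identities $[\Qe_j,\Op(f)]=\Op(i\partial_{\xi_j}f)$ and $[\Pe_j,\Op(f)]=\Op(-i\partial_{x_j}f)$ together with Calder\'on--Vaillancourt give boundedness of all iterated commutators when $f\in\BCont^\infty(\Pspace)$.

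The converse has a genuine gap at precisely the step you flag as the main obstacle. Diagonal coherent-state matrix elements yield $f*H\in\BCont^\infty(\Pspace)$ with $H=\WignerTrafo(\phi_0,\phi_0)$ a \emph{fixed} phase-space Gaussian, but neither deconvolution you propose works. On the Fourier side $(\Fs H)^{-1}$ grows like $e^{c|\Xi|^2}$ and is not tempered, so no Paley--Wiener or ``second Gaussian'' inverse exists. The ``shrink the width'' alternative runs into the uncertainty principle: the Wigner transform of any unit vector occupies phase-space volume of order one, so squeezing $\phi_0$ in $x$ forces $H_\epsilon$ to spread in $\xi$; the Taylor remainder $(\partial^\gamma f)*H_\epsilon-\partial^\gamma f$ is controlled by $\int|Y|^2 H_\epsilon(Y)\,\dd Y\sim\epsilon+\epsilon^{-1}$, which never tends to zero. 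The standard routes around this are to use \emph{off-diagonal} matrix elements $\bscpro{\phi_X}{T\phi_Y}$, whose rapid decay in $|X-Y|$ encodes the commutator hypotheses and, via a resolution of the identity, reconstructs $f$ with uniform bounds; or to work directly with the Schwartz kernel, where the hypotheses say that $(x-y)^\alpha\partial_x^a\partial_y^b K_T$ is the kernel of a bounded operator for all $\alpha,a,b$; or to phrase everything as smoothness of $X\mapsto\WeylSys(-X)\,T\,\WeylSys(X)$ in norm, which is the viewpoint the paper adopts for the magnetic version.
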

For operators of the form $f(\Pe)$ and $g(\Qe)$ where $f$ and $g$ are sufficiently regular, a formal calculation yields 
\begin{align*}
	\adfrak_{\Pe_j} \bigl ( g(\Qe) \bigr ) &= [ \Pe_j , g(\Qe) ] = - i \partial_{x_j} g(\Qe) \\
	\adfrak_{\Qe_j} \bigl ( g(\Qe) \bigr ) &= [ \Qe_j , g(\Qe) ] = 0 
	. 
\end{align*}
Multiplication operators $g(\Qe)$ (after Fourier transform $f(\Pe)$ also becomes a multiplication operator) are bounded if and only if $g \in L^{\infty}(\R^d)$. Hence, if $g(\Qe)$ and the commutator are also bounded, then $g , \partial_{x_j} g \in L^{\infty}(\R^d)$ holds true. Similarly, if arbitrary commutators with $\Pe$ and $\Qe$ are bounded, then $g$ should be bounded with bounded derivatives to any order, \ie $g \in \BCont^{\infty}(\R^d) \subset \BCont^{\infty}(\Pspace)$. The argument for $f$ is analogous. The Beals criterion extends this formal argument to operators that are not necessarily multiplication operators. 

Later on, we will also need 
\begin{align*}
	\adfrak_X(T) := \bigl [ \sigma \bigl ( X,(\Qe,\Pe) \bigr ) , T \bigr ] = \bigl [ \xi \cdot \Qe - x \cdot \Pe , T]
\end{align*}
as a mixed commutator with respect to a linear combination of $\Qe$ and $\Pe$. Since commutators are often difficult to treat for technical reasons, it is necessary to introduce the associated unitary one-parameter groups as well: the operator 
\begin{align*}
	\Adfrak_X(T) := e^{+ i \sigma(X,(\Qe,\Pe))} T e^{- i \sigma(X,(\Qe,\Pe))} 
\end{align*}
is well-defined for all $T \in \mathcal{B} \bigl ( L^2(\R^d) \bigr )$. Clearly, 
\begin{align*}
	i \frac{\partial}{\partial t} \Adfrak_{t X}(T) = i \lim_{t \rightarrow 0} \frac{\Adfrak_{t X}(T) - T}{t} = \adfrak_{X}(T) 
\end{align*}
holds whenever the right-hand side makes sense as a suitable bounded operator on $L^2(\R^d)$.\footnote{$\adfrak_X$ can be seen as a selfadjoint operator on a dense subspace of the Hilbert space $\mathcal{T}^2 \bigl ( L^2(\R^d) \bigr )$, the Hilbert-Schmidt operators on $L^2(\R^d)$. By Stone's theorem, $\Adfrak_{t X}$ is the associated strongly-continuous one-parameter group. } 

Now, a first attempt at writing a Beals theorem for \emph{magnetic} pseudodifferential operators reads 
\begin{thm}[Magnetic Beals criterion\index{Beals criterion!magnetic},  Theorem~1.1 in \cite{Iftimie_Mantoiu_Purice:commutator_criteria:2008}]
	Assume $B$ is a bounded magnetic field. Choose an associated vector potential $A \in \Cont^{\infty}_{\mathrm{pol}}(\R^d,\R^d)$. A linear continuous operator $T : \Schwartz(\R^d) \longrightarrow \Schwartz'(\R^d)$ is a magnetic pseudodifferential operator with symbol of class $\BCont^{\infty}(\Pspace) = \Hoermrd{0}{0}{0}$ if and only if the commutators 
	\begin{align*}
		\adfrak_{\Qe_1}^{a_1} \cdots \adfrak_{\Qe_d}^{a_d} \adfrak_{\Pe^A_1}^{\alpha_1} \cdots \adfrak_{\Pe^A_d}^{\alpha_d}(T) 
	\end{align*}
	define bounded operators on $L^2(\R^d)$ for all multiindices $a , \alpha \in \N_0^d$. 
\end{thm}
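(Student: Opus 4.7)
The statement is a biconditional, and I would prove the two directions separately.

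\emph{Only if direction.} Suppose $T = \Op^A(f)$ with $f \in \BCont^\infty(\Pspace)$. Starting from the Bochner-integral representation $\Op^A(f) = (2\pi)^{-d}\int_{\Pspace}(\Fs f)(X)\, W^A(X)\, dX$ and the composition law of Lemma~\ref{magWQ:magnetic_weyl_calculus:lem:composition_Weyl_system}, I would differentiate the one-parameter unitary subgroups generated by $\Qe_j$ and $\Pe^A_j$ (both of which are special cases of the Weyl system) at the origin to derive explicit expressions for $\adfrak_{\Qe_j}(W^A(X))$ and $\adfrak_{\Pe^A_j}(W^A(X))$. Substituting these into the integral identifies $\adfrak_{\Qe_j}(T)$ and $\adfrak_{\Pe^A_j}(T)$ as magnetic Weyl quantizations of symbols obtained by applying a first-order differential operator to $f$, with an additional term of the form $\Op^A\bigl(\sum_k B_{jk}(x)\,\partial_{\xi_k}f\bigr)$ in the $\Pe^A_j$-case coming from the $\Qe$-derivative of the magnetic $2$-cocycle $\omega^B$ at zero. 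Since $B \in \BCont^\infty$ and the class $\BCont^\infty(\Pspace)$ is stable under multiplication by $\BCont^\infty(\R^d)$-functions and under differentiation, iterating shows that every multi-commutator is $\Op^A$ of some $\BCont^\infty(\Pspace)$ symbol, so Theorem~\ref{magWQ:important_results:continuity_and_selfadjointness:thm:magnetic_Calderon_Vaillancourt} yields $L^2$-boundedness.

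\emph{If direction.} Assume every iterated commutator $\adfrak_{\Qe_1}^{a_1}\cdots\adfrak_{\Pe^A_d}^{\alpha_d}(T)$ is bounded. By Proposition~\ref{magWQ:extension:prop:extension_Op_A} there exists a unique $f \in \Schwartz'(\Pspace)$ with $T = \Op^A(f)$, explicitly $f = \WignerTrafo^A K_T$. The task is to upgrade this tempered distribution to an element of $\BCont^\infty(\Pspace)$. Inverting the computation of the \emph{only if} part yields, by induction on $\abs{a}+\abs{\alpha}$, a triangular system
\begin{equation*}
\Op^A\bigl(\partial_x^a\partial_\xi^\alpha f\bigr) = i^{\abs{a}-\abs{\alpha}}\,\adfrak_{\Pe^A_1}^{a_1}\cdots\adfrak_{\Pe^A_d}^{a_d}\adfrak_{\Qe_1}^{\alpha_1}\cdots\adfrak_{\Qe_d}^{\alpha_d}(T) + R_{a\alpha},
\end{equation*}
where $R_{a\alpha}$ is a finite sum of terms $\Op^A\bigl(c_\gamma(x)\,\partial_x^{a'}\partial_\xi^{\alpha'}f\bigr)$ with $c_\gamma \in \BCont^\infty(\R^d)$ built from derivatives of the components of $B$ and $\abs{a'}+\abs{\alpha'} < \abs{a}+\abs{\alpha}$. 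Solving this system recursively in $\abs{a}+\abs{\alpha}$ expresses $\Op^A(\partial_x^a\partial_\xi^\alpha f)$ as an $L^2$-bounded operator for every pair $(a,\alpha)$.

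The main obstacle is to convert these operator-norm bounds into pointwise bounds $\bnorm{\partial_x^a\partial_\xi^\alpha f}_{L^\infty(\Pspace)} < \infty$, since the magnetic Caldéron-Vaillancourt theorem is one-directional. I would exploit the Bogoliubov automorphism family $\Adfrak_Y(T) := W^A(Y)^* T W^A(Y)$: a direct calculation using Lemma~\ref{magWQ:magnetic_weyl_calculus:lem:composition_Weyl_system} identifies ${\Op^A}^{-1}(\Adfrak_Y(T))(X)$ as $f(X+Y)$ twisted by an explicit real-valued $B$-dependent phase that is $\BCont^\infty$ jointly in $(X,Y)$; in other words, $\Adfrak_Y$ implements a magnetic-twisted phase-space translation on symbols. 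Stone's theorem applied to the unitary group $W^A$, together with the identity $i\partial_t\Adfrak_{tY}(T)\big|_{t=0} = \adfrak_{\sigma(Y,(\Qe,\Pe^A))}(T)$ and the assumed uniform operator-norm bounds on all iterated commutators, imply that $Y \mapsto \Adfrak_Y(T)$ is of class $\BCont^\infty$ as a map $\Pspace \to \mathcal{B}(L^2(\R^d))$, each derivative being controlled by a corresponding iterated commutator norm. Testing against a fixed Gaussian coherent state $\psi_0 \in \Schwartz(\R^d)$ and invoking Lemma~\ref{magWQ:magnetic_weyl_calculus:magnetic_wigner_transform:lem:qm_exp_val_phase_space_average} converts this operator-valued smoothness in $Y$ into smoothness of the convolution $(f * \WignerTrafo^A(\psi_0,\psi_0)^{\vee})(Y)$ and of each of its $Y$-derivatives. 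Because the magnetic Wigner function of a Gaussian is a strictly positive Schwartz Gaussian with non-vanishing Fourier transform, this convolution is invertible as a tempered distributional equation, propagating the uniform bounds back to $f$ and its derivatives and yielding $f \in \BCont^\infty(\Pspace)$.
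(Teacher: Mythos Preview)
The paper does not actually prove this theorem; it is quoted from \cite{Iftimie_Mantoiu_Purice:commutator_criteria:2008}, and the only information the paper gives about the proof is the single sentence following the algebraic reformulation: ``The proof is tedious and technical, and it amounts to showing that $\BCont^{\infty}(\Pspace) = \Hoermrd{0}{0}{0}$ and $\Cont^{\infty}(\tau^B,\mathfrak{C}^B)$ agree as spaces and have isomorphic Fréchet structures.'' So there is no detailed proof here to compare against, only a strategy: work in the representation-free $C^*$-algebra $\mathfrak{C}^B$ and identify the smooth vectors for the magnetic phase-space translation action $\tau^B$.

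Your \emph{only if} direction is essentially correct and is the easy half.

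Your \emph{if} direction contains a genuine gap at the deconvolution step. You claim that ${\Op^A}^{-1}(\Adfrak_Y(T))(X)$ equals $f(X+Y)$ twisted by a $\BCont^\infty$ phase, and that testing against a Gaussian therefore produces the convolution $(f \ast \WignerTrafo^A(\psi_0,\psi_0)^{\vee})(Y)$. But the composition law of Lemma~\ref{magWQ:magnetic_weyl_calculus:lem:composition_Weyl_system} shows that conjugation by $\WeylSys^A(Y)$ introduces the operator-valued factor $\omega^B(\Qe;\cdot,\cdot)$, not a scalar phase: the resulting ``twist'' depends on the integration variable through the position operator, so the kernel relating $f$ to $Y \mapsto \langle \psi_0, \Adfrak_Y(T)\psi_0\rangle$ is \emph{not} of convolution type. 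The Fourier-side inversion argument (non-vanishing Gaussian transform) that works when $B=0$ therefore does not apply directly. This is exactly the obstruction the paper alludes to when it says the magnetic translations $\tau^B_X$ do not reduce to ordinary translations, and it is why the cited proof retreats to the abstract $C^*$-algebraic setting and establishes the Fréchet-space identity $\BCont^\infty(\Pspace) = \Cont^\infty(\tau^B,\mathfrak{C}^B)$ rather than arguing via coherent states. Your outline would need either a substitute for the deconvolution step that handles the $X$-dependent magnetic phase, or a reduction to the non-magnetic case; neither is supplied.
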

For several reasons, this result is more involved than Theorem~\ref{magWQ:important_results:commutator_criteria:thm:non_mag_Beals}: first of all, there is a family of magnetic pseudodifferential operators $\{ \Op^A(f) \}_{\dd A = B}$ associated to a suitable tempered distribution $f$ which are labeled by possible choices of vector potentials $A$ associated to the magnetic field $B = \dd A$. Certainly, conditions that should be placed on the magnetic \emph{field} have to be extracted from an operator that depends on the choice of gauge. Secondly, different components of momenta no longer commute, but produce terms containing components of $B$ as well as derivatives thereof. This makes proofs and derivations rather tedious. 

Hence, it turns out that it is advantageous to rephrase the problem: \cite{Iftimie_Mantoiu_Purice:commutator_criteria:2008} have suggested to look at more fundamental $C^*$-algebras that depend only on the magnetic field and are by construction independent of the choice of vector potential: we consider 
\begin{align}
	\AB := {\Op^A}^{-1} \Bigl ( \mathcal{B} \bigl ( L^2(\R^d) \bigr ) \Bigr ) 
	, 
\end{align}
a $C^*$-algebra composed of tempered distributions, with transported product $\magW$ and involution ${}^*$, 
\begin{align*}
	f \magW g &:= {\Op^A}^{-1} \bigl ( \Op^A(f) \, \Op^A(g) \bigr ) 
	\\
	f^* &:= {\Op^A}^{-1} \bigl ( \Op^A(f)^* \bigr ) 
	, 
\end{align*}
as well as transported norm $\norm{f}_B := \bnorm{\Op^A(f)}_{\mathcal{B}(L^2(\R^d))}$. As the notation suggests, $\mathfrak{C}^B$ depends only on the magnetic field by covariance of $\Op^A$. Furthermore, $f \magW g$ coincides with equation~\eqref{magWQ:magnetic_weyl_calculus:eqn:Fourier_form_magnetic_composition} for two suitable tempered distributions $f$ and $g$, and $f^*$ is the complex conjugate of the distribution $f$, \ie 
\begin{align*}
	\bigl ( f^* , \varphi \bigr ) := \bigl ( f , \varphi^* \bigr )^* 
	&&
	\forall \varphi \in \Schwartz(\Pspace) 
	. 
\end{align*}
$\mathfrak{C}^B$ is a $\ast$-subalgebra of $\Schwartz'(\Pspace)$ with respect to $\magW$ and ${}^*$ as well as a vector subspace. In Chapter~\ref{magWQ:extension:magnetic_moyal_algebra}, we have seen how to extend $\magW$ by duality to 
\begin{align*}
	\magW : \mathcal{M}^B(\Pspace) \times \Schwartz'(\Pspace) \longrightarrow \Schwartz'(\Pspace) \\
	\magW :  \Schwartz'(\Pspace) \times \mathcal{M}^B(\Pspace) \longrightarrow \Schwartz'(\Pspace) 
\end{align*}
where $\mathcal{M}^B(\Pspace)$ is the magnetic Moyal algebra as by Definition~\ref{magWQ:extension:defn:mag_Moyal_algebra}. Since for each $X \in \Pspace$, the function $l_X : Y \mapsto \sigma(X,Y)$ is linear and thus uniformly polynomially bounded, we can make sense of the expression 
\begin{align*}
	\ad_X^B(F) := [l_X , F]_{\magW} = l_X \magW F - F \magW l_X \in \Schwartz'(\Pspace) 
\end{align*}
for any $X \in \Pspace$ and $F \in \mathfrak{C}^B$. The associated exponential $e_X := e^{- i l_X} \in \BCont^{\infty}(\Pspace)$, $X \in \Pspace$, is again an element of the magnetic Moyal algebra $\mathcal{M}^B(\Pspace)$ which is the \emph{algebraic analog of the Weyl system}. In fact, one can define the usual Weyl system as quantization of $e_X$, $\WeylSys^A(X) := \Op^A(e_X)$. This implies $e_X \in \mathfrak{C}^B$ is a Moyal unitary and thus bounded. Hence, the family of \emph{magnetic phase space translations} $\bigl \{ \tau^B_X \bigr \}_{X \in \Pspace}$, 
\begin{align*}
	\tau^B_X(F) := e_{-X} \magW F \magW e_X 
	, 
	&& 
	F \in \Schwartz'(\Pspace)
	, 
\end{align*}
is a collection of automorphisms on $\mathfrak{C}^B$. In other words, $\tau^B_X$ substitutes for conjugating with the magnetic Weyl system which is composed of translations in real and reciprocal space as well as multiplication by a phase (which contains a magnetic contribution). Not surprisingly, $e_X$ obeys essentially the same composition law as $\WeylSys^A(X)$, \ie 
%
\begin{align*}
	e_X \magW e_Y = e^{\frac{i}{2} \sigma(X,Y)} \, e^{- i \Gamma^B(\sexpval{x-\frac{1}{2}(y+z),x + \frac{1}{2}(y-z) , x+\frac{1}{2}(y+z)})} \, e_{X+Y}
\end{align*}
holds for all $X , Y \in \Pspace$ (compared to equation~\eqref{magWQ:magnetic_weyl_calculus:lem:composition_Weyl_system}, the exponential of the magnetic flux through \emph{different} corners enters). In the non-magnetic case, $\tau_X := \tau^0_X$ reducles to the usual translations, 
\begin{align*}
	\bigl ( \tau_X(f) \bigr )(Y) := f(Y - X) 
	&&
	\forall X , Y \in \Pspace 
	. 
\end{align*}
Hence, we can define the Fréchet space suggested by the Beals criterion, namely 
\begin{align*}
	\Cont^{\infty}(\tau^B,\mathfrak{C}^B) := \Bigl \{ F \in \mathfrak{C}^B \; \big \vert \; X \mapsto \tau^B_X(F) \in \Cont^{\infty} \mbox{ in $X = 0$} \Bigr \} 
\end{align*}
endowed with the family of seminorms 
\begin{align*}
	\Bigl \{ \snorm{\cdot}^{\tau^B}_{U_1 , \ldots , U_n} \; \big \vert \; n \in \N_0 , \, U_j \in \Pspace , \, \sabs{U_j} = 1 \, \forall j = 1 , \ldots , n \Bigr \} 
	, 
\end{align*}
each of which being defined as 
\begin{align*}
	\bnorm{F}^{\tau^B}_{U_1 , \ldots , U_n} := \bnorm{\ad^B_{U_1} \cdots \ad^B_{U_n}(F)}_{\mathfrak{C}^B} 
	. 
\end{align*}
Observe that in the non-magnetic case, $\Cont^{\infty}(\tau,\mathfrak{C}^0)$ coincides with $\BCont^{\infty}(\Pspace)$ by the non-magnetic Beals criterion, Theorem~\ref{magWQ:important_results:commutator_criteria:thm:non_mag_Beals}. Written in this language, the magnetic version reads 
\begin{thm}[Magnetic Beals criterion\index{Beals criterion!magnetic} \cite{Iftimie_Mantoiu_Purice:commutator_criteria:2008}]
	If $B$ is of class $\BCont^{\infty}$, then $f \in \Hoermrd{0}{0}{0} = \BCont^{\infty}(\Pspace)$ if and only if for all $n \in \N_0$ and $U_1 , \ldots , U_n \in \Pspace$ with $\sabs{U_1} = \ldots = \sabs{U_n} = 1$ 
	\begin{align*}
		\ad^B_{U_1} \cdots \ad^B_{U_n}(f) \in \mathfrak{C}^B 
	\end{align*}
	holds. 
\end{thm}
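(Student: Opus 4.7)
The strategy is to translate the statement into one about iterated operator commutators via $\Op^A$, reduce by linearity to commutators with a canonical basis of $\Pspace$, and then relate these to standard derivatives of the symbol modulo bounded magnetic corrections.

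The starting observation is that for each $U \in \Pspace$ the linear symbol $l_U(X) = \sigma(U,X) = \upsilon\cdot x - u\cdot\xi$ lies in $\Cont^{\infty}_{\mathrm{pol\,u}}(\Pspace) \subset \mathcal{M}^B(\Pspace)$ with $\Op^A(l_U) = \upsilon\cdot\Qe - u\cdot\Pe^A$. By Proposition~\ref{magWQ:extension:magnetic_moyal_algebra:prop:mag_Moyal_algebra_L_S_L_Sprime},
\begin{align*}
\Op^A \bigl ( \ad^B_{U_1} \cdots \ad^B_{U_n}(f) \bigr ) = \bigl [ \Op^A(l_{U_1}) , \bigl [ \ldots , [\Op^A(l_{U_n}),\Op^A(f)] \ldots \bigr ] \bigr ] .
\end{align*}
By multilinearity of the iterated commutator in $U_1,\ldots,U_n$, it suffices to verify the statement when each $U_j$ is one of the canonical basis vectors, so that $\Op^A(l_{U_j})$ is either $\Qe_k$ or $-\Pe^A_k$. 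The magnetic Beals criterion is therefore equivalent to the version with $\adfrak_{\Qe_k}$ and $\adfrak_{\Pe^A_k}$ stated in Theorem~1.1 of \cite{Iftimie_Mantoiu_Purice:commutator_criteria:2008}.

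For the necessity direction ($\Rightarrow$), I would compute $\ad^B_U(f)$ on the symbol side using the integral formula \eqref{magWQ:magnetic_weyl_calculus:eqn:Fourier_form_magnetic_composition} for $\magW$. Because $l_U$ is a polynomial of degree one, a direct calculation (or the asymptotic expansion of $\magW$ in Chapter~\ref{asymptotics:expansions} truncated at the trivial remainder) gives
\begin{align*}
\ad^B_U(f) = -i\{l_U,f\} + R^B_U(f) = i(\upsilon\cdot\partial_\xi f + u\cdot\partial_x f) + R^B_U(f),
\end{align*}
where $R^B_U(f)$ vanishes when $U$ is a position direction and, when $U$ has a momentum component, reduces to a finite sum of products of components of $B$ (and its derivatives) with derivatives of $f$, evaluated pointwise. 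Under the assumption that $B \in \BCont^{\infty}$, the map $\ad^B_U$ therefore sends $\BCont^{\infty}(\Pspace) = \Hoermrd{0}{0}{0}$ into itself continuously. Iterating $n$ times and invoking the magnetic Caldéron–Vaillancourt Theorem~\ref{magWQ:important_results:continuity_and_selfadjointness:thm:magnetic_Calderon_Vaillancourt} yields $\ad^B_{U_1}\cdots\ad^B_{U_n}(f) \in \Op^{A,-1}(\BopL) = \mathfrak{C}^B$.

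The genuinely hard direction is the sufficiency ($\Leftarrow$): starting only from the operator-theoretic hypothesis, one has to recover pointwise smoothness and uniform boundedness of $f$ and all of its derivatives. My plan is a regularization argument. Fix a mollifier $\chi_\delta \in \Schwartz(\Pspace)$ that approximates the $\magW$-identity and set $f_\delta := \chi_\delta \magW f \magW \chi_\delta \in \Schwartz(\Pspace) \subset \BCont^{\infty}(\Pspace)$. Since $\ad^B_U$ is a derivation on the magnetic Moyal algebra, one obtains a Leibniz rule that expresses $\ad^B_{U_1}\cdots\ad^B_{U_n}(f_\delta)$ as a finite linear combination of Moyal products of iterated $\ad^B$-derivatives of $\chi_\delta$, $f$ and $\chi_\delta$; the hypothesis together with the $\mathfrak{C}^B$-boundedness of the $\chi_\delta$-factors then gives $\mathfrak{C}^B$-norms of $\ad^B_{U_1}\cdots\ad^B_{U_n}(f_\delta)$ that are uniform in $\delta$. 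Invoking the non-magnetic Beals-type control (or directly the representation of a $\BCont^{\infty}$-seminorm of $f_\delta$ as a supremum of matrix elements of iterated commutators against a bounded family of coherent states associated to $\WeylSys^A$) translates these uniform operator bounds into uniform $\BCont^{\infty}$-seminorm bounds for $\{f_\delta\}_{\delta>0}$. Passing to the limit $\delta\to 0$ in $\Schwartz'(\Pspace)$ identifies the limit with $f$ and transports the seminorm bounds, so $f\in\BCont^{\infty}(\Pspace)$. The main obstacle is exactly this last translation step: one must show that uniform bounds on iterated $\ad^B$-commutators in the operator norm really do control the sup-norms of $\partial_x^a\partial_\xi^\alpha f$ pointwise, untangling the nontrivial magnetic corrections that appear at each stage of the iteration; this is carried out in detail in \cite{Iftimie_Mantoiu_Purice:commutator_criteria:2008}.
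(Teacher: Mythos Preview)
The paper itself does not give a proof of this statement; it cites \cite{Iftimie_Mantoiu_Purice:commutator_criteria:2008} and summarises the argument in one sentence: one shows that $\BCont^{\infty}(\Pspace)$ and $\Cont^{\infty}(\tau^B,\mathfrak{C}^B)$ coincide as Fr\'echet spaces. So there is no detailed proof here to compare against, only that one-line description of strategy.

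That said, your proposal has a genuine gap in the forward direction. Your key structural claim is that for linear $l_U$ the magnetic remainder
\[
R^B_U(f) \;=\; \ad^B_U(f) + i\{l_U,f\}
\]
``reduces to a finite sum of products of components of $B$ (and its derivatives) with derivatives of $f$, evaluated pointwise''. This is false in general. When $l_U$ has a momentum component (so that $\Op^A(l_U)$ involves $\Pe^A$), a direct kernel computation of $[\Pe^A_k,\Op^A(f)]$ shows that the correction beyond $\Op^A(-i\partial_{x_k}f)$ has symbol
\[
(x,\xi)\ \longmapsto\ (2\pi)^{-\nicefrac{d}{2}}\int_{\R^d}\dd y\, e^{-iy\cdot\xi}\,\Bigl(\int_{-\nicefrac{1}{2}}^{\nicefrac{1}{2}} B_{jk}(x+sy)\,\dd s\Bigr)\, y_j\,(\Fourier_2 f)(x,y),
\]
which is a bona fide pseudodifferential expression involving a \emph{line integral} of $B$, not a finite differential operator in $f$. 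The asymptotic expansion of Chapter~\ref{asymptotics:expansions} is asymptotic in $\eps$; at $\eps=1$ it does not terminate, and Taylor-expanding the line integral of $B$ produces an infinite series that converges only if $B$ is real-analytic. The \emph{conclusion} you want --- that $\ad^B_U$ maps $\BCont^{\infty}(\Pspace)$ continuously into itself --- is correct, but it must be obtained by oscillatory-integral estimates on the expression above (or, equivalently, by the composition result $\Hoermrd{m_1}{\rho}{0}\magW\Hoermrd{m_2}{\rho}{0}\subset\Hoermrd{m_1+m_2}{\rho}{0}$ together with the order-lowering cancellation in the commutator), not by the ``finite pointwise expression'' shortcut.

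For the converse direction you correctly identify the obstacle (turning uniform $\mathfrak{C}^B$-bounds on iterated $\ad^B$-commutators into uniform $\BCont^{\infty}$-seminorm bounds) and then defer it to the reference. That is honest, but it means your proposal does not actually supply the hard half of the theorem. The one-line description in the thesis matches what you defer: the substance of \cite{Iftimie_Mantoiu_Purice:commutator_criteria:2008} is precisely to show that the two families of seminorms --- the $\BCont^{\infty}$ seminorms and the $\snorm{\cdot}^{\tau^B}_{U_1,\ldots,U_n}$ --- are equivalent, which is where all the work lies.
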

The proof is tedious and technical, and it amounts to showing that $\BCont^{\infty}(\Pspace) = \Hoermrd{0}{0}{0}$ and $\Cont^{\infty}(\tau^B,\mathfrak{C}^B)$ agree as spaces and have isomorphic Fréchet structures. 
\medskip

\noindent
In practical situations, it is often more useful to replace Moyal commutators $\ad^B_X = [ l_X , \cdot]_{\magW}$ with commutators with more general functions. 
\begin{defn}[$S^+_{\rho}$]
	Let $\rho \in [0,1]$. We define the class of symbols $S^+_{\rho}$ as 
	\begin{align*}
		S^+_{\rho} := \Bigl \{ \varphi \in \Cont^{\infty}(\Pspace) \; \big \vert \; \babs{\partial_x^a \partial_{\xi}^{\alpha} \varphi(X)} \leq C_{a \alpha} \sexpval{\xi}^{(1 - \abs{\alpha}) \rho} \, \forall \abs{a} + \abs{\alpha} \geq 1 \Bigr \} 
		. 
	\end{align*}
\end{defn}
For any $\varphi \in S^+_{\rho} \subset \mathcal{M}^B(\Pspace)$, we define the derivation 
\begin{align*}
	\ad^B_{\varphi}(F) := [ \varphi , F ]_{\magW} 
	&& 
	\forall F \in \Schwartz'(\Pspace) 
	 .
\end{align*}
Then the magnetic Bony criterion reads 
\begin{thm}[Magnetic Bony criterion\index{Bony criterion!magnetic} \cite{Iftimie_Mantoiu_Purice:commutator_criteria:2008}]
	Assume the components of $B$ are of class $\BCont^{\infty}$. A distribution $F \in \Schwartz'(\Pspace)$ is a symbol of type $\Hoermr{0}{\rho}$, $\rho \in [0,1]$, if and only if for any $n \in \N_0$ and any family $\{ \varphi_1 , \ldots \varphi_n \} \subset S^+_{\rho}$ 
	\begin{align*}
		\ad^B_{\varphi_1} \cdots \ad^B_{\varphi_n}(F) \in \mathfrak{C}^B 
	\end{align*}
	holds true. 
\end{thm}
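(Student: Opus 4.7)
The proof splits naturally into the two implications, mirroring the structure of the magnetic Beals criterion stated just above.

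For the forward direction ($\Rightarrow$), assume $F \in \Hoermr{0}{\rho}$. The plan is to show that the symbol class is stable under iterated Moyal commutators with test functions from $S^+_{\rho}$, and then invoke the magnetic Calderón-Vaillancourt theorem (Theorem~\ref{magWQ:important_results:continuity_and_selfadjointness:thm:magnetic_Calderon_Vaillancourt}) to conclude $\magW$-boundedness. Using the asymptotic expansion of $\magW$ developed in Chapter~\ref{asymptotics:expansions}, the commutator $\varphi \magW F - F \magW \varphi$ loses its even-order (symmetric) contributions, so the surviving terms have the schematic form $(\partial_{\xi}^{\alpha} \partial_x^{\beta} \varphi)(\partial_{\xi}^{\beta} \partial_x^{\alpha} F)$ with $\sabs{\alpha}+\sabs{\beta}=k$ odd, plus magnetic remainders involving derivatives of the flux $\Gamma^B$. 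A degree count using the defining estimate of $S^+_{\rho}$ and the $\Hoermr{0}{\rho}$ estimates for $F$ gives $\sexpval{\xi}^{(1-\sabs{\alpha})\rho-\sabs{\beta}\rho} = \sexpval{\xi}^{(1-k)\rho}$, which lies in $\Hoermr{0}{\rho}$ for $k \geq 1$. Since $B \in \BCont^{\infty}$, the magnetic corrections contribute $\BCont^{\infty}$-factors that do not spoil the class. Iteration yields $\ad^B_{\varphi_1} \cdots \ad^B_{\varphi_n}(F) \in \Hoermr{0}{\rho} \subset \mathfrak{C}^B$.

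For the converse ($\Leftarrow$), the first observation is that the linear forms $l_X(Y) := \sigma(X,Y)$ belong to $S^+_{\rho}$ for every $\rho \in [0,1]$, since their higher derivatives vanish identically. Hence the hypothesis on $F$ implies in particular boundedness of all iterated commutators $\ad^B_{l_{X_1}} \cdots \ad^B_{l_{X_n}}(F)$, and the magnetic Beals criterion already stated above places $F \in \BCont^{\infty}(\Pspace) = \Hoermr{0}{0}$. The remaining task is to upgrade from order-$0$ type-$(0,0)$ control to the sharper type-$(\rho,0)$ decay of $\xi$-derivatives.

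The upgrade step is the technical core. The plan is to test against carefully chosen $\varphi \in S^+_{\rho}$ designed to isolate one $\xi$-derivative at a time. Concretely, for $\varphi(X) = x_j \psi(\xi)$ with $\psi \in \Cont^{\infty}$ satisfying $\sabs{\partial^{\alpha}_{\xi} \psi} \lesssim \sexpval{\xi}^{(1-\sabs{\alpha})\rho}$, the leading term in the expansion of $\ad^B_{\varphi}(F)$ equals $-i\, \psi(\xi) \, \partial_{\xi_j} F$, with all remaining terms of strictly lower order (since $F$ is already known to be a classical Hörmander $\Hoermr{0}{0}$ symbol) plus $\BCont^{\infty}$-magnetic corrections. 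The assumed boundedness of $\Op^A$ applied to this commutator therefore forces $\psi(\xi) \, \partial_{\xi_j} F \in \BCont^{\infty}(\Pspace)$, which on dyadic shells $\sabs{\xi} \sim 2^k$ (where $\psi \sim 2^{k\rho}$) translates into $\sabs{\partial_{\xi_j} F} \lesssim \sexpval{\xi}^{-\rho}$. Iterating with products $\varphi_{j_1} \cdots \varphi_{j_m}$ and permuting orders of $x$- and $\xi$-commutators (the $x$-commutators being controlled by the already-established Beals estimates) yields the full family of seminorm bounds $\sabs{\partial_x^a \partial_{\xi}^{\alpha} F} \lesssim \sexpval{\xi}^{-\rho\sabs{\alpha}}$ defining $\Hoermr{0}{\rho}$.

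The main obstacle is precisely this extraction step: passing from $\magW$-boundedness of a quantized commutator back to pointwise symbol estimates is not automatic and requires either a reverse Calderón-Vaillancourt-type estimate via dyadic localization or, alternatively, testing against a family of coherent-state-like wave packets concentrated at arbitrary phase space points $(x_0,\xi_0)$ and tracking the magnetic phase $\omega^B$ uniformly—feasible because $B \in \BCont^{\infty}$. Once this principle is in place, the argument becomes essentially bookkeeping, combining the Beals-type control on $x$-derivatives with the newly extracted $\rho$-weighted control on $\xi$-derivatives.
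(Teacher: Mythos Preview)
The paper does not prove this theorem; it is quoted without proof from \cite{Iftimie_Mantoiu_Purice:commutator_criteria:2008} as part of a survey of commutator criteria, so there is no in-paper argument to compare against.

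On the merits of your proposal: the forward direction is reasonable in outline. The cancellation you describe is exactly what makes $[\varphi,F]_{\magW}$ a genuine $\Hoermr{0}{\rho}$ symbol even though $\varphi \in S^+_\rho$ itself carries no bound on $\varphi$ (only on its derivatives of order~$\geq 1$); one does need some care to justify the expansion since $\varphi$ is not a H\"ormander symbol, but the mechanism is correct.

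The converse direction, however, contains a concrete error. Your chosen test function $\varphi(X) = x_j\,\psi(\xi)$ with non-constant $\psi$ is \emph{not} in $S^+_\rho$: taking $a=0$ and $\sabs{\alpha}\geq 1$ gives
\begin{align*}
	\partial_\xi^\alpha\varphi(x,\xi) = x_j\,\partial_\xi^\alpha\psi(\xi),
\end{align*}
and the unbounded factor $x_j$ violates the defining estimate $\sabs{\partial_x^a\partial_\xi^\alpha\varphi}\leq C_{a\alpha}\sexpval{\xi}^{(1-\sabs{\alpha})\rho}$. The only admissible choice is $\psi\equiv\mathrm{const}$, which collapses to the Beals test function $x_j$ and carries no $\rho$-weighting. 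So the mechanism you propose for extracting the decay $\sabs{\partial_{\xi_j}F}\lesssim\sexpval{\xi}^{-\rho}$ does not work as written. Legitimate $S^+_\rho$ test functions that actually detect the $\rho$-parameter are, for instance, $\sexpval{\xi}^\rho$ itself (check: $\sabs{\partial_\xi^\alpha\sexpval{\xi}^\rho}\lesssim\sexpval{\xi}^{\rho-\sabs{\alpha}}\leq\sexpval{\xi}^{(1-\sabs{\alpha})\rho}$ since $\rho\leq 1$), but the commutator with such a function produces $\partial_x F$ weighted by $\partial_\xi\sexpval{\xi}^\rho$, not $\partial_\xi F$; the passage from boundedness of such commutators to the full $S^0_\rho$ seminorm family is considerably more indirect than your sketch suggests. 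You correctly flag the extraction step as the main obstacle, but the resolution you offer does not survive checking membership in $S^+_\rho$.
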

The Beals and Bony criteria can be extended to probe whether a distribution is really a Hörmander symbol of type $m \in \R$: for any $m > 0$, we define 
\begin{align*}
	\mathfrak{p}_{m,\lambda}(X) := \expval{\xi}^m + \lambda 
	. 
\end{align*}
It has been proven in \cite[Thm.~1.8]{Mantoiu_Purice_Richard:Cstar_algebraic_framework:2007} that for $\lambda$ large enough, $\mathfrak{p}_{m,\lambda}$ is invertible with respect to the composition law $\magW$ and that its inverse $\mathfrak{p}_{m,\lambda}^{(-1)_B}$ belongs to $\Hoermr{-m}{1}$. So for any $m > 0$ we can fix $\lambda = \lambda(m)$ such that $\mathfrak{p}_{m,\lambda(m)}$ is invertible. Then, for arbitrary $m \in \R$ we set
\begin{align*}
	\mathfrak{r}_m :=
	\left \{
	\begin{matrix}
		\mathfrak{p}_{m , \lambda(m)} & \mbox{ for } m > 0 \\
		1 & \mbox{ for } m = 0 \\
		\mathfrak{p}_{\abs{m},\lambda(\abs{m})}^{(-1)_B} & \mbox{ for } m < 0 \\
	\end{matrix}
	\right .
	. 
\end{align*}
By construction, relation $\mathfrak{r}_m^{(-1)_B} = \mathfrak{r}_{-m}$ holds for all $m \in \R$. The straight-forward generalizations of the Beals and Bony criteria read 
\begin{thm}[Theorem 5.21 in \cite{Iftimie_Mantoiu_Purice:commutator_criteria:2008}]
	Assume the components of $B$ are of class $\BCont^{\infty}$. A distribution $F \in \Schwartz'(\Pspace)$ is a symbol of type $\Hoermr{m}{\rho}$, $m \in \R$, $\rho \in [0,1]$, if and only if for any $n,k \in \N_0$ and any collection of vectors $x_1 , \ldots , x_n \in \R^d$, $\xi_1 , \ldots \xi_k \in {\R^d}^*$ the following holds true: 
	\begin{align*}
		\mathfrak{r}_{-(m - k \rho)} \magW \ad^B_{x_1} \cdots \ad^B_{x_n} \ad^B_{\xi_1} \cdots \ad^B_{\xi_k}(F) \in \mathfrak{C}^B 
	\end{align*}
	The two families of seminorms 
	\begin{align*}
		\bnorm{\mathfrak{r}_{-(m - \sabs{\alpha} \rho)} \partial_x^a \partial_{\xi}^{\alpha}F}_{\infty} 
	\end{align*}
	indexed by $a , \alpha \in \N_0^d$ and 
	\begin{align*}
		\norm{\mathfrak{r}_{-(m - k \rho)} \magW \ad^B_{x_1} \cdots \ad^B_{x_n} \ad^B_{\xi_1} \cdots \ad^B_{\xi_k}(F)}_{\mathfrak{C}^B}
	\end{align*}
	indexed by $n , k \in \N_0$ and sets of vectors in $\Pspace$ define equivalent topologies on $\Hoermr{m}{\rho}$. 
\end{thm}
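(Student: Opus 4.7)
The plan is to bootstrap the statement from the magnetic Bony criterion for order $0$ by using the Moyal-invertible weights $\mathfrak{r}_m$ to shift orders. The starting observation is that, under the assumption of bounded $B$, one has $\Hoermr{m_1}{\rho} \magW \Hoermr{m_2}{\rho} \subseteq \Hoermr{m_1+m_2}{\rho}$, together with $\mathfrak{r}_m \in \Hoermr{m}{1} \subseteq \Hoermr{m}{\rho}$ and $\mathfrak{r}_{-m} = \mathfrak{r}_m^{(-1)_B} \in \Hoermr{-m}{1}$. Consequently $F \in \Hoermr{m}{\rho}$ if and only if $\mathfrak{r}_{-m} \magW F \in \Hoermr{0}{\rho}$, and these equivalences are continuous with respect to the canonical Fréchet topologies. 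So the whole statement reduces to the case $m=0$, for which I would invoke the magnetic Bony criterion already stated in the excerpt.

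Next I would establish how $\ad^B_x$ (for $x \in \R^d$, acting as $\ad^B_{l_{(x,0)}}$) and $\ad^B_\xi$ (for $\xi \in {\R^d}^*$, acting as $\ad^B_{l_{(0,\xi)}}$) interact with the Hörmander filtration. Since $l_{(x,0)} \in S^+_0 \subseteq S^+_\rho$ and $l_{(0,\xi)} \in S^+_1 \subseteq S^+_\rho$, the Bony-style arguments give the order estimates
\begin{align*}
\ad^B_x : \Hoermr{m}{\rho} \longrightarrow \Hoermr{m}{\rho}, \qquad \ad^B_\xi : \Hoermr{m}{\rho} \longrightarrow \Hoermr{m-\rho}{\rho},
\end{align*}
which is exactly the reason the weight $\mathfrak{r}_{-(m-k\rho)}$ appears in the statement: after $k$ phase-space commutators and $n$ configuration-space commutators, the order of $\ad^B_{x_1}\cdots\ad^B_{x_n}\ad^B_{\xi_1}\cdots\ad^B_{\xi_k}(F)$ is $m-k\rho$, so Moyal-multiplying by $\mathfrak{r}_{-(m-k\rho)}$ reduces to order $0$, i.e.\ to a candidate element of $\mathfrak{C}^B$ by $L^2$-boundedness (Theorem~\ref{magWQ:important_results:continuity_and_selfadjointness:thm:magnetic_Calderon_Vaillancourt}).

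For the forward implication I would therefore assume $F \in \Hoermr{m}{\rho}$ and just track orders through the composition and commutators, concluding that $\mathfrak{r}_{-(m-k\rho)} \magW \ad^B_{x_1}\cdots\ad^B_{\xi_k}(F) \in \Hoermr{0}{\rho} \subseteq \mathfrak{C}^B$, with explicit control of the operator norm in terms of the seminorms $\bnorm{\mathfrak{r}_{-(m-\sabs{\alpha}\rho)}\partial_x^a \partial_\xi^\alpha F}_\infty$. For the converse I would apply the $m=0$ Bony criterion to $\mathfrak{r}_{-m} \magW F$: one has to check that all iterated $\ad^B_\varphi$ commutators for $\varphi \in S^+_\rho$ land in $\mathfrak{C}^B$. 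By a Leibniz expansion and density/linearity arguments, every such commutator is a finite linear combination of Moyal products of derivatives of $\varphi$ (which lie in the relevant Hörmander classes) with iterated linear commutators $\ad^B_{x_1}\cdots\ad^B_{\xi_k}(\mathfrak{r}_{-m}\magW F)$. Writing $\mathfrak{r}_{-m}\magW F = \mathfrak{r}_{-m}\magW\mathfrak{r}_{m-k\rho}\magW\mathfrak{r}_{-(m-k\rho)}\magW(\cdots)$ and pushing the weights past the $\ad^B$'s (picking up lower-order remainders that are handled inductively) lets one dominate each piece in the $\mathfrak{C}^B$ norm by the hypothesized seminorms.

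The main obstacle I expect is precisely this commuting-past step: $\ad^B_x$ and $\ad^B_\xi$ do not commute with Moyal multiplication by $\mathfrak{r}_{\pm(m-k\rho)}$, so one must use the Moyal Leibniz rule
\begin{align*}
\ad^B_X(G \magW H) = \ad^B_X(G) \magW H + G \magW \ad^B_X(H)
\end{align*}
and carry out an induction on $n+k$ that absorbs all the lower-order commutators $[\mathfrak{r}_{-(m-k\rho)}, \ad^B_{x_j}]_{\magW}$ and $[\mathfrak{r}_{-(m-k\rho)}, \ad^B_{\xi_j}]_{\magW}$ into the estimates. Getting uniform continuity of the reshuffling, so that the two families of seminorms are not merely equivalent as families but generate the \emph{same} Fréchet topology, is the technically demanding point; it rests on the fact that the $\mathfrak{r}_m$ themselves have commutators of controlled order via the magnetic Bony criterion at $m=0$, which provides the inductive base.
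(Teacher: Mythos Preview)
The paper does not contain a proof of this theorem. It is stated in Section~\ref{magWQ:important_results:commutator_criteria} as a result quoted verbatim from \cite{Iftimie_Mantoiu_Purice:commutator_criteria:2008} (Theorem~5.21 there), along with the magnetic Beals and Bony criteria, under the explicit disclaimer that these are presented ``for convenience of the reader'' and that proofs are to be found in the cited references. So there is nothing in the present paper to compare your proposal against.

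That said, your outline is the natural strategy and is in the spirit of how the result is obtained in \cite{Iftimie_Mantoiu_Purice:commutator_criteria:2008}: reduce to the order-zero case via the invertible weights $\mathfrak{r}_m$, then invoke the $m=0$ criterion; the order-shifting behavior $\ad^B_\xi : \Hoermr{m}{\rho} \to \Hoermr{m-\rho}{\rho}$ and $\ad^B_x : \Hoermr{m}{\rho} \to \Hoermr{m}{\rho}$ is exactly what produces the weight $\mathfrak{r}_{-(m-k\rho)}$. Your identification of the main technical point --- pushing the weights past the $\ad^B$ derivations via the Moyal Leibniz rule and controlling the resulting lower-order commutators by induction on $n+k$ --- is also correct; this is precisely where the work lies in the original proof. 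The sketch is sound as a plan, but the inductive bookkeeping and the uniformity needed for the topology equivalence are nontrivial and would have to be carried out in full to constitute a proof.
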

\begin{thm}[Theorem 5.24 in \cite{Iftimie_Mantoiu_Purice:commutator_criteria:2008}]
	Assume the components of $B$ are of class $\BCont^{\infty}$. A distribution $F \in \Schwartz'(\Pspace)$ is a symbol of type $\Hoermr{m}{\rho}$, $m \in \R$, $\rho \in [0,1]$, if and only if for any $n \in \N_0$ and any family $\{ \varphi_1 , \ldots \varphi_n \} \subset S^+_{\rho}$ 
	\begin{align*}
		\mathfrak{r}_{-m} \magW \ad^B_{\varphi_1} \cdots \ad^B_{\varphi_n}(F) \in \mathfrak{C}^B 
	\end{align*}
	holds true. 
\end{thm}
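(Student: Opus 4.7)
The plan is to reduce this generalized Bony criterion to the already-established case $m=0$ by exploiting the ellipticity and Moyal-invertibility of the weight $\mathfrak{r}_m$. The key structural fact is that $\mathfrak{r}_{-m} \in \Hoermr{-m}{1} \subseteq \Hoermr{-m}{\rho}$ with Moyal inverse $\mathfrak{r}_m \in \Hoermr{m}{\rho}$, so that multiplication by $\mathfrak{r}_{-m}$ (from the left, say) should implement a topological isomorphism $\Hoermr{m}{\rho} \to \Hoermr{0}{\rho}$. Once this is in place, we set $G := \mathfrak{r}_{-m} \magW F$ and attempt to characterize membership of $F$ in $\Hoermr{m}{\rho}$ via membership of $G$ in $\Hoermr{0}{\rho}$, to which the known $m=0$ Bony criterion applies.

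For the ``only if'' direction I would proceed as follows. Assuming $F \in \Hoermr{m}{\rho}$, the first step is to verify that for $\varphi \in S^+_\rho$ the Moyal commutator $\ad^B_\varphi$ preserves the Hörmander class $\Hoermr{m}{\rho}$: the leading pointwise-product terms in $\varphi \magW F$ and $F \magW \varphi$ cancel in the asymptotic expansion of $\magW$ (Chapter~\ref{asymptotics:expansions}), while the remaining terms involve derivatives of $\varphi$, which by the very definition of $S^+_\rho$ carry precisely the decay needed to keep the $\xi$-order at $m$. Iterating, $\ad^B_{\varphi_1}\cdots\ad^B_{\varphi_n}(F) \in \Hoermr{m}{\rho}$, and the composition theorem together with Calderón--Vaillancourt (Theorem~\ref{magWQ:important_results:continuity_and_selfadjointness:thm:magnetic_Calderon_Vaillancourt}) yields
\begin{equation*}
\mathfrak{r}_{-m} \magW \ad^B_{\varphi_1}\cdots\ad^B_{\varphi_n}(F) \in \Hoermr{-m}{\rho} \magW \Hoermr{m}{\rho} \subseteq \Hoermr{0}{\rho} \subset \mathfrak{C}^B.
\end{equation*}

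For the ``if'' direction, I would set $G := \mathfrak{r}_{-m} \magW F$ and aim to apply the $m=0$ Bony criterion to $G$; this requires $\ad^B_{\varphi_1}\cdots\ad^B_{\varphi_n}(G) \in \mathfrak{C}^B$ for every $n$ and every family in $S^+_\rho$. Since each $\ad^B_{\varphi_j}$ is a derivation for $\magW$, a Leibniz-type expansion gives
\begin{equation*}
\ad^B_{\varphi_1}\cdots\ad^B_{\varphi_n}(G) = \sum_{S \subseteq \{1,\ldots,n\}} \ad^B_{\vec\varphi_S}(\mathfrak{r}_{-m}) \magW \ad^B_{\vec\varphi_{S^c}}(F).
\end{equation*}
For each summand I would insert the identity $\mathfrak{r}_m \magW \mathfrak{r}_{-m} = 1$ to rewrite the first factor as
\begin{equation*}
\ad^B_{\vec\varphi_S}(\mathfrak{r}_{-m}) = h_S \magW \mathfrak{r}_{-m}, \qquad h_S := \ad^B_{\vec\varphi_S}(\mathfrak{r}_{-m}) \magW \mathfrak{r}_m,
\end{equation*}
and since $\mathfrak{r}_{-m}$ itself lies in $\Hoermr{-m}{\rho}$ the already-proven ``only if'' direction (applied to $\mathfrak{r}_{-m}$) gives $h_S \in \Hoermr{0}{\rho} \subset \mathfrak{C}^B$. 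Consequently
\begin{equation*}
\ad^B_{\vec\varphi_S}(\mathfrak{r}_{-m}) \magW \ad^B_{\vec\varphi_{S^c}}(F) = h_S \magW \bigl(\mathfrak{r}_{-m} \magW \ad^B_{\vec\varphi_{S^c}}(F)\bigr) \in \mathfrak{C}^B \magW \mathfrak{C}^B \subseteq \mathfrak{C}^B,
\end{equation*}
using the hypothesis on the second factor. The $m=0$ Bony criterion then yields $G \in \Hoermr{0}{\rho}$, and composition with $\mathfrak{r}_m$ from the left gives $F = \mathfrak{r}_m \magW G \in \Hoermr{m}{\rho}$.

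The main obstacle is the preservation property $\ad^B_\varphi(\Hoermr{s}{\rho}) \subseteq \Hoermr{s}{\rho}$ for $\varphi \in S^+_\rho$, since $\varphi$ itself need not belong to a Hörmander class but only its derivatives do. Handling this cleanly requires the asymptotic expansion of the magnetic Moyal product, exploiting the cancellation of the pointwise-product term so that the commutator is in fact controlled by $\nabla\varphi$ rather than $\varphi$; once this is verified, the remaining arguments reduce to the composition theorem, the algebra structure of $\mathfrak{C}^B$, and the invertibility of $\mathfrak{r}_m$.
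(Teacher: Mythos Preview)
The paper does not prove this theorem; it is quoted verbatim from \cite{Iftimie_Mantoiu_Purice:commutator_criteria:2008} as background material in Section~\ref{magWQ:important_results:commutator_criteria}, so there is no in-paper proof to compare against. Your reduction to the $m=0$ case via the Moyal-invertible weight $\mathfrak{r}_m$ is exactly the strategy used in the cited reference, and your Leibniz decomposition for the ``if'' direction is the right mechanism.

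Two small remarks on the write-up. First, when you argue that $h_S \in \Hoermr{0}{\rho}$ you do not actually need to invoke the ``only if'' direction: once the preservation property $\ad^B_\varphi(\Hoermr{s}{\rho}) \subseteq \Hoermr{s}{\rho}$ is in hand, it applies directly to $\mathfrak{r}_{-m} \in \Hoermr{-m}{\rho}$, giving $\ad^B_{\vec\varphi_S}(\mathfrak{r}_{-m}) \in \Hoermr{-m}{\rho}$ and hence $h_S \in \Hoermr{0}{\rho}$ by the composition theorem. Second, in the ``if'' direction you must check that $G = \mathfrak{r}_{-m} \magW F$ is well-defined for an arbitrary $F \in \Schwartz'(\Pspace)$; this follows since $\mathfrak{r}_{-m}$ lies in the magnetic Moyal algebra $\mathcal{M}^B(\Pspace)$ (Theorem~\ref{magWQ:extension:important_subclasses:Hoermander}) and $\magW$ extends to $\mathcal{M}^B(\Pspace) \times \Schwartz'(\Pspace)$. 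The preservation property you flag as the main obstacle is indeed the substantive analytic step, and in \cite{Iftimie_Mantoiu_Purice:commutator_criteria:2008} it is established precisely via the cancellation of the pointwise-product term in the commutator expansion that you describe.
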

In the next section, we will see how the Beals and Bony criterion can be applied. 


\subsection{Inversion and holomorphic functional calculus} 
\label{magWQ:important_results:inversion}
%
Usually, the theory of pseudodifferential operators is seen either from an analytic or an algebraic point of view. It turns out that one can benefit from making a connection between the two and use them simultaneously to one's advantage. One such notion from the intersection of the two topics is that of a $\Psi^*$-algebra: 
\begin{defn}[$\Psi^*$-algebra\index{$\Psi^*$-algebra}]\label{magWQ:important_results:inversion:defn:Psi_star_algebra}
	Let $\Psi$ be a unital $C^*$-subalgebra of a $C^*$-algebra $\Alg$. We say that $\Psi$ is a $\Psi^*$-algebra if it is \emph{spectrally invariant} (or full), \ie 
	\begin{align*}
		\Psi \cap \Alg^{-1} = \Psi^{-1} 
	\end{align*}
	where $\Alg^{-1}$ and $\Psi^{-1}$ are the groups of invertible elements of $\Alg$ and $\Psi$, repsectively, and if $\Psi$ can be endowed with a Fréchet topology $\tau_{\Psi}$ such that $\Psi \hookrightarrow \Alg$ can be continuously embedded in $\Alg$. 
\end{defn}
In Chapter~\ref{psiDO_reloaded:inversion_and_affiliation}, we will explore this connection in more detail. 

An easy to prove consequence of the magnetic Bony criterion is that resolvents of magnetic pseudodifferential operators -- should they exist -- are again pseudodifferential operators. In Chapter~\ref{psiDO_reloaded}, we will show that even anisotropies are preserved under inversion. 
\begin{thm}[Propositions~6.28 and 6.29 in \cite{Iftimie_Mantoiu_Purice:commutator_criteria:2008}]\label{magWQ:important_results:inversion:thm:inversion_Hoermander}
	Assume $\rho \in [0,1]$. 
	\begin{enumerate}[(i)]
		\item For $m \geq 0$ if $F \in \Hoermr{m}{\rho}$ is invertible in $\mathcal{M}^B(\Pspace)$ with $\mathfrak{r}_m \magW F^{(-1)_B} \in \mathfrak{C}^B$, then $F^{(-1)_B} \in \Hoermr{-m}{\rho}$ holds. 
		\item For $m < 0$ if $F \in \Hoermr{m}{\rho}$ is such that $1 + F$ invertible in $\mathfrak{C}^B$, then $(1 + F)^{(-1)_B} - 1 \in \Hoermr{m}{\rho}$ holds. 
	\end{enumerate}
\end{thm}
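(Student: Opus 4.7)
The plan is to derive both statements from the magnetic Bony criterion (Theorem~5.24 in \cite{Iftimie_Mantoiu_Purice:commutator_criteria:2008}) combined with the basic derivation identity
\[
\ad^B_\varphi(F^{(-1)_B}) = - F^{(-1)_B} \magW \ad^B_\varphi(F) \magW F^{(-1)_B},
\]
which follows because $\ad^B_\varphi = [\varphi,\cdot]_{\magW}$ is a derivation and $F \magW F^{(-1)_B} = 1$. Setting $G := F^{(-1)_B}$ and iterating yields a noncommutative Faà di Bruno expansion: for any $\varphi_1,\dots,\varphi_n \in S^+_\rho$, the expression $\ad^B_{\varphi_1}\cdots\ad^B_{\varphi_n}(G)$ is a finite signed sum of terms of the form
\[
G \magW H_1 \magW G \magW H_2 \magW G \magW \cdots \magW G \magW H_k \magW G,
\]
indexed by ordered partitions $\{I_1,\dots,I_k\}$ of $\{1,\dots,n\}$, where $H_j := \ad^B_{\varphi_{I_j}}(F)$ lies in $\Hoermr{m - \abs{I_j}\rho}{\rho} \subset \Hoermr{m}{\rho}$ by the Bony characterization applied to $F$.

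For part (i), I would verify that $\mathfrak{r}_m \magW \ad^B_{\varphi_1}\cdots\ad^B_{\varphi_n}(G) \in \mathfrak{C}^B$ for every $n \geq 0$ and every choice of $\varphi_j$; the Bony criterion then delivers $G \in \Hoermr{-m}{\rho}$. The base case $n=0$ is exactly the hypothesis $\mathfrak{r}_m \magW F^{(-1)_B} \in \mathfrak{C}^B$. For $n \geq 1$, I use associativity and the identity $1 = \mathfrak{r}_{-m}\magW \mathfrak{r}_m$ to insert $\mathfrak{r}_{-m}\magW \mathfrak{r}_m$ between each $H_j$ and the subsequent $G$, rewriting each summand as a Moyal product whose successive factors are alternating blocks of the form $\mathfrak{r}_m \magW G$ and $H_j \magW \mathfrak{r}_{-m}$. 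The first kind lies in $\mathfrak{C}^B$ by hypothesis, while $H_j \magW \mathfrak{r}_{-m} \in \Hoermr{m}{\rho}\magW \Hoermr{-m}{1} \subset \Hoermr{0}{\rho}$ by the symbol-composition theorem of \cite{Iftimie_Mantiou_Purice:magnetic_psido:2006}, and $\Hoermr{0}{\rho} \subset \mathfrak{C}^B$ by the magnetic Caldéron--Vaillancourt theorem.

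Part (ii) then reduces to part (i). With $m < 0$ one has $F \in \Hoermr{m}{\rho} \subset \Hoermr{0}{\rho}$, so $1+F \in \Hoermr{0}{\rho}$; the assumption that $1+F$ is invertible in $\mathfrak{C}^B$ furnishes $(1+F)^{(-1)_B} \in \mathfrak{C}^B$, and since $\mathfrak{r}_0 = 1$ the hypothesis of part (i) at order zero is trivially met. Hence $(1+F)^{(-1)_B} \in \Hoermr{0}{\rho}$, and the algebraic identity $(1+F)^{(-1)_B} - 1 = -(1+F)^{(-1)_B} \magW F$, together with the composition $\Hoermr{0}{\rho}\magW \Hoermr{m}{\rho} \subset \Hoermr{m}{\rho}$, concludes the argument.

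The main obstacle is the bookkeeping of the Faà di Bruno expansion coupled with the noncommutativity of $\magW$: $\mathfrak{r}_m$ does not commute with $G$ or with the $H_j$, so the $\mathfrak{r}_{-m}\magW \mathfrak{r}_m$ insertions cannot simply be absorbed but must be arranged associatively. A subsidiary point is that the version of the Bony criterion most cleanly stated involves multiplying by $\mathfrak{r}_m$ only on the left, whereas the argument naturally produces $H_j \magW \mathfrak{r}_{-m}$ on the right inside each block; confirming that this right-handed multiplication still lands in $\mathfrak{C}^B$ is where the symbol-composition theorem provides the essential symmetric analytic input.
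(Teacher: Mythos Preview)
The thesis does not supply its own proof here; the result is simply quoted from \cite{Iftimie_Mantoiu_Purice:commutator_criteria:2008}. Your argument via the Bony criterion and the noncommutative Fa\`a di Bruno expansion of $\ad^B_{\varphi_1}\cdots\ad^B_{\varphi_n}\bigl(F^{(-1)_B}\bigr)$ is correct and is exactly the approach of that reference. Two small points deserve comment. First, the Bony criterion as stated carries no order gain from $S^+_\rho$-commutators, so it only gives $H_j \in \Hoermr{m}{\rho}$, not $\Hoermr{m-\abs{I_j}\rho}{\rho}$; this is harmless since you only use the weaker inclusion. Second, in part~(ii) you invoke part~(i) at $m=0$, but part~(i) assumes invertibility in $\mathcal{M}^B(\Pspace)$ whereas you only have invertibility in $\mathfrak{C}^B$; the clean fix is to rerun the Fa\`a di Bruno argument directly for $(1+F)^{(-1)_B}\in\mathfrak{C}^B$, the Leibniz rule being supplied by associativity of the $\mathcal{M}^B(\Pspace)$-action on $\Schwartz'(\Pspace)$.

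For context, the thesis later (Proposition~\ref{psiDO_reloaded:inversion_affiliation:prop:inversion_Hoermander_m}) proves the anisotropic analogue of part~(i) by a different route: it packages the $m=0$ case as the $\Psi^*$-property of $\Hoermr{0}{\rho}$ in $\mathfrak{C}^B$, then for $m>0$ notes that $F\magW\mathfrak{r}_{-m}\in\Hoermr{0}{\rho}$ has $\mathfrak{C}^B$-inverse $\mathfrak{r}_m\magW F^{(-1)_B}$, so spectral invariance gives $\mathfrak{r}_m\magW F^{(-1)_B}\in\Hoermr{0}{\rho}$ and hence $F^{(-1)_B}=\mathfrak{r}_{-m}\magW\bigl(\mathfrak{r}_m\magW F^{(-1)_B}\bigr)\in\Hoermr{-m}{\rho}$. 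Your commutator computation is what underlies the $\Psi^*$-property itself; the abstract route buys that closed symmetric Fr\'echet subalgebras inherit spectral invariance (Theorem~\ref{magWQ:important_results:inversion:thm:Psi_star_sub_also_Psi_star}), which is how anisotropy is preserved under inversion without redoing the estimates.
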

Put another way, it was just shown that 
\begin{thm}
	$\Hoermr{0}{\rho} \hookrightarrow \mathfrak{C}^B$ is a $\Psi^*$-algebra for $\rho \in [0,1]$. 
\end{thm}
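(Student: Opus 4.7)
The plan is to verify the three conditions of Definition~\ref{magWQ:important_results:inversion:defn:Psi_star_algebra} in turn, with the heavy lifting already done by Theorem~\ref{magWQ:important_results:inversion:thm:inversion_Hoermander}. The definition as stated asks for a ``unital $C^*$-subalgebra'', but in light of the spectral invariance requirement and the fact that $\Hoermr{0}{\rho}$ is patently not closed in the $C^*$-norm of $\mathfrak{C}^B$, this should be read as a continuously embedded unital $*$-subalgebra carrying its own Fréchet structure, which is the standard notion of a $\Psi^*$-algebra in the literature.

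First, I would verify that $\Hoermr{0}{\rho}$ is a unital $*$-subalgebra of $\mathfrak{C}^B$. It is a vector subspace of $\Schwartz'(\Pspace)$ containing the constant function $1$ (which is the unit of $\mathfrak{C}^B$). Closure under $\magW$ is the last theorem quoted in Chapter~\ref{magWQ:extension:important_subclasses} applied with $m_1 = m_2 = 0$, and closure under complex conjugation is immediate from the seminorm definition of $\Hoermr{0}{\rho}$. The inclusion $\Hoermr{0}{\rho} \subset \mathfrak{C}^B$ is precisely the content of the magnetic Caldéron–Vaillancourt theorem (Theorem~\ref{magWQ:important_results:continuity_and_selfadjointness:thm:magnetic_Calderon_Vaillancourt}).

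Next, $\Hoermr{0}{\rho}$ carries the standard Fréchet topology generated by the seminorms $\bnorm{f}_{0,a\alpha} = \sup_{X \in \Pspace} \sexpval{\xi}^{\rho \abs{\alpha}} \, \babs{\partial_x^a \partial_\xi^\alpha f(X)}$. The continuity of the embedding $\Hoermr{0}{\rho} \hookrightarrow \mathfrak{C}^B$ then reads
\begin{align*}
	\bnorm{f}_{\mathfrak{C}^B} = \bnorm{\Op^A(f)}_{\mathcal{B}(L^2(\R^d))} \leq C(d) \sup_{\abs{a},\abs{\alpha} \leq p(d)} \bnorm{f}_{0,a\alpha} ,
\end{align*}
which is once again exactly the bound furnished by Theorem~\ref{magWQ:important_results:continuity_and_selfadjointness:thm:magnetic_Calderon_Vaillancourt}.

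Finally, the only real content is spectral invariance, and this reduces immediately to Theorem~\ref{magWQ:important_results:inversion:thm:inversion_Hoermander}(i) specialised to $m = 0$. Suppose $F \in \Hoermr{0}{\rho}$ is invertible in $\mathfrak{C}^B$, with Moyal inverse $G \in \mathfrak{C}^B$. Since $\mathfrak{C}^B \subset \mathcal{M}^B(\Pspace)$, the element $F$ is a fortiori invertible in the magnetic Moyal algebra with the same inverse $F^{(-1)_B} = G$. Because $\mathfrak{r}_0 = 1$, the hypothesis $\mathfrak{r}_0 \magW F^{(-1)_B} \in \mathfrak{C}^B$ of Theorem~\ref{magWQ:important_results:inversion:thm:inversion_Hoermander}(i) is automatic, and we conclude $F^{(-1)_B} \in \Hoermr{0}{\rho}$, which is spectral invariance. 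The potentially delicate point — existence of an inverse in the Moyal algebra from existence of an inverse in the $C^*$-algebra — is trivial here precisely because $\mathfrak{C}^B$ sits inside $\mathcal{M}^B(\Pspace)$; all the actual analytic work (controlling $\magW$-commutators of $F^{(-1)_B}$ via the Bony criterion) is bundled into Theorem~\ref{magWQ:important_results:inversion:thm:inversion_Hoermander} and does not need to be redone.
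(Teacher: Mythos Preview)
Your overall strategy matches the paper's: the theorem is introduced there as a direct restatement of Theorem~\ref{magWQ:important_results:inversion:thm:inversion_Hoermander} (``Put another way, it was just shown that\ldots''), with the unital $*$-subalgebra structure and the continuous Fr\'echet embedding coming from the magnetic Cald\'eron--Vaillancourt theorem and closure of $\Hoermr{0}{\rho}$ under $\magW$, exactly as you describe.

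However, your spectral invariance argument rests on a false claim: you assert $\mathfrak{C}^B \subset \mathcal{M}^B(\Pspace)$, and this fails. By Proposition~\ref{magWQ:extension:magnetic_moyal_algebra:prop:mag_Moyal_algebra_L_S_L_Sprime}, $\Op^A$ identifies $\mathcal{M}^B(\Pspace)$ with $\mathcal{L}(\Schwartz(\R^d)) \cap \mathcal{L}(\Schwartz'(\R^d))$, while $\mathfrak{C}^B$ corresponds to all of $\mathcal{B}(L^2(\R^d))$; a generic bounded operator on $L^2$ need not preserve $\Schwartz$. The remark immediately after Theorem~\ref{magWQ:extension:important_subclasses:Hoermander} even records an explicit counterexample: the rank-one projector $\sopro{u}{u}$ for $u \in L^2(\R^d) \setminus \Schwartz(\R^d)$ lies outside the Moyal algebra. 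So from $F^{(-1)_B} \in \mathfrak{C}^B$ alone you cannot conclude that $F$ is invertible \emph{in $\mathcal{M}^B(\Pspace)$}, which is precisely the hypothesis Theorem~\ref{magWQ:important_results:inversion:thm:inversion_Hoermander}(i) asks for. The step from $L^2$-invertibility to Moyal-algebra invertibility is the genuine analytic content here; it is established in \cite{Iftimie_Mantoiu_Purice:commutator_criteria:2008} through the Bony commutator machinery (iterated $\ad^B$-bounds applied to the resolvent), and the thesis is tacitly absorbing that work into the quoted Propositions~6.28/6.29 rather than reproving it. Your sentence ``the potentially delicate point \ldots\ is trivial here'' is therefore exactly backwards.
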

Some results on $\Psi^*$-algebras will prove very useful in Chapter~\ref{psiDO_reloaded}: 
\begin{thm}[Corollary~2.5 in \cite{Lauter:operator_theoretical_approach_melrose_algebras:1998}]
	\label{magWQ:important_results:inversion:thm:Psi_star_sub_also_Psi_star}
	Let $\Psi \subseteq \Alg$ be a $\Psi^*$-algebra and $\Psi' \subseteq \Psi$ be a closed, symmetric subalgebra of $\Psi$ with unit. Then $\Psi' \hookrightarrow \Alg$ endowed with the restricted topology $\tau_{\Psi'} := \tau_{\Psi} \vert_{\Psi'}$ is again a $\Psi^*$-algebra. 
\end{thm}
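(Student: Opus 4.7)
The plan is to verify directly the three defining ingredients of a $\Psi^*$-algebra from Definition~\ref{magWQ:important_results:inversion:defn:Psi_star_algebra} for the pair $\Psi' \subseteq \Alg$: that $\Psi'$ is a unital $C^*$-subalgebra of $\Alg$, that it is spectrally invariant, $\Psi' \cap \Alg^{-1} = {\Psi'}^{-1}$, and that $\bigl(\Psi', \tau_{\Psi'}\bigr)$ is a Fréchet space whose embedding into $\Alg$ is continuous. The first point is essentially given: $\Psi'$ is a unital, symmetric subalgebra by hypothesis, and interpreting ``closed'' in the only way that is consistent with demanding a $C^*$-subalgebra structure—namely $C^*$-norm closed, via the inclusions $\Psi' \subseteq \Psi \subseteq \Alg$—makes $\Psi'$ automatically a unital $C^*$-subalgebra of $\Alg$.

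For the topological part, the key observation is that continuity of $\Psi \hookrightarrow \Alg$ forces the seminorms generating $\tau_\Psi$ to dominate the $C^*$-norm on $\Psi$, so $\tau_\Psi$ is finer than the $C^*$-norm topology restricted to $\Psi$. Consequently the $C^*$-norm closed subset $\Psi'$ is also $\tau_\Psi$-closed in $\Psi$, and as a closed subspace of the Fréchet space $(\Psi, \tau_\Psi)$ it inherits a Fréchet structure from the restricted topology $\tau_{\Psi'} = \tau_\Psi \vert_{\Psi'}$. Continuity of $\Psi' \hookrightarrow \Alg$ then follows by simply restricting the continuous embedding $\Psi \hookrightarrow \Alg$ to $\Psi'$.

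The most interesting step is the spectral invariance. Given $a \in \Psi' \cap \Alg^{-1}$, the inclusion $\Psi' \subseteq \Psi$ combined with the spectral invariance of $\Psi$ already yields $a^{-1} \in \Psi$; the remaining task is to upgrade this to $a^{-1} \in \Psi'$. Here I would invoke the standard $C^*$-algebraic fact that for a unital $C^*$-subalgebra $B$ of a unital $C^*$-algebra $A$ one has $\spec_B(b) = \spec_A(b)$ for every $b \in B$. Applied to $\Psi' \subseteq \Alg$—which is legitimate by step one—it gives $0 \notin \spec_{\Psi'}(a)$, so $a$ is already invertible inside $\Psi'$ and $a^{-1} \in \Psi'$ as required. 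The main (rather mild) obstacle is conceptual: one must keep the two topologies on $\Psi$—the $C^*$-norm topology inherited from $\Alg$ and the strictly finer Fréchet topology $\tau_\Psi$—cleanly separated, and recognize that the spectral invariance in the present definition is really a $C^*$-statement, so that the holomorphic-functional-calculus machinery usually needed for Gramsch-type $\Psi^*$-algebras can be bypassed entirely.
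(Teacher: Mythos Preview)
Your argument is internally consistent but rests on a misreading—one that the thesis's own Definition~\ref{magWQ:important_results:inversion:defn:Psi_star_algebra} unfortunately invites. The phrase ``unital $C^*$-subalgebra'' there is a slip: a $\Psi^*$-algebra in the sense of Gramsch and Lauter is a unital Fr\'echet $*$-algebra continuously embedded in a $C^*$-algebra, \emph{not} itself a $C^*$-subalgebra. Look at the thesis's own central example: $\Hoermr{0}{\rho}$ with its family of seminorms is not $C^*$-norm closed in $\AB$ (it consists of smooth symbols, whereas $\AB$ contains far rougher distributions). If $\Psi$ were genuinely a $C^*$-subalgebra, spectral invariance would follow immediately from the very $C^*$-fact you quote, and the notion of $\Psi^*$-algebra would be empty.

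Correspondingly, ``closed'' in the hypothesis on $\Psi'$ means closed in the Fr\'echet topology $\tau_\Psi$, not in the $C^*$-norm. This is exactly how the theorem is applied in Proposition~\ref{psiDO_reloaded:inversion_affiliation:prop:inversion_of_Hoermander_symbol_Psi_star}: the relevant input is that $\Hoermrdan{0}{\rho}{0}$ is a closed subspace of the Fr\'echet space $\Hoermrd{0}{\rho}{0}$ (Proposition~\ref{psiDO_reloaded:PsiD_theory:anisotropic_symbol_classes:prop:properties_anisotropic_symbol_classes}~(i)), not that it is $C^*$-norm closed in $\AB$. With the correct reading, $\Psi'$ is in general not a $C^*$-subalgebra of $\Alg$, your invocation of $\spec_{\Psi'}(a) = \spec_{\Alg}(a)$ is unjustified, and the proof collapses precisely at the step you called ``the most interesting''. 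The thesis does not supply a proof here—it simply cites Lauter—but the genuine argument uses that in a $\Psi^*$-algebra the group of invertibles is open and inversion is continuous \emph{in the Fr\'echet topology}; from this one deduces that Fr\'echet-closed unital $*$-subalgebras inherit spectral invariance. That is exactly the holomorphic-functional-calculus/Gramsch machinery you explicitly set aside in your final sentence, and it cannot be bypassed.
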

Hence, if one wants to check whether a symmetric subalgebra $\Psi' \subseteq \Psi \subseteq \Alg$ of a $\Psi^*$-algebra is spectrally invariant, all that is left to prove is closedness under multiplication and taking limits. One is freed from showing spectral invariance \emph{in addition} which is often technically much more challenging than showing closedness. In Chapter~\ref{psiDO_reloaded:inversion_and_affiliation}, we will use this fact to prove that under certain conditions, Moyal resolvents $(f - z)^{(-1)_B}$ retain the $x$-dependence of the original symbol $f$. 

$\Psi^*$-algebras also have a nice holomorphic functional calculus: let $f \in \Psi \subseteq \Alg$ be an element of a $\Psi^*$-algebra and $\varphi : \C \longrightarrow \C$ be a function which is holomorphic in a neighborhood of the spectrum $\sigma(f) := \bigl \{ z \in \C \; \vert \; f - z \, \id \mbox{ is invertible} \bigr \}$. Then 
\begin{align*}
	\varphi(f) := \frac{1}{2 \pi i} \int_{\Gamma} \dd z \, \varphi(z) \, \bigl ( f - z \, \id \bigr )^{-1} \in \Psi 
\end{align*}
is well-defined and again an element of the $\Psi^*$-algebra ($\Gamma$ is a contour surrounding $\sigma(f)$). Since only the behavior of $\varphi$ on the real axis, \ie $\varphi \vert_{\R}$, is important for the functional calculus of selfadjoint elements of our algebra, one is interested in extensions beyond holomorphic functions. Helffer and Sjöstrand have suggested a formula that initially holds for $\varphi \in \Cont^{\infty}_c(\R)$ since these functions have quasianalytic extensions. 
\begin{defn}[Quasianalytic extension\index{quasianalytic extension}]
	Let $\varphi \in \Cont^{\infty}_c(\R)$. Then $\tilde{\varphi} \in \Cont^{\infty}_c(\C)$ is called a \emph{quasianalytic extension of $\varphi$} iff 
	\begin{enumerate}[(i)]
		\item $\tilde{\varphi}$ and $\varphi$ agree on the real axis, $\tilde{\varphi} \vert_{\R} = \varphi$, and 
		\item for all $N \in \N_0$ there exists $C_N > 0$ such that $\babs{\partial_{\bar{z}} \tilde{\varphi}(z)} \leq C_N \, \babs{\Im z}^N$ holds. 
	\end{enumerate}
\end{defn}
\begin{remark}
	Quasianalytic extensions for $\varphi \in \Cont^{\infty}_c(\R)$ are by no means unique: let $\chi \in \Cont^{\infty}_c(\R,[0,1])$ be such that $\chi(x) = 1$ for all $\abs{x} \leq 1$ and $\chi(x) = 0$ for all $\abs{x} \geq 2$. In \cite{Davies:functional_calculus:1995}, it was shown that for each $n \in \N_0$ 
	\begin{align*}
		\tilde{\varphi}(x + i y) := \sum_{k = 0}^n \frac{\varphi^{(k)}(x)}{k!} \, (i y)^k \, \chi \bigl ( \tfrac{y}{\sexpval{x}} \bigr ) 
		, 
		&&
		x , y \in \R 
		, 
	\end{align*}
	defines a quasianalytic extension of $\varphi$. It was shown that the later construction does not depend on the particular choice of $n$ or $\chi$. Alternatively, one could follow Dimassi's and Sjöstrand's suggestion \cite{Dimassi_Sjoestrand:spectral_asymptotics:1999} use 
	\begin{align*}
		\tilde{\varphi}(x + i y) := \frac{\psi(x)}{\sqrt{2\pi}} \int_{\R} \dd \xi \, e^{i (x + i y) \cdot \xi} \, \chi(y \xi) \, (\Fourier \varphi)(\xi) 
	\end{align*}
	where $\psi \in \Cont^{\infty}_c(\R,[0,1])$ is $1$ in a neighborhood of $\supp(\varphi)$. 
\end{remark}
If we are interested in extending the functional calculus to functions which are not necessarily holomorphic, but, say smooth and compactly supported, we can use the Helffer-Sjöstrand formula \cite[Proposition~7.2]{Helffer_Sjoestrand:mag_Schroedinger_equation:1989}: for any $\varphi \in \Cont^{\infty}_c(\R)$ and $f \in \Hoermr{m}{\rho}$, one sets 
\begin{align}
	\varphi^B(f) := \frac{1}{\pi} \int_{\C} \dd z \, \partial_{\bar{z}} \tilde{\varphi}(z) \, (f - z)^{(-1)_B} 
\end{align}
where $\dd z$ denotes the usual Lebesgue measure on the complex plane $\C \cong \R^2$. This formula can be extended to more general classes of functions. Now Iftimie, Măntoiu and Purice have proven 
\begin{thm}[Proposition 6.33 in \cite{Iftimie_Mantoiu_Purice:commutator_criteria:2008}]
	Assume the components of the magnetic field $B$ are of class $\BCont^{\infty}$. Let $\varphi \in \Cont^{\infty}_c(\R)$ and $f \in \Hoermr{m}{\rho}$, $m \in \R$, $\rho \in [0,1]$. If $m > 0$, assume in addition that $g$ is real-valued and elliptic. Then $\varphi^B(f) \in \Hoermr{-m}{\rho}$ holds and for any $A$ representing the magnetic field $B = \dd A$, we have 
	\begin{align*}
		\Op^A \bigl ( \varphi^B(f) \bigr ) = \varphi \bigl ( \Op^A(f) \bigr ) 
		. 
	\end{align*}
\end{thm}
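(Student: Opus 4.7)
The strategy is to give meaning to the oscillatory-looking integral
$\varphi^B(f)=\tfrac{1}{\pi}\int_{\C}\dd z\,\partial_{\bar z}\tilde\varphi(z)\,(f-z)^{(-1)_B}$
as a Bochner integral valued in the Fr\'echet space $\Hoermr{-m}{\rho}$, and then to commute $\Op^A$ past the integral in order to compare with the classical Helffer--Sj\"ostrand formula for the self-adjoint operator $\Op^A(f)$. Throughout, I regard $f$ as real-valued (for $m>0$ also elliptic), so that by Theorem~\ref{magWQ:important_results:continuity_and_selfadjointness:thm:selfadjointness_elliptic_symbols}, $H:=\Op^A(f)$ is self-adjoint on the magnetic Sobolev space $H^{m\vee 0}_A(\R^d)$, and its spectrum $\sigma(H)\subset\R$ coincides with the set of $z\in\C$ for which the Moyal inverse $(f-z)^{(-1)_B}$ fails to exist.

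First I would establish existence and size of the Moyal resolvent. For $z\in\C\setminus\R$, the operator $H-z$ is invertible with $\snorm{(H-z)^{-1}}_{\mathcal B(L^2)}\leq\abs{\Im z}^{-1}$. By the magnetic Beals/Bony machinery and the inversion theorem~\ref{magWQ:important_results:inversion:thm:inversion_Hoermander}, $(f-z)^{(-1)_B}$ exists in $\Hoermr{-m}{\rho}$. The key technical step is a quantitative bound: for every seminorm $\snorm{\cdot}_{m,\rho,N}$ on $\Hoermr{-m}{\rho}$ there exist $C_N>0$ and $k(N)\in\N$ with
\begin{align*}
\bnorm{(f-z)^{(-1)_B}}_{-m,\rho,N}\le C_N\,\abs{\Im z}^{-k(N)},\qquad z\in\C\setminus\R.
\end{align*}
This is proved by applying the Bony criterion to $(f-z)^{(-1)_B}$: each Moyal commutator $\ad^B_{\varphi_1}\cdots\ad^B_{\varphi_n}\bigl((f-z)^{(-1)_B}\bigr)$ is rewritten, using the derivation property and the Leibniz rule, as a finite sum of products of $(f-z)^{(-1)_B}$ with commutators of $f$ (which lie in the appropriate $\Hoermr{m-j\rho}{\rho}$-classes, independently of $z$). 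Each factor $(f-z)^{(-1)_B}$ in $\mathfrak{C}^B$ is controlled by $\abs{\Im z}^{-1}$, and multiplication by $\mathfrak{r}_m$ absorbs the polynomial growth in $\xi$.

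With this estimate in hand, the defining property of the quasianalytic extension, $\babs{\partial_{\bar z}\tilde\varphi(z)}\le C_N\,\abs{\Im z}^N$ for every $N$, shows that the integrand decays faster than any power of $\abs{\Im z}$; combined with the compact $z$-support of $\tilde\varphi$, the Bochner integral converges absolutely in every seminorm of $\Hoermr{-m}{\rho}$. Hence $\varphi^B(f)\in\Hoermr{-m}{\rho}$, independently of the particular quasianalytic extension chosen (by a standard Stokes-type argument, or by invoking that any two extensions differ by something whose $\bar\partial$ is $\order(\abs{\Im z}^\infty)$ on $\supp\tilde\varphi$).

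Next I would apply $\Op^A$ to both sides. Since $\Op^A:\Hoermr{-m}{\rho}\to\mathcal B\bigl(H^{s}_A,H^{s+m}_A\bigr)$ is continuous (Proposition~\ref{magWQ:important_results:continuity_and_selfadjointness:prop:boundedness_psido}), $\Op^A$ commutes with the Bochner integral, yielding
\begin{align*}
\Op^A\bigl(\varphi^B(f)\bigr)=\frac{1}{\pi}\int_{\C}\dd z\,\partial_{\bar z}\tilde\varphi(z)\,\Op^A\bigl((f-z)^{(-1)_B}\bigr)=\frac{1}{\pi}\int_{\C}\dd z\,\partial_{\bar z}\tilde\varphi(z)\,(H-z)^{-1},
\end{align*}
where the second equality uses Proposition~\ref{magWQ:extension:prop:Op_A_product} together with the fact that Moyal inversion is mapped by $\Op^A$ to operator inversion (because $\Op^A$ is a $*$-algebra isomorphism on $\mathcal M^B(\Pspace)$). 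The right-hand side is the classical Helffer--Sj\"ostrand representation of $\varphi(H)$, so it equals $\varphi(\Op^A(f))$ as required.

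\paragraph*{Main obstacle.}
The principal difficulty is the quantitative resolvent estimate in the Fr\'echet topology of $\Hoermr{-m}{\rho}$: tracking how each seminorm of $(f-z)^{(-1)_B}$ grows as $z\to\R$ requires an induction on the number of Moyal commutators, using the derivation identity and absorbing powers of $\mathfrak r_m$ to balance the loss of decay in $\xi$ against the gain $\abs{\Im z}^{-1}$ from each resolvent factor. All remaining steps are either direct applications of results already available in the excerpt or standard contour manipulations.
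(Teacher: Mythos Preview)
The paper does not actually prove this statement: it is quoted verbatim as Proposition~6.33 from \cite{Iftimie_Mantoiu_Purice:commutator_criteria:2008} at the end of Section~\ref{magWQ:important_results:inversion}, with the phrase ``Now Iftimie, M\u{a}ntoiu and Purice have proven'' and no further argument. So there is no in-paper proof to compare against.

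That said, your outline is essentially the proof given in the cited reference, and it is correct. The only step that carries real content is exactly the one you flag as the main obstacle: the polynomial-in-$\abs{\Im z}^{-1}$ control of each Fr\'echet seminorm of $(f-z)^{(-1)_B}$ in $\Hoermr{-m}{\rho}$. Your sketch of that step --- expanding $\ad^B_{\varphi_1}\cdots\ad^B_{\varphi_n}\bigl((f-z)^{(-1)_B}\bigr)$ via the derivation property so that the $z$-dependence sits only in resolvent factors, each contributing a single $\abs{\Im z}^{-1}$, while the $f$-commutators supply $z$-independent symbol estimates --- is the right mechanism, and it is precisely how the Bony-criterion seminorms of Theorem~5.24 in \cite{Iftimie_Mantoiu_Purice:commutator_criteria:2008} are tracked. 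One small refinement worth making explicit: to land in $\Hoermr{-m}{\rho}$ rather than merely $\Hoermr{0}{\rho}$, you should estimate $\mathfrak{r}_{m}\magW(f-z)^{(-1)_B}$ in $\mathfrak{C}^B$ (not just the bare resolvent), which is where the $\mathfrak{r}_m$-weighted form of the Bony criterion enters; this is implicit in your remark about ``absorbing powers of $\mathfrak{r}_m$'' but deserves to be stated as the quantity actually being bounded.
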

%



\chapter{Asymptotic Expansions and Semiclassical Limit} 
\label{asymptotics}
%
In many situations, the commutator of position and momentum operator is small, \ie 
\begin{align*}
	i \bigl [ \Pe^A_l , \Qe_j \bigr ] = \eps \delta_{lj} 
	. 
\end{align*}
Usually the small parameter is denoted by $\hbar$, although we shall use $\eps$ instead to indicate that this parameter can have a multitude of physical interpretations. Furthermore, we insist that $\eps$ be dimensionless, so the natural constant $\hbar$ is actually \emph{not} a good small parameter. 

Just like in the case of non-magnetic Weyl calculus, one can expand the magnetic Weyl product asymptotically in the small parameter $\eps$, 
\begin{align*}
	f \magW g \asymp \sum_{n = 0}^{\infty} \eps^n (f \magW g)_{(n)} 
	, 
\end{align*}
which can be used to prove a semiclassical limit (Chapter~\ref{asymptotics:semiclassical_limit}) and derive effective hamiltonians in multiscale systems (Chapter~\ref{magsapt}). Asymptotic expansions of the product and ramifications in other part of the theory of magnetic pseudodifferential operators will be the main focus of this chapter, applications will be postponed to the next. This chapter is based on the publication \cite{Lein:two_parameter_asymptotics:2008}.

\section{Scalings} 
\label{asymptotics:scalings}
In the context of magnetic pseudodifferential operators, an additional small parameter may be of interest, namely small coupling to the magnetic field quantified by $\lambda$. If there is no small coupling, we simply set $\lambda = 1$. Thus, we are interested in position and momentum operators that satisfy the following commutation relations:\index{commutation relations!magnetic (with small parameters)} 
\begin{align}
	i \bigl [ \Qe_l , \Qe_j \bigr ] = 0 
	&& 
	i \bigl [ \Pe^A_l , \Qe_j \bigr ] = \eps \delta_{lj} 
	&& 
	i \bigl [ \Pe^A_l , \Pe^A_j \bigr ] = - \eps \lambda B_{lj}(\Qe) 
	\label{asymptotics:scalings:eqn:commutation_relations}
\end{align}
In one opens any standard textbook on quantum mechanics, \eg \cite{Sakurai:quantum_mechanics:1994}, then one proposes\index{building block operators!usual scaling} 
\begin{align}
	\tilde{\mathsf{Q}} :=& \, \hat{x} 
	\label{asymptotics:scalings:eqn:q_p_usual_scaling} \\
	\tilde{\mathsf{P}}^A \equiv \tilde{\mathsf{P}}^A_{\eps,\lambda} :=& \, - i \eps \nabla_x - \lambda A(\hat{x}) 
	\notag 
\end{align}
as position and kinetic momentum operators in the \emph{usual scaling} on $L^2(\R^d_x)$ where $A$ is some vector potential associated to the magnetic field $B = \dd A$. Formally, these operators satisfy~\eqref{asymptotics:scalings:eqn:commutation_relations}. Here, position is measured in \emph{macroscopic units} and potentials vary on the scale $\order(1)$. 

Alternatively, we may measure distances in \emph{microscopic units} where external, macroscopic potentials vary slowly on the scale $\order(\eps)$. Here, one uses position and kinetic momentum operators in the \emph{adiabatic scaling} on $L^2(\R^d_x)$,\index{building block operators!adiabatic scaling} 
\begin{align}
	\Qe &\equiv \Qe_{\eps} := \eps \hat{x} \\
	\Pe^A &\equiv \Pe^A_{\eps,\lambda} := - i \nabla_x - \lambda A(\eps \hat{x}) 
	\notag 
	. 
\end{align}
This choice of scaling is used to analyze the Bloch electron subjected to a slowly varying electromagnetic field (see Chapter~\ref{magsapt}). Here, the relevant hamiltonian operator is 
\begin{align*}
	\hat{H} = \tfrac{1}{2} \bigl ( - i \nabla_x - \lambda A(\eps \hat{x}) \bigr )^2 + V_{\Gamma}(\hat{x}) + \Phi(\eps \hat{x}) 
\end{align*}
where $V_{\Gamma}$ is a periodic potential and $A$ and $\Phi$ are potentials associated to the external electromagnetic field. Obviously, these two choices of scalings are unitarily equivalent and thus, according to Proposition~\ref{asymptotics:scalings:prop:equivalence_Weyl_calculi} below, the two Weyl calculi are unitarily equivalent and lead to the \emph{same} magnetic Weyl product. 
%
\begin{lem}
	The adiabatic scaling and the usual scaling are related by the unitary $U_{\eps}$, $\bigl ( U_{\eps} \varphi \bigr )(x) := \eps^{- \nicefrac{d}{2}} \varphi \bigl ( \tfrac{x}{\eps} \bigr )$, $\varphi \in L^2(\R^d)$, \ie we have 
	\begin{align*}
		\tilde{\mathsf{Q}} &= U_{\eps} \, \Qe \, U_{\eps}^{-1} \\
		\tilde{\mathsf{P}}^A &= U_{\eps} \, \Pe^A \, U_{\eps}^{-1} 
		. 
	\end{align*}
\end{lem}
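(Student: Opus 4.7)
The plan is to verify the two intertwining relations by direct computation on a dense core (say $\Schwartz(\R^d)$ or $\Cont^{\infty}_c(\R^d)$), and then invoke closedness of $\tilde{\mathsf{Q}}$ and $\tilde{\mathsf{P}}^A$ (resp.\ $\Qe$ and $\Pe^A$) to extend the identity to their natural domains.

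First I would record that $U_{\eps}$ is unitary on $L^2(\R^d)$ by a straightforward change of variables, with inverse $\bigl(U_{\eps}^{-1}\varphi\bigr)(x) = \eps^{\nicefrac{d}{2}}\,\varphi(\eps x)$. Both scalings clearly leave $\Schwartz(\R^d)$ invariant, so it suffices to check the two operator identities pointwise on $\varphi \in \Schwartz(\R^d)$.

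For the position relation, I would apply $\Qe = \eps \hat{x}$ to $U_{\eps}^{-1}\varphi$ and then act with $U_{\eps}$; the prefactors $\eps^{\pm \nicefrac{d}{2}}$ cancel and the argument rescaling turns $\eps\,(x/\eps)$ into $x$, giving $\hat{x}\,\varphi(x) = \tilde{\mathsf{Q}}\,\varphi(x)$. For the momentum relation the same strategy applies, with the two contributions handled separately: the derivative term in $\Pe^A = -i\nabla_x - \lambda A(\eps\hat{x})$ picks up a factor $\eps$ from the chain rule when composed with $U_{\eps}^{-1}$, producing the required $-i\eps \nabla_x$; simultaneously, $A(\eps\hat{x})$ applied to $U_{\eps}^{-1}\varphi$ becomes a multiplication by $A(\eps \cdot (x/\eps)) = A(x)$ once conjugation by $U_{\eps}$ is carried out, yielding $-\lambda A(\hat{x})$. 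Adding the two contributions gives exactly $\tilde{\mathsf{P}}^A\varphi$.

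Finally, since $\Schwartz(\R^d)$ is a core for $\Qe$ and $\tilde{\mathsf{Q}}$, and (using that $A$ has components in $\Cont^{\infty}_{\mathrm{pol}}(\R^d)$ by standing assumption) also a core for $\Pe^A$ and $\tilde{\mathsf{P}}^A$, the identities extend from $\Schwartz(\R^d)$ to the full (essentially selfadjoint) closures. No step presents any real obstacle; the only point warranting care is the chain-rule factor of $\eps$ in the momentum computation and the corresponding rescaling of the argument of $A$, which is exactly where the two scalings differ in the commutation relations \eqref{asymptotics:scalings:eqn:commutation_relations} and thus is the conceptual content of the lemma.
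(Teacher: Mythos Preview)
Your proposal is correct and follows essentially the same direct-computation approach as the paper's proof; the paper simply applies both sides to a generic $\varphi \in L^2(\R^d)$ without spelling out the core/closure argument. Your added care about domains (working on $\Schwartz(\R^d)$ and invoking essential selfadjointness) is a harmless refinement of what is really the same elementary verification.
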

\begin{proof}
	Let $\varphi \in L^2(\R^d)$. Then we have for $\Qe$ 
	\begin{align*}
		(U_{\eps} \, \Qe \, U_{\eps}^{-1} U_{\eps} \varphi )(x) &= \bigl ( U_{\eps} \Qe \varphi \bigr ) (x) = \eps^{-\nicefrac{d}{2}} \, \bigl ( \Qe \varphi \bigr ) \bigl ( \tfrac{x}{\eps} \bigr ) 
		\\
		&= \eps^{-\nicefrac{d}{2}} \, \eps \tfrac{x}{\eps} \varphi \bigl ( \tfrac{x}{\eps} \bigr ) 
		= \tilde{\mathsf{Q}} \, \bigl ( U_{\eps} \varphi \bigr ) (x) . 
	\end{align*}
	Similarly, we get for the momentum operators 
	\begin{align*}
		(U_{\eps} \, \Pe^A \, U_{\eps}^{-1} U_{\eps} \varphi )(x) 
		&= \bigl ( U_{\eps} \Pi^A_{\eps,\lambda} \varphi \bigr ) (x) 
		= \eps^{-\nicefrac{d}{2}} \, \bigl ( \Pi^A_{\eps,\lambda} \varphi \bigr ) \bigl ( \tfrac{x}{\eps} \bigr ) 
		\displaybreak[2]
		\\
		&= \eps^{-\nicefrac{d}{2}} \, \bigl ( -i (\nabla_x \varphi) \bigl ( \tfrac{x}{\eps} \bigr ) - \lambda A \bigl (\eps \tfrac{x}{\eps} \bigr ) \, \varphi \bigl ( \tfrac{x}{\eps} \bigr ) \bigr ) 
		\displaybreak[2]
		\\
		&
		= \bigl ( - i \eps \nabla_x - \lambda A(\tilde{\mathsf{Q}}) \bigr ) \, \bigl ( U_{\eps} \varphi \bigr ) (x) 
		= \bigl ( \tilde{\mathsf{P}}^A U_{\eps} \varphi \bigr )(x)
		. 
	\end{align*}
	Hence the building block operators in the two scalings are unitarily equivalent. 
\end{proof}
Next, we will prove that unitarily equivalent building block observables lead to unitarily equivalent quantizations and the \emph{same} product formula. In view of Chapters~\ref{algebraicPOV} and \ref{psiDO_reloaded}, this is not at all surprising, the two `Weyl quantizations' can be seen as equivalent representations of the same fundamental $C^*$-algebra. 
\begin{prop}\label{asymptotics:scalings:prop:equivalence_Weyl_calculi}
	Let $0 < \eps \leq 1$ and $0 \leq \lambda \leq 1$. Assume $(\Qe , \Pe)$ and $(\Qe' , \Pe')$ are position and momentum operators on the separable Hilbert spaces $\Hil$ and $\Hil'$ that are selfadjoint, satisfy the commutation relations 
	\begin{align}
		i \bigl [ \Qe^{(\prime)}_l , \Qe^{(\prime)}_j \bigr ] = 0 
		&&
		i \bigl [ \Pe^{(\prime)}_l , \Qe^{(\prime)}_j \bigr ] = \eps \delta_{lj}
		&&
		i \bigl [ \Pe^{(\prime)}_l , \Pe^{(\prime)}_j \bigr ] = - \eps \lambda B_{lj}(\Qe^{(\prime)}) 
		, 
		\label{asymptotics:scalings:commutation_relations}
	\end{align}
	and are related by some unitary operator $\mathcal{U} : \Hil \longrightarrow \Hil'$ via 
	\begin{align}
		\Qe' &= \mathcal{U} \Qe \mathcal{U}^{-1} 
		\label{asymptotics:scalings:eqn:equivalence_operators} \\
		\Pe' &= \mathcal{U} \Pe \mathcal{U}^{-1} 
		\notag 
		. 
	\end{align}
	Then for all $h \in \Schwartz(\Pspace)$, the operator $\Op(h) := \frac{1}{(2\pi)^d} \int_{\Pspace} \dd X \, (\Fs h)(X) \, e^{- i \sigma(X,(\Qe,\Pe))}$ on $\Hil$ is unitarily equivalent to $\Op'(h) := \frac{1}{(2\pi)^d} \int_{\Pspace} \dd X \, (\Fs h)(X) \, e^{- i \sigma(X,(\Qe',\Pe'))}$ on $\Hil'$, 
	\begin{align*}
		\Op'(h) = \mathcal{U} \Op(h) \mathcal{U}^{-1} 
		. 
	\end{align*}
	Furthermore, the Weyl products which emulate the operator product are in fact identical: let $\Weyl^{(\prime)}$ be such that for any $f , g \in \Schwartz(\Pspace)$ be the product on $\Schwartz(\Pspace)$ such that 
	\begin{align*}
		\Op^{(\prime)}(f) \, \Op^{(\prime)}(g) =: \Op^{(\prime)}(f \Weyl^{(\prime)} g)
	\end{align*}
	holds. Then we have $\Weyl = \Weyl' : \Schwartz(\Pspace) \times \Schwartz(\Pspace) \longrightarrow \Schwartz(\Pspace)$. 
\end{prop}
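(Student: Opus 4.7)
The plan has three steps. First, I would show that the two Weyl systems are unitarily equivalent under $\mathcal{U}$. Given that $(\Qe,\Pe)$ and $(\Qe',\Pe')$ are selfadjoint and satisfy~\eqref{asymptotics:scalings:eqn:equivalence_operators}, the symmetric linear combinations $\sigma(X,(\Qe,\Pe)) = \xi \cdot \Qe - x \cdot \Pe$ and $\sigma(X,(\Qe',\Pe')) = \xi \cdot \Qe' - x \cdot \Pe'$ are (essentially) selfadjoint on a common core and satisfy $\sigma(X,(\Qe',\Pe')) = \mathcal{U}\, \sigma(X,(\Qe,\Pe))\, \mathcal{U}^{-1}$ pointwise in $X \in \Pspace$. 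By the standard fact that unitary conjugation of a selfadjoint operator yields the unitary conjugation of its associated one-parameter group, one obtains
\begin{align*}
\WeylSys'(X) := e^{-i\sigma(X,(\Qe',\Pe'))} = \mathcal{U}\, e^{-i\sigma(X,(\Qe,\Pe))}\, \mathcal{U}^{-1} = \mathcal{U}\, \WeylSys(X)\, \mathcal{U}^{-1}.
\end{align*}

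Second, I would substitute this into the definition of $\Op'(h)$. Since the integral converges as a Bochner integral in $\mathcal{B}(\Hil')$ (cf.\ Lemma~\ref{magWQ:standard_weyl_calculus:weyl_quantization:lem:action_Op}, noting that the bound depends only on $\bnorm{\Fs h}_{L^1(\Pspace)}$ and not on the realization of the Weyl system), the unitaries $\mathcal{U}$ and $\mathcal{U}^{-1}$ can be pulled outside to yield
\begin{align*}
\Op'(h) = \frac{1}{(2\pi)^d} \int_{\Pspace} \dd X\, (\Fs h)(X)\, \mathcal{U}\, \WeylSys(X)\, \mathcal{U}^{-1} = \mathcal{U}\, \Op(h)\, \mathcal{U}^{-1}.
\end{align*}

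Third, for the equality of products, given $f,g \in \Schwartz(\Pspace)$, both $f \Weyl g$ and $f \Weyl' g$ lie in $\Schwartz(\Pspace)$ by Theorem~\ref{magWQ:magnetic_weyl_calculus:thm:equivalence_product} (applied in each of the two realizations, with $B = 0$ for the non-magnetic case or more generally using that the argument depends only on the commutation relations). A direct computation then gives
\begin{align*}
\Op'(f \Weyl' g) = \Op'(f)\, \Op'(g) = \mathcal{U}\, \Op(f)\, \Op(g)\, \mathcal{U}^{-1} = \mathcal{U}\, \Op(f \Weyl g)\, \mathcal{U}^{-1} = \Op'(f \Weyl g).
\end{align*}
Injectivity of $\Op'$ on $\Schwartz(\Pspace)$ — which is inherited from the injectivity of $\Op$ via the unitary conjugation established in step two, or equivalently from the fact that the kernel map analogous to Remark~\ref{magWQ:magnetic_weyl_calculus:rem:unitarity_kernel_map} is an isomorphism — then forces $f \Weyl g = f \Weyl' g$.

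The main technical obstacle lies in step one: one must justify that the unitary conjugation of a formal generator in fact produces the conjugated one-parameter group. While this is classical when $\mathcal{U}$ maps a core of the original operator onto a core of the conjugated one, the cleanest way to bypass such domain issues is to bootstrap from the individual one-parameter groups $e^{-i\xi \cdot \Qe/n}$ and $e^{ix \cdot \Pe/n}$ (which trivially conjugate correctly under $\mathcal{U}$) via a Trotter product formula, and then identify $\WeylSys(X)$ as the strong limit. Once this is in place, everything else reduces to bookkeeping.
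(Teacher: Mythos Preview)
Your proposal is correct and follows essentially the same route as the paper's proof: unitary equivalence of the Weyl systems (the paper simply invokes the spectral theorem where you add the Trotter remark), conjugation of $\Op$ by $\mathcal{U}$, and then the chain $\Op'(f \Weyl' g) = \Op'(f)\,\Op'(g) = \mathcal{U}\,\Op(f)\,\Op(g)\,\mathcal{U}^{-1} = \Op'(f \Weyl g)$. You are slightly more explicit about the injectivity of $\Op'$ needed to conclude $f \Weyl g = f \Weyl' g$, which the paper leaves implicit.
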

\begin{remark}
	The unitary operator most relevant to this section is clearly the scaling operator $\mathcal{U}_{\eps} : L^2(\R^d) \longrightarrow L^2(\R^d)$. However, other unitary operators which immediately come to mind are changes of gauge where $\mathcal{U} = e^{+ i \lambda \chi(\Qe)}$ and the Fourier transform $\Fourier : L^2(\R^d) \longrightarrow L^2({\R^d}^*)$. If we quantize with respect to the latter, we get the momentum representation. 
\end{remark}
\begin{proof}
	First of all, by the spectral theorem, equation~\eqref{asymptotics:scalings:eqn:equivalence_operators} implies the unitary equivalence of the Weyl systems\index{Weyl system}, $e^{- i \sigma(X,(\Qe' , \Pe'))} = \mathcal{U} e^{- i \sigma(X,(\Qe,\Pe))} \mathcal{U}^{-1}$, and thus the unitary equivalence of $\Op$ and $\Op'$, 
	\begin{align*}
		\Op'(h) = \mathcal{U} \Op(h) \mathcal{U}^{-1} 
		&& 
		\forall h \in \Schwartz(\Pspace) 
		. 
	\end{align*}
	Now let $\Weyl^{(\prime)}$ be the Weyl product associated to $\Op^{(\prime)}$. Then for any $f , g \in \Schwartz(\Pspace)$, we have 
	\begin{align*}
		\Op'(f \Weyl' g) &= \Op'(f) \, \Op'(g) = \mathcal{U} \Op(f) \mathcal{U}^{-1} \mathcal{U} \Op(g) \mathcal{U}^{-1} 
		= \mathcal{U} \Op(f \Weyl g) \mathcal{U}^{-1} \\
		&= \Op'(f \Weyl g) 
		, 
	\end{align*}
	and the two Weyl products are in fact identical. 
\end{proof}
\begin{remark}
	This equivalence immediately extends to larger classes of functions and tempered distributions whenever one can make sense of the above relations. 
\end{remark}
This means once we know $\Weyl$ in one of the realizations, we know it in all other unitarily equivalent realizations. The same holds for asymptotic expansions of $\Weyl$ and we need not worry about the particular realization to pick. 
\medskip

\noindent
Now let us state some conventions we will use throughout this chapter. First of all, without loss of generality, \textbf{we will work in the adiabatic scaling.} Furthermore, for simplicity, we will use Einstein's summation convention, \ie \textbf{repeated indices in a product are always summed over from $1$ to $d$}. And we will always make the following assumptions on the magnetic field and associated vector potentials: 
\begin{assumption}\label{asymptotics:assumption:bounded_fields}
	We assume that the components of the magnetic field $B = \dd A$ and associated vector potentials $A$ satisfy $B_{kl} \in \mathcal{BC}^{\infty}(\R^d,\R)$ and $A_l \in \mathcal{C}^{\infty}_{\mathrm{pol}}(\R^d,\R)$, respectively, for all $1 \leq k,l \leq d$. 
\end{assumption}
\begin{remark}
	If a magnetic field $B$ satisfies the above assumption, it is always possible to choose a polynomially bounded vector potential, \eg we may use the transversal gauge (equation~\eqref{appendix:eqn:transversal_gauge}). It is also clear that if $B$ and $A$ satisfy this assumption, then so do the scaled field $B^{\eps,\lambda}(x) := \dd \Ael(x) = \eps \lambda B(\eps x)$ and scaled potential $\Ael(x) := \lambda A(\eps x)$. 
\end{remark}
%


\section{Magnetic Weyl quantization} 
\label{asymptotics:weyl_quantization}
Now that we have decided on our building block operators, we can introduce the associated Weyl system\index{Weyl system} 
\begin{align*}
	\WeylSysAel(X) := e^{-i \sigma(X,(\Qe,\Pe^A))} 
	, 
	&& 
	X \in \Pspace 
	, 
\end{align*}
which contains the small parameters $\eps$ and $\lambda$. A simple modification of the proof of Lemma~\ref{magWQ:magnetic_weyl_calculus:lem:action_Weyl_system} shows that it acts on $u \in L^2(\R^d_x)$ via 
\begin{align*}
	\bigl ( \WeylSysAel(Y) u \bigr )(x) &= e^{- i \frac{\lambda}{\eps} \Gamma^A([\eps x,\eps x + \eps y])} e^{- i \eta \cdot \eps (x + \nicefrac{y}{2})} \, u(x + y) 
	\\
	&
	=: e^{- i \lambda \Gamma_{\eps}^A([x,x + y])} e^{- i \eta \cdot (x + \frac{\eps}{2} y)} \, u(x + y) 
	. 
\end{align*}
Then, as usual, we define the Weyl quantization of $f \in \Schwartz(\Pspace)$ in the strong sense as 
\begin{align*}
	\Op^A_{\eps,\lambda}(f) u \equiv \Op^A(f) u := \frac{1}{(2\pi)^d} \int_{\Pspace} \dd X \, (\Fs f)(X) \, \WeylSysAel(X) u 
	&&
	\forall u \in \Schwartz(\R^d) 
	. 
\end{align*}
Revisiting the arguments found in Chapter~\ref{magWQ:magnetic_weyl_calculus:covariant_quantization}, we find that the action of $\Op^A_{\eps,\lambda}(f)$ on $u \in \Schwartz(\R^d)$ is explicitly given by 
\begin{align}
	\bigl ( \Op^A_{\eps,\lambda}(f) u \bigr )(x) &= \frac{1}{(2\pi)^d} \int_{\R^d} \dd y \int_{{\R^d}^*} \dd \eta \, e^{- i (y - x) \cdot \eta} \, e^{- i \lambda \Gamma^A_{\eps}([x,y])} \, f \bigl ( \tfrac{\eps}{2} (x+y) , \eta \bigr ) \, u(y) 
	\\
	&= \frac{1}{(2\pi)^{\nicefrac{d}{2}}} \int_{\R^d} \dd y \, e^{- i \lambda \Gamma^A_{\eps}([x,y])} \, (\Fourier_2 f) \bigl ( \tfrac{\eps}{2} (x+y) , y - x \bigr ) \, u(y) 
	\notag \\
	&
	=: \frac{1}{(2\pi)^{\nicefrac{d}{2}}} \int_{\R^d} \dd y \, K_f^A(x,y) \, u(y) 
	\notag 
	. 
\end{align}
All results on magnetic Weyl calculus found in the literature (see Chapter~\ref{magWQ:important_results} for a small selection) hold in this context as well since proofs carry over after obvious modifications due to the presence of $\eps$ and $\lambda$. 

For convenience, we give the composition law of the magnetic Weyl system, because it enters into the proof of the product formula: for any $X,Y \in \Pspace$, 
\begin{align}
	\WeylSysAel(X) \WeylSysAel(Y) &= e^{i \frac{\eps}{2} \sigma(X,Y)} \, \oBel (\Qe;x,y) \, \WeylSysAel(X+Y) 
\end{align}
holds where $\oBel(q;x,y) := e^{- i \frac{\lambda}{\eps} \Gamma^B(\sexpval{q,q+\eps x , q + \eps x + \eps y})}$ is the exponential of the scaled magnetic flux. 


\section{Semiclassical symbols and precision} 
\label{asymptotics:semiclSymbolsPrecision}

The Hörmander symbol classes $\Hoerr{m}$ are Fréchet spaces whose topology can be defined by the usual family of seminorms 
\begin{align*}
	\norm{f}_{m a \alpha} := \sup_{(x,\xi) \in \PSpace} \expval{\xi}^{-m + \sabs{\alpha} \rho} \babs{\partial_x^a \partial_{\xi}^{\alpha} f(x,\xi)}  
	, 
	&& 
	a , \alpha \in \N_0^d 
	. 
\end{align*}
One important notion is that of a \emph{semiclassical symbol} \cite{PST:sapt:2002}, \ie it is a symbol which admits an expansion in $\eps$ and $\lambda$ which is in some sense uniform. 
\begin{defn}[Semiclassical two-parameter symbol\index{semiclassical symbol!two-parameter}]\label{asymptotics:defn:2ParameterSemiSymbol}
	A map 
	\begin{align*}
		f : [0,\eps_0) \times [0,\lambda_0) \longrightarrow \Hoerr{m}
		, \; 
		(\eps,\lambda) \mapsto f^{\eps,\lambda} 
	\end{align*}
	is called \emph{semiclassical two-parameter symbol} of order $m$ with weight $\rho \in [0,1]$, \ie $f \in \SemiHoerr{m}$, if there exists a sequence $\{ f_{n,k} \}_{n,k \in \N_0}$, $f_{n,k} \in \Hoerr{m - (n+k)\rho}$ for all $n,k \in \N_0$, such that 
	\begin{align*}
		f^{\eps,\lambda} - \sum_{l = 0}^{N} \sum_{n + k = l} \eps^n \lambda^k f_{n,k} \in \Hoerr{m - (N + 1)\rho} && \forall N \in \N_0 
	\end{align*}
	uniformly in the following sense: for each $j \in \N_0$ there exists a constant $C_{N,m,j} > 0$ (independent of $\eps$ and $\lambda$) such that 
	\begin{align*}
		\biggl \lVert f^{\eps,\lambda} - \sum_{l = 0}^{N} \sum_{n + k = l} \eps^n \lambda^k f_{n,k} \biggr \rVert_{m,j} < C_{N,m,j} \, \max \{ \eps , \lambda \}^{N+1} . 
	\end{align*}
	holds for all $\eps \in [0,\eps_0)$ and $\lambda \in [0,\lambda_0)$. 
\end{defn}
Since $\eps$ and $\lambda$ vary independently, we also have to introduce a more sophisticated concept of precision. This is a technicality, but a definition is necessary to prove that expanding $f \magWel g$ first with respect to $\eps$ and then $\lambda$ yields the same asymptotics as when the product is expanded with respect to $\lambda$ and then with respect to $\eps$ (Theorem~\ref{asymptotics:thm:lambdaEpsExpansion}). If there were only one small parameter, say $\eps$, then $f - g = \mathcal{O}(\eps^n)$ for symbols $f,g \in \Hoerr{m}$ implies two things: (i) the difference between $f$ and $g$ is `numerically small' and (ii) we have associated a symbol class $\Hoerr{m - n \rho}$ to the `number' $\eps^n$. In case of two independent parameters, such a simple concept will not do and we have to introduce an association between a \emph{third} number $\eprec \ll 1$ and a certain symbol class. Although it seems artificial at first to introduce yet another small parameter, in physical applications, this is quite natural: say, we are interested in the dynamics generated by a two-parameter symbol $H^{\eps,\lambda}$ on times of order $\mathcal{O}(\nicefrac{1}{\eprec})$, \ie $e^{- i \frac{t}{\eprec} \, H^{\eps,\lambda}}$. Then we need to include all terms in our expansion for which $\eps^n \lambda^k \leq \eprec$. Even if we choose $\eprec = \eps$, for instance, we still cannot avoid this abstract definition as $\lambda$ is independent of $\eps$. 
\begin{defn}[Precision $\ordereprec$\index{precision $\ordereprec$}]\label{asymptotics:defn:precision}
	Let $\eps \ll 1$, $\lambda \ll 1$. For $\eprec \ll 1$, we define critical exponents $n_c,k_c,N \in \N_0$ such that 
	\begin{align*}
		\eps^{n_c + 1} < \eprec \leq \eps^{n_c}
		, 
		&& 
		\lambda^{k_c + 1} < \eprec \leq \lambda^{k_c} 
	\end{align*}
	and $N \equiv N(\eps,\lambda,\eprec) := \max \{ n_c , k_c \}$ as maximum of these two critical exponents. We say that a finite resummation $\sum_{n = 0}^{N_{\eps}} \sum_{k = 0}^{N_{\lambda}} \eps^n \lambda^k \, f_{n,k}$ of a semiclassical symbol $f^{\eps,\lambda} \in \SemiHoerr{m}$ is $\ordereprec$-close, 
	\begin{align*}
		f^{\eps,\lambda} - \sum_{n = 0}^{N_{\eps}} \sum_{k = 0}^{N_{\lambda}} \eps^n \lambda^k \, f_{n,k} = \ordereprec , 
	\end{align*}
	iff $f^{\eps,\lambda} - \sum_{n = 0}^{N_{\eps}} \sum_{k = 0}^{N_{\lambda}} \eps^n \lambda^k f_{n,k} \in \Hoerr{m-(N+1)\rho}$ and $N_{\eps} , N_{\lambda} \geq N$. 
\end{defn}
%


\section{The Magnetic Wigner transform} 
\label{asymptotics:mag_Wigner_transform}
The Wigner transform plays a central role because it can be used to relate states (density operators) to pseudo-probability measures on phase space. We will need it to show the equivalence of two integral formulas for the magnetic Weyl product $\magWel$. 
\begin{defn}[Magnetic Wigner transform\index{Wigner transform}]
	For any $\varphi,\psi \in \Schwartz(\R^d)$, the magnetic Wigner transform  $\mathcal{W}^A(\varphi,\psi)$\index{Wigner transform!magnetic} is defined as 
	\begin{align*}
		\WignerTrafoel(\varphi,\psi)(X) := \eps^d \, \bigl ( \Fs \bscpro{\varphi}{\WeylSysAel(\cdot) \psi} \bigr )(-X) . 
	\end{align*}
\end{defn}
\begin{lem}\label{magBel:WignerTransformExpectation}
	The Wigner transform $\WignerTrafoel(\varphi,\psi)$ with respect to $\varphi,\psi \in \Schwartz(\R^d)$ is given by 
	\begin{align*}
		\WignerTrafoel(\varphi,\psi)(X) &= \int_{\R^d} \dd y \, e^{- i y \cdot \xi} e^{-i \lambda \GAe([\nicefrac{x}{\eps} - \nicefrac{y}{2} ,\nicefrac{x}{\eps} + \nicefrac{y}{2} ])} \, \varphi^{\ast} \bigl ( \tfrac{x}{\eps} - \tfrac{y}{2} \bigr ) \, \psi \bigl ( \tfrac{x}{\eps} + \tfrac{y}{2} \bigr ) 
	\end{align*}
	and maps $\Schwartz(\R^d) \times \Schwartz(\R^d)$ unitarily onto $\Schwartz(\Pspace)$. 
\end{lem}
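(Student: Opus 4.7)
The plan is to proceed in two steps: first derive the explicit integral formula by unwinding the definition, and second obtain the mapping/unitarity statement by recognizing the expression as a composition of three elementary operations on Schwartz space.

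For the explicit formula, I would start from
\begin{align*}
	\bscpro{\varphi}{\WeylSysAel(Y) \psi} = \int_{\R^d} \dd x \, \varphi^*(x) \, e^{- i \lambda \Gamma^A_{\eps}([x, x+y])} \, e^{- i \eta \cdot (x + \frac{\eps}{2} y)} \, \psi(x + y)
\end{align*}
using the explicit action of the magnetic Weyl system recalled in Chapter~\ref{asymptotics:weyl_quantization}. Then I apply the definition of $\WignerTrafoel$ together with the symplectic Fourier transform
\begin{align*}
	\bigl ( \Fs F \bigr )(-X) = \frac{1}{(2\pi)^d} \int_{\Pspace} \dd Y \, e^{- i \sigma(X,Y)} F(Y) = \frac{1}{(2\pi)^d} \int_{\Pspace} \dd Y \, e^{- i (\xi \cdot y - x \cdot \eta)} F(Y) ,
\end{align*}
carry out the $\eta$-integral to produce a delta distribution that fixes the midpoint variable, and then perform a rescaling/change of variables in the remaining integration (substituting $x \leftrightarrow \nicefrac{x}{\eps}$ absorbs the factor $\eps^d$ in the prefactor). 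This computation is essentially the magnetic analog of the derivation given in Lemma~\ref{magWQ:lem:WignerTransform}; the only genuine novelty is bookkeeping of the powers of $\eps$ and of $\lambda$ in the magnetic circulation, which is straightforward since the line segment in $\Gamma^A_\eps$ is exactly the image under rescaling of the line segment in the non-magnetic midpoint variable.

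For the mapping and unitarity statement, I would argue structurally rather than by direct estimation. The formula exhibits $\WignerTrafoel$ as the composition of (a) the $\Cont^\infty_{\mathrm{pol}}$-multiplier
\begin{align*}
	\bigl ( \varphi^*, \psi \bigr ) \longmapsto \Bigl [ (x,y) \mapsto e^{- i \lambda \Gamma^A_{\eps}([\nicefrac{x}{\eps} - \nicefrac{y}{2}, \nicefrac{x}{\eps} + \nicefrac{y}{2}])} \, \varphi^* \bigl ( \tfrac{x}{\eps} - \tfrac{y}{2} \bigr ) \, \psi \bigl ( \tfrac{x}{\eps} + \tfrac{y}{2} \bigr ) \Bigr ]
\end{align*}
from $\Schwartz(\R^d)\otimes \Schwartz(\R^d)$ into $\Schwartz(\R^{2d})$, followed by (b) the partial Fourier transform in the second variable. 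Since Assumption~\ref{asymptotics:assumption:bounded_fields} gives $A \in \Cont^\infty_{\mathrm{pol}}$ and hence the magnetic phase is $\Cont^\infty_{\mathrm{pol}}$, step (a) is a continuous linear isomorphism onto $\Schwartz(\R^{2d})$ with inverse given by dividing out the phase and undoing the coordinate change. Step (b) is a well-known topological isomorphism of $\Schwartz(\R^{2d})$.

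The unitarity claim is then a consequence of Plancherel: the partial Fourier transform is unitary on $L^2(\R^{2d})$, and the multiplier in step (a) is the composition of a unitary change of variables (a linear volume-preserving map after the $\eps$-rescaling is absorbed into the normalization already present in the definition) and multiplication by a phase of modulus one. The main obstacle is purely bookkeeping: one must carefully track the $\eps^d$ prefactor and the Jacobian from the substitution $x \leftrightarrow \nicefrac{x}{\eps}$ so that the resulting map is genuinely isometric; once this is arranged, bijectivity onto $\Schwartz(\Pspace)$ follows from the explicit invertibility of each constituent operation and the result extends by density to a unitary $L^2(\R^d) \otimes L^2(\R^d) \to L^2(\Pspace)$ as mentioned in the remark following Lemma~\ref{magWQ:magnetic_weyl_calculus:lem:Wigner_transform}.
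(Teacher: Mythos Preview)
Your proposal is correct and follows essentially the same route as the paper: derive the formula by direct computation (unwinding the Weyl system action and the symplectic Fourier transform), then obtain the Schwartz mapping property by recognizing the integrand as a $\Cont^\infty_{\mathrm{pol}}$ phase times a Schwartz function, so that the partial Fourier transform produces an element of $\Schwartz(\Pspace)$. In fact you are more thorough than the paper's own proof, which dispatches the formula as ``direct calculation'' and does not explicitly address the unitarity claim at all; your Plancherel argument for that part is the right one.
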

\begin{proof}
	Formally, the result follows from direct calculation. The existence is part of the second claim, $\WignerTrafoel(\varphi,\psi) \in \Schwartz(\Pspace)$ follows from $e^{- i \lambda \GAe([\nicefrac{x}{\eps} - \nicefrac{y}{2} , \nicefrac{x}{\eps} + \nicefrac{y}{2}])} \,  \, \varphi^{\ast} \bigl ( \tfrac{x}{\eps} - \tfrac{y}{2} \bigr ) \, \psi \bigl ( \tfrac{x}{\eps} + \tfrac{y}{2} \bigr ) \in \Schwartz(\R^d \times \R^d)$ and the fact that the partial Fourier transformation is a unitary on $\Schwartz(\Pspace)$. 
\end{proof}
\begin{remark}
	The Wigner transform can be easily extended to a map from $L^2(\R^d \times \R^d)$ into $L^2(\Pspace) \cap \mathcal{C}_{\infty}(\Pspace)$ where $\mathcal{C}_{\infty}(\Pspace)$ is the space of continuous functions which decay at $\infty$. For more details, see \cite[Proposition~1.92]{Folland:harmonic_analysis_hase_space:1989}, for example. 
\end{remark}
Let $\mathcal{C}^{\infty}_{\mathrm{pol} \, u}(\Pspace)$ be the space of smooth functions with uniform polynomial growth at infinity, \ie for each $f \in \mathcal{C}^{\infty}_{\mathrm{pol} \, u}(\Pspace)$ we can find $m \in \R$, $m \geq 0$, such that for all multiindices $a , \alpha \in \N_0^d$ there is a $C_{a \alpha} > 0$ with 
\begin{align*}
	\babs{\partial_x^{a} \partial_{\xi}^{\alpha} f(x,\xi)} < C_{a \alpha} \expval{\xi}^m 
	, 
	&& 
	\forall (x,\xi) \in \Pspace
	. 
\end{align*}
\begin{lem}
	For $\varphi,\psi \in \Schwartz(\R^d)$ and $f \in \mathcal{C}^{\infty}_{\mathrm{pol} \, u}(\Pspace) \subset \Schwartz'(\Pspace)$ we have 
	\begin{align*}
		\bscpro{\varphi}{\OpAel(f) \psi} = \frac{1}{(2\pi)^d} \int_{\Pspace} \dd X \, f(X) \, \WignerTrafoel(\varphi,\psi)(X) 
		. 
	\end{align*}
\end{lem}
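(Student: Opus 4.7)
The plan is to reduce the identity to a Parseval-type computation for the symplectic Fourier transform in the Schwartz case, and then extend to $\Cont^{\infty}_{\mathrm{pol}\,u}(\Pspace) \subset \Schwartz'(\Pspace)$ by continuity.

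First I would handle the auxiliary case $f \in \Schwartz(\Pspace)$. Starting from the weak-integral definition of $\OpAel(f)$, I pull the inner product inside the phase-space integral to obtain
\begin{align*}
\bscpro{\varphi}{\OpAel(f) \psi} = \frac{1}{(2\pi)^d} \int_{\Pspace} \dd X \, (\Fs f)(X) \, g(X),
\end{align*}
with $g(X) := \bscpro{\varphi}{\WeylSysAel(X) \psi}$. By Lemma~\ref{magBel:WignerTransformExpectation}, $g$ (up to reflection and scaling) is a Schwartz function, so Fubini's theorem legitimately transfers $\Fs$ off $f$ and onto $g$: this is the symplectic form of the bilinear identity $\int (\Fs u) \, v = \int u \, (\Fs v)$ applied with the duality pairing, where the antisymmetry of $\sigma$ translates $\Fs$ into an evaluation of the Fourier transform at $-X$. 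Matching with the definition $\WignerTrafoel(\varphi,\psi)(X) = \eps^d \, (\Fs g)(-X)$ and making a change of variable $X \to -X$ then produces the claimed formula for Schwartz $f$.

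Next I would extend to the stated class $f \in \Cont^{\infty}_{\mathrm{pol}\,u}(\Pspace)$. Both sides make sense in this tempered setting: the right-hand side, because $\WignerTrafoel(\varphi,\psi) \in \Schwartz(\Pspace)$ by Lemma~\ref{magBel:WignerTransformExpectation} and pairs against the tempered distribution $f$; the left-hand side, because Proposition~\ref{magWQ:extension:prop:extension_Op_A} extends $\OpAel$ continuously to $\Schwartz'(\Pspace) \to \mathcal{L} \bigl ( \Schwartz(\R^d), \Schwartz'(\R^d) \bigr )$, so that $\bscpro{\varphi}{\OpAel(f)\psi}$ is well-defined via the $\Schwartz$--$\Schwartz'$ antiduality. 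Both expressions are continuous functionals of $f$ with respect to the weak-$\ast$ topology on $\Schwartz'(\Pspace)$, and since they agree on the sequentially weak-$\ast$-dense subspace $\Schwartz(\Pspace)$, they agree on $\Schwartz'(\Pspace)$, in particular on $\Cont^{\infty}_{\mathrm{pol}\,u}(\Pspace)$.

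The main obstacle is bookkeeping rather than analysis: one must carefully track the factors of $\eps^d$, $(2\pi)^d$ and the sign flip $-X$ that appear when moving $\Fs$ across the pairing, and confirm that they combine with the prefactor in the definition of $\WignerTrafoel$ to give exactly $(2\pi)^{-d}$ on the right-hand side. The analytic content itself is minimal — Fubini plus the formal self-duality of the symplectic Fourier transform with respect to the bilinear pairing — so once the conventions are nailed down the argument is essentially mechanical.
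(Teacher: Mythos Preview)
Your proposal is correct and the core computation --- pulling the inner product inside the phase-space integral and then shifting $\Fs$ from $f$ onto $g(X) = \bscpro{\varphi}{\WeylSysAel(X)\psi}$ via the bilinear Parseval identity --- is exactly the ``direct computation'' the paper alludes to. The one genuine difference is in how you justify the tempered case: you prove the identity first for $f \in \Schwartz(\Pspace)$ and then extend by weak-$\ast$ density, whereas the paper invokes Theorem~\ref{magWQ:extension:important_subclasses:uniformly_poly_bounded} (that $\Cont^{\infty}_{\mathrm{pol}\,u}(\Pspace) \subset \mathcal{M}^B(\Pspace)$) to know directly that $\OpAel(f)\psi \in \Schwartz(\R^d)$, so the inner product and the integral make sense without any approximation step. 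Your density route is slightly more general (it gives the identity on all of $\Schwartz'(\Pspace)$ for free) but requires citing the continuity of $\OpAel$ on distributions; the paper's route is more self-contained for this particular symbol class. Your caution about the $\eps^d$ bookkeeping is well-placed --- that factor in the definition of $\WignerTrafoel$ is there precisely to absorb a Jacobian coming from the $\eps$-scaling in the Weyl system, and tracking it correctly is indeed the only non-formal content of the computation.
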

\begin{proof}
	Since $f \in \mathcal{C}^{\infty}_{\mathrm{pol} \, u}(\Pspace) \subseteq \Schwartz'(\Pspace)$, it is in the magnetic Moyal algebra $\mathcal{M}^B(\Pspace)$ defined in \cite[Section~V.D.]{Mantoiu_Purice:magnetic_Weyl_calculus:2004} and thus its quantization is a continuous operator $\Schwartz(\R^d) \longrightarrow \Schwartz(\R^d)$. Hence, the integral exists and we get the claim by direct computation. 
\end{proof}
The Wigner transform also leads to a `magnetic dequantization' -- once we know the operator kernel, we can reconstruct the distribution. We do not strive for full generality here. In particular, unless the operator has special properties, we cannot conclude that $f$ is in any Hörmander class. More sophisticated techniques are needed, \eg a Beals-type criterion \cite{Iftimie_Mantoiu_Purice:commutator_criteria:2008}. 
\begin{lem}\label{asymptotics:mag_Wigner_transform:inverse_Weyl_quantization}
	Assume $B$ and $A$ satisfy Assumption~\ref{asymptotics:assumption:bounded_fields} and $T \in \mathcal{B} \bigl ( L^2(\R^d) \bigr )$ is a bounded linear operator whose operator kernel $K_T$ is in $\Schwartz(\R^d \times \R^d)$. Then the \emph{inverse magnetic quantization} is given by 
	\begin{align}
		{\OpAel}^{-1}(T)(X) :=& \WignerTrafoel K_{T}(X) 
		\notag \\
		=& \int_{\R^d} \dd y \, e^{-i y \cdot \xi} \, e^{-i \lambda \GAe([\nicefrac{x}{\eps} - \nicefrac{y}{2} ,\nicefrac{x}{\eps} + \nicefrac{y}{2} ])} \,  K_{T} \bigl ( \tfrac{x}{\eps} - \tfrac{y}{2} , \tfrac{x}{\eps} + \tfrac{y}{2} \bigr ) 
		. 
	\end{align}
	This formula extends to operators with distributional kernels $K_T \in \Schwartz'(\R^d \times \R^d)$, \ie the kernels associated to continuous maps $\Schwartz(\R^d) \longrightarrow \Schwartz'(\R^d)$. 
\end{lem}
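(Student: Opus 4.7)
The plan is to mimic the non-magnetic argument (Proposition~\ref{weyl_calculus:wigner_transform:prop:inverse_weyl_quantization}), with the magnetic phase factor $e^{-i\lambda \Gamma^A_\eps([\cdot,\cdot])}$ treated as a harmless multiplier, and then extend to the distributional case by duality. The main point is that the kernel map $K^A_{\eps,\lambda}: f \mapsto K^A_f$ already appearing in the quantization formula of Chapter~\ref{asymptotics:weyl_quantization} decomposes as the composition of (i) partial Fourier transform in the second variable, (ii) the affine change of coordinates $(x,y) \mapsto \bigl(\tfrac{\eps}{2}(x+y), y-x\bigr)$, and (iii) multiplication by the smooth polynomially bounded phase $e^{-i\lambda \Gamma^A_\eps([x,y])}$ (here I use Assumption~\ref{asymptotics:assumption:bounded_fields} to ensure $\Gamma^A_\eps([x,y])$ is polynomially bounded together with all its derivatives, analogously to equation~\eqref{magWQ:extension:eqn:operator_kernel} but with the scaling adapted to the adiabatic conventions). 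Each of the three steps is a topological isomorphism on $\Schwartz$, so $K^A_{\eps,\lambda}$ is an isomorphism $\Schwartz(\Pspace) \to \Schwartz(\R^d \times \R^d)$.

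Step one, existence of a preimage: If $T \in \mathcal{B}(L^2(\R^d))$ has $K_T \in \Schwartz(\R^d \times \R^d)$, set $f := \WignerTrafoel K_T$, which lies in $\Schwartz(\Pspace)$ by Lemma~\ref{magBel:WignerTransformExpectation} (applied to the kernel map rather than to $\sopro{\varphi}{\psi}$). Step two, verification that $f$ is the desired symbol: compute $K^A_f$ and show $K^A_f = K_T$. This is purely algebraic: plug $f = \WignerTrafoel K_T$ into
\[
K^A_f(x,y) = e^{-i\lambda\Gamma^A_\eps([x,y])} \, (\Fourier_2 f)\bigl(\tfrac{\eps}{2}(x+y), y-x\bigr),
\]
write out the Wigner transform, change variables from $(x,\xi,y')$ to shifted coordinates centered at $\tfrac{\eps}{2}(x+y)$, perform the $\xi$-integral which produces a $\delta$-function, and use that the flux $\Gamma^A_\eps$ satisfies the trivial telescoping relation on a degenerate segment to cancel the two phase factors. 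The computation is entirely parallel to the last display in the proof of Proposition~\ref{weyl_calculus:wigner_transform:prop:inverse_weyl_quantization}, with every occurrence of $x \pm \tfrac{y}{2}$ replaced by $\tfrac{x}{\eps} \pm \tfrac{y}{2}$ and every spatial integral rescaled by $\eps$. By the isomorphism property of $K^A_{\eps,\lambda}$, the equality $K^A_f = K_T$ forces $T = \Op^A_{\eps,\lambda}(f)$, i.e.\ $f = \Op^A_{\eps,\lambda}{}^{-1}(T)$.

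The hard part will be the extension to distributional kernels $K_T \in \Schwartz'(\R^d \times \R^d)$, which correspond by the Schwartz kernel theorem to continuous linear maps $\Schwartz(\R^d) \to \Schwartz'(\R^d)$. Here one cannot compute with integrals directly; instead one notes that the three building blocks of $K^A_{\eps,\lambda}$ each extend to isomorphisms of tempered distributions (the partial Fourier transform by the usual duality, the affine coordinate change by pullback, and multiplication by the $\Cont^\infty_{\mathrm{pol}}$-phase by Leibniz and polynomial-bound estimates). Composing, $K^A_{\eps,\lambda}$ extends to an isomorphism $\Schwartz'(\Pspace) \to \Schwartz'(\R^d \times \R^d)$, and $\WignerTrafoel$ likewise extends to $\Schwartz'(\R^d \times \R^d) \to \Schwartz'(\Pspace)$ (cf.\ Proposition~\ref{magWQ:extension:prop:extension_Wigner_trafo}). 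By continuity and density of $\Schwartz$ in $\Schwartz'$ in the weak-$*$ sense, the identity $\WignerTrafoel \circ K^A_{\eps,\lambda} = \mathrm{id}$ proven on Schwartz functions passes to tempered distributions, yielding the claimed inversion formula in the generalized sense.
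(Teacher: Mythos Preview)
Your proposal is correct and follows essentially the same approach as the paper: verify $\WignerTrafoel \circ K^A_{\eps,\lambda} = \id$ on Schwartz functions by direct calculation, then extend to tempered distributions by decomposing the kernel map into partial Fourier transform, linear change of variables, and multiplication by a $\Cont^{\infty}_{\mathrm{pol}}$-phase, each of which extends to an isomorphism of $\Schwartz'$. The paper's proof is terser but structurally identical, invoking exactly the same three building blocks for the distributional extension.
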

\begin{proof}
	If $T = \OpAel(f_T)$ is the magnetic quantization of $f_T \in \Schwartz(\Pspace)$, then $\WignerTrafoel K_T = f_T \in \Schwartz(\Pspace)$ follows from direct calculation, using the explicit form of the Wigner transform, Lemma~~\ref{magBel:WignerTransformExpectation}. Similarly, we confirm that $T = \OpAel \bigl ( \WignerTrafoel(K_T) \bigr )$ holds and $\WignerTrafoel K_T \in \Schwartz(\Pspace)$ follows from $K_T \in \Schwartz(\R^d \times \R^d)$. 
	
	If the kernel of $T$ is a tempered distribution, then we can extend the formulas for $\OpAel$ and $\WignerTrafoel$ to $\Schwartz'(\Pspace)$: Fourier transform, multiplication by a phase factor whose phase function is of tempered growth and a linear change of variables can all be extended to $\Schwartz'(\Pspace)$ and thus it makes sense to write $\WignerTrafoel K_T$ after a suitable reinterpretation. Then $\WignerTrafoel K_T = f_T \in \Schwartz'(\Pspace)$ is such that $\OpAel(f_T) = T : \Schwartz(\R^d) \longrightarrow \Schwartz'(\R^d)$. 
\end{proof}
%


\section{Asymptotic expansions of the product} 
\label{asymptotics:expansions}
It turns out that the integral formula for the product found in \cite{Mantoiu_Purice:magnetic_Weyl_calculus:2004,Iftimie_Mantiou_Purice:magnetic_psido:2006} is not amenable to the derivation of an asymptotic expansion in $\eps$ and $\lambda$. Although an asymptotic expansion for $\eps = 1 = \lambda$ has been derived in \cite{Iftimie_Mantiou_Purice:magnetic_psido:2006}, \emph{calculating} each term has proven to be very tedious and it is not obvious how to collect terms of the same power in $\eps$ and $\lambda$. Thus, we will use an equivalent formula for the magnetic Weyl product. From this, we derive closed formulas for the $(n,k)$ term by expanding the `twister' of the convolution. This result is an extension of Theorem~\ref{magWQ:magnetic_weyl_calculus:thm:equivalence_product} and has first been proven by Iftimie, Măntoiu and Purice for $\eps = 1$ and $\lambda = 1$. We only mention it here to have a formula with $\eps$ and $\lambda$ in the proper places that serves as a starting point for the derivation of the asymptotic expansion. 
\begin{thm}[\cite{Iftimie_Mantiou_Purice:magnetic_psido:2006}]\label{asymptotics:thm:equivalenceProduct}
	Assume the magnetic field $B$ satisfies Assumption~\ref{asymptotics:assumption:bounded_fields}. Then for two symbols $f \in \Hoerr{m_1}$ and $g \in \Hoerr{m_2}$, the magnetic Weyl product $f \magWel g$\index{Weyl product!magnetic} is in symbol class $\Hoerr{m_1 + m_2}$ and given by the oscillatory integral 
	\begin{align}
		(f \magWel g)(X) 
		&= \frac{1}{(2 \pi)^{2d}} \int_{\Pspace} \dd Y \int_{\Pspace} \dd Z \, e^{+i \sigma(X,Y+Z)} \, e^{i \frac{\eps}{2} \, \sigma(Y,Z)} \, 
		\cdot \notag \\
		&\qquad \qquad \qquad \qquad \cdot 
		\oBel \bigl ( x-\tfrac{\eps}{2}(y+z),x + \tfrac{\eps}{2}(y-z) , x+\tfrac{\eps}{2}(y+z) \bigr ) \, 
		\cdot \notag \\
		&\qquad \qquad \qquad \qquad \cdot 
		\bigl ( \Fs f \bigr )(Y) \, \bigl ( \Fs g \bigr )(Z) \label{asymptotics:expansions:eqn:Fourier_form_magnetic_Weyl_product} \\ 
		&= \frac{1}{(\pi\eps)^{2d}} \, \int_{\Pspace} \dd \tilde{Y} \int_{\Pspace} \dd \tilde{Z} \, e^{- i \frac{2}{\eps} \, \sigma(\tilde{Y} - X , \tilde{Z} - X)} \, 
		\cdot \notag \\
		&\qquad \qquad \qquad \qquad \cdot 
		\oBel \bigl ( x - \tilde{y} + \tilde{z} , -x + \tilde{y} + \tilde{z} , x + \tilde{y} + \tilde{z} \bigr ) \,  f(\tilde{Y}) \, g(\tilde{Z}) \notag 
	\end{align}
	where $\oBel(x,y,z) := e^{- i \frac{\lambda}{\eps} \Gamma^B(\sexpval{x,y,z})}$ is the exponential of the magnetic flux through the triangle with corners $x$, $y$ and $z$. 
\end{thm}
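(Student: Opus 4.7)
The plan is to derive the Fourier form of the product formula first for Schwartz functions using the composition law of the magnetic Weyl system, then deduce the direct form by an elementary rewriting, and finally establish the symbol-class preservation and extension to Hörmander symbols via oscillatory integral techniques. Essentially, this is Theorem~\ref{magWQ:magnetic_weyl_calculus:thm:equivalence_product} with the parameters $\eps$ and $\lambda$ carefully inserted in the appropriate places, combined with a symbol-calculus argument to control the resulting oscillatory integral.

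For $f,g \in \Schwartz(\Pspace)$, I would begin by plugging the definitions of $\OpAel(f)$ and $\OpAel(g)$ into the operator product and obtain a double integral over $\Pspace \times \Pspace$ with integrand $(\Fs f)(Y)(\Fs g)(Z)\, \WeylSysAel(Y)\WeylSysAel(Z)$. Applying the composition law
\begin{align*}
\WeylSysAel(Y)\WeylSysAel(Z) = e^{i\frac{\eps}{2}\sigma(Y,Z)}\, \oBel(\Qe;y,z)\, \WeylSysAel(Y+Z),
\end{align*}
the problem reduces to identifying the symbol of $\oBel(\Qe;y,z) \WeylSysAel(Y+Z)$ for fixed $(y,Z)$. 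Its integral kernel is computed directly by applying $\WeylSysAel(Y+Z)$ to a test function and absorbing the flux prefactor into the magnetic circulation; feeding this kernel into the inverse Wigner transform of Lemma~\ref{asymptotics:mag_Wigner_transform:inverse_Weyl_quantization} yields precisely the Fourier-form expression. The equivalence of the two integral formulas is then a mechanical manipulation: writing $\Fs f$ and $\Fs g$ as explicit integrals against $f$ and $g$, collecting the exponentials in the momentum variables $\xi, \eta, \zeta$ and integrating them out produces Dirac deltas which, together with the change of variables $\tilde{Y} = X + \tfrac{\eps}{2}(\cdot)$, reduce the expression to the direct form with prefactor $(\pi\eps)^{-2d}$. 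Regularizing cutoffs make the interchange of integrals rigorous by Fubini and dominated convergence.

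The main obstacle is the extension from $\Schwartz(\Pspace)$ to Hörmander symbols together with the claim that $f \magWel g \in \Hoerr{m_1+m_2}$. I would interpret the Fourier-form integral as an oscillatory integral with phase $\Phi(Y,Z) = \sigma(X,Y+Z) + \tfrac{\eps}{2}\sigma(Y,Z)$ and amplitude $\oBel \cdot (\Fs f)(Y)\, (\Fs g)(Z)$. Introducing regularizing cutoffs $\chi(\delta Y, \delta Z)$ and vector fields adapted to $\Phi$, repeated integration by parts converts the formal integral into an absolutely convergent one, independent of the regularization. To establish the symbol estimates, I would differentiate under the oscillatory integral: derivatives in $x$ and $\xi$ land on $\Fs f$, $\Fs g$ (against which one trades powers of $Y$, $Z$ by integration by parts) or on $\oBel$ (producing derivatives of $B$ integrated along the edges of the triangle, all uniformly bounded by Assumption~\ref{asymptotics:assumption:bounded_fields}). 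Collecting these estimates yields the required bound $\sexpval{\xi}^{-m_1-m_2+\rho|\alpha|} \, \babs{\partial_x^a \partial_\xi^\alpha (f\magWel g)} \leq C_{a\alpha}$ uniformly in $x$ and in the small parameters $(\eps,\lambda)$. The hard part is the bookkeeping of these oscillatory integral estimates, which transcribes the argument of \cite{Iftimie_Mantiou_Purice:magnetic_psido:2006} into the two-parameter setting using the machinery collected in Appendix~\ref{appendix:oscillatory_integrals}.
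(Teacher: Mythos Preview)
Your proposal is correct and follows essentially the same route as the paper: the derivation of the Fourier form via the composition law of the Weyl system and the inverse Wigner transform is exactly the content of Theorem~\ref{magWQ:magnetic_weyl_calculus:thm:equivalence_product}, which the paper simply revisits with $\eps$ and $\lambda$ inserted; and your oscillatory-integral argument for the symbol estimate is precisely what is encapsulated in Lemma~\ref{appendix:asymptotics:existence_osc_int:lem:remainder} (invoked with $\tau = \tau' = 1$ and $G_{\tau'} = \oBel$, whose required growth bounds come from Corollary~\ref{appendix:asymptotics:cor:properties_flux}). The only cosmetic difference is that the paper packages the integration-by-parts step into a ready-made lemma using the concrete operators $L_y = 1 - \Delta_y$ etc., whereas you describe it in the more general language of vector fields adapted to the phase and regularizing cutoffs.
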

\begin{proof}
	The Weyl product is defined implicitly by 
	\begin{align*}
		\OpAel(f) \, \OpAel(g) &=: \OpAel(f \magWel g) 
	\end{align*}
	and its quantization maps $\Schwartz(\R^d)$ to itself \cite[Proposition~21]{Mantoiu_Purice:magnetic_Weyl_calculus:2004}. Combined with Theorem~\ref{asymptotics:mag_Wigner_transform:inverse_Weyl_quantization}, this immediately yields 
	\begin{align*}
		(f \magWel g)(X) = \WignerTrafoel \bigl ( K_{\OpAel(f) \, \OpAel(g)} \bigr )(X) 
	\end{align*}
	where $K_{\OpAel(f) \, \OpAel(g)}$ is the kernel of $\OpAel(f) \, \OpAel(g)$. Here, we have chosen a vector potential $A$ which is associated to $B$ that also satisfies Assumption~\ref{asymptotics:assumption:bounded_fields}. Although it is \emph{a priori} not clear that there must exist a \emph{symbol} $f \magWel g$, we will start with formal calculations and then use Corollary~\ref{appendix:asymptotics:existence_osc_int:lem:remainder} to show that integral~\eqref{asymptotics:expansions:eqn:Fourier_form_magnetic_Weyl_product} exists and is in the correct symbol class.
	
	For the computation of the product formula, one has to revisit the proof of Theorem~\ref{magWQ:magnetic_weyl_calculus:thm:equivalence_product} and add $\eps$ and $\lambda$ to the right places. One then arrives at 
	%
	\begin{align*}
		(f &\magWel g)(X) = 
		\\
		&= \frac{1}{(2\pi)^{2d}} \int_{\Pspace} \dd Z \, \int_{\Pspace} \dd Y \, \bigl ( \Fs f \bigr )(Y) \, \bigl ( \Fs g \bigr )(Z-Y) \;  e^{i \frac{\eps}{2} \, \sigma(Y,Z)} \, L_{\eps,\lambda}(y,Z;X) 
		\displaybreak[2]
		\notag \\ 
		&= \frac{1}{(2\pi)^{2d}} \int_{\Pspace} \dd Y \, \int_{\Pspace} \dd Z \, e^{i \sigma(X,Y+Z)} \,  e^{i \frac{\eps}{2} \, \sigma(Y,Z)} \, 
		\cdot \notag \\
		&\qquad \qquad \cdot 
		\oBel \bigl ( x - \tfrac{\eps}{2} (y+z),x + \tfrac{\eps}{2} (y-z) , x + \tfrac{\eps}{2} (y+z) \bigr ) 
		\bigl ( \Fs f \bigr )(Y) \, \bigl ( \Fs g \bigr )(Z)
		. 
	\end{align*}
	The integral on the right-hand side satisfies the assumptions of Lemma~\ref{appendix:asymptotics:existence_osc_int:lem:remainder} with $\tau = 1 = \tau'$ (keeping in mind that $\oBel$ satisfies the assumptions on $G_{\tau'}$ by Lemma~\ref{appendix:asymptotics:cor:properties_flux}). Thus, the oscillatory integral exists and is in symbol class $\Hoerr{m_1 + m_2}$. 
\end{proof}
Equation~\eqref{asymptotics:expansions:eqn:Fourier_form_magnetic_Weyl_product} is the starting point for an asymptotic expansion of the product in small parameters $\eps$ \emph{and} $\lambda$. 
\begin{thm}[Asymptotic expansion of the magnetic Moyal product\index{Weyl product!asymptotic expansion in $\eps$ and $\lambda$}]\label{asymptotics:thm:asymptotic_expansion}
	Assume $B$ is a magnetic field whose components are $\mathcal{BC}^{\infty}$ functions and $f \in \Hoerr{m_1}$ as well as $g \in \Hoerr{m_2}$. Then the magnetic Moyal product can be expanded asymptotically in $\eps \ll 1$ and $\lambda \ll 1$: for every $\eprec \ll 1$ we can choose $N \equiv N(\eprec,\eps,\lambda) \in \N_0$ such that 
	\begin{align}
		f \magWel g &= \sum_{n = 0}^{N} \sum_{k = 0}^n \eps^n \lambda^k \, (f \magWel g)_{(n,k)} + \tilde{R}_N 
		\label{intro:epsLambdaExpansion}
	\end{align}
	where the $(n,k)$ term $(f \magWel g)_{(n,k)}$ is in symbol class $\Hoerr{m_1 + m_2 - (n+k) \rho}$ and we have explicit control over the remainder: $\tilde{R}_N$ as given by equation \eqref{asymptotics:eqn:total_remainder} is numerically small and in the correct symbol class, $\Hoerr{m_1 + m_2 - (N+1)  \rho}$, \ie it is of order $\ordereprec$ in the sense of Definition \ref{asymptotics:defn:precision}. The $(n,k)$ term of the expansion, 
	\begin{align}
		(f \magWel g)_{(n,k)}(X) &= \sum_{\substack{k_0 + \sum_{j = 1}^{n} j k_j = n \notag \\
		\sum_{j=1}^{n} k_j = k}} \frac{i^{k + k_0}}{k_0! \, k_1 ! \cdots k_{n}!} \, \mathcal{L}_0^{k_0} \bigl ( (\partial_{\eta},\partial_y) , (\partial_{\zeta},\partial_z) \bigr ) 
		\cdot \\
		&\qquad \qquad \qquad \cdot 
		 \prod_{j = 1}^{n} \mathcal{L}_j^{k_j}(x,-i \partial_{\eta},-i \partial_{\zeta}) \bigr ) f(Y) g(Z) \Bigl . \Bigr \rvert_{Y = X = Z} \notag 
		, 
	\end{align}
	is defined in terms of a family of differential operators $\mathcal{L}_j$, $j \in \N_0$, 
	\begin{align}
		\mathcal{L}_0(Y,Z) &:= \tfrac{1}{2} \sigma(Y,Z) = \tfrac{1}{2} \bigl ( \eta \cdot z - y \cdot \zeta \bigr ) 
		\displaybreak[2]
		\\
		\mathcal{L}_j(x,y,z) &:= - \frac{1}{j!} \, \sum_{m_1 , \ldots , m_{j-1} = 1}^d \partial_{x_{m_1}} \cdots \partial_{x_{m_{j-1}}} B_{kl}(x) \, y_k \, z_l \, \left ( - \frac{1}{2} \right )^{j+1} \frac{1}{(j+1)^2} \cdot \notag \\
		&\negmedspace \negmedspace \negmedspace \negmedspace \negmedspace \negmedspace \negmedspace \negmedspace \negmedspace \negmedspace \negmedspace \negmedspace \negmedspace \negmedspace 
		\cdot \sum_{c = 1}^j \noverk{j+1}{c} \bigl ( ( 1 - (-1)^{j+1} ) c - ( 1 - (-1)^{c} ) (j+1) \bigr ) \, y_{m_1} \cdots y_{m_{c-1}} z_{m_{c}} \cdots z_{m_{j-1}} \notag \\
		&=: - \sum_{\abs{\alpha} + \abs{\beta} = j-1} C_{j,\alpha,\beta} \, \partial_x^{\alpha} \partial_x^{\beta} B_{kl}(x) \, y_k z_l \, y^{\alpha} \, z^{\beta} 
		. 
	\end{align}
\end{thm}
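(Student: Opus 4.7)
The plan is to start from the Fourier-form integral representation of $f \magWel g$ obtained in Theorem~\ref{asymptotics:thm:equivalenceProduct}, and expand the two twisters that carry the $\eps$ and $\lambda$ dependence. Specifically, the phase factor $e^{i \frac{\eps}{2} \sigma(Y,Z)}$ only depends on $\eps$, whereas the magnetic phase $\oBel \bigl ( x - \tfrac{\eps}{2}(y+z) , x + \tfrac{\eps}{2}(y-z) , x + \tfrac{\eps}{2}(y+z) \bigr )$ carries both parameters through the scaled flux. Taylor expansion of the symplectic factor in $\eps$ produces the $\mathcal{L}_0$ differential operator (up to rewriting the $(y,z,\eta,\zeta)$ polynomials as derivatives on the $\Fs f$ and $\Fs g$ integrands and then absorbing the symplectic Fourier transform, so that $Y$-polynomials become $(\partial_\eta,\partial_y)$-type derivatives acting on $f$ at $X$, and similarly for $Z$). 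This is the classical mechanism behind the non-magnetic expansion and is straightforward here.

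The new ingredient is the magnetic twister. Using Stokes' theorem and parametrizing the triangle by its three vertices as a function of $\eps$, the scaled flux $\tfrac{\lambda}{\eps} \Gamma^B$ through the collapsing triangle can be written, after a change of variables onto the standard $2$-simplex, as a convergent series in $\eps$ with coefficients polynomial in $y,z$ and in derivatives of $B$ at $x$. Concretely, one Taylor-expands $B_{kl}$ at $x$ inside the flux integral; the combinatorics of the binomial coefficients coming from the symmetric choice of base point $x$ and from the parametrization of the two edges $[x-\tfrac{\eps}{2}(y+z), x+\tfrac{\eps}{2}(y-z)]$ and $[x+\tfrac{\eps}{2}(y-z), x+\tfrac{\eps}{2}(y+z)]$ produces precisely the constants $C_{j,\alpha,\beta}$ defining $\mathcal{L}_j$ for $j \geq 1$. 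Exponentiating this series in $\eps \lambda$ and combining with the expansion of the symplectic twister, the multinomial formula collects the $(n,k)$-term as the sum over partitions $k_0 + \sum j k_j = n$, $\sum k_j = k$ stated in the theorem, once the polynomials in $(y,z)$ are converted into $(-i\partial_\eta, -i\partial_\zeta)$ acting on $f(Y) g(Z)$ evaluated at $Y=X=Z$.

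The main obstacle, and the core technical content, is the control of the remainder $\tilde R_N$. After truncating both series to order $N$, one is left with an oscillatory integral whose integrand carries the phase $e^{i\sigma(X,Y+Z)}$ times a remainder factor in which the prefactors $\oBel$ and $e^{i\frac{\eps}{2}\sigma(Y,Z)}$ are replaced by their integral remainder terms of order $\eps^{N+1}$ and $(\eps\lambda)^{N+1}$ respectively. The strategy is to invoke the oscillatory integral machinery summarized in Appendix~\ref{appendix:oscillatory_integrals} (in particular the version of Corollary/Lemma~\ref{appendix:asymptotics:existence_osc_int:lem:remainder} already used to prove Theorem~\ref{asymptotics:thm:equivalenceProduct}) to show that $\tilde R_N$ exists as a symbol in $\Hoerr{m_1 + m_2 - (N+1)\rho}$ with seminorms bounded uniformly in $(\eps,\lambda)$ by a constant times $\max\{\eps,\lambda\}^{N+1}$. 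The key point is that the derivatives of $B$ appearing in the remainder are in $\BCont^{\infty}$ by Assumption~\ref{asymptotics:assumption:bounded_fields}, so they do not spoil the Hörmander estimates; all additional $(y,z)$-polynomial weights are absorbed by repeated integration by parts against the non-stationary phase $\sigma(X,Y+Z)$. Once these uniform seminorm bounds are in place, the definition of precision $\ordereprec$ (Definition~\ref{asymptotics:defn:precision}) together with the choice $N = \max\{n_c,k_c\}$ yields the claimed $\ordereprec$-smallness.

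Finally, to arrive at the stated formula for $(f \magWel g)_{(n,k)}$, one collects, for each fixed $(n,k)$, all products of the form $\mathcal{L}_0^{k_0} \prod_j \mathcal{L}_j^{k_j}$ with total $\eps$-order $k_0 + \sum j k_j = n$ and total $\lambda$-order $\sum k_j = k$, divided by the combinatorial factor $k_0! k_1! \cdots k_n!$ coming from the exponential series, and the factor $i^{k+k_0}$ tracks the imaginary units from the two exponentials. The symbol class statement $(f \magWel g)_{(n,k)} \in \Hoerr{m_1+m_2-(n+k)\rho}$ then follows from the fact that each $\mathcal{L}_j$ for $j \geq 1$ contributes one $\xi$-derivative on each factor (hence lowering the combined order by $2\rho$ per application, consistent with the weight $j+1 \geq 2$), whereas $\mathcal{L}_0$ contributes one derivative on each of $f,g$, giving the correct bookkeeping.
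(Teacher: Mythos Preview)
Your overall strategy is sound, but it is actually the \emph{alternative} route the paper takes in Theorems~\ref{asymptotics:thm:lambda_expansion} and~\ref{asymptotics:thm:lambdaEpsExpansion} rather than the direct proof of Theorem~\ref{asymptotics:thm:asymptotic_expansion}. The paper's direct proof does \emph{not} separate the two exponential factors: it combines them into a single twister $e^{iT_{\eps,\lambda}}$ with $T_{\eps,\lambda}(x,Y,Z)=\tfrac{\eps}{2}\sigma(Y,Z)-\lambda\gBe(x,y,z)$, Taylor-expands the exponential as $\sum_{n=0}^N\tfrac{i^n}{n!}(T_{\eps,\lambda})^n+R_N$, and only then inserts the $\eps$-expansion of $\gBe$ (Lemma~\ref{appendix:asymptotics:expansion_twister:lem:expansion_twister}) into each $(T_{\eps,\lambda})^n$ via the multinomial formula. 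This produces directly the sum over $(k_0,k_1,\dots)$ with the stated constraints. Your factored route—expand $e^{i\frac{\eps}{2}\sigma}$ and $e^{-i\lambda\gBe}$ separately and then multiply the series—gives the same terms (since the two factors commute) and is exactly what the paper does afterwards in the $\lambda$-first expansion; the equivalence of the two orderings is proved separately in Theorem~\ref{asymptotics:thm:lambdaEpsExpansion}. So your argument would more naturally yield a proof of the package Theorem~\ref{asymptotics:thm:lambda_expansion} + Theorem~\ref{asymptotics:thm:lambdaEpsExpansion}, from which Theorem~\ref{asymptotics:thm:asymptotic_expansion} follows.

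There is, however, a genuine error in your symbol-class bookkeeping. You write that each $\mathcal{L}_j$, $j\geq 1$, ``contributes one $\xi$-derivative on each factor (hence lowering the combined order by $2\rho$ per application).'' This is false for $j\geq 2$: from its definition, $\mathcal{L}_j(x,y,z)$ is a polynomial of degree $j+1$ in $(y,z)$ (two from $y_kz_l$, plus $j-1$ from $y^{\alpha}z^{\beta}$ with $|\alpha|+|\beta|=j-1$) and carries \emph{no} factors of $(\eta,\zeta)$. After converting $(y,z)$-monomials into $\partial_\xi$-derivatives, one application of $\mathcal{L}_j$ therefore lowers the combined order by $(j+1)\rho$, not $2\rho$. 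The correct count—what the paper calls ``$q$s and $p$s''—is that $\mathcal{L}_0^{k_0}\prod_{j\geq 1}\mathcal{L}_j^{k_j}$ carries $k_0+\sum_{j\geq 1}k_j(j+1)=k_0+\sum jk_j+\sum k_j=n+k$ factors in $(y,z)$ and only $k_0$ factors in $(\eta,\zeta)$, giving the order reduction $(n+k)\rho$ via Lemma~\ref{appendix:asymptotics:existence_osc_int:Lemma2}. This counting is not cosmetic: it is precisely the mechanism that also controls the remainder, where one must check that the integral Taylor remainders from both the exponential and from $\gBe$ still contain at least $N+1$ more $q$s than $p$s so that Lemma~\ref{appendix:asymptotics:existence_osc_int:lem:remainder} applies with the required symbol-class gain.
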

To obtain an asymptotic expansion, we adapt an idea by Folland to the present case \cite[p 108 f.]{Folland:harmonic_analysis_hase_space:1989}: we expand the exponential of the \emph{twister} 
\begin{align*}
	e^{i \frac{\eps}{2} \sigma(Y,Z) - i \lambda \gamma^B_{\eps}(x,y,z)} &= e^{i T_{\eps,\lambda}(x,Y,Z)} \notag 
	\\
	&\asymp \sum_{n = 0}^{\infty} \sum_{k = 0}^{\infty} \eps^n \lambda^k \sum C_{n,k,a,\alpha,b,\beta}(x) \, y^{a} \eta^{\alpha} \, z^{b} \zeta^{\beta} \notag 
\end{align*}
as a polynomial in $y$, $\eta$, $z$ and $\zeta$ with coefficients $C_{n,k,a,\alpha,b,\beta} \in \mathcal{BC}^{\infty}(\R^d)$ that are bounded functions in $x$ with bounded derivatives to all orders. Here, $\gamma^B_{\eps}$ stands for the scaled magnetic flux 
\begin{align}
	\gamma^B_{\eps}(x,y,z) := \tfrac{1}{\eps} \Gamma^B \bigl ( \bexpval{x-\tfrac{\eps}{2}(y+z),x + \tfrac{\eps}{2}(y-z) , x+\tfrac{\eps}{2}(y+z)} \bigr ) 
	. 
	\label{asymptotics:expansions:eqn:gamma_B_eps}
\end{align}
Then we can rewrite equation \eqref{asymptotics:expansions:eqn:Fourier_form_magnetic_Weyl_product} as a convolution of derivatives of $f$ and $g$. Furthermore, we can show that there are always sufficiently many derivatives with respect to momenta so that each of the terms has the correct decay properties. 

The difficult part of the proof is to show the existence of certain oscillatory integrals. To clean up the presentation of the proof, we have moved these parts to Appendix \ref{appendix:asymptotics:existence_osc_int}. For simplicity, we also introduce the following nomenclature: 
\begin{defn}[Number of $q$s and $p$s\index{numbers of $p$s and $q$s}]
	Let $B \in \mathcal{BC}^{\infty} \bigl ( \R^d_x , \mathcal{C}^{\infty}_{\mathrm{pol}}(\Pspace_Y \times \Pspace_Z) \bigr )$ be a function which can be decomposed into a finite sum of the form 
	\begin{align*}
		B(x,Y,Z) &= \sum_{\substack{\sabs{a} + \sabs{b} = n\\
		\sabs{\alpha} + \sabs{\beta} = k}} b_{a \alpha b \beta}(x,Y,Z) \, y^{a} \, \eta^{\alpha} \, z^{b} \, \zeta^{\beta} 
	\end{align*}
	where all $b_{a \alpha b \beta}$ smooth \emph{bounded} functions that depend on the multiindices $a , \alpha , b , \beta \in \N_0^d$. We then say that $B$ has $n$ $q$s (total number of factors in $y$ and $z$) and $k$ $p$s (total number of factors in $\eta$ and $\zeta$). 
\end{defn}
In the appendix we show how to convert $q$s into derivatives with respect to \emph{momentum} and $p$s into derivatives with respect to \emph{position}. Monomials of $x$ and $\xi$ multiplied with the symplectic Fourier transform of a Schwarz function $\varphi \in \Schwartz(\Pspace)$ can be written as the symplectic Fourier transform of derivatives of $\varphi$ in $\xi$ and $x$: 
\begin{align*}
	x^{a} \xi^{\alpha} (\Fs \varphi)(X) &= \Fs \bigl ( (-i \partial_{{\xi}})^{a} (i \partial_{{x}})^{\alpha} \varphi \bigr ) (X)
\end{align*}
This manipulation can be made rigorous for symbols of class $m$ with weight $\rho$. We see that derivatives with respect to momentum \emph{improve} decay by $\rho$ while those with respect to position do not alter the decay. In this sense, the decay properties of the integrals are determined by the number of $q$s and $p$s. 
\begin{proof}[Theorem~\ref{asymptotics:thm:asymptotic_expansion}]
	Let $\eprec \ll 1$. Then we take $N \equiv N(\eprec,\eps,\lambda) \in \N_0$ to be as in the first part of Definition \ref{asymptotics:defn:precision}, \ie $\eps^{N+1} < \eprec$ and $\lambda^{N+1} < \eprec$ hold. We will show that $f \magWel g - \sum_{n = 0}^N \sum_{k = 0}^n \eps^n \lambda^k \, (f \magWel g)_{(n,k)} = \ordereprec$. 
	\medskip
	
	\noindent
	\textbf{Step 1: Formal expansion of the twister. }
	We expand the exponential of the twister $e^{i \frac{\eps}{2} \sigma(Y,Z)} e^{- i \lambda \gBe(x,y,z)} = e^{i T_{\eps,\lambda}(x,Y,Z)}$ up to the $N$th term, 
	\begin{align*}
		e^{i T_{\eps,\lambda}(x,Y,Z)} &= \sum_{n = 0}^N \frac{i^n}{n!} \bigl ( T_{\eps,\lambda}(x,Y,Z) \bigr )^n + R_N(x,Y,Z) 
		. 
	\end{align*}
	The remainder 
	\begin{align}
		R_N(x,Y,Z) :&= \frac{1}{N!} \int_0^1 \dd \tau \, (1 - \tau)^N \partial_{\tau}^{N+1} e^{\tau u} \big \vert_{u = i T_{\eps,\lambda}(x,Y,Z)} \notag \\
		&= \frac{i^{N+1}}{N!} \bigl ( T_{\eps,\lambda}(x,Y,Z) \bigr )^{N+1} \int_0^1 \dd \tau \, (1 - \tau)^N \, e^{i \tau T_{\eps,\lambda}(x,Y,Z)} 
		\label{asymptotics:eqn:remainder1}
	\end{align}
	is treated in Step 3, right now we are only concerned with the first term. If we plug in the asymptotic expansion of the flux $\gBe$ derived in Lemma~\ref{appendix:asymptotics:expansion_twister:lem:expansion_twister} up to $N'$th order with $N' \geq N$, then we obtain 
	\begin{align}
		\bigl ( T_{\eps,\lambda}&(x,Y,Z) \bigr )^n = 
		\Bigl ( \tfrac{\eps}{2} \sigma(Y,Z) + \lambda \mbox{$\sum_{n' = 1}^{N'}$} \eps^{n'} \mathcal{L}_{n'}(x,y,z) + \lambda R_{N'}[\gBe](x,y,z) \Bigr )^n \notag \\
		&= \sum_{l = 0}^n \noverk{n}{l} \negthinspace \Bigl ( \tfrac{\eps}{2} \sigma(Y,Z) + \lambda \mbox{$\sum_{n' = 1}^{N'}$} \eps^{n'} \mathcal{L}_{n'}(x,y,z) \Bigr )^{n - l} \negmedspace \negthinspace \bigl ( \lambda R_{N'}[\gBe](x,y,z) \bigr )^l \notag \\
		&=: \Bigl ( \tfrac{\eps}{2} \sigma(Y,Z) + \lambda \mbox{$\sum_{n' = 1}^{N'}$} \eps^{n'} \mathcal{L}_{n'}(x,y,z) \Bigr )^{n} + R_{N' \, n}[T_{\eps,\lambda}](x,Y,Z) 
		. 
		\label{asymptotics:eqn:remainder2}
	\end{align}
	Again, we focus on the first term of the expansion and treat the remainder separately in Step~3: 
	\begin{align*}
		\Bigl ( \tfrac{\eps}{2} \sigma(Y,Z) + \lambda \mbox{$\sum_{n' = 1}^{N'}$} \eps^{n'} &\mathcal{L}_{n'}(x,y,z) \Bigr )^n = \sum_{k = 0}^n \sum_{\sum_{j = 1}^{N'} k_j = k} \eps^{(n - k) + \sum_{j = 1}^{N'} j k_j} \lambda^k 
		\cdot \\
		&\cdot 
		\frac{n!}{(n - k)! \, k_1 ! \cdots k_{N'}!} \Bigl ( \tfrac{1}{2} \sigma(Y,Z) \Bigr )^{n-k} \, \prod_{j = 1}^{N'} \mathcal{L}_j^{k_j}(x,y,z) 
	\end{align*}
	Now we define $\mathcal{L}_0(Y,Z) := \tfrac{1}{2} \sigma(Y,Z)$ to clean up the presentation, include the sum over $n$ again and sort by powers of $\eps$ and $\lambda$, 
	\begin{align*}
		\sum_{n = 0}^N &\frac{i^n}{n!} \Bigl ( \tfrac{\eps}{2} \sigma(Y,Z) + \lambda \mbox{$\sum_{n' = 1}^{N'}$} \eps^{n'} \mathcal{L}_{n'}(x,y,z) \Bigr )^n 
		= \\
		&= \sum_{n = 0}^N \frac{i^n}{n!} \sum_{\sum_{j = 0}^{N'} k_j = n} \eps^{k_0 + \sum_{j = 1}^{N'} j k_j} \lambda^{n - k_0}  \, \frac{n!}{k_0! \, k_1 ! \cdots k_{N'}!} \mathcal{L}_0^{k_0}(Y,Z) \prod_{j = 1}^{N'} \mathcal{L}_j^{k_j}(x,Y,Z) 
		\displaybreak[2]
		\\
		&= \sum_{n = 0}^{N \, N'} \sum_{k = 0}^n \eps^{n} \lambda^k 
		\sum_{\substack{k_0 + \sum_{j = 1}^{N'} j k_j = n \\
		\sum_{j=1}^{N'} k_j = k}} \frac{i^{k + k_0}}{k_0! \, k_1 ! \cdots k_{N'}!} \, \mathcal{L}_0^{k_0}(Y,Z) \prod_{j = 1}^{N'} \mathcal{L}_j^{k_j}(x,Y,Z) 
		\\
		&
		=: \sum_{n = 0}^{N \, N'} \sum_{k = 0}^n \eps^{n} \lambda^k \mathcal{T}_{n,k}(x,Y,Z) 
		. 
	\end{align*}
	\textbf{Step 2: Existence of the $(n,k)$ term. }
	The properties of the $(n,k)$th term of the product 
	\begin{align}
		(f \magWel g)_{(n,k)}(X) &= \frac{1}{(2\pi)^{2d}} \int_{\Pspace} \dd Y \int_{\Pspace} \dd Z \, e^{i \sigma(X,Y+Z)} \, \mathcal{T}_{n,k}(x,Y,Z) 
		\cdot \notag \\
		&\qquad \qquad \qquad \qquad \qquad \cdot
		(\Fs f)(Y) \, (\Fs g)(Z) 
		\label{magBel:integralnkTerm}
	\end{align}
	can be deduced from the properties of $\mathcal{T}_{n,k}$: we proceed by showing that $\mathcal{T}_{n,k}$ is a polynomial with $x$-dependent prefactors that contains $n+k$ $q$s (powers of $y$ and $z$) and \emph{at most} $n-k$ $p$s (powers of $\eta$ and $\zeta$). $\mathcal{L}_0$ is the non-magnetic symplectic form and contains $1$ $q$ and $1$ $p$. Hence, the $k_0$th power of $\mathcal{L}_0$ contributes $k_0$ $q$s and an equal amount of $p$s. By Lemma~\ref{appendix:asymptotics:expansion_twister:lem:expansion_twister}, the magnetic terms $\mathcal{L}_j$, $j \geq 1$, contribute $j+1$ $q$s and no $p$s. In this sense, magnetic terms improve decay. By conditions imposed on the indices appearing in the definition of $\mathcal{T}_{n,k}$, we deduce there are 
	\begin{align*}
		k_0 + \sum_{j = 1}^{N'} (j + 1) k_j = k_0 + \sum_{j = 1}^{N'} j k_j + \sum_{j = 1}^{N'} k_j = n + k
	\end{align*}
	$q$s and $k_0$ $p$s. As $0 \leq k_0 \leq n-k$, Lemma~\ref{appendix:asymptotics:existence_osc_int:Lemma2} implies the existence of integral \eqref{magBel:integralnkTerm} and that it belongs to the correct symbol class, \ie $(f \magWel g)_{(n,k)} \in \Hoerr{m_1 + m_2 - (n+k) \rho}$. 
	\medskip
	
	\noindent
	\textbf{Step 3: Existence of remainders. }
	There are two remainders we need to control, equations~\eqref{asymptotics:eqn:remainder1} and \eqref{asymptotics:eqn:remainder2}: the first one stems from the Taylor expansion of the exponential, the second one has its origins in the expansion of the magnetic flux, 
	\begin{align*}
		R_N^{\Sigma}(x,Y,Z) := R_{N}(x,Y,Z) + \sum_{n = 1}^N \frac{i^n}{n!} R_{N' \, n}[T_{\eps,\lambda}](x,Y,Z) 
		. 
	\end{align*}
	The remainder of the product is obtained after integration, 
	\begin{align}
		\tilde{R}_N(X) &:= \frac{1}{(2\pi)^{2d}} \int_{\Pspace} \dd Y \int_{\Pspace} \dd Z \, e^{i \sigma(X,Y+Z)} \, R_N^{\Sigma}(x,Y,Z) 
		\cdot \notag \\
		&\qquad \qquad \qquad \qquad \qquad \cdot
		(\Fs f)(Y) \, (\Fs g)(Z) 
		. 
		\label{asymptotics:eqn:total_remainder}
	\end{align}
	We have to show that (i) the integral exists, (ii) it is in the correct symbol class and (iii) it is of the right order in $\eps$ and $\lambda$. Points (i) and (ii) are the content of Lemma~\ref{appendix:asymptotics:existence_osc_int:lem:remainder} and we have to show that each of the two contributions to the remainder satisfies the assumptions. 
	\medskip
	
	\noindent
	The first contribution to $\tilde{R}_N$ stems from the Taylor expansion of the exponential, 
	\begin{align*}
		\frac{1}{(2\pi)^{2d}} &\int_{\Pspace} \dd Y \int_{\Pspace} \dd Z \, e^{i \sigma(X,Y+Z)} \, 
		\cdot \\
		&\qquad \quad \cdot 
			\frac{1}{N!} \int_0^1 \dd \tau \, (1 - \tau)^N \partial_{\tau}^{N+1} e^{\tau u} \big \vert_{u = i T_{\eps,\lambda}(x,Y,Z)}
			\, (\Fs f)(Y) \, (\Fs g)(Z) = \\
		&= \frac{1}{(2\pi)^{2d}} \int_0^1 \dd \tau \, (1 - \tau)^N \int_{\Pspace} \dd Y \int_{\Pspace} \dd Z \, e^{i \sigma(X,Y+Z)} \, 
			\frac{i^{N+1}}{N!} \bigl ( T_{\eps,\lambda}(x,Y,Z) \bigr )^{N+1} \cdot 
			\\
			&\qquad \qquad \qquad \qquad \qquad \cdot e^{i \tau \frac{\eps}{2} \sigma(Y,Z)} e^{-i \tau \lambda \gBe(x,y,z)} 
			\, (\Fs f)(Y) \, (\Fs g)(Z) 
			. 
	\end{align*}
	The first factor containing $\eps$ and $\lambda$, $\bigl ( T_{\eps,\lambda}(x,Y,Z) \bigr )^{N+1}$, can be expanded in powers of $\sigma(Y,Z)$ and $\gBe(x,y,z)$: 
	\begin{align*}
		\bigl ( T_{\eps,\lambda}(x,Y,Z) \bigr )^{N+1} &= \eps^{N+1} \sum_{l = 0}^{N+1} \noverk{N+1}{l} \lambda^l \, \bigl ( \tfrac{1}{2} \sigma(Y,Z) \bigr )^{N+1-l} \bigl ( \underbrace{\tfrac{1}{\eps} \gBe(x,y,z)}_{= \orderone} \bigr )^l  
	\end{align*}
	As $\eps^{N+1} < \eprec$ holds by definition of $N$, the first term of the remainder is of the correct order. The decay properties are dominated by $\bigl ( \sigma(Y,Z) \bigr )^{N+1}$ with $N+1$ $p$s and $N+1$ $q$s. All other terms contribute less than $N+1$ $p$s and more than $N+1$ $q$s since by Lemma~\ref{appendix:asymptotics:expansion_twister:lem:expansion_twister}, $\gBe$ is of order $\eps$ and contributes 2 $q$s and no $p$s. Furthermore, Lemma~\ref{appendix:asymptotics:properties_mag_flux:lem:boundedness_mag_flux} gives polynomial bounds of derivatives of $\gBe$: 
	\begin{align*}
		\babs{\partial_x^{a} \gBe(x,y,z)} \leq C_{a} \, \bigl ( \sexpval{y} + \sexpval{z} \bigr )^{\sabs{a}} 
	\end{align*}
	A similar bound holds for the exponential of the flux (Corollary~\ref{appendix:asymptotics:cor:properties_flux}):
	\begin{align*}
		\babs{\partial_x^{a} e^{- i \lambda \gBe(x,y,z)}(x,y,z)} \leq C_{a} \, \sexpval{y}^{\sabs{a}} \sexpval{z}^{\sabs{a}} 
		&&
		\forall a \in \N_0^d 
	\end{align*}
	Altogether, $\bigl ( T_{\eps,\lambda}(x,Y,Z) \bigr )^{N+1} \, e^{-i \tau \lambda \gBe(x,y,z)}$ satisfies the conditions placed on $G_{\tau'}$ in Lemma~\ref{appendix:asymptotics:existence_osc_int:lem:remainder} (with $\tau = \tau'$) which implies 
	\begin{align*}
		\frac{1}{(2\pi)^{2d}} \int_0^1 \dd \tau \, (1 - \tau)^N \, \int_{\Pspace} \dd Y \int_{\Pspace} \dd Z \, &e^{i \sigma(X,Y+Z)} \, 
			\frac{i^{N+1}}{N!} \bigl ( T_{\eps,\lambda}(x,Y,Z) \bigr )^{N+1} 
			\cdot \\
			&\cdot e^{-i \tau \lambda \gBe(x,y,z)} \, e^{i \tau \frac{\eps}{2} \sigma(Y,Z)}
			\, (\Fs f)(Y) \, (\Fs g)(Z) 
	\end{align*}
	exists as an oscillatory integral and belongs to symbol class $\Hoerr{m_1 + m_2 - (N+1)\rho}$. 
	\medskip
	
	\noindent
	The second contribution which involves 
	\begin{align*}
		R_{N' \, n}&[T_{\eps,\lambda}](x,Y,Z) = 
		\\
		&= \sum_{l = 1}^n \noverk{n}{l} \Bigl ( \tfrac{\eps}{2} \sigma(Y,Z) + \lambda \mbox{$\sum_{n' = 1}^{N'}$} \eps^{n'} \mathcal{L}_{n'}(x,y,z) \Bigr )^{n - l} \, \bigl ( \lambda R_{N'}[\gBe](x,y,z) \bigr )^l
	\end{align*}
	can be estimated analogously: by Lemma~\ref{appendix:asymptotics:expansion_twister:lem:expansion_twister}, $R_{N'}[\gBe]$ is of order $\ordere{N'+1}$ (the largest prefactor is $\eps^{N'+1} \lambda < \eprec$) and contains $N' + 2$ $q$s. So the terms in the above sum contain at least $N' + 1 \geq N + 1$ more $q$s than $p$s and another application of Lemma~\ref{appendix:asymptotics:existence_osc_int:lem:remainder} (with $\tau = 0$) implies that the second contribution to $\tilde{R}_N$ exists as an oscillator integral and is of symbol class $\Hoerr{m_1 + m_2 - (N'+1) \rho} \subseteq \Hoerr{m_1 + m_2 - (N+1) \rho}$. 
	\medskip
	
	\noindent
	Altogether, we conclude that $\tilde{R}_N$ exists pointwise, is of symbol class $\Hoerr{m_1 + m_2 - (N+1) \rho}$ as long as $N' \geq N$ and hence $f \magWel g - \sum_{n = 0}^N \sum_{k = 0}^n \eps^n \lambda^k \, (f \magWel g)_{(n,k)}  = \ordereprec$. This concludes the proof.
\end{proof}
Since for fixed power of $\eps$, the sum in powers of $\lambda$ is finite, one immediately has the following 
\begin{cor}[Asymptotic expansion in $\eps$\index{Weyl product!asymptotic expansion in $\eps$}]\label{asymptotics:cor:asymptotic_expansion_eps}
	If the assumptions of Theorem~\ref{asymptotics:thm:asymptotic_expansion} are modified by taking $\lambda = 1$, then the $\eps$ expansion of the product $f \magWel g$ of two Hörmander symbols can be recovered from the two-parameter expansion: the $n$th order term in $\eps$ then reads 
	\begin{align*}
		(f \magWel g)_{(n)} &= \sum_{k = 0}^n \, (f \magWel g)_{(n,k)} \in \Hoerrd{m_1 + m_2 - n \rho} 
	\end{align*}
	where the $(f \magWel g)_{(n,k)}$ are taken from Theorem~\ref{asymptotics:thm:asymptotic_expansion}. 
\end{cor}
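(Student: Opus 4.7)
The plan is to reduce the one-parameter statement directly to Theorem~\ref{asymptotics:thm:asymptotic_expansion} by specializing $\lambda=1$. Since the two-parameter expansion is valid for all $(\eps,\lambda)\in[0,\eps_0)\times[0,\lambda_0)$ and we may take $\lambda_0>1$ (the hypotheses on $B$ and the vector potential do not restrict $\lambda_0$), the substitution $\lambda=1$ is legitimate. For a given one-parameter precision $\eps^{M+1}$, I would choose $\eprec := \eps^{M+1}$ in Definition~\ref{asymptotics:defn:precision}; then the critical exponent in $\eps$ equals $M$, the critical exponent in $\lambda$ is $0$ (since $1^0 = 1 \ge \eprec$ while $1^1 = 1 \not< \eprec$), so $N = M$, and all conclusions of the theorem apply unchanged with this $N$.

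Next I would regroup the double sum by powers of $\eps$. Setting $\lambda=1$ in the expansion yields
\begin{align*}
	f \magWel g = \sum_{n=0}^{N} \eps^{n} \Bigl ( \sum_{k=0}^{n} (f \magWel g)_{(n,k)} \Bigr ) + \tilde{R}_N ,
\end{align*}
so the natural definition is $(f \magWel g)_{(n)} := \sum_{k=0}^{n} (f \magWel g)_{(n,k)}$. The symbol class assertion follows from Theorem~\ref{asymptotics:thm:asymptotic_expansion}: each summand satisfies $(f \magWel g)_{(n,k)} \in \Hoerr{m_1+m_2-(n+k)\rho}$, and since $k\ge 0$ and $\rho \ge 0$ we have the continuous inclusion $\Hoerr{m_1+m_2-(n+k)\rho} \subseteq \Hoerr{m_1+m_2-n\rho}$, so the finite sum lies in $\Hoerr{m_1+m_2-n\rho}$ as claimed.

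It remains to verify that the remainder has the expected size in the one-parameter sense. With the choice $\eprec=\eps^{M+1}$ and $N=M$ above, Theorem~\ref{asymptotics:thm:asymptotic_expansion} guarantees $\tilde{R}_M \in \Hoerr{m_1+m_2-(M+1)\rho}$ and $\tilde R_M = \ordereprec = \order(\eps^{M+1})$ in the sense of the seminorms of that Hörmander class. This is precisely the remainder estimate expected for an asymptotic expansion in the single parameter $\eps$, which completes the identification.

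The only delicate point—and the place I would expect to have to argue carefully—is the translation from the two-parameter precision $\ordereprec$ to the one-parameter statement at $\lambda=1$: Definition~\ref{asymptotics:defn:precision} is framed for $\lambda\ll 1$ and the existence of a critical exponent $k_c$, so one must justify that taking $\lambda=1$ with $\eprec$ chosen purely in terms of $\eps$ still fits into the scheme used in the proof of the theorem. Inspecting Step~3 of that proof shows that every appearance of $\lambda$ in the remainder bounds is as $\lambda^{l}$ with $l\ge 0$, bounded above by $1$ when $\lambda=1$; hence all the oscillatory-integral estimates of Lemma~\ref{appendix:asymptotics:existence_osc_int:lem:remainder} go through verbatim, and the smallness of $\tilde R_N$ is controlled solely by the prefactor $\eps^{N+1}$. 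This settles the corollary.
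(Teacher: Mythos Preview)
Your proposal is correct and follows the same approach as the paper, which treats the corollary as immediate and offers only the one-line justification ``since for fixed power of $\eps$, the sum in powers of $\lambda$ is finite, one immediately has the following.'' You are more careful than the paper in addressing the legitimacy of the substitution $\lambda=1$ (since Theorem~\ref{asymptotics:thm:asymptotic_expansion} is phrased for $\lambda\ll 1$), and your inspection of Step~3 to verify that all $\lambda$-dependence in the remainder bounds is benign at $\lambda=1$ is a sound way to close that gap.
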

If we do not have a separation of spatial scales, \ie $\eps = 1$, but weak coupling to the magnetic field, we can still expand the product $\magWel$ as a power series in $\lambda$. This is also the starting point of the $\lambda$-$\eps$ expansion which coincides with the $\eps$-$\lambda$ expansion. 
\begin{thm}[Asymptotic expansion in $\lambda$\index{Weyl product!asymptotic expansion in $\lambda$}]\label{asymptotics:thm:lambda_expansion}
	Assume the magnetic field $B$ satisfies Assumption~\ref{asymptotics:assumption:bounded_fields}; then for $\lambda \ll 1$ and $\eps \leq 1$, we can expand the $\lambda$ Weyl product of $f \in \Hoerr{m_1}$ and $g \in \Hoerr{m_2}$ asymptotically in $\lambda$ such that 
	\begin{align*}
		f \magWel g - \sum_{k = 0}^N \lambda^k (f \magWel g)_{(k)} \in \Hoerr{m_1 + m_2 - 2(N+1) \rho} 
	\end{align*}
	where $(f \magWel g)_{(k)} \in \Hoerr{m_1 + m_2 - 2k \rho}$ is given by equation~\eqref{asymptotics:eqn:lambdaExpansion:kth_term}. In particular, the zeroth-order term reduces to the \emph{non-magnetic} Weyl product, $(f \magWel g)_{(0)} = f \Weyle g$. We have explicit control over the remainder (equation~\eqref{aysmpExp:eqn:lambda_expansion:remainder}): if we expand the product up to $N$th order in $\lambda$, the remainder is of order $\orderl{N+1}$ and in symbol class $\Hoerr{m_1 + m_2 - 2(N+1)\rho}$. 
\end{thm}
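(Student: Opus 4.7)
The plan is to follow the pattern of Theorem~\ref{asymptotics:thm:asymptotic_expansion} but to leave the non-magnetic twister $e^{i\frac{\eps}{2}\sigma(Y,Z)}$ untouched and expand \emph{only} the magnetic factor $e^{-i\lambda\gBe(x,y,z)}$ from the integral formula of Theorem~\ref{asymptotics:thm:equivalenceProduct} in powers of $\lambda$. A Taylor expansion with integral remainder yields
\begin{align*}
e^{-i\lambda\gBe(x,y,z)} = \sum_{k=0}^{N}\frac{(-i\lambda)^k}{k!}\,\gBe(x,y,z)^k + R_N^{\lambda}(x,y,z),
\end{align*}
with
\begin{align*}
R_N^{\lambda}(x,y,z)=\frac{(-i\lambda)^{N+1}}{N!}\,\gBe(x,y,z)^{N+1}\int_0^1\dd\tau\,(1-\tau)^N\,e^{-i\tau\lambda\gBe(x,y,z)}.
\end{align*}
Substituting the polynomial part back under the double integral of Theorem~\ref{asymptotics:thm:equivalenceProduct} formally defines the $k$th term
\begin{align}
(f\magWel g)_{(k)}(X):=\frac{(-i)^k}{k!\,(2\pi)^{2d}}\int_{\Pspace}\dd Y\int_{\Pspace}\dd Z\,e^{i\sigma(X,Y+Z)}\,e^{i\frac{\eps}{2}\sigma(Y,Z)}\,\gBe(x,y,z)^k\,(\Fs f)(Y)(\Fs g)(Z),
\label{asymptotics:eqn:lambdaExpansion:kth_term}
\end{align}
while the remainder is
\begin{align}
\tilde R_N(X):=\frac{1}{(2\pi)^{2d}}\int_{\Pspace}\dd Y\int_{\Pspace}\dd Z\,e^{i\sigma(X,Y+Z)}\,e^{i\frac{\eps}{2}\sigma(Y,Z)}\,R_N^{\lambda}(x,y,z)\,(\Fs f)(Y)(\Fs g)(Z).
\label{aysmpExp:eqn:lambda_expansion:remainder}
\end{align}
For $k=0$ the factor $\gBe^{0}=1$ is absent and~\eqref{asymptotics:eqn:lambdaExpansion:kth_term} reduces, term by term, to the oscillatory integral representation of the non-magnetic Weyl product $f\Weyle g$, which is the desired zeroth-order identification.

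Next I would verify the symbol class assertions through the same $q$/$p$ counting used in the proof of Theorem~\ref{asymptotics:thm:asymptotic_expansion}. By Lemma~\ref{appendix:asymptotics:expansion_twister:lem:expansion_twister} the flux $\gBe$ itself has the form $\sum_{j\geq 1}\eps^{j}\mathcal L_j(x,y,z)$ with each $\mathcal L_j$ carrying $j+1$ powers of $y$ and $z$ and no momentum variables, so $\gBe^{k}$ contains at least $2k$ $q$s and zero $p$s, and its $x$-derivatives are polynomially controlled in $(y,z)$ by Lemma~\ref{appendix:asymptotics:properties_mag_flux:lem:boundedness_mag_flux}. Converting each such $q$ into a momentum derivative of $\Fs f$ or $\Fs g$ via the standard identity $y^{a}(\Fs f)(Y)=\Fs\bigl((-i\partial_{\xi})^{a}f\bigr)(Y)$ (and the analogous rule for $z$), and then invoking Lemma~\ref{appendix:asymptotics:existence_osc_int:Lemma2} with the remaining non-magnetic twister $e^{i\frac{\eps}{2}\sigma(Y,Z)}$, shows that~\eqref{asymptotics:eqn:lambdaExpansion:kth_term} exists as an oscillatory integral and belongs to $\Hoerr{m_1+m_2-2k\rho}$.

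Finally I would estimate~\eqref{aysmpExp:eqn:lambda_expansion:remainder}. The integrand factor $\gBe^{N+1}\,e^{-i\tau\lambda\gBe}$ has $2(N+1)$ $q$s and, by Corollary~\ref{appendix:asymptotics:cor:properties_flux} together with Lemma~\ref{appendix:asymptotics:properties_mag_flux:lem:boundedness_mag_flux}, polynomial bounds on all $x$-derivatives that are uniform in $\tau\in[0,1]$, $\eps\in(0,1]$ and $\lambda\in[0,1]$. Lemma~\ref{appendix:asymptotics:existence_osc_int:lem:remainder} then applies (after pulling the $\tau$-integral outside), yielding $\tilde R_N\in\Hoerr{m_1+m_2-2(N+1)\rho}$ with seminorms controlled by the explicit prefactor $\lambda^{N+1}/N!$, hence $\tilde R_N=\orderl{N+1}$. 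The main obstacle, and the reason this is not a mere specialization of Theorem~\ref{asymptotics:thm:asymptotic_expansion}, is that the non-magnetic twister is retained in both the finite sum and the remainder: one has to check that the oscillatory integral machinery of the appendix still works when the full exponential $e^{-i\tau\lambda\gBe}$ is left in place, which is precisely where the $\tau$-uniform bounds of Corollary~\ref{appendix:asymptotics:cor:properties_flux} are indispensable.
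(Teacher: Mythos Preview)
Your overall strategy is exactly the paper's: keep $e^{i\frac{\eps}{2}\sigma(Y,Z)}$ intact and Taylor-expand only $e^{-i\lambda\gBe}$. The decomposition into the $k$th term and the integral remainder is identical, and your treatment of the remainder via Lemma~\ref{appendix:asymptotics:existence_osc_int:lem:remainder} is correct.

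However, the argument for the $k$th-order term has a gap. You invoke Corollary~\ref{appendix:asymptotics:existence_osc_int:Lemma2}, but that result handles integrals of the form $\int e^{i\sigma(X,Y+Z)}\,B(x)\,y^a\eta^\alpha z^b\zeta^\beta\,(\Fs f)(Y)(\Fs g)(Z)$ with \emph{no} extra twister and with a coefficient $B$ depending on $x$ alone. In~\eqref{asymptotics:eqn:lambdaExpansion:kth_term} you still carry the full $e^{i\frac{\eps}{2}\sigma(Y,Z)}$, and the factor $\gBe(x,y,z)^k$ is not a polynomial in $(y,z)$ with $x$-only coefficients: after you pull out monomials, what remains depends on $(y,z)$ as well. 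Your appeal to the asymptotic expansion $\gBe\sim\sum_{j\geq1}\eps^j\mathcal L_j$ does not help here, since that expansion is not an identity and would reintroduce an infinite hierarchy of remainders.

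The paper closes this gap by using the \emph{exact} factorization $\gBe(x,y,z)=-\eps\,\tilde B^{\eps}_{lj}(x,y,z)\,y_l z_j$ with $\tilde B^{\eps}_{lj}\in\BCont^{\infty}$ in $x$ and polynomially controlled in $(y,z)$; then $\gBe^k$ displays exactly $2k$ explicit $q$s multiplying a product of $\tilde B^{\eps}$'s that fits the $G_{\tau'}$ hypothesis of Lemma~\ref{appendix:asymptotics:existence_osc_int:lem:remainder}. That lemma (with $\tau=1$, $\tau'=1$) is what gives existence and the class $\Hoerr{m_1+m_2-2k\rho}$ for the $k$th term --- the same lemma you correctly use for the remainder. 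In short, replace Corollary~\ref{appendix:asymptotics:existence_osc_int:Lemma2} by Lemma~\ref{appendix:asymptotics:existence_osc_int:lem:remainder} also in Step~2, and use the exact factorization of $\gBe$ rather than its $\eps$-asymptotics.
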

\begin{proof}
	Assume we want to expand up to $N$th order in $\lambda$. We will show 
	\begin{align*}
		f \magWel g - \sum_{k = 0}^N \lambda^k (f \magWel g)_{(k)} = \orderl{N+1} 
	\end{align*}
	and that the difference is in $\Hoerr{m_1 + m_2 - 2(N+1) \rho}$. 
	\medskip
	
	\noindent
	\textbf{Step 1: Expansion of exponential flux. }
	If $\eps$ is not necessarily small, we cannot expand the magnetic flux integral $\gBe$ in powers of $\eps$ anymore. However, we will keep $\eps$ as a \emph{bookkeeping device}. Expanding the exponential of the magnetic flux, we get 
	\begin{align*}
		e^{i T_{\eps,\lambda}(x,Y,Z)} &= e^{i \frac{\eps}{2} \sigma(Y,Z)} e^{- i \lambda \gBe(x,y,z)} \\
		&= e^{i \frac{\eps}{2} \sigma(Y,Z)} \Bigl ( \mbox{$\sum_{k = 0}^N$} \lambda^k \tfrac{(-i)^k}{k!} \bigl ( \gBe(x,y,z) \bigr )^k + R_N(x,y,z)  \Bigr ) 
		. 
	\end{align*}
	The remainder is of order $\lambda^{N+1}$ and has $2(N+1)$ $q$s, 
	\begin{align*}
		R_N(x,y,z) &= \frac{1}{N!} \bigl ( -i \lambda \gBe(x,y,z) \bigr )^{N+1} \int_0^1 \dd \tau' \, (1 - \tau')^N \, e^{-i \lambda \tau' \gBe(x,y,z)} 
		. 
	\end{align*}
	This can be seen more readily once we define $- \eps \Bte_{lj}(x,y,z) \, y_l z_j := \gBe(x,y,z)$ to emphasize that $\gBe$ contains $\eps$ as a prefactor and $2$ $q$s. Using the antisymmetry of $B_{lj}$, there is a simple explicit expression for $\Bte_{lj}$ (see proof of Lemma~\ref{appendix:asymptotics:expansion_twister:lem:expansion_twister}): 
	\begin{align*}
		\Bte_{lj}(x,y,z) 
		&= \frac{1}{2} \int_{-\nicefrac{1}{2}}^{+\nicefrac{1}{2}} \dd t \int_0^1 \dd s \, s \, \bigl [ B_{lj} \bigl ( x + \eps s (t y - \nicefrac{z}{2}) \bigr )  + B_{lj} \bigl ( x + \eps s (\nicefrac{y}{2} + t z) \bigr ) \bigr ] 
		\\
		&
		= \orderone 
	\end{align*}
	\textbf{Step 2: Existence of $k$th-order term. }
	Then the expansion can be rewritten so that we can separate off factors of $y$, $z$ and $\eps$. The $k$th order term contains $2k$ $q$s and no $p$s,  
	\begin{align*}
		\frac{(-i)^k}{k!} \bigl ( \gBe(x,y,z) \bigr )^k &= \eps^k \, \frac{i^k}{k!} \prod_{m = 1}^k \Bte_{l_m j_m}(x,y,z) \, y_{l_m} z_{j_m} 
		. 
	\end{align*}
	By Lemma~\ref{appendix:asymptotics:existence_osc_int:lem:remainder} (with $\tau = 1 = \tau'$) the $k$th order term of the product 
	\begin{align}
		(f \magWel g)_{(k)}&(X) := \frac{\eps^k}{(2\pi)^{2d}} \int_{\Pspace} \dd Y \int_{\Pspace} \dd Z \, e^{i \sigma(X,Y+Z)} \, e^{i \frac{\eps}{2} \sigma(Y,Z)} 
		\cdot \notag \\
		&\qquad \qquad \qquad \cdot 
		\biggl ( \frac{i^k}{k!} \prod_{m = 1}^k \Bte_{l_m j_m}(x,y,z) \, y_{l_m} z_{j_m}  \biggr ) 
		(\Fs f)(Y) \, (\Fs g)(Z) 
		\notag \\
		&= \frac{\eps^k}{(2\pi)^{2d}} \int_{\Pspace} \dd Y \int_{\Pspace} \dd Z \, e^{i \sigma(X,Y+Z)} \, e^{i \frac{\eps}{2} \sigma(Y,Z)} \, \biggl ( \frac{i^{3k}}{k!} \prod_{m = 1}^k \Bte_{l_m j_m}(x,y,z) \biggr ) 
		\cdot \notag \\
		&\qquad \qquad \qquad \cdot \bigl ( \Fs (\partial_{\tilde{\eta}_{j_1}} \cdots \partial_{\tilde{\eta}_{j_k}} f) \bigr )(Y) \, \bigl ( \Fs (\partial_{\tilde{\zeta}_{j_1}} \cdots \partial_{\tilde{\zeta}_{j_k}} g) \bigr )(Z) 
		\label{asymptotics:eqn:lambdaExpansion:kth_term}
	\end{align}
	exists and is of symbol class $\Hoerr{m_1 + m_2 - 2k \rho}$. 
	\medskip

	\noindent
	\textbf{Step 3: Existence of remainder. }
	The remainder is of order $\lambda^{N+1}$ and has $2(N+1)$ $q$s. It contains $\eps^{N+1}$ as a prefactor as well which will be of importance in the proof of the next theorem. By Lemma~\ref{appendix:asymptotics:properties_mag_flux:lem:boundedness_mag_flux} and Corollary~\ref{appendix:asymptotics:cor:properties_flux}, the integral in $R_N$ over the exponential of the magnetic flux is bounded and its derivatives can be bounded polynomially in $y$ and $z$, 
	\begin{align*}
		R_N(x,y,z) &= \lambda^{N+1} \frac{\eps^{N+1}}{N!} \bigl ( \Bte_{lj}(x,y,z) \, y_l z_j \bigr )^{N+1} \int_0^1 \dd \tau' \, (1 - \tau')^N \, e^{-i \lambda \tau' \gBe(x,y,z)} 
		. 
	\end{align*}
	This means $R_N$ satisfies the conditions on $G_{\tau'}$ in Lemma~\ref{appendix:asymptotics:existence_osc_int:lem:remainder} (with $\tau = 1$) and we conclude that 
	\begin{align}
		\tilde{R}_N(X) := \frac{1}{(2\pi)^{2d}} \int_{\Pspace} \dd Y \int_{\Pspace} \dd Z \, e^{i \sigma(X,Y+Z)} \, e^{i \frac{\eps}{2} \sigma(Y,Z)} \, R_N(x,y,z) \, (\Fs f)(Y) \, (\Fs g)(Z) 
		\label{aysmpExp:eqn:lambda_expansion:remainder}
	\end{align}
	exists and is in symbol class $\Hoerr{m_1 + m_2 - 2(N+1) \rho}$. 
\end{proof}
The next statement is central to this paper, because it tells us we can speak of \emph{the} two-parameter expansion of the product. 
\begin{thm}\label{asymptotics:thm:lambdaEpsExpansion}
	Assume that the magnetic field $B$ satisfies Assumption~\ref{asymptotics:assumption:bounded_fields} and $\eps \ll 1$ in addition to $\lambda \ll 1$. Then we can expand each term of the $\lambda$ expansion of $f \magWel g$ in $\eps$, $f \in \Hoerr{m_1}$, $g \in \Hoerr{m_2}$, and obtain the same as in Theorem~\ref{asymptotics:thm:asymptotic_expansion}. Hence we can speak of \emph{the} two-parameter expansion of the product $\magWel$. 
\end{thm}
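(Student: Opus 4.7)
The plan is to start from the $\lambda$-expansion of Theorem~\ref{asymptotics:thm:lambda_expansion} and Taylor-expand the residual $\eps$-dependence of each term $(f\magWel g)_{(k)}$ in~\eqref{asymptotics:eqn:lambdaExpansion:kth_term}, then to reorganize the resulting double sum by total powers of $\eps$ and $\lambda$ and match it against the expression produced by Theorem~\ref{asymptotics:thm:asymptotic_expansion}. Conceptually, both expansions originate from Taylor-expanding the exponential of the twister $e^{iT_{\eps,\lambda}(x,Y,Z)}=e^{i\frac{\eps}{2}\sigma(Y,Z)}\,e^{-i\lambda\gBe(x,y,z)}$, and the claim is essentially that the order in which one Taylor-expands the two independent small parameters is immaterial.

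Examining~\eqref{asymptotics:eqn:lambdaExpansion:kth_term}, the $k$-th term of the $\lambda$-expansion still depends on $\eps$ through three distinct sources: (i)~the explicit prefactor $\eps^k$ already extracted from the $k$-fold product of $\gBe$; (ii)~the kinematic exponential $e^{i\frac{\eps}{2}\sigma(Y,Z)}$; and (iii)~the slowly varying functions $\Bte_{l_m j_m}(x,y,z)$, whose arguments read $x+\eps s(\cdots)$. I would Taylor-expand (ii) as $\sum_{k_0\ge 0}\tfrac{1}{k_0!}\bigl(\tfrac{i\eps}{2}\sigma(Y,Z)\bigr)^{k_0}$, which is exactly where the powers of $\mathcal{L}_0=\tfrac{1}{2}\sigma(Y,Z)$ come from, and then Taylor-expand each $B_{lj}$ appearing inside $\Bte_{l_m j_m}$ around $x$; this is precisely the computation of Lemma~\ref{appendix:asymptotics:expansion_twister:lem:expansion_twister} and produces the higher-order magnetic polynomials $\mathcal{L}_j(x,y,z)$, each carrying an extra factor of $\eps^{j-1}$ relative to the leading magnetic contribution.

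Sorting by total powers of $\eps$ and $\lambda$ then yields precisely the index constraints $k_0+\sum_{j\ge 1}j\,k_j=n$ and $\sum_{j\ge 1}k_j=k$, together with the combinatorial prefactor $\tfrac{i^{k+k_0}}{k_0!\,k_1!\cdots k_n!}$, that characterize $(f\magWel g)_{(n,k)}$ in Theorem~\ref{asymptotics:thm:asymptotic_expansion}. This matching is essentially forced by the multinomial theorem applied to the polynomial expansion of $e^{iT_{\eps,\lambda}}$: whether one expands the product of the two exponentials jointly, as in Step~1 of the proof of Theorem~\ref{asymptotics:thm:asymptotic_expansion}, or sequentially (first the $\lambda$-exponential, then the $\eps$-exponential together with the slow $\eps$-dependence of $\Bte_{l_m j_m}$), the polynomial coefficient of $\eps^n\lambda^k$ comes out the same, and converting the surviving monomials in $y,z,\eta,\zeta$ into derivatives of $f$ and $g$ via the symplectic Fourier transform reproduces $(f\magWel g)_{(n,k)}$.

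The hard part will be the remainder bookkeeping. On one hand, the $\lambda$-remainder~\eqref{aysmpExp:eqn:lambda_expansion:remainder} already carries both $\lambda^{N+1}$ and $\eps^{N+1}$ as prefactors and contributes $2(N+1)$ factors of $y,z$, so by Lemma~\ref{appendix:asymptotics:existence_osc_int:lem:remainder} (with $\tau=1=\tau'$) it sits in $\Hoerr{m_1+m_2-2(N+1)\rho}\subseteq\Hoerr{m_1+m_2-(N+1)\rho}$ and is of order $\ordereprec$ with room to spare. On the other hand, for each $k\le N$ one only needs to Taylor-expand $(f\magWel g)_{(k)}$ up to order $N-k$ in $\eps$; the corresponding $\eps$-remainder picks up a prefactor $\eps^{N-k+1}\lambda^k$ and additional factors of $y,z$ from the Taylor remainders of $e^{i\frac{\eps}{2}\sigma(Y,Z)}$ and of the $B_{lj}$ inside $\Bte_{l_m j_m}$. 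A further application of Lemma~\ref{appendix:asymptotics:existence_osc_int:lem:remainder} places these remainders in $\Hoerr{m_1+m_2-(N+1)\rho}$, and summing over $k=0,\ldots,N$ yields a total error of the correct order and in the correct Hörmander class, matching precisely the remainder of Theorem~\ref{asymptotics:thm:asymptotic_expansion} and confirming that the two expansions agree term by term.
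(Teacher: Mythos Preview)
Your proposal is correct and follows essentially the same route as the paper: start from the $\lambda$-expansion \eqref{asymptotics:eqn:lambdaExpansion:kth_term}, Taylor-expand the two remaining $\eps$-dependent pieces $e^{i\frac{\eps}{2}\sigma(Y,Z)}$ and the slowly varying $\Bte_{l_mj_m}$ (equivalently $\gBe^k$) via Lemma~\ref{appendix:asymptotics:expansion_twister:lem:expansion_twister}, match the resulting multinomial against the $(n,k)$ terms of Theorem~\ref{asymptotics:thm:asymptotic_expansion}, and control all remainders by Lemma~\ref{appendix:asymptotics:existence_osc_int:lem:remainder}. One small bookkeeping slip: the $\eps$-remainder for the $k$th term should carry the full prefactor $\eps^{N+1}\lambda^k$ (the explicit $\eps^k$ from~(i) multiplies the $\eps^{N-k+1}$ you quote), which is what actually guarantees the $\ordereprec$ precision.
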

\begin{proof}
	\textbf{Step 1: Precision of expansion. }
	Assume we have expanded the magnetic product $\magWel$ up to $N_0$th power in $\lambda$. Then for the remainder of the proof, we fix $N \equiv N(\eprec,\eps,\lambda) \in \N_0$ as in Definition~\ref{asymptotics:defn:precision} for $\eprec = \lambda^{N_0}$. 
	\medskip

	\noindent
	\textbf{Step 2: Equality of $(n,k)$ terms of expansion. }
	Now to the expansion itself. The two terms we need to expand are the non-magnetic twister $e^{i \frac{\eps}{2} \sigma(Y,Z)}$ and the $k$th power of the magnetic flux integral $\gBe$ in $\eps \ll 1$: we choose $N' , N'' \geq N$ and write the $k$th order of the $\lambda$ expansion as 
	\begin{align*}
		(f \magWel g)_{(k)}(X) 
		&= \frac{1}{(2\pi)^{2d}} \int_{\Pspace} \dd Y \int_{\Pspace} \dd Z \, e^{i \sigma(X,Y+Z)} \, e^{i \frac{\eps}{2} \sigma(Y,Z)} 
		\cdot \\
		&\qquad \qquad \qquad \cdot 
		\frac{(-i)^k}{k!} \bigl ( \gBe(x,y,z) \bigr )^k \, (\Fs f)(Y) \, (\Fs g)(Z) 
		\\ 
		&= \frac{1}{(2\pi)^{2d}} \int_{\Pspace} \dd Y \int_{\Pspace} \dd Z \, e^{i \sigma(X,Y+Z)} 
		\, (\Fs f)(Y) \, (\Fs g)(Z)
		\cdot \\
		&\qquad \qquad \qquad \cdot 
		\Bigl ( \mbox{$\sum_{n = 0}^{N'}$} \eps^n \tfrac{i^n}{n!} \bigl ( \tfrac{1}{2} \sigma(Y,Z) \bigr )^n + R_{N'}[\sigma](Y,Z) \Bigr ) 
		\cdot \\
		&\qquad \qquad \qquad \cdot 
		\frac{(-i)^k}{k!} 
	\Bigl ( \bigl ( \mbox{$\sum_{j = 1}^{N''} \eps^j \mathcal{L}_j(x,y,z)$} \bigr )^k + R_{N'' \, k}[\mathcal{L}R](x,y,z) \Bigr ) 
		. 
	\end{align*}
	The remainders are given explicitly in Step 3, equations~\eqref{asymptotics:eqn:proof_equivalence_expansions:RNsigma} and \eqref{asymptotics:eqn:proof_equivalence_expansions:RNgBe}. The $(n,k)$ terms of the expansion originate from the first of these terms, \ie we need to look at 
	\begin{align*}
		\sum_{n = 0}^{N'} &\eps^n \frac{i^n}{n!} \Bigl ( \tfrac{1}{2} \sigma(Y,Z) \Bigr )^n \, \Bigl ( \mbox{$\sum_{j = 1}^{N''} \eps^j \mathcal{L}_j(x,y,z)$} \Bigr )^k = 
		\\
		&= \sum_{n = 0}^{N'} \sum_{\sum_{j = 1}^{N''} k_j = k} \eps^{n + \sum_{j = 1}^{N''} j k_j} \frac{i^{n+k}}{n! k_1! \cdots k_{N''}!} 
		\bigl ( \tfrac{1}{2} \sigma(Y,Z) \bigr )^n \, \prod_{j = 1}^{N''} \mathcal{L}_j^{k_j}(x,y,z) 
	\end{align*}
	to obtain the $(n,k)$ term of this expansion. The remaining three terms define the remainder which will be treated in the last step. We define $\mathcal{L}_0(Y,Z) := \tfrac{1}{2} \sigma(Y,Z)$, $k_0 := n$ and recognize the result from Theorem~\ref{asymptotics:thm:asymptotic_expansion}, the terms match: 
	\begin{align*}
		\sum_{n = k}^{N' \, N''}& \sum_{\substack{k_0 + \sum_{j = 1}^{N''} j k_j = n\\
		\sum_{j = 1}^{N''} k_j = k}} \eps^{n} \frac{i^{k+k_0}}{k_0! k_1! \cdots k_{N''}!} 
		\mathcal{L}_0^{k_0}(Y,Z) \, \prod_{j = 1}^{N''} \mathcal{L}_j^{k_j}(x,y,z)
	\end{align*}
	Obviously, the arguments made in the proof of Theorem~\ref{asymptotics:thm:asymptotic_expansion} can be applied here as well, and we conclude that the $(n,k)$ term exists and is in the correct symbol class, $\Hoerr{m_1 + m_2 - (n+k)\rho}$. 
	\medskip
	
	\noindent
	\textbf{Step 3: Existence of remainders. }
	The remainders of the expansions of $e^{i \frac{\eps}{2} \sigma(Y,Z)}$ and $\bigl ( \gBe(x,y,z) \bigr )^k$, 
	\begin{align}
		R_{N'}[\sigma](Y,Z) &= \eps^{N'+1} \, \frac{i^{N'+1}}{N'!} \bigl ( \tfrac{1}{2} \sigma(Y,Z) \bigr )^{N'+1} \, \int_0^1 \dd \tau ( 1 - \tau )^{N'} e^{i \tfrac{\eps}{2} \tau \sigma(Y,Z)} 
		\label{asymptotics:eqn:proof_equivalence_expansions:RNsigma}
	\end{align}
	and 
	\begin{align}
		R_{N'' \, k}[\mathcal{L}R](x,y,z) &= \sum_{l = 1}^k \noverk{k}{l} \bigl ( \mbox{$\sum_{j = 1}^{N''} \eps^j \mathcal{L}_j(x,y,z)$} \bigr )^{k - l} \, \bigl ( R_{N''}[\gBe](x,y,z) \bigr )^l
		\label{asymptotics:eqn:proof_equivalence_expansions:RNgBe}
	\end{align}
	with $R_{N''}[\gBe](x,y,z)$ as in Lemma~\ref{appendix:asymptotics:expansion_twister:lem:expansion_twister}, lead to three terms in the total remainder: 
	\begin{align*}
		R_{N N' N'' \, k}^{\Sigma}(x,Y,Z) &= R_{N'}[\sigma](Y,Z) \Bigl ( \bigl ( \mbox{$\sum_{j = 1}^{N''} \eps^j \mathcal{L}_j(x,y,z)$} \bigr )^k 
		+ R_{N'' \, k}[\mathcal{L}R](x,y,z) \Bigr ) 
			+ \\
			&\qquad 
		+ \Bigl ( \mbox{$\sum_{n = 0}^{N'}$} \eps^n \tfrac{i^n}{n!} \bigl ( \tfrac{1}{2} \sigma(Y,Z) \bigr )^n \Bigr ) \, R_{N'' \, k}[\mathcal{L}R](x,y,z) 
	\end{align*}
	Going through the motions of the proof to Theorem~\ref{asymptotics:thm:asymptotic_expansion},  we count $p$s and $q$s, and then apply Lemma~\ref{appendix:asymptotics:existence_osc_int:lem:remainder}. The first remainder, $R_{N'}[\sigma](Y,Z)$, is of order $\eps^{N'+1} < \eprec$ in $\eps$ and contributes $N'+1$ $q$s and $p$s. By Lemma~\ref{appendix:asymptotics:expansion_twister:lem:expansion_twister}, $R_{N''}[\gBe]$ contributes at least $N''+2$ $q$s and all prefactors are less than or equal to $\eps^{N''+1} < \eprec$. Thus the terms in $R_{N'' \, k}[\mathcal{L}R]$ contain at least $N'' + 2$ $q$s (for all $k \leq N$) and prefactors that are at most $\eps^{N''+1} < \eprec$. Hence, the total remainder exists as an oscillatory integral, is $\ordereprec$ small and in symbol class $\Hoerr{m_1 + m_2 - (N+1)\rho}$. 
\end{proof}
\begin{remark}
	The asymptotic expansion of $\magWel$ can be immediately extended to an expansion of products of \emph{semiclassical two-parameter symbols} (see Definition~\ref{asymptotics:defn:2ParameterSemiSymbol}). 
\end{remark}
%

\section{Semiclassical limit} 
\label{asymptotics:semiclassical_limit}
%
%
An immediate application of the asymptotic expansions of the magnetic Weyl product is the proof of an Egorov-type theorem\index{Egorov-type theorem} which connects the quantization of a classically evolved observable with the quantum mechanical Heisenberg observable. The premise which magnetic Weyl calculus is based on is that the magnetic field modifies the \emph{geometry} of phase space, namely it enters into the magnetic symplectic form $\omega^B = \dd x_j \wedge \dd \xi_j - \lambda \, B_{kj} \, \dd x_k \wedge \dd x_j$. On the other hand, there is no need to modify observables -- regardless of whether or not there is a magnetic field, the observables position and momentum are still $x$ and $\xi$. Also from a technical perspective, this approach is simpler than minimal substitution, because at not point do worse-behaved vector potentials enter into the discussion. In this sense, our point of view is not just more natural, but also technically less involved. 

We do not strive for a formulation in its utmost generality, but instead try to give a proof under rather simple and straightforward assumptions. Later on, we will indicate what obstacles need to be overcome if this result is to be generalized to more general hamiltonian symbols and observables. For simplicity, we set $\lambda = 1$ in this section and focus on the behavior of the dynamics as the semiclassical parameter $\eps$ tends to $0$. 
\begin{thm}[Semiclassical limit for observables\index{Egorov theorem}\index{semiclassical limit!observables}]\label{asymptotics:semiclassical_limit:thm:semiclassical_limit_observables}
	Assume the components of $B$ are of class $\BCont^{\infty}$. Let $h \in \Hoermr{2}{0}$ be a real-valued symbol such that $\partial_x^a \partial_{\xi}^{\alpha} h \in \Hoermr{0}{0} = \BCont^{\infty}(\Pspace)$ for all $\abs{a} + \abs{\alpha} \geq 2$ and assume $\Op^A(h)$ defines a selfadjoint operator on $\mathcal{D}^A := \mathcal{D} \bigl ( \Op^A(h) \bigr ) \subseteq L^2(\R^d)$. Furthermore, let $f \in \Hoermr{0}{0} = \BCont^{\infty}(\Pspace)$ a real-valued symbol and that the flow $\phi_t$ associated to the hamiltonian equations of motion 
	\begin{align}
		\left (
		\begin{matrix}
			B & - \id \\
			+ \id & 0 \\
		\end{matrix}
		\right ) 
		\left (
		\begin{matrix}
			\dot{x} \\
			\dot{\xi} \\
		\end{matrix}
		\right ) = \left (
		\begin{matrix}
			\nabla_x h \\
			\nabla_{\xi} h \\
		\end{matrix}
		\right ) 
		. 
	\end{align}
	has bounded derivatives to any order for all $t \in \R$. 
	
	Then $\phi_t$ exists globally in time and for any $T > 0$ and all $\abs{t} \leq T$, the quantization of the classically evolved observable 
	\begin{align}
		F_{\mathrm{cl}}(t) := \Op^A \bigl ( f(t) \bigr ) = \Op^A \bigl ( f \circ \phi_t \bigr ) 
	\end{align}
	is $\order(\eps^2)$-close in the operator norm to the Heisenberg observable 
	\begin{align}
		F_{\mathrm{qm}}(t) := e^{+ i \frac{t}{\eps} \Op^A(h)} \Op^A(f) e^{- i \frac{t}{\eps} \Op^A(h)} 
	\end{align}
	in the sense that there exists a $C_T > 0$ such that for all $t \in [-T , +T]$, we have 
	\begin{align}
		\bnorm{F_{\mathrm{qm}}(t) - F_{\mathrm{cl}}(t)}_{\mathcal{B}(L^2(\R^d))} \leq C_T \eps^2 
		. 
	\end{align}
\end{thm}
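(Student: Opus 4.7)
The plan is to adapt the standard Egorov interpolation argument, with the key input being the asymptotic expansion of the magnetic Moyal product from Theorem~\ref{asymptotics:thm:asymptotic_expansion} and Corollary~\ref{asymptotics:cor:asymptotic_expansion_eps}. Set $U(s) := e^{-is \Op^A(h)/\eps}$, a strongly continuous unitary group by selfadjointness of $\Op^A(h)$, and $f_s := f \circ \phi_s$. The hypothesis that $\phi_s$ has bounded derivatives of all orders uniformly on $[-T,T]$ ensures $\{ f_s \}_{\abs{s} \leq T}$ is a bounded family in $\Hoermr{0}{0}$. Define the interpolation
\begin{align*}
G(s) := U(s)^* \, \Op^A(f_{t-s}) \, U(s), \qquad s \in [0,t],
\end{align*}
which satisfies $G(0) = F_{\mathrm{cl}}(t)$ and $G(t) = F_{\mathrm{qm}}(t)$. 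On the core $\Schwartz(\R^d) \subset \mathcal{D}^A$, Hamilton's equation $\partial_s f_{t-s} = -\{h, f_{t-s}\}_B$ combined with Proposition~\ref{magWQ:extension:prop:Op_A_product} gives
\begin{align*}
G'(s) = U(s)^* \, \Op^A \! \left( \tfrac{i}{\eps} [h, f_{t-s}]_{\magWel} - \{h, f_{t-s}\}_B \right) U(s).
\end{align*}

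The crux is to show that the symbol inside this quantization is of the form $\eps^2 \, r_\eps(s)$ with $r_\eps(s) \in \Hoermr{0}{0}$ and seminorms bounded uniformly in $\eps \in (0,1]$, $s \in [0,T]$ and $\abs{t} \leq T$. Applying Corollary~\ref{asymptotics:cor:asymptotic_expansion_eps} at $\lambda = 1$, the expansion $h \magWel f_{t-s} = \sum_n \eps^n (h \magWel f_{t-s})_{(n)}$ organizes the commutator by parity: I expect that the differential operators satisfy $\mathcal{L}_j(x, z, y) = (-1)^j \, \mathcal{L}_j(x, y, z)$ (for $j = 0$ this is antisymmetry of $\sigma$; for $j \geq 1$ it follows from $B_{kl} = -B_{lk}$ combined with the parity of the combinatorial factors in Theorem~\ref{asymptotics:thm:asymptotic_expansion}), so that under $(Y,Z) \leftrightarrow (Z,Y)$ each building block $\mathcal{L}_0^{k_0} \prod_{j\geq 1} \mathcal{L}_j^{k_j}$ picks up the sign $(-1)^{k_0 + \sum j k_j} = (-1)^n$. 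Consequently $(h \magWel g)_{(n)} - (g \magWel h)_{(n)} = 0$ for even $n$, the $n=1$ antisymmetric piece reproduces exactly $\{h, f_{t-s}\}_B$, and the first nonvanishing correction beyond the Poisson bracket sits at $n = 3$, yielding $\eps^2$ after dividing by $\eps$. Every surviving term at $n \geq 3$ involves at least two derivatives of $h$, which by hypothesis lie in $\BCont^\infty(\Pspace)$, so paired with derivatives of $f_{t-s} \in \Hoermr{0}{0}$ the remainder $r_\eps(s)$ falls in $\Hoermr{0}{0}$ with uniform seminorm bounds. The magnetic Caldéron--Vaillancourt theorem (Theorem~\ref{magWQ:important_results:continuity_and_selfadjointness:thm:magnetic_Calderon_Vaillancourt}) then gives $\bnorm{\Op^A(r_\eps(s))}_{\mathcal{B}(L^2(\R^d))} \leq C_T$.

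Combining these two steps, $\snorm{G'(s)}_{\mathcal{B}(L^2)} \leq C_T \, \eps^2$ uniformly for $s \in [0,T]$ and $\abs{t} \leq T$, so integration
\begin{align*}
F_{\mathrm{qm}}(t) - F_{\mathrm{cl}}(t) = \int_0^t G'(s) \, \dd s
\end{align*}
yields the claimed $\order(\eps^2)$ bound after absorbing the factor $T$ into $C_T$.

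The main obstacle is the parity computation: rigorously establishing that $(h \magWel g)_{(n)}$ is antisymmetric in $(h, g)$ for odd $n$ and symmetric for even $n$, which is what eliminates the $\order(\eps)$ contribution to the commutator and is specific to the combinatorial structure of the $\mathcal{L}_j$'s appearing in Theorem~\ref{asymptotics:thm:asymptotic_expansion}. A subtler secondary point is that the symbols $\tfrac{i}{\eps}[h, f_{t-s}]_{\magWel}$ and $\{h, f_{t-s}\}_B$ individually contain contributions of order one in $\xi$ arising from $\partial_\xi h$, which are not quantized to bounded operators; only their difference lies in $\Hoermr{0}{0}$. Accordingly one must work with the full combination appearing in $G'(s)$ and resist splitting it into its constituent pieces before invoking Caldéron--Vaillancourt.
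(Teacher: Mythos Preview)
Your proposal is correct and follows the same Duhamel-interpolation strategy as the paper's proof. The paper invests more effort in the step you flag as secondary --- it regularizes with resolvents $R^A_\kappa = (i\kappa\,\Op^A(h)+1)^{-1}$ to rigorously differentiate $G(s)$ against the unbounded generator, and it establishes the remainder's membership in $\BCont^{\infty}(\Pspace)$ by directly counting $p$'s and $q$'s in the remainder formula~\eqref{asymptotics:eqn:remainder1} rather than via the parity argument --- but the overall architecture is identical.
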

\begin{remark}
	The implicit assumption on the flow $\phi_t$ is necessary in the magnetic case, because under the remaining assumptions, its derivatives need not be bounded. In case $B = 0$, the boundedness of the derivatives of the flow $\phi_t$ \cite[Lemma~IV.9]{Robert:tour_semiclassique:1987} follows from writing down the equations of motion for its first-order derivatives
	\begin{align*}
		\partial_t \bigl ( \nabla_x^{\mathrm{T}} \phi_t \vert \nabla_{\xi}^{\mathrm{T}} \phi_t \bigr ) = A \bigl ( \nabla_x^{\mathrm{T}} \phi_t \vert \nabla_{\xi}^{\mathrm{T}} \phi_t \bigr )
	\end{align*}
	where the matrix-valued function $A$ contains only \emph{second-order} derivatives of $h$ and applying Gronwall's Lemma. Then one proceeds by induction to ensure that higher-order derivatives are also bounded. 
	
	If one tries to imitate Robert's proof in the magnetic case, however, one term appearing in $A$ is of the form $\sum_{j = 1}^d \partial_{x_k} B_{lj} \, \partial_{\xi_j} h$, \ie it contains \emph{first}-order derivatives of $h$. The right-hand side is not necessarily in $\BCont^{\infty}(\Pspace)$ -- unless the magnetic field is constant or \emph{first}-order derivatives of the hamiltonian symbol $h$ are already bounded. 
	
	From a physical perspective, the formulation of Theorem~\ref{asymptotics:semiclassical_limit:thm:semiclassical_limit_observables} is not satisfactory: even the simplest physically relevant case, $h(x,\xi) = \tfrac{1}{2} \xi^2$ and non-constant $B \in \BCont^{\infty}$, is not covered automatically. Certainly, this topic deserves more attention. 
\end{remark}
It turns out the proof can be written down a little more tidily if we separate off this little lemma and prove it first: 
\begin{lem}\label{asymptotics:semiclassical_limit:lem:composition_symbol_flow}
	Assume that all the derivatives of the diffeormorphism $\phi \in \Cont^{\infty}(\Pspace,\Pspace)$ are bounded, \ie for all $a , \alpha \in \N_0^d$ with $\abs{a} + \abs{\alpha} \geq 1$, we have 
	\begin{align*}
		\bnorm{\partial_x^a \partial_{\xi}^{\alpha} \phi_j(x,\xi)}_{\infty} < \infty 
	\end{align*}
	for all $j \in \{ 1 , \ldots , 2d \}$, and let $f \in \Hoermr{m}{0}$, $m \in \R$. Then $f \circ \phi \in \Hoermr{m}{0}$ and for all $a , \alpha \in \N_0^d$, the $a\alpha$th seminorm for $f \circ \phi$ can be estimated from above by 
	\begin{align}
		\bnorm{f \circ \phi}_{m , a \alpha} \leq \sum_{\substack{\abs{b} \leq \abs{a} \\ \abs{\beta} \leq \abs{\alpha}}} \bnorm{\varphi^{b \beta}}_{\infty} \, \bnorm{f}_{m , b \beta} < \infty 
		\label{asymptotics:semiclassical_limit:eqn:composition_symbol_flow}
	\end{align}
	where $\varphi^{b \beta}$ is a product of derivatives of $\phi$. 
\end{lem}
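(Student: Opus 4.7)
The plan is to apply the multivariate Faà di Bruno formula to $\partial_x^a \partial_\xi^\alpha (f \circ \phi)$, thereby reducing seminorm control of $f \circ \phi$ to seminorm control of $f$ together with uniform bounds on derivatives of $\phi$. Writing $Z = (x,\xi) \in \Pspace$ and $\phi = (\phi_1, \ldots, \phi_{2d})$, the chain rule iterated $|a|+|\alpha|$ times yields a finite sum
\begin{align*}
	\partial_x^a \partial_\xi^\alpha (f \circ \phi)(Z) = \sum_{\substack{|b|\leq |a| \\ |\beta|\leq |\alpha|}} \bigl( \partial_x^b \partial_\xi^\beta f \bigr)\bigl(\phi(Z)\bigr) \, \varphi^{b\beta}(Z),
\end{align*}
where each $\varphi^{b\beta}$ is a finite sum of finite products of partial derivatives $\partial_x^c \partial_\xi^\gamma \phi_j$ with $1 \leq |c|+|\gamma| \leq |a|+|\alpha|$ and $j \in \{1,\ldots,2d\}$. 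The combinatorics of Faà di Bruno ensures that the total order of differentiation carried by each $\varphi^{b\beta}$ matches $(a,\alpha)$, and the indices $(b,\beta)$ range exactly over the sub-multiindices appearing in equation~\eqref{asymptotics:semiclassical_limit:eqn:composition_symbol_flow}.

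Next I would observe that, by the hypothesis on $\phi$, every factor appearing in $\varphi^{b\beta}$ is a bounded function on $\Pspace$, hence $\varphi^{b\beta} \in \BCont^\infty(\Pspace)$ with $\snorm{\varphi^{b\beta}}_\infty < \infty$. Since only finitely many such factors enter, the bound is uniform in $Z$. This step is essentially bookkeeping.

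Then I would invoke the symbol estimate for $f \in \Hoermr{m}{0}$, namely
\begin{align*}
	\babs{\bigl(\partial_x^b \partial_\xi^\beta f\bigr)(y,\eta)} \leq \snorm{f}_{m,b,\beta}\, \sexpval{\eta}^m
	\qquad \forall (y,\eta) \in \Pspace,
\end{align*}
and apply it at $(y,\eta) = \phi(Z)$. Combining with the previous step gives
\begin{align*}
	\babs{\partial_x^a \partial_\xi^\alpha (f \circ \phi)(Z)} \leq \sum_{\substack{|b|\leq |a| \\ |\beta|\leq |\alpha|}} \snorm{\varphi^{b\beta}}_\infty \, \snorm{f}_{m,b,\beta}\, \sexpval{\pi_\xi \phi(Z)}^m,
\end{align*}
where $\pi_\xi$ denotes projection onto the momentum component. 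Taking the supremum in $Z$ after multiplying by $\sexpval{\xi}^{-m}$ yields~\eqref{asymptotics:semiclassical_limit:eqn:composition_symbol_flow}, provided we can absorb $\sexpval{\pi_\xi \phi(Z)}^m$ into $\sexpval{\xi}^m$ uniformly.

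The main obstacle is precisely this last point: controlling $\sexpval{\pi_\xi \phi(Z)}^m \leq C\, \sexpval{\xi}^m$. Boundedness of all first-order derivatives of $\phi$ together with an analogous bound on $\phi^{-1}$ (which follows since $\phi$ is a diffeomorphism arising from a complete Hamiltonian flow) gives the Lipschitz-type estimate $|\pi_\xi \phi(Z) - \pi_\xi \phi(0,\xi)| \leq C|x|$, but this does not by itself produce a uniform bound in $x$ when $m > 0$. In the Hamiltonian case at hand, the correct estimate follows from differentiating the flow equations once and applying Gronwall, yielding $\sexpval{\pi_\xi \phi_t(x,\xi)} \leq C_T \sexpval{\xi}$ uniformly in $x$ and in $|t|\leq T$; this is the essential place where the Hamiltonian structure (and not just bounded derivatives) enters. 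With this uniform bound in hand, the proof closes and~\eqref{asymptotics:semiclassical_limit:eqn:composition_symbol_flow} follows.
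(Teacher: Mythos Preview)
Your approach via Faà di Bruno is essentially what the paper does (it writes the chain rule explicitly for first order and then sketches the induction), so structurally you are in agreement. You have, however, put your finger on a genuine problem that the paper's proof simply glosses over: the step from $\sexpval{\xi}^{-m}\babs{(\partial^b_x\partial^\beta_\xi f)(\phi(x,\xi))}$ to $\snorm{f}_{m,b\beta}$ requires a uniform comparison $\sexpval{\pi_\xi\phi(x,\xi)}^m \lesssim \sexpval{\xi}^m$, and nothing in the hypotheses of the lemma provides it. In fact the lemma as stated is false for $m\neq 0$: take $d=1$, $\phi(x,\xi)=(x,\xi+x)$ (all derivatives bounded, bijective with smooth inverse) and $f(x,\xi)=\sexpval{\xi}^m$; then $\sexpval{\xi}^{-m}\sexpval{\xi+x}^m$ is unbounded in $x$ for $m>0$.

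Your proposed fix via a Gronwall argument on the Hamiltonian flow is sensible in spirit but out of place here: the lemma is formulated for an abstract diffeomorphism $\phi$, not for a Hamiltonian flow, so you cannot import that structure into its proof. The clean resolution is to notice that the lemma is only ever applied in the paper with $m=0$ (to $f\in\Hoermr{0}{0}=\BCont^\infty(\Pspace)$ in Theorem~\ref{asymptotics:semiclassical_limit:thm:semiclassical_limit_observables}), in which case the weight is trivial and both your argument and the paper's go through without any obstacle. If you want the lemma to hold for general $m$, you must add the hypothesis $C^{-1}\sexpval{\xi}\leq\sexpval{\pi_\xi\phi(x,\xi)}\leq C\sexpval{\xi}$; your Gronwall remark is then the way to verify this extra hypothesis for the specific flows arising in the Egorov theorem.
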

\begin{proof}
	We have to show that all seminorms $\{ \snorm{\cdot}_{m , a \alpha} \}_{a , \alpha \in \N_0^d}$ of $f \circ \phi$ are bounded. Since $\phi : \Pspace \longrightarrow \Pspace$ is a diffeomorphism, we have 
	\begin{align*}
		\snorm{f \circ \phi}_{00} = \sup_{(x,\xi) \in \Pspace} \babs{f \circ \phi(x,\xi)} = \sup_{(x,\xi) \in \Pspace} \babs{f(x,\xi)} = \snorm{f}_{00} 
		. 
	\end{align*}
	For terms involving derivatives, we proceed by induction: consider $a = 0 \in \N_0^d$ and $\alpha^k := (\delta_{1k} ,\ldots , \delta_{dk}) \in \N_0^d$, for instance. Then by the chain rule, we can calculate the first-order derivative: 
	\begin{align*}
		\partial_{\xi_k} \bigl ( f \circ \phi \bigr ) &= \sum_{j = 1}^d \Bigl ( \partial_{x_j} f \circ \phi \; \partial_{\xi_k} \phi_j + \partial_{\xi_j} f \circ \phi \; \partial_{\xi_k} \phi_{d + j} \Bigr ) 
	\end{align*}
	Hence, we can estimate the $0 \alpha^k$th seminorm by 
	\begin{align*}
		\bnorm{f \circ \phi}_{0 \alpha^k} &= \sup_{(x,\xi) \in \Pspace} \babs{\sexpval{\xi}^{-m} \, \partial_{\xi_k} \bigl ( f \circ \phi \bigr )(x,\xi)} 
		\\
		&
		\leq \sum_{j = 1}^d \Bigl ( \bnorm{\sexpval{\xi}^{-m} \, \partial_{x_j} f \circ \phi}_{\infty} \, \bnorm{\partial_{\xi_k} \phi_j}_{\infty} 
		+ \Bigr . \\
		&\qquad \qquad \Bigl . 
		+ \bnorm{\sexpval{\xi}^{-m} \, \partial_{\xi_j} f \circ \phi}_{\infty} \, \bnorm{\partial_{\xi_k} \phi_{d + j}}_{\infty} \Bigr ) 
		\\
		&
		\leq \sum_{j = 1}^d \Bigl ( \bnorm{\sexpval{\xi}^{-m} \, \partial_{x_j} f}_{\infty} \, \bnorm{\partial_{\xi_k} \phi_j}_{\infty} + \bnorm{\sexpval{\xi}^{-m} \, \partial_{\xi_j} f}_{\infty} \, \bnorm{\partial_{\xi_k} \phi_{d + j}}_{\infty} \Bigr ) 
		\\
		&
		\leq \sum_{j = 1}^d \Bigl ( \bnorm{f}_{m,a^j 0} \, \bnorm{\partial_{\xi_k} \phi_j}_{\infty} + \bnorm{f}_{m,0\alpha^j} \, \bnorm{\partial_{\xi_k} \phi_{d + j}}_{\infty} \Bigr ) 
		< \infty 
	\end{align*}
	Here $a^j = (\delta_{1j} , \ldots , \delta_{dj})$ and $\alpha^j = (\delta_{1j} , \ldots , \delta_{dj})$ are the multiindices whose entries are all $0$ except for the $j$th which is $1$. 
	
	Similarly, we can estimate the seminorm of $f \circ \phi$ associated to $a^k = (\delta_{1k},\ldots,\delta_{dk}) \in \N_0^d$ and $\alpha = 0 \in \N_0^d$ by 
	\begin{align*}
		\bnorm{f \circ \phi}_{m , a^k 0} \leq \sum_{j = 1}^d \Bigl ( \bnorm{f}_{m,a^j 0} \, \bnorm{\partial_{x_k} \phi_j}_{\infty} + \bnorm{f}_{m,0\alpha^j} \, \bnorm{\partial_{x_k} \phi_{d + j}}_{\infty} \Bigr ) < \infty 
		. 
	\end{align*}
	Now we proceed by induction: let $\abs{a} + \abs{\alpha} \geq 1$. Then $\partial_x^a \partial_{\xi}^{\alpha} \bigl ( f \circ \phi \bigr )$ is a sum of terms of the type 
	\begin{align*}
		\varphi^{b \beta} \; \partial_x^b \partial_{\xi}^{\beta} f \circ \phi 
	\end{align*}
	where $\varphi^{b \beta}$ is a product of derivatives of $\phi$ and $\abs{b} \leq \abs{a}$, $\abs{\beta} \leq \abs{\alpha}$. Thus, the $a \alpha$ seminorm of $f \circ \phi$ can be estimated by 
	\begin{align*}
		\bnorm{f \circ \phi}_{m , a \alpha} \leq \sum_{\substack{\abs{b} \leq \abs{a} \\ \abs{\beta} \leq \abs{\alpha}}} \bnorm{\varphi^{b \beta}}_{\infty} \, \bnorm{f}_{m , b \beta} < \infty 
	\end{align*}
	and $f \circ \phi \in \Hoermr{m}{0}$. 
\end{proof}
\begin{proof}[Theorem~\ref{asymptotics:semiclassical_limit:thm:semiclassical_limit_observables}]
	Let $T > 0$. By the assumptions on $h$ and $B$, the hamiltonian vector field 
	\begin{align*}
		X_h^B := \left (
		\begin{matrix}
			0 & + \id \\
			- \id & B \\
		\end{matrix}
		\right ) \left (
		\begin{matrix}
			\nabla_x h \\
			\nabla_{\xi} h \\
		\end{matrix}
		\right )
	\end{align*}
	%
	satisfies a global Lipschitz condition. Thus, by the Picard-Lindelöf theorem the hamiltonian flow $\phi_t$ exists globally in time and inherits the smoothness of $X_h^B$ in $(x,\xi) \in \Pspace$ and $t \in \R$ \cite{Arnold:ode:2006}. 
	
	Since we have assumed all derivatives of $\phi_t$ to be bounded, we can apply Lemma~\ref{asymptotics:semiclassical_limit:lem:composition_symbol_flow} to conclude $f(t) = f \circ \phi_t \in \Hoermr{0}{0} = \BCont^{\infty}(\Pspace)$ for all $t \in \R$. Then its quantization $F_{\mathrm{cl}}(t) = \Op^A \bigl ( f(t) \bigr )$ defines a bounded selfadjoint operator on $L^2(\R^d)$ (Theorem~\ref{magWQ:important_results:continuity_and_selfadjointness:thm:magnetic_Calderon_Vaillancourt}) whose norm can be estimated from above by a finite number of seminorms of $f(t)$. 
	
	The selfadjoint operator $\Op^A(h)$ generates the strongly continuous one-parameter evolution group $e^{-i \frac{t}{\eps} \Op^A(h)}$ \cite[Theorem~VIII.8]{Reed_Simon:M_cap_Phi_1:1972}. Thus, the quantum observable $F_{\mathrm{qm}}(t) = e^{+i \frac{t}{\eps} \Op^A(h)} \, \Op^A(f) \, e^{-i \frac{t}{\eps} \Op^A(h)}$ as composition of bounded linear maps on $L^2(\R^d)$ is also bounded. 
	
	Both, $F_{\mathrm{qm}}(t) , F_{\mathrm{cl}}(t) : \Schwartz(\R^d) \subset L^2(\R^d) \longrightarrow L^2(\R^d) \subset \Schwartz'(\R^d)$ can be seen as elements of $\mathcal{L} \bigl ( \Schwartz(\R^d) , \Schwartz'(\R^d) \bigr )$, the linear continuous operators between $\Schwartz(\R^d)$ and $\Schwartz'(\R^d)$. This point of view allows us to make the following formal manipulations rigorous: we rewrite the difference $F_{\mathrm{qm}}(t) - F_{\mathrm{cl}}(t)$ as an integral over a derivative (the Duahmel trick), 
	\begin{align*}
		&F_{\mathrm{qm}}(t) - F_{\mathrm{cl}}(t) = \int_0^t \dd s \, \frac{\dd}{\dd s} \Bigl ( e^{+ i \frac{s}{\eps} \Op^A(h)} \, \Op^A \bigl ( f(t-s) \bigr ) \, e^{-i \frac{s}{\eps} \Op^A(h)} \Bigr ) 
		\\
		&\;
		= \int_0^t \dd s \, e^{+ i \frac{s}{\eps} \Op^A(h)} \Bigl ( \tfrac{i}{\eps} \bigl [ \Op^A(h) , \Op^A \bigl ( f(t-s) \bigr ) \bigr ] + \tfrac{\dd}{\dd s} \Op^A \bigl ( f(t-s) \bigr ) \Bigr ) \, e^{-i \frac{s}{\eps} \Op^A(h)} 
		\displaybreak[2]
		\\
		&\;
		= \int_0^t \dd s \, e^{+ i \frac{s}{\eps} \Op^A(h)} \Op^A \Bigl ( \tfrac{i}{\eps} [ h , f(t-s) ]_{\magW} - \{ h , f(t-s) \}_B \Bigr ) \, e^{-i \frac{s}{\eps} \Op^A(h)} 
		\\
		&\;
		= \order(\eps^2 \abs{t})
		. 
	\end{align*}
	The integrand is the derivative of 
	\begin{align*}
		I(t,s) := e^{+ i \frac{s}{\eps} \Op^A(h)} \, \Op^A \bigl ( f(t-s) \bigr ) \, e^{- i \frac{s}{\eps} \Op^A(h)} \in \mathcal{L} \bigl ( \Schwartz(\R^d) , \Schwartz'(\R^d) \bigr ) 
		. 
	\end{align*}
	This operator is also a bounded as a map from $L^2(\R^d)$ to itself. We need to establish that $s \mapsto I(t,s)$ is in $\Cont^1$ in the sense of $\mathcal{L} \bigl ( \Schwartz(\R^d) , \Schwartz'(\R^d) \bigr )$, \ie for all $u , v \in \Schwartz(\R^d)$, the map 
	\begin{align}
		[-T , +T] \ni s \mapsto \bigl ( I(s,t) v , u^* \bigr ) = \scpro{u}{I(s,t) v} \in \C 
		\label{asymptotics:semiclassical_limit:eqn:I_s_t_map}
	\end{align}
	is in $\Cont^1$. Here, $(\cdot , \cdot)$ denotes the duality bracket on $\Schwartz'(\R^d)$ and $\scpro{\cdot}{\cdot}$ the scalar product on $L^2(\R^d)$. 
	
	Combining the magnetic Caldéron-Vaillancourt theorem (Theorem~\ref{magWQ:important_results:continuity_and_selfadjointness:thm:magnetic_Calderon_Vaillancourt}) with inequality~\eqref{asymptotics:semiclassical_limit:eqn:composition_symbol_flow} of Lemma~\ref{asymptotics:semiclassical_limit:lem:composition_symbol_flow} and the smoothness of $t \mapsto \phi_t$, we find a bound of the operator norm of $\Op^A \bigl ( f(t-s) \bigr )$ where the constants depend $\bnorm{\varphi^{b\beta}_{t-s}}$ smoothly on $s$, 
	\begin{align*}
		\bnorm{\Op^A \bigl ( f(t-s) \bigr )}_{\mathcal{B}(L^2(\R^d))} \leq C(d) \max_{\abs{a} , \abs{\alpha} \leq p(d)} \sum_{\substack{\abs{b} \leq \abs{a} \\ \abs{\beta} \leq \abs{\alpha}}} \bnorm{\varphi^{b \beta}_{t-s}}_{\infty} \, \bnorm{f}_{0 , b \beta} 
		. 
	\end{align*}
	Since $h \in \Hoermr{2}{0}$ and $f(t-s) \in \Hoermr{0}{0}$ are in the magnetic Moyal algebra, they are continuous as operators in $\mathcal{L} \bigl ( \Schwartz(\R^d) \bigr )$ by Proposition~\ref{magWQ:extension:magnetic_moyal_algebra:prop:mag_Moyal_algebra_L_S_L_Sprime}. This means $\Schwartz(\R^d) \subset \mathcal{D} \bigl ( \Op^A(h) \bigr )$ is a core and $s \mapsto e^{- i \frac{s}{\eps} \Op^A(h)}$ is strongly $\Cont^1$ on $\Schwartz(\R^d)$. This establishes the continuity of expression~\eqref{asymptotics:semiclassical_limit:eqn:I_s_t_map}. 
	
	For notational simplicity, let us define $u(s) := e^{- i \frac{s}{\eps} \Op^A(h)} u \in \mathcal{D} \bigl ( \Op^A(h) \bigr )$ and $v(s) := e^{- i \frac{s}{\eps} \Op^A(h)} v \in \mathcal{D} \bigl ( \Op^A(h) \bigr )$. We need to regularize expression~\eqref{asymptotics:semiclassical_limit:eqn:I_s_t_map} before differentiating it with respect to $s$. For $\kappa > 0$, define 
	\begin{align*}
		R_{\kappa}^A := \bigl ( i \kappa \Op^A(h) + 1 \bigr )^{-1} = \tfrac{1}{i \kappa} \bigl ( \Op^A(h) + \tfrac{i}{\kappa} \bigr )^{-1} 
	\end{align*}
	which maps $L^2(\R^d)$ onto $\mathcal{D} \bigl ( \Op^A(h) \bigr )$. 
	The operators $\Op^A(h)$, $e^{- i \frac{s}{\eps} \Op^A(h)}$ and $R_{\kappa}^A$ all commute amongst each other by definition and from standard arguments, we conclude $R_{\kappa}^A$ converges strongly to the identity as $\kappa \rightarrow 0$. Since $R_{\kappa}^A$ is essentially the resolvent of a Hörmander class symbol, the Moyal resolvent exists and is a Hörmander symbol as well (Theorem~\ref{magWQ:important_results:inversion:thm:inversion_Hoermander}), 
	\begin{align*}
		r_{\kappa}^B := \tfrac{1}{i \kappa} \bigl ( h - \tfrac{i}{\kappa} \bigr )^{(-1)_B} \in \Hoermr{-2}{0} 
		. 
	\end{align*}
	The Moyal resolvent inherits the commutativity properties of $R_{\kappa}^A = \Op^A(r_{\kappa}^B)$, and using the composition law of Hörmander symbols (Theorems~\ref{asymptotics:thm:equivalenceProduct}) and the magnetic Caldéron-Vaillancourt theorem (Theorem~\ref{magWQ:important_results:continuity_and_selfadjointness:thm:magnetic_Calderon_Vaillancourt}), the Weyl quantization of $r_{\kappa}^B \magW h = h \magW r_{\kappa}^B \in \Hoermr{0}{0}$ of the product is bounded on $L^2(\R^d)$. Thus, using $u , v \in \Schwartz(\R^d) \subset \mathcal{D} \big ( \Op^A(h) \bigr )$, we compute 
	\begin{align}
		\frac{\dd}{\dd s} &\Bscpro{R^A_{\kappa} u(s)}{\Op^A \bigl ( f(t-s) \bigr ) R^A_{\kappa} v(s)} 
		= \notag \\
		&\qquad \qquad = \Bscpro{R^A_{\kappa} \tfrac{\dd}{\dd s} u(s)}{\Op^A \bigl ( f(t-s) \bigr ) R^A_{\kappa} v(s)} 
		+ \notag \\
		&\qquad \qquad \qquad \qquad 
		+ \Bscpro{R^A_{\kappa} u(s)}{\Bigl ( \tfrac{\dd}{\dd s} \Op^A \bigl ( f(t-s) \bigr ) \Bigr )  R^A_{\kappa} v(s)} 
		+ \notag \\
		&\qquad \qquad \qquad \qquad 
		+ \Bscpro{R^A_{\kappa} u(s)}{\Op^A \bigl ( f(t-s) \bigr ) R^A_{\kappa} \tfrac{\dd}{\dd s} v(s)} 
		\displaybreak[2]
		\notag \\
		&\qquad \qquad = \frac{i}{\eps} \Bscpro{\Op^A(r_{\kappa}^B \magW h) u(s)}{\Op^A \bigl ( f(t-s) \magW r_{\kappa}^B \bigr ) v(s)} 
		+ \notag \\
		&\qquad \qquad \qquad \qquad -
		\frac{i}{\eps} \Bscpro{\Op^A(r_{\kappa}^B) u(s)}{\Op^A \bigl ( f(t-s) \bigr ) \Op^A(r_{\kappa}^B \magW h) v(s)} 
		+ \notag \\
		&\qquad \qquad \qquad \qquad + 
		\Bscpro{u(s)}{\Bigl ( \tfrac{\dd}{\dd s} \Op^A \bigl ( {r^B_{\kappa}}^* \magW f(t-s) \magW r_{\kappa}^B \bigr ) \Bigr ) v(s)} 
		\displaybreak[2]
		\notag \\
		&\qquad \qquad = \Bscpro{u(s)}{\Op^A \Bigl ( {r^B_{\kappa}}^* \magW \bigl ( h \magW f(t-s) - f(t-s) \magW h \bigr ) \magW r_{\kappa}^B \Bigr ) v(s)}
		+ \notag \\
		&\qquad \qquad \qquad \qquad + 
		\Bscpro{u(s)}{\Bigl ( \tfrac{\dd}{\dd s} \Op^A \bigl ( {r^B_{\kappa}}^* \magW f(t-s) \magW r_{\kappa}^B \bigr ) \Bigr ) v(s)} 
		\label{asymptotics:semiclassical_limit:eqn:regularized_derivative_I_t_s}
		. 
	\end{align}
	In the last step, we were able to shove $\Op^A (r_{\kappa}^B \magW h)$ into the other argument of the scalar product since it defines a bounded operator. 
	
	Now we compute the remaining derivative in equation~\eqref{asymptotics:semiclassical_limit:eqn:regularized_derivative_I_t_s}: for any $u , v \in \Schwartz(\R^d)$, the magnetic Wigner transform $\WignerTrafo^A(v,u)$ is an element of $\Schwartz(\Pspace)$ (Lemma~\ref{magWQ:magnetic_weyl_calculus:lem:Wigner_transform}) and rewriting the quantum expectation value as a phase space average  (Lemma~\ref{magWQ:magnetic_weyl_calculus:magnetic_wigner_transform:lem:qm_exp_val_phase_space_average}), we can define the magnetic Weyl quantization for any $F \in \Schwartz'(\Pspace)$ via 
	\begin{align*}
		\bigl ( \Op^A(F) v , u^* \bigr ) := \bigl ( F , \WignerTrafo^A(v,u) \bigr )_{\Schwartz'(\Pspace)} 
	\end{align*}
	where the right-hand side is the duality bracket on $\Schwartz'(\Pspace)$. For $F \equiv f(t-s)$, the time-derivative with respect to $s$ is of tempered growth, \ie we have the estimate 
	\begin{align*}
		\babs{\{ h , f(t-s) \}_B(x,\xi) } \leq C(s) \sexpval{\xi}^2 \leq C \sexpval{\xi}^2 
		, 
	\end{align*}
	with a constant $C < \infty$ that is independent of $s$ due to equation~\eqref{asymptotics:semiclassical_limit:eqn:composition_symbol_flow} and the smoothness of $t \mapsto \phi_t$. Hence, the integrand implicit in 
	\begin{align*}
		\Babs{\Bigl ( \{ h , f(t-s) \}_B , \WignerTrafo^A(v,u) \Bigr )_{\Schwartz'(\Pspace)}} \leq \int_{\Pspace} \dd X \, C \sexpval{\xi}^2 \, \babs{\bigl ( \WignerTrafo^A(v,u) \bigr )(X)} < \infty 
	\end{align*}
	can be bounded by something integrable independent of $s \in [-T,+T]$ and we can interchange differentiation with respect to $s$ and integration by Dominated Convergence, 
	\begin{align*}
		\frac{\dd}{\dd s} \Bigl ( \Op^A &\bigl ( f(t-s) \bigr ) v , u^* \Bigr ) = \frac{\dd}{\dd s} \Bigl ( f(t-s) , \WignerTrafo^A(v,u) \Bigr )_{\Schwartz'(\Pspace)} 
		\\
		&
		= \Bigl ( \tfrac{\dd}{\dd s} f(t-s) , \WignerTrafo^A(v,u) \Bigr )_{\Schwartz'(\Pspace)} 
		= - \Bigl ( \{ h , f(t-s) \}_B , \WignerTrafo^A(v,u) \Bigr )_{\Schwartz'(\Pspace)} 
		\\
		&
		= \Bigl ( - \Op^A \bigl ( \{ h , f(t-s) \}_B \bigr ) v , u^* \Bigr ) 
		. 
	\end{align*}
	Finally, we will show that the difference 
	\begin{align*}
		\tfrac{i}{\eps} \bigl [ \Op^A(h) , \Op^A \bigl ( f(t-s) \bigr ) \bigr ] + &\tfrac{\dd}{\dd s} \Op^A \bigl ( f(t-s) \bigr ) 
		= \\
		&
		= \Op^A \Bigl ( \tfrac{i}{\eps} [h , f(t-s)]_{\magW} - \{ h , f(t-s) \}_B \Bigr )
	\end{align*}
	actually defines a \emph{bounded} operator on $L^2(\R^d)$: we use Theorem~\ref{asymptotics:thm:asymptotic_expansion} to expand the Moyal commutator asymptotically, 
	\begin{align*}
		\tfrac{i}{\eps} [h , f(t-s)]_{\magW} = \{ h , f(t-s) \}_B + \order(\eps^2) 
		. 
	\end{align*}
	Under the assumptions on $h$ and $f$, we can show that the remainder (which is the symmetrized version of the remainder in Theorem~\ref{asymptotics:thm:asymptotic_expansion}) is in $\BCont^{\infty}(\Pspace)$ and thus its quantization is a bounded operator on $L^2(\R^d)$ (Theorem~\ref{magWQ:important_results:continuity_and_selfadjointness:thm:magnetic_Calderon_Vaillancourt}): in the language of the last section, the remainder as given by equation~\eqref{asymptotics:eqn:total_remainder} is comprised of two terms: the first contribution to the remainder, $R_2(x,\cdot,\cdot)$ (see equation~\eqref{asymptotics:eqn:remainder1}), contributes $3$ $q$s and $3$ $p$s in total. These are converted into third-order derivatives of both $f$ and $h$ -- which are by assumption $\BCont^{\infty}(\Pspace)$ functions. According to Lemma~\ref{appendix:asymptotics:existence_osc_int:lem:remainder}, the corresponding oscillatory integral yields a $\BCont^{\infty}(\Pspace)$ function. 
	
	The second contribution contains at least $(1) + (2+2) = 5$ $q$s (which are converted into derivatives with respect to momentum), and at least $1$ $p$. Of those derivatives, at least $2$ act on $h$ and at least $2$ act on $f$. Hence, by assumption on $f$ and $g$, the second contribution to the remainder is in symbol class $\BCont^{\infty}(\Pspace)$. 
	
	Hence, we have shown that $\tfrac{i}{\eps} [h , f(t-s)]_{\magW} - \{ h , f(t-s) \}_B = \order(\eps^2) \in \BCont^{\infty}(\Pspace)$. Finishing the regularization argument above by letting $\kappa \rightarrow 0$ in equation~\eqref{asymptotics:semiclassical_limit:eqn:regularized_derivative_I_t_s} and using magnetic Caldéron-Vaillancourt theorem (Theorem~\ref{magWQ:important_results:continuity_and_selfadjointness:thm:magnetic_Calderon_Vaillancourt}), we then get that 
	\begin{align*}
		\frac{\dd}{\dd s} &\Bscpro{u}{e^{+ i \frac{s}{\eps} \Op^A(h)} \, \Op^A \bigl ( f(t-s) \bigr ) \, e^{-i \frac{s}{\eps} \Op^A(h)} v} = 
		\\
		&\qquad 
		= \Bscpro{u}{e^{+ i \frac{s}{\eps} \Op^A(h)} \, \Op^A \Bigl ( \tfrac{i}{\eps} [ h , f(t-s) ]_{\magW} - \{ h , f(t-s) \}_B \Bigr ) \, e^{-i \frac{s}{\eps} \Op^A(h)} v} 
		\\
		&\qquad 
		= \order(\eps^2) 
		. 
	\end{align*}
	Plugged into the original equation and invoking Theorem~\ref{magWQ:important_results:continuity_and_selfadjointness:thm:magnetic_Calderon_Vaillancourt} once more, we can estimate the operator norm of the difference $F_{\mathrm{qm}}(t) - F_{\mathrm{cl}}(t)$ by 
	\begin{align*}
		&\bnorm{F_{\mathrm{qm}}(t) - F_{\mathrm{cl}}(t)}_{\mathcal{B}(L^2(\R^d))} \leq 
		\\
		&\qquad \qquad \leq 
		\int_0^t \dd s \, \Bnorm{\Op^A \Bigl ( \tfrac{i}{\eps} [ h , f(t-s) ]_{\magW} - \{ h , f(t-s) \}_B \Bigr )}_{\mathcal{B}(L^2(\R^d))} 
		\\
		&\qquad \qquad 
		\leq \eps^2 \, K \abs{t} \sup_{\abs{s} \leq \abs{t}} \max_{\abs{a} , \abs{\alpha} \leq p(d)} \Bnorm{\tfrac{i}{\eps} [ h , f(t-s) ]_{\magW} - \{ h , f(t-s) \}_B}_{a \alpha} 
		\leq C \eps^2 \abs{t}
	\end{align*}
	for some finite constants $K$ and $C$. This concludes the proof.
\end{proof}
A simple corollary is a semiclassical limit for states: 
\begin{cor}[Semiclassical limit for states\index{semiclassical limit!states}]
	In addition to the assumptions on $h$ and $B$ made in Theorem~\ref{asymptotics:semiclassical_limit:thm:semiclassical_limit_observables}, let $u \in L^2(\R^d)$. Then the classically evolved signed probability measure 
	\begin{align}
		\mu_{\mathrm{cl}}(t) := (2 \pi)^{- \nicefrac{d}{2}} \, \WignerTrafo^A(u,u) \circ \phi_{-t} 
	\end{align}
	approximates the quantum mechanically evolved state 
	\begin{align}
		\mu_{\mathrm{qm}}(t) := (2 \pi)^{- \nicefrac{d}{2}} \, \WignerTrafo^A \bigl ( u(t),u(t) \bigr ) 
	\end{align}
	with $u(t) := e^{- i \frac{t}{\eps} \Op^A(h)} u$ up to errors of $\order(\eps^2)$ in the sense that 
	\begin{align}
		\int_{\Pspace} \dd X \, f(X) \, \Bigl ( \bigl ( \mu_{\mathrm{qm}}(t) \bigr )(X) - \bigl ( \mu_{\mathrm{cl}}(t) \bigr )(X) \Bigr ) = \order(\eps^2)
		&&
		\forall f \in \Schwartz(\Pspace)
	\end{align}
	holds for any $t \in [-T , +T]$.
\end{cor}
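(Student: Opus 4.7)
The plan is to dualize the operator-level Egorov estimate of Theorem~\ref{asymptotics:semiclassical_limit:thm:semiclassical_limit_observables} into a statement about states by rewriting both sides of the claimed identity as quantum expectation values on the initial vector $u$. The first step is to extend the phase-space representation from Lemma~\ref{magWQ:magnetic_weyl_calculus:magnetic_wigner_transform:lem:qm_exp_val_phase_space_average} from Schwartz vectors to arbitrary $u \in L^2(\R^d)$ paired against $f \in \Schwartz(\Pspace)$; this is a routine density argument using that the magnetic Wigner transform extends to a map $\WignerTrafo^A : L^2(\R^d) \times L^2(\R^d) \longrightarrow L^2(\Pspace) \cap \Cont_{\infty}(\Pspace)$ (see the remark after Lemma~\ref{magWQ:magnetic_weyl_calculus:lem:Wigner_transform}), so that the pairing $\int f \, \WignerTrafo^A(v,u) \dd X$ is jointly continuous in $(u,v)$. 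This yields
\[
	\int_{\Pspace} \dd X \, f(X) \, \mu_{\mathrm{qm}}(t)(X) = \bscpro{u(t)}{\Op^A(f) u(t)} = \bscpro{u}{F_{\mathrm{qm}}(t) u}
\]
for the quantum side.

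For the classical side I would perform the change of variables $Y = \phi_{-t}(X)$ under the integral. The key ingredient is Liouville's theorem for the magnetic hamiltonian flow: the top wedge power $(\omega^B)^{\wedge d}$ of the magnetic symplectic form equals $d!$ times the canonical Lebesgue volume $\dd x_1 \wedge \cdots \wedge \dd x_d \wedge \dd \xi_1 \wedge \cdots \wedge \dd \xi_d$, because any wedge expansion that uses an extra $B_{kl} \, \dd x_k \wedge \dd x_l$ factor necessarily repeats a $\dd x_j$-differential and hence vanishes. Consequently $\phi_t$ preserves Lebesgue measure on $\Pspace$, and
\[
	\int_{\Pspace} \dd X \, f(X) \, \mu_{\mathrm{cl}}(t)(X) = (2\pi)^{-\nicefrac{d}{2}} \int_{\Pspace} \dd Y \, (f \circ \phi_t)(Y) \, \WignerTrafo^A(u,u)(Y) = \bscpro{u}{F_{\mathrm{cl}}(t) u},
\]
where the second equality again uses the extended Wigner pairing. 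Here Lemma~\ref{asymptotics:semiclassical_limit:lem:composition_symbol_flow}, combined with the inclusion $\Schwartz(\Pspace) \subset \BCont^{\infty}(\Pspace) = \Hoermrd{0}{0}{0}$, guarantees $f \circ \phi_t \in \BCont^{\infty}(\Pspace)$, so that $\Op^A(f \circ \phi_t) = F_{\mathrm{cl}}(t)$ is $L^2$-bounded by the magnetic Calderón--Vaillancourt theorem~\ref{magWQ:important_results:continuity_and_selfadjointness:thm:magnetic_Calderon_Vaillancourt}.

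Subtracting the two identities and invoking Cauchy--Schwarz together with the operator-norm estimate from Theorem~\ref{asymptotics:semiclassical_limit:thm:semiclassical_limit_observables} delivers
\[
	\Babs{\int_{\Pspace} \dd X \, f(X) \, \bigl ( \mu_{\mathrm{qm}}(t) - \mu_{\mathrm{cl}}(t) \bigr )(X)} = \babs{\bscpro{u}{(F_{\mathrm{qm}}(t) - F_{\mathrm{cl}}(t)) u}} \leq C_T \, \snorm{u}_{L^2}^2 \, \eps^2
\]
uniformly for $t \in [-T,+T]$, which is precisely the claimed $\order(\eps^2)$ bound. The one nontrivial step beyond routine bookkeeping is the magnetic Liouville theorem for the volume form $(\omega^B)^{\wedge d}$; once that is in hand the argument reduces to pure duality between observables and Wigner-type states.
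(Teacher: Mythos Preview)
Your proof is correct and follows precisely the route indicated in the paper: the extension of the Wigner pairing from Schwartz to $L^2$ vectors, Liouville's theorem for the magnetic symplectic form, and then a direct appeal to the operator-level Egorov estimate. You supply more detail than the paper does (in particular the argument that $(\omega^B)^{\wedge d}$ equals the canonical volume), but the structure and ingredients are identical.
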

\begin{proof}
	The proof rests on a simple extension of Lemma~\ref{magWQ:magnetic_weyl_calculus:magnetic_wigner_transform:lem:qm_exp_val_phase_space_average} to include $u \in L^2(\R^d)$, Liouville's theorem \cite[Propositions~5.2.2 and 5.4.2]{Marsden_Ratiu:intro_mechanics_symmetry:1999} and the semiclassical limit for observables (Theorem~\ref{asymptotics:semiclassical_limit:thm:semiclassical_limit_observables}). 
\end{proof}
%


\section{Relation between magnetic and ordinary Weyl calculus} 
\label{asymptotics:magWQminSub}
In a previous work \cite{Iftimie_Mantiou_Purice:magnetic_psido:2006}, Iftimie et al have investigated the relation between magnetic Weyl quantization and regular Weyl quantization combined with minimal substitution, the `usual' recipe to couple a quantum system to a magnetic field. However, since there were no small parameters $\eps$ and $\lambda$, we have to revisit their statements and adapt them to the present case. 

Let us define $\minsAl(X) := \xi - \lambda A(x)$ as coordinate transformation which relates momentum and kinetic momentum. With a little abuse of notation, we will also use $f \circ \minsAl(X) := f(x,\minsAl(X))$ to transform functions. In general, $\OpAel(f) \neq \Ope(f \circ \minsAl)$ since the latter is \emph{not} manifestly covariant. However, we would like to be able to compare results obtained with magnetic Weyl calculus to those obtained with usual Weyl calculus and minimal substitution. To show how the two calculi are connected, we need to make slightly stronger assumptions on the magnetic \emph{vector potential}. This may appear contrary to the spirit of the rest of the paper where it has been emphasized that restrictions should be placed on the magnetic \emph{field}. The necessity arises, because usual, non-magnetic Weyl calculus is used in this section. 
\begin{assumption}\label{asymptotics:magWQminSub:assumption:stricterBA}
	We assume that the magnetic field $B$ is such that we can find a vector potential $A$ whose components satisfy 
	\begin{align*}
		\babs{\partial_x^{a} A_l(x)} \leq C_{a} 
		, 
		&& \forall 1 \leq l \leq d, \, \sabs{a} \geq 1, \, a \in \N_0^d 
		. 
	\end{align*}
\end{assumption}
In particular, this implies that the magnetic field $B = \dd A$ satisfies Assumption~\ref{asymptotics:assumption:bounded_fields}, \ie its components are  $\mathcal{BC}^{\infty}$~functions. It is conceptually useful to introduce the line integral 
\begin{align}
	\Gamma^A(x,y) := \int_0^1 \dd s \, A \bigl (x + s(y - x) \bigr )
\end{align}
which is related to the circulation $\Gamma^A([x,y]) = (y - x) \cdot \Gamma^A(x,y)$; similarly, the scaled line integral is defined as $\GAe([x,y]) =: (y - x) \cdot \GAe(x,y)$. This allows us to rewrite the integral kernel of a magnetic pseudodifferential operator $\OpAel(f)$ for $f \in \Hoerr{m}$ as 
\begin{align}
	K_{f,\eps,\lambda}(x,y) = \int_{{\R^d}^*} \dd \eta \, e^{- i y \cdot \eta} f \bigl ( \tfrac{\eps}{2} (x+y) , \eta - \lambda \GAe(x,y) \bigr ) . \label{asymptotics:magWQminSub:magneticIntKernelRewritten}
\end{align}
If we had used minimal substitution instead, then we would have to replace the line integral $\GAe(x,y)$ by its mid-point value $A \bigl ( \tfrac{\eps}{2} (x+y) )$. 

\begin{thm}[\cite{Iftimie_Mantiou_Purice:magnetic_psido:2006}]\label{asymptotics:thm:magWQusualWQ}
	Assume the magnetic field satisfies Assumption~\ref{asymptotics:magWQminSub:assumption:stricterBA}. Then for any $f \in \Hoerr{m}$ there exists a \emph{unique} $g \in \Hoerr{m}$ such that $\OpAel(f) = \Ope(g \circ \minsAl)$. $g$ can be expressed as an asymptotic series $g \asymp \sum_{n = 0}^{\infty} \sum_{k = 1}^n \eps^n \lambda^k \, g_{n,k}$, where $g_{n,k} \in \Hoerr{m - (n+k) \rho}$ for all $n \geq 1$, and 
	\begin{align}
		\sum_{k = 1}^n \lambda^k \, &g_{n,k}(x,\xi) 
		= \\
		&
		= \eps^{-n} \sum_{\sabs{a} = n} \frac{1}{a !} \, \bigl ( i \partial_y \bigr )^{a} \Bigl ( \partial_{\xi}^{a} f \bigl ( x , \xi - \lambda \Gamma^A(x + \tfrac{\eps}{2} \, y , x - \tfrac{\eps}{2} \, y) + \lambda A(x) \bigr ) \Bigr ) \Bigl . \Bigr \rvert_{y = 0} 
		. 
		\notag 
	\end{align}
	Only terms with even powers of $\eps$ contribute, \ie $g_{n,k} = 0$ for all $n \in 2 \N_0 + 1$, $1 \leq k \leq n$. In particular we have $g_{0,0} = f$, $g_{1,0} = 0$, $g_{1,1} = 0$ and $f - g \in \Hoerr{m - 3 \rho}$. 
\end{thm}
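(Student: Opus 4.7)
The plan is to compare the integral kernels of $\OpAel(f)$ and $\Ope(g\circ\minsAl)$ directly and to derive a functional identity for $g$ in terms of $f$ via partial Fourier transformation in the momentum variable. Starting from the rewritten magnetic kernel in equation~\eqref{asymptotics:magWQminSub:magneticIntKernelRewritten} and the analogous non-magnetic formula
\begin{align*}
K^{\mathrm{ns}}_{g\circ\minsAl,\eps}(x,y) = \int_{{\R^d}^*} \dd \eta \, e^{- i (y-x) \cdot \eta}\, g\bigl(\tfrac{\eps}{2}(x+y),\, \eta - \lambda A(\tfrac{\eps}{2}(x+y))\bigr),
\end{align*}
I would shift the momentum integration variable in both kernels by the relevant magnetic phase. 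Introducing the variables $q := \tfrac{\eps}{2}(x+y)$, $w := y-x$ and the partial Fourier transform $\widehat{h}(q,w) := \int \dd\eta\, e^{-iw\cdot\eta}h(q,\eta)$, the identity $\OpAel(f) = \Ope(g\circ\minsAl)$ becomes equivalent to the pointwise functional equation
\begin{align*}
\widehat{g}(q,w) = \exp\bigl(-i\lambda\, w\cdot[\tilde A(q,\eps w) - A(q)]\bigr)\,\widehat{f}(q,w),
\end{align*}
where $\tilde A(q,\eps w) := \int_{-\nicefrac{1}{2}}^{\nicefrac{1}{2}} A(q + \eps w t)\,\dd t$ is the line integral $\GAe(x,y)$ rewritten in the variables $(q,w)$.

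Expanding $A(q + \eps w t)$ in powers of $\eps w t$ and integrating over $t$ annihilates all odd powers of $t$, so
\begin{align*}
\tilde A(q,\eps w) - A(q) = \sum_{\substack{|a|\geq 2\\ |a|\text{ even}}} \frac{(\eps w)^a}{a!\,(|a|+1)\,2^{|a|}}\,\partial^a A(q),
\end{align*}
which begins at order $\eps^2$. I would then expand the exponential $e^{-i\lambda w\cdot[\tilde A - A]}$ as a formal double series in $\lambda$ and $\eps$, so that every term becomes a monomial $w^b$ (with coefficients that are products of derivatives $\partial^\gamma A(q)$) multiplying $\widehat f(q,w)$. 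Partial inverse Fourier transformation, which converts $w^b\,\widehat f(q,w)$ into $(-i)^{|b|}\partial_\xi^b f(q,\xi)$, turns each such monomial into a momentum derivative of $f$. Collecting terms of fixed total $\eps$-weight $n$ and observing that the operator $\eps^{-n}(i\partial_y)^a\big|_{y=0}$ with $|a|=n$ simultaneously extracts the $\eps^n$-coefficient of a smooth function of $\eps y$ and implements the Fourier-dual derivatives $\partial_\xi^a$ reproduces precisely the closed expression claimed for $\sum_{k=1}^n\lambda^k g_{n,k}$; the sum over $k$ arises naturally from the number of times the exponential's Taylor series has been expanded.

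To upgrade the formal series to a genuine asymptotic expansion in Hörmander classes, I would truncate the Taylor expansion of $\tilde A - A$ at fixed order, substitute into the functional equation, and control the resulting oscillatory integrals for the remainders by the techniques developed in Appendix~\ref{appendix:asymptotics:existence_osc_int}. Under Assumption~\ref{asymptotics:magWQminSub:assumption:stricterBA}, every derivative $\partial^a A$ with $|a|\geq 1$ is bounded, so each coefficient appearing in $g_{n,k}$ lies in $\BCont^{\infty}(\R^d)$; combined with the Hörmander decay of $\partial_\xi^{|a|}f$, this yields $g_{n,k}\in\Hoerr{m-(n+k)\rho}$ together with remainders of class $\Hoerr{m-(N+1)\rho}$ when truncating at total order $N$. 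The parity statement $g_{n,k}=0$ for odd $n$ is immediate from the evenness of $\tilde A(q,\eps w)$ in $w$ (the substitution $t\mapsto -t$ in the integral defining $\tilde A$ is equivalent to $w\mapsto -w$), whence the identities $g_{0,0}=f$, $g_{1,0}=g_{1,1}=0$ and $f-g\in\Hoerr{m-3\rho}$, the last being the order of the first non-trivial correction $\eps^2\lambda\,g_{2,1}$. Uniqueness of $g$ follows from the injectivity of $\Ope$ on symbols established via the kernel representation. The main obstacle is not the Fourier-analytic manipulation, which is largely algebraic, but the uniform control of the remainders: verifying that the truncated oscillatory integrals truly define symbols of the claimed Hörmander order is precisely where the stronger hypothesis on $A$---boundedness of \emph{every} derivative of order $\geq 1$, not merely of $B = \dd A$---enters essentially, for otherwise the coefficients of the expansion (which involve derivatives of $A$ themselves) would fail to be bounded in $x$.
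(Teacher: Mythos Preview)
Your proposal is correct and follows essentially the same route as the paper: both arguments compare the two integral kernels and invert $\Ope$ via partial Fourier transformation (the paper phrases this as applying the non-magnetic Wigner transform $\mathcal{W}_\eps$ to the magnetic kernel $K_{f,\eps,\lambda}$, which is exactly your functional equation $\hat g = e^{-i\lambda w\cdot[\tilde A - A]}\hat f$ in different notation). The only organizational difference is that the paper Taylor-expands $f$ in its momentum argument and then uses the oscillatory-integral identity $\int\!\dd y\,\dd\eta\,e^{iy\cdot\eta}\eta^a h(y)=(i\partial_y)^a h\vert_{y=0}$ to produce the compact formula directly, whereas you expand the exponential phase first and then Fourier-invert $w$-monomials into $\partial_\xi$-derivatives; these two operations are equivalent, and the parity argument, the Hörmander-class count, and the appeal to the remainder estimates of Appendix~\ref{appendix:asymptotics:existence_osc_int} (respectively \cite[Proposition~6.7]{Iftimie_Mantiou_Purice:magnetic_psido:2006}) are identical in both.
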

\begin{remark}
	The reason that only even powers of $\eps$ contribute can be traced back to the symmetry of $\GAe(x,y) = + \GAe(y,x)$. Note that this is consistent with what was said in the introduction, $\GAe([x,y]) = (y - x) \cdot \GAe([x,y])$ is indeed odd. 
\end{remark}
\begin{proof}
	The proof is virtually identical to the proof of Proposition~6.7 in \cite{Iftimie_Mantiou_Purice:magnetic_psido:2006}; we will only specialize the formal part to the present case, the rigorous justification can be found in the reference. 
	
	For a symbol $f \in \Hoerr{m}$, the integral kernel of its magnetic quantization is given by equation~\eqref{asymptotics:magWQminSub:magneticIntKernelRewritten}. On the other hand, it is clear how to invert $\OpAel$ for $\lambda = 0$, $A \equiv 0$: we apply the non-magnetic Wigner transform $\mathcal{W}_{\eps} := \mathcal{W}^{A \equiv 0}_{\eps,\lambda=0}$ to the magnetic integral kernel of $\OpAel(f)$: 
	\begin{align*}
		\mathcal{W}_{\eps} K_{f,\eps,\lambda} (X) &= \int_{\R^d} \dd y \, e^{-i y \cdot \xi} \, K_{f,\eps,\lambda} \bigl ( \tfrac{x}{\eps} + \tfrac{y}{2} , \tfrac{x}{\eps} - \tfrac{y}{2} \bigr ) \\ 
		&= \int_{\R^d} \dd y \, \int_{{\R^d}^*} \dd \eta \, e^{i y \cdot \eta} \, f \bigl ( x , \eta + \xi - \lambda \Gamma^A \bigl ( x + \tfrac{\eps}{2} \, y , x - \tfrac{\eps}{2} \, y \bigr ) \bigr ) 
	\end{align*}
	Since we have a separation of scales, we can expand $\Gamma^A \bigl (x + \tfrac{\eps}{2} \, y , x - \tfrac{\eps}{2} \, y \bigr )$ in powers of $\eps$ up to some \emph{even} $N$. We will find that only \emph{even} powers of $\eps$ survive -- which immediately explains the absence of the first-order correction, 
	\begin{align*}
		\Gamma^A \bigl ( x + \tfrac{\eps}{2} \, y , &x - \tfrac{\eps}{2} \, y \bigr ) = \int_{-\nicefrac{1}{2}}^{+\nicefrac{1}{2}} \dd s \, \Biggl ( \sum_{n = 0}^N \eps^n s^n \, \sum_{\sabs{a} = n} \partial_x^{a} A(x) \, y^{a} + R_N(s,x,y) \Biggr ) \\
		&= \sum_{n = 0}^{\nicefrac{N}{2}} \eps^{2n} \left ( \frac{1}{2} \right )^{2n} \frac{1}{2n + 1} \, \sum_{\sabs{a} = 2n} \partial_x^{a} A(x) \, y^{a} + \int_{-\nicefrac{1}{2}}^{+\nicefrac{1}{2}} \dd s \, R_N(s,x,y) 
		. 
	\end{align*}
	The remainder is bounded since it is the integral of a $\mathcal{C}^{\infty}_{\mathrm{pol}}$ function over the compact set $[-\nicefrac{1}{2} , + \nicefrac{1}{2}] \times [0,1]$. In any event, The exact value will not matter if we choose $N$ large enough as we set $y = 0$ in the end. 
	
	A Taylor expansion of $f \bigl ( x , \eta + \xi - \lambda \Gamma^A \bigl ( x + \tfrac{\eps}{2} \, y , x - \tfrac{\eps}{2} \, y \bigr ) \bigr )$ around $\eta - \lambda \Gamma^A$ and some elementary integral manipulations formally yield for the $n$th term of the expansion 
	\begin{align}
		\eps^n \, \sum_{k = 1}^n \lambda^k \, &g_{n,k}(x,\xi - \lambda A(x)) 
		\label{asymptotics:eqn:expansion_f_g_n_k}
		\\
		&= \sum_{\sabs{a} = n} 
		\frac{1}{a !} \, \bigl ( i \partial_y \bigr )^{a} \Bigl ( \partial_{\xi}^{a} f \bigl ( x , \xi - \lambda \Gamma^A(x + \tfrac{\eps}{2} \, y , x - \tfrac{\eps}{2} \, y) \bigr ) \Bigr ) \Bigl . \Bigr \rvert_{y = 0} 
		\notag 
	\end{align}
	where we substitute the expansion above for $\Gamma^A$. Each derivative in $y$ will give one factor of $\eps$, \ie we will have $n$ altogether. On the other hand, we have \emph{at least} $1$ and \emph{at most} $n$ factors of $\lambda$. Only even powers in $\eps$ contribute, because the expansion of $\Gamma^A \bigl ( x + \tfrac{\eps}{2} \, y , x - \tfrac{\eps}{2} \, y \bigr )$ contains only \emph{even} powers of $y$. Furthermore, all terms in this sum are bounded functions in $x$, because derivatives of $A$ are bounded by assumption. 
	\medskip

	\noindent
	To show that $g_{n,k}$ is in symbol class $\Hoerr{m - (n+k) \rho}$, we need to have a closer look at equation~\eqref{asymptotics:eqn:expansion_f_g_n_k}: the only possibility to get $k$ factors of $\lambda$ is to derive $\partial_{\xi}^{a} f \bigl ( x , \xi - \lambda \Gamma^A(x + \tfrac{\eps}{2} \, y , x - \tfrac{\eps}{2} \, y) \bigr )$ $k$ times with respect to $y$. Each of these $y$ derivatives becomes an additional derivative of $\partial_{\xi}^{a}f$ with respect to momentum. Hence, there is a total of $\abs{a} + k = n + k$ derivatives with respect to $\xi$. 
	\medskip

	\noindent
	The rigorous justification that these integrals exist can be found in \cite[Proposition~6.7]{Iftimie_Mantiou_Purice:magnetic_psido:2006}. 
\end{proof}
\begin{remark}\label{asymptotics:magWQminSub:epsExpansion}
	If we are interested in a one-parameter expansion in $\eps$ only, then 
	\begin{align*}
		g_n(X) := \eps^{-n} \sum_{\sabs{a} = n} \frac{1}{a !} \, \bigl ( i \partial_y \bigr )^{a} \Bigl ( \partial_{\xi}^{a} f \bigl ( x , \xi - \lambda \Gamma^A(x + \tfrac{\eps}{2} \, y , x - \tfrac{\eps}{2} \, y) + \lambda A(x) \bigr ) \Bigr ) \Bigl . \Bigr \rvert_{y = 0} 
	\end{align*}
	gives the $n$th order correction in $\eps$. 
\end{remark}
\begin{prop}[\cite{Iftimie_Mantiou_Purice:magnetic_psido:2006}]\label{asymptotics:thm:usualWQmagWQ}
	The converse statement also holds: if the magnetic field satisfies Assumption~\ref{asymptotics:magWQminSub:assumption:stricterBA}, then for each $g \in \Hoerr{m}$ there exists a unique $f \in \Hoerr{m}$ such that $\Ope(g \circ \minsAl) = \OpAel(f)$, $f \asymp \sum_{n = 0}^{\infty} \sum_{k = 1}^n \eps^n \lambda^k \, f_{n,k}$, $f_{n,k} \in \Hoerr{m - (n+k) \rho}$, can be expressed as a formal power series in $\eps$ where the $n$th term is given by 
	\begin{align}
		\sum_{k = 1}^n \lambda^k \, &f_{n,k}(x,\xi) = 
		\\
		&= \eps^{-n} \sum_{\abs{a} = n} \frac{1}{a !} \, (i \partial_{y})^{a} \bigl ( \partial_{\xi}^{a} f \bigr ) \bigl ( x , \xi + \lambda \GAe(x - \nicefrac{y}{2} , x + \nicefrac{y}{2}) - \lambda A(x) \bigr ) \bigl . \bigr \rvert_{y = 0} 
		\notag 
	\end{align}
	In particular we have $f_{0,0} = g$, $f_{1,0} = 0$, $f_{1,1} = 0$ and $g - f \in \Hoerr{m - 3 \rho}$. 
\end{prop}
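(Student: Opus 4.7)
The plan is to run the argument of Theorem~\ref{asymptotics:thm:magWQusualWQ} in reverse: instead of dequantizing a magnetic kernel with the \emph{non-magnetic} Wigner transform, I will take the kernel produced by ordinary Weyl calculus with minimal substitution and dequantize it with the \emph{magnetic} Wigner transform. Concretely, let $g \in \Hoerr{m}$ and note that $g \circ \minsAl \in \Hoerr{m}$ as well (since Assumption~\ref{asymptotics:magWQminSub:assumption:stricterBA} makes $\lambda A$ a polynomially bounded perturbation with derivatives in $\BCont^{\infty}$, so that seminorms are only shifted). The integral kernel of $\Ope(g \circ \minsAl)$ is
\begin{align*}
	K_{g,\eps,\lambda}(x,y) = \int_{{\R^d}^*} \dd\eta \, e^{-i(y - x) \cdot \eta} \, g \bigl ( \tfrac{\eps}{2}(x+y) , \eta - \lambda A(\tfrac{\eps}{2}(x+y)) \bigr )
	.
\end{align*}
By Proposition~\ref{magWQ:extension:prop:extension_Op_A}, $\OpAel$ is a topological isomorphism on tempered distributions, so there is a \emph{unique} $f \in \Schwartz'(\Pspace)$ with $\OpAel(f) = \Ope(g \circ \minsAl)$; it is given by the magnetic Wigner transform of the above kernel (Lemma~\ref{asymptotics:mag_Wigner_transform:inverse_Weyl_quantization}).

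Next I would compute $f(X) = \WignerTrafoel K_{g,\eps,\lambda}(X)$ explicitly. The magnetic phase of the Wigner transform and the minimal-substitution shift combine into a single phase factor of the form $e^{+i y \cdot \lambda [\GAe(x - \nicefrac{\eps y}{2}, x + \nicefrac{\eps y}{2}) - A(x)]}$ (after the usual change of variables $y \mapsto \tfrac{y}{\eps}$). Absorbing the constant shift $\lambda A(x)$ into the $\xi$-argument of $g$ then yields the compact representation
\begin{align*}
	f(x,\xi) = \int_{\R^d} \dd y \int_{{\R^d}^*} \dd\eta \, e^{i y \cdot \eta} \, g \bigl ( x , \xi + \eta + \lambda \GAe(x - \tfrac{\eps}{2} y , x + \tfrac{\eps}{2} y) - \lambda A(x) \bigr )
	,
\end{align*}
which is precisely the formula from which the asymptotic series must be extracted.

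For the expansion, Taylor-expand $y \mapsto \GAe(x - \tfrac{\eps}{2} y , x + \tfrac{\eps}{2} y)$ in $\eps$, exactly as in the proof of Theorem~\ref{asymptotics:thm:magWQusualWQ}. By the symmetry $\GAe(x,y) = \GAe(y,x)$ only \emph{even} powers of $\eps$ in $y$ survive, so all odd-order contributions vanish; in particular $f_{1,0} = f_{1,1} = 0$, and $f_{0,0} = g$ drops out from the $y=0$ value. A Taylor expansion of $g$ in its $\xi$-slot around $\xi - \lambda A(x)$ then converts the $y$-dependence into momentum derivatives via $(iy)^a \leftrightarrow (i\partial_y)^a$ acting on the phase, which after pulling the derivatives out and setting $y=0$ yields the stated closed formula for $f_{n,k}$. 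Counting derivatives shows that producing a factor $\lambda^k$ requires exactly $k$ of the $y$-derivatives to hit $\GAe$ inside $\partial_{\xi}^{a} g$, each contributing one additional $\xi$-derivative; hence $f_{n,k}$ carries $|a| + k = n + k$ derivatives in $\xi$ and lies in $\Hoerr{m - (n+k)\rho}$. Uniqueness of the whole series, and the asymptotic statement $g - f \in \Hoerr{m - 3\rho}$, are immediate from $f_{0,0}=g$ and the vanishing of the $n=1$ terms together with the injectivity of $\OpAel$ already invoked.

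The step I expect to be the only genuine obstacle is not the formal expansion but the rigorous justification of the oscillatory integral manipulations: showing that the remainder after truncating the Taylor expansion of the phase defines an oscillatory integral of the right symbol class, and that Assumption~\ref{asymptotics:magWQminSub:assumption:stricterBA} is enough to control all arising factors of $\partial_x^{\alpha} A$. This is handled exactly by the oscillatory integral machinery of Appendix~\ref{appendix:asymptotics:existence_osc_int} (counting $q$s and $p$s as in the proof of Theorem~\ref{asymptotics:thm:asymptotic_expansion}) together with Proposition~6.7 of \cite{Iftimie_Mantiou_Purice:magnetic_psido:2006}, which already contains the unscaled version of the same estimates; the two small parameters $\eps$ and $\lambda$ only serve as bookkeeping devices and do not alter the analytic content.
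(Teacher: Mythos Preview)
Your proposal is correct and follows exactly the approach the paper sketches: apply the \emph{magnetic} Wigner transform to the integral kernel of $\Ope(g \circ \minsAl)$, Taylor expand the resulting phase, and defer the rigorous oscillatory-integral estimates to \cite{Iftimie_Mantiou_Purice:magnetic_psido:2006}. The only minor slip is that the paper points to Proposition~6.9 (not~6.7) of that reference for this converse direction, but this does not affect the substance of your argument.
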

\begin{proof}
	This proof works along the same lines: one magnetically Wigner-transforms the kernel of the operator $\Ope(f \circ \minsAl)$, we refer to  \cite[Proposition~6.9]{Iftimie_Mantiou_Purice:magnetic_psido:2006} for details. 
\end{proof}
%


\chapter{Magnetic Space-adiabatic Perturbation Theory} 
\label{magsapt}
The motivation to rigorously derive an asymptotic expansion of the magnetic Weyl product came from the side of applications: in 2002, Panati, Spohn and Teufel have presented a generic recipe to derive effective dynamics\index{effective dynamics} for systems with two inherent scales \cite{PST:sapt:2002}. Their technique, space-adiabatic perturbation theory\index{space-adiabatic perturbation theory}, has been successfully applied to numerous physical problems, \eg Born-Oppenheimer systems\index{Born-Oppenheimer systems} \cite{PST:sapt:2002,PST:Born-Oppenheimer:2007}, the non- and semirelativistic limit of the Dirac equation\index{Dirac equation} \cite{Teufel:adiabatic_perturbation_theory:2003,Lein:two_parameter_asymptotics:2008,Fuerst_Lein:scaling_limits_Dirac:2008} and piezoelectricity \cite{Lein:polarization:2005,Panati_Sparber_Teufel:polarization:2006}. 

The derivation uses usual, non-magnetic Weyl calculus to derive an effective hamiltonian order-by-order in an adiabatic parameter $\eps$ via recursion relations and the asymptotic expansion of the Weyl product. This effective hamiltonian\index{effective hamiltonian} then generates the effective dynamics on some smaller reference Hilbert space. In many cases, the electromagnetic field \emph{is} the perturbation and usual Weyl calculus is not well-adapted to this situation since its formalism is oblivious to the magnetic field. In order to make derivations into theorems, additional technical conditions have to be placed on $B$: one often assumes that the magnetic field is such that there exists a vector potential $A$ whose components are of class $\BCont^{\infty}(\R^d)$. If one tries a little harder, one can weaken this assumption: $B$ must be such that there exists a smooth vector potential whose derivatives are all bounded, \ie for all $a \in \N_0^d$, $\abs{a} \geq 1$, there exists $C_a > 0$ such that $\abs{\partial_x^a A_j(x)} \leq C_a$ holds for all $x \in \R^d$. This allows the vector potential to grow linearly and one can cover the case of constant magnetic field. These conditions imposed on $A$ are unnecessary and -- from the point of view of magnetic Weyl calculus -- unnatural. 

Replacing usual Weyl calculus with its magnetic variant solves this problem. The pseudodifferential parts of the proofs in \cite{PST:sapt:2002} hinge on the following facts: 
\begin{enumerate}[(i)]
	\item The ability to quantize Hörmander-class symbols \cite{Mantoiu_Purice:magnetic_Weyl_calculus:2004}. 
	\item A magnetic version of the Caldéron-Vaillancourt theorem \cite{Iftimie_Mantiou_Purice:magnetic_psido:2006}. 
	\item An asymptotic expansion of the magnetic Weyl product for Hörmander-class symbols (Chapter~\ref{asymptotics} and \cite{Lein:two_parameter_asymptotics:2008}). 
\end{enumerate}
The application of magnetic pseudodifferential techniques to space-adiabatic perturbation theory has been worked out in detail for two particular cases: the non- and semirelativistic limit of the Dirac equation was the first example. Expanding upon Fürst's diploma thesis \cite{Fuerst:Dirac:2008} where traditional Weyl calculus was still used, Fürst and I have shown how one can derive the non- and semirelativistic limit of the Dirac equation from first principles \cite{Fuerst_Lein:scaling_limits_Dirac:2008}. A choice of scaling in the momentum operator determines which limit one obtains in the end. 

The second application which I want to present here in detail is a joint work with Giuseppe De Nittis \cite{DeNittis_Lein:Bloch_electron:2009}. It is concerned with a standard model of a crystalline solid: if one neglects electron-electron interactions, one can start with a simple single-particle model where an electron moves in the electric field that is generated by the ionic cores and all other electrons. This interaction is subsumed by a lattice-periodic potential. One is then interested in the currents induced by external, macroscopic electromagnetic fields.

\section{The model} 
\label{magsapt:intro}
The model hamiltonian we are considering is given by 
\begin{align}
	\hat{H} \equiv \hat{H}(\eps,\lambda) := \tfrac{1}{2} \bigl ( -i \nabla_x - \lambda A(\eps \hat{x}) \bigr )^2 + V_\Gamma(\hat{x}) + \phi(\eps \hat{x}) 
	\label{magsapt:intro:eqn:full_hamiltonian}
\end{align}
and acts on $L^2(\R^d_x)$. Just as in Chapter~\ref{asymptotics}, $\eps$ is the semiclassical parameter and $\lambda$ quantifies the coupling to the magnetic field. We will always make the following assumptions on the external fields: 
\begin{assumption}[Electromagnetic fields]\label{magsapt:intro:assumption:em_fields}
	We assume that the components of the external (macroscopic) magnetic field $B$ and the electric potential $\phi$ are $\BCont^{\infty}(\R^d)$ functions, \ie smooth, bounded functions with bounded derivatives to any order. 
\end{assumption}

\begin{remark}
	All vector potentials $A$ associated to magnetic fields $B = \dd A$ with components in $\BCont^{\infty}(\R^d)$ are always assumed to have components in $\Cont^{\infty}_{\mathrm{pol}}(\R^d)$. This is always possible as one could pick the transversal gauge, 
	\begin{align*}
		A_k(x) := - \sum_{j = 1}^n \int_0^1 \dd s \, B_{kj}(s x) \, s x_j 
		. 
	\end{align*}
	This is to be contrasted with the original work of Panati, Spohn and Teufel \cite{PST:effective_dynamics_Bloch:2003} where the \emph{vector potential} had to have components in $\BCont^{\infty}(\R^d)$. 
\end{remark}
The potential generated by the nuclei and all other electrons $V_{\Gamma}$ is periodic with respect to the crystal lattice \cite{Cances_Deleurence_Lewin:modelling_local_defects_crystals:2008,Cances_Deleurence_Lewin:non-perurbative_defects_crystal:2008}
\begin{align}
	\Gamma := \Bigl \{ \gamma \in \R^d \; \vert \; \mbox{$\gamma = \sum_{j = 1}^d \alpha_j e_j$} , \; \alpha_j \in \Z \Bigr \}
	\label{magsapt:intro:eqn:lattice}
\end{align}
and assumed to be infinitesimally bounded with respect to $- \tfrac{1}{2} \Delta_x$. By Theorem~XIII.96 in \cite{Reed_Simon:M_cap_Phi_4:1978}, this is ensured by the following 
\begin{assumption}[Periodic potential]\label{magsapt:intro:assumption:V_Gamma}
	We assume that $V_{\Gamma}$ is $\Gamma$-periodic, \ie $V_{\Gamma}(\cdot + \gamma) = V_{\Gamma}$ for all $\gamma \in \Gamma$, and $\int_{\WS} \dd y \, \abs{V_{\Gamma}(y)} < \infty$. 
\end{assumption}
Under these assumptions, $\hat{H}$ defines an essentially selfadjoint operator on $\Cont^{\infty}_0(\R^d_x) \subset L^2(\R^d_x)$. 

The dual lattice $\Gamma^*$ is spanned by the \emph{dual basis} $\{e_1^\ast,\ldots,e_d^\ast \}$, \ie the set of vectors which satisfy $e_j \cdot e_k^{\ast} = 2 \pi \delta_{kj}$. The assumption on $V_{\Gamma}$ ensures the unperturbed periodic hamiltonian 
\begin{align}
	\hat{H}_{\mathrm{per}} = \tfrac{1}{2} (- i \nabla_x)^2 + V_{\Gamma}
	\label{magsapt:intro:eqn:per_ham}
\end{align}
defines a selfadjoint operator on the second Sobolev space $H^2(\R^d)$ and gives rise to Bloch bands in the usual manner (see Chapter~\ref{magsapt:rewriting:Zak}): the Bloch-Floquet-Zak transform (\BZak~transform) fibers $\hat{H}_{\mathrm{per}}$ into 
\begin{align*}
	\Zak \hat{H}_{\mathrm{per}} \Zak^{-1} =: \hat{H}_{\mathrm{per}}^{\Zak} = \int_{\BZ}^{\oplus} \dd k \, \Hper^{\Zak}(k) := \int_{\BZ}^{\oplus} \dd k \, \Bigl ( \tfrac{1}{2} \bigl ( - i \nabla_y + k \bigr )^2 + V_{\Gamma}(y) \Bigr )
\end{align*}
where we have introduced the Brillouin zone\index{Brillouin zone} 
\begin{align}
	\BZ := \Bigl \{ k \in \R^d \; \vert \; \mbox{$k = \sum_{j = 1}^d \alpha_j e^\ast_j$} , \; \alpha_j \in [-\nicefrac{1}{2},+\nicefrac{1}{2}] \Bigr \}
	\label{magsapt:intro:eqn:BZ} 
\end{align}
as fundamental cell in reciprocal space.\index{Brillouin zone} For each $k \in \BZ$, the eigenvalue equation 
\begin{align*}
	\Hper(k) \varphi_n(k) = E_n(k) \, \varphi_n(k) 
	, 
	&& 
	\varphi_n(k) \in L^2(\T^d_y) := L^2(\R^d / \Gamma)
	, 
\end{align*}
is solved by the Bloch function associated to the $n$th band. Assume for simplicity we are given a band $\Eb$ which does not intersect or merge with other bands (\ie there is a \emph{local} gap in the sense of Assumption~\ref{magsapt:mag_sapt:mag_wc:defn:gap_condition}). Then common lore is that transitions to other bands are exponentially suppressed and the effective dynamics for an initial state localized in the eigenspace associated to $\Eb$ is generated by $\Eb(- i \nabla_x)$ \cite{Grosso_Parravicini:solid_state_physics:2003,Ashcroft_Mermin:solid_state_physics:2001}. 
\medskip

\noindent
If we switch on a constant magnetic field, no matter how weak, the Bloch bands are gone as there is no \BZak decomposition with respect to $\Gamma$ for hamiltonian~\eqref{magsapt:intro:eqn:full_hamiltonian}. As a matter of fact, the spectrum of $\hat{H}$ is a Cantor set \cite{Gruber:noncommutative_Bloch:2001} if the flux through the Wigner-Seitz cell\index{Wigner-Seitz cell} 
\begin{align}
	\WS := \Bigl \{ y \in \R^d \; \vert \; \mbox{$y = \sum_{j = 1}^d \alpha_j e_j$} , \; \alpha_j \in [-\nicefrac{1}{2},+\nicefrac{1}{2}] \Bigr \}
	\label{magsapt:intro:eqn:WS} 
\end{align}
is irrational. Even if the flux through the unit cell is rational, we recover only \emph{magnetic} Bloch bands that are associated to a larger lattice $\Gamma' \supset \Gamma$. A natural question is if it is at all possible to see signatures of \emph{nonmagnetic} Bloch bands if the applied magnetic field is weak?

Our main result, Theorem~\ref{magsapt:modulation_field:effective_dynamics:thm:adiabatic_decoupling}, answers this question in the positive in the following sense: if the electromagnetic field varies on the macroscopic level, \ie $\eps \ll 1$, then to leading order the dynamics is still generated by $\Eb \bigl ( -i \nabla_x - \lambda A(\eps \hat{x}) \bigr ) + \phi \bigl (\eps \hat{x} \bigr )$ (defined as the magnetic Weyl quantization of $\Eb(k) + \Phi(r)$ via equation~\eqref{magsapt:rewrite:magnetic_weyl_calculus:eqn:mag_wq_torus}). Hence, the dynamics are dominated by the Bloch bands even in the presence of a weak magnetic field. Furthermore, we can derive corrections to any order in $\eps$ in terms of Bloch bands, Bloch functions, the magnetic \emph{field} and the electric \emph{potential}. We do not need to choose a `nice' vector potential for $B$, in fact, in all of the calculations only the magnetic field $B$ enters. 

Let us now explain why we have chosen to include three expansion parameters in the initial hamiltonian $\hat{H} \equiv \hat{H}(\eps,\lambda)$. Our goal is to model an experimental setup that applies an external, \ie macroscopic electric and magnetic field. The parameter $\eps \ll 1$ relates the microscopic scale as given by the crystal lattice to the scale on which the external fields vary. \emph{We always assume $\eps$ to be small.} It is quite easy to fathom an apparaturs where electric and magnetic field can be regulated separately by, say, two dials. We are interested in the case where we can \emph{selectively switch off the magnetic field}. Assume, we can regulate the strength of the magnetic field by varying the relative amplitude $\lambda \leq 1 $, 
\begin{align*}
	B^{\eps,\lambda}(x) :=& \eps \lambda B(\eps x) 
	\\
	\El^{\eps}(x) :=& \eps \El(\eps x) 
	. 
\end{align*}
Then we can take the limit $B^{\eps,\lambda} \rightarrow 0$ without changing the external electric field $\El^{\eps}$. 


\section{Rewriting the problem} 
\label{magsapt:rewriting}
As a preliminary step, we will rewrite the problem: first, we extract the Bloch band picture via the \BZak transform and then we reinterpret the \BZak-transformed hamiltonian as magnetic quantization of an operator-valued symbol. We insist we only rephrase the problem, \emph{no additional assumptions} are introduced.

\subsection{The Bloch-Floquet-Zak transform} 
\label{magsapt:rewriting:Zak}
Usually, one would exploit lattice periodicity by going to the Fourier basis: each $\Psi \in \Schwartz(\R^d_x) \subset L^2(\R^d_x)$ is mapped onto 
\begin{align*}
	( \mathcal{F} \Psi )(k,y) := \sum_{\gamma \in \Gamma} e^{- i k \cdot y} \, \Psi(y + \gamma) 
\end{align*}
and the corresponding representation is usually called Bloch-Floquet representation\index{Bloch-Floquet transform}. It is easily checked that 
\begin{align*}
	( \mathcal{F} \Psi )(k - \gamma^*,y) &= ( \mathcal{F} \Psi )(k,y) 
	&& \forall \gamma^* \in \Gamma^* 
	\\
	( \mathcal{F} \Psi )(k,y - \gamma) &= e^{- i k \cdot \gamma} ( \mathcal{F} \Psi )(k,y) 
	&& \forall \gamma \in \Gamma 
\end{align*}
holds and $\mathcal{F} \Psi$ can be written as 
\begin{align*}
	( \mathcal{F} \Psi )(k,y) = e^{i k \cdot y} \, u(k,y)
\end{align*}
where $u(k,y)$ is $\Gamma$-periodic in $y$ and $\Gamma^*$-periodic up to a phase in $k$. For technical reasons, we prefer to use a variant of the Bloch-Floquet transform introduced by Zak \cite{Zak:dynamics_Bloch_electrons:1968}\index{Zak transform} which maps $\Psi \in \Schwartz(\R^d_x)$ onto u, 
\begin{align}
	( \Zak \Psi )(k,y) := \sum_{\gamma \in \Gamma} e^{- i k \cdot (y + \gamma)} \, \Psi(y + \gamma) 
	\label{magsapt:rewrite:Zak:eqn:Zak_transform} 
	. 
\end{align}
The \BZak transform has the following periodicity properties: 
\begin{align}
	( \Zak \Psi ) (k - \gamma^*,y) &= e^{+ i \gamma^* \cdot y} \, ( \Zak \Psi ) (k,y) 
	=: \tau(\gamma^*) \, ( \Zak \Psi ) (k,y) 
	&& \forall \gamma^* \in \Gamma^* 
	\\ 
	( \Zak \Psi ) (k,y - \gamma) &= ( \Zak \Psi ) (k,y) 
	&& \forall \gamma \in \Gamma
	\notag 
\end{align}
$\tau$ is a unitary representation of the group of dual lattice translations $\Gamma^*$. By density, $\Zak$ immediately extends to $L^2(\R^d_x)$ and it maps it unitarily onto 
\begin{align}
	\Htau := \Bigl \{ \psi \in L^2_{\mathrm{loc}} \bigl ({\R^d_k},L^2(\T^d_y) \bigr ) \; \big \vert \;  \psi (k-\gamma^\ast) = \tau(\gamma^\ast) \, \psi(k) \mbox{ a.~e. } \forall \gamma^* \in \Gamma^* \Bigr \} 
	\label{magsapt:rewriting:Zak:eqn:defnition_Htau} 
	, 
\end{align}
which is equipped with the scalar product 
\begin{align*}
	\bscpro{\varphi}{\psi}_{\tau} := \int_{\BZ} \dd k \, \bscpro{\varphi(k)}{\psi(k)}_{L^2(\T^d_y)} 
	. 
\end{align*}
It is obvious from the definition that the left-hand side does not depend on the choice of the unit cell $\BZ$ in reciprocal space. The \BZak representation of momentum and position operator on $L^2(\R^d_x)$, equipped with the obvious domains, can be computed directly, 
\begin{align}
	\Zak (- i \nabla_x) \Zak^{-1} &= \id_{L^2(\BZ)} \otimes (- i \nabla_y) + \hat{k} \otimes \id_{L^2(\T^d_y)} 
	\equiv -i \nabla_y + k 
	\label{magsapt:rewrite:Zak:eqn:Zak_transformed_building_blocks}
	\\
	\Zak \hat{x} \Zak^{-1} &= i \nabla_k^{\tau} 
	, 
	\notag 
\end{align}
where we have used the identification $\Htau \cong L^2(\BZ) \otimes L^2(\T^d_y)$. The superscript $\tau$ on $i \nabla_k^{\tau}$ indicates that the operator's domain $\Htau \cap H^1_{\mathrm{loc}} \bigl ( \R^d , L^2(\T^d_y) \bigr )$ consists of $\tau$-equivariant functions (see equation~\eqref{magsapt:rewriting:Zak:eqn:defnition_Htau}). The \BZak transformed domain for momentum $-i \nabla_y + k$ is $L^2(\BZ) \otimes H^1(\T^d_y)$. Since the phase factor $\tau$ depends on $y$, the \BZak transform of $\hat{x}$ does not factor --- unless we consider $\Gamma$-periodic functions, then we have 
\begin{align*}
	\Zak V_{\Gamma}(\hat{x}) \Zak^{-1} = \id_{L^2(\BZ)} \otimes V_{\Gamma}(\hat{y}) \equiv V_{\Gamma}(\hat{y}) 
	. 
\end{align*}
Equations~\eqref{magsapt:rewrite:Zak:eqn:Zak_transformed_building_blocks} immediately give us the \BZak transform of $\hat{H}$, namely 
\begin{align}
	\hat{H}^{\Zak} := \Zak \hat{H} \Zak^{-1} = \tfrac{1}{2} \bigl ( -i \nabla_y + k - \lambda A(i \eps \nabla_k^{\tau}) \bigr )^2 + V_{\Gamma}(\hat{y}) + \phi(i \eps \nabla_k^{\tau}) 
	\label{magsapt:rewrite:Zak:eqn:Zak_transformed_hamiltonian} 
	, 
\end{align}
which defines an essentially selfadjoint operator on $\Zak \Cont^{\infty}_0(\R^d_x)$. If the external electromagnetic field vanishes, the hamiltonian 
\begin{align}
	\hat{H}_{\mathrm{per}}^{\Zak} := \Zak \hat{H}_{\mathrm{per}} \Zak^{-1} = \int_{\BZ}^{\oplus} \dd k \, \Hper^{\Zak}(k) 
\end{align}
fibers into a family of operators on $L^2(\T^d_y)$ indexed by crystal momentum $k \in \BZ$. $\tau$-equivariance relates $\Hper^{\Zak}(k - \gamma^*)$ and $\Hper^{\Zak}(k)$ via 
\begin{align*}
	\Hper^{\Zak}(k - \gamma^*) = \tau(\gamma^*) \, \Hper^{\Zak}(k) \, \tau(\gamma^*)^{-1} 
	&& \forall \gamma^* \in \Gamma^* 
\end{align*}
which, among other things, ensures that Bloch bands $\{ E_n \}_{n \in \N}$, \ie the solutions to the eigenvalue equation 
\begin{align*}
	\Hper^{\Zak}(k) \varphi_n(k) = E_n(k) \, \varphi_n(k) 
	, 
	&& \varphi_n(k) \in L^2(\T^d_y)
	, 
\end{align*}
are $\Gamma^*$-periodic functions. Standard arguments show that $\Hper^{\Zak}(k)$ has purely discrete spectrum for all $k \in \BZ$ and if Bloch bands are ordered by magnitude, they are smooth functions away from band crossings. Similarly, the Bloch functions\index{Bloch function} $k \mapsto \varphi_n(k)$ are smooth if the associated energy band $E_n$ does not intersect with or touch others \cite{Reed_Simon:M_cap_Phi_4:1978}. 

The next subsection shows that the effect of introducing an external electromagnetic field can be interpreted as ``replacing'' the direct integral with the \emph{magnetic} quantization of $\Hper^{\Zak} + \phi$. 


\subsection{Equivariant magnetic Weyl calculus} 
\label{magsapt:rewriting:magnetic_weyl_calculus}
For technical reasons, we must adapt magnetic Weyl calculus to deal with equivariant, unbounded operator-valued functions. We follow the general strategy outlined in \cite{PST:effective_dynamics_Bloch:2003}, but we need to be more careful as the roles of $\Qe$ and $\PA$ are not interchangeable if $B \neq 0$. We would like to reuse results for Weyl calculus on $T^* \R^d$ -- in particular, the two-parameter expansion of the product (equation~\eqref{asymptotics:expansions:eqn:Fourier_form_magnetic_Weyl_product}). Consider the building block kinetic operators macroscopic position $\Reps$ and crystal momentum $\KA$, 
\begin{align}
	\Reps &= i \eps \nabla_k \otimes \id_{L^2(\T^d_y)} \equiv i \eps \nabla_k 
	\label{magsapt:rewriting:magnetic_weyl_calculus:eqn:building_blocks_Td} 
	\\
	\KA &= \hat{k} - \lambda A(\Reps) 
	\notag 
	, 
\end{align}
in momentum representation: they define selfadjoint operators whose domains are dense in $L^2_{\tau'} \bigl ( \R^d_k , L^2(\T^d_y) \bigr )$ where $\tau'$ stands for either $\tau : \gamma^* \mapsto e^{- i \gamma^* \cdot \hat{y}}$ or $1 : \gamma^* \mapsto 1$. The elements of this Hilbert space can be considered as vector-valued tempered distributions with special properties as $L^2_{\tau'} \bigl ( \R^d_k , L^2(\T^d_y) \bigr )$ can be continuously embedded into $\Schwartz' \bigl ( \R^d_k , L^2(\T^d_y) \bigr )$. For simplicity, let us ignore questions of domains and assume that $h \in \BCont^{\infty} \bigl ( T^* \R^d_x , \mathcal{B} \bigl ( L^2(\T^d_y) \bigr ) \bigr )$ is a \emph{bounded operator}-valued function. Then its magnetic Weyl quantization\index{Weyl quantization!equivariant magnetic} 
\begin{align}
	\Opk^A (h) := \frac{1}{(2\pi)^d} \int \dd r \int \dd k \, (\Fs h)(r,k) \, \WeylSys^A(r,k) 
	\label{magsapt:rewrite:magnetic_weyl_calculus:eqn:mag_wq_torus}
\end{align}
defines a continuous operator from $\Schwartz \bigl ( \R^d_k , L^2(\T^d_y) \bigr )$ to itself which has a continuous extension as an operator from $\Schwartz' \bigl ( \R^d_k , L^2(\T^d_y) \bigr )$ to itself \cite[Proposition~21]{Mantoiu_Purice:magnetic_Weyl_calculus:2004}. Here, the corresponding Weyl system 
\begin{align*}
	\WeylSys^A(r,k) := e^{- i \sigma((r,k),(\Reps,\KA))} \otimes \id_{L^2(\T^d_y)} \equiv e^{- i (k \cdot \Reps - r \cdot \KA)}
\end{align*}
is defined in terms of the building block operators $\KA$ and $\Reps$ and acts trivially on $L^2(\T^d_y)$. The Weyl product $f \magW g$ of two suitable distributions associated to the quantization $\Opk^A$ is also given by a suitable reinterpretation of equation~\eqref{asymptotics:expansions:eqn:Fourier_form_magnetic_Weyl_product} as $f$ and $g$ are now operator-valued functions. Furthermore, we can also develop $f \magW g$ asymptotically in $\eps$ and $\lambda$, see~Theorem~\ref{asymptotics:thm:asymptotic_expansion}. To see this, we remark that the difference between the products associated to $\Opx^A$ and $\Opk^A$ is two-fold: first of all, $\Opx^A$ is a position representation while $\Opk^A$ is a momentum representation. Let $\Opk^{\prime \, A}$ be the magnetic Weyl quantization defined with respect to $\Reps' := \mathfrak{F}^{-1} \Reps \mathfrak{F} = \eps \hat{r}$ and $\mathsf{K}^{\prime \, A} := \mathfrak{F}^{-1} \KA \mathfrak{F} = - i \nabla_r - \lambda A(\eps \hat{r})$, \ie the position representation. By Proposition~\ref{asymptotics:scalings:prop:equivalence_Weyl_calculi}, these two are unitarily equivalent representations of the same algebra of observables and thus have the same Weyl product. 

Secondly, the functions which are to be quantized by $\Opx^A$ and $\Opk^A$ take values in $\C$ and the bounded operators on $L^2(\T^d_y)$, respectively. The interested reader may check the proofs regarding the various properties of the product $\magW$ in \cite{Mantoiu_Purice:magnetic_Weyl_calculus:2004,Iftimie_Mantiou_Purice:magnetic_psido:2006} and Chapter~\ref{asymptotics} can be generalized to accommodate operator-valued functions, including Hörmander symbols.\index{Hörmander symbols!operator-valued} 
\begin{defn}[Hörmander symbols $\Hoerr{m}
 \bigl ( \mathcal{B}(\Hil_1,\Hil_2) \bigr )$\index{Hörmander symbols!operator-valued}]
	Let $m \in \R$, $\rho \in [0,1]$ and $\Hil_1$, $\Hil_2$ be separable Hilbert spaces. Then a function $f$ is said to be in $\Hoerr{m}
 \bigl ( \mathcal{B}(\Hil_1,\Hil_2) \bigr )$ if and only if for all $a , \alpha \in \N_0^d$ the seminorms 
	\begin{align*}
		\bnorm{f}_{m , a \alpha} := \sup_{(x,\xi) \in \PSpace} \sqrt{1 + \xi^2}^{-(m - \sabs{\alpha} \rho)} \bnorm{\partial_x^a \partial_{\xi}^{\alpha} f(x,\xi)}_{\mathcal{B}(\Hil_1,\Hil_2)} < \infty 
	\end{align*}
	are finite where $\snorm{\cdot}_{\mathcal{B}(\Hil_1,\Hil_2)}$ denotes the operator norm on $\mathcal{B}(\Hil_1,\Hil_2)$. In case $\rho = 1$, one also writes $\Hoer{m} := \Hoermr{m}{1}$
\end{defn}
Hörmander symbols which have an expansion in $\eps$ that is uniform in the small parameter are called semiclassical. 
\begin{defn}[Semiclassical symbols $\SemiHoermr{m}{\rho} \bigl ( \mathcal{B}(\Hil_1,\Hil_2) \bigr )$]\label{magsapt:rewriting:magnetic_weyl_calculus:defn:semiclassical_symbol}
	A map $f : [0,\eps_0) \longrightarrow \Hoerr{m}
$, $\eps \mapsto f_{\eps}$ is called a semiclassical symbol of order $m \in \R$ and weight $\rho \in [0,1]$\index{semiclassical symbol!operator-valued}, that is $f \in \SemiHoermr{m}{\rho}$, if there exists a sequence $\{ f_n \}_{n \in \N_0}$, $f_n \in \Hoerr{m - n \rho}{\rho}$, such that for all $N \in \N_0$, one has 
	\begin{align*}
		\eps^{-N} \left ( f_{\eps} - \sum_{n = 0}^{N-1} \eps^n \, f_n \right ) \in \Hoerr{m - N \rho}{\rho} 
	\end{align*}
	uniformly in $\eps$ in the sense that for any $N \in \N_0$ and $a , \alpha \in \N_0^d$, there exist constants $C_{N a \alpha} > 0$ such that 
	\begin{align*}
		\norm{f_{\eps} - \sum_{n = 0}^{N-1} \eps^n \, f_n}_{m - N \rho , a \alpha} \leq C_{N a \alpha} \, \eps^N 
	\end{align*}
	holds for all $\eps \in [0,\eps_0)$. If $\rho = 1$, then one abbreviates $\SemiHoermr{m}{1}$ with $\SemiHoer{m}$. 
\end{defn}
Lastly, in applications, we $\tau$-equivariant symbols are of particular importance. 
\begin{defn}[$\tau$-equivariant symbols $\SemiTau \bigl ( \mathcal{B}(\Hil_1 , \Hil_2) \bigr )$\index{Hörmander symbol!$\tau$-equivariant}]\label{magsapt:rewriting:magnetic_weyl_calculus:defn:tau_equivariant_symbol}
	Assume $\tau_j : \Gamma^* \longrightarrow \mathcal{U}(\Hil_j)$, $j = 1,2$, are unitary $*$-representations of the group $\Gamma^*$. Then $f \in \SemiHoermr{0}{0}$ is $\tau$-equivariant, \ie an element of $\SemiTau \bigl ( \mathcal{B}(\Hil_1,\Hil_2) \bigr )$, if and only if 
	\begin{align*}
		f(k - \gamma^* , r) = \tau_2(\gamma^*) \, f(k,r) \, \tau_1(\gamma^*)^{-1} 
	\end{align*}
	holds for all $k \in \R^d$, $r \in \R^d$ and $\gamma^* \in \Gamma^*$. 
	
\end{defn}
\medskip

\noindent
Now the reader is in a position to translate the results derived in Appendix~B of \cite{Teufel:adiabatic_perturbation_theory:2003} to the context of magnetic Weyl calculus. These results are essential for the rigorous derivation of effective dynamics. We caution that we do not need to have position and momentum switch roles (see Warning in the reference), thus simplifying some of the arguments. 



\section{The magnetic Bloch electron as a space-adiabatic~problem} 
\label{magsapt:mag_sapt}
Our tool of choice to derive effective dynamics is space-adiabatic perturbation theory \linebreak\cite{PST:sapt:2002,PST:effective_dynamics_Bloch:2003,Teufel:adiabatic_perturbation_theory:2003} which uses pseudodifferential techniques to derive perturbation expansions order-by-order in a systematic fashion. We adapt their results by replacing ordinary Weyl calculus with \emph{magnetic} Weyl calculus. Adiabatic decoupling only hinges on $\eps \ll 1$ and does not rely on $\lambda$ to be small.

\subsection{Slow variation: the adiabatic point of view} 
\label{magsapt:mag_sapt:slow_var}
The insight of \cite{PST:effective_dynamics_Bloch:2003} was that the slow variation of the external electromagnetic field (quantified by $\eps \ll 1$) leads to a decoupling into slow (macroscopic) and fast (microscopic) degrees of freedom. This is characteristic of adiabatic systems\index{characteristics of adiabatic systems} whose three main features are\index{adiabatic trinity} 
\begin{enumerate}[(i)]
	\item A distinction between \emph{slow}\index{degrees of freedom!slow} and \emph{fast degrees of freedom}\index{degrees of freedom!fast}: the original (physical) Hilbert space $\Hil = L^2(\R^d_x)$ is decomposed unitarily into a slow and a fast component, $\Hslow \otimes \Hfast := L^2(\BZ) \otimes L^2(\T^d_y)$, in which the unperturbed hamiltonian is block diagonal (see diagram~\eqref{magsapt:mag_sapt:diagram:unperturbed}). The fast dynamics happen within a unit cell $\WS$ whereas the slow dynamics describe the motion across unit cells.
	\item A \emph{small, dimensionless parameter $\eps$} that quantifies the separation of spatial scales. In our situation, $\eps \ll 1$ relates the variation of the external electromagnetic field to the microscopic scale as given by the lattice constant. In addition, we have an additional parameter $\lambda$ which quantifies the coupling to the magnetic field. However, only the semiclassical parameter $\eps$ is crucial for adiabatic decoupling and $\lambda$ may even be set equal to $1$. 
	\item A \emph{relevant part of the spectrum}, \ie a subset of the spectrum which is separated from the remainder by a gap. We are interested in the dynamics associated to a family of Bloch bands $\{ E_n \}_{n \in \mathcal{I}}$ that does not intersect or merge with bands from the remainder of the spectrum.
\end{enumerate}
\begin{assumption}[Gap condition\index{gap condition}]\label{magsapt:mag_sapt:mag_wc:defn:gap_condition}
The spectrum of $\hat{H}_{\mathrm{per}}^{\Zak}$ satisfies the \emph{gap condition}, \ie \linebreak there exists a family of Bloch bands $\{ E_n \}_{n \in \mathcal{I}}$, $\mathcal{I} = [I_- , I_+] \cap \N_0$ such that\index{gap condition} 
 \begin{align*}
     \inf_{k \in \BZ} \mathrm{dist} \Bigl ( \bigcup_{n \in \mathcal{I}} \{ E_n(k) \} , \bigcup_{j \not\in \mathcal{I}} \{ E_j(k) \} \Bigr ) =: C_g > 0
     .
 \end{align*}
\end{assumption}
The spectral gap ensures that transitions from and to the relevant part of the spectrum are exponentially suppressed. Band crossings within the relevant part of the spectrum are admissible, though.
\medskip

\noindent
In the original publication, an additional assumption was made on the existence of a \emph{smooth}, $\tau$-equivariant basis, a condition that is equivalent to the triviality of a certain $U(N)$ bundle over the torus $\T^d_k$ where $N := \abs{\mathcal{I}}$ is the number of bands including multiplicity. At least for the physically relevant cases, \ie $d \leq 3$, Panati has shown that this is always possible for nonmagnetic Bloch bands \cite{Panati:triviality_Bloch_bundle:2006}. For $d \geq 4$, our results still hold if we add 
\begin{assumption}[Smooth frame ($d \geq 4$)]\label{magsapt:mag_sapt:mag_wc:assumption:smooth_frame}
	If $d \geq 4$, we assume there exists an ortho\-normal basis (called \emph{smooth frame}) $\{ \varphi_j(\cdot) \}_{j=1, \ldots , \abs{\mathcal{I}}}$ of whose elements are smooth and $\tau$-equi\-variant with respect to $k$, \ie $\varphi_j(\cdot-\gamma^{\ast}) = \tau(\gamma^{\ast}) \varphi_j(\cdot)$ for all $\gamma^{\ast} \in \Gamma^{\ast}$ and for all  $j \in \{ 1 , \ldots , \abs{\mathcal{I}} \}$. 
\end{assumption}
\medskip

\noindent
Let us consider the unperturbed case, \ie in the absence of an external electromagnetic field. Then the dynamics on $\Htau$ is generated by $\hat{H}_{\mathrm{per}}^{\Zak} = \int_{\BZ}^{\oplus} \dd k \, \Hper^{\Zak}(k)$. Each fiber hamiltonian $\Hper^{\Zak}(k)$ is an operator on the fast Hilbert space $\Hfast = L^2(\T^d_y)$. Then $\hat{\pi}_0 = \int_{\BZ}^{\oplus} \dd k \, \pi_0(k)$ is the projection onto the relevant part of the spectrum, \ie
\begin{align*}
 \pi_0(k) := \sum_{n \in \mathcal{I}} \sopro{\varphi_n(k)}{\varphi_n(k)}
 .
\end{align*}
Even though the $\varphi_n(k)$ may not be continuous at eigenvalue crossings, the projection $k \mapsto \pi_0(k)$ is due to the spectral gap. Associated to the relevant band is a (non-unique) unitary $\hat{u}_0 = \int_{\BZ}^{\oplus} \dd k \, u_0(k)$ which ``straightens'' $\Htau$ into $L^2(\BZ_k) \otimes L^2(\T^d_y)$: for each $k \in \BZ$, we define
\begin{align*}
 u_0(k) := \sum_{n \in \mathcal{I}} \sopro{\chi_n}{\varphi_n(k)} + u_0^{\perp}(k)
\end{align*}
where $\chi_n \in L^2(\T^d_y)$, $n \in \Index$, are \emph{fixed} vectors \emph{independent of $k$} and $u_0^{\perp}(k)$ (also non-unique) acts on the complement of $\mathrm{ran} \, \pi_0(k)$ and is such that $\hat{u}_0$ is a proper unitary. Even though this means $u_0$ is not unique, the specific choices of the $\{ \chi_n \}_{n \in \Index}$ and $u_0^{\perp}$ will not enter the derivation. Then we can put all parts of the puzzle into a diagram:
\begin{align}
	\bfig
		\node L2R(-900,0)[L^2(\R^d_x)]
		\node piL2R(-900,-600)[\Zak^{-1} \hat{\pi}_0 \Zak L^2(\R^d_x)]
		\node Htau(0,0)[\Htau]
		\node piHtau(0,-600)[\hat{\pi}_0 \Htau]
		\node HslowHfast(900,0)[L^2(\BZ) \otimes L^2(\T^d_y)]
		\node Href(900,-600)[L^2(\BZ) \otimes \C^N]
		\arrow[L2R`Htau;\Zak]
		\arrow/-->/[L2R`piL2R;\Zak^{-1} \hat{\pi}_0 \Zak]
		\arrow/-->/[piL2R`piHtau;]
		\arrow[Htau`piHtau;\hat{\pi}_0]
		\arrow[Htau`HslowHfast;\hat{u}_0]
		\arrow[HslowHfast`Href;\Piref]
		\arrow/-->/[piHtau`Href;]
		\Loop(-900,0){L^2(\R^d_x)}(ur,ul)_{e^{-i \frac{t}{\eps} \hat{H}}} 
		\Loop(0,0){\Htau}(ur,ul)_{e^{-i \frac{t}{\eps} \hat{H}^{\Zak}}} 
		\Loop(900,-600){L^2(\BZ) \otimes \C^N}(dr,dl)^{e^{-i \frac{t}{\eps} \hat{h}_{\mathrm{eff} \, 0}}} 
	\efig
	\label{magsapt:mag_sapt:diagram:unperturbed}
\end{align}
The reference projection $\Piref = \id_{L^2(\BZ)} \otimes \piref$ acts trivially on the first factor, $L^2(\BZ)$, and projects via $\piref = \sum_{j = 1}^N \sopro{\chi_j}{\chi_j} = u_0(k) \, \pi_0(k) \, u_0^*(k)$ onto an $N$-dimensional subspace of $L^2(\T^d_y)$. We will identify $\piref L^2(\T^d_y)$ with $\C^N$ when convenient. 

The dynamics in the lower-right corner is generated by the effective hamiltonian
\begin{align*}
 \hat{h}_{\mathrm{eff} \, 0} := \Piref \, \hat{u}_0 \, \hat{H}_{\mathrm{per}} \, \hat{u}_0^* \, \Piref
\end{align*}
which reduces to $E_n(\hat{k})$ if the relevant part of the spectrum consists of an isolated Bloch band. 


\subsection{Effective quantum dynamics to any order} 
\label{magsapt:mag_sapt:adiabatic_decoupling}
Now the question is whether a similar diagram exists in the presence of the perturbation, \ie whether there exist a projection $\Pi$, a unitary $U$ and an effective hamiltonian $\hat{h}_{\mathrm{eff}}$ that take the place of $\hat{\pi}_0$, $\hat{u}_0$ and $\hat{h}_{\mathrm{eff} \, 0}$? This has been answered in the positive for magnetic fields that admit $\BCont^{\infty}(\R^d,\R^d)$ vector potentials in \cite{PST:effective_dynamics_Bloch:2003} where these objects are explicitly constructed by recursion. We replace standard Weyl calculus used in the original publication with its magnetic variant (see Chapter~\ref{magsapt:rewriting:magnetic_weyl_calculus}) and use the two-parameter asymptotic expansion of the magnetic Weyl product recently derived in Chapter~\ref{asymptotics}. This technique naturally allows for the treatment of magnetic fields with components in $\BCont^{\infty}$ and lifts the previous restriction that the magnetic \emph{vector potential} must have components in $\BCont^{\infty}$. 

The construction is a ``defect construction'' where recursion relations derived from
\begin{align*}
	{\Pi}^2 &= \Pi
	&&
	\bigl [ \Pi , \hat{H}^{\Zak} \bigr ] = 0 \\
	{U}^* \, U &= \id_{\Htau}, \; U \, {U}^* = \id_{L^2_{\mathrm{per}}(\BZ) \otimes L^2_{\mathrm{per}}(\WS)}
	&&
	U \, \Pi \, {U}^* = \Piref 
\end{align*}
relate the $n$th term to all previous terms. These four conditions merely characterize that $\Pi$ and $U$ are still a projection and a unitary (first column) and adapted to the problem (second column). These equations can be translated via magnetic Weyl calculus to
\begin{align}
	\pi \magW \pi &= \pi + \ordere{\infty}
	&&
	\bigl [ \pi , H^{\Zak} \bigr ]_{\magW} = \ordere{\infty}
	\label{magsapt:mag_sapt:eqn:condition_corrected_objects} \\
	u \magW {u}^* &= 1 + \ordere{\infty} = {u}^* \magW u
	&&
	u \magW \pi \magW {u}^* = \pi_{\mathrm{eff}} + \ordere{\infty}
	\notag 
	. 
\end{align}
For technical reasons, $\Opk^A(u)$ and $U$, for instance, agree only up to an error that is arbitrarily small in $\eps$ with respect to the operator norm,
\begin{align*}
	U = \Opk^A(u) + \ordern(\eps^{\infty})
	.
\end{align*}
Let us now state the main result of the paper:



\section{Adiabatic decoupling} 
\label{magsapt:modulation_field}
We now consider case~1 in detail: here, the strength of the magnetic field can be changed by varying the coupling constant $\lambda \leq 1$. The hamiltonian we start with is the Zak transform of $\hat{H}$, \ie equation~\eqref{magsapt:rewrite:Zak:eqn:Zak_transformed_hamiltonian}. 

The aforementioned ``defect construction'' yields the tilted projection $\pi$ and the intertwining unitary $u$ as asymptotic expansion in $\eps$ and $\lambda$. It is important that the decoupling is \emph{solely due to the separation of spatial scales} quantified by $\eps$ and \emph{independent} of $\lambda$ which regulates the strength of the magnetic field. The limits $\lambda \rightarrow 0$ and $\eps \rightarrow 0$ are physically very different: $\lambda \rightarrow 0$ (with $\eps \ll 1$ small, but non-zero) corresponds to selectively switching off the magnetic field, $\eps \rightarrow 0$ implies that both, the magnetic and the electric field go to $0$.

\subsection{Effective quantum dynamics} 
\label{magsapt:modulation_field:effective_dynamics}
We will quickly explain how the corrections are computed order-by-order in $\eps$ and $\lambda$. We adapt the general recipe explained in \cite{Teufel:adiabatic_perturbation_theory:2003} to incorporate two parameters: since the decoupling is due to the separation of spatial scales quantified by $\eps \ll 1$, we will order corrections in power of $\eps$. Expanding the magnetic Weyl product to zeroth order, we can check 
\begin{align*}
	\pi_0 \magW \pi_0 &= \pi_0 + \ordere{}
	&&
	\bigl [ \pi_0 , H^{\Zak} \bigr ]_{\magW} = \ordere{} 
	\\
	u_0 \magW {u_0}^* &= 1 + \ordere{} = {u_0}^* \magW u_0
	&&
	u_0 \magW \pi_0 \magW u_0^* = \pi_{\mathrm{eff}} + \ordere{} 
	. 
\end{align*}
Here, $\bigl [ \pi_0 , H^{\Zak} \bigr ]_{\magW} := \pi_0 \magW H^{\Zak} - H^{\Zak} \magW \pi_0$ denotes the magnetic Weyl commutator. The asymptotic expansion of the product is key to deriving corrections in a systematic manner: the $\ordere{}$ terms can be used to infer $\pi_1$ and $u_1$, the subprincipal symbols. Then, one proceeds by recursion: if $\pi^{(n)} := \sum_{l = 0}^n \eps^l \pi_l$ and $u^{(n)} := \sum_{l = 0}^n \eps^l u_l$ satisfy equations~\eqref{magsapt:mag_sapt:eqn:condition_corrected_objects} up to errors of order $\eps^{n+1}$, then we can compute $\pi_{n+1}$ and $u_{n+1}$. The construction of $\pi$ and $u$ follows exactly from Lemma 3.8 and Lemma 3.15 of \cite{Teufel:adiabatic_perturbation_theory:2003}; it is \emph{purely algeraic} and only uses that we have a recipe to expand the Moyal product in terms of the semiclassical parameter $\eps$. Let us define 
\begin{align}
	\pi^{(n)} \magW \pi^{(n)} - \pi^{(n)} &=: \eps^{n+1} G_{n+1} + \ordere{n+2} 
	\label{magsapt:modulation_field:effective_dynamics:eqn:projection_defects} \\
	\bigl [ H^{\Zak} , \pi^{(n)} + \eps^{n+1} \pi_{n+1}^{\mathrm{d}} \bigr ]_{\magW} &=: \eps^{n+1} F_{n+1} + \ordere{n+2} 
	\notag 
\end{align}
as projection and commutation defects and 
\begin{align}
	u^{(n)} \magW {u^{(n)}}^* - 1 &=: \eps^{n+1} A_{n+1} + \ordere{n+2} 
	\label{magsapt:modulation_field:effective_dynamics:eqn:unitarity_defects} \\
	\bigl ( u^{(n)} + \eps^{n+1} a_{n+1} u_0 \bigr ) \magW \pi^{(n+1)} \magW \bigl ( u^{(n)} + \eps^{n+1} a_{n+1} u_0 \bigr )^* &=: \eps^{n+1} B_{n+1} + \ordere{n+2} 
	\notag
\end{align}
as unitarity and intertwining defects. The diagonal part of the projection $\pi_{n+1}^{\mathrm{d}}$ can be computed from $G_{n+1}$ via 
\begin{align}
	\pi_{n+1}^{\mathrm{D}} := - \pi_0 G_{n+1} \pi_0 + (1 - \pi_0) G_{n+1} (1 - \pi_0) 
	\label{magsapt:modulation_field:effective_dynamics:eqn:pi_n1_diag} 
	. 
\end{align}
The term 
\begin{align}
	a_{n+1} = - \tfrac{1}{2} A_{n+1} 
	\label{magsapt:modulation_field:effective_dynamics:eqn:u_a} 
\end{align}
stems from the ansatz $u_{n+1} = (a_{n+1} + b_{n+1}) u_0$ where $a_{n+1}$ and $b_{n+1}$ are symmetric and antisymmetric, respectively. One can solve the second equation for 
\begin{align}
	b_{n+1} = \bigl [ \piref , B_{n+1} \bigr ] 
	\label{magsapt:modulation_field:effective_dynamics:eqn:u_b} 
	. 
\end{align}
This equation fixes only the off-diagonal part of $b_{n+1}$ as $\piref B_{n+1} \piref = 0 = (1 - \piref) B_{n+1} (1 - \piref)$ and in principle one is free to choose the diagonal part of $b_{n+1}$. This means, there is a freedom that allows arbitrary unitary transformations within $\piref L^2(\T^d_y)$ as well as its orthogonal complement. In general, it is not possible to solve 
\begin{align}
	\bigl [ H^{\Zak} , \pi_{n+1}^{\mathrm{OD}} \bigr ] = - F_{n+1} 
\end{align}
explicitly since Bloch functions at band crossings \emph{within} the relevant part of the spectrum (which are admissible) are no longer differentiable. In that case, one has to construct 
\begin{align}
	\pi = \frac{i}{2 \pi} \int_{C(k_0,r_0)} \dd z \, (H^{\Zak} - z)^{(-1)_B} + \ordere{\infty} 
\end{align}
in the vicinity of the point $(r_0,k_0)$ by asymptotically expanding the Moyal resolvent $(H^{\Zak} - z)^{(-1)_B}$, \ie the symbol which satisfies 
\begin{align*}
	(H^{\Zak} - z) \magW (H^{\Zak} - z)^{(-1)_B} = 1 = (H^{\Zak} - z)^{(-1)_B} \magW (H^{\Zak} - z) 
	. 
\end{align*}
A recent result by Iftime, Măntoiu and Purice \cite{Iftimie_Mantoiu_Purice:commutator_criteria:2008} suggests that under these circumstances ($H^{\Zak}$ is elliptic and selfadjoint operator-valued) $(H^{\Zak} - z)^{(-1)_B}$ always exists and is a Hörmander symbol even in the presence of a magnetic field. We reckon their result extends to the case of operator-valued symbols, but seeing how tedious the proof is, we simply stick to the procedure used by Panati, Spohn and Teufel \cite[Lemma~5.17]{Teufel:adiabatic_perturbation_theory:2003}. This construction uniquely fixes the tilted Moyal projection $\pi$ uniquely, but not the Moyal unitary $u$. 

As the \emph{two-parameter expansion of the product} 
\begin{align*}
	f \magW g \asymp \sum_{n = 0}^{\infty} \sum_{k = 0}^n \eps^n \lambda^k \, (f \magW g)_{(n,k)} 
	, 
\end{align*}
contributes only \emph{finitely many terms} in $\lambda$ for fixed power of $n$ of $\eps$ \cite{Lein:two_parameter_asymptotics:2008}, we can order the terms of the expansion of $\pi$ and $u$ in powers of $\lambda$, \eg 
\begin{align*}
	\pi_n = \sum_{k = 0}^n \lambda^k \, \pi_{(n,k)} 
	. 
\end{align*}
The magnetic Weyl product as well as its asymptotic expansion are defined in terms of \emph{oscillatory integrals}, \ie integrals which exist in the distributional sense. If we take the limit $\lambda \rightarrow 0$ of $f \magW g$, we can interchange oscillatory integration and limit procedure \cite[p.~90]{Hoermander:Fourier_integral_operators_1:1971} and conclude $\lim_{\lambda \rightarrow 0} f \magW g = f \Weyl g$ where $\Weyl$ is the usual Moyal product. Similarly, we can apply this reasoning to the asymptotic expansion: for any fixed $N \in \N_0$, we may write the product as 
\begin{align*}
	f \magW g = \sum_{n = 0}^N \eps^n \biggl ( \sum_{k = 0}^n \lambda^k \, (f \magW g)_{(n,k)} \biggr ) + \eps^{N+1} \, R_{N+1}^B(f,g) 
\end{align*}
and taking the limit $\lambda \rightarrow 0$ means only the nonmagnetic terms $(f \magW g)_{(n,0)}$ survive. The remainder also behaves nicely when taking the limit as it is also just another oscillatory integral and $\lim_{\lambda \rightarrow 0} R_{N+1}^B(f,g)$ is exactly the remainder of the nonmagnetic Weyl product. 
\medskip

\noindent
Hence, we can now prove the main result of this paper: 
\begin{thm}\label{magsapt:modulation_field:effective_dynamics:thm:adiabatic_decoupling}
	Let Assumptions~\ref{magsapt:intro:assumption:V_Gamma}, \ref{magsapt:intro:assumption:em_fields} and~\ref{magsapt:mag_sapt:mag_wc:defn:gap_condition} be satisfied. Furthermore, if $d \geq 4$, we add Assumption~\ref{magsapt:mag_sapt:mag_wc:assumption:smooth_frame}. Then there exist 
	\begin{enumerate}[(i)]
		\item an orthogonal projection $\Pi \in \mathcal{B}(\Htau)$, 
		\item a unitary map $U$ which intertwines $\Htau$ and $L^2(\BZ) \otimes L^2(\T^d_y)$, and 
		\item a selfadjoint operator $\Opk^A(\heff) \in \mathcal{B} \bigl ( L^2(\BZ) \otimes \C^N \bigr)$, $N := \abs{\Index}$ 
	\end{enumerate}
	such that 
	\begin{align*}
		\bnorm{\bigl [ \hat{H}^{\Zak} , \Pi \bigr ]} = \ordere{\infty} 
	\end{align*}
	and
	\begin{align}
		\Bnorm{\bigl ( e^{- i t \hat{H}^{\Zak}} - U^* e^{- i t \Opk^A(\heff)} U \bigr ) \Pi}_{\mathcal{B}(\Htau)} = \mathcal{O} \bigl ( \eps^{\infty}(1 + \abs{t}) \bigr ) 
		\label{magsapt:modulation_field:effective_dynamics:eqn:adiabatic_decoupling_dynamics}
		. 
	\end{align}
	The effective hamiltonian is the magnetic quantization of the $\Gamma^*$-periodic function 
	\begin{align}
		\heff := \piref \, u \magW H^{\Zak} \magW u^* \, \piref \asymp \sum_{n = 0}^{\infty} \eps^n \, \heff_n \in \SemiHoer{0} \bigl ( \mathcal{B}(\C^N) \bigr ) 
	\end{align}
	whose asymptotic expansion can be computed to any order in $\eps$ and $\lambda$. To each order in $\eps$, only finitely many terms in $\lambda$ contribute, $\heff_n = \sum_{k = 0}^n \lambda^k \, \heff_{(n,k)}$.
\end{thm}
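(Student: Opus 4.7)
The plan is to mimic the space-adiabatic perturbation theory scheme of Panati--Spohn--Teufel, but systematically replace ordinary Weyl calculus with the magnetic Weyl calculus developed in Chapters~\ref{magWQ} and~\ref{asymptotics}. The key structural ingredients that carry over verbatim are: (a) the principal symbols $\pi_0(k)$ and $u_0(k)$ from diagram~\eqref{magsapt:mag_sapt:diagram:unperturbed} are already $\tau$-equivariant and independent of the external fields; (b) the ``defect'' equations~\eqref{magsapt:modulation_field:effective_dynamics:eqn:projection_defects} and \eqref{magsapt:modulation_field:effective_dynamics:eqn:unitarity_defects} are purely algebraic once one has an asymptotic expansion of the product; and (c) the magnetic two-parameter expansion (Theorem~\ref{asymptotics:thm:asymptotic_expansion} and Corollary~\ref{asymptotics:cor:asymptotic_expansion_eps}) gives exactly the required $\eps$-expansion with, at each fixed power of $\eps$, only finitely many powers of $\lambda$, which automatically yields the claimed form $\heff_n = \sum_{k=0}^{n} \lambda^{k} \heff_{(n,k)}$.

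\textbf{Recursive construction of $\pi$ and $u$.} First I would fix the $\tau$-equivariant principal symbols $\pi_0, u_0$ as in the unperturbed diagram (using Panati's triviality result, or Assumption~\ref{magsapt:mag_sapt:mag_wc:assumption:smooth_frame} in $d\ge 4$). Inductively, assume $\pi^{(n)} = \sum_{\ell=0}^{n} \eps^{\ell} \pi_\ell$ and $u^{(n)} = \sum_{\ell=0}^{n} \eps^{\ell} u_\ell$ are $\tau$-equivariant semiclassical symbols satisfying the equations of \eqref{magsapt:mag_sapt:eqn:condition_corrected_objects} modulo $\ordere{n+1}$. Compute the defects $G_{n+1}$, $F_{n+1}$, $A_{n+1}$, $B_{n+1}$ using the magnetic product expansion; read off the diagonal part $\pi_{n+1}^{\mathrm{D}}$ from \eqref{magsapt:modulation_field:effective_dynamics:eqn:pi_n1_diag}, and the symmetric and off-diagonal antisymmetric parts of $u_{n+1}$ from \eqref{magsapt:modulation_field:effective_dynamics:eqn:u_a}--\eqref{magsapt:modulation_field:effective_dynamics:eqn:u_b}. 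The off-diagonal part $\pi_{n+1}^{\mathrm{OD}}$ is obtained, away from band crossings inside $\Index$, as the unique solution of $[H^{\Zak},\pi_{n+1}^{\mathrm{OD}}]=-F_{n+1}^{\mathrm{OD}}$ in the gap; at crossings one uses the Riesz formula $\pi = \tfrac{i}{2\pi}\int_{\Gamma(k_0)} (H^{\Zak}-z)^{(-1)_B}\,\mathrm{d}z$ constructed by an asymptotic expansion of the Moyal resolvent as in \cite[Lemma~5.17]{Teufel:adiabatic_perturbation_theory:2003}. At each step, $\tau$-equivariance is preserved because $H^{\Zak}$, $\pi_0$, $u_0$ and the Moyal product $\magW$ are all compatible with the $\tau$-action, and boundedness of the correction symbols is controlled by the Caldéron--Vaillancourt-type estimates (Theorem~\ref{magWQ:important_results:continuity_and_selfadjointness:thm:magnetic_Calderon_Vaillancourt}).

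\textbf{Resummation and passage to true projection/unitary.} Using a Borel-type resummation of the $\tau$-equivariant semiclassical symbols, assemble $\pi \asymp \sum_n \eps^n \pi_n$ and $u \asymp \sum_n \eps^n u_n$ in $\SemiTau$. Their magnetic quantizations $\hat\pi := \Opk^A(\pi)$ and $\hat u := \Opk^A(u)$ satisfy the operator identities of \eqref{magsapt:mag_sapt:eqn:condition_corrected_objects} up to $\ordern(\eps^\infty)$. Replace $\hat\pi$ by the true spectral projection $\Pi := \chi_{[-1/2,1/2]}(\hat\pi - \hat\pi^2) \hat\pi$ (or the standard Riesz-projection modification), which differs from $\hat\pi$ by $\ordern(\eps^\infty)$ and is a genuine orthogonal projection almost commuting with $\hat H^{\Zak}$. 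Similarly obtain a true unitary $U$ with $U\hat\pi U^{*} = \Piref$, by polar decomposition of $\hat u$, again at the cost of an $\ordern(\eps^\infty)$ correction. Define $\heff$ as stated and verify by the same product expansion that $\heff \in \SemiHoer{0}(\mathcal{B}(\C^N))$ with the advertised two-parameter structure.

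\textbf{Dynamical bound.} Finally, \eqref{magsapt:modulation_field:effective_dynamics:eqn:adiabatic_decoupling_dynamics} follows by a Duhamel computation, essentially identical to the one in Theorem~\ref{asymptotics:semiclassical_limit:thm:semiclassical_limit_observables}: write the difference as
\begin{align*}
\bigl( e^{-it\hat H^{\Zak}} - U^{*} e^{-it\Opk^A(\heff)} U \bigr)\Pi = -i\int_{0}^{t}\!\!\mathrm{d}s\, e^{-i(t-s)\hat H^{\Zak}} \bigl( \hat H^{\Zak} U^{*} - U^{*} \Opk^A(\heff) \bigr) e^{-is \Opk^A(\heff)} U\Pi,
\end{align*}
and use $U \hat H^{\Zak} U^{*} \Piref = \Opk^A(u \magW H^{\Zak} \magW u^{*}) \Piref + \ordern(\eps^\infty) = \Opk^A(\heff) + \ordern(\eps^\infty)$ together with $[\Pi,\hat H^{\Zak}] = \ordern(\eps^\infty)$. \textbf{The main obstacle} I expect is the off-diagonal step in the recursion in the presence of band crossings inside $\Index$, which forces the Moyal-resolvent detour; making this rigorous requires verifying that the asymptotic expansion of $(H^{\Zak}-z)^{(-1)_B}$ survives in the operator-valued, $\tau$-equivariant setting with only magnetic-field (not vector-potential) regularity, which is exactly where the magnetic product formula of Theorem~\ref{asymptotics:thm:equivalenceProduct} and its expansion pay off over the standard calculus used in \cite{PST:effDynamics:2003}.
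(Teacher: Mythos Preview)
Your proposal is correct and follows essentially the same approach as the paper: transliterate the space-adiabatic recursion of \cite{PST:sapt:2002,Teufel:adiabatic_perturbation_theory:2003} into the magnetic calculus, using the two-parameter expansion of $\magW$ to compute the defects and to guarantee the $\lambda$-structure $\heff_n=\sum_{k=0}^n\lambda^k\heff_{(n,k)}$. The only cosmetic differences are in how the almost-projection and almost-unitary are upgraded to genuine ones: the paper takes $\Pi$ as the Riesz projection $\Pi=\int_{|z-1/2|=1}\dd z\,(\Opk^A(\pi)-z)^{-1}$ and obtains $U$ via the Nagy formula, whereas you suggest a functional-calculus cutoff and polar decomposition; both routes are standard and yield objects differing from $\Opk^A(\pi),\Opk^A(u)$ by $\ordern(\eps^\infty)$.
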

The proof of this result amounts to showing (i)-(iii) separately. 
\begin{prop}\label{magsapt:modulation_field:effective_dynamics:prop:projection}
	Under the assumptions of Theorem~\ref{magsapt:modulation_field:effective_dynamics:thm:adiabatic_decoupling} there exists and orthogonal projection $\Pi \in \mathcal{B}(\Htau)$ such that 
	\begin{align}
		\bigl [ \hat{H}^{\Zak} , \Pi \bigr ] = \ordern(\eps^{\infty}) 
	\end{align}
	and $\Pi = \Opk^A(\pi) + \ordern(\eps^{\infty})$ where $\Opk^A(\pi)$ is the magnetic Weyl quantization of a $\tau$-equivariant semiclassical symbol 
	\begin{align*}
		\pi \asymp \sum_{n = 0}^{\infty} \eps^n \, \pi_n \in \SemiTau \bigl ( \mathcal{B}(\Hfast) \bigr )
	\end{align*}
	whose principal part $\pi_0(k,r)$ coincides with the spectral projection of $H^{\Zak}(k,r)$ onto the subspace corresponding to the given isolated family of Bloch bands $\{ E_n \}_{n \in \Index}$. Each term in the expansion can be written as a finite sum 
	\begin{align*}
		\pi_n = \sum_{k = 0}^n \lambda^k \, \pi_{(n,k)} \in \SemiTau \bigl ( \mathcal{B}(\Hfast) \bigr ) 
	\end{align*}
	ordered by powers of $\lambda$. For $\lambda \rightarrow 0$, \ie letting $B^{\eps,\lambda} \rightarrow 0$ while keeping the electric field fixed, the projection $\pi$ reduces to the nonmagnetic projection $\pi^0 \asymp \sum_{n = 0}^{\infty} \eps^n \, \pi_{(n,0)}$. 
\end{prop}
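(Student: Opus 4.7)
The plan is to follow the space-adiabatic recipe of \cite{PST:sapt:2002,Teufel:adiabatic_perturbation_theory:2003}, replacing standard Weyl calculus with the equivariant magnetic Weyl calculus of Chapter~\ref{magsapt:rewriting:magnetic_weyl_calculus} and using the two-parameter expansion of $\magW$ from Theorem~\ref{asymptotics:thm:asymptotic_expansion}. The construction has three stages: (a) produce a formal $\tau$-equivariant symbol $\pi \asymp \sum_n \eps^n \pi_n$ solving the almost-projection and almost-commutation equations; (b) Borel-resum it to a bona fide semiclassical symbol in $\SemiTau \bigl ( \mathcal{B}(\Hfast) \bigr )$; (c) promote its magnetic quantization to a true orthogonal projection $\Pi$ modulo $\ordern(\eps^{\infty})$.

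For (a), start with $\pi_0(k,r)$ equal to the spectral projection of $H^{\Zak}(k,r)$ onto $\bigcup_{n \in \Index} \{ E_n(k)\}$; by the gap condition~\ref{magsapt:mag_sapt:mag_wc:defn:gap_condition} and the Riesz formula $\pi_0 = \frac{i}{2\pi} \oint_{\Gamma} \dd z \, (H^{\Zak}(\cdot)-z)^{-1}$ over a contour enclosing the relevant bands, $\pi_0$ is smooth and $\tau$-equivariant. Assume inductively that $\pi^{(n)} := \sum_{l=0}^n \eps^l \pi_l$ satisfies $\pi^{(n)} \magW \pi^{(n)} - \pi^{(n)} = \eps^{n+1} G_{n+1} + \ordere{n+2}$ and $[H^{\Zak},\pi^{(n)}]_{\magW} = \eps^{n+1} F_{n+1} + \ordere{n+2}$ with $G_{n+1}, F_{n+1}$ smooth $\tau$-equivariant symbols. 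Define $\pi_{n+1}^{\mathrm{D}}$ by \eqref{magsapt:modulation_field:effective_dynamics:eqn:pi_n1_diag} and determine $\pi_{n+1}^{\mathrm{OD}}$: if the relevant bands do not cross one another, solve the pointwise commutator equation $[H^{\Zak}(k,r), \pi_{n+1}^{\mathrm{OD}}(k,r)] = - F_{n+1}(k,r)$ on the spectral subspaces via the gap; otherwise, use the Moyal resolvent of $H^{\Zak}$ and the Riesz contour to define $\pi$ directly near the crossing, mirroring the construction in \cite[Lemma~5.17]{Teufel:adiabatic_perturbation_theory:2003}. The standard algebraic identities of SAPT then show that the updated $\pi^{(n+1)}$ obeys the recursion one order higher; crucially, since these identities are purely algebraic in the Moyal product, they transfer verbatim to $\magW$.

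For (b), apply a Borel summation to the sequence $\{ \pi_n \}$ in $\SemiTau \bigl ( \mathcal{B}(\Hfast) \bigr )$ to obtain $\pi \in \SemiTau$ with the prescribed asymptotics, $\tau$-equivariance, and hence an operator $\Opk^A(\pi)$ which is bounded on $\Htau$ by the magnetic Caldéron--Vaillancourt theorem (Theorem~\ref{magWQ:important_results:continuity_and_selfadjointness:thm:magnetic_Calderon_Vaillancourt}). By construction $\Opk^A(\pi)^* = \Opk^A(\pi) + \ordern(\eps^{\infty})$, $\Opk^A(\pi)^2 - \Opk^A(\pi) = \ordern(\eps^{\infty})$ and $[\hat{H}^{\Zak}, \Opk^A(\pi)] = \ordern(\eps^{\infty})$. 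For (c), replace $\Opk^A(\pi)$ by its symmetrization and then use the standard spectral trick: the spectrum of the nearly idempotent selfadjoint operator lies in a neighborhood of $\{0,1\}$ of size $\order(\eps^{\infty})$, so $\Pi := \frac{i}{2\pi} \oint_{\abs{z-1}=\nicefrac{1}{2}} \dd z \, (\Opk^A(\pi) - z)^{-1}$ is a genuine orthogonal projection with $\Pi = \Opk^A(\pi) + \ordern(\eps^{\infty})$, which inherits the almost-commutation with $\hat{H}^{\Zak}$.

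The statements on $\lambda$ follow from the structure of Theorem~\ref{asymptotics:thm:asymptotic_expansion}: the $n$th term $\eps^n (f \magW g)_{(n,k)}$ involves at most $n$ magnetic factors, so each $G_{n+1}, F_{n+1}$ is a finite sum $\sum_{k=0}^{n+1} \lambda^k (\cdot)_{(n+1,k)}$; the recursion preserves this finite polynomial structure, giving $\pi_n = \sum_{k=0}^n \lambda^k \pi_{(n,k)}$. As pointed out before the theorem, oscillatory integrals commute with $\lambda \to 0$, so in this limit $\magW$ reduces to the usual Weyl product and the construction coincides with the nonmagnetic one, yielding $\pi^0 \asymp \sum_n \eps^n \pi_{(n,0)}$. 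The main obstacle is the off-diagonal step within eigenvalue crossings inside $\Index$: handling it requires the Moyal resolvent construction and uniform control of its asymptotic expansion in the operator-valued, $\tau$-equivariant setting, for which one adapts the arguments of \cite{Iftimie_Mantoiu_Purice:commutator_criteria:2008} and \cite[Lemma~5.17]{Teufel:adiabatic_perturbation_theory:2003} to the present magnetic calculus.
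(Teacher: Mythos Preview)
Your proposal is correct and follows essentially the same approach as the paper: both transliterate the space-adiabatic recursion of \cite{PST:sapt:2002,Teufel:adiabatic_perturbation_theory:2003} to the magnetic calculus, use the two-parameter expansion of $\magW$ to conclude that each $\pi_n$ is a finite polynomial in $\lambda$, invoke the interchange of $\lambda \to 0$ with oscillatory integration for the nonmagnetic limit, and finish by defining $\Pi$ as the Riesz spectral projection of the almost-idempotent $\Opk^A(\pi)$ onto the part of its spectrum near $1$. Your sketch is in fact slightly more explicit than the paper's (you spell out Borel summation and the symmetrization step), but the strategy is identical.
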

\begin{proof}[Sketch]
	The proof relies on a well-developed \emph{magnetic} Weyl calculus and the gap condition. In particular, one needs a magnetic Caldéron-Vaillancourt theorem, composition and quantization of Hörmander symbols \cite{Iftimie_Mantiou_Purice:magnetic_psido:2006} and finally, an asymptotic two-parameter expansion of the magnetic Weyl product $\magW$ \cite{Lein:two_parameter_asymptotics:2008}. The interested reader may check line-by-line that the original proof can be transliterated to the magnetic context with obvious modifications. If we were using standard Weyl calculus, the major obstacle would be to control derivatives of $\pi$ since vector potentials may be unbounded. In magnetic Weyl calculus the vector potential at no point enters the calculuations and the assumptions on the magnetic field assure that $\pi \in \SemiTau \bigl ( \mathcal{B}(\Hfast) \bigr )$ is a proper $\tau$-equivariant semiclassical Hörmander-class symbol. 
	
	The fact that we can write all of the $\pi_n$ as \emph{finite} sum of terms ordered by powers of $\lambda$ stems from the fact that calculating $\pi_n$ involves the expansion of the product up to $n$th power in $\eps$, \eg for the projection defect, we find 
	\begin{align*}
		\pi^{(n-1)} \magW \pi^{(n-1)} &- \pi^{(n-1)} = \eps^n \negmedspace \negmedspace \negmedspace \negmedspace \sum_{a + b + c = n} \negmedspace \negmedspace (\pi_a \magW \pi_b \bigr )_{(c)} + \ordere{n+1}
		\\
		&= \eps^n \negmedspace \negmedspace \negmedspace \negmedspace \sum_{a + b + c = n} \sum_{a' = 0}^a \sum_{b' = 0}^b \sum_{c' = 0}^c \lambda^{a' + b' + c'} \, (\pi_{(a,a')} \magW \pi_{(b,b')} \bigr )_{(c,c')} + \ordere{n+1} 
		. 
	\end{align*}
	Certainly, the exponent of $\lambda$ is always bounded by $n \geq a' + b' + c'$. And since the sum is finite, this clearly defines a semiclassical symbol in $\eps$ (see Definition~\ref{magsapt:rewriting:magnetic_weyl_calculus:defn:semiclassical_symbol}) is shown. Similar arguments for the commutation defect in conjunction with the comments in the beginning of this section show $\pi$ to be a semiclassical two-parameter symbol. 
	It is well-behaved under the $\lambda \rightarrow 0$ limit and reduces to the projection associated to the case $B = 0$. 
	\medskip
	
	\noindent
	Lastly, to make the almost projection $\Opk^A(\pi)$ into a true projection, we define $\Pi$ to be the spectral projection onto the spectrum in the vicinity of $1$, 
	\begin{align*}
		\Pi := \int_{\abs{z - \nicefrac{1}{2}} = 1} \dd z \, \bigl ( \Opk^A(\pi) - z \bigr )^{-1} 
		. 
	\end{align*}
	This concludes the proof. 
\end{proof}
Similarly, one can modify the proof found in \cite{Teufel:adiabatic_perturbation_theory:2003} to show the existence of the unitary. 
\begin{prop}\label{magsapt:modulation_field:effective_dynamics:prop:unitary}
	Let $\{ E_n \}_{n \in \Index}$ be a family of bands separated by a gap from the others and let Assumption~\ref{magsapt:intro:assumption:V_Gamma} be satisfied. If $d > 3$, assume $u_0 \in \Hoermr{0}{0} \bigl ( \mathcal{B}(\Hfast) \bigr )$. Then there exists a unitary operator $U : \Htau \longrightarrow L^2(\BZ) \otimes L^2(\T^d_y)$ such that $U = \Opk^A(u) + \ordern(\eps^{\infty})$ where 
	\begin{align*}
		u \asymp \sum_{n = 0}^{\infty} \eps^n u_n \in \SemiHoer{0} \bigl ( \mathcal{B}(\Hfast) \bigr )
	\end{align*}
	is right-$\tau$-covariant at any order and has principal symbol $u_0$. Each term in the expansion can be written as a finite sum 
	\begin{align*}
		u_n = \sum_{k = 0}^n \lambda^k u_{(n,k)} 
	\end{align*}
	ordered by powers of $\lambda$. For $\lambda \rightarrow 0$, \ie letting $B^{\eps,\lambda} \rightarrow 0$ while keeping the electric field fixed, the unitary $u$ reduces to the nonmagnetic unitary $u^0 \asymp \sum_{n = 0}^{\infty} \eps^n \, u_{(n,0)}$. 
\end{prop}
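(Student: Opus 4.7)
The plan is to mimic the recursive construction of the projection in Proposition~\ref{magsapt:modulation_field:effective_dynamics:prop:projection}, only now governed by the unitarity defect $A_{n+1}$ and the intertwining defect $B_{n+1}$ from equations~\eqref{magsapt:modulation_field:effective_dynamics:eqn:unitarity_defects}. First I would build the symbol $u$ order by order in $\eps$: assuming $u^{(n)} = \sum_{l=0}^{n} \eps^l u_l$ and $\pi^{(n+1)}$ have been constructed so that $u^{(n)} \magW (u^{(n)})^{*} - 1 = \eps^{n+1} A_{n+1} + \ordere{n+2}$ and the intertwining equation holds up to order $\eps^{n+1}$, I would take the standard ansatz $u_{n+1} = (a_{n+1} + b_{n+1}) u_0$ with $a_{n+1}$ symmetric, $b_{n+1}$ antisymmetric, and solve via equations~\eqref{magsapt:modulation_field:effective_dynamics:eqn:u_a}--\eqref{magsapt:modulation_field:effective_dynamics:eqn:u_b}. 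The off-diagonal part of $b_{n+1}$ is then fixed algebraically through $b_{n+1} = [\piref, B_{n+1}]$, while the diagonal part constitutes the usual residual gauge freedom that may be fixed arbitrarily (e.g.\ set to zero). Each $u_l$ inherits membership in $\Hoermr{0}{0}(\mathcal{B}(\Hfast))$ and right-$\tau$-covariance from $u_0$ and $\pi_l$, since the magnetic Moyal product of right-$\tau$-covariant symbols with $\tau$-equivariant ones preserves right-$\tau$-covariance.

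Having constructed the formal series $\sum_{n\geq 0} \eps^n u_n$, I would invoke a Borel resummation argument (the magnetic analog of Lemma~2.21 in \cite{Teufel:adiabatic_perturbation_theory:2003}, which carries over since Hörmander classes are Fréchet and the magnetic Caldéron–Vaillancourt theorem~\ref{magWQ:important_results:continuity_and_selfadjointness:thm:magnetic_Calderon_Vaillancourt} provides uniform operator bounds) to obtain a genuine symbol $u \in \SemiHoer{0}(\mathcal{B}(\Hfast))$ asymptotic to this series and again right-$\tau$-covariant. By the same bookkeeping as in the proof of Proposition~\ref{magsapt:modulation_field:effective_dynamics:prop:projection}, each defect at step $n$ involves only finitely many applications of the two-parameter expansion of $\magW$ (Theorem~\ref{asymptotics:thm:asymptotic_expansion}), so $u_n$ is automatically a finite polynomial $\sum_{k=0}^{n} \lambda^k u_{(n,k)}$ in $\lambda$; the limit $\lambda \to 0$ then commutes with the oscillatory integrals defining $\magW$ and reduces $u$ to the nonmagnetic unitary.

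At this stage $\Opk^A(u)$ is only an almost unitary: $\Opk^A(u)^{*} \Opk^A(u) = \id + \ordern(\eps^{\infty})$ and similarly on the other side. The main obstacle is passing from this approximate unitarity to an honest unitary $U : \Htau \longrightarrow L^2(\BZ) \otimes L^2(\T^d_y)$ — the analog of the functional-calculus trick used to upgrade $\Opk^A(\pi)$ to a true projection does not apply directly here. The remedy is a polar-decomposition argument: the operator $\Opk^A(u)^{*} \Opk^A(u)$ is selfadjoint, bounded, and $\ordern(\eps^{\infty})$-close to $\id$, hence invertible with positive square root also $\ordern(\eps^{\infty})$-close to $\id$; setting
\begin{align*}
  U := \Opk^A(u) \, \bigl ( \Opk^A(u)^{*} \, \Opk^A(u) \bigr )^{-\nicefrac{1}{2}}
\end{align*}
produces an isometry, and the analogous calculation with $\Opk^A(u) \, \Opk^A(u)^{*}$ shows it is surjective, so $U$ is unitary with $U = \Opk^A(u) + \ordern(\eps^{\infty})$. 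The right-$\tau$-covariance of $u$ at every order, combined with the fact that $\Opk^A$ preserves equivariance, guarantees that $U$ maps $\Htau$ into $L^2(\BZ) \otimes L^2(\T^d_y)$ (the target space being simply the $\tau$-equivariant space with $\tau$ replaced by the trivial representation via $u_0$).

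Finally, once $U$ and $\Pi$ are in place, the effective hamiltonian and the decoupling estimate~\eqref{magsapt:modulation_field:effective_dynamics:eqn:adiabatic_decoupling_dynamics} of Theorem~\ref{magsapt:modulation_field:effective_dynamics:thm:adiabatic_decoupling} follow from a routine Duhamel argument analogous to the semiclassical limit in Theorem~\ref{asymptotics:semiclassical_limit:thm:semiclassical_limit_observables}, using that $\bigl [ \hat{H}^{\Zak} , \Pi \bigr ] = \ordern(\eps^{\infty})$ and that $U \Pi U^{*} = \Piref$ exactly.
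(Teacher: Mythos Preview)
Your recursive construction of the symbol $u$ matches the paper's approach exactly: the paper invokes equations~\eqref{magsapt:modulation_field:effective_dynamics:eqn:u_a}--\eqref{magsapt:modulation_field:effective_dynamics:eqn:u_b} to build $u_{n+1} = (a_{n+1} + b_{n+1})u_0$, cites Theorem~1.1 of \cite{Lein:two_parameter_asymptotics:2008} for the symbol-class membership, and declares right-$\tau$-covariance obvious from the ansatz. Your treatment of the $\lambda$-polynomial structure is more explicit than the paper's sketch but follows the same bookkeeping as in Proposition~\ref{magsapt:modulation_field:effective_dynamics:prop:projection}.

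The one genuine difference is in the final step. The paper passes from the almost-unitary $\Opk^A(u)$ to a true unitary via the \emph{Nagy formula} (as in \cite{Teufel:adiabatic_perturbation_theory:2003}), whereas you use polar decomposition. Both produce a true unitary $U$ with $U = \Opk^A(u) + \ordern(\eps^{\infty})$, so for the proposition as stated your argument is fine. The advantage of the Nagy route is that it is designed to intertwine two nearby projections \emph{exactly}: applied to $\Pi$ and $\Piref$ (which differ by $\ordern(\eps^{\infty})$), it yields a unitary satisfying $U \Pi U^* = \Piref$ on the nose. Polar decomposition gives only $U \Pi U^* = \Piref + \ordern(\eps^{\infty})$, so your closing claim in the last paragraph that ``$U \Pi U^* = \Piref$ exactly'' is not justified by your construction. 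This overclaim lies outside the proposition itself, but it matters for the downstream Duhamel argument in Theorem~\ref{magsapt:modulation_field:effective_dynamics:thm:adiabatic_decoupling}, where exact intertwining cleans up the estimates.
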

\begin{proof}[Sketch]
	Equations~\eqref{magsapt:modulation_field:effective_dynamics:eqn:u_a} and \eqref{magsapt:modulation_field:effective_dynamics:eqn:u_b} give us $a_{n+1}$ and $b_{n+1}$ which combine to $u_{n+1} = (a_{n+1} + b_{n+1}) u_0$; by Theorem~1.1 from \cite{Lein:two_parameter_asymptotics:2008} it is also in the correct symbol class, namely $\Hoermr{0}{0} \bigl ( \mathcal{B}(\Hfast) \bigr )$. The right $\tau$-covariance is also obvious from the ansatz. 
	
	Lastly, the true unitary $U$ is obtained via the Nagy formula as described in \cite{Teufel:adiabatic_perturbation_theory:2003}. 
\end{proof}
\begin{proof}[Theorem~\ref{magsapt:modulation_field:effective_dynamics:thm:adiabatic_decoupling} (Sketch)]
	The existence of $\Pi$ and $U$ have been the subject of Propositions~\ref{magsapt:modulation_field:effective_dynamics:prop:projection} and \ref{magsapt:modulation_field:effective_dynamics:prop:unitary}. By right-$\tau$-covariance of $u$, $\heff$ is a $\Gamma^*$-periodic function and since it is the magnetic Weyl product of $\BCont^{\infty} \bigl ( T^* \R^d , \mathcal{B} \bigl ( L^2(\T^d_y) \bigr ) \bigr )$ functions, Theorem~1.1 from \cite{Lein:two_parameter_asymptotics:2008} ensures that the product and the terms of its asymptotic two-parameter expansion are in $\BCont^{\infty} \bigl ( T^* \R^d , \mathcal{B} \bigl ( L^2(\T^d_y) \bigr ) \bigr )$ as well. 
\end{proof}
%


\subsection{Effective dynamics for a single band} 
\label{magsapt:modulation_field:effective_dynamics_for_a_single_band}
In case the relevant part of the spectrum consists of a single non-degenerate band $\Eb$\index{effective single band hamiltonian}, we can calculate the first-order correction to $\heff$ explicitly: the magnetic Weyl product reduces to the pointwise product to zeroth order in $\eps$. Thus, we can directly compute 
\begin{align*}
	\heff_0 = \piref \, u_0 \, H_0 \, u_0^* \, \piref =: \piref \, h_0 \, \piref = \Eb + \phi 
	. 
\end{align*}
For the first order, we use the recursion formula \cite[eq.~(3.35)]{Teufel:adiabatic_perturbation_theory:2003} and the fact that $\heff_0$ is a scalar-valued symbol: 
\begin{align*}
	\heff_1 &= \Bigl ( u_1 \, H_0 - h_0 \, u_1 + (u_0 \magW H_0)_{(1)} - (h_0 \magW u_0)_{(1)} \Bigr ) u_0^* 
	\\
	&= \piref \, \bigl [ u_1 u_0^* , h_0 \bigr ] \, \piref + (u_0 \magW H_0)_{(1)} u_0^* - (h_0 \magW u_0)_{(1)}  u_0^*
	\\
	&= - \tfrac{i}{2} \piref \, \Bigl ( \bigl \{ u_0 , H_0 \bigr \}_{\lambda B} - \bigl \{ h_0 , u_0 \bigr \}_{\lambda B} \Bigr ) \, \piref 
\end{align*}
The term with the magnetic Poisson bracket can be easily computed: 
\begin{align*}
	\piref \, \Bigl ( &\bigl \{ u_0 , H_0 \bigr \}_{\lambda B} - \bigl \{ h_0 , u_0 \bigr \}_{\lambda B} \Bigr ) \, u_0^* \, \piref 
	= 
	\\
	&= \piref \, \Bigl ( \partial_{k_l} u_0 \, \partial_{r_l} H_0 - \lambda B_{lj} \, \partial_{k_l} u_0 \, u_0^* \, 
	\bigl ( \partial_{k_j} h_0 \, u_0 - \partial_{k_j} u_0 \, u_0^* \, h_0 - h_0 \, u_0 \, \partial_{k_j} u_0^* \bigr )
	+ \\
	&\qquad \qquad 
	+ \partial_{r_l} h_0 \, \partial_{k_l} u_0 + \lambda B_{lj} \, \partial_{k_l} h_0 \, \partial_{k_j} u_0 \Bigr ) \, u_0^* \, \piref 
	\displaybreak[2]
	\\
	&= 2 i \bigl ( \partial_{r_l} \phi - \lambda B_{lj} \, \partial_{k_j} \Eb \bigr ) \, \BerryC_l 
	+ \\
	&\qquad \qquad 
	+ \lambda B_{lj} \, \piref \, \partial_{k_l} u_0 \, u_0^* \, \bigl ( \partial_{k_j} u_0 \, u_0^* \, h_0 + h_0 \, u_0 \, \partial_{k_j} u_0^* \bigr ) \, \piref 
	\\
	&= 2 i \bigl ( \partial_{r_l} \phi - \lambda B_{lj} \, \partial_{k_j} \Eb \bigr ) \, \BerryC_l + \lambda B_{lj} \, \bscpro{\partial_{k_l} \varphi_{\mathrm{b}}}{\bigl ( \Hper - \Eb \bigr ) \partial_{k_j} \varphi_{\mathrm{b}}} 
\end{align*}
The first term combines to a Lorentz force term, the second one -- which is purely imaginary -- yields the Rammal-Wilkinson term. The components of the magnetic field are $\BCont^{\infty}(\R^d)$ functions and hence, principal and subprincipal symbol are in $\BCont^{\infty}(T^* \R^d)$ as well. 

This means, we have proven the following corollary to Theorem~\ref{magsapt:modulation_field:effective_dynamics:thm:adiabatic_decoupling}: 
\begin{cor}\label{magsapt:modulation_field:effective_dynamics:cor:adiabatic_decoupling_explicit}
	Under the assumptions of Theorem~\ref{magsapt:modulation_field:effective_dynamics:thm:adiabatic_decoupling}, the principal and subprincipal symbol of the effective hamiltonian for a single non-degenerate Bloch band $\Eb$ are given by 
	\begin{align}
		\heff_0 &= \Eb + \phi
		\\
		\heff_1 &= - \bigl ( - \partial_{r_l} \phi + \lambda B_{lj} \, \partial_{k_j} \Eb \bigr ) \, \BerryC_l - \lambda B_{lj} \, \mathcal{M}_{lj} 
		\notag \\
		&=: - F_{\mathrm{Lor} \, l} \, \BerryC_l - \lambda B_{lj} \, \mathcal{M}_{lj} 
		\notag 
		. 
	\end{align}
	where 
	\begin{align*}
		\BerryC_l(k) := i \scpro{\varphi_{\mathrm{b}}(k)}{\nabla_k \varphi_{\mathrm{b}}(k)} 
	\end{align*}
	and 
	\begin{align*}
		\mathcal{M}_{lj}(k) = \Re \Bigl ( \tfrac{i}{2} \bscpro{\partial_{k_l} \varphi_{\mathrm{b}}}{\bigl ( \Hper(k) - \Eb(k) \bigr ) \, \partial_{k_j} \varphi_{\mathrm{b}}} \Bigr )
	\end{align*}
	are the Berry connection and the so-called Rammal-Wilkinson term, respectively. 
\end{cor}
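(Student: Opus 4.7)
The strategy is to directly read off $\heff_0$ and $\heff_1$ from the recursion formulas that underlie Theorem~\ref{magsapt:modulation_field:effective_dynamics:thm:adiabatic_decoupling} and expand the magnetic Moyal product $\magW$ to first order using the two-parameter expansion of Theorem~\ref{asymptotics:thm:asymptotic_expansion}. The single-band hypothesis collapses $\piref = \sopro{\chi}{\chi}$ to a rank-one projection, so everything sandwiched between $\piref$ and $\piref$ becomes a scalar-valued symbol, which is a substantial simplification.

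For the principal symbol, I would note that the zeroth-order term of $\magW$ is pointwise multiplication, so
\begin{align*}
	\heff_0 = \piref\, u_0\, H_0\, u_0^* \,\piref
\end{align*}
where $H_0(k,r) = \Hper(k) + \phi(r)$ is the symbol of $\hat{H}^{\Zak}$ (the magnetic coupling is encoded in the quantization map and does not enter the principal symbol). Using $u_0^*\chi = \varphi_{\mathrm{b}}(k)$ for the single non-degenerate band together with the eigenvalue equation $\Hper(k)\varphi_{\mathrm{b}}(k) = \Eb(k)\varphi_{\mathrm{b}}(k)$, one obtains $\heff_0 = (\Eb+\phi)\,\piref$, identified with $\Eb+\phi$ under $\piref L^2(\T^d_y) \cong \C$.

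For the subprincipal symbol, I would apply the recursion formula~(3.35) from \cite{Teufel:adiabatic_perturbation_theory:2003}; since $h_0$ is scalar it commutes with $\piref$, which kills the $u_1$-contribution modulo a commutator that vanishes after sandwiching, leaving only the first-order Moyal brackets:
\begin{align*}
	\heff_1 = - \tfrac{i}{2}\,\piref\,\Bigl(\{u_0,H_0\}_{\lambda B} - \{h_0,u_0\}_{\lambda B}\Bigr)\,u_0^*\,\piref,
\end{align*}
where the magnetic Poisson bracket
\begin{align*}
	\{f,g\}_{\lambda B} = \partial_{k_l}f\,\partial_{r_l}g - \partial_{r_l}f\,\partial_{k_l}g - \lambda B_{lj}\,\partial_{k_l}f\,\partial_{k_j}g
\end{align*}
arises precisely from the $(1,0)$ and $(1,1)$ terms in Theorem~\ref{asymptotics:thm:asymptotic_expansion}. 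The computation then splits into the non-magnetic part (pieces containing $\partial_{r_l}\phi$) and the magnetic part (pieces containing $\lambda B_{lj}$). In both, the key identity is
\begin{align*}
	\piref\,(\partial_{k_l} u_0)\,u_0^*\,\piref = i\BerryC_l\,\piref,
\end{align*}
which follows by expanding $u_0 = \sopro{\chi}{\varphi_{\mathrm{b}}} + u_0^\perp$, using $u_0 u_0^* = 1$, and exploiting reality of $\BerryC_l$ (a consequence of normalization of $\varphi_{\mathrm{b}}$). The $\partial_r\phi$ and $\lambda B_{lj}\partial_{k_j}\Eb$ contributions combine into $-F_{\mathrm{Lor},l}\BerryC_l$, while the remaining off-diagonal magnetic pieces $\lambda B_{lj}\,\piref\,\partial_{k_l} u_0\,u_0^*\,h_0\,u_0\,\partial_{k_j}u_0^*\,\piref$ (and the symmetric counterpart), after inserting $h_0 = u_0 H_0 u_0^*$ and replacing $H_0$ by $\Hper + \phi$ (the $\phi$-contribution cancels on the diagonal against the completeness relation), yield the Rammal--Wilkinson term $-\lambda B_{lj}\mathcal{M}_{lj}$ with $\mathcal{M}_{lj} = \Re\bigl(\tfrac{i}{2}\bscpro{\partial_{k_l}\varphi_{\mathrm{b}}}{(\Hper-\Eb)\partial_{k_j}\varphi_{\mathrm{b}}}\bigr)$.

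The main obstacle is purely bookkeeping: carefully tracking the four terms coming from the two Poisson brackets, exploiting $\partial_{k}(u_0 u_0^*) = 0$ to rewrite $\partial_k u_0^* = -u_0^*(\partial_k u_0)u_0^*$, and verifying the cancellation of the $\phi$-contribution inside the magnetic off-diagonal terms so that only $\Hper - \Eb$ survives between the derivatives of $\varphi_{\mathrm{b}}$. There is no analytic difficulty: Assumption~\ref{magsapt:intro:assumption:em_fields} together with the isolatedness of the band ensure that $\Eb$, $\phi$, $B_{lj}$, $\BerryC_l$ and $\mathcal{M}_{lj}$ are all smooth with bounded derivatives, so that $\heff_0, \heff_1 \in \BCont^{\infty}(T^*\R^d)$ as required by Theorem~\ref{magsapt:modulation_field:effective_dynamics:thm:adiabatic_decoupling}.
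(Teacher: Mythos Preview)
Your proposal is correct and follows essentially the same route as the paper: both compute $\heff_0$ via pointwise multiplication, invoke the recursion formula~(3.35) from \cite{Teufel:adiabatic_perturbation_theory:2003} so that the scalar nature of $h_0$ kills the $u_1$-contribution, and then expand the first-order magnetic Moyal bracket as the magnetic Poisson bracket $\{\cdot,\cdot\}_{\lambda B}$ to extract the Lorentz-force and Rammal--Wilkinson terms via the identity $\piref\,(\partial_{k_l}u_0)\,u_0^*\,\piref = i\BerryC_l\,\piref$. Your write-up is in fact slightly more explicit than the paper's about the role of this identity and the $\phi$-cancellation in the off-diagonal magnetic pieces, but the argument is the same.
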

%


\subsection{Semiclassical equations of motion} 
\label{magsapt:modulation_field:eom}
Now that we have approximated the full quantum evolution by an effective \emph{quantum} evolution on a smaller reference space, namely $L^2(\BZ) \otimes \C^N$, we will further simplify the problem of finding approximate dynamics by taking the semiclassical limit.\index{semiclassical limit} Conceptually, this is a two-step process: if we reconsider the diagram of spaces, 
\begin{align}
	\bfig
		\node L2R(-900,0)[L^2(\R^d_x)]
		\node piL2R(-900,-600)[\Zak^{-1} \Pi \Zak L^2(\R^d_x)]
		\node Htau(0,0)[\Htau]
		\node piHtau(0,-600)[\Pi \Htau]
		\node HslowHfast(900,0)[L^2(\BZ) \otimes L^2(\T^d_y)]
		\node Href(900,-600)[L^2(\BZ) \otimes \C^N]
		\arrow[L2R`Htau;\Zak]
		\arrow/-->/[L2R`piL2R;\Zak^{-1} \Pi \Zak]
		\arrow/-->/[piL2R`piHtau;]
		\arrow[Htau`piHtau;\Pi]
		\arrow[Htau`HslowHfast;U]
		\arrow[HslowHfast`Href;\Piref]
		\arrow/-->/[piHtau`Href;]
		\Loop(-900,0){L^2(\R^d_x)}(ur,ul)_{e^{-i \frac{t}{\eps} \hat{H}}} 
		\Loop(0,0){\Htau}(ur,ul)_{e^{-i \frac{t}{\eps} \hat{H}^{\Zak}}} 
		\Loop(900,-600){L^2(\BZ) \otimes \C^N}(dr,dl)^{e^{-i \frac{t}{\eps} \Opk^A(\heff)}} 
	\efig
	\label{magsapt:modulation_field:eom:diagram:perturbed}
\end{align}
we notice that our physical observables live on the upper-left space $L^2(\R^d_x)$ -- or equivalently on $\Htau$. The effective evolution generated by $\heff$ approximates the dynamics if the initial states are localized in the tilted eigenspace associated to the relevant bands. In this section, we always assume the relevant part of the spectrum consists of a \emph{single non-degenerate band} $\Eb$ and thus $L^2(\BZ) \otimes \C^1 \cong L^2(\BZ)$. 

In a first step, we need to connect the semiclassical dynamics in the left column of the diagram with those in the lower-right corner. The second, much simpler step is to establish an Egorov-type theorem on the reference space.

\subsubsection{Connection of the effective dynamics with the original dynamics} 
\label{magsapt:modulation_field:eom:connecting_dynamics}
Since we are concerned with the semiclassical dynamics of a particle in an electromagnetic field, the magnetic field must enter in the classical equations of motion. There are two ways: either one uses minimal coupling, \ie one writes down the equations of motion for position $r$ and momentum $k - \lambda A(r)$ with respect to the usual symplectic form. Or alternatively, the classical flow which enters the Egorov theorem is generated by 
\begin{align}
	\left (
	\begin{matrix}
		\lambda B(r) & - \id \\
		+ \id & 0 \\
	\end{matrix}
	\right )
	\left (
	\begin{matrix}
		\dot{r} \\
		\dot{k} \\
	\end{matrix}
	\right )
	= 
	\left (
	\begin{matrix}
		\nabla_r \\
		\nabla_k \\
	\end{matrix}
	\right )
	H^{\Zak}(k,r)
	\label{magsapt:modulation_field:eom:eqn:initial_eom}
\end{align}
where the appearance of $B$ in the matrix representation of the symplectic form is due to the fact that $k$ is \emph{kinetic} momentum. 
What constitutes a suitable observable? Physically, we are interested in measurements on the \emph{macroscopic} scale, \ie the observable should be independent of the \emph{microscopic} degrees of freedom. On the level of symbols, this means $f(k,r)$ has to commute pointwise with the hamiltonian $H^{\Zak}(k,r)$ for all $k$ and $r$. Hence, such an observable is a constant of motion with respect to the \emph{fast} dynamics. In the simplest case, the observables are scalar-valued. This also ensures we are able to ``separate'' the contributions to the full dynamics band-by-band. Note that this \emph{by no means} implies $\Opk^A(f)$ commutes with $\Opk^A(H^{\Zak})$, but rather that all of the non-commutativity is contained in the slow variables $(k,r)$. 
\begin{defn}[Macroscopic semiclassical observable\index{macroscopic semiclassical observable}]\label{magsapt:modulation_field:eom:defn:semiclassical_observable}
	A macropscopic observable $f$ is a scalar-valued semiclassical symbol (see Definition~\ref{magsapt:rewriting:magnetic_weyl_calculus:defn:semiclassical_symbol}) $\SemiHoermr{0}{0}(\C)$ which is $\Gamma^*$-periodic in $k$, $f(k + \gamma^*,r) = f(k,r)$ for all $(k,r) \in T^* \R^d$, $\gamma^* \in \Gamma^*$. 
\end{defn}
Our assumption that our dynamics lives on the almost-invariant subspace $\Pi \Htau$ modifies the classical dynamics to first order in $\eps$ as well: instead of using $\KA$ and $\Reps$ as building block observables, the proper observables should be $\Pi \KA \Pi$ and $\Pi \Reps \Pi$. Equivalently, we can switch to the reference space representation and use the magnetic quantization of 
\begin{align}
	\keff &:= \piref \, u \magW k \magW u^* \, \piref = k + \eps \lambda B(r) \BerryC(k) + \ordere{2}
	\label{magsapt:modulation_field:eom:eqn:effective_building_blocks} \\
	\reff &:= \piref \, u \magW r \magW u^* \, \piref = r + \eps \BerryC(k) + \ordere{2} 
	\notag 
\end{align}
The crucial proposition we will prove next says that for suitable observables $f$, the effect of going to the effective representation is, up to errors of order $\eps^2$ at least, equivalent to replacing the arguments $k$ and $r$ by $\keff$ and $\reff$, 
\begin{align*}
	\Piref U \Opk^A(f) U^* \Piref &= \Opk^A \bigl ( \piref \, u \magW f \magW u^* \, \piref \bigr ) + \ordern(\eps^{\infty}) 
	\\
	&
	=: \Opk^A(f_{\mathrm{eff}}) + \ordern(\eps^{\infty}) 
	. 
\end{align*}
Then it follows that the effective observable $f_{\mathrm{eff}}$ coincides with the original observable $f$ after a change of variables up to errors of order $\ordere{2}$, 
\begin{align}
	f_{\mathrm{eff}} :=& \piref \, u \magW f \magW u^* \, \piref = f \circ T_{\mathrm{eff}} + \ordere{2}
	, 
	\label{magsapt:modulation_field:eom:eqn:effective_sym_replacement}
\end{align}
where the map $T_{\mathrm{eff}} : (k,r) \mapsto (\keff,\reff)$ maps the observables $k$ and $r$ onto the effective observables $\keff$ and $\reff$ defined via equations~\eqref{magsapt:modulation_field:eom:eqn:effective_building_blocks} . 
\begin{prop}\label{magsapt:modulation_field:eom:prop:effective_sym_building_block}
	Let $f$ be a macroscopic semiclassical observable. Then up to errors of order $\eps^2$ equation~\eqref{magsapt:modulation_field:eom:eqn:effective_sym_replacement} holds, \ie 
	\begin{align}
		\Piref \, U \, \Opk^A(f) \, U^* \, \Piref = \Opk^A \bigl ( f_{\mathrm{eff}} \bigr ) + \ordern(\eps^{\infty}) 
		= \Opk^A \bigl ( f \circ T_{\mathrm{eff}} \bigr ) + \ordern(\eps^2) 
		\label{magsapt:modulation_field:eom:eqn:effective_op_replacement} 
		. 
	\end{align}
\end{prop}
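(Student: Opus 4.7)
The plan is to split the identity into two independent pieces. The first equality replaces operators on the reference space by magnetic Weyl quantizations of their symbols, while the second equality is an order-by-order comparison of the two semiclassical symbols $f_{\mathrm{eff}}$ and $f \circ T_{\mathrm{eff}}$.

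For the first equality, I would combine $U = \Opk^A(u) + \ordern(\eps^{\infty})$ from Proposition~\ref{magsapt:modulation_field:effective_dynamics:prop:unitary} with the fact that $\Piref = \id_{L^2(\BZ)} \otimes \piref = \Opk^A(\piref)$, since $\piref$ is a constant (in $(k,r)$) symbol. The composition rule $\Opk^A(g)\Opk^A(h) = \Opk^A(g \magW h)$ then unfolds the left-hand side to $\Opk^A(\piref \magW u \magW f \magW u^{*} \magW \piref) + \ordern(\eps^{\infty})$. Constant symbols Weyl-multiply trivially with any factor, so this reduces to $\Opk^A(f_{\mathrm{eff}}) + \ordern(\eps^{\infty})$; the magnetic Caldéron--Vaillancourt theorem (Theorem~\ref{magWQ:important_results:continuity_and_selfadjointness:thm:magnetic_Calderon_Vaillancourt}) controls each of the finitely many remainders uniformly in operator norm, using that $u, f \in \Hoermr{0}{0}$.

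For the second equality I would expand $f_{\mathrm{eff}} = \piref \, u \magW f \magW u^{*} \, \piref$ to first order in $\eps$ using Corollary~\ref{asymptotics:cor:asymptotic_expansion_eps} together with $u = u_0 + \eps u_1 + \ordere{2}$. The zeroth order term collapses to $f \, \piref$, because $f$ is scalar-valued so $u_0 f u_0^{*} = f u_0 u_0^{*} = f$ and $\piref f \piref = f \piref$. The subprincipal symbol collects three groups: the $u_1$-contributions $\piref(u_1 f u_0^{*} + u_0 f u_1^{*})\piref$; the inner first-order Moyal terms $(u_0 \magW f)_{(1)} u_0^{*}$ and $u_0 (f \magW u_0^{*})_{(1)}$; and the outer first-order Moyal term, which vanishes because $u_0 f u_0^{*} = f$ Weyl-commutes with the identity to leading order. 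Using the unitarity relation $u_1 u_0^{*} + u_0 u_1^{*} = -\tfrac{i}{2}\{u_0, u_0^{*}\}_{\lambda B}$, obtained by expanding $u \magW u^{*} = 1 + \ordere{\infty}$ to order $\eps$, together with the scalar-valuedness of $f$ and the identities $\piref u_0 (\partial_{k_l} u_0^{*}) \piref = -i \BerryC_l \piref$ and $\partial_{r_l} u_0 = 0$, the three groups combine to $\eps \bigl ( \BerryC_l \partial_{r_l} f + \lambda B_{jl} \BerryC_l \partial_{k_j} f \bigr ) \piref$. This is exactly the first-order Taylor expansion of $f \circ T_{\mathrm{eff}}$ via the definitions~\eqref{magsapt:modulation_field:eom:eqn:effective_building_blocks}; the remainder lies in $\ordere{2}$ by Theorem~\ref{asymptotics:thm:asymptotic_expansion} and Corollary~\ref{asymptotics:cor:asymptotic_expansion_eps}.

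The hard part will be the bookkeeping in the subprincipal computation: one has to combine the $u_1$-contributions with the first-order Moyal-product terms, exploit scalar-valuedness of $f$ together with the unitarity recursion to cancel all pieces that could depend on the non-unique diagonal part of $u_1$, and verify that only the gauge-invariant combination involving the Berry connection and the magnetic field survives after sandwiching with $\piref$. This is the magnetic analogue of Lemma~5.18 in \cite{Teufel:adiabatic_perturbation_theory:2003}; the arguments transfer with obvious modifications once magnetic Weyl calculus is in place, but every non-magnetic Poisson bracket must be systematically replaced by its magnetic counterpart, and that replacement is precisely what produces the extra $\lambda B_{jl} \BerryC_l \partial_{k_j} f$ alongside the familiar $\BerryC_l \partial_{r_l} f$.
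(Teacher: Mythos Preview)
Your approach is correct and reaches the same conclusion, but the paper organizes the subprincipal computation differently and more economically. Rather than expanding $u \magW f \magW u^{*}$ directly, the paper rewrites the defining relation as $u \magW f = f_{\mathrm{eff}} \magW u$ (using $u^{*} \magW u = 1$) and reads off the first-order coefficient:
\begin{align*}
f_{\mathrm{eff} \, 1} = \piref \, \Bigl ( u_0 f_1 + u_1 f_0 - f_{\mathrm{eff} \, 0} u_1 + (u_0 \magW f_0)_{(1)} - (f_{\mathrm{eff} \, 0} \magW u_0)_{(1)} \Bigr ) u_0^* \, \piref
,
\end{align*}
exactly the recursion \cite[eq.~(3.35)]{Teufel:adiabatic_perturbation_theory:2003} already used for $\heff_1$. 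Since $f_{\mathrm{eff},0} = f_0$ is scalar, the two $u_1$ terms cancel on the spot and the unitarity relation is never invoked; only the Moyal commutator $-\tfrac{i}{2}\bigl(\{u_0,f_0\}_{\lambda B} - \{f_0,u_0\}_{\lambda B}\bigr)$ survives, which after sandwiching with $\piref$ gives $(\partial_{r_j} f_0 + \lambda B_{lj}\partial_{k_l} f_0)\BerryC_j$ directly. Your route via the unitarity identity $u_1 u_0^{*} + u_0 u_1^{*} = -(u_0 \magW u_0^{*})_{(1)} = +\tfrac{i}{2}\{u_0,u_0^{*}\}_{\lambda B}$ (note the sign: you wrote $-\tfrac{i}{2}$) also works, because that contribution cancels against the $f_0\{u_0,u_0^{*}\}_{\lambda B}$ piece hidden in $(u_0 f_0 \magW u_0^{*})_{(1)}$ by Leibniz, but this cancellation has to be tracked by hand. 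The paper's recursion buys you that cancellation for free and makes manifest that the answer is independent of the non-unique part of $u_1$. Your remark about an ``outer first-order Moyal term'' is a red herring: $\piref$ is a constant symbol, so $\piref \magW g = \piref \, g$ exactly and there is nothing to cancel there.
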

\begin{proof}
	The equivalence of the left-hand sides of equations~\eqref{magsapt:modulation_field:eom:eqn:effective_op_replacement} and \eqref{magsapt:modulation_field:eom:eqn:effective_sym_replacement} follows from $U = \Opk^A(u) + \ordern(\eps^{\infty})$ and the fct that we only need ot consider the first two terms in the $\eps$ expansion. With the help of Theorem~1.1 from \cite{Lein:two_parameter_asymptotics:2008}, we conclude $f_{\mathrm{eff}} \in \SemiHoer{0}$ is also a semiclassical symbol of order $0$. The left-hand side of \eqref{magsapt:modulation_field:eom:eqn:effective_sym_replacement} can be computed explicitly: to zeroth order, nothing changes as $f$ commutes pointwise with $u$ and $u^*$, 
	\begin{align*}
		f_{\mathrm{eff} \, 0} = \piref \, u_0 f u_0^* \, \piref = f_0 
		. 
	\end{align*}
	To first order, we have 
	\begin{align*}
		f_{\mathrm{eff} \, 1} &= \piref \, \Bigl ( u_0 f_1 + u_1 f_0 - f_{\mathrm{eff} \, 0} u_1 + (u_0 \magW f_0)_{(1)} - (f_{\mathrm{eff} \, 0} \magW u_0)_{(1)} \Bigr ) u_0^* \, \piref 
		\\
		&= \piref \, u_0 f_1 u_0^* \, \piref - \tfrac{i}{2} \Bigl ( \bigl \{ u_0 , f_0 \bigr \}_{\lambda B} - 
		\bigl \{ f_{\mathrm{eff} \, 0} , u_0 \bigr \}_{\lambda B} \Bigr ) 
		\\
		&= f_1 - i \bigl ( \partial_{r_j} f_0 + \lambda B_{lj}(r) \, \partial_{k_l} f_0 \bigr ) \, \piref \, \partial_{k_j} u_0 \, u_0^* \, \piref 
		\\
		&=  f_1 + \bigl ( \partial_{r_j} f_0 + \lambda B_{lj}(r) \, \partial_{k_l} f_0 \bigr ) \, \BerryC_j 
		. 
	\end{align*}
	On the other hand, if we Taylor expand $f \circ T_{\mathrm{eff}} = f \bigl ( \keff , \reff \bigr )$ to first order in $\eps$, we get 
	\begin{align*}
		f \bigl ( \keff , \reff \bigr ) &= f_0 \bigl ( k + \eps \lambda B(r) \BerryC(k) + \ordere{2} , r + \eps \BerryC(k) + \ordere{2} \bigr ) + 
		\\
		&\quad 
		+ \eps f_1 \bigl ( k + \eps \lambda B(r) \BerryC(k) + \ordere{2} , r + \eps \BerryC(k) + \ordere{2} \bigr ) + \ordere{2} 
		\\
		&= f_0(k,r) 
		+ \\
		&\quad 
		+ \eps \Bigl ( f_1(k,r) + \lambda \partial_{k_l} f_0(k,r) \, B_{lj}(r) \, \BerryC_j(k) + \partial_{r_j} f_0(k,r) \, \BerryC_j(k) \Bigr ) 
		+ \ordere{2} 
	\end{align*}
	which coincides with $f_{\mathrm{eff}}$ up to $\ordere{2}$. 
\end{proof}
Now if the equations of motion~\eqref{magsapt:modulation_field:eom:eqn:initial_eom} are an approximation of the full quantum dynamics, what are the equations of motion with respect to the effective variables? The classical evolution generated by $\heff$ with respect to the magnetic symplectic form (equation~\eqref{magsapt:modulation_field:eom:eqn:initial_eom}) can be rewritten in terms of effective variables, 
\begin{align}
	\Phi^{\mathrm{macro}}_t = T_{\mathrm{eff}} \circ \Phi^{\mathrm{eff}}_t \circ T_{\mathrm{eff}}^{-1} + \ordere{2} 
	. 
	\label{magsapt:modulation_field:eom:eqn:macro_flow}
\end{align}
The right-hand side \emph{does not serve} as a definition for the flow of the macroscopic observables, but it is a consequence: $\Phi^{\mathrm{macro}}_t$ is the flow associated to a modified symplectic form and a modified hamiltonian. The modified symplectic form includes the Berry curvature associated to $\Eb$ acting as a pseudo-magnetic field on the position variables. 
\begin{prop}\label{magsapt:modulation_field:eom:prop:macro_flow}
	Let $\Phi_t^{\mathrm{macro}}$\index{semiclassical equations of motion} be the flow on $T^* \R^d$ generated by 
	\begin{align}
		\left (
		\begin{matrix}
			\lambda B(\reff) & - \id \\
			+ \id & \eps \Omega(\keff) \\
		\end{matrix}
		\right )
		\left (
		\begin{matrix}
			\dot{r}_{\mathrm{eff}} \\
			\dot{k}_{\mathrm{eff}} \\
		\end{matrix}
		\right )
		= 
		\left (
		\begin{matrix}
			\nabla_{\reff} \\
			\nabla_{\keff} \\
		\end{matrix}
		\right )
		h_{\mathrm{sc}}(\keff,\reff)
		\label{magsapt:modulation_field:eom:eqn:macro_eom}
	\end{align}
	where the \emph{semiclassical hamiltonian} is given by 
	\begin{align}
		h_{\mathrm{sc}}(\reff,\keff) :=& \, \heff \circ T_{\mathrm{eff}}^{-1}(\reff,\keff) 
		\notag \\
		=& \, \bigl ( \Eb(\keff) + \phi(\reff) \bigr ) - \eps \lambda B(\reff) \cdot \mathcal{M}(\keff) + \ordere{2} 
		. 
	\end{align}
	Then equation~\eqref{magsapt:modulation_field:eom:eqn:macro_flow} holds, ie 
	$\Phi_t^{\mathrm{macro}}$ and $T_{\mathrm{eff}} \circ \Phi_t^{\mathrm{eff}} \circ T_{\mathrm{eff}}^{-1}$ agree up to errors of order $\eps^2$. 
\end{prop}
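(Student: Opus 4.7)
The strategy is to recognise the statement as a coordinate-change identity for Hamiltonian flows on a symplectic manifold, computed to first order in $\eps$. Concretely, I will verify separately that under the diffeomorphism $T_{\mathrm{eff}}:(k,r)\mapsto(\keff,\reff)$ of phase space, (a) the magnetic symplectic form defined by the matrix in~\eqref{magsapt:modulation_field:eom:eqn:initial_eom} pushes forward to the matrix appearing on the left-hand side of~\eqref{magsapt:modulation_field:eom:eqn:macro_eom}, modulo $\ordere{2}$, and (b) the Hamiltonian $\heff$ pushes forward to $h_{\mathrm{sc}}$ as stated. Once both are established, the naturality of Hamiltonian vector fields under symplectomorphisms gives the claimed conjugacy $\Phi_t^{\mathrm{macro}} = T_{\mathrm{eff}}\circ\Phi_t^{\mathrm{eff}}\circ T_{\mathrm{eff}}^{-1} + \ordere{2}$.

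For (a), I would write the nonmagnetic and magnetic parts of the symplectic form as $\omega^B = \dd r_j\wedge \dd k_j + \tfrac{\lambda}{2}B_{ij}(r)\,\dd r_i\wedge \dd r_j$, then compute $\dd\keff_j$ and $\dd\reff_j$ from~\eqref{magsapt:modulation_field:eom:eqn:effective_building_blocks}. The cross term $\dd\reff_j\wedge \dd\keff_j$ reproduces $\dd r_j\wedge \dd k_j$ plus, crucially, the off-diagonal mixing $\eps\,\partial_{k_l}\BerryC_j(k)\,\dd k_l\wedge \dd k_j = \tfrac{\eps}{2}\Omega_{lj}(\keff)\,\dd \keff_l\wedge \dd \keff_j + \ordere{2}$, i.e.\ precisely the Berry-curvature block $\eps\Omega(\keff)$ in the new symplectic matrix. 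Analogous bookkeeping shows that the shift $r\mapsto\reff$ in the argument of $B$, together with the $\eps\lambda B\BerryC$ correction in $\keff$, only contributes at order $\eps^2$ to the magnetic-field block, so that $\lambda B(r)$ is replaced by $\lambda B(\reff)$ to the order claimed.

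For (b), Taylor expanding $\heff = \heff_0 + \eps \heff_1 + \ordere{2}$ about $(\keff,\reff)$ using Corollary~\ref{magsapt:modulation_field:eom:cor:adiabatic_decoupling_explicit} yields
\begin{align*}
	h_{\mathrm{sc}}(\keff,\reff) &= \heff_0(\keff,\reff) - \eps\lambda B_{lj}(\reff)\,\BerryC_j(\keff)\,\partial_{k_l}\Eb(\keff) - \eps\BerryC_j(\keff)\,\partial_{r_j}\phi(\reff) \\
	&\quad + \eps\heff_1(\keff,\reff) + \ordere{2}.
\end{align*}
Since $\heff_1 = -F_{\mathrm{Lor}\,l}\,\BerryC_l - \lambda B_{lj}\mathcal{M}_{lj}$ with $F_{\mathrm{Lor}\,l} = -\partial_{r_l}\phi + \lambda B_{lj}\partial_{k_j}\Eb$, the two linear-in-$\BerryC$ correction terms cancel exactly against the Taylor-shift contributions, leaving $h_{\mathrm{sc}} = \Eb(\keff) + \phi(\reff) - \eps\lambda B(\reff)\cdot\mathcal{M}(\keff) + \ordere{2}$ as claimed.

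The principal difficulty I anticipate is not any single calculation but rather the careful accounting of $\ordere{2}$ contributions: one must track that the inversion $T_{\mathrm{eff}}^{-1}$, the pullback of $B(r)$, and the pushforward of the $\dd r\wedge \dd r$ piece all generate only $\ordere{2}$ corrections to the stated symplectic matrix, and that the same is true for the higher terms of $\heff$. Once this bookkeeping is in place, the conjugacy of flows follows from the general fact that if $T$ is a diffeomorphism, $h$ a smooth Hamiltonian and $\omega$ a symplectic form, then the Hamiltonian vector field of $h\circ T^{-1}$ with respect to $T_*\omega$ is $T_*X_h^\omega$, and hence the flows are related by conjugation with $T$; the $\ordere{2}$ error in the symplectic and Hamiltonian data translates, by Gronwall's inequality on any compact time interval, into an $\ordere{2}$ error in the flows.
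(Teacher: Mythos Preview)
Your proposal is correct and follows essentially the same route as the paper: transform both the symplectic structure and the Hamiltonian under $T_{\mathrm{eff}}$, verify that this reproduces~\eqref{magsapt:modulation_field:eom:eqn:macro_eom} modulo $\ordere{2}$, and then pass from closeness of Hamiltonian vector fields to closeness of flows. The only stylistic difference is that you push forward the symplectic form as a $2$-form and invoke naturality of Hamiltonian vector fields, whereas the paper expands the symplectic matrix together with the time derivatives and gradients of $(\keff,\reff)$ in coordinates; for the final step the paper cites \cite[Lemma~5.24]{Teufel:adiabatic_perturbation_theory:2003} in place of your Gronwall argument.
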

\begin{proof}
	We express $k$ and $r$ in terms of $\keff$ and $\reff$ in \eqref{magsapt:modulation_field:eom:eqn:initial_eom} since, for $\eps$ small enough, $T_{\mathrm{eff}} : (k,r) \mapsto (\keff,\reff)$ is a bijection. For instance, the semiclassical hamiltonian $\heff \circ T_{\mathrm{eff}}^{-1}$ simplifies to 
	\begin{align*}
		h_{\mathrm{sc}}(\keff,\reff) :=& \, \heff \bigl ( \keff - \eps \lambda B(\reff) \BerryC(\keff) , \reff - \eps \BerryC(\keff) \bigr ) + \ordere{2} 
		\notag \\ 
		=& \, \bigl ( \Eb(\keff) + \phi(\reff) \bigr ) - \eps \lambda B(\reff) \cdot \mathcal{M}(\keff) + \ordere{2} 
		. 
	\end{align*}
	The symplectic form can be easily expanded to 
	\begin{align*}
		\left (
		\begin{matrix}
			\lambda B(\reff - \eps \BerryC(\keff)) & - \id \\
			+ \id & 0 \\
		\end{matrix}
		\right ) 
		&= 
		\left (
		\begin{matrix}
			\lambda B(\reff) & - \id \\
			+ \id & 0 \\
		\end{matrix}
		\right ) - \eps \negmedspace \left (
		\begin{matrix}
			\lambda \partial_{{\reff}_l} B(\reff) \, \BerryC(\keff) & 0 \\
			0 & 0 \\
		\end{matrix}
		\right ) 
		+ \\
		&\qquad 
		+ \ordere{2} 
		. 
	\end{align*}
	The other two terms, the time derivatives and gradients of $\keff$ and $\reff$ have slightly more complicated expansions, but they can be worked out explicitly. Then if we put all of them together, we arrive at the modified symplectic form~\eqref{magsapt:modulation_field:eom:eqn:macro_eom}. This proves the first claim. 
	
	Hence, the hamiltonian vector fields agree up to $\ordere{2}$ and Lemma~5.24 in \cite{Teufel:adiabatic_perturbation_theory:2003} implies that also the flows differ only by $\ordere{2}$. 
\end{proof}
\begin{remark}
	These equations of motion have first been proposed in the appendix of \linebreak\cite{PST:effective_dynamics_Bloch:2003} and we have derived them in a more systematic fashion. The effective coordinates $r_{\mathrm{eff}}$ and $k_{\mathrm{eff}}$ are associated to the noncommutative manifold $T^* \R^d$ \cite{Doplicher_Fredenhagen_Roberts:quantum_structure_spacetime:2003}: from equation~\eqref{magsapt:modulation_field:eom:eqn:macro_eom}, one can read off that the Poisson bracket\index{Poisson bracket} with respect to $r_{\mathrm{eff}}$ and $k_{\mathrm{eff}}$ is given by 
	\begin{align*}
		\bigl \{ f , g \bigr \}_{\lambda B , \eps \Omega} = \bigl ( \partial_{\xi_l} f \, \partial_{x_l} g - \partial_{x_l} f \, \partial_{\xi_l} g \bigr ) - \bigl ( \lambda B_{lj} \, \partial_{\xi_l} f \, \partial_{\xi_j} g - \eps \Omega_{lj} \, \partial_{x_l} f \, \partial_{x_j} g \bigr )
	\end{align*}
	and thus different components of position $r_{\mathrm{eff}}$ no longer commute, 
	\begin{align*}
		\bigl \{ r_{\mathrm{eff} \, l} , r_{\mathrm{eff} \, j} \bigr \}_{\lambda B , \eps \Omega} &= - \eps \Omega_{lj} 
		. 
	\end{align*}
	Hence, $\Omega$ acts as a pseudomagnetic field that is due to quantum effects. 
\end{remark}
Now we proceed and prove the semiclassical limit. 


\subsubsection{An Egorov-type theorem} 
\label{magsapt:modulation_field:eom:egorov}
The semiclassical approximation hinges on an Egorov-type theorem\index{Egorov-type theorem} which we first prove on the level of effective dynamics: 
\begin{thm}\label{magsapt:modulation_field:eom:egorov:thm:semiclassics}
	Let $\heff$ be the effective hamiltonian\index{Egorov-type theorem} as given by Theorem~\ref{magsapt:modulation_field:effective_dynamics:thm:adiabatic_decoupling} associated to an isolated, non-degenerate Bloch band $\Eb$. Then for any $\Gamma^*$-periodic semiclassical observable $f \in \SemiHoer{0}$, $f = f_0 + \eps f_1$, the flow $\Phi^{\mathrm{eff}}_t$ generated by $\heff$ with respect to the magnetic symplectic form (equation~\eqref{magsapt:modulation_field:eom:eqn:initial_eom}) approximates the quantum evolution uniformly for all $t \in [-T , +T]$, 
	\begin{align}
		\Bnorm{e^{+ i \frac{t}{\eps} \Opk^A(\heff)} \, \Opk^A(f) \, e^{- i \frac{t}{\eps} \Opk^A(\heff)} - \Opk^A \bigl ( f \circ \Phi^{\mathrm{eff}}_t \bigr )}_{\mathcal{B}(L^2(\BZ))} \leq C \eps^2 
		. 
	\end{align}
\end{thm}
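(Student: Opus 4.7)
The plan is to adapt the scalar Egorov theorem (Theorem~\ref{asymptotics:semiclassical_limit:thm:semiclassical_limit_observables}) to the present band-projected setting. The argument here is actually cleaner than its general analogue, because $\heff \in \SemiHoer{0}$ is a Hörmander symbol of order zero with bounded principal symbol, so $\Opk^A(\heff) \in \mathcal{B}(L^2(\BZ))$ by the magnetic Caldéron--Vaillancourt theorem (Theorem~\ref{magWQ:important_results:continuity_and_selfadjointness:thm:magnetic_Calderon_Vaillancourt}) and the unitary group $U^{\mathrm{eff}}(t) := e^{-i\frac{t}{\eps}\Opk^A(\heff)}$ is norm-continuous, dispensing with the resolvent regularisation that was needed in Chapter~\ref{asymptotics:semiclassical_limit}.

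First I would verify that $f_s := f\circ \Phi^{\mathrm{eff}}_s \in \SemiHoer{0}$ with seminorms bounded uniformly in $s\in[-T,+T]$. By Corollary~\ref{magsapt:modulation_field:effective_dynamics:cor:adiabatic_decoupling_explicit} both $\heff_0 = \Eb + \phi$ and $\heff_1$ lie in $\BCont^\infty(T^*\R^d)$: $\Eb$ is smooth and $\Gamma^*$-periodic (hence bounded with bounded derivatives on all of $\R^d$), the electromagnetic data $\phi$ and $B$ are $\BCont^\infty(\R^d)$ by Assumption~\ref{magsapt:intro:assumption:em_fields}, and the Berry connection $\BerryC$ together with the Rammal--Wilkinson term $\mathcal{M}$ are smooth functions on the Brillouin zone. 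Consequently the magnetic hamiltonian vector field associated with $\heff$ is globally Lipschitz, the flow $\Phi^{\mathrm{eff}}_t$ exists globally by Picard--Lindelöf, and because the \emph{first} derivatives of $\heff$ are already bounded its Jacobians are uniformly bounded on $[-T,+T]$. Lemma~\ref{asymptotics:semiclassical_limit:lem:composition_symbol_flow} then yields the required uniform seminorm control on $f_s$.

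The core of the proof is the Duhamel identity
\begin{align*}
&U^{\mathrm{eff}}(-t)\,\Opk^A(f)\,U^{\mathrm{eff}}(t) - \Opk^A(f_t) \\
&\qquad = \int_0^t \dd s\; U^{\mathrm{eff}}(-s)\,\Opk^A\Bigl(\tfrac{i}{\eps}[\heff, f_{t-s}]_\magW - \bigl\{\heff, f_{t-s}\bigr\}_B\Bigr)\,U^{\mathrm{eff}}(s),
\end{align*}
combined with the symbol estimate
\begin{align*}
\tfrac{i}{\eps}[\heff, f_{t-s}]_\magW = \bigl\{\heff, f_{t-s}\bigr\}_B + \eps^2\, R_\eps(s),
\end{align*}
where $R_\eps(s) \in \BCont^\infty(T^*\R^d)$ with seminorms uniformly bounded in $\eps\in(0,\eps_0]$ and $s\in[-T,+T]$. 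The Duhamel identity is legitimate because $\partial_s f_{t-s} = -\bigl\{\heff, f_{t-s}\bigr\}_B$ by definition of the magnetic hamiltonian flow and because $\Opk^A(\heff)\,\Opk^A(f_{t-s}) = \Opk^A(\heff \magW f_{t-s})$ as bounded operators. The symbol estimate follows from Theorem~\ref{asymptotics:thm:asymptotic_expansion} applied at order $N=2$, once one observes that the Moyal commutator only sees the antisymmetric part of the product expansion and that those pieces sit at odd orders in $\eps$. A final application of the magnetic Caldéron--Vaillancourt theorem then gives $\snorm{\Opk^A(\eps^2 R_\eps(s))}_{\mathcal{B}(L^2(\BZ))} \leq C'\eps^2$ uniformly in $s$, and inserting this in the Duhamel formula produces the claimed bound with $C = C'\,T$.

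The main obstacle is the antisymmetry argument that upgrades the expected $O(\eps)$ remainder to $\ordere{2}$. Concretely, one must verify that the order-$\eps^2$ coefficient of $\magW$ is symmetric under exchange of the two arguments even in the presence of the magnetic phase $\oBel$ entering the product formula in Theorem~\ref{asymptotics:thm:equivalenceProduct}. This is a direct but tedious inspection of the explicit formulas for $(f\magW g)_{(2,k)}$ in Theorem~\ref{asymptotics:thm:asymptotic_expansion}: the magnetic corrections $\mathcal{L}_j$ are symmetric in $(y,z)$ (they contain a prefactor $B_{kl}\,y_k z_l$ multiplied by symmetric polynomials in $y,z$), so the only source of antisymmetry under $f\leftrightarrow g$ comes from the nonmagnetic symplectic factor $\mathcal{L}_0$; counting parities shows that only terms carrying an odd total number of $\mathcal{L}_0$-factors survive in the commutator, which forces the first correction beyond the magnetic Poisson bracket to appear at order $\eps^3$ in $[\heff,f_{t-s}]_\magW$ and hence at order $\eps^2$ after division by $\eps$.
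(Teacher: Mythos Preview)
Your overall strategy matches the paper's: Duhamel identity, cancellation of the magnetic Poisson bracket against the time derivative of $f_{t-s}$, and Caldéron--Vaillancourt to bound the remainder. Your remarks on flow regularity are appropriate and in fact more explicit than the paper's.

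The parity argument you give at the end, however, is incorrect as stated. You claim that the magnetic corrections $\mathcal{L}_j$ are \emph{symmetric} under $y\leftrightarrow z$, so that antisymmetry in the commutator can only come from $\mathcal{L}_0$. This is false already for $j=1$: from the explicit formula in Theorem~\ref{asymptotics:thm:asymptotic_expansion} one has $\mathcal{L}_1(x,y,z)=\tfrac{1}{2}B_{kl}(x)\,y_k z_l$, and since $B_{kl}=-B_{lk}$ this is \emph{anti}symmetric in $(y,z)$. Indeed it must be, because $\mathcal{L}_1$ is precisely what produces the magnetic part $-\lambda B_{lj}\,\partial_{\xi_l}f\,\partial_{\xi_j}g$ of the Poisson bracket in the first-order commutator. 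Worse, with your stated parities your own counting would leave the cross term $\mathcal{L}_0\mathcal{L}_1$ (one antisymmetric factor) alive at order $\eps^2$, contradicting the conclusion you want.

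The correct parity is $\mathcal{L}_j(x,z,y)=(-1)^j\,\mathcal{L}_j(x,y,z)$ for all $j\geq 0$. The cleanest way to see this is geometric: the flux triangle for $\gBe(x,z,y)$ is the image of the one for $\gBe(x,y,z)$ under the point reflection $p\mapsto 2x-p$, which amounts to $\eps\mapsto -\eps$ in the parametrisation, so $\gBe(x,z,y)=\gamma^B_{-\eps}(x,y,z)$; comparing the $\eps$-expansions of both sides gives the parity. Then any monomial $\mathcal{L}_{j_1}\cdots\mathcal{L}_{j_n}$ contributing at total order $\eps^N$ (with $j_1+\cdots+j_n=N$) picks up $(-1)^N$ under $y\leftrightarrow z$, so the entire order-$\eps^N$ coefficient of $e^{iT(x,Y,Z)}-e^{iT(x,Z,Y)}$ vanishes whenever $N$ is even. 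This is what forces the commutator to have only odd orders in $\eps$ and hence yields the $\ordere{2}$ remainder you need after division by $\eps$.
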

\begin{proof}
	Since $\heff \in \BCont^{\infty}(T^* \R^d)$, the flow inherits the smoothness and $f \circ \Phi^{\mathrm{eff}}_t , \frac{\dd}{\dd t} \bigl ( f \circ \Phi^{\mathrm{eff}}_t \bigr ) \in \SemiHoer{0}(\C)$ remain also $\Gamma^*$-periodic in the momentum variable. To compare the two time-evlutions, we use the usual Duhammel trick which yields 
	\begin{align}
		e^{+ i \frac{t}{\eps} \Opk^A(\heff)} \, &\Opk^A(f) \, e^{- i \frac{t}{\eps} \Opk^A(\heff)} - \Opk^A \bigl ( f \circ \Phi^{\mathrm{eff}}_t \bigr ) =
		\notag \\
		&= \int_0^t \dd s \, \frac{\dd}{\dd s} \Bigl ( e^{+ i \frac{s}{\eps} \Opk^A(\heff)} \, \Opk^A \bigl ( f \circ \Phi^{\mathrm{eff}}_{t-s} \bigr ) \, e^{- i \frac{s}{\eps} \Opk^A(\heff)} \Bigr )
		\notag \\
		&= \int_0^t \dd s \, e^{+ i \frac{s}{\eps} \Opk^A(\heff)} \cdot
		\Bigl ( \tfrac{i}{\eps} \bigl [ \Opk^A(\heff) , \Opk^A \bigl ( f \circ \Phi^{\mathrm{eff}}_{t-s} \bigr ) \bigr ] 
		+ \notag \\
		&\qquad \qquad \qquad \qquad \qquad \qquad 
		- \Opk^A \bigl ( \tfrac{\dd}{\dd s} \bigl ( f \circ \Phi^{\mathrm{eff}}_{t-s} \bigr ) \bigr ) \Bigr ) \, e^{- i \frac{s}{\eps} \Opk^A(\heff)} 
		\notag \\
		&= \int_0^t \dd s \, e^{+ i \frac{s}{\eps} \Opk^A(\heff)} \, \Opk^A \Bigl ( \tfrac{i}{\eps} \bigl [ \heff , f \circ \Phi^{\mathrm{eff}}_{t-s} \bigr ]_{\magW} 
		+ \notag \\
		&\qquad \qquad \qquad \qquad \qquad \qquad 
		- \bigl \{ \heff , f \circ \Phi^{\mathrm{eff}}_{t-s} \bigr \}_{\lambda B} \Bigr ) \, e^{- i \frac{s}{\eps} \Opk^A(\heff)} 
		. 
		\label{magsapt:modulation_field:eom:eqn:Egorov_Duhammel}
	\end{align}
	The magnetic Moyal commutator -- to first order -- agrees with the magnetic Poisson bracket, 
	\begin{align*}
		\tfrac{i}{\eps} \bigl [ \heff , f \circ \Phi^{\mathrm{eff}}_{t-s} \bigr ]_{\magW} &= \bigl \{ \heff , f \circ \Phi^{\mathrm{eff}}_{t-s} \bigr \}_{\lambda B} + \ordere{2} 
		\\
		&= \bigl ( \partial_{k_l} \heff \, \partial_{r_l} f - \partial_{r_l} \heff \, \partial_{k_l} f \bigr ) - \lambda B_{lj} \, \partial_{k_l} \heff \, \partial_{k_j} f + \ordere{2} 
		. 
	\end{align*}
	Hence, the term to be quantized in equation~\eqref{magsapt:modulation_field:eom:eqn:Egorov_Duhammel} vanishes up to first order in $\eps$, 
	\begin{align*}
		\mbox{r.h.s.~of \eqref{magsapt:modulation_field:eom:eqn:Egorov_Duhammel}} &= \int_0^t \dd s \, e^{+ i \frac{s}{\eps} \Opk^A(\heff)} \, \Opk^A \bigl ( 0 + \ordere{2} \bigr ) \, e^{- i \frac{s}{\eps} \Opk^A(\heff)} = \ordern(\eps^2)
		. 
	\end{align*}
	This finishes the proof. 
\end{proof}
The main result combines Proposition~\ref{magsapt:modulation_field:eom:prop:effective_sym_building_block} with the Egorov theorem we have just proven: 
\begin{thm}[Semiclassical limit\index{semiclassical limit}]\label{magsapt:modulation_field:eom:thm:semiclassical_limit}
	Let $\hat{H}$ satisfy Assumptions~\ref{magsapt:intro:assumption:V_Gamma}, \ref{magsapt:intro:assumption:em_fields}, \ref{magsapt:mag_sapt:mag_wc:defn:gap_condition} and if $d \geq 4$ also Assumption~\ref{magsapt:mag_sapt:mag_wc:assumption:smooth_frame}. Furthermore, let us assume the relevant part of the spectrum consists of a single non-degenerate Bloch band $\Eb$. Then for all macroscopic semiclassical observables $f$ (Definition~\ref{magsapt:modulation_field:eom:defn:semiclassical_observable}) the full quantum evolution can be approximated by the hamiltonian flow $\Phi^{\mathrm{macro}}_t$ as given in Proposition~\ref{magsapt:modulation_field:eom:prop:macro_flow} if the initial state is localized in the corresponding tilted eigenspace $\Zak^{-1} \Pi \Zak L^2(\R^d_x)$, 
	\begin{align}
		\Bnorm{\Zak^{-1} \Pi \Zak \, \Bigl ( e^{+ i \frac{t}{\eps} \hat{H}} \, \Opk^A(f) \, e^{- i \frac{t}{\eps} \hat{H}} - \Opk^A \bigl ( f \circ \Phi^{\mathrm{macro}}_t \bigr ) \Bigr ) \, \Zak^{-1} \Pi \Zak}_{\mathcal{B}(L^2(\R^d_x))} \leq C_T \eps^2 
		. 
	\end{align}
\end{thm}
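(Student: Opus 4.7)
The plan is to chain together three ingredients already established in this chapter: the adiabatic decoupling of Theorem~\ref{magsapt:modulation_field:effective_dynamics:thm:adiabatic_decoupling}, the intertwining identity of Proposition~\ref{magsapt:modulation_field:eom:prop:effective_sym_building_block}, and the Egorov-type theorem on the reference space (Theorem~\ref{magsapt:modulation_field:eom:egorov:thm:semiclassics}), glued by the flow identity of Proposition~\ref{magsapt:modulation_field:eom:prop:macro_flow}. Since $\Zak$ is unitary, conjugation by $\Zak$ preserves operator norms on $L^2(\R^d_x)$, so I would first transport the estimate to $\Htau$, where it reads as a bound on the sandwich $\Pi \bigl ( e^{+ i \frac{t}{\eps} \hat{H}^{\Zak}} \Opk^A(f) e^{- i \frac{t}{\eps} \hat{H}^{\Zak}} - \Opk^A(f \circ \Phi^{\mathrm{macro}}_t) \bigr ) \Pi$ in $\mathcal{B}(\Htau)$.

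Using $[\Pi,\hat{H}^{\Zak}] = \ordern(\eps^{\infty})$ together with $U \Pi U^* = \Piref + \ordern(\eps^{\infty})$ and the dynamical estimate~\eqref{magsapt:modulation_field:effective_dynamics:eqn:adiabatic_decoupling_dynamics}, a Duhamel argument lets me rewrite the compressed Heisenberg observable as $U^* \, e^{+ i \frac{t}{\eps} \Opk^A(\heff)} \, \Piref U \Opk^A(f) U^* \Piref \, e^{- i \frac{t}{\eps} \Opk^A(\heff)} \, U$ modulo $\ordern(\eps^{\infty}(1+\abs{t}))$. Proposition~\ref{magsapt:modulation_field:eom:prop:effective_sym_building_block} then identifies the compressed observable with $\Opk^A(f \circ T_{\mathrm{eff}}) + \ordern(\eps^2)$. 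Because $f \circ T_{\mathrm{eff}} \in \SemiHoer{0}$ is again a $\Gamma^*$-periodic macroscopic semiclassical observable (the perturbation $T_{\mathrm{eff}} - \mathrm{id}$ is an $\eps$-prefactored $\Gamma^*$-periodic $\BCont^{\infty}$ vector field built from the Berry connection $\BerryC$), Theorem~\ref{magsapt:modulation_field:eom:egorov:thm:semiclassics} then yields
\begin{align*}
e^{+ i \frac{t}{\eps} \Opk^A(\heff)} \, \Opk^A(f \circ T_{\mathrm{eff}}) \, e^{- i \frac{t}{\eps} \Opk^A(\heff)} = \Opk^A \bigl ( (f \circ T_{\mathrm{eff}}) \circ \Phi^{\mathrm{eff}}_t \bigr ) + \ordern(\eps^2)
\end{align*}
uniformly for $t \in [-T,+T]$.

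It remains to identify the symbol $f \circ (T_{\mathrm{eff}} \circ \Phi^{\mathrm{eff}}_t)$ with $(f \circ \Phi^{\mathrm{macro}}_t) \circ T_{\mathrm{eff}}$ modulo $\ordere{2}$. Proposition~\ref{magsapt:modulation_field:eom:prop:macro_flow} delivers $T_{\mathrm{eff}} \circ \Phi^{\mathrm{eff}}_t = \Phi^{\mathrm{macro}}_t \circ T_{\mathrm{eff}} + \ordere{2}$, uniformly on $[-T,+T]$, and a first-order Taylor expansion of $f$ around $\Phi^{\mathrm{macro}}_t \circ T_{\mathrm{eff}}$ produces the desired $\ordere{2}$ correction. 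Applying Proposition~\ref{magsapt:modulation_field:eom:prop:effective_sym_building_block} in reverse transports the resulting symbol back through $U$ and $\Piref$, giving $\Piref U \Opk^A(f \circ \Phi^{\mathrm{macro}}_t) U^* \Piref + \ordern(\eps^2)$. Reassembling the sandwich and undoing the Zak conjugation yields the claim with all errors dominated by $C_T \eps^2$ on $[-T,+T]$.

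The main obstacle is ensuring that the last Taylor-expansion step produces an $\ordere{2}$ error in a Hörmander class to which the magnetic Caldéron--Vaillancourt theorem (Theorem~\ref{magWQ:important_results:continuity_and_selfadjointness:thm:magnetic_Calderon_Vaillancourt}) applies, so that the symbolic $\ordere{2}$ is actually upgraded to an $L^2$-operator-norm $\ordern(\eps^2)$. This requires uniform-in-$t\in[-T,+T]$ bounds on arbitrary derivatives of $\Phi^{\mathrm{macro}}_t$, which hold because its generator $h_{\mathrm{sc}}$ has $\BCont^{\infty}$ coefficients by Assumptions~\ref{magsapt:intro:assumption:V_Gamma}--\ref{magsapt:mag_sapt:mag_wc:defn:gap_condition} and the smoothness of $\BerryC$ afforded by the gap. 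A Gronwall-type argument analogous to the one used in the non-magnetic semiclassical limit of Theorem~\ref{asymptotics:semiclassical_limit:thm:semiclassical_limit_observables} then provides the required uniform seminorm control.
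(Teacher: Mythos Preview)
Your proposal is correct and follows essentially the same route as the paper: conjugate by $\Zak$, pass to the reference representation via $\Pi = U^* \Piref U + \ordern(\eps^{\infty})$ and the dynamical estimate~\eqref{magsapt:modulation_field:effective_dynamics:eqn:adiabatic_decoupling_dynamics}, invoke Proposition~\ref{magsapt:modulation_field:eom:prop:effective_sym_building_block} to replace $\Piref U \Opk^A(f) U^* \Piref$ by $\Opk^A(f \circ T_{\mathrm{eff}})$, apply the Egorov theorem~\ref{magsapt:modulation_field:eom:egorov:thm:semiclassics}, use the flow identity of Proposition~\ref{magsapt:modulation_field:eom:prop:macro_flow} to rewrite $T_{\mathrm{eff}} \circ \Phi^{\mathrm{eff}}_t$ as $\Phi^{\mathrm{macro}}_t \circ T_{\mathrm{eff}} + \ordere{2}$, and finally undo Proposition~\ref{magsapt:modulation_field:eom:prop:effective_sym_building_block}. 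The paper executes this as a single chain of equalities and cites \cite[Lemma~5.24]{Teufel:adiabatic_perturbation_theory:2003} for the step that two flows with $\ordere{2}$-close Hamiltonian vector fields are themselves $\ordere{2}$-close; your explicit mention of the Caldéron--Vaillancourt upgrade and the Gronwall control on derivatives of $\Phi^{\mathrm{macro}}_t$ is in fact more detailed than the paper's presentation, which absorbs these points into the proof of Theorem~\ref{magsapt:modulation_field:eom:egorov:thm:semiclassics} and the cited lemma.
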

\begin{proof}
	We now combine all of these results to approximate the dynamics: let $f$ be a macroscopic observable. Then if we start with a state in the tilted eigenspace $\Zak^{-1} \Pi \Zak$, the time-evolved observable can be written as 
	\begin{align*}
		\Zak^{-1} \, \Pi \, &\Zak e^{- i \frac{t}{\eps} \hat{H}^{\eps}} \, \Zak^{-1} \Opk^A(f) \Zak \, e^{+ i \frac{t}{\eps} \hat{H}^{\eps}} \Zak^{-1} \Pi \Zak = 
		\\
		&= \Zak^{-1} \, \Pi e^{-i \frac{t}{\eps} \hat{H}^{\Zak}} \Opk^A(f) e^{+ i \frac{t}{\eps} \hat{H}^{\Zak}} \Pi \, \Zak 
		\\ 
		&= \Zak^{-1} \, U^{-1} \Piref U U^{-1} e^{- i \frac{t}{\eps} \hat{h}} U \Opk^A(f) U^{-1} e^{+ i \frac{t}{\eps} \hat{h}} U U^{-1} \Piref U \, \Zak + \ordern(\eps^{\infty})
		\\
		&= \Zak^{-1} \, U^{-1} \Piref e^{- i \frac{t}{\eps} \hat{h}} U \Opk^A(f) U^{-1} e^{+ i \frac{t}{\eps} \hat{h}} \Piref U \, \Zak + \ordern(\eps^{\infty})
		\\
		&= \Zak^{-1} \, U^{-1} \Piref e^{- i \frac{t}{\eps} \hat{h}_{\mathrm{eff}}} \Piref U \Opk^A(f) U^{-1} \Piref e^{+ i \frac{t}{\eps} \hat{h}_{\mathrm{eff}}} \Piref U \, \Zak + \ordern(\eps^{\infty})
		. 
	\end{align*}
	After replacing $U$ with $\Opk^A(u)$ (which adds another $\ordern(\eps^{\infty})$ error) and $\Piref$ with $\Opk^A(\piref)$, the term in the middle combines to the quantization of the effective observable $f_{\mathrm{eff}} = \piref \, u \magW f \magW u^* \, \piref$. We apply Proposition~\ref{magsapt:modulation_field:eom:prop:effective_sym_building_block} and the Egorov theorem involving $\heff$ and obtain 
	\begin{align*}
		\ldots &= \Zak^{-1} \, U^{-1} \Piref e^{- i \frac{t}{\eps} \hat{h}_{\mathrm{eff}}} \, \Opk^A \bigl (\piref \, u \magW f \magW u^* \, \piref \bigr ) \, e^{+ i \frac{t}{\eps} \hat{h}_{\mathrm{eff}}} \Piref U \, \Zak + \ordern(\eps^{\infty})
		\\
		&= \Zak^{-1} \, U^{-1} \Piref e^{- i \frac{t}{\eps} \hat{h}_{\mathrm{eff}}} \, \Opk^A \bigl (f_{\mathrm{eff}} \bigr ) \, e^{+ i \frac{t}{\eps} \hat{h}_{\mathrm{eff}}} \Piref U \, \Zak + \ordern(\eps^{\infty})
		\\
		&= \Zak^{-1} \, \Pi \, U^{-1} \, \Opk^A \bigl (f \circ T_{\mathrm{eff}} \circ \Phi^{\mathrm{eff}}_t \bigr ) \, U \, \Pi \, \Zak + \ordern(\eps^{2})
		. 
	\end{align*}
	Since two flows are $\ordere{2}$ close if the corresponding hamiltonian vector fields are \cite[Lemma~5.24]{Teufel:adiabatic_perturbation_theory:2003}, we conclude 
	\begin{align*}
		\ldots &= \Zak^{-1} \, \Pi \, U^{-1} \, \Opk^A \bigl (f \circ T_{\mathrm{eff}} \circ \Phi^{\mathrm{eff}}_t \circ T_{\mathrm{eff}}^{-1} \circ T_{\mathrm{eff}} \bigr ) \, U \, \Pi \, \Zak + \ordern(\eps^{2})
		\\
		&= \Zak^{-1} \, \Pi \, U^{-1} \, \Opk^A \bigl (f \circ \Phi^{\mathrm{macro}}_t \circ T_{\mathrm{eff}} \bigr ) \, U \, \Pi \, \Zak + \ordern(\eps^{\infty})
		\\
		&= \Zak^{-1} \, \Pi \, \Opk^A \bigl (f \circ \Phi^{\mathrm{macro}}_t \bigr ) \, \Pi \, \Zak + \ordern(\eps^{2})
		. 
	\end{align*}
	This finishes the proof. 
\end{proof}

\section{Physical relevance for the quantum Hall effect} 
\label{magsapt:physics}
Theorem~\ref{magsapt:modulation_field:eom:egorov:thm:semiclassics} relates the quantum dynamics associated to initial conditions in a relevant band to the semiclassical equations of motion 
\begin{align}
	\dot{r}_{\mathrm{eff}} &= + \nabla_{\keff} \heff - \eps \Omega \, \dot{k}_{\mathrm{eff}} + \order(\eps^2)
	\label{magsapt:physics:eqn:eom_reff}
	\\
	\dot{k}_{\mathrm{eff}} &= - \nabla_{\reff} \heff + \lambda B \, \dot{r}_{\mathrm{eff}} + \order(\eps^2)
	\label{magsapt:physics:eqn:eom_keff}
	. 
\end{align}
Since our main contribution is an extension of \cite{PST:effective_dynamics_Bloch:2003} to magnetic fields with components in $\BCont^{\infty}$, it includes the setting of the quantum Hall effect where a uniform field $B$ is applied to a quasi-two-dimensional sample. Can we explain the quantization of the Hall current? 

To put things into perspective, let us start with a few experimental facts: first of all, in the experimental setting of the quantum Hall effect, the magnetic fields are very weak. Typical crystals have lattice constants of $\unit[3.5 \sim 11]{Å}$ \cite{Grosso_Parravicini:solid_state_physics:2003} while the magnetic fields range from $\unit[1 \sim 12]{T}$ \cite{von_Klitzing_Dorda_Pepper:quantum_hall_effect:1980}. Thus, the magnetic flux is in silicon (lattice constant $\approx \unit[5]{Å}]$) is typically smaller than 
\begin{align*}
	\Phi \simeq \unit[5^2 \cdot 10]{T Å^2} \approx \unit[2.5 \cdot 10^{-18}]{T m^2} 
	, 
\end{align*}
\ie it is roughly $10^{-3}$ times the flux quantum $\Phi_0 \approx \unit[2 \cdot 10^{-15}]{T m^2}$, and we are \emph{always in the weak field regime}. The magnetic flux scales with $\eps$ in the relevant equations (\eg the magnetic Weyl product, equation~\eqref{asymptotics:expansions:eqn:Fourier_form_magnetic_Weyl_product}), this means $\eps \lambda \simeq 10^{-3}$ and thus, our mathematical description of the model is consistent with the physics. To be clear: we \emph{need not} assume that the magnetic flux through the unit cell $\WS$ is rational. 

Unfortunately, our result \emph{predicts no quantization} of the Hall current: in our model, the Hall quantization is linked to the Chern number which is associated to the relevant bands and propotional to $\int_{\BZ} \dd k \, \Omega(k)$. To see that, we plug \eqref{magsapt:physics:eqn:eom_keff} into equation~\eqref{magsapt:physics:eqn:eom_reff} and solve for $\dot{r}_{\mathrm{eff}}$, 
\begin{align*}
	(1 + \eps \lambda \Omega B) \dot{r}_{\mathrm{eff}} &= \nabla_{\keff} \heff + \eps \, \Omega \, \nabla_{\reff} \phi + \order(\eps^2) 
	. 
\end{align*}
If $\eps \lambda$ is small enough, then the factor left-hand side (which is really a scalar-multiple of the identity) is invertible and using the geometric series, we can obtain an explicit expression for 
\begin{align*}
	\dot{r}_{\mathrm{eff}} &= (1 + \eps \lambda \, \Omega \, B)^{-1} \bigl ( \nabla_{\keff} \heff + \eps \, \Omega \nabla_{\reff} \phi \bigr ) + \order(\eps^2) 
	\\
	&=  \nabla_{\keff} \heff - \eps \, \bigl (-  \Omega \, \nabla_{\reff} \phi + \lambda \, \Omega \, B \, \nabla_{\keff} \Eb \bigr ) + \order(\eps^2) 
	. 
\end{align*}
Assuming the band $\Eb$ is completely filled\footnote{The materials used in the quantum Hall effect are semiconductors. If they are gapped, the Fermi level $E_{\mathrm{F}}$ lies in the gap and all bands below $E_{\mathrm{F}}$ are completely filled. } we need to integrate with respect to the constant density $\sabs{\BZ}^{-1}$ in crystal momentum to get an averaged current, 
\begin{align}
	j(\reff) :=& \, \frac{1}{\eps} \frac{1}{\sabs{\BZ}} \int_{\BZ} \dd \keff \; \dot{r}_{\mathrm{eff}}(\reff,\keff)
	\notag \\
	=& \, \frac{1}{\eps} \frac{1}{\sabs{\BZ}} \int_{\BZ} \dd \keff \, \Bigl ( \nabla_{\keff} \heff - \eps \, \bigl (-  \Omega \, \nabla_{\reff} \phi + \lambda \, \Omega \, B \, \nabla_{\keff} \Eb \bigr ) \Bigr ) + \order(\eps) 
	\notag \\
	=& \, \frac{1}{\sabs{\BZ}} \underbrace{\left ( \int_{\BZ} \dd \keff \, \Omega \right )}_{= 0} \, \nabla_{\reff} \phi - \frac{\lambda}{\sabs{\BZ}} \int_{\BZ} \dd \keff \, \Omega \, B \, \nabla_{\keff} \Eb + \order(\eps) 
	. 
\end{align}
The terms involving the gradient of $\heff$ are $\Gamma^*$-periodic and thus their integral over the Brillouin zone vanishes. We are left with two terms: the first one is a multiple of the first Chern class. Since the Berry curvature $\Omega$ involves Bloch functions associated to the \emph{non}-magnetic hamiltonian, the first Chern class always vanishes \cite{Panati:triviality_Bloch_bundle:2006}. The second term is independent of $\reff$ and grows linearly in the strength of the magnetic field as $\Omega$ and $\Eb$ are determined by the unperturbed hamiltonian where the external fields vanish and thus independent of $\lambda$. 

The space-adiabatic point of view interprets the quantum Hall system as a small perturbation of the field-free case. Unfortunately, the predictions based on this model do not agree with experimental facts. This suggests that impurities and defects are indeed crucial to explain the quantization of transverse conductance: according to experiment, the quantum Hall effect is very robust and independent of sample size, sample geometry and details of the defect distribution \cite{Tscheuschner_et_al:robustness_quantum_Hall_effect:1998}. In fact, impurities are necessary to observe the effect \cite{Koch_Haug_von_Klitzing:impurities_quantum_Hall_effect:1991,Stoermer:robustness_quantum_Hall_effect:1998,von_Klitzing:quantum_Hall_effect:2004}. 


\chapter{An Algebraic Point of View} 
\label{algebraicPOV}
%
Magnetic pseudodifferential theory on $\R^d$ and $\T^d \subset \R^d$ can be recast in the language of twisted crossed products. These twisted crossed products are $C^*$-algebras whose elements are preimages of certain magnetic $\Psi$DO on $L^2(\R^d)$ under quantization and encapsulate many of the features of the corresponding operator algebras. It conceptually separates the algebraic object from the representation on Hilbert spaces. In this language, $\OpA$ is a \emph{representation} of more fundamental algebra of quantum observables -- more fundamental, because no reference is made to a Hilbert space or a vector potential $A$ associated to the magnetic field $B = \dd A$. This fundamental algebra depends only on the magnetic field $B$; the necessity to choose a vector potential associated to $B$ arises when one wants to represent this abstract algebra on a Hilbert space. 

In the algebraic approach, the behavior of the `potentials' is encoded in a $C^*$-algebra $\Alg \subseteq \BCont_u(\R^d)$. This algebra needs to be stable under translations, \ie for all $\varphi \in \Alg$ and $x \in \R^d$ the translated function $\varphi(\cdot + x) \in \Alg$ is still in the algebra. For instance, periodicity or decay properties can be encoded in $\Alg$. We shall henceforth call this algebra `anisotropy.' 
\medskip

\noindent
So let us present the general setup in the next two sections following  \cite{Mantoiu_Purice_Richard:twisted_X_products:2004}: let $\Alg$ be a separable, \emph{abelian} $C^*$-algebra with a group action\index{group action} $\theta : \Xgroup \longrightarrow \Aut(\Alg)$. Here, $\Xgroup$ is an abelian, second countable, locally compact group. The idea is to start with a crossed product similar to that in group theory and then to introduce a magnetic twist. The \emph{a priori} justification for this procedure is that we recover the usual pseudodifferential theory for $\Xgroup = \R^d$. The theory of twisted crossed products can be stated in much more generality, see  \cite{Packer_Raeburn:twisted_X_products_1:1989,Packer_Raeburn:twisted_X_products_2:1990}, for instance. In particular, neither $\Xgroup$ nor $\Alg$ need to be commutative. However, if $\Alg$ is commutative, we can completely characterize it by Gelfand theory.

\section{Twisted crossed products} 
\label{algebraicPOV:twisted_crossed_products}
Let us start with standard, non-magnetic pseudodifferential theory: the Weyl quantization 
\begin{align*}
	\bigl ( \Op(f) u \bigr )(x) :=& \frac{1}{(2\pi)^d} \int_{\R^d} \dd y \int_{{\R^d}^*} \dd \eta \, e^{-i \eta \cdot (y - x)} \, f \bigl ( \tfrac{1}{2} (x + y) , \eta \bigr ) \, u(y) 
	\notag \\
	=& \frac{1}{(2\pi)^{\nicefrac{d}{2}}} \int_{\R^d} \dd y  \, (\Fourier f) \bigl ( \tfrac{1}{2} (x + y) , y - x \bigr ) \, u(y) 
\end{align*}
%
of suitable functions $f : \R^d \times {\R^d}^* \longrightarrow \C$ is a representation of a `quantum algebra' on $L^2(\R^d)$. If $\Fourier f \in L^1(\R^d ; \BCont_u(\R^d))$ and $u \in L^2(\R^d)$, for instance, the above integral is absolutely convergent and we do not need oscillatory integral techniques as in Chapters~\ref{magWQ} and \ref{asymptotics}. However, as there is no simple characterization of $\Fourier^{-1} L^1(\R^d ; \BCont_u(\R^d)) \subset \Cont_{\infty}({\R^d}^* ; \BCont_u(\R^d))$, we introduce another representation 
\begin{align}
	\bigl ( \Rep(f) u \bigr ) (x) := \int_{\R^d} \dd y \, f \bigl ( \tfrac{1}{2}(x + y) , y - x \bigr ) \, u(y) 
\end{align}
where $f \in L^1(\R^d ; \BCont_u(\R^d))$ now and $(2 \pi)^{- \nicefrac{d}{2}}$ has been absorbed into the measure for convenience. The representations $\Op$ and $\Rep$ are related by partial Fourier transform, $\Op = \Rep \circ \Fourier$. Similar to the Weyl product $\Weyl$, there is an induced product $\repcom$ such that 
\begin{align}
	\Rep(f) \, \Rep(g) &= \Rep (f \repcom g) 
	\notag \\
	(f \repcom g)(x,y) &= \int_{\R^d} \dd x' \, f \bigl (y + \tfrac{1}{2} (x - x') , x' \bigr ) \, g \bigl ( y + \tfrac{1}{2} x' , x - x' \bigr ) 
	. 
	\label{algebraicPOV:twisted_crossed_products:eqn:untwisted_product_Rd}
\end{align}
This composition law is a sort of twisted convolution and it is easily proven that 
\begin{align*}
	\repcom : L^1(\R^d ; \BCont_u(\R^d)) \times L^1(\R^d ; \BCont_u(\R^d)) \longrightarrow L^1(\R^d ; \BCont_u(\R^d)) 
	. 
\end{align*}
If we add $f^{\repcom}(x) := f(-x)^{\ast}$ as involution, $(L^1(\R^d ; \BCont_u(\R^d)) , \repcom , {}^{\repcom})$ forms a Banach-$*$ algebra\index{Banach-$*$ algebra}. Completions of Banach-$*$ algebras of this type with respect to a natural $C^*$-norm are called \emph{crossed products $\Alg \rtimes_{\theta} \R^d$}. 

Before we continue, let us quickly recall some basic facts of Gelfand theory first.

\subsection{Gelfand theory} 
\label{algebraicPOV:twisted_crossed_products:gelfand_theory}

Abelian $C^*$-algebras $\Alg$ are \emph{completely classified by Gelfand theory}: simply put, the Gel\-fand-Naimark theorem says that they are always isomorphic to $\Cont_{\infty}(\Omega)$, the space of continuous complex-valued functions which decay at infinity, where $\Omega$ is a suitable locally convex space. With multiplication and involution declared pointwise in the usual manner and $\sup$ norm, this space is indeed a commutative $C^*$-algebra. 

For a $C^*$-algebra $\Alg$, \emph{which we always take to be separable}, we define the \emph{Gelfand spectrum}\index{Gelfand spectrum} 
\begin{align}
	\Salg := \bigl \{ h : \Alg \longrightarrow \C \; \vert \; h (\varphi \, \psi) = h(\varphi) \, h(\psi) \; \forall \varphi , \psi \in \Alg \bigr \} 
\end{align}
as the set of all morphisms. They are in one-to-one correspondence with the set of maximal ideals. Endowed with the topology of simple convergence (the weak-$\ast$ topology), $\Salg$ is a locally convex space. It is \emph{compact} if and only if $\Alg$ is \emph{unital}; in that case $\Cont_{\infty}(\Salg)$ coincides with $\BCont(\Salg)$. 

The Gelfand isomorphism\index{Gelfand isomorphism} $\Gelf : \Alg \longrightarrow \Cont_{\infty}(\Salg)$, $(\Gelf (\varphi))(\kappa) := \kappa(\varphi)$, establishes the equivalence between abstract abelian $C^*$-algebras $\Alg$ and $C^*$-algebras of the form $\Cont_{\infty}(\Omega)$ for some locally compact $\Omega$. 
The Gelfand map $\Gelf$ is really an isomorphism: the weak-$*$ topology separates points in $\Salg$ and hence we can apply the Stone-Weierstrass theorem. 

If $\Alg$ is not unital, then we need to define the multiplier algebra $\mathcal{M}(\Alg)$ as the class of all \emph{double centralizers} \cite{Dixmier:C_star_algebras:1977}. $\mathcal{M}(\Alg)$ contains $\Alg$ densely with respect to the \emph{strict topology}, \ie the locally convex topology induced by the family of seminorms $\bigl \{ \snorm{\cdot}_{\varphi} \bigr \}_{\varphi \in \Alg}$, $\snorm{m}_{\varphi} := \snorm{m \, \varphi}_{\Alg}$. In case $\Alg$ does not have a unit, we define the set of unitary elements of $\Alg$ as $\mathcal{U}(\Alg) := \mathcal{U} \bigl ( \mathcal{M}(\Alg) \bigr )$. 

In view of the Gelfand isomorphism, we can view the multiplier algebra $\mathcal{M}(\Alg)$ as $\BCont(\Salg)$ \cite{Raeburn_Williams:Morita_equivalence:1998} and the unitary group $\mathcal{U}(\Alg)$ can be seen as $\Cont(\Salg;\T)$ where $\T := \bigl \{ z \in \C \; \vert \; \abs{z} = 1 \bigr \}$ is the unit circle. On $\Cont(\Salg ; \T)$, the strict topology coincides with the topology of uniform convergence on compact subsets. 

Of particular importance are $\ast$-subalgebras of $\BCont_u(\Xgroup)$, the bounded, uniformly continuous functions, $\Alg \subseteq \BCont_u(\Xgroup)$. We will always assume that $\Alg$ is translation-invariant: for all $\varphi \in \Alg$ and $x \in \Xgroup$, $\varphi(\cdot + x)$ is in $\Alg$. For such algebras, we may include $\Xgroup$ into $\Salg$ via $\imath_{\Alg} : \Xgroup \longrightarrow \Salg$, $x \mapsto \imath_{\Alg}(x) := \delta_x$, although in general $\imath_{\Alg}$ is neither injective nor surjective. The Gelfand isomorphism $\Gelf : \Alg \longrightarrow \Cont_{\infty}(\Salg)$ maps each $\varphi \in \Alg$ onto $\tilde{\varphi} \in \Cont_{\infty}(\Salg)$ which are related via $\varphi = \tilde{\varphi} \circ \imath_{\Alg}$. As a matter of fact, $\imath_{\Alg}$ is injective if and only if $\Cont_{\infty}(\Xgroup) \subseteq \Alg$, a consequence of the Stone-Weierstrass theorem. We can say even more: 
%
\begin{lem}[\cite{Lein_Mantoiu_Richard:anisotropic_mag_pseudo:2009}]\label{algebraicPOV:twisted_crossed_products:lem:C0_in_or_out}
	Let $\Alg$ be $C^*$-subalgebra of $\BCont(\Xgroup)$ which is stable under translations. Then $\Cont_{\infty}(\Xgroup) \cap \Alg$ is either $\{ 0 \}$ or $\Cont_{\infty}(\Xgroup)$. 
\end{lem}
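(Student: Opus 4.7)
The plan is to view $\mathcal{I} := \Cont_{\infty}(\Xgroup) \cap \Alg$ as a closed, translation-invariant $C^*$-subalgebra of $\Cont_{\infty}(\Xgroup)$ and to apply Stone--Weierstrass in its locally compact Hausdorff version. This reduces the lemma to showing that, whenever $\mathcal{I} \neq \{0\}$, the family $\mathcal{I}$ separates the points of $\Xgroup$ and vanishes nowhere on $\Xgroup$.

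To that end, I would fix a nonzero $\varphi \in \mathcal{I}$ and pass to $\phi := \varphi^{\ast} \varphi \in \mathcal{I}$, which is nonnegative, lies in $\Cont_{\infty}(\Xgroup)$, and therefore attains its supremum $M = \snorm{\phi}_{\infty} > 0$ at some point $x_{0} \in \Xgroup$. For the vanishing-nowhere condition, I would observe that for any $y \in \Xgroup$ the translate $\phi(\cdot + x_0 - y)$ still belongs to $\mathcal{I}$ (by translation invariance of both $\Alg$ and $\Cont_{\infty}(\Xgroup)$) and takes the value $M > 0$ at $y$.

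The point-separation step is the only delicate one, and I would handle it by contradiction: given $y_{1} \neq y_{2}$, assume every translate $\phi(\cdot + y)$ satisfies $\phi(y_{1} + y) = \phi(y_{2} + y)$. Since $y$ is arbitrary, this forces $\phi$ to be periodic with period $y_{2} - y_{1} \neq 0$; but a nonzero periodic continuous function cannot decay at infinity, contradicting $\phi \in \Cont_{\infty}(\Xgroup) \setminus \{0\}$. Hence some translate of $\phi$, still in $\mathcal{I}$, takes different values at $y_{1}$ and $y_{2}$.

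The main obstacle, such as it is, lies in this last step: one must use the full strength of $\phi$ being in $\Cont_{\infty}(\Xgroup)$ (as opposed to merely in $\BCont(\Xgroup)$) to rule out the periodic alternative. Once both Stone--Weierstrass hypotheses are in place, the theorem (applied within the ambient $C^{\ast}$-algebra $\Cont_{\infty}(\Xgroup)$, using that $\mathcal{I}$ is automatically closed under complex conjugation) yields $\mathcal{I} = \Cont_{\infty}(\Xgroup)$, completing the dichotomy.
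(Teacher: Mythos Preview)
Your proof is correct and follows essentially the same route as the paper: both arguments apply the Stone--Weierstrass theorem to the $\ast$-subalgebra of $\Cont_{\infty}(\Xgroup)$ generated by the translates of a single nonzero element of $\Cont_{\infty}(\Xgroup) \cap \Alg$. The paper's proof is terser --- it simply asserts that the translates separate points --- whereas you spell out both the non-vanishing condition and the point-separation step, the latter via the clean observation that failure to separate would force periodicity, which is incompatible with nonzero decay at infinity.
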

%
%
\begin{proof}
	Assume there exists a non-zero $\varphi \in \Cont_{\infty}(\Xgroup) \cap \Alg$. Since both, $\Cont_{\infty}(\Xgroup)$ and $\Alg$ are stable under translations, the $\ast$-subalgebra generated by $\{ \theta_x[\varphi] \; \vert \; x \in \Xgroup \}$ is dense in $\Cont_{\infty}(\Xgroup) \cap \Alg$ and separates points. By the Stone-Weierstrass theorem, this family is dense in $\Cont_{\infty}(\Xgroup)$, and by taking the closure, it follows that $\Cont_{\infty}(\Xgroup)$ is contained in $\Alg$. 
\end{proof}
In case $\Alg$ is unital and contains $\Cont_{\infty}(\Xgroup)$, we can view $\Salg$ as compactification of $\Xgroup$. The points $\Salg \setminus \imath_{\Alg}(\Xgroup) =: \SalgComp$ are then the points `located at infinity' which encode the asymptotic behavior of $\varphi$. 


\subsection{Crossed Products} 
\label{algebraicPOV:twisted_crossed_products:crossed_products}

Assume $\Xgroup$ is an abelian, locally compact, second countable group. Then we can naturally associate a $C^*$-algebra to it, namely 
\begin{align*}
	C^*(\Xgroup) := \bigl ( \overline{L^1(\Xgroup)}^{\norm{\cdot}} , \ast , {}^{\ast} \bigr ) 
	. 
\end{align*}
Here $\norm{\cdot}$ denotes the closure of the space of absolutely integrable complex-valued functions with respect to some suitable $C^*$-norm (given by equation~\eqref{algebraicPOV:twisted_crossed_products:eqn:Cstar_norm}), $f^{\ast}(x) := f(-x)^{\ast}$ is the involution and the convolution 
\begin{align}
	(f \ast g)(x) := \int_{\Xgroup} \dd y \, f(x - y) \, g(y) 
\end{align}
as product. We always integrate with respect to the Haar measure of $\Xgroup$. From standard theory, we know $\ast : L^1(\Xgroup) \times L^1(\Xgroup) \longrightarrow L^1(\Xgroup)$ and thus the convolution $\ast$ extends nicely to all of $C^*(\Xgroup)$. This $C^*$-algebra is in one-to-one correspondence with $\Cont_{\infty}(\dualX)$ equipped with the pointwise product and $\sup$ norm. $\dualX$ is the dual group of $\Xgroup$ (see Definition~\ref{algebraicPOV:generalized_weyl_calculus:defn:dual_group}) and the $C^*$-isomorphism is given by the Fourier transform. 

This definition can be generalized to $C^*$-dynamical systems which are the building blocks for `quantum algebras.' 
\begin{defn}[$C^*$-dynamical system\index{$C^*$-dynamical system}]
	A $C^*$-dynamical system is a triple $(\Alg,\theta, \Xgroup)$ formed by 
	\begin{enumerate}[(i)]
		\item an abelian, second countable locally compact group $\Xgroup$, 
		\item an abelian, separable $C^*$-algebra $\Alg$, and 
		\item a group morphism $\theta : \Xgroup \longrightarrow \mathrm{Aut}(\Alg)$ of $\Xgroup$ into the group of automorphisms on $\Alg$ such that for any $\varphi \in \Alg$ and $x \in \Xgroup$, the map $x \mapsto \theta_x[\varphi]$ is (norm-)continuous. 
	\end{enumerate}
\end{defn}
The crossed product is once again the completion of an $L^1$-space with a $\theta$-dependent product. The space $L^1(\Xgroup ; \Alg)$ is the space of Bochner-integrable functions with norm 
\begin{align}
	\snorm{f}_{L^1} := \int_{\Xgroup} \dd x \, \snorm{f(x)}_{\Alg} 
	. 
\end{align}
For any endomorphism $\tau \in \mathrm{End}(\Xgroup)$, we define 
\begin{align}
	(f \reptau g)(x) := \int_{\Xgroup} \dd y \, \theta_{\tau(y-x)} \bigl [ f(y) \bigr ] \, \theta_{(\id - \tau)(y)} \bigl [ g(x - y) \bigr ] \in \Alg
	\label{algebraicPOV:twisted_crossed_products:eqn:crossed_prod_tau}
	. 
\end{align}
In case of $\Xgroup = \R^d$, a choice of $\tau$ corresponds to a choice in `operator ordering:'\index{operator ordering} $\tau = \nicefrac{1}{2}$ gives the symmetric Weyl ordering. It can be checked that this `twisted' convolution maps $L^1(\Xgroup ; \Alg) \times L^1(\Xgroup ; \Alg)$ onto $L^1(\Xgroup ; \Alg)$; hence $L^1(\Xgroup ; \Alg)$ together with $f^{\reptau}(x) := \theta_{(\id - 2 \tau)(x)} \bigl [ f(-x)^{\ast} \bigr ]$ as involution, the triple $\bigl ( L^1(\Xgroup ; \Alg) , \reptau , {}^{\reptau} \bigr )$ forms a Banach-$\ast$ algebra. We will show later on for the twisted case that different choices of $\tau \in \mathrm{End}(\Xgroup)$ will lead to \emph{isomorphic algebras} (Lemma~\ref{algebraicPOV:twisted_crossed_products:cor:equivalence_twistedXproducts_tau}) and hence we will often suppress the $\tau$-dependence. $L^1(\Xgroup ; \Alg)$ is also an $A^*$-algebra\index{$A^*$-algebra}, meaning we can make it into a $C^*$-algebra by taking the completion with respect to the $C^*$-norm 
\begin{align}
	\snorm{f}_{\sXprod} := \sup \Bigl \{ \bnorm{\pi(f)}_{\mathcal{B}(\Hil)} \; \big \vert \; \mbox{$\pi$ non-degenerate representation on $\Hil$} \Bigr \} 
	. 
	\label{algebraicPOV:twisted_crossed_products:eqn:Cstar_norm}
\end{align}
Thus we define the 
\begin{defn}[Crossed product\index{crossed product}]
	The envelopping $C^*$-algebra $\Xprod \equiv \sXprod$ of the Banach-$\ast$ algebra $\bigl ( L^1(\Xgroup ; \Alg) , \reptau , {}^{\reptau} \bigr )$ will be called the \emph{crossed product of $\Alg$ by $\Xgroup$} associated with the action $\theta$ and the endomorphism $\tau$. 
\end{defn}
Certainly, by definition $L^1(\Xgroup ; \Alg)$ is dense in $\Xprod$ as are all dense subspaces of $L^1(\Xgroup ; \Alg)$. 

In principle, these crossed products are, in a suitable sense, prototypical, because \emph{twisted} crossed products that will be the topic of the next section can be written as untwisted crossed products of suitably enlarged algebras via a stabilization trick \cite{Packer_Raeburn:twisted_X_products_1:1989}. 
\medskip

\noindent
Crossed products can be \emph{covariantly represented} on a Hilbert spaces $\Hil$: the representation $r : \Alg \longrightarrow \mathcal{B}(\Hil)$ of the algebra $\Alg$ and the unitary representation $T : \Xgroup \longrightarrow \mathcal{U}(\Hil)$ of the group $\Xgroup$, need to intertwine correctly: 
\begin{defn}[Covariant representation\index{representation!covariant}]
	A covariant representation of a $C^*$-dyna\-mical system $(\Alg , \theta , \Xgroup)$ is a triple $(\Hil , r , T)$ where $\Hil$ is a separable Hilbert space and $r : \Alg \longrightarrow \mathcal{B}(\Hil)$ and $T : \Xgroup \longrightarrow \mathcal{U}(\Hil)$ are maps with the following properties: 
	\begin{enumerate}[(i)]
		\item $r$ is a non-degenerate $\ast$-representation, 
		\item $T$ is a strongly continuous unitary representation of $\Xgroup$ in $\Hil$, and 
		\item $T(x) \, r(\varphi) \, T(x)^* = r \bigl ( \theta_x [\varphi] \bigr )$ for all $x \in \Xgroup$ and $\varphi \in \Alg$, \ie $T$ and $\theta$ are intertwined via $r$. 
	\end{enumerate}
\end{defn}
\begin{lem}
	If the triple $(\Hil , r , T)$ is a covariant representation of the $C^*$-dynamical system $(\Alg , \theta , \Xgroup)$ and $\tau \in \mathrm{End}(\Xgroup)$, then $r \rtimes_{\tau} T$ defined on $L^1(\Xgroup ; \Alg)$ by 
	\begin{align}
		r \rtimes_{\tau} T (f) := \int_{\Xgroup} \dd x \, \theta_{\tau(x)} \bigl [ f(x) \bigr ] \, T(x) 
	\end{align}
	extends to a representation of $\Xprod$, called the \emph{integrated form of $(r,T)$}. 
\end{lem}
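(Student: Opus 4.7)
The plan is to proceed in three stages: first establish that $r \rtimes_\tau T(f)$ is a well-defined bounded operator on $\Hil$ for $f \in L^1(\Xgroup;\Alg)$, then verify that $r \rtimes_\tau T$ is a $\ast$-homomorphism of the Banach-$\ast$ algebra $\bigl(L^1(\Xgroup;\Alg),\reptau,{}^{\reptau}\bigr)$ into $\mathcal{B}(\Hil)$, and finally invoke the universal property of the enveloping $C^*$-algebra $\Xprod$ to extend it.

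For the first step, since $r$ is a $\ast$-representation (hence contractive), $T(x)$ is unitary, and each $\theta_{\tau(x)}$ is an isometric $\ast$-automorphism of $\Alg$, the integrand $x\mapsto r\bigl(\theta_{\tau(x)}[f(x)]\bigr)T(x)$ is strongly continuous with $\snorm{r(\theta_{\tau(x)}[f(x)])T(x)}_{\mathcal{B}(\Hil)}\leq \snorm{f(x)}_{\Alg}$. Thus the integral exists as a Bochner integral in the strong operator topology and satisfies the bound $\snorm{r\rtimes_\tau T(f)}_{\mathcal{B}(\Hil)}\leq \snorm{f}_{L^1}$.

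The core of the argument — and the main obstacle — is the multiplicativity $r\rtimes_\tau T(f\reptau g)=r\rtimes_\tau T(f)\cdot r\rtimes_\tau T(g)$. Starting from the right-hand side, I would apply Fubini to the double integral and repeatedly use the covariance relation $T(x)\,r(\varphi)=r\bigl(\theta_x[\varphi]\bigr)\,T(x)$ to push $T(x)$ through $r\bigl(\theta_{\tau(y)}[g(y)]\bigr)$, producing $r\bigl(\theta_{\tau(x)}[f(x)]\,\theta_{x+\tau(y)}[g(y)]\bigr)\,T(x+y)$. After the substitution $z=x+y$, the identity needed reads
\begin{equation*}
\theta_{\tau(z)}\bigl[(f\reptau g)(z)\bigr]=\int_{\Xgroup}\dd x\,\theta_{\tau(x)}[f(x)]\,\theta_{x+\tau(z-x)}[g(z-x)],
\end{equation*}
and this is precisely where the assumption that $\tau$ is an \emph{endomorphism} becomes essential: expanding the left-hand side using equation~\eqref{algebraicPOV:twisted_crossed_products:eqn:crossed_prod_tau} and collecting arguments of $\theta$ yields $\tau(z)+(\id-\tau)(y)=y+\tau(z-y)$, which holds exactly because $\tau(z)-\tau(y)=\tau(z-y)$. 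The $\ast$-compatibility $r\rtimes_\tau T(f^{\reptau})=r\rtimes_\tau T(f)^{\ast}$ is then a short calculation using $T(x)^{\ast}=T(-x)$, the $\ast$-homomorphism property of $r$, and the change of variables $x\mapsto -x$; the exponent $(\id-2\tau)$ in the definition of $f^{\reptau}$ is precisely what is needed so that both sides reduce to $\int_{\Xgroup}\dd x\,r\bigl(\theta_{(\id-\tau)(x)}[f(-x)^{\ast}]\bigr)\,T(x)$.

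Finally, since $r\rtimes_\tau T$ is a $\ast$-homomorphism from the Banach-$\ast$ algebra $L^1(\Xgroup;\Alg)$ into the $C^*$-algebra $\mathcal{B}(\Hil)$ (after, if necessary, restricting to the essential subspace to obtain a non-degenerate representation, with the orthogonal complement carrying the zero representation), the universal property of the enveloping $C^*$-algebra encoded in equation~\eqref{algebraicPOV:twisted_crossed_products:eqn:Cstar_norm} yields $\snorm{r\rtimes_\tau T(f)}_{\mathcal{B}(\Hil)}\leq \snorm{f}_{\sXprod}$. Density of $L^1(\Xgroup;\Alg)$ in $\Xprod$ then allows a unique continuous extension to a $\ast$-representation of $\Xprod$ on $\Hil$.
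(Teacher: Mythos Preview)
Your proof is correct and complete. The paper states this lemma without proof (it is a standard result in the theory of crossed products, cited implicitly from the general literature), so there is no paper argument to compare against; your three-step approach---boundedness via the $L^1$-estimate, direct verification of the $\ast$-homomorphism identities using covariance and the endomorphism property of $\tau$, then extension by the universal property of the enveloping $C^*$-algebra---is exactly the canonical route. Your observation that the identity $\tau(z)-\tau(x)=\tau(z-x)$ is precisely where the endomorphism hypothesis enters is the key structural point.
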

Commonly, $\Alg$ is a $C^*$-subalgebra of $\BCont_u(\Xgroup)$; then a particularly important representation $\Rep_{\tau}$ is that on $\Hil = L^2(\Xgroup)$: for any $u \in L^2(\Xgroup)$, it reads 
\begin{align}
	\bigl ( \Rep_{\tau}(f) u \bigr )(x) := \int_{\Xgroup} \dd y \, f \bigl ( (\id - \tau)(x) - \tau(y) ; y - x \bigr ) \, u(y) 
	. 
	\label{algebraicPOV:twisted_crossed_products:eqn:Rep_tau}
\end{align}
One can directly check that in case of $\Xgroup = \R^d$, we recover Weyl quantization if we set $\tau = \nicefrac{1}{2}$; $\tau = 0$ and $\tau = 1$ correspond to standard and anti-standard ordering (all derivatives to the right or left, respectively). Lastly, we would like to point out that by definition of the various norms, we can estimate the norm of \emph{any} covariant representation $r \rtimes_{\tau} T$ of $f \in \Xprod$ by 
\begin{align*}
	\bnorm{r \rtimes_{\tau} T(f)} \leq \bnorm{f}_{\sXprod} \leq \bnorm{f}_{L^1}
\end{align*}
where the right-most norm may be infinite. 


\subsection{Twisted crossed products} 
\label{algebraicPOV:twisted_crossed_products:twisted_crossed_products}
If we compare the product formula for crossed products, equation~\eqref{algebraicPOV:twisted_crossed_products:eqn:crossed_prod_tau}, and the magnetic product as in the introduction, we see that we can insert a factor of modulus $1$ that \emph{twists} the product even more, 
\begin{align*}
	(f \repom g)(x) := \int_{\Xgroup} \dd y \, \theta_{\tau(y-x)} \bigl [ f(y) \bigr ] \, \theta_{(\id - \tau)(y)} \bigl [ g(x - y) \bigr ] \, \theta_{- \tau(x)} \bigl [ \omega^B(y , x - y) \bigr ] 
	. 
\end{align*}
From Chapters~\ref{magWQ} and \ref{asymptotics}, we remember that $\omega^B(q ; x , y) := e^{- i \Gamma^B(\expval{q , q + x , q + x + y})}$ is the exponential of the flux through the triangle with corners $q$, $q + x$ and $q + x + y$. However, mathematically, this is not the only admissible choice of a twist, $\omega^B$ is just a special case of a so-called $2$-cocycle\index{$2$-cocycle}. Nevertheless, let us review the magnetic case in detail: Stoke's theorem allows us to rewrite the flux as a sum of circulations along the edges with respect to a vector potential, 
\begin{align*}
	\Gamma^B&(\expval{q , q + x , q + x + y}) =
	\\
	&\qquad \qquad 
	= \Gamma^A([q , q + x]) + \Gamma^A([q + x , q + x + y]) 
	+ \Gamma^A([q + x + y , q]) , 
\end{align*}
and hence 
\begin{align}
	\omega^B(q ; x , y) &= e^{- i \Gamma^A([q , q + x])} \, e^{- i \Gamma^A([q + x , q + x + y])} \, e^{+ i \Gamma^A([q , q + x + y])} 
	\notag \\
	&=: \lambda^A(q ; x) \, \theta_x [ \lambda^A(q ; y) ] \, {\lambda^A(q ; x + y)}^{-1}
	\label{algebraicPOV:twisted_crossed_products:eqn:magnetic_2_cocycle} 
	. 
\end{align}
Furthermore, products of $\omega^B$ can be interpreted as summing up the fluxes through the various triangles, \eg 
\begin{align}
	\omega^B(q ; x,y) \, \omega^B(q ; x + y , z) &= \omega^B(q +x ; y,z) \, \omega^B(q ; x , y + z) 
	\notag \\
	\Leftrightarrow
	\omega^B(x,y) \, \omega^B(x + y , z) &= \theta_x[\omega^B(y,z)] \, \omega^B(x , y + z) 
	\label{algebraicPOV:twisted_crossed_products:eqn:sum_of_flux_triangles}
\end{align}
%
has a simple geometric interpretation: there are exactly two ways to bisect the quadrangle with corners $q$, $q + x$, $q + x + y$ and $q + x + y + z$, but either way, the total magnetic flux is the same. This so-called \emph{cocycle condition} also ensures that the twisted composition law is associative. We call $\lambda^A$ a pseudotrivialization of $\omega^B$ as we can write the latter as a product of circulations. If we interpret the magnetic field $B$ as $2$-form and $A$ as $1$-form (in the sense of alternating differential forms), then on $\Xgroup = \R^d$, we know that every closed $2$-form ($\dd B = 0$) is also exact and there always exists a (highly non-unique) $1$-form $A$ with $\dd A = B$. This fact is encoded in the \emph{second cohomology group}\index{second cohomology group} which is defined as the quotient of closed $2$-forms and exact $2$-forms. As the name already suggests, the theory of twisted crossed products also has a \emph{coholomogical flavor}. We will only give a brief introduction and refer the interested reader to \cite[Section~2.3]{Mantoiu_Purice_Richard:twisted_X_products:2004}. Just like the exterior derivative maps $1$-forms onto $2$-forms, the coboundary map 
\begin{align}
	\bigl ( \delta^1(\lambda^A) \bigr )(x,y) := \lambda^A(x) \, \theta_x [ \lambda^A(y) ] \, {\lambda^A(x + y)}^{-1}
\end{align}
maps $1$-coboundaries onto $2$-cocycles (the equivalent of closed $2$-forms in differential geometry). The coboundary map $\delta^2$ which maps $2$-cochains onto $3$-cocycles satisfies $\delta^2 \circ \delta^1 = 1$ and hence any $\omega^B = \delta^1(\lambda^A)$ satisfies the $2$-cocycle condition 
\begin{align*}
	\bigl ( \delta^2(\omega^B) \bigr )(x,y,z) = \theta_x[\omega^B(y,z)] \, {\omega^B(x + y , z)}^{-1} \, \omega^B(x , y + z) \, {\omega^B(x,y)}^{-1} = 1
	. 
\end{align*}
We recognize this equation to be equivalent to \eqref{algebraicPOV:twisted_crossed_products:eqn:sum_of_flux_triangles}. In addition, we also see that the magnetic $2$-cocycle is \emph{normalized}, \ie 
\begin{align*}
	\omega^B(0,x) = 1 = \omega^B(x,0) 
	. 
\end{align*}
Let us now consider the general case and we shall drop $B$ in all equations to indicate that this construction is much more general. 
\begin{defn}[Twisted abelian $C^*$-dynamical system\index{$C^*$-dynamical system!twisted abelian}]
	A twisted $C^*$-dynamical \linebreak system is a quadruplet $(\Alg , \theta , \omega , \Xgroup)$ where 
	\begin{enumerate}[(i)]
		\item $\Xgroup$ is an abelian, second countable locally compact group, 
		\item $\Alg$ is an abelian, separable $C^*$-algebra, 
		\item $\theta : \Xgroup \longrightarrow \mathrm{Aut}(\Alg)$ is a group morphism from $\Xgroup$ to the group of automorphisms of $\Alg$ such that $x \mapsto \theta_x[\varphi]$ is (norm-)continuous for all $\varphi \in \Alg$, and 
		\item $\omega$ is a strictly continuous, normalized 2-cocycle with values in $\mathcal{U}(\Alg)$, the unitary group of the multiplier algebra of $\mathcal{A}$. 
	\end{enumerate}
\end{defn}
Twisted dynamical systems can also be covariantly represented: 
\begin{defn}[Covariant representation of a twisted $C^*$-dynamical system\index{representation!covariant}]\label{algebraicPOV:twisted_crossed_products:defn:covariant_rep_twisted_Cstar_dyn_sys}
	\makebox{}\linebreak For a given twisted $C^*$-dynamical system, a covariant representation consists of a Hilbert space $\Hil$ and two maps $r : \Alg \longrightarrow \mathcal{B}(\Hil)$, $T : \Xgroup \longrightarrow \mathcal{U}(\Hil)$ with the following properties: 
	\begin{enumerate}[(i)]
		\item $r$ is a non-degenerate representation of $\Alg$, 
		\item $T$ is strongly continuous and $T(x) \, T(y) = r \bigl ( \omega(x,y) \bigr ) \, T(x + y)$, and 
		\item $T(x) \, r(\varphi) \, T(x)^* = r \bigl ( \theta_x[\varphi] \bigr )$ for all $\varphi \in \Alg$ and $x \in \Xgroup$. 
	\end{enumerate}
\end{defn}
Just like in the previous section, the twisted crossed product is the completion of $L^1(\Xgroup ; \Alg)$ with respect to an abstract $C^*$-norm. 

\begin{defn}[Twisted crossed product\index{twisted crossed product}]\label{algebraicPOV:twisted_crossed_products:defn:twisted_X_product}
	The twisted crossed product $\twistedXprod \equiv \stwistedXprod$ is the completion of $\bigl ( L^1(\Xgroup;\Alg) , \repom , {}^{\repom} \bigr )$ with product and involution given by 
	\begin{align}
		(f \repom g)(x) &:= \int_{\Xgroup} \dd y \, \theta_{\tau(y-x)} \bigl [ f(y) \bigr ] \, \theta_{(\id - \tau)(y)} \bigl [ g(x - y) \bigr ] \, \theta_{- \tau(x)} \bigl [ \omega(y , x - y) \bigr ] 
		\label{algebraicPOV:twisted_crossed_products:eqn:twisted_composition_law}
		\\
		f^{\repom}(x) &:= \theta_{- \tau(x)} \bigl [ {\omega(x , -x)}^{-1} \bigr ] \, \theta_{(\id - 2 \tau)(x)} \bigl [ f(-x)^{\ast} \bigr ] 
		\label{algebraicPOV:twisted_crossed_products:eqn:twisted_involution}
	\end{align}
	with respect to the $C^*$-norm
	\begin{align*}
		\snorm{f} := \sup \Bigl \{ \snorm{\pi(f)}_{\mathcal{B}(\Hil)} \; \big \vert \; \; \mbox{$\pi$ is a non-degenerate representation on $\Hil$} \Bigr \}
		. 
	\end{align*}
\end{defn}
%
The fact that twisted crossed products are well-defined follows from the analogous statement for the underlying Banach-$\ast$ algebras: 
\begin{lem}\label{algebraicPOV:twisted_crossed_products:lem:Banach_star_algebra_well-defined}
	For two functions $f , g \in L^1(\Xgroup ; \Alg)$ and $\tau \in \mathrm{End}(\Xgroup)$, the product $f \repom g$ is again in $L^1(\Xgroup ; \Alg)$. Thus $\bigl ( L^1(\Xgroup ; \Alg) , \repom , {}^{\repom} \bigr )$ forms a Banach-$\ast$ algebra. For different $\tau \in \mathrm{End}(\Xgroup)$, the Banach-$\ast$ algebras are isomorphic. 
\end{lem}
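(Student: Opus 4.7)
The plan is to establish the three assertions in turn: first that $\repom$ and ${}^{\repom}$ map $L^1(\Xgroup;\Alg)$ into itself with appropriate norm control, then that the resulting structure satisfies the Banach-$\ast$ algebra axioms, and finally that different choices of $\tau \in \mathrm{End}(\Xgroup)$ give isomorphic algebras.

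For $L^1$-closure I would start from a pointwise estimate. Using that each $\theta_y$ is an isometry on $\Alg$ (and extends to an isometric automorphism on $\mathcal{M}(\Alg)$), and that $\omega(y,x-y) \in \mathcal{U}(\Alg) \subseteq \mathcal{M}(\Alg)$ has multiplier norm one, the integrand in the definition of $(f \repom g)(x)$ is an element of $\Alg$ with norm bounded by $\snorm{f(y)}_{\Alg} \, \snorm{g(x-y)}_{\Alg}$. Measurability of $x \mapsto (f \repom g)(x)$ follows from strong continuity of $\theta$ and $\omega$ by approximating $f,g$ by simple $\Alg$-valued functions. Fubini together with the translation $y \mapsto y + x$ then gives the Banach algebra estimate $\snorm{f \repom g}_{L^1} \leq \snorm{f}_{L^1} \, \snorm{g}_{L^1}$, while $\snorm{f^{\repom}}_{L^1} = \snorm{f}_{L^1}$ is immediate.

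Next I would verify the Banach-$\ast$ algebra axioms. Associativity is the only delicate point: after applying Fubini and collapsing nested automorphisms on both $f \repom (g \repom h)$ and $(f \repom g) \repom h$ via the group law $\theta_x \circ \theta_y = \theta_{x+y}$, the two expressions differ by a single multiplicative twist of the form $\theta_x[\omega(y,z)] \, \omega(x,y+z) \, \omega(x+y,z)^{-1} \, \omega(x,y)^{-1}$, which is exactly the $2$-cocycle identity and hence equals $1 \in \mathcal{U}(\Alg)$. The involution axioms $f^{\repom \repom} = f$ and $(f \repom g)^{\repom} = g^{\repom} \repom f^{\repom}$ reduce, after bookkeeping with $\theta$, to the normalization $\omega(0,x) = 1 = \omega(x,0)$ together with the relation between $\omega(x,-x)$ and $\omega(-x,x)$ obtained by evaluating the cocycle condition at $(x,-x,x)$.

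For the $\tau$-independence I would introduce the candidate intertwiner
\begin{align*}
V_{\tau,\tau'} : L^1(\Xgroup;\Alg) \longrightarrow L^1(\Xgroup;\Alg) , \qquad (V_{\tau,\tau'} f)(x) := \theta_{(\tau - \tau')(x)} \bigl [ f(x) \bigr ] ,
\end{align*}
which is an $L^1$-isometric linear bijection with inverse $V_{\tau',\tau}$ by isometry of each $\theta_y$. A direct computation regrouping $\theta$-prefactors via the group law shows $V_{\tau,\tau'}(f \reptau g) = V_{\tau,\tau'}(f) \repcom^{\omega}_{\theta,\tau'} V_{\tau,\tau'}(g)$ and compatibility with the involutions, exhibiting the claimed $\ast$-isomorphism. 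The hard part is the associativity step: it is the unique place where the cocycle condition is truly indispensable and where care is required in tracking three nested $\theta$-prefactors across two orderings of integration, whereas everything else amounts to routine bookkeeping built around Fubini, isometry of $\theta_x$, and unitarity of the $\omega$-values in $\mathcal{M}(\Alg)$.
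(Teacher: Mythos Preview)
Your proposal is correct and follows essentially the same route as the paper: the $L^1$-estimate via isometry of $\theta$ and unitarity of $\omega$-values, associativity from the $2$-cocycle identity, and the intertwining isomorphism $f \mapsto \theta_{(\tau-\tau')(x)}[f(x)]$ (the paper uses the inverse convention $\theta_{(\tau'-\tau)(x)}$, which is immaterial). You supply considerably more detail than the paper's proof, which dispatches associativity and the involution axioms with the phrases ``can be checked explicitly'' and ``routine calculations.''
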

\begin{proof}
	First, we show $\repom : L^1(\Xgroup ; \Alg) \times L^1(\Xgroup ; \Alg) \longrightarrow L^1(\Xgroup ; \Alg)$: as $\snorm{\theta_x[\varphi]}_{\Alg} = \snorm{\varphi}_{\Alg}$, we can estimate the $L^1$-norm explicitly: 
	\begin{align*}
		\bnorm{f \repom g}_{L^1} &= \int_{\Xgroup} \dd x \, \bnorm{(f \repom g)(x)}_{\Alg} 
		\\
		&
		\leq 
		\int_{\Xgroup} \dd x \int_{\Xgroup} \dd y \, \Bnorm{\theta_{\tau(y - x)} \bigl [ f(y) \bigr ] \, \theta_{(\id - \tau)(y)} \bigl [ g(x - y) \bigr ] \, \theta_{- \tau(y)} \bigl [ \omega(y,x-y) \bigr ]}_{\Alg} 
		\\
		&\leq 
		\int_{\Xgroup} \dd x \int_{\Xgroup} \dd y \, \snorm{f(y)}_{\Alg} \, \snorm{g(x - y)}_{\Alg} 
		= \snorm{f}_{L^1} \, \snorm{g}_{L^1} 
		. 
	\end{align*}
	The associativity of $\repom$ can be checked explicitly and relies on the $2$-cocycle property of $\omega$. The remaining properties are proven via routine calculations. Finally, we note that 
	\begin{align*}
		m_{\tau,\tau'} : &L^1(\Xgroup ; \Alg) \longrightarrow L^1(\Xgroup ; \Alg) , 
		\\
		&f(x) \mapsto \bigl ( m_{\tau,\tau'}(f) \bigr )(x) := \theta_{(\tau' - \tau)(x)}[f(x)]
	\end{align*}
	defines an isomorphism between $\bigl ( L^1(\Xgroup ; \Alg) , \repom , {}^{\repom} \bigr )$ and $\bigl ( L^1(\Xgroup ; \Alg) , \repcom^{\omega}_{\theta,\tau'} , {}^{\repcom^{\omega}_{\theta,\tau'}} \bigr )$. 
\end{proof}
\begin{cor}\label{algebraicPOV:twisted_crossed_products:cor:equivalence_twistedXproducts_tau}
	The twisted crossed product $\twistedXprod$ is well-defined and forms a $C^*$-algebra. For different $\tau \in \mathrm{End}(\Xgroup)$, the twisted crossed products are isomorphic. 
\end{cor}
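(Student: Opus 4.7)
The plan is to bootstrap from Lemma~\ref{algebraicPOV:twisted_crossed_products:lem:Banach_star_algebra_well-defined}, which already equips $L^1(\Xgroup;\Alg)$ with the required Banach-$\ast$ algebra structure. The corollary then consists of two independent claims, each of which I would reduce to a general fact about enveloping $C^*$-algebras of $A^*$-algebras.

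For well-definedness, the task is to show that the supremum in Definition~\ref{algebraicPOV:twisted_crossed_products:defn:twisted_X_product} is a finite $C^*$-seminorm and is, in fact, a norm. Finiteness is easy: given any covariant representation $(\Hil,r,T)$ of the twisted dynamical system (the existence of at least one being discussed in the next section of the chapter, via the Schrödinger-type representation~\eqref{algebraicPOV:twisted_crossed_products:eqn:Rep_tau}), the associated integrated form $\pi := r \rtimes^{\omega}_{\tau} T$ satisfies the bound
\begin{align*}
\bnorm{\pi(f)}_{\mathcal{B}(\Hil)} \leq \int_{\Xgroup} \dd x \, \bnorm{f(x)}_{\Alg} = \snorm{f}_{L^1}
\end{align*}
by the triangle inequality for Bochner integrals together with $\snorm{\theta_x[\varphi]}_{\Alg} = \snorm{\varphi}_{\Alg}$ and unitarity of $T$. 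Taking the sup over all such $\pi$ therefore gives a finite $C^*$-seminorm dominated by the $L^1$-norm. The seminorm separates points because at least one faithful covariant representation exists (either the Schrödinger representation when $\Alg \subseteq \BCont_u(\Xgroup)$, or more generally via a GNS-type construction from a faithful state on the $A^*$-algebra $L^1(\Xgroup;\Alg)$). The completion is then a $C^*$-algebra by the standard enveloping $C^*$-algebra construction.

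For the $\tau$-independence, I would simply extend the Banach-$\ast$ isomorphism $m_{\tau,\tau'}$ of Lemma~\ref{algebraicPOV:twisted_crossed_products:lem:Banach_star_algebra_well-defined} to the $C^*$-completions. The key observation is that $m_{\tau,\tau'}$ induces a bijection between non-degenerate $\ast$-representations of the two Banach-$\ast$ algebras via $\pi \mapsto \pi \circ m_{\tau,\tau'}^{-1}$, which preserves the operator norm on each representation. Taking the supremum therefore yields $\snorm{m_{\tau,\tau'}(f)}_{\Alg \rtimes^{\omega}_{\theta,\tau'} \Xgroup} = \snorm{f}_{\Alg \rtimes^{\omega}_{\theta,\tau} \Xgroup}$, so $m_{\tau,\tau'}$ extends by density to an isometric $\ast$-isomorphism of the enveloping $C^*$-algebras.

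The only mildly delicate point is verifying that the enveloping $C^*$-seminorm really is a norm, \emph{i.e.} that non-degenerate representations of $\bigl(L^1(\Xgroup;\Alg),\repom,{}^{\repom}\bigr)$ separate points. Everything else is a routine application of standard $C^*$-algebraic machinery, and no new analytic estimate beyond the $L^1$-bound of Lemma~\ref{algebraicPOV:twisted_crossed_products:lem:Banach_star_algebra_well-defined} is needed.
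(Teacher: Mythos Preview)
Your proposal is correct and is precisely the argument the paper has in mind: the corollary is stated without proof, as an immediate consequence of Lemma~\ref{algebraicPOV:twisted_crossed_products:lem:Banach_star_algebra_well-defined}, relying on the standard enveloping $C^*$-algebra construction and the fact that the Banach-$\ast$ isomorphism $m_{\tau,\tau'}$ extends isometrically to the completions by intertwining the families of non-degenerate representations. Your identification of the one genuinely nontrivial point (that the $C^*$-seminorm separates points) is apt; the paper handles this later via the regular representation and amenability of $\Xgroup$ (Proposition~\ref{algebraicPOV:twisted_crossed_products:prop:basic_properties_rep}(iii), citing Packer--Raeburn), so you may either defer to that or invoke the general $A^*$-algebra machinery as you suggest.
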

Covariant representations admit an integrated form analogously to the untwisted case; the twist is implicitly contained in $T$ (see property (ii) in Definition~\ref{algebraicPOV:twisted_crossed_products:defn:covariant_rep_twisted_Cstar_dyn_sys}). 
\begin{lem}
	If $(\Hil , r , T)$ is a covariant representation of the twisted abelian $C^*$-dynamical system $(\Alg , \theta , \omega , \Xgroup)$ and $\tau \in \mathrm{End}(\Xgroup)$, then $r \rtimes_{\tau} T$ defined on $L^1(\Xgroup ; \Alg)$ by 
	\begin{align}
		r \rtimes_{\tau} T (f) := \int_{\Xgroup} \dd x \, \theta_{\tau(x)} \bigl [ f(x) \bigr ] \, T(x) 
	\end{align}
	extends to a representation of $\twistedXprod$ called the \emph{integrated form of $(r,T)$}. For any two $\tau , \tau' \in \mathrm{End}(\Xgroup)$, one has $r \rtimes_{\tau'} T = r \rtimes_{\tau} T \circ m_{\tau,\tau'}$ where $m_{\tau,\tau'}$ is the isomorphism from the proof of Lemma~\ref{algebraicPOV:twisted_crossed_products:lem:Banach_star_algebra_well-defined}. 
\end{lem}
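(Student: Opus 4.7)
The plan is to verify that $r\rtimes_\tau T$ is a continuous $*$-homomorphism from the Banach-$*$ algebra $\bigl(L^1(\Xgroup;\Alg),\repom,{}^{\repom}\bigr)$ into $\mathcal{B}(\Hil)$, and then to invoke the universal property of the enveloping $C^*$-algebra $\twistedXprod$ (Definition~\ref{algebraicPOV:twisted_crossed_products:defn:twisted_X_product}) to extend it to all of $\twistedXprod$. Continuity with respect to $\snorm{\cdot}_{L^1}$ is immediate from $\snorm{T(x)}=1$ and $\snorm{r}\leq 1$:
\begin{align*}
\bnorm{r\rtimes_\tau T(f)}_{\mathcal{B}(\Hil)} \leq \int_{\Xgroup}\dd x\,\bnorm{r\bigl(\theta_{\tau(x)}[f(x)]\bigr)}_{\mathcal{B}(\Hil)} \leq \snorm{f}_{L^1},
\end{align*}
so the Bochner integral defining $r\rtimes_\tau T(f)$ converges in operator norm.

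The main computational step is showing multiplicativity, $r\rtimes_\tau T(f\repom g)=r\rtimes_\tau T(f)\cdot r\rtimes_\tau T(g)$. I would start from the right-hand side, use Fubini, then apply the covariance relation (iii) of Definition~\ref{algebraicPOV:twisted_crossed_products:defn:covariant_rep_twisted_Cstar_dyn_sys} to move $T(y)$ past $r\bigl(\theta_{\tau(z)}[g(z)]\bigr)$, and then property (ii) to collapse $T(y)T(z)$ into $r(\omega(y,z))T(y+z)$. After the substitution $x=y+z$ this produces an integrand $r\bigl(\theta_{\tau(y)}[f(y)]\,\theta_{y+\tau(x-y)}[g(x-y)]\,\omega(y,x-y)\bigr)T(x)$. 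Pulling an outer $\theta_{\tau(x)}$ through $r$ using covariance, and using the homomorphism property $\tau(x)-\tau(y)=\tau(x-y)$ to rewrite the exponents, one recovers exactly $r\bigl(\theta_{\tau(x)}[(f\repom g)(x)]\bigr)T(x)$ as specified by the product formula~\eqref{algebraicPOV:twisted_crossed_products:eqn:twisted_composition_law}. The $*$-compatibility $r\rtimes_\tau T(f^{\repom})=\bigl(r\rtimes_\tau T(f)\bigr)^{*}$ is a shorter but analogous manipulation: after substituting $x\mapsto -x$ in the defining integral of the adjoint and applying covariance together with $T(x)^{*}=T(-x)r(\omega(x,-x)^{-1})$ (which follows from (ii) and normalization of $\omega$), the involution formula~\eqref{algebraicPOV:twisted_crossed_products:eqn:twisted_involution} matches term by term.

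Once $r\rtimes_\tau T$ is shown to be a contractive $*$-homomorphism on the dense Banach-$*$ subalgebra $L^1(\Xgroup;\Alg)$, it is bounded by the enveloping $C^{*}$-norm, hence extends uniquely and continuously to $\twistedXprod$; non-degeneracy is inherited from non-degeneracy of $r$ via an approximate unit. Finally, for the compatibility $r\rtimes_{\tau'}T=r\rtimes_\tau T\circ m_{\tau,\tau'}$, observe that by the definition of $m_{\tau,\tau'}$ from the proof of Lemma~\ref{algebraicPOV:twisted_crossed_products:lem:Banach_star_algebra_well-defined} we have
\begin{align*}
r\rtimes_\tau T\bigl(m_{\tau,\tau'}(f)\bigr)=\int_{\Xgroup}\dd x\,r\bigl(\theta_{\tau(x)+(\tau'-\tau)(x)}[f(x)]\bigr)T(x)=r\rtimes_{\tau'}T(f),
\end{align*}
which is the claim on the dense subalgebra and hence, by continuity, on all of $\twistedXprod$.

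The main obstacle is the book-keeping in the multiplicativity computation: one must carefully track how $\theta_{\tau(x)}$ interacts with the $\omega$-twist and how the arguments of $\theta$ combine after using covariance, since a small sign or endomorphism error propagates and breaks the matching with~\eqref{algebraicPOV:twisted_crossed_products:eqn:twisted_composition_law}. Everything else is routine provided $\tau\in\mathrm{End}(\Xgroup)$ is used as a group homomorphism at the right moment.
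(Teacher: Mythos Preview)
Your proposal is correct and follows the standard route; the paper itself states this lemma without proof, so there is nothing to compare against beyond noting that your outline is exactly the expected argument. One small remark: in the involution step you will need the cocycle identity $\theta_x[\omega(-x,x)]=\omega(x,-x)$ (which follows from the $2$-cocycle condition with $y=-x$, $z=x$ and normalization) to match the $\omega$-factor after the substitution $x\mapsto -x$; everything else is routine book-keeping as you say.
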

%


\subsection{Special case of $\Xgroup$-algebras} 
\label{algebraicPOV:twisted_crossed_products:special_case_of_xgroup_algebras}

A particularly important choice for algebras $\Alg$ are those composed of certain classes of complex-valued functions on $\Xgroup$. They allow one to define derivatives -- indispensable for generalizations of pseudodifferential calculus (Chapter~\ref{psiDO_reloaded}). 
\begin{defn}[$\Xgroup$-algebra\index{$\Xgroup$-algebra}]\label{algebraicPOV:twisted_crossed_products:defn:X-algebra}
	Let $\Xgroup$ be an abelian, second countable, locally compact \linebreak group. We call a $C^*$-algebra $\Alg$ composed of bounded, uniformly continuous functions on $\Xgroup$ which is stable under translations, 
	\begin{align*}
		\theta_x[\varphi] = \varphi(\cdot + x) \in \Alg 
		, 
		&& 
		\forall \varphi \in \Alg , x \in \Xgroup
		, 
	\end{align*}
	an $\Xgroup$-algebra. 
\end{defn}
\begin{remark}
	If $\Cont_{\infty}(\Xgroup) \subseteq \Alg$, then we can embed $\Xgroup$ into the Gelfand spectrum $\Salg$ via $\imath_{\Alg} : \Xgroup \longrightarrow \Salg$ and $\imath_{\Alg}(\Xgroup) \subseteq \Salg$ is dense. If in addition $1 \in \Alg$, then $\Salg$ is a \emph{compactification of $\Xgroup$}. In any case, $\Alg \cong \Cont_{\infty}(\Salg)$, meaning to any $\varphi \in \Alg$ there exists a $\tilde{\varphi}$ such that $\varphi = \tilde{\varphi} \circ \imath_{\Alg}$. 
\end{remark}
The notion of equivalence of two cocycles is naturally provided from the framework of cohomology: 
\begin{defn}
	Let $\omega , \omega' \in C^2(\Xgroup ; \mathcal{U}(\Alg))$ be two $2$-cocycles. They are called cohomologous if they belong to the same class of cohomology, \ie if there exists $\lambda \in C^1(\Xgroup ; \mathcal{U}(\Alg))$ with 
	\begin{align*}
		\omega = \delta^1(\lambda) \, \omega' 
		. 
	\end{align*}
	$\omega$ is called trivial if it is cohomologous to the trivial $2$-cocycle $1$. 
\end{defn}
\begin{lem}\label{algebraicPOV:twisted_crossed_products:lem:twisted_crossed_products_cohomologou_isomorphic}
	If $\omega$ and $\omega'$ are two cohomologous $2$-cocycles,\index{$2$-cocycle!cohomologous} then $\twistedXprod$ and $\Alg \rtimes^{\omega'}_{\theta,\tau} \Xgroup$ are isomorphic. 
\end{lem}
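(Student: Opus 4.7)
The strategy is to exploit the pseudotrivialization: by assumption there exists $\lambda \in C^1(\Xgroup;\mathcal{U}(\Alg))$ with $\omega = \delta^1(\lambda) \, \omega'$, i.e.\
\begin{align*}
\omega(x,y) = \lambda(x) \, \theta_x[\lambda(y)] \, \lambda(x+y)^{-1} \, \omega'(x,y)
\end{align*}
for all $x,y \in \Xgroup$. I will first build a canonical bijection between the covariant representations of $(\Alg,\theta,\omega,\Xgroup)$ and those of $(\Alg,\theta,\omega',\Xgroup)$, and then use it, together with the universal property encoded in the abstract $C^*$-norm from Definition~\ref{algebraicPOV:twisted_crossed_products:defn:twisted_X_product}, to obtain the desired $C^*$-isomorphism.

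Concretely, given a covariant representation $(\Hil,r,T)$ of $(\Alg,\theta,\omega,\Xgroup)$, set $T'(x) := r\bigl(\lambda(x)\bigr)^{-1} T(x)$. Strong continuity is inherited from $T$ and from the strict continuity of $\lambda$. The intertwining property $T'(x) \, r(\varphi) \, T'(x)^{-1} = r(\theta_x[\varphi])$ is immediate because $\Alg$ is abelian, so $r(\lambda(x))$ commutes with every $r(\varphi)$. The key point is that $T'$ is an $\omega'$-representation: using that $T(x) \, r(\psi) \, T(x)^{-1} = r(\theta_x[\psi])$ and $T(x) T(y) = r(\omega(x,y)) T(x+y)$, a direct computation gives
\begin{align*}
T'(x) T'(y) = r\bigl(\lambda(x)^{-1} \, \theta_x[\lambda(y)]^{-1} \, \omega(x,y)\bigr) \, T(x+y)
= r\bigl(\omega'(x,y)\bigr) \, T'(x+y) ,
\end{align*}
where the second equality substitutes $\omega = \delta^1(\lambda)\omega'$. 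The assignment $(r,T) \mapsto (r,T')$ is clearly involutive with inverse given by multiplication with $r(\lambda(x))$, so it is a bijection on covariant representations.

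On the $L^1$ level, define $\Phi : L^1(\Xgroup;\Alg) \to L^1(\Xgroup;\Alg)$ by
\begin{align*}
(\Phi f)(x) := \theta_{-\tau(x)}\bigl[\lambda(x)\bigr] \, f(x) .
\end{align*}
Since $\lambda(x)\in\mathcal{U}(\Alg)$, the map $\Phi$ is an isometric bijection of $L^1(\Xgroup;\Alg)$. A calculation using the $2$-cocycle identity $\omega = \delta^1(\lambda)\omega'$ and the commutativity of $\Alg$ shows
\begin{align*}
\Phi(f \repcom^{\omega}_{\theta,\tau} g) = \Phi(f) \repcom^{\omega'}_{\theta,\tau} \Phi(g) ,
\qquad
\Phi(f^{\repcom^{\omega}_{\theta,\tau}}) = \Phi(f)^{\repcom^{\omega'}_{\theta,\tau}} ,
\end{align*}
so $\Phi$ is a $*$-isomorphism of the Banach-$*$ algebras underlying the two twisted crossed products. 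Moreover, for any covariant representation $(r,T)$ of $(\Alg,\theta,\omega,\Xgroup)$ with associated $\omega'$-representation $(r,T')$, one verifies
\begin{align*}
(r \rtimes_{\tau} T)(f) = (r \rtimes_{\tau} T')\bigl(\Phi(f)\bigr) ,
\end{align*}
which expresses that $\Phi$ intertwines the integrated forms on both sides.

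The final step is to upgrade $\Phi$ to a $C^*$-isomorphism. Since the $C^*$-norm on each twisted crossed product is the supremum of $\snorm{\pi(\,\cdot\,)}_{\mathcal{B}(\Hil)}$ over all non-degenerate representations, and since every non-degenerate representation of $\twistedXprod$ is (up to unitary equivalence) the integrated form of a covariant representation, the bijection $(r,T)\leftrightarrow(r,T')$ together with the identity $(r\rtimes_\tau T)(f) = (r\rtimes_\tau T')(\Phi f)$ shows $\snorm{\Phi(f)}_{\Alg \rtimes^{\omega'}_{\theta,\tau} \Xgroup} = \snorm{f}_{\stwistedXprod}$. Hence $\Phi$ extends to a $*$-isomorphism of the $C^*$-completions, which is the asserted isomorphism. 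The main technical obstacle is precisely the algebraic verification that $\Phi$ is multiplicative with respect to the twisted convolutions: one has to carefully redistribute the factors $\theta_{-\tau(\cdot)}[\lambda(\cdot)]$ inside the integral defining $\repcom^{\omega'}_{\theta,\tau}$ and reduce the discrepancy to exactly the cocycle relation $\omega = \delta^1(\lambda)\omega'$, all while keeping track of the $\theta$-twists prescribed by the chosen $\tau\in\mathrm{End}(\Xgroup)$. Everything else then follows automatically from the correspondence of covariant representations.
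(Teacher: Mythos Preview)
Your proof is correct and takes essentially the same approach as the paper: the map $\Phi$ you define, $(\Phi f)(x) = \theta_{-\tau(x)}[\lambda(x)]\,f(x)$, is exactly the isomorphism $i^{\lambda}_{\tau}$ the paper writes down, and the paper verifies the multiplicativity $\Phi(f \repom g) = \Phi(f) \repcom^{\omega'}_{\theta,\tau} \Phi(g)$ by the direct computation you describe as the ``main technical obstacle.'' The one place you go beyond the paper is in justifying the extension to the $C^*$-completions: you use the bijection of covariant representations and the identity $(r\rtimes_\tau T)(f) = (r\rtimes_\tau T')(\Phi f)$ to conclude that $\Phi$ is isometric for the universal $C^*$-norms, whereas the paper simply asserts the extension by density without spelling this out.
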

\begin{proof}
	Assume $\lambda \in C^1(\Xgroup ; \mathcal{U}(\Alg))$ is such that 
	\begin{align*}
		\omega = \delta^1(\lambda) \, \omega' 
		. 
	\end{align*}
	Then $i^{\lambda}_{\tau} : L^1(\Xgroup ; \Alg) \longrightarrow L^1(\Xgroup ; \Alg)$, $\bigl ( i^{\lambda}_{\tau}(f) \bigr )(x) := \theta_{- \tau(x)}[\lambda(x)] \, f(x)$ extends to an isomorphism between $\twistedXprod$ and $\Alg \rtimes^{\omega'}_{\theta,\theta} \Xgroup$. The products are intertwined via 
	\begin{align*}
		f \repcom^{\omega'}_{\theta,\tau} g = {i^{\lambda}_{\tau}}^{-1} \Bigl ( \bigl ( i^{\lambda}_{\tau} (f) \bigr ) \repom \bigl ( i^{\lambda}_{\tau} (g) \bigr ) \Bigr ) 
	\end{align*}
	which we confirm by direct calculation: for $f , g \in L^1(\Xgroup ; \Alg)$, we calculate 
	\begin{align*}
		\Bigl ( \bigl ( i^{\lambda}_{\tau} &(f) \bigr ) \repom \bigl ( i^{\lambda}_{\tau} (g) \bigr ) \Bigr ) (x) 
		= \\
		&= \int_{\Xgroup} \dd y \, \theta_{\tau(y - x)} \bigl [ \bigl ( i^{\lambda}_{\tau} (f) \bigr ) (y) \bigr ] \, \theta_{(\id - \tau)(y)} \bigl [ \bigl ( i^{\lambda}_{\tau} (g) \bigr ) (x - y) \bigr ] \, \theta_{- \tau(x)} \bigl [ \omega(y , x - y) \bigr ] 
		\\
		&= \int_{\Xgroup} \dd y \, \theta_{\tau(y - x) - \tau(y)}[\lambda(y)] \, \theta_{\tau(y - x)}[f(y)] \cdot 
		\\
		&\qquad \qquad \cdot \theta_{(\id - \tau)(y) - \tau(x - y)}[\lambda(x - y)] \, \theta_{(\id - \tau)(y)}[g(x - y)] \, \theta_{- \tau(x)}[\omega(y , x - y)] 
		\\
		&= \int_{\Xgroup} \dd y \, \theta_{\tau(y - x)}[f(y)] \, \theta_{(\id - \tau)(y)}[g(x - y)] \cdot 
		\\
		&\qquad \qquad \cdot \theta_{y - \tau(x)}[\lambda(x - y)] \, \theta_{- \tau(x)}[\lambda(y)] \, \theta_{- \tau(x)}[\omega(y , x - y)] 
		\\
		&= \int_{\Xgroup} \dd y \, \theta_{\tau(y - x)}[f(y)] \, \theta_{(\id - \tau)(y)}[g(x - y)] \cdot 
		\\
		&\qquad \qquad \cdot \theta_{- \tau(x)} \bigl [ \lambda(y) \, \theta_y[\lambda(x - y)] \, \omega(y,x - y) \bigr ] 
		\\
		&= \theta_{- \tau(x)} \bigl [ {\lambda(x)}^{-1} \bigr ] \, \int_{\Xgroup} \dd y \, \theta_{\tau(y - x)}[f(y)] \, \theta_{(\id - \tau)(y)}[g(x - y)] 
		\, 
		\theta_{- \tau(x)} \bigl [ \omega'(y , x - y) \bigr ] 
		\\
		&= \theta_{- \tau(x)} \bigl [ {\lambda(x)}^{-1} \bigr ] \, \Bigl ( \bigl ( i^{\lambda}_{\tau} (f) \bigr ) \repcom^{\omega'}_{\theta,\tau} \bigl ( i^{\lambda}_{\tau} (g) \bigr ) \Bigr ) (x)
		. 
	\end{align*}
	By density, this immediately extends to $f , g \in \twistedXprod$. It is immediate from the definitions that $\bigl ( i^{\lambda}_{\tau} (f) \bigr )^{\repom} = f^{\repcom^{\omega'}_{\theta,\tau}}$. 
\end{proof}
%
If $\Alg$ is an $\Xgroup$-algebra, it is useful to introduce the notion of \emph{pseudo}triviality. We need an abstract non-sense lemma first (which is a corollary of \cite[Lemma~2.9]{Mantoiu_Purice_Richard:twisted_X_products:2004}): 
\begin{lem}\label{algebraicPOV:twisted_crossed_products:lem:triviality_2_cocycle}
	If $\Cont_{\infty}(\Xgroup) \subseteq \Alg$, then every $n$-cocycle, $n \geq 1$, is trivial. 
\end{lem}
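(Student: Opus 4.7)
The plan is to construct, for each $n$-cocycle $\omega\in C^n(\Xgroup;\mathcal{U}(\Alg))$, an explicit trivializing $(n-1)$-cochain $\lambda$. The hypothesis $\Cont_\infty(\Xgroup)\subseteq\Alg$ makes the embedding $\imath_{\Alg}\colon\Xgroup\hookrightarrow\Salg$ injective (Lemma~\ref{algebraicPOV:twisted_crossed_products:lem:C0_in_or_out} and the Stone-Weierstrass argument following it) with dense image. Via the Gelfand isomorphism $\mathcal{M}(\Alg)\cong\BCont(\Salg)$ the values of $\omega$ become bounded continuous functions on $\Salg$, and the $\theta$-action corresponds to the natural $\Xgroup$-action on $\Salg$ extending translation on $\imath_{\Alg}(\Xgroup)$. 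In particular, the canonical point $e:=\imath_{\Alg}(0)\in\Salg$ is well-defined and the values $\omega(x_1,\ldots,x_n)(e)$ make sense.

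For $n=2$ I would define the candidate $1$-cochain $\lambda$ on $\imath_{\Alg}(\Xgroup)\subseteq\Salg$ by
\[
	\lambda(x)\bigl(\imath_{\Alg}(q)\bigr) := \omega(q,x)(e), \qquad q,x\in\Xgroup.
\]
Applying the $2$-cocycle identity $\theta_q[\omega(x,y)]\,\omega(q,x+y)=\omega(q,x)\,\omega(q+x,y)$ and evaluating at $e$ --- using $\theta_q[\varphi](e)=\varphi\bigl(\imath_{\Alg}(q)\bigr)$ --- yields
\[
	\omega(x,y)\bigl(\imath_{\Alg}(q)\bigr) = \omega(q,x)(e)\,\omega(q+x,y)(e)\,\omega(q,x+y)(e)^{-1}.
\]
The right-hand side equals $\lambda(x)(p)\,\theta_x[\lambda(y)](p)\,\lambda(x+y)(p)^{-1}$ at $p=\imath_{\Alg}(q)$, which is exactly $(\delta^1\lambda)(x,y)(p)$; hence $\omega=\delta^1\lambda$ holds on the dense subset $\imath_{\Alg}(\Xgroup)\subseteq\Salg$ and extends to all of $\Salg$ by continuity. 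For general $n$, the same prepend-and-evaluate procedure applies: augment the cochain tuple by one group variable, evaluate it at $e$, and collapse the $n$-cocycle identity on $(q,x_1,\ldots,x_n)$ in the analogous way.

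The real obstacle, as is typical in this circle of ideas, is analytic rather than algebraic: one must promote the formula just given from a bounded function on $\imath_{\Alg}(\Xgroup)$ to a genuine element of $\mathcal{U}(\mathcal{M}(\Alg))\cong\mathcal{U}\bigl(\BCont(\Salg)\bigr)$, and verify strict continuity in the cochain variables. Here the containment $\Cont_\infty(\Xgroup)\subseteq\Alg$ is used a second time: evaluation at $e$ is a strictly continuous character on $\mathcal{M}(\Alg)$ --- any $\varphi\in\Cont_c(\Xgroup)\subseteq\Alg$ with $\varphi(0)=1$ gives the bound $\abs{m(e)}\leq\snorm{m\cdot\varphi}_{\Alg}$ --- so the candidate $\lambda(x)$ is a bounded continuous function on $\imath_{\Alg}(\Xgroup)$ and extends continuously to $\Salg$ by density. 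Multiplier-preservation then follows by testing against $\Cont_c(\Xgroup)\subseteq\Alg$, whose elements are preserved under pointwise multiplication by any bounded continuous function, and promoting to all of $\Alg$ by a standard approximate-identity argument. Unitarity of $\lambda$ is inherited from that of $\omega$, and the normalization $\omega(0,\cdot)=1=\omega(\cdot,0)$ yields the corresponding normalization of $\lambda$, completing the triviality.
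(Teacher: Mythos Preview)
The paper does not supply a proof of this lemma; it merely records it as a corollary of \cite[Lemma~2.9]{Mantoiu_Purice_Richard:twisted_X_products:2004}. So there is no argument in the paper to compare against directly. Your construction --- prepend a group variable, evaluate at the base point $e=\imath_{\Alg}(0)$, and collapse the cocycle identity --- is precisely the classical contracting-homotopy argument for group cohomology with coefficients in a module on which the group acts with a dense free orbit, and the algebraic verification $\omega=\delta^1\lambda$ on $\imath_{\Alg}(\Xgroup)$ is correct.

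The analytic step, however, has a genuine gap. Two of your assertions do not hold as written. First, ``extends continuously to $\Salg$ by density'' is not a valid inference: a bounded continuous function on a dense subset of a compact space need not extend continuously (think of $q\mapsto e^{i/|q|}$ near a boundary point). Density of $\imath_{\Alg}(\Xgroup)$ in $\Salg$ alone does not force $q\mapsto\omega(q,x)(e)$ to lie in $\Cont(\Salg;\T)$. Second, the ``approximate-identity argument'' for promoting multiplier-preservation from $\Cont_c(\Xgroup)$ to all of $\Alg$ fails: an approximate identity $\{e_n\}\subset\Cont_c(\Xgroup)$ for the \emph{ideal} $\Cont_\infty(\Xgroup)$ does \emph{not} satisfy $e_n\varphi\to\varphi$ in norm for general $\varphi\in\Alg$ --- take $\varphi=1$ when $\Alg$ is unital. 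So from $\lambda(x)\cdot\Cont_c(\Xgroup)\subseteq\Alg$ you cannot conclude $\lambda(x)\cdot\Alg\subseteq\Alg$.

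What your argument does establish cleanly is the case $\Alg=\Cont_\infty(\Xgroup)$, where $\Salg=\Xgroup$ and $\mathcal{U}(\Alg)=\Cont(\Xgroup;\T)$; there the extension issue disappears and $\lambda(x)\in\Cont(\Xgroup;\T)$ is immediate from strict continuity. This is in fact how the paper \emph{uses} the lemma (see the proof of the pseudotriviality proposition and Proposition~\ref{algebraicPOV:twisted_crossed_products:prop:gauge_covariance_rep}(ii), where the trivializing cochains are taken in $\Cont(\Xgroup;\T)$, not in $\mathcal{U}(\Alg)$). For the general statement with $\lambda(x)\in\mathcal{U}(\Alg)$ one needs a sharper argument --- for instance, working directly with the $\Xgroup$-equivariant structure of $\Cont(\Salg;\T)$ rather than attempting to extend from the dense orbit --- which is presumably what the cited reference supplies.
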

First of all, if $\Alg' \subseteq \Alg$, then certainly we have $\mathcal{U}(\Alg') \subseteq \mathcal{U}(\Alg)$. Hence, if $\Cont_{\infty}(\Xgroup) \subseteq \Alg$, then $\mathcal{U}(\Cont_{\infty}(\Xgroup)) \cong \Cont(\Xgroup ; \T) \subseteq \mathcal{U}(\Alg)$. Then every $2$-cocycle is trivial, \ie there exists a $1$-cochain $\lambda$ such that $\omega = \delta^1(\lambda)$ and $\omega$ is cohomologous to $1$. 

If $\Cont_{\infty}(\Xgroup) \cap \Alg = \{ 0 \}$ (and those are the only two cases by Lemma~\ref{algebraicPOV:twisted_crossed_products:lem:C0_in_or_out}), then there still exists $\lambda \in C^1 \bigl ( \Xgroup ; \mathcal{U}(\BCont_u(\Xgroup)) \bigr ) \cong C^1(\Xgroup ; \Cont(\Xgroup ; \T))$. However, in general $\lambda(x)$ \emph{fails to be in $\mathcal{U}(\Alg)$} and $\omega$ is trivial only if we enlarge $\Alg$. Hence, the twisted crossed products associated to two non-cohomologous $2$-cocycles $\omega$ and $\omega'$ will \emph{still be different}, but they can both be embedded in the same, larger and trivially twisted crossed product, 
%
\begin{align*}
	\twistedXprod , \Alg \rtimes^{\omega'}_{\theta,\tau} \Xgroup \subset \BCont_u(\Xgroup) \rtimes^{\omega}_{\theta,\tau} \Xgroup 
	\cong \BCont_u(\Xgroup) \rtimes^{\omega'}_{\theta,\tau} \Xgroup 
	\cong \BCont_u(\Xgroup) \rtimes^{\id}_{\theta,\tau} \Xgroup 
	. 
\end{align*}
A $1$-cochain $\lambda \in C^1(\Xgroup ; \Cont(\Xgroup ; \T))$ for which $\omega = \delta^1(\lambda)$ holds in an enlarged algebra is a \emph{pseudotrivialization of $\omega$}.\index{$2$-cocycle!pseudotrivialization of} 
\begin{prop}
	Let $(\Alg,\Xgroup,\theta,\omega)$ be a twisted $C^*$-dynamical system where $\Alg$ is an $\Xgroup$-algebra. Then $\omega$ is pseudotrivial. 
\end{prop}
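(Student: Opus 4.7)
The plan is to reduce pseudotriviality of $\omega$ to the abstract triviality statement of Lemma~\ref{algebraicPOV:twisted_crossed_products:lem:triviality_2_cocycle} by enlarging $\Alg$ to an $\Xgroup$-algebra that contains $\Cont_{\infty}(\Xgroup)$. First I would form the $C^*$-subalgebra
\begin{align*}
	\Alg' := C^* \bigl ( \Alg \cup \Cont_{\infty}(\Xgroup) \bigr ) \subseteq \BCont_u(\Xgroup)
	.
\end{align*}
Since both $\Alg$ and $\Cont_{\infty}(\Xgroup)$ are translation-stable $C^*$-subalgebras of $\BCont_u(\Xgroup)$, so is $\Alg'$; and by construction $\Cont_{\infty}(\Xgroup) \subseteq \Alg'$. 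Thus $\Alg'$ is itself an $\Xgroup$-algebra satisfying the hypothesis of Lemma~\ref{algebraicPOV:twisted_crossed_products:lem:triviality_2_cocycle}.

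Next I would view the $2$-cocycle $\omega$ as taking values in the larger unitary group $\mathcal{U}(\Alg')$ via the canonical chain of inclusions
\begin{align*}
	\mathcal{U}(\Alg) \hookrightarrow \mathcal{U}(\Alg') \hookrightarrow \mathcal{U}\bigl(\BCont_u(\Xgroup)\bigr) \cong \Cont(\Xgroup;\T)
	,
\end{align*}
which is compatible with the Gelfand picture reviewed in Chapter~\ref{algebraicPOV:twisted_crossed_products:gelfand_theory}. A quick check shows that strict continuity of $\omega$ with values in $\mathcal{U}(\Alg)$ implies strict continuity with values in $\mathcal{U}(\Alg')$, and $\omega$ remains normalized and satisfies the same $2$-cocycle identity. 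Hence $(\Alg',\theta,\omega,\Xgroup)$ is again a twisted abelian $C^*$-dynamical system.

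Applying Lemma~\ref{algebraicPOV:twisted_crossed_products:lem:triviality_2_cocycle} to this enlarged system then produces a $1$-cochain $\lambda \in C^1\bigl(\Xgroup ; \mathcal{U}(\Alg')\bigr)$ with $\omega = \delta^1(\lambda)$. Using the inclusion $\mathcal{U}(\Alg') \hookrightarrow \Cont(\Xgroup;\T)$ from above, $\lambda$ is automatically an element of $C^1\bigl(\Xgroup ; \Cont(\Xgroup;\T)\bigr)$, and the identity $\omega = \delta^1(\lambda)$ holds in the enlarged algebra $\Alg'$. This is precisely the definition of pseudotriviality stated immediately before the proposition, so the proof is complete.

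The main obstacle is the compatibility check in the second step: one must verify that the inclusion $\mathcal{U}(\Alg) \hookrightarrow \mathcal{U}(\Alg')$ is continuous for the strict topologies used to define a twisted $C^*$-dynamical system, so that the hypotheses of Lemma~\ref{algebraicPOV:twisted_crossed_products:lem:triviality_2_cocycle} really apply to $\omega$ viewed in $\Alg'$, and that the $\lambda$ produced there has the correct measurability/continuity in the group variable required by the target space $C^1\bigl(\Xgroup ; \Cont(\Xgroup;\T)\bigr)$. Once these topological points are settled, the cohomological identity is simply imported from Lemma~\ref{algebraicPOV:twisted_crossed_products:lem:triviality_2_cocycle} and no further construction is needed.
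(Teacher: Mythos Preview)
Your approach is essentially the same as the paper's: embed the $2$-cocycle into a larger $\Xgroup$-algebra containing $\Cont_{\infty}(\Xgroup)$ and invoke Lemma~\ref{algebraicPOV:twisted_crossed_products:lem:triviality_2_cocycle}. The paper dispenses with your intermediate algebra $\Alg'$ and goes directly to $\Cont_{\infty}(\Xgroup)$ itself, showing that $\mathcal{U}(\Alg)$ sits inside $\mathcal{U}\bigl(\Cont_{\infty}(\Xgroup)\bigr) = \Cont(\Xgroup;\T)$; your detour through $\Alg'$ is harmless but unnecessary.

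The one point where the paper does more work than you is precisely the topological compatibility you flag at the end: it is not really ``a quick check.'' The paper argues that $\mathcal{M}(\Alg) \subseteq \BCont(\Xgroup)$ (using translation invariance to force continuity of multipliers), and then compares strict topologies by noting that on the bounded set $\mathcal{U}(\Alg)$ both reduce to uniform convergence on compacts --- on compacts of $\Salg$ for the $\Alg$-strict topology and on compacts of $\Xgroup$ for the $\Cont_{\infty}(\Xgroup)$-strict topology --- and continuity of $\imath_{\Alg}:\Xgroup\to\Salg$ makes the former finer than the latter. That is the substance of the proof; once it is in place, the cohomological step is immediate, exactly as you say.
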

\begin{proof}
	We will now make the comment above rigorous, \ie that we can regard the unitary group of any non-trivial $\Alg \subseteq \BCont_u(\Xgroup)$ as a subgroup of $\Cont(\Xgroup ; \T)$, the group of unitary elements of the multiplier algebra of $\Cont_{\infty}(\Xgroup)$. This is mostly a matter of topologies. 
	
	First of all, we will show that the multiplier algebra $\mathcal{M}(\Alg)$ can be regarded as a $C^*$-subalgebra of $\mathcal{M}(\Cont_{\infty}(\Xgroup)) = \BCont(\Xgroup)$. Unless $\Alg = \{ 0 \}$, the invariance of $\Alg$ under translations implies the non-degeneracy of the natural, faithful representation $r : \Alg \longrightarrow \mathcal{B} \bigl ( L^2(\Xgroup) \bigr )$ of $\Alg$ on $L^2(\Xgroup)$ as multiplication operators. The double commutant of $r(\Alg)$ is contained in $\mathcal{M}(r(\Alg)) \cong \mathcal{M}(\Alg)$ which, in turn, is contained in $L^{\infty}(\Xgroup)$. Continuity of any $\varphi \in \Alg$ and translation invariance of $\Alg$ also imply that any $m \in \mathcal{M}(\Alg) \subseteq L^{\infty}(\Xgroup)$ needs to be continuous: assume $m$ is not continuous at $x_0 \in \Xgroup$. Then there exists $\varphi \in \Alg$ which does not vanish in an open neighborhood of $x_0$ and thus $m \varphi \not \in \Alg$ would no longer be continuous. Hence, $\mathcal{M}(\Alg) \subseteq \BCont(\Xgroup) = \mathcal{M}(\Cont_{\infty}(\Xgroup))$ is a subset of the bounded continuous functions on $\Xgroup$. 
	
	Furthermore, $\mathcal{U}(\Alg) \subset \mathcal{M}(\Alg) \cong \BCont(\Salg)$ is a \emph{bounded} subset of $\BCont(\Xgroup)$ so that the strict topology on $\mathcal{U}(\Alg)$ coincides with the topology of uniform convergence on compact subsets of $\Xgroup$. On the other hand, the strict topology on $\mathcal{M}(\Alg)$ coincides with the topology of uniform convergence on compact subsets of $\Salg$. As $\imath_{\Alg} : \Xgroup \longrightarrow \Salg$ is continuous, the strict topology on $\mathcal{M}(\Alg)$ induces a finer topology on $\mathcal{U}(\Alg)$ than that inherited from $\BCont(\Xgroup)$. 
	
	Hence, we can identify $\mathcal{U}(\Alg)$ with a subgroup of $\Cont(\Xgroup ; \T)$ where the natural topology of $\mathcal{U}(\Alg)$ inherited from $\mathcal{M}(\Alg)$ is finer than the strict topology of $\Cont(\Xgroup ; \T)$. This means we can consider each $2$-cocycle as a function $C^2(\Xgroup ; \Cont(\Xgroup ; \T))$ and then apply Lemma~\ref{algebraicPOV:twisted_crossed_products:lem:triviality_2_cocycle}. This concludes the proof. 
\end{proof}
\begin{example}
	One such example where $2$-cocycles can in general only be pseudotriviallized is $\Xgroup = \R^d$ and $\Alg_{\mathrm{per}} := \bigl \{ \varphi \in \BCont_u(\R^d) \; \vert \; \varphi(\cdot + \gamma) = \varphi(\cdot) \; \forall \gamma \in \Z^d \bigr \}$. Here, magnetically twisted crossed products $\Alg_{\mathrm{per}} \rtimes^{\omega^B}_{\theta,\tau} \R^d$ for different magnetic fields $B$ and $B'$ with components in $\Alg_{\mathrm{per}}$ are only isomorphic if and only if the flux through a unit cell differs by an integer multiple of $2 \pi$. A $2$-cocycle $\omega^B$ is trivial (cohomologous to $1$) if and only if $B$ has zero flux modulo $2 \pi$ through a unit cell. In other words, there exists a vector potential $A$ with components in $\Alg_{\mathrm{per}}$. 
\end{example}
\begin{example}
	In case of $\Xgroup = \Z^d$ and $\Alg = \C$, the twisted crossed product $\C \rtimes^{\omega}_{\id,\tau} \Z^d$ (which is independent of the choice of $\tau$) also cannot be trivialized, $\mathcal{U}(\C) \cong \T$, and is often called the twisted $C^*$-algebra of $\Z^d$ associated with $\omega$. If $\omega$ is given by a constant magnetic field $B$, then $\C \rtimes^{\omega^B}_{\id,\tau} \Z^d \equiv C^*_{\omega^B}(\Z^d)$ is commonly called \emph{non-commutative torus}. 
\end{example}
Another nice aspect of $\Xgroup$-algebras is that they have a natural representation that is large enough to contain all information: the natural Hilbert space is $L^2(\Xgroup)$, the space of functions which are square-integrale with respect to the Haar measure of $\Xgroup$. The representation $r$ of the algebra $\Alg$ maps $\varphi$ onto $r(\varphi) := \varphi(Q)$, the operator of multiplication by $\varphi$. The non-magnetic translations $\bigl ( T(y) u \bigr )(x) = u(x + y)$ are augumented with a \emph{pseudo}trivialization of $\omega = \delta^1(\lambda)$, $\lambda \in C^1(\Xgroup ; \Cont(\Xgroup ; \T))$. With respect to the larger algebra, there is only one cohomology class and any $2$-cocycle is trivial. Thus, we define $T^{\lambda}(y) := r(\lambda(y)) \, T(y)$ which, applied to $u \in L^2(\Xgroup)$, yields 
\begin{align}
	\bigl ( T^{\lambda}(y) u \bigr )(x) := \lambda(x ; y) \, u(x+y) 
	. 
\end{align}
The triple $(L^2(\Xgroup) , r , T^{\lambda})$ is a covariant representation of $\twistedXprod$ and is manifestly gauge-covariant. 
\begin{prop}\label{algebraicPOV:twisted_crossed_products:prop:gauge_covariance_rep}
	\begin{enumerate}[(i)]
		\item $(L^2(\Xgroup) , r , T^{\lambda})$ is a covariant representation of the twisted $C^*$-dynamical system $(\Alg , \Xgroup , \theta , \omega)$ where $\Alg$ is an $\Xgroup$-algebra. 
		\item If $\lambda' \in C^1(\Xgroup ; \Cont(\Xgroup ; \T))$ is another $1$-cochain pseudotrivializing $\omega = \delta^1(\lambda')$, then $\Rep^{\lambda}_{\tau} := r \rtimes_{\tau} T^{\lambda}$, 
		\begin{align}
			\bigl ( \Rep^{\lambda}_{\tau} (f) u \bigr )(x) &= \int_{\Xgroup} \dd y \, f(x + \tau(y) ; y) \, \lambda(x ; y) \, u(x + y) 
			\notag \\
			&= \int_{\Xgroup} \dd y \, f \bigl ( (\id - \tau)(x) + \tau(y) ; y - x) \, \lambda(x ; y - x) \, u(y) 
			, 
		\end{align}
		and $\Rep^{\lambda'}_{\tau}$ are unitarily equivalent:\index{representation!covariant} if $c \in C^0(\Xgroup ; \Cont(\Xgroup ; \T)) \equiv \Cont(\Xgroup ; \T)$ is the $0$-cochain such that $\lambda' = \theta_x[c] \, c^{-1} \, \lambda$, then 
		\begin{align}
			\Rep^{\lambda'}_{\tau}(f) = r(c^{-1}) \, \Rep^{\lambda}_{\tau}(f) \, r(c) 
		\end{align}
		for any $f \in \twistedXprod$. 
	\end{enumerate}
\end{prop}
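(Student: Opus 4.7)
The plan is to verify directly the three defining properties of a covariant representation from Definition~\ref{algebraicPOV:twisted_crossed_products:defn:covariant_rep_twisted_Cstar_dyn_sys} for part (i), and then to perform a direct computation using the cochain relation for part (ii). Non-degeneracy of $r$ as a representation of the $\Xgroup$-algebra $\Alg$ by multiplication operators on $L^2(\Xgroup)$ is standard, since translation-invariance of $\Alg$ guarantees the existence of elements not vanishing on arbitrary compact sets. The heart of the matter is the twisted group law: for $u \in L^2(\Xgroup)$, I would compute
\begin{align*}
\bigl ( T^{\lambda}(x) \, T^{\lambda}(y) \, u \bigr )(q) = \lambda(q;x) \, \lambda(q+x;y) \, u(q+x+y)
\end{align*}
and compare this with
\begin{align*}
\bigl ( r(\omega(x,y)) \, T^{\lambda}(x+y) \, u \bigr )(q) = \omega(q;x,y) \, \lambda(q;x+y) \, u(q+x+y).
\end{align*}
Recognizing $\lambda(q+x;y) = \bigl ( \theta_x[\lambda(y)] \bigr )(q)$, these two expressions coincide precisely because $\omega = \delta^1(\lambda)$ evaluated pointwise is exactly $\omega(q;x,y) = \lambda(q;x) \, \theta_x[\lambda(y)](q) \, \lambda(q;x+y)^{-1}$. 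The intertwining property $T^{\lambda}(x) \, r(\varphi) \, T^{\lambda}(x)^* = r \bigl ( \theta_x[\varphi] \bigr )$ is a short calculation: since $\lambda(\cdot;x)$ is scalar-valued and multiplication operators commute with one another, the $\lambda$-factors cancel pairwise, leaving only the translated function $\varphi(\cdot + x)$.

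The remaining technical checks for (i) concern unitarity and strong continuity of $T^{\lambda}$. Unitarity follows because for each $x \in \Xgroup$, the function $\lambda(\cdot;x)$ takes values in $\T$, so multiplication by it is unitary on $L^2(\Xgroup)$ and composes with the unitary translation $T(x)$ to yield a unitary. Strong continuity of $x \mapsto T^{\lambda}(x)$ reduces, via a standard three-epsilon argument on a dense set of compactly supported continuous functions, to strong continuity of translations on $L^2(\Xgroup)$ together with continuity of $x \mapsto \lambda(x)$ in the strict topology of $\mathcal{U}(\BCont_u(\Xgroup))$; as noted in the discussion preceding the proposition, on bounded subsets the strict topology coincides with uniform convergence on compacta, which suffices.

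For (ii), I would insert the pointwise identity $\lambda'(x;y-x) = c(x + (y-x)) \, c(x)^{-1} \, \lambda(x;y-x) = c(y) \, c(x)^{-1} \, \lambda(x;y-x)$, coming directly from the hypothesis $\lambda' = \theta_{\cdot}[c] \, c^{-1} \, \lambda$, into the integral formula
\begin{align*}
\bigl ( \Rep^{\lambda'}_{\tau}(f) \, u \bigr )(x) = \int_{\Xgroup} \dd y \, f \bigl ( (\id - \tau)(x) + \tau(y) ; y - x \bigr ) \, \lambda'(x;y-x) \, u(y).
\end{align*}
The factor $c(x)^{-1}$ pulls out of the integral as left multiplication by $r(c^{-1})$, while the factor $c(y)$ combines with $u(y)$ under the integral as $\bigl ( r(c) \, u \bigr )(y)$, producing $r(c^{-1}) \, \Rep^{\lambda}_{\tau}(f) \, r(c) \, u$. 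The identity first holds for $f \in L^1(\Xgroup;\Alg)$ by absolute convergence of the integrals and then extends to all $f \in \twistedXprod$ by density and the fact that both sides are bounded on $L^2(\Xgroup)$ with bounds controlled by $\snorm{f}_{\twistedXprod}$.

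I do not expect any single step to be difficult, but the main practical obstacle is keeping careful track of the arguments of $\lambda$, of the action $\theta$ on multiplications, and of the $\tau$-dependent symmetrization in $\Rep^{\lambda}_{\tau}$. Once the pointwise form of $\delta^1(\lambda)$ and $\delta^0(c)$ are correctly identified inside the integrals, both parts of the proposition reduce to mechanical bookkeeping, and the role of the pseudotrivialization $\lambda$ is transparent: it is precisely what is needed to absorb the $2$-cocycle defect of the translations.
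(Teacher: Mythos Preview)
Your proof is correct and follows essentially the same approach as the paper. For (i) the paper simply says the claim follows from direct verification of the properties in Definition~\ref{algebraicPOV:twisted_crossed_products:defn:covariant_rep_twisted_Cstar_dyn_sys}, which is exactly what you spell out. For (ii) there is a minor difference in presentation: the paper first establishes $T^{\lambda'}(y) = r(c^{-1}) \, T^{\lambda}(y) \, r(c)$ at the level of the projective representation and then passes to the integrated form, whereas you plug the relation $\lambda' = \theta_{\cdot}[c] \, c^{-1} \, \lambda$ directly into the integral kernel of $\Rep^{\lambda'}_{\tau}(f)$; both computations are equivalent. The paper also takes a moment to justify the \emph{existence} of such a $c$ (via Lemma~\ref{algebraicPOV:twisted_crossed_products:lem:triviality_2_cocycle}, since $\lambda'/\lambda$ is a $1$-cocycle), which you simply take as given by the hypothesis.
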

\begin{proof}
	\begin{enumerate}[(i)]
		\item The claim follows from direct verification of the properties enumerated in Definition~\ref{algebraicPOV:twisted_crossed_products:defn:covariant_rep_twisted_Cstar_dyn_sys}. 
		\item There always exists a $c \in C^0(\Xgroup ; \Cont(\Xgroup ; \T))$ such that $\delta^1(\lambda) = \omega = \delta^1(\lambda')$: as $\lambda$ and $\lambda'$ are cohomologous, $\nicefrac{\lambda'}{\lambda}$ is a $1$-cocycle and by Lemma~\ref{algebraicPOV:twisted_crossed_products:lem:triviality_2_cocycle} there exists $c \in C^0(\Xgroup ; \Cont(\Xgroup ; \T))$ such that $\delta^0(c) = \theta_x[c] \, c^{-1} = \nicefrac{\lambda'}{\lambda}$. In other words, $\lambda' = \theta_x[c] \, c^{-1} \, \lambda$ and we conclude $T^{\lambda'}(y) = r(c^{-1}) \, T^{\lambda}(y) \, r(c)$ for all $y \in \Xgroup$, 
		\begin{align*}
			\bigl ( T^{\lambda'}(y) u \bigr ) (x) &= \lambda'(x ; y) \, u(x + y) 
			\\
			&= \theta_y[c(x)] \, c^{-1}(x) \, \lambda(x ; y) \, u(x + y) 
			= c^{-1}(x) \, \lambda(x ; y) \, c(x + y) \, u(x + y) 
			\\
			&= \bigl [ \bigl ( r(c^{-1}) \, T^{\lambda}(y) r(c) \bigr ) u \bigr ] (x) 
			. 
		\end{align*}
		Thus, the representations $\Rep^{\lambda}_{\tau}$ and $\Rep^{\lambda'}_{\tau}$ are also related by conjugation with $r(c^{-1})$. 
	\end{enumerate}
\end{proof}
Some basic facts of the Schrödinger representation $\Rep^{\lambda}_{\tau}$\index{representation!Schrödinger} on $L^2(\Xgroup)$ are proven in the next proposition: 
\begin{prop}\label{algebraicPOV:twisted_crossed_products:prop:basic_properties_rep}
	\begin{enumerate}[(i)]
		\item $\Rep^{\lambda}_{\tau} \bigl ( \Cont_{\infty}(\Xgroup) \rtimes^{\omega}_{\theta,\tau} \Xgroup \bigr )$ always coincides with $\mathcal{K} \bigl ( L^2(\Xgroup) \bigr )$, the $C^*$-algebra of all compact operators on $L^2(\Xgroup)$.\index{compact operators $\mathcal{K}(L^2(\Xgroup))$} 
		\item If $\Cont_{\infty}(\Xgroup) \subseteq \Alg$, then $\Rep^{\lambda}_{\tau}$ is irreducible.\index{representation!irreducible} 
		\item $\Rep^{\lambda}_{\tau}$ is faithful.\index{representation!faithful} 
	\end{enumerate}
\end{prop}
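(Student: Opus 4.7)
The three parts can be attacked more or less independently, and I would prove them in the order (i), (ii), (iii) since (ii) uses (i).

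For (i), the idea is to exploit that when $\Alg = \Cont_\infty(\Xgroup)$, Lemma \ref{algebraicPOV:twisted_crossed_products:lem:triviality_2_cocycle} tells us the $2$-cocycle $\omega$ is trivial inside $\mathcal{U}(\Cont_\infty(\Xgroup)) \cong \Cont(\Xgroup;\T)$, so $\Cont_\infty(\Xgroup) \rtimes^\omega_{\theta,\tau} \Xgroup$ is isomorphic to the untwisted crossed product by Lemma \ref{algebraicPOV:twisted_crossed_products:lem:twisted_crossed_products_cohomologou_isomorphic}. For $f \in L^1(\Xgroup ; \Cont_\infty(\Xgroup)) \cap \Cont_c(\Xgroup \times \Xgroup)$, the operator $\Rep^\lambda_\tau(f)$ is an integral operator on $L^2(\Xgroup)$ with kernel
\begin{align*}
K_f(x,y) = f\bigl((\id - \tau)(x) + \tau(y) \, ; \, y - x\bigr) \, \lambda(x ; y-x)
.
\end{align*}
Since $\sabs{\lambda} = 1$ and $f$ decays at infinity in its first argument while being integrable in the second, a change of variables shows $K_f \in L^2(\Xgroup \times \Xgroup)$, so $\Rep^\lambda_\tau(f)$ is Hilbert-Schmidt, hence compact. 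The map $f \mapsto K_f$ is essentially a pointwise multiplication by the unimodular $\lambda$ composed with a linear change of variables; it therefore sends a dense subset of $\Cont_\infty(\Xgroup) \rtimes^\omega_{\theta,\tau} \Xgroup$ bijectively onto a dense subset of $L^2(\Xgroup \times \Xgroup)$. Taking norm-closure (since the Hilbert-Schmidt norm dominates the operator norm, and since Hilbert-Schmidt operators are dense in $\mathcal{K}(L^2(\Xgroup))$) yields the equality $\Rep^\lambda_\tau \bigl ( \Cont_\infty(\Xgroup) \rtimes^\omega_{\theta,\tau} \Xgroup \bigr ) = \mathcal{K}(L^2(\Xgroup))$.

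For (ii), the key observation is that if $\Cont_\infty(\Xgroup) \subseteq \Alg$, then $\Cont_\infty(\Xgroup) \rtimes^\omega_{\theta,\tau} \Xgroup$ sits naturally as a $C^*$-subalgebra of $\twistedXprod$ (the twisted convolution restricts correctly), so by (i) the image $\Rep^\lambda_\tau(\twistedXprod)$ contains $\mathcal{K}(L^2(\Xgroup))$. Any closed subspace of $L^2(\Xgroup)$ invariant under $\Rep^\lambda_\tau(\twistedXprod)$ is in particular invariant under all compact operators; since rank-one projections onto arbitrary unit vectors lie in $\mathcal{K}(L^2(\Xgroup))$, any nonzero invariant subspace must be all of $L^2(\Xgroup)$, which gives irreducibility.

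For (iii), I would first verify faithfulness on the dense Banach-$*$ subalgebra $L^1(\Xgroup ; \Alg)$: if $\Rep^\lambda_\tau(f) = 0$ for $f \in L^1(\Xgroup ; \Alg)$, then the kernel $K_f$ vanishes almost everywhere on $\Xgroup \times \Xgroup$, and since $\lambda$ is pointwise unimodular together with $\Alg \subseteq \BCont_u(\Xgroup)$ (so $f(z;w)$ is continuous in $z$ for a.e. $w$), we get $f = 0$ in $L^1(\Xgroup ; \Alg)$. To promote this to the completion, the cleanest route is to show that the kernel $\Ideal := \ker \Rep^\lambda_\tau \subseteq \twistedXprod$ is a closed two-sided ideal with trivial intersection with $L^1(\Xgroup ; \Alg)$: if $f \in \Ideal$ and $g, h \in L^1(\Xgroup ; \Alg)$, then $g \repom f \repom h \in \Ideal \cap L^1(\Xgroup ; \Alg) = \{0\}$, so $\Rep^\lambda_\tau(g) \Rep^\lambda_\tau(f) \Rep^\lambda_\tau(h) = 0$; picking $g$ and $h$ so that $\Rep^\lambda_\tau(g)$ and $\Rep^\lambda_\tau(h)$ form an approximate identity in the multiplier algebra of $\Rep^\lambda_\tau(\twistedXprod)$ (using that $L^1(\Xgroup;\Alg) \repom L^1(\Xgroup;\Alg)$ is dense in $\twistedXprod$) yields $f = 0$.

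The main obstacle is part (iii), specifically the upgrade from faithfulness on $L^1$ to faithfulness on the full $C^*$-completion: density alone is not enough, and one needs either the approximate identity argument sketched above or an appeal to a disintegration/induction theorem identifying $\Rep^\lambda_\tau$ with the regular representation of the twisted crossed product. Parts (i) and (ii) are then essentially bookkeeping once the kernel computation and the trivializability of $\omega$ on $\Cont_\infty(\Xgroup)$ have been set up.
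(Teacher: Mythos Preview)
Your arguments for (i) and (ii) are correct. For (i) you take a more hands-on route than the paper: the paper first reduces to the untwisted case via Lemmas~\ref{algebraicPOV:twisted_crossed_products:lem:triviality_2_cocycle} and \ref{algebraicPOV:twisted_crossed_products:lem:twisted_crossed_products_cohomologou_isomorphic}, observes (via Proposition~\ref{algebraicPOV:twisted_crossed_products:prop:gauge_covariance_rep}) that $\Rep^{\lambda}_{\tau}$ is unitarily equivalent to $\Rep^{\id}_{0}$, and then simply quotes the classical fact $\Rep^{\id}_{0} \bigl ( \Cont_{\infty}(\Xgroup) \rtimes^{\id}_{\theta,0} \Xgroup \bigr ) = \mathcal{K}\bigl(L^2(\Xgroup)\bigr)$ from the literature. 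Your direct kernel computation is a legitimate alternative and avoids the external citation, at the price of having to check carefully that the image is both contained in and dense in $\mathcal{K}$. Part (ii) is identical to the paper's argument.

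For (iii) your primary line of attack has a genuine gap. The step ``$g \repom f \repom h \in \Ideal \cap L^1(\Xgroup;\Alg)$'' requires $g \repom f \repom h \in L^1(\Xgroup;\Alg)$, but $L^1(\Xgroup;\Alg)$ is \emph{not} a two-sided ideal in the $C^*$-completion $\twistedXprod$: for $f$ merely in the completion there is no reason for $g \repom f$ to land back in $L^1$. (Already for $\Xgroup = \R$, $\Alg = \C$ this fails: under Fourier transform $L^1(\R)$ sits inside $C^*(\R) \cong \Cont_\infty(\R)$, and $\Fourier L^1 \cdot \Cont_\infty \not\subseteq \Fourier L^1$.) So the approximate-identity argument as written cannot be completed. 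The alternative you mention at the end --- identifying $\Rep^{\lambda}_{\tau}$ with the regular representation --- is precisely what the paper does: it invokes amenability of $\Xgroup$ and Packer--Raeburn to get a faithful regular representation $(r',T')$ on $L^2\bigl(\Xgroup;L^2(\Xgroup)\bigr)$, then exhibits an explicit unitary $W^{\lambda}$ intertwining $(r',T')$ with $\bigl(\id_{L^2(\Xgroup)} \otimes r,\, \id_{L^2(\Xgroup)} \otimes T^{\lambda}\bigr)$. Since an amplification $\id \otimes \Rep^{\lambda}_{\tau}$ is faithful, $\Rep^{\lambda}_{\tau}$ itself is faithful.
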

\begin{proof}
	\begin{enumerate}[(i)]
		\item Since $\Alg = \Cont_{\infty}(\Xgroup)$, Lemma~\ref{algebraicPOV:twisted_crossed_products:lem:triviality_2_cocycle} tells us that all $2$-cocycles are cohomologous to $1$ and hence we can invoke Lemma~\ref{algebraicPOV:twisted_crossed_products:lem:twisted_crossed_products_cohomologou_isomorphic} to conclude 
		\begin{align*}
			\Cont_{\infty}(\Xgroup) \rtimes^{\omega}_{\theta,\tau} \Xgroup \cong \Cont_{\infty}(\Xgroup) \rtimes^{\id}_{\theta,0} \Xgroup 
			. 
		\end{align*}
		On the other hand, Proposition~\ref{algebraicPOV:twisted_crossed_products:prop:gauge_covariance_rep} also gives that $\Rep^{\lambda}_{\tau}$ of $\Cont_{\infty}(\Xgroup) \rtimes^{\omega}_{\theta,\tau} \Xgroup$ and $\Rep^{\id}_{0}$ of $\Cont_{\infty}(\Xgroup) \rtimes^{\id}_{\theta,0} \Xgroup$ are unitarily equivalent. However, from standard theory, we know that $\Rep^{\id}_{0} \bigl ( \Cont_{\infty}(\Xgroup) \rtimes^{\id}_{\theta,0} \Xgroup \bigr ) = \mathcal{K} \bigl ( L^2(\Xgroup) \bigr )$ \cite{Georgescu_Iftimovici:X_products:2002} \cite[Proposition~2.17]{Georgescu_Iftimovici:Cstar_algebras_quantum_hamiltonians:2003}. 
		\item If $\Cont_{\infty}(\Xgroup)$ is contained in $\Alg$, then the irreducibility of 
		\begin{align*}
			\Rep^{\lambda}_{\tau} \bigl ( \twistedXprod \bigr ) \supseteq \mathcal{K} \bigl ( L^2(\Xgroup) \bigr ) 
		\end{align*}
		follows from the irreducibility of $\mathcal{K} \bigl ( L^2(\Xgroup) \bigr )$. 
		\item We will make use of Theorem~3.11 in \cite{Packer_Raeburn:twisted_X_products_1:1989}: as any abelian, locally compact group is also amenable, there exists a faithful representation (called regular representation) of $\twistedXprod$ on $L^2 \bigl (\Xgroup ; L^2(\Xgroup) \bigr )$. We follow Packer and Raeburn and set $r' : \Alg \longrightarrow \mathcal{B} \bigl ( L^2 \bigl ( \Xgroup ; L^2(\Xgroup) \bigr ) \bigr )$
		\begin{align*}
			\bigl ( r'(\varphi) u \bigr )(x) := \theta_x[\varphi] \, u(x) \in L^2(\Xgroup)
			&& \forall \varphi \in \Alg 
		\end{align*}
		as well as $T' : \Xgroup \longrightarrow \mathcal{U} \bigl ( L^2 \bigl ( \Xgroup ; L^2(\Xgroup) \bigr ) \bigr )$ 
		\begin{align*}
			\bigl ( T'(y) u \bigr )(x) := \omega(x,y) \, u(x + y) \in L^2(\Xgroup)
			&& \forall x , y \in \Xgroup 
			. 
		\end{align*}
		The unitary operator $W^{\lambda} : L^2 \bigl ( \Xgroup ; L^2(\Xgroup) \bigr ) \longrightarrow L^2 \bigl ( \Xgroup ; L^2(\Xgroup) \bigr )$, 
		\begin{align*}
			\bigl ( W^{\lambda} u \bigr )(x ; y) :&= \lambda(x ; y) \, u(x ; x + y) 
			, 
		\end{align*}
		intertwines the usual $r'$ and $\id_{L^2(\Xgroup)} \otimes r$ as well as $T'$ and $\id_{L^2(\Xgroup)} \otimes T^{\lambda}$: 
		\begin{align*}
			\bigl ( {W^{\lambda}}^{\ast} \, r'(\varphi) \, W^{\lambda} u \bigr ) (x ; y) &= \varphi(y) \, u(x ; y) 
			= \bigl ( \id_{L^2(\Xgroup)} \otimes r(\varphi) \, u \bigr ) (x ; y)
			\\
			\bigl ( {W^{\lambda}}^{\ast} \, T'(z) \, W^{\lambda} u \bigr ) (x ; y) &= \lambda(y ; z) \, u(x ; y + z) 
			= \bigl ( \id_{L^2(\Xgroup)} \otimes T^{\lambda}(z) \, u \bigr ) (x ; y) 
			. 
		\end{align*}
		The above follows from calculations as well as 
		%
		\begin{align*}
			\bigl ( {W^{\lambda}}^{\ast} u \bigr )(x ; y) = \lambda(x ; y - x)^{-1} \, u(x ; y - x) 
			. 
		\end{align*}
		Hence, the faithful regular representation of $\Alg \rtimes^{\omega}_{\theta,0} \Xgroup$ is unitarily equivalent to the covariant representation $\bigl ( L^2(\Xgroup) \otimes L^2(\Xgroup) , \id_{L^2(\Xgroup)} \otimes r , \id_{L^2(\Xgroup)} \otimes T^{\lambda} \bigr )$. The twisted crossed products $\twistedXprod$ for different choices of $\tau \in \mathrm{End}(\Xgroup)$ are unitarily equivalent (Lemma~\ref{algebraicPOV:twisted_crossed_products:cor:equivalence_twistedXproducts_tau}), so $r \rtimes_{\tau} T^{\lambda} \equiv \Rep_{\tau}^{\lambda}$ inherits the faithfulness of the regular representation. 
	\end{enumerate}
\end{proof}
\begin{remark}
	If $\Cont_{\infty}(\Xgroup) \not\subseteq \Alg$, \ie $\Cont_{\infty}(\Xgroup) \cap \Alg = \{ 0 \}$, then $\Rep^{\lambda}_{\tau}$ need not be irreducible. 
\end{remark}
\begin{remark}
	Lastly, one remark regarding twisted crossed products being isomorphic: if $\Cont_{\infty}(\Xgroup) \subseteq \Alg$, then $\mathcal{U}(\Cont_{\infty}(\Xgroup)) = \Cont(\Xgroup ; \T) = \mathcal{U}(\Alg)$ and all $2$-cocycles are trivial, \ie cohomologuous to $1$  (Lemma~\ref{algebraicPOV:twisted_crossed_products:lem:triviality_2_cocycle}). Then the twisted crossed product is independent of the particular choice of $\omega$. There are several caveats: physically, this does not imply that magnetic and non-magnetic theories agree. In fact, instead of working with $f \in \twistedXprod$, one has to use $i^{\lambda}_{\tau}(f) \in \Alg \rtimes^{\id}_{\theta,\tau} \Xgroup$ -- which depends on the trivializing $1$-coboundary ($A$-dependent) -- to describe the same object. In particular, saying that $\omega^B$ is trivial (cohomologous to $1$) does in no way imply $B = 0$. 
	
	Related to this, there is the notion of pseudotriviality that is necessary to find natural covariant representations on $\Hil = L^2(\Xgroup)$. The proper way to think about this is to embed $\twistedXprod$ into a larger $C^*$-algebra, \eg $\BCont_u(\Xgroup) \rtimes^{\omega}_{\theta,\tau} \Xgroup$, which does not depend on the $2$-cocycle. The representation of the larger algebra can then be restricted to the smaller algebra (Proposition~\ref{algebraicPOV:twisted_crossed_products:prop:gauge_covariance_rep}). 
\end{remark}
Before we turn to generalized Weyl calculus, let us quote without proof two very useful results from \cite{Packer_Raeburn:twisted_X_products_1:1989,Packer_Raeburn:twisted_X_products_2:1990}: these results have been proven for twisted crossed products from non-abelian $C^*$-algebras $\Alg$ and groups $\Xgroup$, but we will stick to the abelian framework.\footnote{Furthermore, they have worked with cohomologies in the category of Borel functions rather than continuous functions. } Assume $\Xgroup$ admits a closed subgroup $N$ (which is automatically normal by commutativity) so that 
\begin{align*}
	\Xgroup \cong N \times {\Xgroup} / {N} 
	. 
\end{align*}
Then one may suspect that also the twisted crossed product decomposes into 
\begin{align*}
	\twistedXprod \cong \bigl ( \Alg \rtimes^{\omega'}_{\theta',\tau} N \bigr ) \rtimes^{\omega''}_{\theta'',\tau} {\Xgroup} / {N} 
	. 
\end{align*}
Packer and Raeburn have answered this in the positive: 
%
\begin{thm}[Theorem 4.1 in \cite{Packer_Raeburn:twisted_X_products_1:1989}]
	Assume $(\Alg , \Xgroup , \theta , \omega)$ is a separable, abelian and twisted $C^*$-dynamical system and $\mathcal{N}$ a closed normal subgroup of $\Xgroup$. Then the twisted crossed product is isomorphic to an iterated twisted crossed product,\index{twisted crossed product!iterated} 
	\begin{align*}
		\twistedXprod \cong \bigl ( \Alg \rtimes^{\omega'}_{\theta',\tau} \mathcal{N} \bigr ) \rtimes^{\omega''}_{\theta'',\tau} \Xgroup / \mathcal{N} 
		,  
	\end{align*}
	where the actions $\theta' : \mathcal{N} \longrightarrow \mathrm{Aut}(\Alg)$ and $\theta'' : \Xgroup / \mathcal{N} \longrightarrow \mathrm{Aut} \bigl ( \Alg \rtimes^{\omega'}_{\theta',\tau} \mathcal{N} \bigr )$ and the $2$-cocycles $\omega' : \mathcal{N} \times \mathcal{N} \longrightarrow \mathcal{U}(\Alg)$ and $\omega'' : \Xgroup / \mathcal{N} \times \Xgroup / \mathcal{N} \longrightarrow \mathcal{U} \bigl ( \Alg \rtimes^{\omega'}_{\theta',\tau} \mathcal{N} \bigr )$ involved in the definition of the right-hand side are known explicitly. 
\end{thm}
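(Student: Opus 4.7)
My plan is to reduce the statement to an explicit isomorphism at the level of the dense $L^1$ subalgebras and then invoke the universal property of twisted crossed products. Using the splitting $\Xgroup \cong \mathcal{N} \times \mathcal{H}$ (which the hypothesis provides, with $\mathcal{H}$ a closed subgroup mapping isomorphically onto $\Xgroup/\mathcal{N}$), I would write $x = n + h$ and identify $L^1(\Xgroup;\Alg)$ with $L^1(\mathcal{H}; L^1(\mathcal{N};\Alg))$ via the Fubini map $\Phi(f)(h)(n) := f(n+h)$. Because the cross-section $h \mapsto (0,h)$ is a genuine homomorphism, no ``defect $2$-cochain'' appears from the section itself, and all complications are concentrated in the cocycle $\omega$.

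Next I would define the inner data $\theta' := \theta|_{\mathcal{N}}$ and $\omega' := \omega|_{\mathcal{N}\times\mathcal{N}}$; these obviously satisfy the required axioms, so $\Alg\rtimes^{\omega'}_{\theta',\tau}\mathcal{N}$ is well-defined. For the outer data, I would define $\theta''_h$ on $L^1(\mathcal{N};\Alg)$ by
\begin{align*}
(\theta''_h[F])(n) := \theta_h[F(n)] \, \theta_{-\tau(n)}\bigl[\omega(h,n)\,\omega(n,h)^{-1}\bigr],
\end{align*}
the correction term encoding the mismatch between $\omega(h,n)$ and $\omega(n,h)$ in an abelian group, and $\omega''(h,h') \in \mathcal{U}(\mathcal{M}(\Alg\rtimes^{\omega'}_{\theta',\tau}\mathcal{N}))$ as the multiplier given by left-$\repcom^{\omega'}_{\theta',\tau}$-multiplication by the constant function $n \mapsto \omega(h,h')$ (using the embedding $\mathcal{U}(\Alg)\hookrightarrow \mathcal{M}(\Alg\rtimes^{\omega'}_{\theta',\tau}\mathcal{N})$ coming from the multiplier action of $\Alg$). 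That $(\theta'',\omega'')$ is a bona fide twisted action on the inner crossed product follows by carefully unpacking the $2$-cocycle identity for $\omega$ on all triples drawn from $\mathcal{N}\cup\mathcal{H}$; this is algebraically routine but tedious, and in particular forces the correction term in $\theta''$ above.

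Having set up the data, I would verify that $\Phi$ intertwines products and involutions, that is
\begin{align*}
\Phi(f\repom g) = \Phi(f) \,\repcom^{\omega''}_{\theta'',\tau}\, \Phi(g), \qquad \Phi(f^{\repom}) = \Phi(f)^{\repcom^{\omega''}_{\theta'',\tau}},
\end{align*}
by splitting the integration variable in \eqref{algebraicPOV:twisted_crossed_products:eqn:twisted_composition_law} as $y = m+k$ with $m\in\mathcal{N}$, $k\in\mathcal{H}$ and rearranging the resulting $\omega$-factors using the $2$-cocycle identity until the expressions for the inner and outer products emerge separately. Since $\Phi$ is an isometric bijection $L^1(\Xgroup;\Alg) \to L^1(\mathcal{H};L^1(\mathcal{N};\Alg))$ and a $*$-algebra morphism, it extends to a $*$-isomorphism of enveloping $C^*$-algebras via the universal property: any covariant representation of one side pulls back to a covariant representation of the other along $\Phi$ and $\Phi^{-1}$, so the two sup-norms defining the $C^*$-completions coincide.

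The main obstacle will be Step two, namely pinning down the exact form of $\theta''$ and $\omega''$ and verifying the twisted-action axioms for the outer pair. The $2$-cocycle identity of $\omega$ on $\Xgroup\times\Xgroup\times\Xgroup$ fragments into several compatibility relations between $\omega'$, the correction term in $\theta''$, and $\omega''$, and one must be careful with the ordering conventions induced by the endomorphism $\tau$ (which, by Corollary~\ref{algebraicPOV:twisted_crossed_products:cor:equivalence_twistedXproducts_tau}, could be fixed to $\tau=0$ to simplify, at the cost of transporting the isomorphism back through $m_{0,\tau}$). Once that bookkeeping is done, the extension to $C^*$-completions and the faithfulness of $\Phi$ are formal consequences of the universal property and the density of $L^1$.
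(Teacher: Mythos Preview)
The paper does not prove this theorem at all; it is explicitly quoted \emph{without proof} from Packer--Raeburn (see the sentence immediately preceding the theorem: ``let us quote without proof two very useful results\ldots''). So there is no paper proof to compare against, and your sketch is being judged on its own merits.

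Your outline is the natural one in the \emph{split} case, but it contains a genuine gap: you assume a topological splitting $\Xgroup \cong \mathcal{N} \times \mathcal{H}$ with $\mathcal{H}$ a closed subgroup, claiming ``the hypothesis provides'' it. The theorem as stated assumes only that $\mathcal{N}$ is a closed (normal) subgroup, and for locally compact abelian groups such extensions need not split topologically---the standard counterexample is $0 \to \Z \to \R \to \T \to 0$, where $\R$ has no closed subgroup isomorphic to $\T$. The motivational paragraph before the theorem does mention the split situation, but Packer--Raeburn's actual result (and the theorem as formally stated) covers the general case. In that generality one must work with a \emph{Borel} cross-section $s : \Xgroup/\mathcal{N} \to \Xgroup$, and then the defect cochain $s(h) + s(h') - s(hh') \in \mathcal{N}$ that you dismissed does appear and feeds into both $\theta''$ and $\omega''$. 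Your remark that ``no defect $2$-cochain appears'' is precisely where the argument would fail outside the split case.

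A secondary point: even granting the splitting, the passage from the $L^1$-isomorphism to the $C^*$-isomorphism is more delicate than ``formal consequences of the universal property.'' You need to establish a bijection between covariant representations of $(\Alg,\Xgroup,\theta,\omega)$ and covariant representations of the iterated system; this is the substantive content of the representation-theoretic half of Packer--Raeburn's Theorem~4.1, and it does not follow from $\Phi$ being an $L^1$-isometry alone (the two $C^*$-norms are defined as suprema over a priori different classes of representations).
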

Another important theorem concerns the `trivialization' of the $2$-cocycle: Packer and Raeburn show that it is always possible to enlarge the twisted crossed product by tensoring the $C^*$-algebra of compact operators on $L^2(\Xgroup)$ to it and to `straighten out' the twisted crossed product: 
\begin{align*}
	\twistedXprod \otimes \mathcal{K} \bigl ( L^2(\Xgroup) \bigr ) 
	\cong \bigl ( \Alg \otimes \mathcal{K} \bigl ( L^2(\Xgroup) \bigr ) \bigr ) \rtimes^{\omega'}_{\theta',\tau} \Xgroup 
	\cong \bigl ( \Alg \otimes \mathcal{K} \bigl ( L^2(\Xgroup) \bigr ) \bigr ) \rtimes^{\id}_{\theta'',\tau} \Xgroup
\end{align*}
Roughly speaking, this works, because the algebra of compact operators on $L^2(\Xgroup)$ can be thought of as representation of $\Cont_{\infty}(\Xgroup) \rtimes^{\omega}_{\theta,\tau} \Xgroup$ -- which is isomorphic to $\Cont_{\infty} \rtimes^{\id}_{\theta,\tau} \Xgroup$. 

The significance of this result is that it allows to easily extend existing results pertaining to \emph{un}twisted crossed products to \emph{twisted} ones: 
\begin{thm}[Corollary 3.7 in \cite{Packer_Raeburn:twisted_X_products_1:1989}]\label{algebraicPOV:twisted_crossed_products:thm:twisted_equal_untwisted_Packer_Raeburn}
	Let $(\Alg , \Xgroup , \theta , \omega)$ be a separable twisted $C^*$-dynamical system. There there is an action $\theta'$ of $\Xgroup$ on $\Alg \otimes \mathcal{K} \bigl ( L^2(\Xgroup) \bigr )$ such that 
	\begin{align*}
		\bigl ( \twistedXprod \bigr ) \otimes \mathcal{K} \bigl ( L^2(\Xgroup) \bigr ) \cong \bigl ( \Alg \otimes \mathcal{K} \bigl ( L^2(\Xgroup) \bigr ) \bigr ) \rtimes^{\id}_{\theta' , \tau} \Xgroup 
		. 
	\end{align*}
\end{thm}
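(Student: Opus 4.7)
The strategy is the classical \emph{stabilization trick} of Packer–Raeburn, which absorbs the $2$-cocycle $\omega$ into the auxiliary factor $\mathcal{K} \bigl ( L^2(\Xgroup) \bigr )$ by conjugation with a twisted regular representation. The key structural input is that the ``extra'' compact operator factor should itself be realized as an untwisted crossed product in which the twist can be built, and the key algebraic input is that since $\Alg$ is abelian, values of $\omega$ in $\mathcal{U}(\Alg)$ commute with all of $\Alg \otimes \mathcal{K} \bigl ( L^2(\Xgroup) \bigr )$.

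First, I would construct on $L^2(\Xgroup)$ the \emph{$\omega$-twisted regular representation}
\begin{align*}
  \bigl ( T^\omega(x) \xi \bigr )(y) := \omega(y, x) \, \xi(y + x)
  ,
\end{align*}
viewed as taking values in $\mathcal{U} \bigl ( \mathcal{M} \bigl ( \Alg \otimes \mathcal{K}(L^2(\Xgroup)) \bigr ) \bigr )$ through the canonical embedding, combined with the action $\theta$ on the $\Alg$-tensor factor. A direct computation using the $2$-cocycle identity for $\omega$ yields the projective relation $T^\omega(x) \, T^\omega(y) = \omega(x,y) \, T^\omega(x + y)$ in the appropriate multiplier algebra. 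I would then define the candidate new action
\begin{align*}
  \theta'_x := \theta_x \otimes \mathrm{Ad} \bigl ( T^\omega(x) \bigr )
  .
\end{align*}
The crucial point is that although $T^\omega$ is only projective, the projective defect lies in $\mathcal{U}(\Alg)$, and since $\Alg$ is commutative this defect commutes with every element of $\Alg \otimes \mathcal{K}(L^2(\Xgroup))$ under $\mathrm{Ad}$. Thus $\mathrm{Ad}(T^\omega)$ descends to a genuine group action, and a routine continuity check gives $\theta' \in \mathrm{Hom}(\Xgroup, \mathrm{Aut}(\Alg \otimes \mathcal{K}(L^2(\Xgroup))))$.

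Next, I would build the isomorphism. Define on the dense subalgebra $L^1 \bigl ( \Xgroup ; \Alg \bigr ) \odot \mathcal{K}_0 \subseteq (\twistedXprod) \otimes \mathcal{K}(L^2(\Xgroup))$, with $\mathcal{K}_0$ the finite-rank operators, the map
\begin{align*}
  \Phi(f \otimes k)(x) := f(x) \otimes T^\omega(\tau(x)) \, k \, T^\omega(\tau(x))^{\ast}
\end{align*}
viewed as an element of $L^1 \bigl ( \Xgroup ; \Alg \otimes \mathcal{K}(L^2(\Xgroup)) \bigr )$. A direct verification, using the definition of $\repom$, the commutativity of $\Alg$, and the projective multiplication law of $T^\omega$, shows that the $\omega$-twists appearing in the product on the left cancel exactly against those produced by conjugation on the right; that is,
\begin{align*}
  \Phi \bigl ( (f \otimes k) \cdot (g \otimes \ell) \bigr ) = \Phi(f \otimes k) \repcom_{\theta',\tau}^{\id} \Phi(g \otimes \ell)
  ,
\end{align*}
and similarly $\Phi$ intertwines the involutions. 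Compatibility with the universal $C^{\ast}$-norms is automatic once $\Phi$ is shown to establish a bijection between covariant representations of the two sides.

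The main obstacle is precisely this last verification: carefully organizing the cocycle bookkeeping so that the $\omega$-factors in $\repom$ on $\twistedXprod$ are matched, term by term, against the $\mathrm{Ad}(T^\omega)$-factors produced by $\theta'$ in the untwisted crossed product. Once this algebraic identity is in hand, extending $\Phi$ by continuity to the full stabilized crossed product and verifying its inverse (sending a finite-rank ``matrix unit'' back into the appropriate $L^1$-element) is mechanical, and yields the desired $C^{\ast}$-isomorphism.
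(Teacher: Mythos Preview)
The paper does not prove this theorem; it is explicitly quoted without proof from Packer--Raeburn (the surrounding text reads ``let us quote without proof two very useful results from \cite{Packer_Raeburn:twisted_X_products_1:1989,Packer_Raeburn:twisted_X_products_2:1990}''). So there is no in-paper argument to compare against, and your plan is essentially the original Packer--Raeburn stabilization trick that the citation points to.

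The strategy is sound, but one notational point needs care. Writing $\theta'_x = \theta_x \otimes \mathrm{Ad}(T^\omega(x))$ as a tensor product of automorphisms is not literally correct: your $T^\omega(x)$ already entangles the two tensor factors, since the multiplier $\omega(y,x)$ is $\mathcal{U}(\Alg)$-valued rather than scalar. What you want is a single automorphism of $\Alg \otimes \mathcal{K}(L^2(\Xgroup))$, written as a composition such as $\mathrm{Ad}(U_x) \circ (\theta_x \otimes \mathrm{id})$ for an appropriate unitary multiplier $U_x$ of $\Alg \otimes \mathcal{K}(L^2(\Xgroup))$; the cocycle bookkeeping then runs through this composition. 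Your observation that the projective defect lies in $\mathcal{U}(\Alg)$ and is therefore central (since $\Alg$ is abelian in this paper's setting) is exactly the mechanism that makes $\theta'$ a genuine action. With that adjustment, the remainder of your outline---matching $\omega$-factors in $\repom$ against those produced by the adjoint action, and extending from the $L^1$-level by continuity---is the correct path and is what Packer--Raeburn do.
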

%



\section{Generalized Weyl calculus} 
\label{algebraicPOV:generalized_weyl_calculus}
%
To motivate twisted crossed products, we have done away with the Fourier transform in the product formula (equation~\eqref{algebraicPOV:twisted_crossed_products:eqn:untwisted_product_Rd}). To connect pseudodifferential theory with twisted crossed products, we need to Fourier transform the twisted crossed product $\twistedXprod$ in the $\Xgroup$-variable. In our setting, \ie $\Xgroup$ being an abelian, locally compact group, we can define the 
%
\begin{defn}[Dual group $\dualX$\index{dual group $\hat{\Xgroup}$}]\label{algebraicPOV:generalized_weyl_calculus:defn:dual_group}
	Let $\Xgroup$ be a locally compact, second countable, abelian group. The set of continuous group morphisms $\xi : \Xgroup \longrightarrow \T$ equipped with pointwise multiplication and uniform convergence on compact subsets forms the dual group $\dualX$. 
\end{defn}
%
Then, for a suitable normalization of the Haar measure, we define the Fourier transform of $L^1(\Xgroup)$ functions,\index{Fourier transform!on groups} 
%
\begin{align}
	\Fourier_{\Xgroup} &: L^1(\Xgroup) \longrightarrow \Cont_{\infty}(\dualX), \; \bigl ( \Fourier_{\Xgroup} f \bigr ) (\xi) := \int_{\Xgroup} \dd x \, {\xi(x)}^* \, f(x) 
	\\
	\bar{\Fourier}_{\Xgroup} &: L^1(\Xgroup) \longrightarrow \Cont_{\infty}(\dualX), \; \bigl ( \bar{\Fourier}_{\Xgroup} f \bigr ) (\xi) := \int_{\Xgroup} \dd x \, \xi(x) \, f(x) \notag 
	. 
\end{align}
It induces unitary maps between $L^2(\Xgroup)$ and $L^2(\dualX)$. The inverses act on $L^1(\dualX) \cap L^2(\dualX)$ as 
\begin{align}
	\bigl ( \bar{\Fourier}_{\dualX} f \bigr ) (\xi) :&= \int_{\dualX} \dd \xi \, \xi(x) \, f(\xi) = \bigl ( \Fourier_{\Xgroup}^{-1} f \bigr )(x) 
	\\
	\bigl ( \Fourier_{\dualX} f \bigr ) (x) :&= \int_{\dualX} \dd \xi \, {\xi(x)}^* \, f(\xi) = \bigl ( \bar{\Fourier}_{\Xgroup}^{-1} f \bigr )(x) 
	. 
\end{align}
We extend this definition to $\Alg$-valued functions via $\Fourier := \id_{\Alg} \otimes \Fourier_{\Xgroup}$ (where we have identified $L^1(\Xgroup ; \Alg)$ with $\Alg \otimes L^1(\Xgroup)$), 
\begin{align*}
	\Fourier : L^1(\Xgroup ; \Alg) \longrightarrow \Cont_{\infty}(\dualX ; \Alg) 
	. 
\end{align*}
Thus, we can transcribe all elements of the Banach-$\ast$ algebra $\bigl ( L^1(\Xgroup ; \Alg) , \repom , {}^{\repom} \bigr )$ onto a subset of $\Cont_{\infty}(\dualX ; \Alg)$ via $\Fourier^{-1}$. For any two $f , g \in \Fourier^{-1} L^1(\Xgroup ; \Alg)$, the induced product $\Weyl^{\omega}_{\tau}$ is related to $\repom$ by intertwining them with $\Fourier$, 
\begin{align}
	\bigl ( f \Weyl^{\omega}_{\tau} g \bigr )(x , \xi) &:= \Fourier^{-1} \bigl ( (\Fourier f) \repom (\Fourier g) \bigr )(x , \xi) 
	\label{algebraicPOV:generalized_weyl_calculus:eqn:repom_product}
	, \\
	f^{\Weyl^{\omega}_{\tau}} &:= \Fourier^{-1} \bigl ( (\Fourier f)^{\repom} \bigr ) 
	, 
	\label{algebraicPOV:generalized_weyl_calculus:eqn:repom_involution}
\end{align}
and the transported norm is likewise defined as 
\begin{align}
	\norm{f} := \norm{\Fourier f}_{L^1} 
	. 
\end{align}
However, there is no direct characterization of $\Fourier^{-1} L^1(\Xgroup ; \Alg)$, so we will usually work on suitable dense subsets. The canonical extension of the inverse Fourier transform $\Fourier^{-1}$ relates the enveloping $C^*$-algebra of $\bigl ( \Fourier^{-1} L^1(\Xgroup ; \Alg) , \Weyl^{\omega}_{\tau} , {}^{\Weyl^{\omega}_{\tau}} \bigr )$, denoted by $\sFXprod$, to the twisted crossed product $\twistedXprod$ and 
\begin{align*}
	\norm{\Fourier f}_{\stwistedXprod} = \norm{f}_{\sFXprod} 
\end{align*}
holds. The space $\Xgroup \times \dualX$ is interpreted as \emph{quantum phase space}: the attribute quantum shall remind the reader that in general $\Xgroup \times \dualX$ does not coincide with $T^* \Xgroup$, the cotangent bundle over $\Xgroup$. If $\Xgroup = \T^d$, then $T^* \T^d \cong \T^d \times \R^d$ whereas $\T^d \times \widehat{\T^d} \cong \T^d \times \Z^d$. In a way, the quantum phase space already knows that the spectrum of the momentum operator is purely discrete. The Fourier transform also induces a covariant representation of $\sFXprod$ on $\Hil = L^2(\Xgroup)$. 
\begin{prop}
	\begin{enumerate}[(i)]
		\item The representation 
		\begin{align*}
			\Op^{\lambda}_{\tau} := \Rep^{\lambda}_{\tau} \circ \Fourier : \sFXprod \longrightarrow \mathcal{B} \bigl ( L^2(\Xgroup) \bigr ) 
		\end{align*}
		is faithful and acts on $f \in \Fourier^{-1} L^1(\Xgroup ; \Alg)$ by 
		\begin{align}
			\bigl ( \Op^{\lambda}_{\tau} (f) u \bigr)(x) = \int_{\Xgroup} \dd y \int_{\dualX} \dd \xi \, &{\xi(y - x)}^* \, \lambda(x ; y - x)  
			\cdot \notag \\
			&\cdot 
			f \bigl ( (\id - \tau)(x) + \tau(y) , \xi \bigr ) \, u(y) 
			, 
			\label{algebraicPOV:generalized_weyl_calculus:eqn:Op_lt}
		\end{align}
		for all $u \in L^2(\Xgroup)$ and $x \in \Xgroup$ where the right-hand side is viewed as an iterated integral. 
		\item If $\lambda' \in C^1 \bigl (\Xgroup ; \Cont(\Xgroup ; \T) \bigr )$ is another $1$-cochain that pseudotrivializes the $2$-cocycle $\omega = \delta^1 \lambda' \in C^2 \bigl ( \Xgroup ; \Cont(\Xgroup ; \T) \bigr )$, then $\lambda' = \delta^0(c) \, \lambda$ for some $c \in \Cont(\Xgroup ; \T)$ and the representations $\Op^{\lambda}_{\tau}$ and $\Op^{\lambda'}_{\tau}$ are unitarily equivalent: 
		\begin{align}
			r(c^{-1}) \, \Op^{\lambda}_{\tau}(f) \, r(c) = \Op^{\lambda'}_{\tau}(f) 
			&& \forall f \in \sFXprod 
			. 
		\end{align}
	\end{enumerate}
\end{prop}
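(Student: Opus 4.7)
The plan is to reduce both parts to facts already established for the Schrödinger-type representation $\Rep^{\lambda}_{\tau}$ of the twisted crossed product, exploiting that by definition $\Op^{\lambda}_{\tau} = \Rep^{\lambda}_{\tau} \circ \Fourier$ and that $\Fourier$ is, by construction of the transported product \eqref{algebraicPOV:generalized_weyl_calculus:eqn:repom_product}, involution \eqref{algebraicPOV:generalized_weyl_calculus:eqn:repom_involution} and norm on $\sFXprod$, an isometric $\ast$-isomorphism of $C^{\ast}$-algebras $\sFXprod \to \twistedXprod$.

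For (i) I would first deduce faithfulness: the composition of a $\ast$-isomorphism with a faithful representation is faithful, and $\Rep^{\lambda}_{\tau}$ is faithful by Proposition~\ref{algebraicPOV:twisted_crossed_products:prop:basic_properties_rep}(iii). For the explicit formula, I would take $f \in \Fourier^{-1} L^1(\Xgroup;\Alg)$, set $g := \Fourier f \in L^1(\Xgroup;\Alg)$, and apply the integral formula for $\Rep^{\lambda}_{\tau}$ stated in Proposition~\ref{algebraicPOV:twisted_crossed_products:prop:gauge_covariance_rep}(ii) to $g$. Writing out the partial inverse Fourier transform
\begin{align*}
g(z; w) = \int_{\dualX} \dd \xi \, {\xi(w)}^{\ast} \, f(z,\xi)
\end{align*}
and substituting inside produces the asserted double expression. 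The only subtlety is that $f(z,\cdot)$ is not required to be integrable, so the result must be stated as an \emph{iterated} integral (outer $y$-integral, inner $\xi$-integral); this is exactly the interpretation already flagged in the statement, and no invocation of Fubini is needed since the inner integral is simply $g(\,\cdot\,;y-x)$.

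For (ii), the existence of $c \in \Cont(\Xgroup;\T)$ with $\lambda' = \delta^0(c)\,\lambda$ is not new: it was already established inside the proof of Proposition~\ref{algebraicPOV:twisted_crossed_products:prop:gauge_covariance_rep}(ii) by noting that $\lambda'/\lambda$ is a $1$-cocycle in $C^1\bigl(\Xgroup;\Cont(\Xgroup;\T)\bigr)$ and applying Lemma~\ref{algebraicPOV:twisted_crossed_products:lem:triviality_2_cocycle} in the enlarged unitary group $\Cont(\Xgroup;\T)$. From that same proposition we inherit the identity $\Rep^{\lambda'}_{\tau}(g) = r(c^{-1})\,\Rep^{\lambda}_{\tau}(g)\,r(c)$ on all of $\twistedXprod$; specializing to $g = \Fourier f$ for $f \in \sFXprod$ and composing with $\Fourier$ gives the claimed unitary equivalence $r(c^{-1})\,\Op^{\lambda}_{\tau}(f)\,r(c) = \Op^{\lambda'}_{\tau}(f)$. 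No genuine obstacle is anticipated: both parts are routine transport of the Schrödinger-side statements through the $\ast$-isomorphism $\Fourier$, and the iterated-integral bookkeeping in (i) is the only place that warrants a sentence of care.
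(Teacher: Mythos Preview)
Your proposal is correct and follows precisely the route the paper intends: the paper states this proposition without proof, treating it as an immediate consequence of transporting Proposition~\ref{algebraicPOV:twisted_crossed_products:prop:basic_properties_rep}(iii) and Proposition~\ref{algebraicPOV:twisted_crossed_products:prop:gauge_covariance_rep}(ii) through the $\ast$-isomorphism $\Fourier$. Your handling of the iterated-integral caveat in (i) and the reuse of the cohomological argument from the proof of Proposition~\ref{algebraicPOV:twisted_crossed_products:prop:gauge_covariance_rep}(ii) for the existence of $c$ are exactly what is needed.
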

\begin{remark}
	The second property is the principle of \emph{gauge-covariance of generalized Weyl calculus}.\index{gauge-covariance} 
\end{remark}
If we have a look at how we have defined Weyl quantization, we have emphasized the importance of the Weyl system. What is the relation between the Weyl system and the covariant representation $\Op^{\lambda}_{\tau}$ of $\sFXprod$? If $r$ is the usual representation of $\Alg \subseteq \BCont_u(\Xgroup)$ on $\Hil = L^2(\Xgroup)$ mapping $\varphi \in \Alg$ onto multiplication by $\varphi(Q)$, 
\begin{align*}
	\bigl ( r(\varphi) u \bigr )(x) := \varphi(x) \, u(x) 
	, 
\end{align*}
then we can define $V : \dualX \longrightarrow \mathcal{U} \bigl ( L^2(\Xgroup) \bigr )$, 
\begin{align*}
	V(\xi) := r(\xi)^{\ast} 
	. 
\end{align*}
Keep in mind that $\dualX \subseteq \BCont(\Xgroup)$ and hence $r(\xi)$ is a well-defined expression. Now we can define the 
\begin{defn}[Weyl system\index{Weyl system!abstract}]
	The family of operators $\bigl \{ \WeylSyslt(x , \xi) \; \vert \; (x , \xi) \in \Xgroup \times \dualX \bigr \}$, 
	\begin{align}
		\WeylSyslt (x , \xi) := \xi \bigl ( (\id - \tau)(x) \bigr ) \, T^{\lambda}(x) \, V(\xi) = \xi \bigl ( - \tau(x) \bigr ) \, V(\xi) \, T^{\lambda}(x) 
		, 
	\end{align}
	is called the \emph{Weyl system} associated to the pseudotrivialization $\lambda$ and the endomorphism $\tau$. 
\end{defn}
The Weyl system acts on $u \in L^2(\Xgroup)$ as 
\begin{align*}
	\bigl ( \WeylSyslt(x,\xi) u \bigr )(y) = \xi \bigl ( y + \tau(x) \bigr )^* \, \lambda(y ; x) \, u(y + x) , 
	&& (x , \xi) \in \Xgroup \times \dualX 
	. 
\end{align*}
We can now `blow up' the original $C^*$-dynamical system $(\Xgroup , \Alg , \theta , \omega)$ to $(\Pspace , \Alg , \Theta , \Omega_{\tau})$ where $\Pspace := \Xgroup \times \dualX$ is quantum phase space, $\Theta : \Pspace \longrightarrow \Aut(\Alg)$, $\Theta_{(x , \xi)}[\varphi] := \theta_x[\varphi]$ is a trivially extended action and 
\begin{align*}
	\Omega_{\tau} \bigl ( (x , \xi) , (y , \eta) \bigr ) := \xi \bigl ( \tau(y) \bigr ) \, \eta \bigl ( (\tau - \id)(y) \bigr ) \, \omega(x,y) 
\end{align*}
as $2$-cocycle. The Weyl system plays the role of the unitary representation $T^{\lambda} : \Xgroup \longrightarrow \mathcal{U} \bigl ( L^2(\Xgroup) \bigr )$ and it is easy to verify that $\bigl ( L^2(\Xgroup) , r , \WeylSyslt \bigr )$ is a covariant representation of the blown up twisted $C^*$-dyanmical system $(\Pspace , \Alg , \Theta , \Omega_{\tau})$. 
\begin{prop}
	\begin{enumerate}[(i)]
		\item If $\bigl ( L^2(\Xgroup) , r , T^{\lambda} \bigr )$ is a covariant representation of $(\Xgroup , \Alg , \theta , \omega)$, then the triple $\bigl ( L^2(\Xgroup) , r , \WeylSyslt \bigr )$ is a covariant representation of the enlarged twisted $C^*$-dynamical system $(\Pspace , \Alg , \Theta , \Omega_{\tau})$ where 
		\begin{align*}
			\Theta_{(x , \xi)} [\varphi] :=& \; \theta_x [\varphi] 
			, \\
			\Omega_{\tau} \bigl ( (x , \xi) , (y , \eta) \bigr ) :=& \; \xi \bigl ( \tau(y) \bigr ) \, \eta \bigl ( (\tau - \id)(y) \bigr ) \, \omega(x,y) 
			. 
		\end{align*}
		\item If $\lambda'$ is another element of $C^1 \bigl ( \Xgroup ; \Cont(\Xgroup ; \T) \bigr )$ which pseudotrivializes $\delta^1(\lambda') = \omega = \delta^1(\lambda)$, then there exists $c \in \Cont(\Xgroup ; \T)$ such that $W^{\lambda'}_{\tau}(x,\xi) = r(c^{-1}) \, \WeylSyslt(x,\xi) \, r(c)$ for all $(x,\xi) \in \Pspace$. 
		\item For two endomorphisms $\tau , \tau' \in \mathrm{End} (\Xgroup)$, the $2$-cocycles $\Omega_{\tau}$ and $\Omega_{\tau'}$ on $\Pspace$ are cohomologous and the corresponding Weyl systems are connected by $W^{\lambda}_{\tau'}(x,\xi) = \xi \bigl ( (\tau - \tau')(x) \bigr ) \, \WeylSyslt(x,\xi)$ for all $(x,\xi) \in \Pspace$. 
	\end{enumerate}
\end{prop}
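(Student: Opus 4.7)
The plan is to verify, for part~(i), the three axioms of Definition~\ref{algebraicPOV:twisted_crossed_products:defn:covariant_rep_twisted_Cstar_dyn_sys} for the triple $\bigl(L^2(\Xgroup),r,\WeylSyslt\bigr)$. Non-degeneracy of $r$ is already known, and strong continuity of $(x,\xi)\mapsto\WeylSyslt(x,\xi)$ follows from strong continuity of $T^\lambda$ and of $\dualX\ni\xi\mapsto V(\xi)=r(\xi)^{\ast}$ (the latter is just multiplication by the continuous character $\overline{\xi}$). The covariance relation $\WeylSyslt(x,\xi)\,r(\varphi)\,\WeylSyslt(x,\xi)^{-1}=r(\Theta_{(x,\xi)}[\varphi])$ is immediate, because $V(\xi)=r(\xi^{-1})$ commutes with every $r(\varphi)$ — the algebra $\Alg$ being abelian — so the conjugation collapses to conjugation by $T^\lambda(x)$, which by Proposition~\ref{algebraicPOV:twisted_crossed_products:prop:gauge_covariance_rep} already intertwines $\theta_x=\Theta_{(x,\xi)}|_{\Alg}$. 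The substantive task in (i) is thus the twisted composition law.

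The key auxiliary identity I would establish first is
\[
V(\xi)\,T^\lambda(y)\;=\;\xi(y)\,T^\lambda(y)\,V(\xi),
\]
either pointwise, using $(V(\xi)u)(x)=\overline{\xi(x)}\,u(x)$ together with the character property $\xi(x+y)=\xi(x)\xi(y)$, or abstractly from the covariance $T^\lambda(y)\,r(\psi)\,T^\lambda(y)^{-1}=r(\theta_y[\psi])$ applied to $\psi=\xi^{-1}$, using $\theta_y[\xi^{-1}]=\xi(y)\,\xi^{-1}$. Combining this with the twisted group law $T^\lambda(x)\,T^\lambda(y)=r(\omega(x,y))\,T^\lambda(x+y)$ and with $V(\xi)\,V(\eta)=V(\xi\eta)$ yields
\[
\WeylSyslt(x,\xi)\,\WeylSyslt(y,\eta)=\xi\bigl((\id-\tau)(x)\bigr)\,\eta\bigl((\id-\tau)(y)\bigr)\,\xi(y)\,r(\omega(x,y))\,T^\lambda(x+y)\,V(\xi\eta);
\]
dividing by the scalar prefactor of $\WeylSyslt(x+y,\xi\eta)$ and collecting characters produces precisely $\Omega_\tau\bigl((x,\xi),(y,\eta)\bigr)$ as a multiplier in front of $\WeylSyslt\bigl((x,\xi)+(y,\eta)\bigr)$. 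The only real obstacle is clean bookkeeping of how the $(\id-\tau)$-contributions from the two Weyl factors recombine with the spurious factor $\xi(y)$ produced by the commutation.

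Part~(ii) is then routine: substituting $T^{\lambda'}(x)=r(c^{-1})\,T^\lambda(x)\,r(c)$ from Proposition~\ref{algebraicPOV:twisted_crossed_products:prop:gauge_covariance_rep} into the definition of $W^{\lambda'}_{\tau}$ and commuting $r(c)$ across $V(\xi)=r(\xi^{-1})$ — legal because $\Alg$ is abelian — gives $W^{\lambda'}_{\tau}(x,\xi)=r(c^{-1})\,\WeylSyslt(x,\xi)\,r(c)$. For part~(iii), the relation $W^{\lambda}_{\tau'}(x,\xi)=\xi\bigl((\tau-\tau')(x)\bigr)\,\WeylSyslt(x,\xi)$ is immediate from inspection of the scalar prefactor in the definition of $\WeylSyslt$. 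To exhibit $\Omega_\tau$ and $\Omega_{\tau'}$ as cohomologous, I would introduce the $1$-cochain $\mu(x,\xi):=\xi\bigl((\tau-\tau')(x)\bigr)\in\mathcal{U}(\Alg)$ on $\Pspace$ and compute $\delta^{1}(\mu)$, using that $\Theta_{(x,\xi)}$ acts trivially on such scalar multipliers and that $\xi,\eta$ are multiplicative in both arguments; direct comparison with the definitions of $\Omega_\tau$ and $\Omega_{\tau'}$ yields $\Omega_{\tau'}=\delta^{1}(\mu)\,\Omega_\tau$. This relation is moreover forced, once (i) is in hand, by applying the composition law of (i) to both $W^{\lambda}_{\tau}$ and $W^{\lambda}_{\tau'}=\mu\cdot W^{\lambda}_{\tau}$ and comparing.
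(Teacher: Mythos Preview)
Your proposal is correct and follows essentially the same approach as the paper: direct verification of the axioms of Definition~\ref{algebraicPOV:twisted_crossed_products:defn:covariant_rep_twisted_Cstar_dyn_sys} for (i), reduction to Proposition~\ref{algebraicPOV:twisted_crossed_products:prop:gauge_covariance_rep} for (ii), and exhibiting the $1$-cochain $\mu(x,\xi)=\xi\bigl((\tau-\tau')(x)\bigr)$ (which the paper calls $\Lambda_{\tau,\tau'}$) for (iii). The paper's proof is extremely terse---barely more than a pointer to the relevant definitions---so your explicit computation of the composition law via the commutation identity $V(\xi)\,T^{\lambda}(y)=\xi(y)\,T^{\lambda}(y)\,V(\xi)$ is exactly the content that the paper leaves to the reader.
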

\begin{proof}
	\begin{enumerate}[(i)]
		\item One can directly verify that $\bigl ( L^2(\Xgroup) , r , \WeylSyslt \bigr )$ satisfies all the properties in Definition~\ref{algebraicPOV:twisted_crossed_products:defn:covariant_rep_twisted_Cstar_dyn_sys}. 
		\item This follows from the corresponding property of $T^{\lambda}$ (Proposition~\ref{algebraicPOV:twisted_crossed_products:prop:gauge_covariance_rep}). 
		\item The $1$-cochain $\Lambda_{\tau,\tau'}(x,\xi) := \xi \bigl ( (\tau - \tau')(x) \bigr )$ links $\WeylSyslt$ and $W^{\lambda'}_{\tau}$, 
		\begin{align*}
			\Omega_{\tau'} = \delta^1(\Lambda_{\tau,\tau'}) \, \Omega_{\tau} 
			. 
		\end{align*}
	\end{enumerate}
\end{proof}
%
Analogously to the previous section, we may defined extended twisted crossed products of the form $\Alg \rtimes^{\Omega_{\tau}}_{\Theta,\tau} \Pspace$. As a matter of fact, this point of view is used in \cite{Kaschek_Neumaier_Waldmann:complete_positivity:2008}. In that case, the \emph{symplectic} Fourier transform $\Fs$\index{Fourier transform!symplectic} takes the place of ordinary Fourier transform, 
\begin{align}
	(\Fs f)(x , \xi) := \int_{\Xgroup} \dd y \int_{\dualX} \dd \eta \, \xi(y) \, {\eta(x)}^* \, f(y,\eta) , 
	&& f \in L^1(\Pspace) 
	, 
	\label{algebraicPOV:generalized_weyl_calculus:eqn:symplectic_Fourier}
\end{align}
%
and can be written as $\Fs = \mathfrak{S} \circ (\bar{\Fourier}_{\Xgroup} \otimes \Fourier_{\dualX})$ where $(\mathfrak{S} f)(\xi,x) := f(x,\xi)$ swaps arguments. As in the case $\Xgroup = \R^d$, $\Fs$ is its own inverse, $\Fs^{-1} = \Fs$. Armed with these definitions, we can define in analogy with equation~\eqref{magWQ:magnetic_weyl_calculus:eqn:magnetic_Weyl_quantization} 
\begin{align}
	\widetilde{\Op}^{\lambda}_{\tau}(f) := \int_{\Pspace} \dd X \, \bigl ( \Fs f \bigr )(X) \, \WeylSys^{\lambda}(X) 
	\label{algebraicPOV:generalized_weyl_calculus:eqn:Op_tilde}
\end{align}
for $f \in \Fs L^1(\Pspace)$. If we restrict ourselves to the dense subspace $\Fourier_{\dualX} L^1(\dualX) \odot \bar{\Fourier}_{\Xgroup} L^1(\Xgroup)$ interpreted as a subspace of $\Cont_{\infty}(\Xgroup) \odot \Cont_{\infty}(\dualX)$, we see that $\widetilde{\Op}^{\lambda}_{\tau}$ agrees with $\Op^{\lambda}_{\tau}$ as given by equation~\eqref{algebraicPOV:generalized_weyl_calculus:eqn:Op_lt} for all functions $f \in \Fourier_{\dualX} L^1(\dualX) \odot \bar{\Fourier}_{\Xgroup} L^1(\Xgroup) \subseteq \Alg \odot \bar{\Fourier}_{\Xgroup} L^1(\Xgroup)$ as long as $\Cont_{\infty}(\Xgroup) \subseteq \Alg$. 
\begin{prop}
	Assume $\Cont_{\infty}(\Xgroup) \subseteq \Alg$. Then both, $\Op^{\lambda}_{\tau}$ and $\widetilde{\Op}^{\lambda}_{\tau}$ as defined by equations~\eqref{algebraicPOV:generalized_weyl_calculus:eqn:Op_lt} and \eqref{algebraicPOV:generalized_weyl_calculus:eqn:Op_tilde}, respectively, are well-defined on $\Fourier_{\dualX} L^1(\dualX) \odot \bar{\Fourier}_{\Xgroup} L^1(\Xgroup)$ and they coincide on this set. 
\end{prop}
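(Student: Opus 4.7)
The plan is to reduce by bilinearity to elementary tensors $f = \varphi \otimes \psi$ with $\varphi = \Fourier_{\dualX} g$ for some $g \in L^1(\Xgroup)$ and $\psi = \bar{\Fourier}_{\Xgroup} h$ for some $h \in L^1(\Xgroup)$. The hypothesis $\Cont_{\infty}(\Xgroup) \subseteq \Alg$ is exactly what ensures $\varphi \in \Cont_{\infty}(\Xgroup) \subseteq \Alg$, so that equation~\eqref{algebraicPOV:generalized_weyl_calculus:eqn:Op_lt} is applicable. On such tensors, both quantization prescriptions reduce to absolutely convergent iterated integrals (thanks to $g, h \in L^1$ and $\lambda$, $\xi$ being unitary-valued), so Fubini lets me freely interchange orders of integration and compute each side explicitly; a comparison after a single change of variables then yields equality.

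For $\Op^{\lambda}_{\tau}(f)$, I would carry out the $\xi$-integral in equation~\eqref{algebraicPOV:generalized_weyl_calculus:eqn:Op_lt} first. Since $f(x,\xi) = \varphi(x)\psi(\xi)$ factorizes, the $\varphi$-factor pulls out and I am left with $\int_{\dualX} \dd \xi \, {\xi(y-x)}^* \psi(\xi) = (\Fourier_{\dualX}\psi)(y-x) = h(y-x)$, producing
\begin{align*}
\bigl(\Op^{\lambda}_{\tau}(f)u\bigr)(x) = \int_{\Xgroup} \dd y \, \lambda(x; y-x) \, \varphi\bigl((\id-\tau)(x) + \tau(y)\bigr) \, h(y-x) \, u(y).
\end{align*}
For $\widetilde{\Op}^{\lambda}_{\tau}(f)$, I would first note that on an elementary tensor the symplectic Fourier transform~\eqref{algebraicPOV:generalized_weyl_calculus:eqn:symplectic_Fourier} factorizes as $(\Fs f)(x,\xi) = (\bar{\Fourier}_{\Xgroup}\varphi)(\xi) \cdot (\Fourier_{\dualX}\psi)(x) = (\bar{\Fourier}_{\Xgroup}\varphi)(\xi) \, h(x)$. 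Plugging this into equation~\eqref{algebraicPOV:generalized_weyl_calculus:eqn:Op_tilde} and using the explicit action of $\WeylSys^{\lambda}(x,\xi)$ on $u$, I would perform the $\xi$-integral first and recognize it as Fourier inversion: $\int_{\dualX} \dd \xi \, \xi(z+\tau(x))^* (\bar{\Fourier}_{\Xgroup}\varphi)(\xi) = \Fourier_{\dualX}(\bar{\Fourier}_{\Xgroup}\varphi)(z+\tau(x)) = \varphi(z+\tau(x))$. This yields
\begin{align*}
\bigl(\widetilde{\Op}^{\lambda}_{\tau}(f)u\bigr)(z) = \int_{\Xgroup} \dd x \, h(x) \, \varphi(z+\tau(x)) \, \lambda(z;x) \, u(z+x),
\end{align*}
and the substitution $y = z+x$, combined with $z + \tau(y-z) = (\id-\tau)(z) + \tau(y)$, turns this into exactly the expression derived above for $\Op^{\lambda}_{\tau}$.

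The main obstacle is a careful bookkeeping of Fourier-analytic conventions on the locally compact abelian group $\Xgroup$: one needs the Haar measures on $\Xgroup$ and $\dualX$ to be normalized dually so that $\Fourier_{\dualX} \circ \bar{\Fourier}_{\Xgroup} = \id$ holds pointwise on the relevant subspace (this is guaranteed by Pontryagin duality once Plancherel is fixed), and one must verify the well-definedness of both iterated integrals before invoking Fubini. The $L^1$-nature of $g$ and $h$, together with boundedness of $\lambda$, $\xi$ and continuity of the integrands, makes this straightforward. Everything else — the bilinear extension from $\varphi \otimes \psi$ to the algebraic tensor product $\Fourier_{\dualX} L^1(\dualX) \odot \bar{\Fourier}_{\Xgroup} L^1(\Xgroup)$ — is purely formal.
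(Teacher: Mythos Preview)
Your proposal is correct and is precisely the computation the paper has in mind; the paper itself states this proposition without proof, merely remarking in the preceding paragraph that on the subspace $\Fourier_{\dualX} L^1(\dualX) \odot \bar{\Fourier}_{\Xgroup} L^1(\Xgroup)$ the two quantizations agree. Your reduction to elementary tensors, explicit evaluation of both sides via Fourier inversion, and matching after the substitution $y = z + x$ is the natural way to fill in the details.

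Two small remarks. First, a typo: you write $g \in L^1(\Xgroup)$, but since $\varphi = \Fourier_{\dualX} g$ it should be $g \in L^1(\dualX)$. Second, in your computation of $\Op^{\lambda}_{\tau}(f)$ the inner $\xi$-integral $\int_{\dualX} \dd\xi\, \xi(y-x)^* \psi(\xi)$ is not absolutely convergent unless $\psi \in L^1(\dualX)$, which is not assumed. The clean way around this is to invoke the \emph{definition} $\Op^{\lambda}_{\tau} = \Rep^{\lambda}_{\tau} \circ \Fourier$: since $f = \varphi \otimes \psi = \varphi \otimes \bar{\Fourier}_{\Xgroup} h$ has $\Fourier f = \varphi \otimes h \in \Alg \odot L^1(\Xgroup) \subset L^1(\Xgroup;\Alg)$, you can apply the $\Rep^{\lambda}_{\tau}$ formula directly to $\varphi \otimes h$ and obtain your displayed expression without ever writing a non-convergent integral. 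You already flag this concern in your final paragraph, so this is just a matter of routing the argument slightly differently.
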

One last thing concerning extensions: we can extend all objects such as $\repom$, $\Rep^{\lambda}_{\tau}$, $\Weyl^{\omega}_{\tau}$ and $\Op^{\lambda}_{\tau}$ to the multiplier algebras of $\twistedXprod$ and $\sFXprod$, respectively. This construction is particularly illuminating for $\Alg = \Cont_{\infty}(\Xgroup)$: in that case 
\begin{align*}
	\Rep^{\lambda}_{\tau} \bigl ( \Cont_{\infty}(\Xgroup) \rtimes^{\omega}_{\theta,\tau} \Xgroup \bigr ) = \mathcal{K} \bigl ( L^2(\Xgroup) \bigr ) = \Op^{\lambda}_{\tau} \bigl ( \Fourier \Calg^{\omega}_{\Cont_{\infty}(\Xgroup)} \bigr )
\end{align*}
and it is immediately clear that $\Rep^{\lambda}_{\tau}$ maps $\mathcal{M} \bigl ( \Cont_{\infty}(\Xgroup) \rtimes^{\omega}_{\theta,\tau} \Xgroup \bigr )$ onto all of $\mathcal{B} \bigl ( L^2(\Xgroup) \bigr )$ -- which is the multiplier algebra of $\mathcal{K} \bigl ( L^2(\Xgroup) \bigr )$! For $\Xgroup = \R^d$, we may use a different route, duality techniques, to achieve the same thing (Corollary~\ref{psiDO_reloaded:relevant_cStar_algebras:cor:MBA_contained_mult_fxb}). 
\begin{prop}
	The space $\Fs \mathbb{M}(\Pspace)$ of all symplectic Fourier transforms of bounded, com\-plex-valued measures is contained in $\mathcal{M} \bigl ( \Cont_{\infty}(\Xgroup) \rtimes^{\omega}_{\theta,\tau} \Xgroup \bigr )$. 
\end{prop}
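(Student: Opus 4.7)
The plan is to route the statement through the faithful Schrödinger representation $\Op^{\lambda}_{\tau}$. By Proposition~\ref{algebraicPOV:twisted_crossed_products:prop:basic_properties_rep}, when $\Alg = \Cont_{\infty}(\Xgroup)$ the map $\Op^{\lambda}_{\tau}$ is a faithful $\ast$-isomorphism of $\sFXprod = \Fourier^{-1} \bigl ( \Cont_{\infty}(\Xgroup) \rtimes^{\omega}_{\theta,\tau} \Xgroup \bigr )$ onto $\mathcal{K}\bigl(L^2(\Xgroup)\bigr)$. Its canonical extension therefore identifies $\mathcal{M}(\sFXprod)$ with $\mathcal{M}\bigl(\mathcal{K}(L^2(\Xgroup))\bigr) = \mathcal{B}\bigl(L^2(\Xgroup)\bigr)$, so it suffices to produce a bounded operator on $L^2(\Xgroup)$ implementing $\Fs\mu$ and to verify that it acts on $\sFXprod$ by left and right Moyal multiplication in a compatible way.

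First, for $\mu \in \mathbb{M}(\Pspace)$ I would define $\Fs\mu \in \BCont(\Pspace)$ by the usual paired integral $(\Fs\mu)(x,\xi) := \int_{\Pspace} \xi(y) \, \eta(x)^{\ast} \, \dd\mu(y,\eta)$, and the associated operator
\begin{align*}
\pi(\mu) := \int_{\Pspace} \WeylSys^{\lambda}(X) \, \dd\mu(X) \in \mathcal{B}\bigl(L^2(\Xgroup)\bigr)
\end{align*}
as a weak-operator integral. Since $X \mapsto \WeylSys^{\lambda}(X)$ is strongly continuous and unitary-valued, $\pi(\mu)$ is well-defined with $\bnorm{\pi(\mu)}_{\mathcal{B}(L^2(\Xgroup))} \leq \snorm{\mu}_{\mathbb{M}}$. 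When $\dd\mu = f \, \dd X$ with $f \in L^1(\Pspace)$, this $\pi(\mu)$ coincides with $\widetilde{\Op}^{\lambda}_{\tau}(\Fs f)$ as defined in \eqref{algebraicPOV:generalized_weyl_calculus:eqn:Op_tilde}, so $\pi$ is the natural extension of $\widetilde{\Op}^{\lambda}_{\tau} \circ \Fs$ to measures.

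Next, I would exhibit $\Fs\mu$ as a double centralizer on $\sFXprod$. For any $g \in \sFXprod$ the operator $\Op^{\lambda}_{\tau}(g)$ is compact, so $\pi(\mu) \, \Op^{\lambda}_{\tau}(g)$ and $\Op^{\lambda}_{\tau}(g) \, \pi(\mu)$ again lie in $\mathcal{K}\bigl(L^2(\Xgroup)\bigr)$ since the compacts form a two-sided ideal in $\mathcal{B}\bigl(L^2(\Xgroup)\bigr)$. Surjectivity of $\Op^{\lambda}_{\tau}$ onto the compacts then produces unique elements $L_{\Fs\mu}(g), R_{\Fs\mu}(g) \in \sFXprod$ with
\begin{align*}
\Op^{\lambda}_{\tau}\bigl(L_{\Fs\mu}(g)\bigr) = \pi(\mu) \, \Op^{\lambda}_{\tau}(g) , \qquad \Op^{\lambda}_{\tau}\bigl(R_{\Fs\mu}(g)\bigr) = \Op^{\lambda}_{\tau}(g) \, \pi(\mu) ,
\end{align*}
and the centralizer identity $f \, \Weyl^{\omega}_{\tau} L_{\Fs\mu}(g) = R_{\Fs\mu}(f) \, \Weyl^{\omega}_{\tau} g$ together with the bounds $\snorm{L_{\Fs\mu}(g)}, \snorm{R_{\Fs\mu}(g)} \leq \snorm{\mu}_{\mathbb{M}} \snorm{g}$ are inherited from the corresponding identities in $\mathcal{B}\bigl(L^2(\Xgroup)\bigr)$ via the isometric $\ast$-isomorphism. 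Hence $(L_{\Fs\mu}, R_{\Fs\mu})$ is a double centralizer, and $\Fs\mu \in \mathcal{M}(\sFXprod)$. On a dense subspace (say $\Fs g \in L^1(\Xgroup;\Alg)$) I would finally identify $L_{\Fs\mu}(g)$ with $\Fs\mu \, \Weyl^{\omega}_{\tau} g$ by invoking Fubini and the Weyl composition law $\WeylSys^{\lambda}(X) \WeylSys^{\lambda}(Y) = r\bigl(\Omega_{\tau}(X,Y)\bigr) \, \WeylSys^{\lambda}(X+Y)$, thereby showing that the abstract multiplier action really is the generalized Moyal product with $\Fs\mu$.

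The main obstacle is purely technical: it lies in the careful definition of $\pi(\mu)$ as a vector-valued integral against a bounded Radon measure on the non-compact quantum phase space $\Pspace$, and in the verification that the algebraic product $\Fs\mu \, \Weyl^{\omega}_{\tau} g$ — which a priori is only defined through the representation — agrees with the centralizer action $L_{\Fs\mu}$ on a suitable dense subset. Once these identifications are in place, the content of the proposition reduces to the structural fact $\mathcal{M}\bigl(\mathcal{K}(L^2(\Xgroup))\bigr) = \mathcal{B}\bigl(L^2(\Xgroup)\bigr)$ transported through $\Op^{\lambda}_{\tau}$.
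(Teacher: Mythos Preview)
The paper does not supply a formal proof of this proposition; it is stated immediately after the remark that $\Rep^{\lambda}_{\tau}$ identifies $\Cont_{\infty}(\Xgroup) \rtimes^{\omega}_{\theta,\tau} \Xgroup$ with $\mathcal{K}\bigl(L^2(\Xgroup)\bigr)$ and hence its multiplier algebra with $\mathcal{B}\bigl(L^2(\Xgroup)\bigr)$. Your proposal is correct and fleshes out exactly this implicit argument: you build $\pi(\mu) = \int_{\Pspace} \WeylSys^{\lambda}(X)\,\dd\mu(X)$ as a bounded operator, transport it back through the faithful representation to a multiplier, and identify it with $\Fs\mu$ --- which is precisely the route the paper's surrounding text points to.
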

%


\section{The concept of affiliation} 
\label{algebraicPOV:affiliation}
Affiliation is a concept that allows us to treat possibly unbounded, possibly not densely defined operators in an abstract $C^*$-algebraic setting. The $C^*$-algebra $\Calg$ does not have to be $\twistedXprod$, $\sFXprod$ or some $C^*$-subalgebra of $\mathcal{B}(\Hil)$ and it is easier to make statements in rather broad generality. 

The material contained in this section is taken from \cite[Chapter~8]{Amrein_Boutet_Georgescu:C0_groups_commutator_methods:1996}.

\subsection{Observables affiliated to $C^*$-algebras} 
\label{algebraicPOV:affiliation:affiliated_observables}
The inspiration to define affiliated observables comes from considering resolvent families of selfadjoint or normal operators and functional calculus. We will explain this in the first example right after the definition: 
\begin{defn}[Observable affiliated to $\Calg$\index{affiliation}]
	A selfadjoint observable $H$ (also denoted as $\Phi_H$) affiliated to a $C^*$-algebra $\Calg$ is a morphism 
	\begin{align*}
		\Phi_H : \Cont_{\infty}(\R) \longrightarrow \Calg 
		. 
	\end{align*}
\end{defn}
%
If we choose functions $\varphi \in \Cont_{\infty}(\R)$ whose domain is $\R$, we implicitly restrict ourselves to selfadjoint observables. Similarly, a morphism $\Phi_H : \Cont_{\infty}(\C) \longrightarrow \Calg$ defines a \emph{normal} observable affiliated to $\Calg$. Quite generally, observables may be defined via morphisms $\Phi : \Cont_{\infty}(\mathcal{Y}) \longrightarrow \Calg$ where $\mathcal{Y}$ is a topological, locally compact, second countable space. We shall only be concerned with selfadjoint observables, \ie $\mathcal{Y} = \R$, for a more general setup, we refer to \cite{Amrein_Boutet_Georgescu:C0_groups_commutator_methods:1996}. For this reason, we will frequently omit the word `selfadjoint.' 
\begin{example}
	Let $\Hil$ be a Hilbert space, $H$ a densely defined, selfadjoint operator and $\Calg = \mathcal{B}(\Hil)$ the $C^*$-algebra of bounded operators on $\Hil$. Then $H$ defines an observable $\Phi_H$ via usual functional calculus, 
	\begin{align*}
		\Phi_H : \Cont_{\infty}(\R) \longrightarrow \mathcal{B}(\Hil) , \; \varphi \mapsto \Phi_H(\varphi) := \varphi(H) 
		. 
	\end{align*}
	Clearly, this gives the recipe how to define observables affiliated to twisted crossed products and their Fourier transforms. Also, with a little abuse of notation, we will often use $H$ to denote the morphism $\Phi_H$. 
\end{example}
%
%
There is a one-to-one correspondence between morphisms $\Phi : \Cont_{\infty}(\R) \longrightarrow \Calg$ and observables affiliated to the algebra of bounded operators on a Hilbert space $\Hil$ which are defined densely on some closed subspace $\overline{\mathcal{D}(H)} := \Hil' \subseteq \Hil$. 

One direction is clear: if $(H,\Hil')$ is an operator which is densely defined on a closed subspace of $\Hil$, then we can use functional calculus to define a morphism on $\Phi_H : \Cont_{\infty}(\R) \longrightarrow \mathcal{B}(\Hil')$. If we extend this morphism trivially on ${\Hil'}^{\perp}$, \ie $\Phi_H(\varphi) \vert_{{\Hil'}^{\perp}} := 0$, we get a morphism $\Phi_H : \Cont_{\infty}(\R) \longrightarrow \mathcal{B}(\Hil)$. 

Conversely, any morphism $\Phi : \Cont_{\infty}(\R) \longrightarrow \mathcal{B}(\Hil)$ has an extension to the bounded Borel functions, $\widetilde{\Phi} : \mathcal{BO}(\R) \longrightarrow \mathcal{B}(\Hil)$.\footnote{This extension implies adding $W^*$-algebra structure to the right-hand side. In this case $\mathcal{B}(\Hil)$ already \emph{is} a $W^*$-algebra by definition, but if the morphism maps onto some abstract $C^*$-algebra, this procedure is equivalent to choosing a particular representation \cite[Chapter~7.4]{Pederson:Cstar_algebras:1979}. } With this morphism, we can extract the spectral measure~-- which then uniquely determines the operator. In fact, other well-known facts of standard operator theory \cite{Reed_Simon:M_cap_Phi_1:1972,Reed_Simon:M_cap_Phi_2:1975} combine nicely with this $C^*$-algebraic point of view developed here: 
\begin{prop}
	Let $H_0$ be a densely defined selfadjoint operator in a Hilbert space $\Hil$ affiliated to a $C^*$-subalgebra $\Calg$ of $\mathcal{B}(\Hil)$. 
	\begin{enumerate}[(i)]
		\item If $V$ is a $H_0$-bounded symmetric operator in $\Hil$ with $H_0$-bound strictly less than $1$ and if $\, V \, (H_0 - z_0)^{-1} \in \Calg$ for some $z_0 \in \C \setminus \R$, then $H := H_0 + V$ is a densely defined selfadjoint operator in $\Hil$ affiliated to $\Calg$. 
		\item Assume that $H_0$ is bounded form below and let $V$ be a symmetric sesquilinear form on $\Hil$ which is relatively form-bounded with respect to $H_0$ with relative bound stricly less than $1$. If $(\lambda_0 + H_0)^{-\nicefrac{1}{2}} \, V \, (\lambda_0 + H_0)^{-\nicefrac{1}{2}} \in \Calg$ for some $\lambda_0 > - \inf H_0$, then the operator associated to the form sum $H := H_0 \overset{.}{+} V$ is a densely defined selfadjoint operator in $\Hil$ affiliated to $\Calg$. 
	\end{enumerate}
\end{prop}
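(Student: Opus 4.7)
The two parts run in parallel, so let me describe the strategy for (i) in detail and then indicate the modifications for (ii). Since $\Cont_{\infty}(\R)$ is generated as a $C^*$-algebra by the resolvent functions $r_z(\lambda) := (\lambda - z)^{-1}$ for $z \in \C \setminus \R$ (Stone–Weierstrass, noting that $\{r_z , r_{\bar{z}}\}$ separates points and is closed under complex conjugation), by density it suffices to exhibit a single $z \in \C \setminus \R$ for which $(H - z)^{-1} \in \Calg$. Selfadjointness of $H = H_0 + V$ on $\mathcal{D}(H_0)$ is furnished by Kato–Rellich once the relative bound is strictly less than $1$, so the morphism $\Phi_H : \Cont_{\infty}(\R) \longrightarrow \mathcal{B}(\Hil)$ is well-defined via functional calculus; the task is to refine its codomain from $\mathcal{B}(\Hil)$ to $\Calg$.

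First I would upgrade the hypothesis $V(H_0 - z_0)^{-1} \in \Calg$ for one $z_0$ to the same membership for all $z \in \C \setminus \R$. This uses the identity
\begin{align*}
	V (H_0 - z)^{-1} = V (H_0 - z_0)^{-1} + (z - z_0) \, V(H_0 - z_0)^{-1} \, (H_0 - z)^{-1}
\end{align*}
together with the fact that $H_0$ being affiliated to $\Calg$ already yields $(H_0 - z)^{-1} = \Phi_{H_0}(r_z) \in \Calg$. Both summands are then products of elements of $\Calg$, so $V(H_0 - z)^{-1} \in \Calg$ for every $z \in \C \setminus \R$.

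Next, choose $z$ with $\abs{\Im z}$ so large that $\snorm{V(H_0 - z)^{-1}}_{\mathcal{B}(\Hil)} < 1$; this is possible because the $H_0$-bound of $V$ is strictly less than $1$ and the operator norm $\snorm{V(H_0 - z)^{-1}}$ can be made arbitrarily small by sending $\abs{\Im z} \to \infty$. Then the second resolvent identity gives
\begin{align*}
	(H - z)^{-1} = (H_0 - z)^{-1} \bigl ( 1 + V(H_0 - z)^{-1} \bigr )^{-1}
	= (H_0 - z)^{-1} \sum_{n = 0}^{\infty} \bigl ( -V(H_0 - z)^{-1} \bigr )^n
	,
\end{align*}
with norm-convergent Neumann series. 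The $n = 0$ term contributes $(H_0 - z)^{-1} \in \Calg$, while each $n \geq 1$ term is a product of elements of $\Calg$; the full sum converges in operator norm and $\Calg$ is norm-closed, so $(H - z)^{-1} \in \Calg$. By the Stone–Weierstrass reduction above, $\Phi_H(\Cont_{\infty}(\R)) \subseteq \Calg$, which is the assertion that $H$ is affiliated to $\Calg$.

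For (ii), selfadjointness of the form sum $H = H_0 \overset{.}{+} V$ is guaranteed by the KLMN theorem. The analogue of the resolvent identity is
\begin{align*}
	(H + \lambda)^{-1} = (H_0 + \lambda)^{-\nicefrac{1}{2}} \bigl ( 1 + W_{\lambda} \bigr )^{-1} (H_0 + \lambda)^{-\nicefrac{1}{2}}
	,
	&&
	W_{\lambda} := (H_0 + \lambda)^{-\nicefrac{1}{2}} V (H_0 + \lambda)^{-\nicefrac{1}{2}}
	,
\end{align*}
valid for $\lambda > -\inf H_0$. The function $t \mapsto (t + \lambda)^{-\nicefrac{1}{2}} \chi_{[-\lambda_0, \infty)}(t)$ belongs to $\Cont_{\infty}(\R)$, so applying $\Phi_{H_0}$ shows $(H_0 + \lambda)^{-\nicefrac{1}{2}} \in \Calg$; and $W_{\lambda_0} \in \Calg$ is the standing hypothesis, which transfers to all $\lambda > -\inf H_0$ by a resolvent manipulation analogous to the one above. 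For $\lambda$ large enough, $\snorm{W_{\lambda}} < 1$ (using the relative form bound $< 1$) and the Neumann series for $(1 + W_{\lambda})^{-1}$ converges in $\Calg$ (modulo the unit, which gets absorbed harmlessly upon sandwiching with $(H_0 + \lambda)^{-\nicefrac{1}{2}}$), giving $(H + \lambda)^{-1} \in \Calg$ and hence affiliation.

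The main technical subtlety — and the step to watch — is the non-unitality of $\Calg$: the Neumann series formally contains the identity of $\mathcal{B}(\Hil)$, which need not lie in $\Calg$. In part (i) this is innocuous because the leading identity gets absorbed into the prefactor $(H_0 - z)^{-1} \in \Calg$, and in part (ii) the sandwich by $(H_0 + \lambda)^{-\nicefrac{1}{2}} \in \Calg$ plays the same role. Verifying that the two conditions $V(H_0 - z_0)^{-1} \in \Calg$ (respectively $W_{\lambda_0} \in \Calg$) genuinely extend to all admissible spectral parameters, and that the Neumann series argument respects the $\Calg$-structure rather than merely the $\mathcal{B}(\Hil)$-structure, is really the heart of the proof.
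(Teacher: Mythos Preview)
Your argument for (i) is exactly the paper's: extend $V(H_0-z_0)^{-1}\in\Calg$ to all $z$ via the resolvent identity, then Neumann-expand $(H-z)^{-1}$ for $\abs{\Im z}$ large.

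For (ii) the architecture is again the same, but the step ``transfers to all $\lambda$ by a resolvent manipulation analogous to the one above'' is not actually analogous --- square roots do not obey a first-order resolvent identity. The paper fills this gap concretely: set $S:=(\lambda_0-\lambda)(H_0+\lambda)^{-1}\in\Calg$, note that $\id_{\Hil}+S=(H_0+\lambda_0)(H_0+\lambda)^{-1}\geq c>0$, and write its positive square root as $\id_{\Hil}+T$ with $T\in\Calg$ (continuous functional calculus applied to $s\mapsto\sqrt{1+s}-1$, which vanishes at $0$). Then $(H_0+\lambda)^{-1/2}=(\id_{\Hil}+T)(H_0+\lambda_0)^{-1/2}$, so $W_\lambda=(\id_{\Hil}+T)\,W_{\lambda_0}\,(\id_{\Hil}+T)\in\Calg$. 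Also a minor slip: your function $t\mapsto(t+\lambda)^{-1/2}\chi_{[-\lambda_0,\infty)}(t)$ is discontinuous at $-\lambda_0$; use instead any $\Cont_\infty(\R)$ extension of $(t+\lambda)^{-1/2}$ from $\sigma(H_0)\subseteq[\inf H_0,\infty)$. With these fixes your Neumann-series conclusion is fine --- indeed slightly more explicit than the paper, which writes the sandwich formula for $(H+\lambda)^{-1}$ but does not spell out why the middle factor lands back in $\Calg+\C\,\id_{\Hil}$.
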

\begin{proof}
	\begin{enumerate}[(i)]
		\item With the help of the resolvent identity, we rewrite 
		\begin{align*}
			V \, (H_0 - z)^{-1} &= V \, (H_0 - z_0)^{-1} + (z - z_0) \, V \, (H_0 - z_0)^{-1} \, (H_0 - z)^{-1}
		\end{align*}
		%
		to confirm that $V \, (H_0 - z)^{-1} \in \Calg$ for any $z \in \C \setminus \sigma(H_0)$. Since $V$ is $H_0$-bounded with bound less than $1$, we can expand $(H - z)^{-1}$ in terms of $(H_0 - z)^{-1}$ and $V \, (H_0 - z)^{-1}$ for suitable $z \in \C \setminus \sigma(H_0)$ \cite[Theorem~X.12]{Reed_Simon:M_cap_Phi_2:1975}. 
		\item Define $m := \inf H_0$ as the infimum of the spectrum and pick $\lambda , \lambda_0 > -m$. If $S := (\lambda_0 - \lambda) \, (H_0 + \lambda)^{-1}$, then $S \in \Calg$ and clearly $\id_{\Hil} + S \geq \min \bigl \{ 1 , (\lambda_0 + m) (\lambda + m)^{-1} \bigr \}$. This means we can take the square root of $\id_{\Hil} + S$ in the operator sense which can be written as $\id_{\Hil} + T$ for some suitable bounded operator $T \in \Calg$. Thus, we can rewrite $(H_0 + \lambda)^{- \nicefrac{1}{2}} \, V \, (H_0 + \lambda)^{- \nicefrac{1}{2}}$ in terms of $(H_0 + \lambda_0)^{-1}$: 
		\begin{align*}
			(H_0 + \lambda)^{- \nicefrac{1}{2}} \, &V \, (H_0 + \lambda)^{- \nicefrac{1}{2}} = 
			\\
			&= (\id_{\Hil} + T) \, (H_0 + \lambda_0)^{- \nicefrac{1}{2}} \, V \, (H_0 + \lambda_0)^{- \nicefrac{1}{2}} \, (\id_{\Hil} + T) \in \Calg 
		\end{align*}
		This in turn implies we can express $(H + \lambda)^{-1}$ as 
		\begin{align*}
			(H + \lambda)^{-1} &= (H_0 + \lambda)^{- \nicefrac{1}{2}} \, \bigl ( \id_{\Hil} + (H_0 + \lambda)^{- \nicefrac{1}{2}} \, V \, (H_0 + \lambda)^{- \nicefrac{1}{2}} \bigr ) \, (H_0 + \lambda)^{- \nicefrac{1}{2}} 
			\\
			&\in \Calg 
			. 
		\end{align*}
		This concludes the proof. 
	\end{enumerate}
\end{proof}
\begin{defn}[Resolvent family\index{resolvent family}]\label{algebraicPOV:affiliation:defn:resolvent_family}
	A resolvent family is a family of functions indexed by $z \in \C \setminus \R$, $\{ R(z) \}_{z \in \C \setminus \R} \subset \Cont_{\infty}(\R)$, such that 
	\begin{enumerate}[(i)]
		\item $R(z)^* = R(z^*)$ and 
		\item $R(z) - R(z') = (z - z') \, R(z) \, R(z')$. 
	\end{enumerate}
\end{defn}
\begin{prop}
	A selfadjoint observable $\Phi_H : \Cont_{\infty}(\R) \longrightarrow \Calg$ affiliated to a $C^*$-algebra $\Calg$ is uniquely determined by its resolvent family and $R(z) := \Phi_H(r_z)$, $r_z := (\cdot - z)^{-1}$, holds for $z \in \C \setminus \R$. In fact, $\Phi_H (r_{z_0}) \in \Calg$ for some $z_0 \in \C \setminus \R$ suffices. 
\end{prop}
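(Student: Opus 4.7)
The plan is to reduce the statement to a direct application of the Stone--Weierstrass theorem to the commutative $C^*$-algebra $\Cont_{\infty}(\R)$, exploiting the fact that a morphism between $C^*$-algebras is continuous and therefore fixed by its values on any dense subset.

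First, I would observe that the family $\{r_z\}_{z \in \C \setminus \R} \subset \Cont_{\infty}(\R)$ is itself a resolvent family in the sense of Definition~\ref{algebraicPOV:affiliation:defn:resolvent_family}: the identity $r_z^* = r_{z^*}$ is immediate, and the scalar resolvent identity
\begin{align*}
	r_z(x) - r_{z'}(x) = \frac{1}{x - z} - \frac{1}{x - z'} = (z - z') \, r_z(x) \, r_{z'}(x)
\end{align*}
holds pointwise. Since $\Phi_H$ is a $\ast$-morphism (hence multiplicative and adjoint-preserving) the image $R(z) := \Phi_H(r_z)$ inherits both properties, so $\{ R(z) \}_{z \in \C \setminus \R}$ is a resolvent family in $\Calg$. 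This establishes the formula $R(z) = \Phi_H(r_z)$.

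For the uniqueness claim and the ``suffices'' strengthening, I would fix a single $z_0 \in \C \setminus \R$ and consider the $\ast$-subalgebra $\Alg_0 \subseteq \Cont_{\infty}(\R)$ generated by $\{r_{z_0}\}$. Because $r_{z_0}^{\ast} = r_{z_0^{\ast}}$, this $\ast$-subalgebra also contains $r_{z_0^{\ast}}$. Writing $z_0 = a + i b$ with $b \neq 0$ one sees that $r_{z_0}$ is injective on $\R$, so $\Alg_0$ separates points of $\R$, and $\abs{r_{z_0}(x)} = \abs{x - z_0}^{-1} > 0$ for every $x \in \R$, so $\Alg_0$ vanishes nowhere. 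The locally compact version of the Stone--Weierstrass theorem then implies that $\Alg_0$ is dense in $\Cont_{\infty}(\R)$.

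Since any $\ast$-morphism between $C^*$-algebras is norm-decreasing and therefore continuous, $\Phi_H$ is uniquely determined by its restriction to $\Alg_0$. But $\Phi_H \vert_{\Alg_0}$ is in turn determined algebraically by the single element $\Phi_H(r_{z_0}) = R(z_0) \in \Calg$: all polynomials in $r_{z_0}$ and $r_{z_0^{\ast}}$ are mapped to the corresponding polynomials in $R(z_0)$ and $R(z_0)^{\ast}$. Hence knowing $R(z_0)$ uniquely fixes $\Phi_H$ on all of $\Cont_{\infty}(\R)$, which is the claimed sufficiency. The only mildly delicate point in the argument is verifying the Stone--Weierstrass hypotheses on $\Alg_0$; the rest is structural and follows from the $C^*$-algebraic generalities.
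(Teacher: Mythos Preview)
Your proof is correct and differs from the paper's in the handling of the ``single $z_0$ suffices'' claim. The paper first establishes that the linear span of the full family $\{r_z\}_{z \in \C \setminus \R}$ is dense in $\Cont_{\infty}(\R)$ via Stone--Weierstrass, and then for the strengthened statement recovers $R(z)$ for all $z$ from the single datum $R(z_0)$ by a Neumann-series / analytic-continuation argument: for $z$ near $z_0$ one has $R(z) = \sum_{n=0}^{\infty} (z-z_0)^n R(z_0)^{n+1}$, which one then propagates across the half-plane and reflects by conjugation. You instead apply Stone--Weierstrass directly to the $\ast$-subalgebra generated by the single function $r_{z_0}$, bypassing the analytic continuation step entirely. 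Your route is slightly more economical; the paper's argument has the minor advantage of making explicit how each $R(z)$ is reconstructed from $R(z_0)$, which is conceptually closer to ordinary resolvent calculus.
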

\begin{proof}
	The translates of $\{ r_z \}_{z \in \C \setminus \R}$ is dense in $\Cont_{\infty}(\R)$ and a simple approximation argument suffices to extend $\Phi_H$ to all functions on $\R$ vanishing at infinity (Theorem of Stone-Weierstrass). 
	
	Assume $R(z_0) = \Phi_H(r_{z_0}) \in \Calg$. Then for all $z$ in some neighborhood $U(z_0)$ of $z_0$, we can write $R(z)$ as 
	\begin{align*}
		R(z) = \sum_{n = 0}^{\infty} (z - z_0)^n \, R(z_0)^{n + 1} \in \Calg
		. 
	\end{align*}
	We can then repeat this procedure as often as needed to write the resolvent for an arbitrary $z \in \C \setminus \R$ which lies in the same half plane as $z_0$. The resolvents for all other $z$ can be obtained by complex conjugation (property (i) of Definition~\ref{algebraicPOV:affiliation:defn:resolvent_family}). 
\end{proof}
%


\subsection{Spectrum of affiliated observables} 
\label{algebraicPOV:affiliation:spectrum_of_affiliated_observables}

The connection of selfadjoint observables affiliated to $C^*$-algebras and resolvents indicates that there is a way to recover the spectrum of the original observable. 
\begin{defn}[Spectrum $\sigma(H)$/$\sigma(\Phi_H)$\index{spectrum!of an affiliated observable}]
	Let $H$ be a selfadjoint observable affiliated to a $C^*$-algebra $\Calg$, $\Phi_H : \Cont_{\infty}(\R) \longrightarrow \Calg$. Then the spectrum of $H$ is defined as 
	\begin{align}
		\sigma(H) \equiv \sigma(\Phi_H) := \Bigl \{ \lambda \in \R \; \big \vert \; \forall \varphi \in \Cont_{\infty}(\R) : \varphi(\lambda) \neq 0 \Rightarrow \Phi_H(\varphi) \equiv \varphi(H) \neq 0 \Bigr \} 
		. 
	\end{align}
\end{defn}
\begin{example}
	To make sense of this definition, we remember that if $\Calg = \mathcal{B}(\Hil)$, we indeed recover the usual spectrum: keeping in mind that $\mathcal{B}(\Hil)$ is a $W^*$-algebra, we can extend $\Phi_H$ to 
	bounded Borel functions on $\R$ by a suitable approximation argument. In particular, we can plug in the characteristic function $\chi_A$ of any Borel subset $A \subseteq \R$. If $A \cap \mathrm{spec} \, (H) = \emptyset$ (where $\mathrm{spec} \, (H)$ is the spectrum in the the usual functional analytic sense), then by the spectral theorem $\Phi_H(\chi_A) \equiv \chi_A(H) = 0$. Hence, $\sigma(H)$ coincides with $\mathrm{spec} \, (H)$. 
\end{example}
\begin{remark}
	The algebraically defined spectrum $\sigma(H)$ is a closed subset of $\R$. For consistency with regular spectral theory, we always view $\R$ as a subset of $\C$. 
\end{remark}
%
%
\begin{remark}
	The $C^*$-algebraic approach has one caveat: we are not able to distinguish continuous spectrum from dense point spectrum, because we cannot separate two infinitesimally close points by a continuous function. Also, spectral information on the type of spectrum are lost (e. g. absolutely continuous and singularly continuous spectrum). However, it \emph{is} possible to distinguish essential from discrete spectrum. 
\end{remark}
In this framework, we can easily transfer some important notions from operator theory. For instance, we say two observables $H_1$ and $H_2$ commute if and only if 
\begin{align*}
	\varphi(H_1) \, \varphi(H_2) = \varphi(H_2) \, \varphi(H_1) 
\end{align*}
for all $\varphi \in \Cont_{\infty}(\R)$ -- which coincides with the standard definition if $\Calg = \mathcal{B}(\Hil)$ \cite[Theorem~VIII.13]{Reed_Simon:M_cap_Phi_1:1972}. The spectral theorem also has a rather natural translation. Let 
\begin{align*}
	f : \sigma(H) \longrightarrow \R 
\end{align*}
%
be a proper\footnote{A map $f : X \longrightarrow Y$ between locally compact spaces is called proper if $f^{-1}(K)$ is compact for any compact subset $K \subseteq Y$. Put another way, $f(x)$ must tend to infinity in $Y$ if $x$ tends to infinity in $X$. } continuous function. Then, canonically, we define 
\begin{align*}
	f^* : \Cont_{\infty}(\R) \longrightarrow \Cont_{\infty}(\R) , \varphi \mapsto f^*(\varphi) := \varphi \circ f 
	. 
\end{align*}
By definition, an analog of the spectral theorem holds, \ie for proper continuous functions, $f(H)$ is an observable affiliated to $\Calg$ and we have 
\begin{align*}
	\sigma \bigl ( f(H) \bigr ) = f \bigl ( \sigma(H) \bigr ) 
	. 
\end{align*}
Again, this result is well-known if $H$ is a selfadjoint operator on a Hilbert space. The next operation has \emph{no Hilbert space analog} and is \emph{the key advantage} of the $C^*$-algebraic framework. 
\begin{defn}[Image of $H$ through $\pi$\index{image of $H$ through $\pi$}]
	Let $\pi : \Calg \longrightarrow \Calg'$ be a morphism between two $C^*$-algebras $\Calg$ and $\Calg'$. If $H$ is an observable affiliated to $\Calg$ via the morphism $\Phi_H : \Cont_{\infty}(\R) \longrightarrow \Calg$, then 
	\begin{align*}
		\pi(H) \equiv \pi(\Phi_H) : \Cont_{\infty}(\R) \longrightarrow \Calg' 
	\end{align*}
	defines an observable affiliated to $\Calg'$ called the image of $H$ through $\pi$, 
	\begin{align}
		\bigl ( \pi(H) \bigr ) (\varphi) := \pi \bigl ( \varphi(H) \bigr ) 
		. 
	\end{align}
\end{defn}
Since we want to stress the importance of the next fact, which is just a direct consequence of the definition, we have made it into a theorem: 
\begin{thm}\label{algebraicPOV:affiliation:thm:spectrum_of_image_of_H_through_pi}
	Let $H$ be a selfadjoint observable affiliated to the $C^*$-algebra $\Calg$ and $\pi : \Calg \longrightarrow \Calg'$ a morphism between $C^*$-algebras. Then the spectrum of the image of $H$ through $\pi$ is contained in the spectrum of $H$, 
	\begin{align}
		\sigma \bigl ( \pi(H) \bigr ) \subseteq \sigma(H) 
		. 
	\end{align}
\end{thm}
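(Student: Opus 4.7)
The plan is to prove the contrapositive: if $\lambda \notin \sigma(H)$, then $\lambda \notin \sigma(\pi(H))$. This is the natural direction because both sides of the claim are defined in terms of the same quantifier structure (``there exists $\varphi$ with $\varphi(\lambda) \neq 0$ such that the functional calculus vanishes''), and the morphism $\pi$ transports vanishing in $\Calg$ to vanishing in $\Calg'$, but not the other way around.

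More concretely, I would unfold the definitions as follows. Suppose $\lambda \notin \sigma(H)$. By definition of $\sigma(H)$, the negation of the defining condition gives a witness function $\varphi \in \Cont_{\infty}(\R)$ such that $\varphi(\lambda) \neq 0$ and $\Phi_H(\varphi) = 0 \in \Calg$. Now apply the morphism $\pi$ to this equality. Since $\pi$ is a $*$-homomorphism of $C^*$-algebras, it maps $0 \in \Calg$ to $0 \in \Calg'$, so $\pi\bigl(\Phi_H(\varphi)\bigr) = 0$. By the definition of the image observable, $(\pi(H))(\varphi) := \pi\bigl(\Phi_H(\varphi)\bigr)$, hence $(\pi(H))(\varphi) = 0$ with $\varphi(\lambda) \neq 0$. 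This exactly witnesses $\lambda \notin \sigma(\pi(H))$, completing the inclusion.

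The entire argument is a one-line application of functoriality: the assignment $\Phi_H \mapsto \pi \circ \Phi_H$ can only increase the set of annihilating test functions, which can only shrink the spectrum. There is essentially no obstacle here; the only points worth verifying are that $\pi \circ \Phi_H$ is indeed a morphism $\Cont_{\infty}(\R) \to \Calg'$ (which is immediate from the composition of two morphisms being a morphism), so that the spectrum of $\pi(H)$ is well-defined in the sense of the algebraic definition. No approximation, no functional calculus extension to Borel functions, and no separability or unitality assumption is required. Thus the proof is essentially a tautology given the setup, which is precisely why the authors emphasized it as a theorem despite its short derivation: the conceptual content lies in the definition of $\pi(H)$, not in the argument itself.
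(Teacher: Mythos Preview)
Your proof is correct and follows essentially the same approach as the paper: both argue by contrapositive, showing that if $\varphi(H) = 0$ for some $\varphi$ with $\varphi(\lambda) \neq 0$, then $(\pi(H))(\varphi) = \pi(\varphi(H)) = 0$ as well. The paper phrases the key step via the norm-decreasing property of morphisms, whereas you invoke directly that $\pi(0) = 0$; these are equivalent for the purpose at hand.
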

\begin{proof}
	Any morphism is norm-decreasing, \ie 
	\begin{align*}
		\bnorm{\bigl ( \pi(H) \bigr ) (\varphi)}_{\Calg'} \equiv \bnorm{\pi \bigl ( \varphi(H) \bigr )}_{\Calg'} \leq \bnorm{\varphi(H)}_{\Calg} \equiv \bnorm{\Phi_H(\varphi)}_{\Calg} 
		, 
	\end{align*}
	and thus $\Phi_H(\varphi) \equiv \varphi(H) = 0$ implies $\bigl ( \pi(H) \bigr )(\varphi) = 0$, 
	\begin{align*}
		0 \leq \bnorm{\bigl ( \pi(H) \bigr ) (\varphi)}_{\Calg'} \leq \bnorm{\varphi(H)}_{\Calg} = 0 
		. 
	\end{align*}
	Hence, we have shown $\sigma \bigl ( \pi(H) \bigr ) \subseteq  \sigma(H)$. 
\end{proof}
Furthermore, if $f : \sigma(H) \longrightarrow \R$ is a proper, continuous function, then 
\begin{align*}
	f \bigl ( \pi(H) \bigr ) = \pi \bigl ( f(H) \bigr ) 
	. 
\end{align*}
The most fruitful class of examples comes from two basic ideas: 
\begin{enumerate}[(i)]
	\item Faithful representations $\pi : \Calg \longrightarrow \mathcal{B}(\Hil)$ of $\Calg$ on some Hilbert space $\Hil$ are of this type. If $\pi$ is faithful, \ie injective, then $\sigma(H) = \sigma \bigl ( \pi(H) \bigr )$ coincides with the usual spectrum of $\pi(H)$ on $\Hil$ in the functional analytic sense. Furthermore, the norm of $f \in \Calg$ can be \emph{calculated} via $\pi$ and $\norm{f}_{\Calg} = \snorm{\pi(f)}_{\mathcal{B}(\Hil)}$. 
	\item Taking quotients with respect to some two-sided, closed, proper ideal $\Ideal \subset \Calg$: ${\Calg} / {\Ideal}$ is another $C^*$-algebra with induced $C^*$-norm, composition law and involution. In this case 
	\begin{align*}
		\pi_{\Ideal} : \Calg \longrightarrow {\Calg} / {\Ideal} 
	\end{align*}
	is the projection onto equivalence classes. We then define 
\end{enumerate}
\begin{defn}[$\Ideal$-essential spectrum $\sigma_{\Ideal}(H)$\index{spectrum!$\Ideal$-essential}]
	For a selfadjoint observable $H$ affiliated to a $C^*$-algebra $\Calg$, we define the $\Ideal$-essential spectrum as 
	\begin{align}
		\sigma_{\Ideal}(H) := \sigma \bigl ( \pi_{\Ideal}(H) \bigr ) = \Bigl \{ \lambda \in \R \; \big \vert \; \forall \varphi \in \Cont_{\infty}(\R) : \varphi(\lambda) \neq 0 \Rightarrow \varphi(H) \not\in \Ideal \Bigr \} 
		. 
	\end{align}
\end{defn}
\begin{example}
	If $\Calg = \mathcal{B}(\Hil)$ and $\Ideal = \mathcal{K}(\Hil)$ is the ideal of compact operators, then $\Calg(\Hil) := {\mathcal{B}(\Hil)} / {\mathcal{K}(\Hil)}$ is the so-called Calkin algebra. In this case, the $\mathcal{K}(\Hil)$-essential spectrum coincides with the usual essential spectrum. 
\end{example}
%


\subsection{Tensor products and families of observables} 
\label{algebraicPOV:affiliation:tensor_product}

If $\Calg$ is `the' tensor product of two $C^*$-algebras $\Alg$ and $\Calg'$, $\Calg \cong \Alg \otimes \Calg'$, then this extra structure can be exploited in the analysis. We have used quotation marks on purpose as there usually is no single $C^*$-norm with respect to which the algebraic tensor product $\Alg \odot \Calg'$ is to be completed, but rather a family of $C^*$-norms with a minimal and a maximal $C^*$-norm, 
\begin{align*}
	\norm{\varphi \otimes \psi}_{\min} \leq \norm{\varphi \otimes \psi} \leq \norm{\varphi \otimes \psi}_{\max} 
	.  
\end{align*}
However, if one of the algebras, say $\Alg$, is abelian, then there is \emph{only one} tensor product ($\norm{\varphi \otimes \psi}_{\min} = \norm{\varphi \otimes \psi}_{\max}$). Furthermore, we can use Gelfand theory to characterize $\Alg$, \ie 
\begin{align*}
	\Alg \cong \Cont_{\infty}(\Salg) 
\end{align*}
where $\Salg$ is the Gelfand spectrum (see~Chapter~\ref{algebraicPOV:twisted_crossed_products:gelfand_theory}). With this, we can identify $\Calg$ with $\Cont_{\infty}(\Salg ; \Calg')$, the functions $\Psi : \Salg \longrightarrow \Calg'$ which take values in $\Calg'$, are continuous and vanish at infinity (if $\Salg$ is not compact). Product, involution and norm are also defined in the natural manner, \eg 
\begin{align}
	\bnorm{\Psi}_{\Calg} &:= \sup_{x \in \Salg} \bnorm{\Psi(x)}_{\Calg'} 
	\label{algebraicPOV:affiliation:eqn:norm_tensor_product}
	, \\
	(\Psi \cdot \Psi')(x) &:= \Psi(x) \cdot \Psi'(x) 
	. 
	\notag 
\end{align}
%
%
\begin{prop}
	Let $\Alg$, $\Calg'$ be $C^*$-algebras and $\Alg$ abelian. Then there exists only one tensor product 
	\begin{align*}
		\Calg := \Alg \otimes \Calg' \cong \Cont_{\infty}(\Salg) \otimes \Calg' \cong \Cont_{\infty}(\Salg ; \Calg') 
		. 
	\end{align*}
\end{prop}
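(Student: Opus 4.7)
The plan is to identify the algebraic tensor product $\Alg \odot \Calg'$ with a dense $*$-subalgebra of $\Cont_\infty(\Salg ; \Calg')$ via the obvious map, show that the sup-norm \eqref{algebraicPOV:affiliation:eqn:norm_tensor_product} is in fact a $C^*$-norm on this subalgebra, and finally establish that any $C^*$-norm on $\Alg \odot \Calg'$ coincides with this sup-norm. Completion then forces the tensor product to be $\Cont_\infty(\Salg ; \Calg')$.

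First, by Gelfand--Naimark, identify $\Alg \cong \Cont_\infty(\Salg)$ and consider the linear map $\iota : \Cont_\infty(\Salg) \odot \Calg' \longrightarrow \Cont_\infty(\Salg ; \Calg')$ determined by $\iota(\varphi \otimes a)(x) := \varphi(x) \, a$. Since pointwise product and involution on $\Cont_\infty(\Salg ; \Calg')$ restrict to the natural $*$-algebra structure on elementary tensors, $\iota$ is a $*$-homomorphism. Injectivity follows from the fact that if $\{ a_j \}$ is linearly independent in $\Calg'$, then $\sum_j \varphi_j \otimes a_j$ vanishes only when all $\varphi_j = 0$, because evaluation at points $x \in \Salg$ together with linear independence forces $\varphi_j(x)=0$ for every $x$. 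Density of $\iota$'s image in $\Cont_\infty(\Salg ; \Calg')$ follows from a vector-valued Stone--Weierstrass argument: the algebra of finite sums $\sum_j \varphi_j \otimes a_j$ separates points of $\Salg$, is closed under $*$ and pointwise multiplication by scalars from $\Cont_\infty(\Salg)$, and by a standard partition-of-unity and approximation argument, any $\Psi \in \Cont_\infty(\Salg ; \Calg')$ can be approximated uniformly by such elementary tensors.

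Next, the sup-norm $\snorm{\Psi}_{\mathrm{sup}} := \sup_{x \in \Salg} \snorm{\Psi(x)}_{\Calg'}$ clearly satisfies the $C^*$-identity and submultiplicativity on $\Cont_\infty(\Salg ; \Calg')$, hence defines a $C^*$-norm $\snorm{\cdot}_0 := \snorm{\iota(\cdot)}_{\mathrm{sup}}$ on $\Alg \odot \Calg'$. To establish uniqueness, let $\snorm{\cdot}_\gamma$ be any $C^*$-norm on $\Alg \odot \Calg'$. For the lower bound, pick a faithful non-degenerate representation $\rho : \Calg' \longrightarrow \mathcal{B}(\Hil')$; then for each $x \in \Salg$ the point evaluation $\delta_x : \Cont_\infty(\Salg) \longrightarrow \C$ is a $*$-morphism, and $\delta_x \odot \rho$ extends to a $*$-representation of $(\Alg \odot \Calg' , \snorm{\cdot}_\gamma)$ on $\Hil'$, so
\begin{align*}
	\snorm{f}_\gamma \geq \sup_{x \in \Salg} \bnorm{(\delta_x \odot \rho)(f)}_{\mathcal{B}(\Hil')} = \sup_{x \in \Salg} \snorm{\iota(f)(x)}_{\Calg'} = \snorm{f}_0 .
\end{align*}
For the reverse inequality, let $\pi : \Calg' \longrightarrow \mathcal{B}(\Hil')$ be a faithful representation and consider the representation $\mu \otimes \pi$ of $\Alg \odot \Calg'$ on $L^2(\Salg,\mu) \otimes \Hil' \cong L^2(\Salg,\mu ; \Hil')$, where $\mu$ is any Radon measure of full support on $\Salg$ and $\varphi \in \Cont_\infty(\Salg)$ acts as multiplication; the operator norm of $(\mu \otimes \pi)(f)$ equals precisely $\sup_{x \in \Salg} \snorm{\iota(f)(x)}_{\mathcal{B}(\Hil')} = \snorm{f}_0$. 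Since $\snorm{\cdot}_\gamma$ is majorised by the maximal $C^*$-norm, and the maximal $C^*$-norm is in turn bounded above by the norm of any faithful representation, $\snorm{f}_\gamma \leq \snorm{f}_0$.

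Thus $\snorm{\cdot}_\gamma = \snorm{\cdot}_0$ on $\Alg \odot \Calg'$, the $C^*$-completion is unique, and via $\iota$ it equals $\Cont_\infty(\Salg ; \Calg')$. The main obstacle is the lower bound $\snorm{f}_\gamma \geq \snorm{f}_0$: one must exploit that the pure states of an abelian $C^*$-algebra are exactly the point evaluations $\delta_x$, so that enough product representations $\delta_x \odot \rho$ are automatically contractive for every $C^*$-cross-norm, recovering the supremum over $\Salg$. This is precisely the nuclearity of $\Alg$, which fails for general non-abelian $C^*$-algebras and is the crucial input making the tensor product unique.
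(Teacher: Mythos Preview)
Your identification of $\Alg \odot \Calg'$ with a dense $*$-subalgebra of $\Cont_\infty(\Salg;\Calg')$ is fine, and the representation $\mu \otimes \pi$ on $L^2(\Salg,\mu)\otimes\Hil'$ correctly computes $\snorm{f}_0 = \snorm{f}_{\min}$. The lower bound $\snorm{f}_\gamma \geq \snorm{f}_0$ also goes through, once one invokes the standard fact that every $C^*$-norm on $\Alg \odot \Calg'$ dominates $\snorm{\cdot}_{\min}$, so that the product representations $\delta_x \odot \rho$ are indeed $\snorm{\cdot}_\gamma$-contractive.

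The genuine gap is the upper bound. You write that ``the maximal $C^*$-norm is in turn bounded above by the norm of any faithful representation,'' but this is backwards: $\snorm{\cdot}_{\max}$ is by definition the \emph{supremum} over all $*$-representations of $\Alg \odot \Calg'$, so it is bounded \emph{below}, not above, by the norm coming from any particular one such as $\mu \otimes \pi$. A product of faithful representations of the factors is faithful on $\Alg \otimes_{\min} \Calg'$, not automatically on $\Alg \otimes_{\max} \Calg'$; establishing the latter is exactly the inequality $\snorm{\cdot}_{\max} \leq \snorm{\cdot}_{\min}$ you are trying to prove, so the chain $\snorm{f}_\gamma \leq \snorm{f}_{\max} \leq \snorm{f}_0$ is circular as written.

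The paper does not attempt to prove this step by hand. It simply invokes that abelian $C^*$-algebras are of type I, hence nuclear \cite{Dixmier:C_star_algebras:1977}, which gives $\snorm{\cdot}_{\min} = \snorm{\cdot}_{\max}$ directly; the remaining identification of the unique norm with the sup-norm is then carried out via the faithful representation of $\Cont_\infty(\Salg)$ by multiplication on $\ell^2(\Salg)$ (which also sidesteps the need for a Radon measure of full support on $\Salg$). Your final paragraph correctly identifies nuclearity of $\Alg$ as the decisive input, but it has to be imported as a theorem: the argument you give for the upper bound does not supply it.
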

\begin{proof}
	Abelian $C^*$-algebras are of type I and hence the smallest and largest $C^*$-norms coincide \cite{Dixmier:C_star_algebras:1977}. Thus, there is only one way to complete $\Alg \odot \Calg'$ to $\Alg \otimes \Calg'$. 
	
	For the second part of the proof, by density, it suffices to consider elements in $\Alg \odot \Calg' \cong \Cont_{\infty}(\Salg) \odot \Calg'$. Any such element $f \in \Cont_{\infty}(\Salg) \odot \Calg'$ is of the form 
	\begin{align*}
		x \mapsto f(x) = \sum_{j = 1}^n \varphi_j(x) \otimes \psi_j 
		, && 
		\varphi_j \in \Cont_{\infty}(\Salg) , \; \psi_j \in \Calg' , \; j \in \{ 1 , \ldots , n \} 
		, 
	\end{align*}
	which corresponds to 
	\begin{align*}
		x \mapsto \sum_{j = 1}^n \varphi_j(x) \, \psi_j \in \Cont_{\infty}(\Salg ; \Calg') 
		. 
	\end{align*}
	It remains to show that the norms of $\Cont_{\infty}(\Salg) \otimes \Calg'$ and $\Cont_{\infty}(\Salg ; \Calg')$ as given by equation~\eqref{algebraicPOV:affiliation:eqn:norm_tensor_product} coincide. In general, morphisms are norm-decreasing, but faithful representations are even norm-\emph{preserving}. Thus we can \emph{calculate} the norm of $f \in \Cont_{\infty}(\Salg) \odot \Calg' \subset \Cont_{\infty}(\Salg) \otimes \Calg'$ via a faithful representation. Consider 
	\begin{align*}
		\ell^2(\Salg) := \Bigl \{ u : \Salg \longrightarrow \C \; \big \vert \; \mbox{$\sum_{x \in \Salg}$} \abs{u(x)}^2 < \infty \Bigr \} 
		. 
	\end{align*}
	Any $\varphi \in \Cont_{\infty}(\Salg)$ acts on $u \in \ell^2(\Salg)$ by multiplication, 
	\begin{align*}
		\bigl ( r(\varphi) u \bigr )(x) := \varphi(x) \, u(x) 
		, && 
		u \in \ell^2(\Salg) , \; x \in \Salg 
		. 
	\end{align*}
	Clearly, $r : \Cont_{\infty}(\Salg) \longrightarrow \mathcal{B} \bigl ( \ell^2(\Salg) \bigr )$ maps any $\varphi \in \Cont_{\infty}(\Salg)$ onto a bounded operator on $\ell^2(\Salg)$: 
	\begin{align*}
		\bnorm{r(\varphi)}_{\mathcal{B}(\ell^2(\Salg))} 
		= \sup_{\norm{u}_{\ell^2} = 1} \babs{\bscpro{u}{r(\varphi) u}} 
		\leq \sup_{\norm{u}_{\ell^2} = 1} \babs{\bscpro{u}{\snorm{\varphi}_{\infty} u}} 
		= \snorm{\varphi}_{\infty}
	\end{align*}
	In fact, we even have equality, $\snorm{r(\varphi)}_{\mathcal{B}(\ell^2(\Salg))} = \snorm{\varphi}_{\infty}$: as $\varphi$ vanishes at infinity and is continuous, $\abs{\varphi}$ attains its maximum at a point $x_0 \in \Salg$ (which need not be unique). Choosing $u_0 = \delta_{x_0}$ gives us 
	\begin{align*}
		\sabs{\sscpro{u_0}{r(\varphi) u}} &= \sabs{\varphi(x_0)} = \snorm{\varphi}_{\infty} 
		\\
		&\leq \snorm{r(\varphi)}_{\mathcal{B}(\ell^2(\Salg))} \leq \snorm{\varphi}_{\infty} 
		. 
	\end{align*}
	This implies $r \bigl ( \Cont_{\infty}(\Salg) \bigr ) \subseteq \mathcal{B} \bigl ( \ell^2(\Salg) \bigr )$. With trivial modifications, we can can adapt the above argument to treat the case  $\Cont_{\infty}(\Salg) \otimes \Calg'$: we faithfully realize it on $\ell^2(\Salg) \otimes \Hil' \cong \ell^2(\Salg ; \Hil')$; $\Hil'$ is chosen such that $\Calg'$ can be represented faithfully on it. By direct calculation, we confirm 
	\begin{align*}
		\bnorm{f}_{\Cont_{\infty}(\Salg) \otimes \Calg'} &= \bnorm{r(\varphi)}_{\mathcal{B}(\ell^2(\Salg) \otimes \Hil')} 
		= \sup_{x \in \Salg} \bnorm{\bigl ( r(f) \bigr )(x)}_{\mathcal{B}(\Hil')} 
		\\
		&= \sup_{x \in \Salg} \bnorm{f(x)}_{\Calg'}
	\end{align*}
	%
	for any $f \in \Cont_{\infty}(\Salg) \odot \Calg'$. This means the norms coincide on a dense subspace and thus the algebras $\Cont_{\infty}(\Salg) \otimes \Calg'$ and $\Cont_{\infty}(\Salg ; \Calg')$ are isomorphic. 
\end{proof}
In this setting, it is useful to view an observable $H$ affiliated to $\Cont_{\infty}(\Salg ; \Calg')$ as a family of observables $H(x)$ affiliated to $\Calg'$ which is \emph{continuous} in the following sense: for all $\varphi \in \Cont_{\infty}(\R)$, the map 
\begin{align}
	x \mapsto \varphi \bigl ( H(x) \bigr ) 
	\label{algebraicPOV:affiliation:eqn:continuity_of_observable_valued_function}
\end{align}
is continuous in the norm. Continuity of $x \mapsto \bigl ( H(x) - z_0 \bigr )^{-1}$ for some $z_0 \in \C \setminus \R$ suffices. A continuous family of observables is called \emph{proper} if and only if 
\begin{align*}
	\lim_{x \rightarrow \infty} H(x) = \infty \Longleftrightarrow \lim_{x \rightarrow \infty} \bnorm{\varphi \bigl ( H(x) \bigr )}_{\Calg'} = 0 
	&& \forall \varphi \in \Cont_{\infty}(\R) 
	. 
\end{align*}
This is only another way of saying that $H$ is affiliated to the $C^*$-algebra composed of continuous $\Calg'$-valued functions which vanish at $\infty$, $\Cont_{\infty}(\Salg ; \Calg')$. If $\Salg$ is compact, \ie $\Alg \cong \Cont_{\infty}(\Salg)$ is unital, then the latter condition is empty. In general, $H$ may be $\infty$ outside of some open set 
\begin{align*}
	\supp H := \bigl \{ x \in \Salg \; \vert \; H(x) \neq \infty \bigr \} 
	. 
\end{align*}
This implies $\varphi \bigl ( H(x) \bigr ) = 0$ for all $\varphi \in \Cont_{\infty}(\R)$ and $x \not\in \supp H$. 

Sequences of observables affiliated to $\Calg'$ may be defined by considering $\Salg = \N \cup \{ \infty \}$ and $\Cont_{\infty}(\N \cup \{ \infty \} ; \Calg') \equiv \BCont(\N ; \Calg')$. This sense of convergence coincides with convergence in the norm-resolvent sense if $\Calg'$ is a subalgebra of $\mathcal{B}(\Hil)$ for some Hilbert space $\Hil$. 
\begin{prop}\label{algebraicPOV:affiliation:prop:fiber_decomposition_of_spectrum}
	Let $H$ be a selfadjoint observable affiliated to $\Cont_{\infty}(\Salg ; \Calg)$ such that $x \mapsto H(x)$ is continuous (see equation~\eqref{algebraicPOV:affiliation:eqn:continuity_of_observable_valued_function}) and proper. Then 
	\begin{align*}
		\sigma(H) = \bigcup_{x \in \Salg} \sigma \bigl ( H(x) \bigr ) 
	\end{align*}
	and $\sigma(H)$ is a closed subset of $\R \subseteq \C$. 
\end{prop}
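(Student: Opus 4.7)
The proof decomposes into three steps: (i) the easy inclusion $\bigcup_{x} \sigma\bigl(H(x)\bigr) \subseteq \sigma(H)$; (ii) closedness of the union; (iii) the reverse inclusion. The main technical step is (ii), where both the continuity assumption and properness enter essentially.

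For (i), note that for every $x \in \Salg$ the evaluation map $\pi_x : \Cont_{\infty}(\Salg;\Calg) \to \Calg$, $\Psi \mapsto \Psi(x)$, is a $C^*$-algebra morphism, and by construction $H(x)$ is exactly the image of $H$ through $\pi_x$. Theorem~\ref{algebraicPOV:affiliation:thm:spectrum_of_image_of_H_through_pi} then gives $\sigma\bigl(H(x)\bigr) \subseteq \sigma(H)$ for every $x$, hence the first inclusion.

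For (ii), I would exploit the following standard fact from Gelfand theory applied to the commutative algebra $\Cont_{\infty}(\R)$: for any morphism $\Phi : \Cont_{\infty}(\R) \to \Calg'$, its kernel is $\{\varphi : \varphi|_K = 0\}$ for the closed set $K = \sigma(\Phi)$, and one has the quantitative estimate $\bnorm{\Phi(\varphi)}_{\Calg'} = \snorm{\varphi|_K}_\infty \geq \babs{\varphi(\lambda)}$ for every $\lambda \in \sigma(\Phi)$. Now suppose $\lambda_n \to \lambda$ with $\lambda_n \in \sigma\bigl(H(x_n)\bigr)$, and fix any $\varphi \in \Cont_{\infty}(\R)$ with $\varphi(\lambda) \neq 0$. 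For $n$ large, $\babs{\varphi(\lambda_n)} \geq \tfrac{1}{2} \babs{\varphi(\lambda)}$, so $\bnorm{\varphi\bigl(H(x_n)\bigr)}_\Calg \geq \tfrac{1}{2} \babs{\varphi(\lambda)} > 0$. Properness forces $(x_n)$ to stay inside a compact subset of $\Salg$ (otherwise $\bnorm{\varphi(H(x_n))} \to 0$), so some subnet converges to a point $x_\infty \in \Salg$. Continuity of $y \mapsto \varphi\bigl(H(y)\bigr)$ in the norm of $\Calg$ then yields $\bnorm{\varphi\bigl(H(x_\infty)\bigr)}_\Calg \geq \tfrac{1}{2} \babs{\varphi(\lambda)} > 0$; as $\varphi$ was arbitrary subject to $\varphi(\lambda) \neq 0$, this means $\lambda \in \sigma\bigl(H(x_\infty)\bigr)$, proving closedness.

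For (iii), assume $\lambda \notin \bigcup_{x} \sigma\bigl(H(x)\bigr)$. By (ii) this set is closed, so there is an open neighborhood $U$ of $\lambda$ disjoint from it. Pick $\varphi \in \Cont_c(U) \subset \Cont_{\infty}(\R)$ with $\varphi(\lambda) \neq 0$. For each $x$, $\supp \varphi$ is disjoint from $\sigma\bigl(H(x)\bigr)$, hence $\varphi \in \ker \Phi_{H(x)}$ and $\varphi\bigl(H(x)\bigr) = 0$. Using the sup-norm description of the $C^*$-norm on $\Cont_{\infty}(\Salg;\Calg)$ from equation~\eqref{algebraicPOV:affiliation:eqn:norm_tensor_product}, we conclude $\varphi(H) = 0$ while $\varphi(\lambda) \neq 0$, so $\lambda \notin \sigma(H)$. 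The main obstacle is really the closedness argument in (ii): without properness, the $x_n$ could escape every compact set and no limit point would exist in $\Salg$, so properness is precisely what guarantees that the union is closed rather than merely having closure equal to $\sigma(H)$.
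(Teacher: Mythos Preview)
Your proof follows the same three-step strategy as the paper's, and steps (i) and (iii) are correct. In step (ii), however, there is a quantifier-order slip: you fix $\varphi$ first and then extract the convergent subnet, so the limit point $x_\infty$ may a priori depend on $\varphi$. The concluding line ``as $\varphi$ was arbitrary \dots, this means $\lambda \in \sigma\bigl(H(x_\infty)\bigr)$'' does not follow as written, because membership in $\sigma\bigl(H(x_\infty)\bigr)$ requires $\psi\bigl(H(x_\infty)\bigr) \neq 0$ for \emph{every} $\psi$ with $\psi(\lambda)\neq 0$, not merely for the single $\varphi$ that was used to produce $x_\infty$.

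The repair is minor and there are two ways to do it. The paper's route is to first prove a short lemma: for any compact $K \subset \R$ there is a compact $L \subseteq \Salg$ with $\sigma\bigl(H(x)\bigr) \cap K = \emptyset$ for all $x \notin L$ (apply properness to a single $\varphi$ with $\varphi \equiv 1$ on $K$). Taking $K$ to contain $\{\lambda_n\}\cup\{\lambda\}$ traps all $x_n$ in $L$ and lets you extract a convergent subsequence $x_{n_k}\to x$ \emph{before} any $\varphi$ is chosen; then for every $\varphi$ with $\varphi(\lambda)\neq 0$ your estimate $\bnorm{\varphi(H(x_{n_k}))} \geq \babs{\varphi(\lambda_{n_k})}$ passes to the limit. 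Alternatively you can keep your order: use one auxiliary $\varphi_0$ to get the subnet $x_{n_k}\to x_\infty$, and then note that along this \emph{same} subnet the inequality $\bnorm{\psi(H(x_{n_k}))} \geq \babs{\psi(\lambda_{n_k})}$ holds for any other $\psi$, yielding $\psi\bigl(H(x_\infty)\bigr)\neq 0$ whenever $\psi(\lambda)\neq 0$.
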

We will need an auxiliary Lemma to prove the proposition: 
\begin{lem}
	Let $\{ H(x) \}_{x \in \Salg}$ be a proper family of selfadjoint observables affiliated to $\Calg'$. Then the spectrum is localized near infinity for large $x$ in the following sense: for any compact $K \subset \R$, there is a compact $L \subseteq \Salg$ such that $\sigma \bigl ( H(x) \bigr ) \cap K = \emptyset$ for all $x \not\in L$. 
\end{lem}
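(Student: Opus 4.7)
The plan is to reduce the statement to a single application of properness after isolating a suitable cut-off function around $K$. First I would fix, using Urysohn's lemma, a function $\varphi \in \Cont_c(\R) \subset \Cont_{\infty}(\R)$ with $0 \leq \varphi \leq 1$ and $\varphi \vert_K = 1$. This $\varphi$ is the single test function against which properness will be invoked.

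By the characterization of $\lim_{x \to \infty} f(x) = 0$ recalled just after equation~\eqref{algebraicPOV:affiliation:eqn:continuity_of_observable_valued_function}, properness of $\{H(x)\}_{x \in \Salg}$ applied to this $\varphi$ yields a compact set $L \subseteq \Salg$ such that
\begin{align*}
\bnorm{\varphi \bigl ( H(x) \bigr )}_{\Calg'} < 1 \qquad \text{for all } x \notin L.
\end{align*}
The claim would then follow by contradiction: if some $x_0 \notin L$ had $\lambda_0 \in \sigma\bigl(H(x_0)\bigr) \cap K$, then $\varphi(\lambda_0) = 1$, and I would use the spectral-norm identity $\bnorm{\varphi(H(x_0))}_{\Calg'} = \sup_{\lambda \in \sigma(H(x_0))} \sabs{\varphi(\lambda)}$ to conclude $\bnorm{\varphi(H(x_0))}_{\Calg'} \geq 1$, contradicting the choice of $L$.

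The main obstacle, and really the only nontrivial input, is justifying the spectral-norm identity in the abstract $C^*$-algebraic framework in which $H$ is merely a morphism $\Phi_H : \Cont_{\infty}(\R) \longrightarrow \Calg'$. This however follows directly from the definition of $\sigma(H)$ given just above: by that definition, $\ker \Phi_H = \{\varphi \in \Cont_{\infty}(\R) \mid \varphi\vert_{\sigma(H)} = 0\}$, so $\Phi_H$ factors through the restriction map $\Cont_{\infty}(\R) \twoheadrightarrow \Cont_{\infty}(\sigma(H))$ as an injective $*$-morphism into $\Calg'$; since injective $*$-morphisms of $C^*$-algebras are isometric, one obtains $\bnorm{\varphi(H)}_{\Calg'} = \snorm{\varphi\vert_{\sigma(H)}}_{\infty}$, which is what is needed. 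Once this standard fact is in hand, everything else in the argument is a one-line deduction from properness, and the conclusion $\sigma(H(x)) \cap K = \emptyset$ for $x \notin L$ drops out.
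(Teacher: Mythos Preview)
Your proof is correct and follows essentially the same route as the paper: choose a $\varphi \in \Cont_{\infty}(\R)$ equal to $1$ on $K$, use properness to get a compact $L$ with $\bnorm{\varphi(H(x))}_{\Calg'} < 1$ for $x \notin L$, and invoke the spectral-norm identity $\bnorm{\varphi(H(x))}_{\Calg'} = \sup_{\lambda \in \sigma(H(x))} \sabs{\varphi(\lambda)}$ to conclude. Your justification of the spectral-norm identity via the factorization through $\Cont_{\infty}(\sigma(H))$ and isometry of injective $*$-morphisms is in fact more explicit than the paper, which simply asserts the identity.
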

\begin{proof}
	Pick $\varphi \in \Cont_{\infty}(\R)$ with $\varphi(\lambda) = 1$ for all $\lambda \in K$. As $x \mapsto H(x)$ is proper, we know $\bnorm{\varphi \bigl ( H(x) \bigr )}_{\Calg'} \rightarrow 0$ as $x \rightarrow \infty$. Thus, we can find a compact set $L \subseteq \Salg$ such that $\bnorm{\varphi \bigl ( H(x) \bigr )}_{\Calg'} < 1$ for $x \in L$. On the other hand, the norm can be expressed in terms of the spectrum, namely 
	\begin{align*}
		\bnorm{\varphi \bigl ( H(x) \bigr )}_{\Calg'} = \sup \bigl \{ \abs{\varphi(\lambda)} \; \vert \; \lambda \in \sigma \bigl ( H(x) \bigr ) \bigr \} < 1 
		. 
	\end{align*}
	Thus $\abs{\varphi(\lambda)} < 1$ for all $\lambda \in \sigma \bigl ( H(x) \bigr )$ as long as $x \not\in L$. 
\end{proof}
\begin{proof}[Proposition~\ref{algebraicPOV:affiliation:prop:fiber_decomposition_of_spectrum}]
	Let us start by showing that $\bigcup_{x \in \Salg} \sigma \bigl ( H(x) \bigr )$ is a closed set. Pick $\lambda \in \overline{\bigcup_{x \in \Salg} \sigma \bigl ( H(x) \bigr )}$; then there are sequences $\lambda_n \rightarrow \lambda$ in $\R$ and $\{ x_n \}_{n \in \N}$ in $\Salg$ such that $\lambda_n \in \sigma \bigl ( H(x_n) \bigr )$ for all $n \in \N$. By the previous Lemma, for a compact subset $K \supset \{ \lambda_n \}_{n \in \N}$ of $\R$, there exists a compact set $L \subseteq \Salg$ such that $x \not \in L$ implies $\sigma \bigl ( H(x) \bigr ) \cap K = \emptyset$. Hence, there is a subsequence $\{ x_{n_k} \}_{k \in \N} \subset L$ which we may take to be convergent to some $x \in L$. 
	Then pick $\varphi \in \Cont_{\infty}(\R)$ and $N \in \N$ large enough with $\varphi(\lambda) \neq 0$ such that 
	\begin{align*}
		\abs{\varphi(\lambda_n)} \geq \tfrac{1}{2} \abs{\varphi(\lambda)} > 0 
	\end{align*}
	holds for all $n \geq N$. Hence, $\bnorm{\varphi \bigl ( H(x_n) \bigr )}_{\Calg'}$ can be bounded from below by $\frac{1}{2} \abs{\varphi(\lambda)}$: 
	\begin{align*}
		\bnorm{\varphi \bigl ( H(x_n) \bigr )}_{\Calg'} &= \sup \bigl \{ \abs{\varphi(\lambda)} \; \vert \; \, \lambda \in \sigma \bigl ( H(x_n) \bigr ) \bigr \} 
		\\ &
		\geq \abs{\varphi(\lambda_n)} \geq \tfrac{1}{2} \abs{\varphi(\lambda)} > 0 
		&&  \forall n \geq N 
	\end{align*}
	%
	By continuity of $x \mapsto \bnorm{\varphi \bigl ( H(x) \bigr )}_{\Calg'}$, this implies $\bnorm{\varphi \bigl ( H(x) \bigr )}_{\Calg'} > 0$, \ie $\varphi \bigl ( H(x) \bigr ) \neq 0$. Hence $\lambda \in \bigcup_{x \in \Salg} \sigma \bigl ( H(x) \bigr )$ and the union is a closed subset of $\R$. 
	
	$\varphi \bigl ( H(x) \bigr ) \neq 0$ also trivially implies $\varphi(H) \neq 0$, and the union of the spectra must be contained in the spectrum of $H$, $\bigcup_{x \in \Salg} \sigma \bigl ( H(x) \bigr ) \subseteq \sigma(H)$. 
	\medskip
	
	\noindent
	Conversely, let $\lambda \not\in \bigcup_{x \in \Salg}$. As each of the $\sigma \bigl ( H(x) \bigr )$ as well as their union is closed, there exists a neighborhood $V$ of $\lambda$ that is disjoint from the spectra of $H(x)$ for all $x \in \supp H$, $V \cap \sigma \bigl ( H(x) \bigr ) = \emptyset$. If we then pick $\varphi_V \in \Cont_{\infty}(\R)$ whose support lies in $V$, then $\varphi_V \bigl ( H(x) \bigr ) = 0$ for all $x \in \supp H$, and hence $\varphi_V(H) = 0$. This means $\lambda \not\in \sigma(H)$ and we have shown $\bigcup_{x \in \Salg} \sigma \bigl ( H(x) \bigr ) \supseteq \sigma(H)$. 
\end{proof}
%
An analogous statement holds if we consider $\Ideal$-essential spectra: let $\Ideal$ be a two-sided, closed ideal of $\Calg'$ and $\pi_{\Ideal} : \Calg' \longrightarrow \Calg' / \Ideal$ the corresponding canonical projection. As we will see, this induces a canonical projection between $\Cont_{\infty}(\Salg ; \Calg')$ and $\Cont_{\infty} \bigl ( \Salg ; {\Calg'} / {\Ideal} \bigr )$. We can identify the latter with the quotient $\Cont_{\infty}(\Salg ; \Calg') / \Cont_{\infty}(\Salg ; \Ideal)$: two functions $f , g \in \Cont_{\infty}(\Salg ; \Calg')$ are equal modulo $\Cont_{\infty}(\Salg ; \Ideal)$, \ie $f = g + h$ for some $h \in \Cont_{\infty}(\Salg ; \Ideal)$, if and only if $x \mapsto \pi_{\Ideal} \bigl ( f(x) \bigr )$ equals $x \mapsto \pi_{\Ideal} \bigl ( g(x) \bigr )$ as functions mapping from $\Salg$ to ${\Calg'} / {\Ideal}$. This identification leads to an injection 
\begin{align*}
	\imath : \Cont_{\infty}(\Salg ; \Calg') / \Cont_{\infty}(\Salg ; \Ideal) \longrightarrow \Cont_{\infty} \bigl ( \Salg ; {\Calg'} / {\Ideal} \bigr ) 
\end{align*}
Hence, the projection $\pi_{\Ideal}$ induces a projection between the larger $C^*$-algebras 
\begin{align*}
	\pi_{\Ideal}^{\Salg} : \Cont_{\infty}(\Salg ; \Calg') \longrightarrow \Cont_{\infty} \bigl ( \Salg ; {\Calg'} / {\Ideal} \bigr ) , \; f \mapsto \pi_{\Ideal}^{\Salg}(f) 
\end{align*}
that acts as $\bigl ( \pi_{\Ideal}^{\Salg}(f) \bigr )(x) := \pi_{\Ideal} \bigl ( f(x) \bigr )$. Not surprisingly, this immediately leads to 
\begin{prop}\label{algebraicPOV:affiliation:prop:decomposition_spectrum}
	Assume $\Ideal$ is a two-sided, closed ideal of a $C^*$-algebra $\Calg'$ and 
	\begin{align*}
		\pi_{\Ideal} : \Calg'  \longrightarrow {\Calg'} / {\Ideal} 
	\end{align*}
	the corresponding canonical projection. Then the $\Ideal$-essential spectrum of a proper, continuous observable affiliated to $\Cont_{\infty}(\Salg ; \Calg')$ can be written as the union of the $\Ideal$-essential spectra of $H(x)$, 
	\begin{align}
		\sigma_{\Ideal}(H) \equiv \sigma \bigl ( \pi_{\Ideal}^{\Salg}(H) \bigr ) = \bigcup_{x \in \supp H} \sigma_{\Ideal} \bigl ( H(x) \bigr ) 
		= \bigcup_{x \in \supp H} \sigma \bigl ( \pi_{\Ideal} \bigl ( H(x) \bigr ) \bigr ) 
		, 
	\end{align}
	where $\pi^{\Salg}_{\Ideal} : \Cont_{\infty}(\Salg ; \Calg') \longrightarrow \Cont_{\infty} \bigl ( \Salg ; {\Calg'} / {\Ideal} \bigr )$ is the projection induced by $\pi_{\Ideal}$. 
\end{prop}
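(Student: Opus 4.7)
The plan is to reduce the statement to Proposition~\ref{algebraicPOV:affiliation:prop:fiber_decomposition_of_spectrum} applied to the quotient family $\bigl \{ \pi_{\Ideal} \bigl ( H(x) \bigr ) \bigr \}_{x \in \Salg}$. By definition, the $\Ideal$-essential spectrum of $H$ is the spectrum of the image observable $\pi^{\Salg}_{\Ideal}(H)$, which is affiliated to $\Cont_{\infty}(\Salg ; \Calg' / \Ideal)$. Under the identification $\Cont_{\infty}(\Salg ; \Calg) \cong \Alg \otimes \Calg'$ (using $\Alg \cong \Cont_{\infty}(\Salg)$ abelian, so the tensor product is unique), the fiber of $\pi^{\Salg}_{\Ideal}(H)$ at $x \in \Salg$ is exactly $\pi_{\Ideal} \bigl ( H(x) \bigr )$, an observable affiliated to $\Calg' / \Ideal$. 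Concretely, for any $\varphi \in \Cont_{\infty}(\R)$,
\begin{align*}
	\bigl ( \varphi(\pi^{\Salg}_{\Ideal}(H)) \bigr )(x) = \pi^{\Salg}_{\Ideal} \bigl ( \varphi(H) \bigr )(x) = \pi_{\Ideal} \bigl ( \varphi(H(x)) \bigr ) = \varphi \bigl ( \pi_{\Ideal}(H(x)) \bigr ) .
\end{align*}

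The next step is to verify that this new family is continuous and proper. Continuity of $x \mapsto \varphi \bigl ( \pi_{\Ideal}(H(x)) \bigr )$ in the norm of $\Calg' / \Ideal$ follows from the continuity of $x \mapsto \varphi \bigl ( H(x) \bigr )$ in $\Calg'$ together with the fact that $\pi_{\Ideal}$ is norm-decreasing (being a morphism): a Cauchy sequence in $\Calg'$ is mapped to a Cauchy sequence in $\Calg' / \Ideal$. Properness is equally direct: for $x \not\in \supp H$ we have $\varphi(H(x)) = 0$, hence $\varphi \bigl ( \pi_{\Ideal}(H(x)) \bigr ) = 0$ as well, so $\supp \pi^{\Salg}_{\Ideal}(H) \subseteq \supp H$; and for any $\varphi \in \Cont_{\infty}(\R)$,
\begin{align*}
	\bnorm{\varphi \bigl ( \pi_{\Ideal}(H(x)) \bigr )}_{\Calg' / \Ideal} \leq \bnorm{\varphi(H(x))}_{\Calg'} \xrightarrow{x \to \infty} 0 .
\end{align*}

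With these two facts in hand, Proposition~\ref{algebraicPOV:affiliation:prop:fiber_decomposition_of_spectrum} applied to $\pi^{\Salg}_{\Ideal}(H)$ yields
\begin{align*}
	\sigma_{\Ideal}(H) = \sigma \bigl ( \pi^{\Salg}_{\Ideal}(H) \bigr ) = \bigcup_{x \in \Salg} \sigma \bigl ( \pi_{\Ideal}(H(x)) \bigr ) = \bigcup_{x \in \Salg} \sigma_{\Ideal} \bigl ( H(x) \bigr ) ,
\end{align*}
and the union may be restricted to $\supp H$ since the contribution of the fibers outside is empty.

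The only delicate point I anticipate is the algebraic identification of the quotient $\Cont_{\infty}(\Salg ; \Calg') / \Cont_{\infty}(\Salg ; \Ideal)$ with a $C^*$-subalgebra of $\Cont_{\infty}(\Salg ; \Calg' / \Ideal)$ -- this is precisely the content of the embedding $\imath$ discussed just before the statement, and one needs the fact that $\imath$ is isometric so that norms (and hence the spectrum computed via norms of $\varphi$-images) are unchanged when passing between the two pictures. Once this identification is accepted, the proof is a clean application of the earlier fiber decomposition.
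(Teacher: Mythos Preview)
Your proposal is correct and is exactly the argument the paper has in mind: the proposition is stated there without an explicit proof, immediately after the discussion of the induced projection $\pi^{\Salg}_{\Ideal}$ and the injection $\imath$, precisely because it is meant to follow from Proposition~\ref{algebraicPOV:affiliation:prop:fiber_decomposition_of_spectrum} applied to the quotient family $\{\pi_{\Ideal}(H(x))\}_{x \in \Salg}$. Your verification of continuity and properness via norm-decrease of $\pi_{\Ideal}$, and your remark on the role of the isometric embedding $\imath$, fill in exactly the details the paper leaves implicit.
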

%



%
%
%

\chapter{Pseudodifferential theory revisited} 
\label{psiDO_reloaded}
%
Magnetic operators on $\Xgroup = \R^d$ have been the subject of Chapters~\ref{magWQ} and \ref{asymptotics} where rather hands-on pseudodifferential techniques have been extended to the case of \emph{magnetic} Weyl calculus. One of the main results, Theorem~\ref{asymptotics:thm:equivalenceProduct}, states that the magnetic Weyl product preserves Hörmander classes and is proven via oscillatory integral techniques (see Appendix~\ref{appendix:oscillatory_integrals}). 

On the other hand, the algebraic point of view outlined in Chapter~\ref{algebraicPOV} \cite{Mantoiu_Purice_Richard:twisted_X_products:2004} sheds light on how algebraic properties translate into spectral properties. If $\Alg \subseteq \BCont_u(\R^d)$ is an $\R^d$-algebra (Definition~\ref{algebraicPOV:twisted_crossed_products:defn:X-algebra}) and the components of $B$ are in the multiplier algebra $\mult(\Alg)$, then the Fourier transform of the twisted crossed product $\Alg \rtimes^{\omega^B}_{\theta,\tau} \R^d \equiv \sXprodR$ can be thought of as being composed of tempered distributions. Some elements are functions on phase space $\Pspace = \R^d \times {\R^d}^*$ and the behavior of $x \mapsto h(\xi,x) \in \Alg$ is characterized by the \emph{coefficient algebra $\Alg$}.\index{anisotropy algebra} Such functions are called $\Alg$-anisotropic and the product on $\sFXprodR$ respects the $\Alg$-anisotropy by definition. Smoothness properties, however, are more difficult to extract. 

The purpose of this chapter is to show how to \emph{combine} these two complementary approaches to one's advantage. Smoothness properties inferred from the pseudodifferential point of view indeed combine nicely with the \emph{anisotropy}, \ie~the behavior in $x$ as encoded in the coefficient algebra $\mathcal{A}$ is preserved under multiplication. We caution the reader that this is by no means an immediate consequence of the various definitions. 
\medskip

\noindent
Three main results will be proven in this chapter: 
\begin{enumerate}[(i)]
	\item We use the recently proven fact that $\Hoermrd{0}{\rho}{0}$ with the magnetic Moyal product $\magW$ and complex conjugation as involution forms a so-called $\Psi^*$-algebra \cite{Iftimie_Mantoiu_Purice:commutator_criteria:2008}: they are spectrally invariant $C^*$-subalgebras with Fréchet structure. What makes $\Psi^*$-algebras special is the fact that closed, symmetric subalgebras are again $\Psi^*$-algebras \cite[Cor.~2.5]{Lauter:operator_theoretical_approach_melrose_algebras:1998}. This abstract fact automatically guarantees that Moyal resolvents of real-valued \emph{anisotropic} Hörmander symbols of positive order $m > 0$ are again \emph{anisotropic} symbols or order $- m$ (Theorem~\ref{psiDO_reloaded:PsiD_theory:magnetic_composition:thm:composition_of_anisotropic_symbols}). 
	\item This leads to the second main result, namely a principle of affiliation for real-valued, elliptic, anisotropic symbols of positive order to the abstract algebra $\sFXprodR$ (Theorem~\ref{psiDO_reloaded:inversion_affiliation:thm:affiliation}). We also identify $C^*$-subalgebras composed of smooth functions on which we can apply pseudodifferential techniques again. 
	\item The principle of affiliation is then the starting point for spectral analysis. We show how to decompose the spectrum in terms of `asymptotic operators' which are in some sense located at infinity (Theorem~\ref{psiDO_reloaded:spectral:thm:essential_spectrum}). There is no Hilbert space analog for this decomposition, though, and even for the non-magnetic case $B = 0$, this result is new. 
\end{enumerate}
%
%
Now we specialize the definitions of the previous chapter to the case $\Xgroup = \R^d$ and magnetic $2$-cocycles $\omega \equiv \omega^B$ again to reap the benefits from the algebraic approach. Although much of what we state here holds for more general choices of groups $\Xgroup$, we shall not make this endeavor for the sake of readability. The results presented in this section are the fruits of a collaboration of Marius Măntoiu, Serge Richard and the author \cite{Lein_Mantoiu_Richard:anisotropic_mag_pseudo:2009}. 

\paragraph{Setting and assumptions} 
\label{psiDO_reloaded:setting_and_assumptions}
Let us revisit the definitions of the last chapter: for simplicity, we will always choose Weyl ordering, \ie $\tau = \nicefrac{1}{2}$. The advantages (most importantly, real-valued functions are mapped onto symmetric operators) have already been elaborated upon in Chapter~\ref{magWQ}. Furthermore, for magnetic $2$-cocycles we have $\omega^B(x,-x) = 1$ as the area of the flux triangle vanishes. Hence, the product of $f , g \in \Alg \rtimes^{\omega^B}_{\theta,\nicefrac{1}{2}} \R^d \cap L^1(\R^d ; \Alg) =: \Alg \rtimes^B_{\theta} \R^d \cap L^1(\R^d ; \Alg)$ reads 
\begin{align}
	(f \crossB g)(x) = \frac{1}{(2 \pi)^{\nicefrac{d}{2}}} \int_{\R^d} \dd y \, \theta_{\nicefrac{(y - x)}{2}} [f(y)] \, \theta_{\nicefrac{y}{2}} [g(x - y)] \, \theta_{- \nicefrac{x}{2}} [\omega^B(y , x - y)]
\end{align}
where \emph{in this chapter $\Alg$ always denotes a unital $\R^d$-algebra} in the sense of Definition~\ref{algebraicPOV:twisted_crossed_products:defn:X-algebra} and $\omega^B(q;x,y) := e^{- i \Gamma^B(\expval{q , q + x , q + x + y})}$ is the exponential of magnetic fluxes through triangles. The involution simplifies to $f^{\crossB}(x) = f(-x)^{\ast}$. We have added a prefactor $(2\pi)^{- \nicefrac{d}{2}}$ which has been absorbed into the measure in the last chapter. The condition that $\omega^B(x,y)$ is $\mathcal{U}(\Alg)$-valued implies that the components of $B$ have to be $\Alg$. The condition that $\Alg$ is unital is rather natural: if $h$ and $h'$ are hamiltonian symbols differ only by a constant $E_0$, then their quantizations will essentially have the same properties. The difference $E_0$ (seen as a constant function) corresponds to a different choice of zero energy and we would like both, $x \mapsto h(x,\xi)$ and $x \mapsto h'(x,\xi)$ to be contained in $\Alg$. It also does not change anything for the magnetic fields: if $\Alg$ were not unital, we could admit magnetic fields in the multiplier algebra of $\Alg$ -- which contains the case of constant field. 

Naturally, the twisted crossed product is represented on $L^2(\R^d)$ via $\Rep^A$ where the superscript $A$ indicates that the pseudotrivialization of the $2$-cocycle, 
\begin{align*}
	\lambda^A(q;x) := e^{-i \Gamma^A([q,q+x])} 
	, 
\end{align*}
involves the choice of a vector potential $A$ which represents $B = \dd A$, 
\begin{align}
	\bigl ( \Rep^A(f) u \bigr )(x) = \frac{1}{(2 \pi)^{\nicefrac{d}{2}}} \int_{\R^d} \dd y \, \lambda^A(x ; y - x) \, f \bigl ( \tfrac{1}{2} (x + y) , y - x \bigr ) \, u(y) 
	. 
\end{align}
The above is defined for $f \in L^1(\R^d ; \Alg)$ and $u \in L^2(\R^d)$, although we can immediately extend $\Rep^A$ to $f \in \Alg \rtimes^B_{\theta} \R^d$ or even further. $\Rep^A$ is called the \emph{Schrödinger representation} of $\Alg \rtimes^B_{\theta} \R^d$. 

Equivalently, we may restate this using the physically more natural, Fourier-trans\-formed twisted crossed product $\sFXprodR$. The involution $f^{\magW} = f^{\ast}$ reduces to complex conjugation and the product of two suitable functions $f , g : \R^d \times {\R^d}^* \longrightarrow \C$ is given by 
\begin{align*}
	(f \magW g)(X) &= \frac{1}{(2 \pi)^{2d}} \int_{\Xi} \dd Y \int_{\Xi} \dd Z \, e^{i \sigma(X,Y+Z)} \, e^{\frac{i}{2} \sigma(Y,Z)}
	\cdot \\
	&\qquad \qquad \qquad \qquad \qquad \cdot 
	\omega^B \bigl ( x-\tfrac{1}{2}(y+z) ; x + \tfrac{1}{2}(y-z) , x+\tfrac{1}{2}(y+z) \bigr ) 
	\cdot \\
	&\qquad \qquad \qquad \qquad \qquad \cdot 
	(\Fs^{-1} f)(Y) \, (\Fs^{-1} g)(Z) 
	. 
\end{align*}
This has been defined for Hörmander symbols via oscillatory integral techniques in Chapters~\ref{magWQ} and \ref{asymptotics}. Before we have a detailed look at interesting algebras and subalgebras, we need to define anisotropic Hörmander symbols and revisit the product formula. 


\section{Magnetic composition of anisotropic symbols} 
\label{psiDO_reloaded:magB_anisotropic_symbols}
\subsection{Anisotropic symbol spaces} 
\label{psiDO_reloaded:magB_anisotropic_symbols:anisotropic_symbol_spaces}
The algebra $\Alg$ encodes the behavior of the configurational part of the symbol (`the behavior in $x$') and that of the magnetic field. We will always make the following assumptions: 
\begin{assumption}[on the anisotropy algebra $\Alg$]\label{psiDO_reloaded:assumption:anisotropy_algebra}
	Unless explicitly stated otherwise, \linebreak throughout this chapter we will always assume that $\Alg$ is a \emph{unital} $\R^d$-algebra, \ie a $C^*$-subalgebra of $\BCont_u(\R^d)$ that is stable under translations. 
\end{assumption}
Much of the following can be immediately applied or adapted to the case where $\Alg$ is a general abelian $C^*$-algebra with an $\R^d$-action: 
\begin{defn}
	Let us define $\Alg^{\infty} := \bigl \{ \varphi \in \Alg \; \vert \; \R^d \ni x \mapsto \theta_x(\varphi) \in \Alg \mbox{ is } \Cont^{\infty} \bigr \}$. For $a \in \N_0^d$ we set 
	\begin{enumerate}[(i)] 
		\item $\delta^a : \Alg^{\infty} \longrightarrow \Alg^{\infty}$, $\varphi \mapsto \delta^a(\varphi) := \partial^a_x \bigl ( \theta_x [\varphi]\bigr ) \big \vert_{x=0}$, 
		\item $s^a : \Alg^{\infty} \longrightarrow \R^+$, $\varphi \mapsto s^a(\varphi) := \bnorm{\delta^a(\varphi)}_{\Alg}$.
	\end{enumerate}
\end{defn}
It is known that $\Alg^{\infty}$ is a dense $*$-subalgebra of $\Alg$, as well as a Fréchet $*$-algebra with the family of semi-norms $\{s^a \; \vert \; a \in \N_0^d \}$. But our setting is quite special: $\Alg$ is an abelian $C^*$-algebra composed of bounded and uniformly continuous complex-valued functions defined on the group $\R^d$ itself. 
\begin{lem}\label{psiDO_reloaded:magB_anisotropic_symbols:anisotropic_symbol_spaces:lem:coincidence_of_spaces}
	$\Alg^{\infty}$ coincides with $\Alg^{\infty}_0 := \bigl \{ \varphi \in \Cont^{\infty}(\R^d) \; \vert \; \partial_x^a \varphi \in \Alg, \, \forall a \in \N_0^d \bigr \}$. Furthermore, for any $a \in \N_0^d$ and $\varphi \in \Alg^{\infty}$, one has $\delta^a(\varphi) = \partial^a_x \varphi$.
\end{lem}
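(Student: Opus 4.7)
\medskip

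The plan is to establish the two inclusions $\Alg^{\infty}\subseteq\Alg^{\infty}_0$ and $\Alg^{\infty}_0\subseteq\Alg^{\infty}$ simultaneously with the identification $\delta^a(\varphi)=\partial_x^a\varphi$. The two enabling facts to keep in mind are: (a) the embedding $\Alg\hookrightarrow\BCont_u(\R^d)$ is isometric, so $\Alg$-norm convergence of a net of functions implies uniform pointwise convergence; and (b) the $\R^d$-action is by isometries (indeed by the group property and translation invariance of the sup norm), so $\theta_{x_0+h}[\varphi]-\theta_{x_0}[\varphi]=\theta_{x_0}\bigl[\theta_h[\varphi]-\varphi\bigr]$ and $\snorm{\theta_h[\varphi]-\varphi}_\Alg\to 0$ as $h\to 0$ for each $\varphi\in\Alg$.

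For the inclusion $\Alg^{\infty}\subseteq\Alg^{\infty}_0$: given $\varphi\in\Alg^\infty$, differentiability of $x\mapsto\theta_x[\varphi]$ in the $\Alg$-norm gives in particular that the difference quotient $t^{-1}(\theta_{te_j}[\varphi]-\varphi)$ converges in $\Alg$-norm to $\delta^{e_j}(\varphi)\in\Alg$. By fact (a) this convergence is uniform pointwise, so for every $q\in\R^d$ one has $t^{-1}(\varphi(q+te_j)-\varphi(q))\to\delta^{e_j}(\varphi)(q)$, identifying $\partial_{x_j}\varphi=\delta^{e_j}(\varphi)\in\Alg$. Iterating this identification (using that $\Alg^\infty$ is stable under $\delta^{e_j}$, since being $\Cont^\infty$ is preserved by one differentiation) yields $\partial_x^a\varphi=\delta^a(\varphi)\in\Alg$ for every $a\in\N_0^d$, hence $\varphi\in\Alg^{\infty}_0$.

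For the reverse inclusion $\Alg^{\infty}_0\subseteq\Alg^{\infty}$ I would use the fundamental theorem of calculus in $\Alg$: if $\varphi\in\Alg^{\infty}_0$ then $\partial_{x_j}\varphi\in\Alg$, and a pointwise Taylor expansion gives
\begin{align*}
\tfrac{1}{t}\bigl(\theta_{te_j}[\varphi]-\varphi\bigr)(q)=\int_0^1(\partial_{x_j}\varphi)(q+ste_j)\,\dd s=\int_0^1\bigl(\theta_{ste_j}[\partial_{x_j}\varphi]\bigr)(q)\,\dd s.
\end{align*}
Because the integrand is a norm-continuous $\Alg$-valued function of $s$ (by fact (b) applied to $\partial_{x_j}\varphi\in\Alg$), the identity lifts from a pointwise equality to an equality of Bochner integrals in $\Alg$, and
\begin{align*}
\Bnorm{\tfrac{1}{t}\bigl(\theta_{te_j}[\varphi]-\varphi\bigr)-\partial_{x_j}\varphi}_{\Alg}\leq\sup_{s\in[0,1]}\bnorm{\theta_{ste_j}[\partial_{x_j}\varphi]-\partial_{x_j}\varphi}_{\Alg}\xrightarrow{t\to 0}0.
\end{align*}
Thus $\delta^{e_j}(\varphi)$ exists in $\Alg$ and equals $\partial_{x_j}\varphi$. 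Since the hypothesis $\Alg^{\infty}_0$ is stable under $\partial_{x_j}$ (all further partial derivatives of $\partial_{x_j}\varphi$ are still partial derivatives of $\varphi$, hence in $\Alg$), one may iterate the same argument to obtain $\delta^a(\varphi)=\partial_x^a\varphi\in\Alg$ for every $a\in\N_0^d$. Translating this directional information into joint $\Cont^\infty$-regularity of $x\mapsto\theta_x[\varphi]$ uses the group relation $\theta_{x_0+h}[\varphi]-\theta_{x_0}[\varphi]=\theta_{x_0}[\theta_h[\varphi]-\varphi]$ together with the isometry of $\theta_{x_0}$, which transports the differentiability at the origin to every $x_0\in\R^d$ and provides the required norm continuity of the Fréchet derivatives.

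The only mildly technical step is this last one—passing from existence of partial directional derivatives at the origin (and their identification with elements of $\Alg$) to genuine $\Cont^\infty$-regularity of the orbit map in the Fréchet sense—but once the isometry of $\theta_{x_0}$ and the continuity of $x\mapsto\theta_x[\psi]$ for $\psi\in\Alg$ are in hand, this is a routine induction. No new analytic input beyond facts (a) and (b) is needed.
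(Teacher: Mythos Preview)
Your proof is correct and follows the same overall strategy as the paper: for $\Alg^{\infty}\subseteq\Alg^{\infty}_0$ both of you use that $\Alg$-norm convergence of the difference quotient implies uniform pointwise convergence, identifying $\delta^{e_j}(\varphi)=\partial_{x_j}\varphi$ and then iterating. For the reverse inclusion $\Alg^{\infty}_0\subseteq\Alg^{\infty}$ there is a small but genuine variation. The paper uses a \emph{second}-order Taylor expansion and estimates
\[
\Bigl|\tfrac{1}{t}\bigl(\varphi(x+te_j)-\varphi(x)\bigr)-(\partial_j\varphi)(x)\Bigr|
=\Bigl|\tfrac{1}{t}\int_0^t\!\dd s\int_0^s\!\dd u\,(\partial_j^2\varphi)(x+ue_j)\Bigr|
\leq\tfrac{t}{2}\,\bnorm{\partial_j^2\varphi}_\infty,
\]
exploiting only that the second derivative is bounded. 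You instead use a \emph{first}-order integral remainder and invoke norm continuity of the translation action on $\partial_{x_j}\varphi\in\Alg$. Your route needs one fewer derivative at each step but trades an explicit rate of convergence for a soft continuity argument; since all derivatives are assumed to lie in $\Alg\subseteq\BCont_u(\R^d)$ anyway, both approaches are equally available and neither has a real advantage here. Your closing remark about transporting differentiability from the origin to arbitrary $x_0$ via the isometry $\theta_{x_0}$ is a point the paper leaves implicit in its phrase ``this argument can be applied iteratively.''
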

\begin{proof}
	For any $\varphi\in \Alg^{\infty}$ and $j\in \{ 1, \ldots, d \}$, there exists $\delta_j \varphi \in \Alg^{\infty}$ such that
	\begin{align*}
		\norm{ \frac{1}{t} \bigl ( \theta_{t e_j}(\varphi) - \varphi \bigr ) - \delta_j \varphi }_{\Alg}
		= \sup_{x \in \R^d} \abs{ \frac{1}{t} \bigl ( \varphi(x + t e_j) - \varphi(x) \bigr ) - (\delta_j \varphi)(x) } \xrightarrow{t \rightarrow 0} 0 
		, 
	\end{align*}
	where $e_1 , \ldots , e_d$ is the canonical basis in $\R^d$. This implies that $\varphi$ is differentiable and that $\delta_j \varphi = \partial_{x_j} \varphi$. By recurrence, we get $\delta^a(\varphi) = \partial^a_x \varphi$, and it clearly follows that $\Alg^{\infty} \subseteq \Alg^{\infty}_0$.

	To prove the converse, let $\varphi \in \Alg^{\infty}_0$. We know that $\varphi \in \Cont^{\infty} (\R^d)$ and that $\partial^a \varphi \in \Alg \subseteq \BCont (\R^d)$ for any $a \in \N_0^d$. Thus, we simply have to show that the derivatives are obtained as uniform limits. Indeed, one has for $t > 0$:
	\begin{align*}
		&\abs{\frac{1}{t} \bigl ( \varphi(x+te_j)-\varphi(x) \bigr ) - (\partial_j \varphi)(x)}
		= \abs{\frac{1}{t} \int^t_0 \dd s \bigl ( (\partial_j \varphi)(x + s e_j) - (\partial_j \varphi)(x) \bigr )} 
		\\
		&\qquad = \abs{\frac{1}{t}\int^t_0 \dd s \int^s_0 \dd u (\partial^2_j \varphi)(x + u e_j)}
		\leq \bnorm{\partial^2_j \varphi}_{\infty} \, \frac{1}{t} \int^t_0 \dd s \, s 
		\\
		&\qquad = \bnorm{\partial^2_j \varphi}_{\infty} \, \frac{t}{2} \xrightarrow{t \rightarrow 0} 0 
		, 
	\end{align*}
	uniformly in $x \in \R^d$, and similarly for $t < 0$. Then this argument can be applied iteratively.
\end{proof}
%
We now introduce the anisotropic version of the Hörmander classes of symbols, confer~also \cite{baaj:twisted_X_products_1:1988,baaj:twisted_X_products_2:1988,Coburn_Moyer_Singer:Cstar_algebras_ap_psiDo:1973,Connes:Cstar_algebras_and_differential_geometry:1980,Shubin:ap_functions_and_partial_do:1978}. For any $f:\Pspace \longrightarrow \C$ and $(x,\xi) \in \Pspace$, we will often write $f(\xi)$ for $f(\cdot, \xi)$. In that situation, $f$ will be seen as a function on ${\R^d}^*$ taking values in some space of functions defined on $\R^d$.
\begin{defn}[Anisotropic Hörmander symbols of order $m$\index{Hörmander symbols!anisotropic}]
	The space of \emph{$\Alg$-an\-iso\-tropic symbols of order $m$ and type $(\rho,\delta)$} is
	\begin{align*}
		S^m_{\rho,\delta}(\syspace) := \Bigl \{ f \in \Cont^{\infty}&(\syspace) \; \vert \; \forall a , \alpha \in \N_0^d \; \exists C_{a \alpha} > 0 : 
		\Bigr . \\
		\Bigl . &
		s^a \bigl [ (\partial_{\xi}^{\alpha} f)(\xi) \bigr ] \leq C_{a \alpha} \expval{\xi}^{m - \rho \abs{\alpha} + \delta \abs{a}} \; \forall \xi \in {\R^d}^* \Bigr \} 
	\end{align*}
	where the Fréchet structure is generated by the family of seminorms $\bigl \{ \snorm{\cdot}_{m,a\alpha} \bigr \}_{a , \alpha \in \N_0^d}$, 
	\begin{align*}
		\snorm{f}_{m,a \alpha} := \sup_{\xi \in {\R^d}^*} \Bigl ( \sexpval{\xi}^{-(m - \abs{\alpha} \rho + \abs{a} \delta)} \, s^a \bigl [ (\partial_{\xi}^{\alpha} f)(\xi) \bigr ] \Bigr )
		. 
	\end{align*}
\end{defn}
Due to the very specific nature of the $C^*$-algebra $\Alg$, there are again some simplifications:
\begin{lem}\label{psiDO_reloaded:magB_anisotropic_symbols:lem:equality_symbol_classes}
	The following equality holds:
	\begin{align}\label{psiDO_reloaded:magB_anisotropic_symbols:eqn:equality_symbol_classes}
		\Hoerrdan{m}
		= \Bigl \{ f \in \Hoerrd{m} \; \vert \; (\partial^a_x \partial_{\xi}^{\alpha} f)(\xi) \in \Alg , \, \forall \xi \in {\R^d}^* \mbox{ and } a , \alpha \in \N_0^d \Bigr \}
		. 
	\end{align}
\end{lem}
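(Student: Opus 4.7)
The proof is essentially a bookkeeping exercise that reduces to applying Lemma~\ref{psiDO_reloaded:magB_anisotropic_symbols:anisotropic_symbol_spaces:lem:coincidence_of_spaces} pointwise in $\xi$. The plan is to verify two set inclusions, using in both directions the identification $\delta^a(\varphi) = \partial_x^a \varphi$ for $\varphi \in \Alg^\infty$ together with the characterization $\Alg^\infty = \Alg^\infty_0$.

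For the inclusion ``$\subseteq$'', I would start with $f \in \Hoerrdan{m}$. By definition, for each fixed $\xi \in {\R^d}^*$ and each $\alpha \in \N_0^d$, the function $(\partial_\xi^\alpha f)(\xi) = \partial_\xi^\alpha f(\cdot,\xi)$ lies in $\Alg^\infty$; hence by Lemma~\ref{psiDO_reloaded:magB_anisotropic_symbols:anisotropic_symbol_spaces:lem:coincidence_of_spaces} it is smooth in $x$ with $\delta^a\bigl[(\partial_\xi^\alpha f)(\xi)\bigr] = \partial_x^a \partial_\xi^\alpha f(\cdot,\xi) \in \Alg$ for every $a \in \N_0^d$. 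This already gives $(\partial_x^a \partial_\xi^\alpha f)(\xi) \in \Alg$. Joint smoothness of $f$ on $\Pspace$ follows from smoothness in $\xi$ (built into the definition of $\Hoerrdan{m}$) and the just-established smoothness in $x$, combined with continuity of mixed derivatives, which is automatic once all iterated partial derivatives exist and lie in $\BCont(\R^d)$ uniformly in $\xi$. Finally, the anisotropic seminorm estimate
\[
s^a\bigl[(\partial_\xi^\alpha f)(\xi)\bigr] = \bnorm{\partial_x^a \partial_\xi^\alpha f(\cdot,\xi)}_{\Alg} = \sup_{x\in\R^d} \babs{\partial_x^a \partial_\xi^\alpha f(x,\xi)} \leq C_{a\alpha} \sexpval{\xi}^{m-\rho\sabs{\alpha}+\delta\sabs{a}}
\]
translates precisely to the defining bound of $\Hoerrd{m}$, so $f$ belongs to the right-hand side of~\eqref{psiDO_reloaded:magB_anisotropic_symbols:eqn:equality_symbol_classes}.

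For the converse inclusion ``$\supseteq$'', I would take $f \in \Hoerrd{m}$ with $(\partial_x^a \partial_\xi^\alpha f)(\xi) \in \Alg$ for all $a,\alpha \in \N_0^d$ and all $\xi$. Fix $\xi$ and $\alpha$; then $\partial_\xi^\alpha f(\cdot,\xi)$ is a smooth function on $\R^d$ all of whose $x$-derivatives lie in $\Alg$, so it belongs to $\Alg^\infty_0$ and by Lemma~\ref{psiDO_reloaded:magB_anisotropic_symbols:anisotropic_symbol_spaces:lem:coincidence_of_spaces} also to $\Alg^\infty$, with $\delta^a\bigl[(\partial_\xi^\alpha f)(\xi)\bigr] = \partial_x^a \partial_\xi^\alpha f(\cdot,\xi)$. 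Taking $\Alg$-norms reverses the computation above: the Hörmander-type bound on $|\partial_x^a \partial_\xi^\alpha f(x,\xi)|$ becomes exactly the anisotropic bound $s^a\bigl[(\partial_\xi^\alpha f)(\xi)\bigr] \leq C_{a\alpha} \sexpval{\xi}^{m-\rho\sabs{\alpha}+\delta\sabs{a}}$, so $f \in \Hoerrdan{m}$.

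There is no real obstacle here; the only subtle point is that the definition of $\Hoerrdan{m}$ only demands that $(\partial_\xi^\alpha f)(\xi)$ be an element of $\Alg^\infty$ (a priori an abstract regularity condition phrased via the $\R^d$-action $\theta$), whereas the right-hand side of~\eqref{psiDO_reloaded:magB_anisotropic_symbols:eqn:equality_symbol_classes} phrases everything through ordinary partial derivatives. Bridging these two formulations is exactly what Lemma~\ref{psiDO_reloaded:magB_anisotropic_symbols:anisotropic_symbol_spaces:lem:coincidence_of_spaces} accomplishes, and the rest is just matching seminorms. The assumption that $\Alg \subseteq \BCont_u(\R^d)$ is essential: without it, the norm $\snorm{\cdot}_{\Alg}$ would not coincide with the uniform norm on $\R^d$, and the equivalence of the seminorm systems would break down.
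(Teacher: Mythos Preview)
Your argument has a genuine gap in the ``$\supseteq$'' direction. The definition of $\Hoerrdan{m}$ requires $f \in \Cont^{\infty}({\R^d}^*;\Alg^{\infty})$, which means that $\xi \mapsto f(\cdot,\xi)$ is smooth \emph{as a map into the Fr\'echet space} $\Alg^{\infty}$; in particular, the $\xi$-difference quotients must converge in every seminorm $s^a$. You only verify that for each fixed $\xi$ and $\alpha$ the function $(\partial_\xi^\alpha f)(\xi)$ lies in $\Alg^{\infty}$ and that the seminorm bounds hold. This is pointwise membership, not Fr\'echet-valued differentiability: knowing that the pointwise $\xi$-derivative $\partial_{\xi_j} f(\cdot,\xi)$ exists and belongs to $\Alg^{\infty}$ does not by itself guarantee that
\[
s^a\!\left[\tfrac{1}{t}\bigl(f(\xi+te_j)-f(\xi)\bigr)-(\partial_{\xi_j}f)(\xi)\right]\xrightarrow{t\to 0}0
\]
for every $a$. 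This is exactly the analytical content the paper supplies: it writes the error as a double integral of $\partial_x^a\partial_{\xi_j}^2 f$ and uses the H\"ormander bound on that second $\xi$-derivative to obtain uniform-in-$x$ convergence, then iterates. Your characterization of the proof as ``bookkeeping'' with ``no real obstacle'' misses this step; Lemma~\ref{psiDO_reloaded:magB_anisotropic_symbols:anisotropic_symbol_spaces:lem:coincidence_of_spaces} handles the $x$-regularity, but the Fr\'echet smoothness in $\xi$ needs its own estimate.

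Your ``$\subseteq$'' direction is fine, and indeed the paper treats it as the easy inclusion.
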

\begin{proof}
	First we notice that the conditions
	\begin{align*}
		s^a \bigl [ (\partial^{\alpha}_{\xi} f)(\xi) \bigr ] \leq C_{a \alpha} \expval{\xi}^{m-\rho \abs{\alpha} + \delta \abs{a}}, 
		&& \forall \xi \in {\R^d}^* 
		, 
	\end{align*}
	and 
	\begin{align*}
		\babs{(\partial_x^a\partial^\alpha_\xi f)(x,\xi)} \leq C_{a \alpha} \expval{\xi}^{m - \rho \abs{\alpha} + \delta \abs{a}}, 
		&& \forall (x,\xi) \in \Pspace 
		, 
	\end{align*}
	are identical. On the other hand, by Lemma \ref{psiDO_reloaded:magB_anisotropic_symbols:anisotropic_symbol_spaces:lem:coincidence_of_spaces},
	\begin{align*}
		(\partial^{\alpha}_{\xi} f)(\xi) \in \Alg^{\infty} \, \Longleftrightarrow \, (\partial_x^a \partial^{\alpha}_{\xi} f)(\xi) \in \Alg, 
		\, \forall a \in \N_0^d
		.
	\end{align*}
	It thus follows that $S^m_{\rho,\delta} (\syspace)$ is included in the right-hand~side of \eqref{psiDO_reloaded:magB_anisotropic_symbols:eqn:equality_symbol_classes}, and we are then left with proving that if $f \in S^m_{\rho,\delta}(\Pspace)$ and $(\partial^\alpha_\xi f)(\xi) \in \Alg^{\infty}$ for all $\alpha$ and $\xi$, then $f \in \Cont^{\infty} (\syspace)$.

	We first show that $f : {\R^d}^* \longrightarrow \Alg^{\infty}$ is differentiable, that is for each $a \in \N_0^d$ 
	\begin{align*}
		s^a \left [ \frac{1}{t} \bigl ( f(\xi + t e_j) - f(\xi) \bigr ) - (\partial_{\xi_j}f)(\xi) \right ] \xrightarrow{t \rightarrow 0} 0 , 
		&& \forall j = 1 , \ldots , d ,
	\end{align*}
	holds where $e_1 , \ldots , e_d$ is the canonical basis in ${\R^d}^* \cong \R^d$. Indeed, we have for $t > 0$:
	\begin{align*}
		\sup_{x \in \R^d} &\abs{\frac{1}{t} \bigl ( (\partial_x^a f)(x,\xi + t e_j) - (\partial_x^a f)(x,\xi) \bigr ) - \partial_x^a \partial_{\xi_j} f(x,\xi)} 
		= \\
		&\qquad \qquad \qquad \qquad 
		= \sup_{x \in \R^d} \abs{\frac{1}{t} \int_0^t \dd s \, \int_0^s \dd u \, (\partial_x^a \partial_{\xi_j}^2 f)(x,\xi + u e_j)} 
		\\
		&\qquad \qquad \qquad \qquad 
		\leq \sup_{x \in \R^d} \, \frac{1}{t} \int_0^t \dd s \, \int_0^s \dd u \, C_a \, \expval{\xi + t e_j}^{m - 2 \rho + \delta \abs{a}} 
		\\
		&\qquad \qquad \qquad \qquad 
		\leq C_a' \, \expval{\xi}^{m - 2 \rho + \delta \abs{a}} \, \frac{1}{t} \int_0^t \dd s \, \int_0^s \dd u \, \expval{u}^{\abs{m - 2 \rho + \delta \abs{a}}} 
		\\
		&\qquad \qquad \qquad \qquad 
		\leq C_a'' \, \expval{\xi}^{m - 2 \rho + \delta \abs{a}} \, \frac{1}{t} (t^2 - 0) \xrightarrow{t \rightarrow 0} 0 
	\end{align*}
	and similarly for $t < 0$. We can continue to apply this procedure to the resulting derivative $\partial_{\xi_j} f \in \Hoerrd{m - \rho}$ and finish the proof by recurrence.
\end{proof}
For $\Alg = \BCont_u(\R^d)$, it is readily shown that
\begin{align*}
	\BCont_u(\R^d)^{\infty} &= \bigl \{ \varphi \in \Cont^{\infty}(\R^d) \; \vert \; \partial_x^a \varphi \in \BCont_u(\R^d), \, \forall a \in \N_0^d \bigr \} 
	\\ 
	&= \bigl \{ \varphi \in \Cont^{\infty}(\R^d) \; \vert \; \partial_x^a \varphi \in \BCont(\R^d), \, \forall a \in \N_0^d \bigr \}
	=: \BCont^{\infty}(\R^d)
\end{align*}
and thus 
\begin{align*}
	S^m_{\rho,\delta} \bigl ( {\R^d}^*;\BCont_u(\R^d)^{\infty} \bigr ) 
	\equiv S^m_{\rho,\delta}\bigl ( {\R^d}^*;\BCont^{\infty}(\R^d) \bigr ) 
	= \Hoerrd{m} 
	.
\end{align*}
The symbol class associated to subalgebras $\Alg \subseteq \BCont_u(\R^d)$ are naturally contained in $\Hoerrd{m}$. 
\begin{prop}\label{psiDO_reloaded:PsiD_theory:anisotropic_symbol_classes:prop:properties_anisotropic_symbol_classes}
	\begin{enumerate}[(i)]
		\item $\Hoerrdan{m}$ is a closed subspace of the Fréchet space $\Hoerrd{m}$. 
		\item For any $m_1, m_2 \in \R$, $\Hoerrdan{m_1} \cdot \Hoerrdan{m_2} \subseteq \Hoerrdan{m_1 + m_2}$. 
		\item For any $a , \alpha \in \N_0^d$, $\partial^a_x \partial^{\alpha}_{\xi} \Hoerrdan{m} \subseteq \Hoerrdan{m - \rho \abs{\alpha} + \delta \abs{a}}$. 
	\end{enumerate}
\end{prop}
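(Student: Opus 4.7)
The proposal is to use the characterization of $\Hoerrdan{m}$ established in Lemma~\ref{psiDO_reloaded:magB_anisotropic_symbols:lem:equality_symbol_classes} throughout, namely that $f \in \Hoerrdan{m}$ if and only if $f \in \Hoerrd{m}$ \emph{and} $(\partial_x^a \partial_\xi^\alpha f)(\xi) \in \Alg$ for all $\xi \in {\R^d}^*$ and $a,\alpha \in \N_0^d$. Reducing to this pointwise-in-$\xi$ membership condition makes all three claims follow from (a) known facts about the non-anisotropic classes $\Hoerrd{m}$ and (b) the fact that $\Alg$ is a $C^*$-subalgebra of $\BCont_u(\R^d)$, hence closed under complex linear combinations, pointwise multiplication, and uniform limits.

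For (i), I would take a sequence $\{f_n\} \subset \Hoerrdan{m}$ converging to some $f$ in the Fréchet topology of $\Hoerrd{m}$. Convergence in the seminorm $\snorm{\cdot}_{m,a\alpha}$ implies in particular that, for each fixed $\xi \in {\R^d}^*$ and each $a, \alpha \in \N_0^d$, the functions $(\partial_x^a \partial_\xi^\alpha f_n)(\xi)$ converge uniformly in $x$ to $(\partial_x^a \partial_\xi^\alpha f)(\xi)$. Since $(\partial_x^a \partial_\xi^\alpha f_n)(\xi) \in \Alg$ by assumption and $\Alg$ is norm-closed in $\BCont_u(\R^d)$, the limit belongs to $\Alg$ as well, so $f \in \Hoerrdan{m}$ by the characterization above.

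For (ii), standard Hörmander calculus immediately gives $f \cdot g \in \Hoerrd{m_1+m_2}$ whenever $f \in \Hoerrd{m_1}$ and $g \in \Hoerrd{m_2}$, since pointwise multiplication respects the expected growth in $\xi$ via the Leibniz rule. It remains to verify the anisotropy: for arbitrary $a,\alpha \in \N_0^d$, the Leibniz rule expresses $(\partial_x^a \partial_\xi^\alpha (fg))(\xi)$ as a finite linear combination of products of the form $(\partial_x^{a'} \partial_\xi^{\alpha'} f)(\xi) \cdot (\partial_x^{a-a'} \partial_\xi^{\alpha-\alpha'} g)(\xi)$. Each factor lies in $\Alg$ by hypothesis, and $\Alg$ is closed under pointwise multiplication, so the sum lies in $\Alg$ as well.

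For (iii), if $f \in \Hoerrdan{m}$ then $\partial_x^a \partial_\xi^\alpha f \in \Hoerrd{m - \rho \abs{\alpha} + \delta \abs{a}}$ is standard. The anisotropy is again immediate: for any $b, \beta \in \N_0^d$ one has $(\partial_x^b \partial_\xi^\beta (\partial_x^a \partial_\xi^\alpha f))(\xi) = (\partial_x^{a+b} \partial_\xi^{\alpha+\beta} f)(\xi) \in \Alg$ by assumption on $f$. Of these three items, (i) is the only one with a genuine (though mild) subtlety, namely identifying the correct mode of convergence extracted from the Fréchet topology that allows one to exploit the norm-closedness of $\Alg$; (ii) and (iii) are essentially bookkeeping with the Leibniz rule and the algebra axioms.
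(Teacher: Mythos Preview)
Your proposal is correct and follows essentially the same route as the paper: both use the characterization from Lemma~\ref{psiDO_reloaded:magB_anisotropic_symbols:lem:equality_symbol_classes}, extract uniform-in-$x$ convergence at fixed $\xi$ from the seminorm convergence in~(i) to exploit that $\Alg$ is norm-closed, and invoke the Leibniz rule plus the algebra property of $\Alg$ for~(ii), while~(iii) is immediate from the characterization.
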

%
%
\begin{proof}
	\begin{enumerate}[(i)]
		\item We have to show that if $f_n \in \Hoerrdan{m}$ converges to $f \in \Hoerrd{m}$ with respect to the  family of seminorms of $\Hoerrd{m}$, $\sfnorm{f_n - f}{m}{a \alpha} \xrightarrow{n \rightarrow \infty} 0$, then $(\partial_x^a \partial_{\xi}^{\alpha} f)(\xi) \in \Alg$ for all $a$, $\alpha$ and $\xi$. But since $\Alg$ is closed, it is enough to show that for any $a , \alpha \in \N_0^d$, the following statement holds: if $g_n \in \Hoerrd{m}$ and $\sfnorm{g_n}{m}{a \alpha}$ tends to $0$ as $n \rightarrow \infty$, then $\snorm{(\partial_x^a \partial_{\xi}^{\alpha} g_n) (\xi)}_{\Alg} \xrightarrow{n \rightarrow \infty} 0$ for all $\xi \in {\R^d}^*$. Hence, $g_n$ and all its derivatives in $x$ and $\xi$ converge uniformly to $0$. 
		\item Statement (ii) follows by applying Lemma \ref{psiDO_reloaded:magB_anisotropic_symbols:lem:equality_symbol_classes}, Leibnitz's rule and the fact that $\Alg$ is an algebra. 
		\item Statement~(iii) is a direct consequence of Lemma~\ref{psiDO_reloaded:magB_anisotropic_symbols:lem:equality_symbol_classes}. 
	\end{enumerate}

\end{proof}
%


\subsection{Magnetic composition} 
\label{psiDO_reloaded:magB_anisotropic_symbols:symbol_composition}
The definition of twisted crossed products implies that anisotropies, \ie particular behaviors in $x$, are respected. On the other hand, in Chapters~\ref{magWQ} and \ref{asymptotics}, we have seen that the product of two Hörmander symbols is again a Hörmander symbol. Combining these two facts leads to 
\begin{thm}\label{psiDO_reloaded:PsiD_theory:magnetic_composition:thm:composition_of_anisotropic_symbols}
	Assume that the components $B_{jk}$ of the magnetic field are smooth elements of $\Alg^{\infty}$. Then, for any $\eps \leq 1$, $\lambda \leq 1$, $m_1 , m_2 \in \R$ and $0 \leq \delta < \rho \leq 1$ or $\rho = 0 = \delta$, one has\index{Weyl product!magnetic}
	\begin{align}\label{psiDO_reloaded:PsiD_theory:magnetic_composition:eqn:composition_of_anisotropic_symbols}
		\Hoerrdan{m_1} \magWel \Hoerrdan{m_2} \subseteq \Hoerrdan{m_1 + m_2}
		. 
	\end{align}
\end{thm}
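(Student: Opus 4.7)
The plan is to reduce the theorem to two independent statements: first, that the magnetic Weyl product preserves the Hörmander class structure, and second, that it preserves the fibrewise membership in $\Alg$. By Lemma~\ref{psiDO_reloaded:magB_anisotropic_symbols:lem:equality_symbol_classes}, a symbol $h \in \Hoerrd{m}$ belongs to $\Hoerrdan{m}$ if and only if $(\partial_x^a \partial_\xi^\alpha h)(\cdot,\xi) \in \Alg$ for every $a,\alpha \in \N_0^d$ and every $\xi \in {\R^d}^*$. The first half is already supplied by Theorem~\ref{asymptotics:thm:equivalenceProduct}, which gives $f \magWel g \in \Hoerrd{m_1+m_2}$ under the assumed bounds on $B$. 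Hence the real content lies in checking the anisotropy clause.

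For the anisotropy, I would start from the integral representation~\eqref{asymptotics:expansions:eqn:Fourier_form_magnetic_Weyl_product} and observe that the only $x$-dependence in the integrand comes through $(\Fs f)(Y)$, $(\Fs g)(Z)$ (viewed as $\Alg$-valued via the identification of $\Hoerrdan{m}$) and the magnetic phase $\oBel(x-\tfrac{\eps}{2}(y+z); \ldots)$. Because $B_{jk} \in \Alg^\infty$, Stokes' theorem expresses the flux through the triangle as an integral of $B$, and repeated use of the chain rule shows that $\partial_x^a \oBel(\cdot;y,z)$ is a polynomial in the derivatives of $B$ (all in $\Alg$ by Lemma~\ref{psiDO_reloaded:magB_anisotropic_symbols:anisotropic_symbol_spaces:lem:coincidence_of_spaces}) times $\oBel(\cdot;y,z) \in \mathcal{U}(\Alg)$, with polynomial bounds in $y,z$ as recorded in Corollary~\ref{appendix:asymptotics:cor:properties_flux}. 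Thus the integrand, for each fixed $Y,Z$, defines a function $\R^d \ni x \mapsto F_a(x;Y,Z) \in \Alg$, and the $\xi$-derivatives only act on $\Fs f$ and $\Fs g$, which does not disturb $\Alg$-membership.

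Next I would justify passing the derivatives $\partial_x^a \partial_\xi^\alpha$ under the oscillatory integral by the standard integration-by-parts procedure used in Appendix~\ref{appendix:oscillatory_integrals} (or already in the proof of Theorem~\ref{asymptotics:thm:equivalenceProduct}): one inserts regularizing cut-offs, performs the $x$- and $\xi$-differentiations on the smooth integrand, and then removes the cut-offs using the same bounds that gave convergence of the original oscillatory integral, now applied to the derivative. The crucial observation is that the sequence of regularized integrals lies in $\Alg$ because it is a norm-convergent integral of $\Alg$-valued functions and $\Alg \subseteq \BCont_u(\R^d)$ is closed; therefore so does its uniform limit as the cut-offs are removed. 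This produces $(\partial_x^a \partial_\xi^\alpha (f \magWel g))(\cdot,\xi) \in \Alg$ for every $a,\alpha,\xi$, and invoking Lemma~\ref{psiDO_reloaded:magB_anisotropic_symbols:lem:equality_symbol_classes} closes the argument.

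The main technical obstacle I anticipate is precisely this interchange of differentiation and oscillatory integration while tracking the $\Alg$-valuedness throughout the regularization. The pseudodifferential part of the bookkeeping is standard, but one has to resist the temptation to work seminorm by seminorm (which only sees membership in $\BCont_u$) and instead check at each stage that the approximants lie in the $C^*$-subalgebra $\Alg$, so that closedness of $\Alg$ can be invoked at the end. Once this is handled, the two-parameter asymptotic expansion of Theorem~\ref{asymptotics:thm:asymptotic_expansion} provides a useful cross-check, since each explicit term $(f \magWel g)_{(n,k)}$ is manifestly in $\Hoerrdan{m_1+m_2-(n+k)\rho}$ by Proposition~\ref{psiDO_reloaded:PsiD_theory:anisotropic_symbol_classes:prop:properties_anisotropic_symbol_classes}, and only the remainder requires the argument above.
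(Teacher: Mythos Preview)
Your proposal is correct and follows essentially the same route as the paper: invoke the non-anisotropic result (Theorem~\ref{asymptotics:thm:equivalenceProduct}) to land in $\Hoerrd{m_1+m_2}$, then verify the anisotropy clause of Lemma~\ref{psiDO_reloaded:magB_anisotropic_symbols:lem:equality_symbol_classes} by differentiating under the oscillatory integral~\eqref{asymptotics:expansions:eqn:Fourier_form_magnetic_Weyl_product}, using the $\Alg$-valuedness of $\partial_x^c \oBel$ (the paper packages this as Lemma~\ref{psiDO_reloaded:PsiD_theory:magnetic_composition:lem:properties_of_flux}), rewriting the result as an absolutely convergent integral, and concluding via Dominated Convergence and closedness of $\Alg$. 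One small correction: in the representation~\eqref{asymptotics:expansions:eqn:Fourier_form_magnetic_Weyl_product} the factors $(\Fs f)(Y)$ and $(\Fs g)(Z)$ carry no $x$-dependence; the $x$-derivatives hit $e^{i\sigma(X,Y+Z)}$ (producing polynomial weights that convert back to $x$-derivatives of $f$ and $g$) and $\oBel$, which is exactly how the paper organizes the Leibniz expansion.
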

Before proving this theorem, we need a technical lemma. For consistency, we have included the parameters $\eps$ and $\lambda$ as in Chapter~\ref{asymptotics}, but we only need them for the asymptotic expansions of the product. 
\begin{lem}\label{psiDO_reloaded:PsiD_theory:magnetic_composition:lem:properties_of_flux}
	Assume that the each component $B_{jk}$ of the magnetic field belongs to $\Alg^{\infty}$. Then, for all $a,b,c \in \N_0^d$ and all $x,y,z \in \R^d$, one has:
	\begin{enumerate}[(i)]
		\item $\bigl ( \partial_x^a \partial_y^b \partial_z^c \gBe \bigr )(\cdot , y , z) \in \Alg$, 
		\item $\bigl ( \partial_x^a \partial_y^b \partial_z^c \oBel \bigr )(\cdot , y , z) \in \Alg$, 
		\item $\babs{\bigl ( \partial_x^a \partial_y^b \partial_z^c \oBel \bigr )(\cdot , y , z)} \leq C_{abc} \bigl ( \expval{y} + \expval{z} \bigr )^{\abs{a} + \abs{b} + \abs{c}}$. 
	\end{enumerate}
\end{lem}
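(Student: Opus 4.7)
The plan is to prove the three assertions in succession: the first two are essentially algebraic consequences of an explicit integral representation of the flux, and the third is an analytic bookkeeping built on Faà di Bruno's formula.

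First, I would parametrize the scaled flux through the triangle with vertices $x - \tfrac{\eps}{2}(y+z)$, $x + \tfrac{\eps}{2}(y-z)$ and $x + \tfrac{\eps}{2}(y+z)$ via Stokes' theorem. Computing the edge vectors $v_1 = \eps y$ and $v_2 = \eps (y+z)$ and using the standard parametrization of a $2$-form over a triangle yields
\begin{align*}
    \gBe(x,y,z) = \eps \sum_{k<j} (y_k z_j - y_j z_k) \int_0^1 \dd s \int_0^{1-s} \dd t \, B_{kj}\bigl ( x + \eps \, \phi(s,t;y,z) \bigr )
\end{align*}
where $\phi$ is an affine function of $(y,z)$ with coefficients depending on $(s,t)$. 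This normal form is the key: all derivatives in $x$, $y$, $z$ commute with the integral and produce only translates of $B_{kj}$ or its partial derivatives, multiplied by polynomial factors in $(y,z)$.

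For part (i), the hypothesis $B_{kj} \in \Alg^{\infty}$ together with the translation-invariance of $\Alg$ ensures that each integrand above lies in $\Alg^{\infty} \subseteq \Alg$, and is norm-continuous in $(s,t)$; a Bochner-integral argument then gives $\gBe(\cdot, y, z) \in \Alg$. Derivatives in $x$ simply pull inside the integral and produce further elements of $\Alg^{\infty}$; derivatives in $y$ or $z$ either hit the polynomial prefactor $y_k z_j - y_j z_k$ (lowering its degree) or descend into the argument of $B_{kj}$, producing an extra factor of $\eps$ together with a partial derivative of $B_{kj}$ that still belongs to $\Alg^{\infty}$. Closedness of $\Alg$ under all these operations then yields (i). For part (ii), observe that $\gBe(\cdot, y, z)$ is a real-valued element of the unital commutative $C^{*}$-algebra $\Alg$, so continuous functional calculus immediately gives $\oBel(\cdot, y, z) = e^{-i\lambda \gBe(\cdot, y, z)} \in \Alg$. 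For the higher derivatives, Faà di Bruno's formula expresses
\begin{align*}
    \partial_x^a \partial_y^b \partial_z^c \oBel = \oBel \cdot \sum C_{(a_i, b_i, c_i)} \, (-i\lambda)^n \prod_{i=1}^n \partial_x^{a_i} \partial_y^{b_i} \partial_z^{c_i} \gBe
\end{align*}
as a finite sum indexed by partitions of $(a,b,c)$. Each factor lies in $\Alg$ by (i), and since $\Alg$ is an algebra, the products and sums stay in $\Alg$.

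Part (iii) is the main obstacle. It requires combining the Faà di Bruno expression above with polynomial bounds extracted from the Stokes representation. Since $B_{kj}$ is bounded and the prefactor is $y_k z_j - y_j z_k$, one starts from $|\gBe(x,y,z)| \leq C\eps |y||z|$, and inductively shows that each $y$- or $z$-derivative can either consume one power from the explicit polynomial prefactor or introduce a further $\eps$ and one derivative of $B_{kj}$, while $x$-derivatives leave the polynomial weight untouched. Inserting the resulting bounds into the Faà di Bruno sum and using $|\oBel| = 1$ gives the desired polynomial estimate in $(\sexpval{y} + \sexpval{z})$. The technical difficulty lies in tracking, term by term in the partition sum, exactly how many of the derivatives have acted on the polynomial prefactor as opposed to the argument of $B_{kj}$, and verifying that the accumulated powers of $\sexpval{y}$ and $\sexpval{z}$ combine to match the exponent $|a| + |b| + |c|$ stated in the lemma, rather than a larger one coming from naive multiplication across the Faà di Bruno factors.
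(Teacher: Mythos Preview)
Your approach to (i) and (ii) is correct and is essentially what the paper does, only spelled out in more detail. The paper's proof is a one-liner: it points to the explicit integral formula for $\gBe$ (equation~\eqref{asymptotics:expansions:eqn:gamma_B_eps}), notes that derivatives can be computed under the integral sign, and invokes completeness of $\Alg$. Your Bochner-integral argument and the use of functional calculus for the exponential are the right way to make this precise.

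For (iii) the paper simply refers to Lemma~\ref{appendix:asymptotics:properties_mag_flux:lem:boundedness_mag_flux} and Corollary~\ref{appendix:asymptotics:cor:properties_flux} in the appendix, which assert the same bound for $B\in\BCont^{\infty}$; so the real content lives there, and your Faà di Bruno strategy is exactly the right route. The concern you raise at the end is well placed and in fact does \emph{not} fully resolve in favour of the stated exponent. The clean structural fact is that $\gBe$ has the form
\[
\gBe(x,y,z)=\sum_{l,k}\tilde B_{lk}(x,y,z)\,y_l z_k
\]
with $\tilde B_{lk}\in\BCont^{\infty}(\R^{3d})$, so \emph{every} derivative $\partial_x^{a'}\partial_y^{b'}\partial_z^{c'}\gBe$ of positive total order is bounded by $C\,\sexpval{y}\sexpval{z}$, uniformly in the order. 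Feeding this into Faà di Bruno (at most $N=\sabs{a}+\sabs{b}+\sabs{c}$ factors, each of size $\leq C\sexpval{y}\sexpval{z}$) gives
\[
\babs{\partial_x^a\partial_y^b\partial_z^c\,\oBel}\leq C_{abc}\,\bigl(\sexpval{y}\sexpval{z}\bigr)^{\,\sabs{a}+\sabs{b}+\sabs{c}},
\]
which is the second inequality in Corollary~\ref{appendix:asymptotics:cor:properties_flux} and is the form actually used in the applications (see the proof of Theorem~\ref{asymptotics:thm:asymptotic_expansion}). The sharper bound $(\sexpval{y}+\sexpval{z})^{\,\sabs{a}+\sabs{b}+\sabs{c}}$ printed in the lemma would require cancellation across the Faà di Bruno terms that is not available in general: already for two $x$-derivatives the term $(\partial_{x_j}\gBe)^2$ is genuinely of order $\sexpval{y}^2\sexpval{z}^2$. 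So you should not try to match that exponent; prove the product-form bound instead, which your outline delivers without difficulty and which is all that is needed downstream.
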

\begin{proof}
	The expressions $\bigl ( \partial_x^a \partial_y^b \partial_z^c \gBe \bigr )(\cdot , y , z)$ can be explicitly calculated from equation~\eqref{asymptotics:expansions:eqn:gamma_B_eps}, (i) follows from the completeness of $\Alg$ and (ii) easily follows from (i). Statement (iii) is an immediate generalization of Lemma~\ref{appendix:asymptotics:properties_mag_flux:lem:boundedness_mag_flux} and Corollary~\ref{appendix:asymptotics:cor:properties_flux} in Appendix~\ref{appendix:asymptotics:expansion_twister}. 
\end{proof}
\begin{proof}[Theorem~\ref{psiDO_reloaded:PsiD_theory:magnetic_composition:thm:composition_of_anisotropic_symbols}]
	For simplicity and consistency with Chapter~\ref{asymptotics}, we will only present the proof for $\delta = 0$. We refer to \cite{Lein_Mantoiu_Richard:anisotropic_mag_pseudo:2009} for details in case $0 < \delta < \rho \leq 1$. 
	
	As $\Hoerrdan{m} \subseteq \Hoerrd{m}$, Theorem~2.2 of \cite{Iftimie_Mantiou_Purice:magnetic_psido:2006} ensures that $f \magWel g$ exists as an oscillatory integral and is in Hörmander class $\Hoerrd{m_1 + m_2}$ for $f \in \Hoerrdan{m_1} \subseteq \Hoerrd{m_1}$ and $g \in \Hoerrdan{m_2} \subseteq \Hoerrd{m_2}$. 
	
	The product is also in the proper anisotropic symbol class, \ie for all $\xi \in {\R^d}^*$, we have $\bigl ( \partial_x^a \partial_{\xi}^{\alpha} (f \magWel g) \bigr )(\cdot , \xi) \in \Alg$: 
	if we write out $\partial_x^a \partial_{\xi}^{\alpha} (f \magWel g)$ for arbitrary $a , \alpha \in \N_0^d$ using equation~\eqref{asymptotics:expansions:eqn:Fourier_form_magnetic_Weyl_product}, we get a sum of terms of the type 
	\begin{align*}
		\frac{1}{(2\pi)^{2d}} \int \dd Y \, \int \dd Z \, &\partial_x^b \partial_{\xi}^{\alpha} e^{i \sigma(X,Y+Z)} \,  e^{i \tfrac{\eps}{2} \, \sigma(Y,Z)} 
		\cdot \notag \\ & \cdot 
		\partial_x^c \oBel \bigl ( x - \tfrac{\eps}{2} (y+z),x + \tfrac{\eps}{2} (y-z) , x + \tfrac{\eps}{2} (y+z) \bigr ) 
		\cdot \notag \\ & \cdot 
		\bigl ( \Fs^{-1} f \bigr )(Y) \, \bigl ( \Fs^{-1} g \bigr )(Z)
		, 
	\end{align*}
	where $a = b + c$. As in the proof of Lemma~\ref{appendix:asymptotics:existence_osc_int:lem:remainder}, we can rewrite this expression as an absolutely convergent integral. By assumption on $B$, $B_{jk} \in \Alg^{\infty}$ for all $j,k = 1 , \ldots , d$, and with the help of Propositions~\ref{psiDO_reloaded:PsiD_theory:anisotropic_symbol_classes:prop:properties_anisotropic_symbol_classes} and Lemma~\ref{psiDO_reloaded:PsiD_theory:magnetic_composition:lem:properties_of_flux}, we can use Dominated Convergence to conclude that $\partial_x^a \partial_{\xi}^{\alpha} (f \magBel g)$ is indeed anistropic in $x$. 
\end{proof}
\begin{thm}[Two-parameter expansion of $\magW$]\label{psiDO_reloaded:PsiD_theory:magnetic_composition:thm:2_parameter_expansion_composition_of_anisotropic_symbols}
	Assume the components of $B$ belong to $\Alg^{\infty}$ and $\eps \ll 1$, $\lambda \leq 1$. Then for each precision~$\eprec > 0$, $f \in \Hoerran{m_1}$ and $g \in \Hoerran{m_2}$, $\rho \in [0,1]$, the asymptotic two-parameter expansion of $f \magWel g$ as given by Theorem~\ref{asymptotics:thm:asymptotic_expansion} exists,\index{Weyl product!asymptotic expansion in $\eps$ and $\lambda$} 
	\begin{align}
		f \magWel g = \sum_{n = 0}^N \sum_{k = 0}^n \eps^n \lambda^k \, (f \magWel g)_{(n,k)} + R_N, 
	\end{align}
	where $N \equiv N(\eprec,\eps,\lambda)$ is deduced from Definition~\ref{asymptotics:defn:precision}, 
	\begin{align}
		(f \magWel g)_{(n,k)} \in \Hoerran{m_1 + m_2 - (n+k) \rho} 
	\end{align}
	for all $n \in \N_0$, $0 \leq k \leq n$ and the remainder is also in the proper anisotropic symbol class, $R_N \in \Hoerran{m_1 + m_2 - (N+1) \rho}$. 
\end{thm}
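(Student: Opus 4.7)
The plan is to combine the asymptotic expansion machinery from Theorem~\ref{asymptotics:thm:asymptotic_expansion} with the anisotropy-preservation argument used in the proof of Theorem~\ref{psiDO_reloaded:PsiD_theory:magnetic_composition:thm:composition_of_anisotropic_symbols}. Since $\Hoerran{m_j} \subseteq \Hoerr{m_j}$ for $j = 1,2$, Theorem~\ref{asymptotics:thm:asymptotic_expansion} already provides the asymptotic expansion with $(f \magWel g)_{(n,k)} \in \Hoerr{m_1 + m_2 - (n+k)\rho}$ and $R_N \in \Hoerr{m_1 + m_2 - (N+1)\rho}$. What remains to show is that each term of the expansion as well as the remainder actually lies in the smaller, anisotropic symbol class, \ie that $(\partial_x^a \partial_\xi^\alpha (f \magWel g)_{(n,k)})(\xi) \in \Alg$ and $(\partial_x^a \partial_\xi^\alpha R_N)(\xi) \in \Alg$ for all $a,\alpha \in \N_0^d$ and $\xi \in {\R^d}^*$.

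First I would address the $(n,k)$ terms. Recall from Theorem~\ref{asymptotics:thm:asymptotic_expansion} that $(f \magWel g)_{(n,k)}$ is explicitly given as a finite linear combination of products of derivatives of $f$, derivatives of $g$, and derivatives of the coefficients $C_{j,\alpha,\beta} \partial_x^{\alpha+\beta} B_{kl}$ appearing in the operators $\mathcal{L}_j$, all evaluated on the diagonal $Y = X = Z$. By assumption the components $B_{kl}$ belong to $\Alg^{\infty}$, which by Lemma~\ref{psiDO_reloaded:magB_anisotropic_symbols:anisotropic_symbol_spaces:lem:coincidence_of_spaces} means all their partial derivatives lie in $\Alg$; by Lemma~\ref{psiDO_reloaded:magB_anisotropic_symbols:lem:equality_symbol_classes} the same is true for $f$ and $g$. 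Since $\Alg$ is closed under pointwise multiplication and under taking further $x$- and $\xi$-derivatives (item (iii) of Proposition~\ref{psiDO_reloaded:PsiD_theory:anisotropic_symbol_classes:prop:properties_anisotropic_symbol_classes}), each $(n,k)$ term is anisotropic. Combined with the membership in $\Hoerr{m_1 + m_2 - (n+k)\rho}$ furnished by Theorem~\ref{asymptotics:thm:asymptotic_expansion}, Lemma~\ref{psiDO_reloaded:magB_anisotropic_symbols:lem:equality_symbol_classes} then yields $(f \magWel g)_{(n,k)} \in \Hoerran{m_1 + m_2 - (n+k)\rho}$.

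The main obstacle is the remainder. Here I would mimic the oscillatory integral argument from the proof of Theorem~\ref{psiDO_reloaded:PsiD_theory:magnetic_composition:thm:composition_of_anisotropic_symbols}. The remainder $R_N$ from Theorem~\ref{asymptotics:thm:asymptotic_expansion} is a sum of two contributions: one coming from the Taylor remainder of the exponential (equation~\eqref{asymptotics:eqn:remainder1}), and one stemming from the remainder in the expansion of $\gBe$ (equation~\eqref{asymptotics:eqn:remainder2}). Both are defined as oscillatory integrals of products of derivatives of $\Fs f$, $\Fs g$, powers of $\gBe$, and $\oBel$. For each $a,\alpha \in \N_0^d$, I would differentiate $R_N$ under the oscillatory integral sign: the derivatives hit either the exponential factors (producing extra $p$ and $q$ factors which can be reabsorbed via the integration-by-parts trick yielding absolutely convergent integrals, as in Lemma~\ref{appendix:asymptotics:existence_osc_int:lem:remainder}), the symbols $\Fs f$ and $\Fs g$ (whose $x$-derivatives stay in $\Alg$), or the magnetic flux factor $\oBel$ (whose $x$-derivatives also remain in $\Alg$ by Lemma~\ref{psiDO_reloaded:PsiD_theory:magnetic_composition:lem:properties_of_flux}).

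After regularization the integrand is absolutely convergent with integrable majorant (uniformly in $x$ after the polynomial bound of Lemma~\ref{psiDO_reloaded:PsiD_theory:magnetic_composition:lem:properties_of_flux}(iii) has been beaten by sufficiently many integration-by-parts steps), so Dominated Convergence applied to a sequence $x_n \to x_0$ in $\R^d$ shows that $(\partial_x^a \partial_\xi^\alpha R_N)(\cdot,\xi)$ is the uniform limit of $\Alg$-valued functions, hence an element of $\Alg$ by closedness. Combined with the Hörmander-class bound $R_N \in \Hoerr{m_1 + m_2 - (N+1)\rho}$ already supplied by Theorem~\ref{asymptotics:thm:asymptotic_expansion} and the characterization~\eqref{psiDO_reloaded:magB_anisotropic_symbols:eqn:equality_symbol_classes} of Lemma~\ref{psiDO_reloaded:magB_anisotropic_symbols:lem:equality_symbol_classes}, this places $R_N$ in $\Hoerran{m_1 + m_2 - (N+1)\rho}$, finishing the proof.
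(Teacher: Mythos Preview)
Your treatment of the $(n,k)$ terms matches the paper's exactly: write each term as a finite sum of products $B_{nk}^{a\alpha b\beta}(x)\,\partial_x^a\partial_\xi^\alpha f(x,\xi)\,\partial_x^b\partial_\xi^\beta g(x,\xi)$ with $B_{nk}^{a\alpha b\beta}\in\Alg^\infty$, so anisotropy follows from the algebra structure of $\Alg$.

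Where you diverge is the remainder. You propose to redo the oscillatory integral analysis of Theorem~\ref{psiDO_reloaded:PsiD_theory:magnetic_composition:thm:composition_of_anisotropic_symbols} directly on $R_N$, tracking that the $x$-dependence of the regularized integrand stays $\Alg$-valued and invoking completeness. This works, but it is more labor than needed. The paper instead observes that
\[
R_N \;=\; f\magWel g \;-\; \sum_{n=0}^N\sum_{k=0}^n \eps^n\lambda^k\,(f\magWel g)_{(n,k)},
\]
and the right-hand side is a finite linear combination of elements already known to lie in $\Hoerran{m_1+m_2}$: the full product by Theorem~\ref{psiDO_reloaded:PsiD_theory:magnetic_composition:thm:composition_of_anisotropic_symbols}, and each $(n,k)$ term by the first step. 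Since $\Hoerran{m_1+m_2}$ is a vector space (Proposition~\ref{psiDO_reloaded:PsiD_theory:anisotropic_symbol_classes:prop:properties_anisotropic_symbol_classes}), $R_N\in\Hoerran{m_1+m_2}$. Theorem~\ref{asymptotics:thm:asymptotic_expansion} already gives $R_N\in\Hoerr{m_1+m_2-(N+1)\rho}$, and the intersection
\[
\Hoerr{m_1+m_2-(N+1)\rho}\cap\Hoerran{m_1+m_2}=\Hoerran{m_1+m_2-(N+1)\rho}
\]
(immediate from Lemma~\ref{psiDO_reloaded:magB_anisotropic_symbols:lem:equality_symbol_classes}) finishes the argument in one line. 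Your route is correct but pays twice for the same estimate; the paper's subtraction trick recycles Theorem~\ref{psiDO_reloaded:PsiD_theory:magnetic_composition:thm:composition_of_anisotropic_symbols} instead.
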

\begin{proof}
	As before, we note that we have already proven the existence of the expansion as a symbol in $\Hoerr{m_1 + m_2}$ (Theorem~\ref{asymptotics:thm:equivalenceProduct}) where 
	\begin{align*}
		(f \magWel g)_{(n,k)} \in \Hoerr{m_1 + m_2 - (n+k) \rho}
	\end{align*}
	and
	\begin{align*}
		R_N \in \Hoerr{m_1 + m_2 - (N+1) \rho} 
		. 
	\end{align*}
	It remains to show that these objects are really in the proper \emph{anisotropic} symbol class. Let us start with the $(n,k)$ term of the expansion: it can be written as a sum of terms of the type 
	\begin{align*}
		B_{n k}^{a \alpha b \beta}(x) \, (\partial_x^a \partial_{\xi}^{\alpha} f)(x,\xi) \, (\partial_x^b \partial_{\xi}^{\beta} g)(x,\xi)
	\end{align*}
	for suitable multiindices $a , \alpha , b , \beta \in \N_0^d$ where $B_{n k}^{a \alpha b \beta} \in \Alg^{\infty}$. To see this, we combine Lemma~\ref{appendix:asymptotics:expansion_twister:lem:expansion_twister} and Lemma~\ref{psiDO_reloaded:PsiD_theory:magnetic_composition:lem:properties_of_flux}. Then $x \mapsto B_{n k}^{a \alpha b \beta}(x) \, (\partial_x^a \partial_{\xi}^{\alpha} f)(x,\xi) \, (\partial_x^b \partial_{\xi}^{\beta} g)(x,\xi) \in \Alg$ follows as claimed. 
	
	The remainder can be written as 
	\begin{align*}
		R_N = f \magWel g - \sum_{n = 0}^N \sum_{k = 0}^n \eps^n \lambda^k \, (f \magWel g)_{(n,k)} \in \Hoerr{m_1 + m_2 - (N+1) \rho}
		. 
	\end{align*}
	Combining Proposition~\ref{psiDO_reloaded:PsiD_theory:anisotropic_symbol_classes:prop:properties_anisotropic_symbol_classes} and  Theorem~\ref{psiDO_reloaded:PsiD_theory:magnetic_composition:thm:composition_of_anisotropic_symbols} with what we have shown in the previous paragraph, we conclude that the remainder must also be in the proper \emph{anisotropic} symbol class, 
	\begin{align*}
		\Hoerr{m_1 + m_2 - (N+1) \rho} \cap \Hoerran{m_1 + m_2} = \Hoerran{m_1 + m_2 - (N+1) \rho} 
		. 
	\end{align*}
	This concludes the proof. 
\end{proof}
In the same way, we can show that the two one-parameter expansions also maps anisotropic symbols onto anisotropic symbols. Since there are no contributions for $k > n$, an immediate corollary is the expansion with respect to $\eps \ll 1$ only: 
\begin{cor}[$\eps$ expansion of $\magW$]
	Under the assumptions of Theorem~\ref{psiDO_reloaded:PsiD_theory:magnetic_composition:thm:2_parameter_expansion_composition_of_anisotropic_symbols}, we can expand the product $f \magWel g$ in powers of $\eps \ll 1$ even if $\lambda = 1$: for each $N \in \N_0$, we can expand the the product\index{Weyl product!asymptotic expansion in $\eps$} of $f \in \Hoerran{m_1}$ and $g \in \Hoerran{m_2}$ as 
	\begin{align*}
		f \magW_{\eps} g = \sum_{n = 0}^N \eps^n (f \magW_{\eps} g)_{(n)} + R_N
	\end{align*}
	where $(f \magW_{\eps} g)_{(n)} \in \Hoerran{m_1 + m_2 - n \rho}$ and $R_N \in \Hoerran{m_1 + m_2 - (N+1) \rho}$. 
\end{cor}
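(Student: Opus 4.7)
The plan is to deduce the one-parameter $\eps$-expansion directly from the two-parameter expansion in Theorem~\ref{psiDO_reloaded:PsiD_theory:magnetic_composition:thm:2_parameter_expansion_composition_of_anisotropic_symbols} by fixing $\lambda=1$ and regrouping terms according to powers of $\eps$. Since at each order $n$ the two-parameter expansion involves only the finite sum $\sum_{k=0}^n (f\magW_{\eps,\lambda} g)_{(n,k)}$, the collapse $\lambda=1$ produces a well-defined $n$th order term
\begin{align*}
(f \magW_{\eps} g)_{(n)} := \sum_{k=0}^n (f \magW_{\eps,1} g)_{(n,k)} ,
\end{align*}
which is a finite (hence convergent) sum in the Fréchet topology of the ambient symbol class.

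First, I would note that each summand satisfies $(f\magW_{\eps,1}g)_{(n,k)} \in \Hoerran{m_1+m_2-(n+k)\rho}$ by Theorem~\ref{psiDO_reloaded:PsiD_theory:magnetic_composition:thm:2_parameter_expansion_composition_of_anisotropic_symbols}, and since the ordering of anisotropic Hörmander classes gives $\Hoerran{m-(n+k)\rho} \subseteq \Hoerran{m-n\rho}$ for $k\geq 0$, the finite sum $(f\magW_{\eps}g)_{(n)}$ lies in $\Hoerran{m_1+m_2-n\rho}$ as claimed. The remainder $R_N$ is defined by
\begin{align*}
R_N := f\magW_{\eps} g - \sum_{n=0}^N \eps^n (f\magW_{\eps} g)_{(n)} ,
\end{align*}
and membership $R_N \in \Hoerran{m_1+m_2-(N+1)\rho}$ is inherited directly from the corresponding statement of Theorem~\ref{psiDO_reloaded:PsiD_theory:magnetic_composition:thm:2_parameter_expansion_composition_of_anisotropic_symbols}.

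The main obstacle is that the two-parameter expansion was originally phrased in terms of the precision $\eprec$ with $N=N(\eprec,\eps,\lambda)$ chosen so that both $\eps^{N+1}<\eprec$ and $\lambda^{N+1}<\eprec$; this second condition fails for $\lambda=1$. One therefore has to separate the \emph{symbol-class} content of that theorem (which is an algebraic statement about where the remainder lives) from its \emph{numerical-smallness} content (which belongs to the precision framework of Definition~\ref{asymptotics:defn:precision}). Inspecting the proof of Theorem~\ref{asymptotics:thm:asymptotic_expansion}, the symbol-class assignment of each $(n,k)$ term and of the remainder $\tilde{R}_N$ depends only on counting $p$s and $q$s in the Taylor expansion of the twister $e^{i\frac{\eps}{2}\sigma(Y,Z)}e^{-i\lambda\gBe}$, together with Lemma~\ref{appendix:asymptotics:existence_osc_int:lem:remainder}; these counts are independent of the size of $\lambda$, which merely serves as a bookkeeping parameter once $\eps$ has been fixed as the expansion variable.

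Once this observation is in place, the proof is essentially a two-line argument: apply the two-parameter result with $\lambda=1$, regroup, and use the nesting of anisotropic Hörmander classes together with Proposition~\ref{psiDO_reloaded:PsiD_theory:anisotropic_symbol_classes:prop:properties_anisotropic_symbol_classes}(i) to guarantee that the finite sums and the remainder remain anisotropic. The anisotropy is preserved at every step because Theorem~\ref{psiDO_reloaded:PsiD_theory:magnetic_composition:thm:composition_of_anisotropic_symbols} ensures $f\magW_{\eps} g \in \Hoerran{m_1+m_2}$, and subtracting finitely many elements of $\Hoerran{m_1+m_2-n\rho} \subseteq \Hoerran{m_1+m_2}$ stays within $\Hoerran{m_1+m_2}$.
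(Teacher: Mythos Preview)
Your proposal is correct and follows exactly the route the paper indicates: the text simply remarks ``Since there are no contributions for $k > n$, an immediate corollary is the expansion with respect to $\eps \ll 1$ only,'' and states the result without further proof. Your write-up is a faithful (and more careful) elaboration of this one-line justification, including the useful observation that the symbol-class assignments in the underlying argument are insensitive to the size of $\lambda$.
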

The expansion in $\lambda$ follows from completely analogous arguments. 
\begin{thm}[$\lambda$ expansion of $\magW$]
	Assume the components of the magnetic field $B$ are of class $\Alg^{\infty}$; then for $\lambda \ll 1$ and $\eps \leq 1$, we can expand $\magWel$ of two anisotropic Hörmander symbols $f \in \Hoerran{m_1}$ and $g \in \Hoerran{m_2}$ asymptotically in $\lambda$\index{Weyl product!asymptotic expansion in $\lambda$} such that 
	\begin{align}
		f \magWel g - \sum_{k = 0}^N \lambda^k (f \magWel g)_{(k)} &\in \Hoerran{m_1 + m_2 - 2(N+1) \rho} 
		, 
		\notag \\
		(f \magWel g)_{(k)} &\in \Hoerran{m_1 + m_2 - 2k \rho} 
		. 
	\end{align}
\end{thm}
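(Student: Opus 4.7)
The plan is to piggyback on Theorem~\ref{asymptotics:thm:lambda_expansion}, which already provides the expansion
\begin{align*}
    f \magWel g = \sum_{k = 0}^N \lambda^k \, (f \magWel g)_{(k)} + R_N
\end{align*}
together with the memberships $(f \magWel g)_{(k)} \in \Hoerr{m_1 + m_2 - 2k \rho}$ and $R_N \in \Hoerr{m_1 + m_2 - 2(N+1)\rho}$ in the non-anisotropic Hörmander classes. What remains is to upgrade these memberships to the anisotropic classes $\Hoerran{\cdot}$, i.e., to check that the relevant functions are $\Alg$-valued in $x$ after arbitrary derivatives. The strategy is entirely parallel to the proof of Theorem~\ref{psiDO_reloaded:PsiD_theory:magnetic_composition:thm:2_parameter_expansion_composition_of_anisotropic_symbols}.

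First I would treat the individual terms $(f \magWel g)_{(k)}$. The explicit formula~\eqref{asymptotics:eqn:lambdaExpansion:kth_term} expresses each such term as an absolutely convergent oscillatory integral whose integrand contains the $k$-fold product $\prod_{m=1}^{k}\Bte_{l_m j_m}(x,y,z)$, and each factor $\Bte_{lj}$ is manufactured from $B_{lj} \in \Alg^{\infty}$ by translation and integration over a compact parameter set. Since $\Alg$ is translation-invariant and norm-closed, all derivatives $\partial_x^a \partial_y^b \partial_z^c \Bte_{lj}(\cdot,y,z)$ lie in $\Alg$, and the polynomial bounds in $y,z$ are supplied by the obvious analog of Lemma~\ref{psiDO_reloaded:PsiD_theory:magnetic_composition:lem:properties_of_flux}. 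Differentiating $(f \magWel g)_{(k)}$ under the integral sign (justified by Dominated Convergence exactly as in the proof of Theorem~\ref{psiDO_reloaded:PsiD_theory:magnetic_composition:thm:composition_of_anisotropic_symbols}) and using Leibniz's rule together with $\bigl(\partial_x^a \partial_\xi^\alpha f\bigr)(\cdot,\xi), \bigl(\partial_x^b \partial_\xi^\beta g\bigr)(\cdot,\xi) \in \Alg$ then yields $\bigl(\partial_x^a \partial_\xi^\alpha (f \magWel g)_{(k)}\bigr)(\cdot,\xi) \in \Alg$, i.e., $(f \magWel g)_{(k)} \in \Hoerran{m_1 + m_2 - 2k \rho}$ by Lemma~\ref{psiDO_reloaded:magB_anisotropic_symbols:lem:equality_symbol_classes}.

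For the remainder, the cleanest route is a subtraction argument. Theorem~\ref{psiDO_reloaded:PsiD_theory:magnetic_composition:thm:composition_of_anisotropic_symbols} gives $f \magWel g \in \Hoerran{m_1 + m_2}$, and the previous step gives each $\lambda^k (f \magWel g)_{(k)} \in \Hoerran{m_1 + m_2}$, so
\begin{align*}
    R_N = f \magWel g - \sum_{k=0}^N \lambda^k (f \magWel g)_{(k)} \in \Hoerran{m_1 + m_2}.
\end{align*}
Intersecting this with the bound $R_N \in \Hoerr{m_1 + m_2 - 2(N+1)\rho}$ already supplied by Theorem~\ref{asymptotics:thm:lambda_expansion}, and invoking Lemma~\ref{psiDO_reloaded:magB_anisotropic_symbols:lem:equality_symbol_classes} to identify $\Hoerran{m_1+m_2} \cap \Hoerr{m_1 + m_2 - 2(N+1)\rho} = \Hoerran{m_1 + m_2 - 2(N+1)\rho}$, finishes the proof.

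The main obstacle I anticipate is the bookkeeping needed to pass derivatives through the oscillatory integral defining each $(f \magWel g)_{(k)}$: after converting the $y$- and $z$-factors into $\eta$- and $\zeta$-derivatives of $f$ and $g$, one must verify that everything still decays fast enough in momentum for Dominated Convergence to apply uniformly in $x$. This is, however, precisely the count of $p$s and $q$s carried out for Lemma~\ref{appendix:asymptotics:existence_osc_int:lem:remainder} and reused in Theorem~\ref{psiDO_reloaded:PsiD_theory:magnetic_composition:thm:composition_of_anisotropic_symbols}, so no genuinely new analytic input beyond the hypothesis $B_{lj} \in \Alg^{\infty}$ is required.
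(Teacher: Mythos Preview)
Your proposal is correct and follows essentially the same approach as the paper, which merely remarks that ``the expansion in $\lambda$ follows from completely analogous arguments'' to Theorem~\ref{psiDO_reloaded:PsiD_theory:magnetic_composition:thm:2_parameter_expansion_composition_of_anisotropic_symbols}. You have correctly identified the one adaptation needed: unlike the $(n,k)$ terms of the two-parameter expansion (which are local, pointwise products), the $\lambda$-expansion terms $(f \magWel g)_{(k)}$ are still oscillatory integrals carrying the factor $e^{i\frac{\eps}{2}\sigma(Y,Z)}$, so their anisotropy must be checked via the Dominated-Convergence argument of Theorem~\ref{psiDO_reloaded:PsiD_theory:magnetic_composition:thm:composition_of_anisotropic_symbols} rather than by Leibniz alone; your subtraction argument for the remainder is exactly the paper's.
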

%



\section{Relevant $C^*$-algebras} 
\label{psiDO_reloaded:relevant_cStar_algebras}
The algebra of bounded operators $\BopL$ often serves as a starting point for analyzing observables of interest. In Chapter~\ref{algebraicPOV:generalized_weyl_calculus}, we have shown how to extend $\OpA : \sFXprodR \longrightarrow \BopL$ and $\Rep^A : \sXprodR \longrightarrow \BopL$, $\sXprodR := \Alg \rtimes^{\omega^B}_{\theta,\nicefrac{1}{2}} \R^d$, to take values in all of the bounded operators. Alternatively, we can argue from the pseudodifferential point of view: if we require the magnetic field $B$ to have components in $\BCont^{\infty}(\R^d)$, we can choose a vector potential $A$ with components in $\Cpol^{\infty}(\R^d)$. In \cite[Proposition~5]{Mantoiu_Purice:magnetic_Weyl_calculus:2004} it has been shown how to extend $\OpA$ to a linear topological isomorphism $\Schwartz'(\R^d) \longrightarrow \mathcal{B} \bigl ( \Schwartz(\R^d) , \Schwartz'(\R^d) \bigr )$. Since $\BopL$ is continuously embedded in $\mathcal{B} \bigl ( \Schwartz(\R^d) , \Schwartz'(\R^d) \bigr )$, we can define 
\begin{defn}
	For any vector potential $A$ with components in $\Cont^{\infty}_{\mathrm{pol}}(\R^d)$ associated to $B$ with components in $\BCont^{\infty}(\R^d)$, we set
	\begin{align*}
		\AB := {\OpA}^{-1} \Bigl ( \BopL \Bigr ) 
		.
	\end{align*}
\end{defn}
It is obviously a vector subspace of $\mathcal{S}'(\Pspace)$ which only depends on the magnetic field by gauge covariance. On convenient subsets, for example on elements of $\AB$ which are also in the magnetic Moyal algebra $\mathcal{M}^B(\Pspace)$ (see~\cite[Section~II.D]{Mantoiu_Purice:magnetic_Weyl_calculus:2004}),  $\AB \cap \mathcal{M}^B(\Pspace)$, the transported product from $\BopL$ coincides with $\magW$, and taking the adjoint in $\BopL$ corresponds to the involution ${}^{\magW}$ (which reduces to complex conjugation in Weyl ordering). Endowed with the transported norm
\begin{align}
	\bnorm{f}_B \equiv \bnorm{f}_{\AB} := \bnorm{\OpA(f)}_{\mathcal{B}(\Hil)}
\end{align}
$\AB$ is a $C^*$-algebra.

With this notation and due to the inclusion $\Hoermrd{m}{\rho}{\delta} \subset \Hoermrd{m}{\delta}{\delta}$ for $\delta<\rho$, Theorem~\ref{magWQ:important_results:continuity_and_selfadjointness:thm:magnetic_Calderon_Vaillancourt} can be rephrased:
\begin{prop}\label{psiDO_reloaded:relevant_cStar_algebras:thm:continuous_embedding}
	For any $0 \leq \delta \leq \rho \leq 1$ with $\delta \neq 1$, $\Hoerrd{0} \hookrightarrow \AB$ can be continuously embedded in $\AB$. 
\end{prop}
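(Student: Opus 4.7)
The plan is to reduce the claim to the magnetic Caldéron-Vaillancourt theorem (Theorem~\ref{magWQ:important_results:continuity_and_selfadjointness:thm:magnetic_Calderon_Vaillancourt}) via the inclusion of symbol classes already invoked in the paragraph preceding the statement. First I would observe that $\Op^A$ has been extended by duality in Proposition~\ref{magWQ:extension:prop:extension_Op_A} to a linear topological isomorphism $\Schwartz'(\Pspace) \longrightarrow \mathcal{L}\bigl(\Schwartz(\R^d),\Schwartz'(\R^d)\bigr)$, and that $\mathcal{B}\bigl(L^2(\R^d)\bigr)$ embeds continuously into the latter. Consequently the restriction of $\Op^A$ to ${\Op^A}^{-1}\bigl(\mathcal{B}(L^2(\R^d))\bigr) = \AB$ is by construction an isometric bijection onto $\mathcal{B}\bigl(L^2(\R^d)\bigr)$ (with the transported norm $\snorm{\cdot}_{B}$), so it suffices to exhibit a continuous map $\Hoerrd{0} \longrightarrow \mathcal{B}\bigl(L^2(\R^d)\bigr)$, $f \mapsto \Op^A(f)$, and to check injectivity.

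For the continuity, since $\delta \leq \rho$ and $\delta < 1$, we have $\Hoerrd{0} \subseteq \Hoermrd{0}{\delta}{\delta}$ as a continuous embedding of Fréchet spaces (each seminorm on the right is dominated by the corresponding seminorm on the left because $\sexpval{\xi}^{-\delta|\alpha|} \geq \sexpval{\xi}^{-\rho|\alpha|}$ uniformly in $\xi$). Then the magnetic Caldéron-Vaillancourt theorem applies to the right-hand class and provides constants $C(d)$ and $p(d)$ such that
\begin{align*}
	\bnorm{\Op^A(f)}_{\mathcal{B}(L^2(\R^d))} \leq C(d) \, \max_{\sabs{a},\sabs{\alpha} \leq p(d)} \snorm{f}_{0,a\alpha}
\end{align*}
for every $f \in \Hoermrd{0}{\delta}{\delta}$. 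Composing these two estimates yields a continuous linear map $\Hoerrd{0} \ni f \mapsto \Op^A(f) \in \mathcal{B}\bigl(L^2(\R^d)\bigr)$ whose range lies, by definition, in $\Op^A(\AB)$, and hence a continuous linear map $\Hoerrd{0} \longrightarrow \AB$ once the target is equipped with $\snorm{\cdot}_{B}$.

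Injectivity is automatic: since $\Hoerrd{0} \subset \Schwartz'(\Pspace)$ and $\Op^A$ is a bijection on all of $\Schwartz'(\Pspace)$ by Proposition~\ref{magWQ:extension:prop:extension_Op_A}, the restriction to $\Hoerrd{0}$ is injective. Therefore the inclusion $\Hoerrd{0} \hookrightarrow \AB$ is a continuous embedding, as required.

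No essential difficulty is expected; the statement is really a packaging result. The only point that deserves care is the case $\rho = \delta$, where the inclusion step is vacuous but one must still invoke the $\rho = \delta < 1$ branch of the magnetic Caldéron-Vaillancourt theorem directly. The hypothesis $\delta \neq 1$ is used precisely to stay inside the range covered by that theorem.
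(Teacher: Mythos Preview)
Your proposal is correct and follows exactly the route the paper indicates: the paper does not give a standalone proof but simply states in the sentence preceding the proposition that, via the inclusion $\Hoermrd{m}{\rho}{\delta} \subset \Hoermrd{m}{\delta}{\delta}$ for $\delta < \rho$, the statement is a rephrasing of the magnetic Caldéron--Vaillancourt theorem. You have faithfully unpacked this one-line remark, including the injectivity observation and the $\rho = \delta$ boundary case.
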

We now define two $\Alg$-dependent $C^*$-subalgebras of $\AB$.
\begin{defn}\label{psiDO_reloaded:relevant_cStar_algebras:defn:Cstar_subalgebras}
	We set
	\begin{enumerate}[(i)]
		\item $\fxb$ for the $C^*$-subalgebra of $\AB$ generated by 
		\begin{align*}
			\mathcal{S}(\syspace) \equiv \Hoeran{-\infty} := \bigcap_{m \in \R} \Hoerrdan{m} 
			. 
		\end{align*}
		\item $\MBA$ for the $C^*$-subalgebra of $\,\AB$ generated by $\Hoermrdan{0}{0}{0}$.
	\end{enumerate}
\end{defn}
%
The algebra $\fxb$ is really $\sFXprodR$ in disguise. 
%
\begin{thm}\label{psiDO_reloaded:relevant_cStar_algebras:thm:description_FXprod_via_S}
	The algebra $\fxb$ coincides with $\sFXprodR$. In other words, the partial Fourier transform $\Fourier^{-1} : \Schwartz'(\R^d \times \R^d) \longrightarrow \Schwartz'(\R^d \times {\R^d}^*)$ restricts to a $C^*$-isomorphism $\Fourier^{-1} : \sXprodR \longrightarrow \fxb$.
\end{thm}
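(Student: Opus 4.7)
The strategy is to identify $\fxb$ with (the image inside $\AB$ of) $\sFXprodR$ by establishing two facts: (i)~$\Hoeran{-\infty} \subseteq \sFXprodR$ as subspaces of $\Schwartz'(\Pspace)$, with the $\sFXprodR$-norm agreeing with the $\AB$-norm on this subspace; and (ii)~$\Hoeran{-\infty}$ is dense in $\sFXprodR$. Together these imply
\[
\fxb \;=\; \overline{\Hoeran{-\infty}}^{\snorm{\cdot}_{\AB}} \;=\; \sFXprodR,
\]
which, after transporting by $\Fourier^{-1}$, is the desired $C^*$-isomorphism.

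For the coincidence of norms in (i), I would appeal to the faithfulness of the Schrödinger representation $\Rep^A$ of $\sXprodR$ on $L^2(\R^d)$ (Proposition~\ref{algebraicPOV:twisted_crossed_products:prop:basic_properties_rep}(iii)). Faithful $\ast$-morphisms of $C^*$-algebras are isometric, so the universal $C^*$-norm of $\sXprodR$ coincides with $\snorm{\Rep^A(\cdot)}_{\mathcal{B}(L^2(\R^d))}$. Since $\OpA = \Rep^A \circ \Fourier$, transporting by the partial Fourier transform gives $\snorm{f}_{\sFXprodR} = \snorm{\OpA(f)}_{\mathcal{B}(L^2(\R^d))} = \snorm{f}_{\AB}$ for every $f \in \sFXprodR$, so that $\OpA(\sFXprodR)$ is already norm-closed in $\AB$. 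The inclusion $\Hoeran{-\infty} \subseteq \sFXprodR$ then amounts to checking that if $f \in \Hoeran{-\infty}$ then $g := \Fourier^{-1} f$ belongs to $L^1(\R^d;\Alg) \subseteq \sXprodR$. Because $f$ is Schwartz in $\xi$ uniformly in $x \in \R^d$ with $f(\cdot,\xi) \in \Alg$ for every $\xi$, the partial inverse Fourier transform can be realized as a Bochner integral in $\Alg$, yielding $g(\cdot,y) \in \Alg$ for every $y$ and Schwartz decay of $\snorm{g(\cdot,y)}_\Alg$ in $y$.

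The density in (ii) is the main analytical point. It suffices to show density of $\Fourier \, \Hoeran{-\infty}$ in $L^1(\R^d;\Alg)$ in the $L^1$-norm, since the latter dominates the universal $C^*$-norm of $\sXprodR$ and $L^1(\R^d;\Alg)$ is by construction dense in $\sXprodR$. A standard Bochner-space argument gives density of the algebraic tensor product $\Schwartz(\R^d) \odot \Alg$ in $L^1(\R^d;\Alg)$, and since $\Alg^\infty$ is dense in $\Alg$ we may further restrict to $\Schwartz(\R^d) \odot \Alg^\infty$. Any simple tensor $\psi \otimes \varphi$ in this subspace is the partial Fourier transform of $h(x,\xi) := \varphi(x)\,(\Fourier\psi)(\xi)$, and Lemma~\ref{psiDO_reloaded:magB_anisotropic_symbols:lem:equality_symbol_classes} immediately yields $h \in \Hoeran{-\infty}$, since $\partial_x^a \varphi \in \Alg$ for every $a$ and $\Fourier \psi \in \Schwartz({\R^d}^*)$.

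The main obstacle I anticipate is the careful reconciliation in (i) of the pointwise-in-$x$ conditions defining the anisotropic symbol spaces $\Hoermrdan{m}{\rho}{\delta}$ with the Banach-space-valued integrability underlying $L^1(\R^d;\Alg)$; this is straightforward but requires a little care with Bochner integrals, uniform Schwartz decay in $\xi$, and the upgrade from pointwise values in $\Alg$ to genuinely $\Alg$-valued measurable functions. Once this verification is in place, the remainder of the argument is a standard density-and-isometry play.
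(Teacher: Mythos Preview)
Your proposal is correct and follows essentially the same route as the paper: identify a common dense $\ast$-subalgebra on both sides (the paper uses $\Schwartz(\R^d;\Alg^{\infty}) \cong \Schwartz(\syspace)$ under the partial Fourier transform, you reach it via simple tensors $\Schwartz(\R^d) \odot \Alg^\infty$) and then take closures. The one point you make explicit that the paper leaves implicit is the coincidence of the $\AB$-norm with the $\sFXprodR$-norm on this subspace via faithfulness of the Schr\"odinger representation (Proposition~\ref{algebraicPOV:twisted_crossed_products:prop:basic_properties_rep}(iii)); this is indeed needed for the two completions to agree, and your argument for it is correct.
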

\begin{proof}
	The partial Fourier transform $\Fourier^{-1}$ is an isomorphism between $\Schwartz(\R^d;\Alg^{\infty})$ and $\Schwartz(\syspace)$ which intertwines the products and the involutions:
	\begin{align*}
		(\Fourier^{-1} f) \magW (\Fourier^{-1} g) = \Fourier^{-1} \bigl ( f \crossB g \bigr ) 
		, && 
		(\Fourier^{-1} f)^{\magW} = \Fourier^{-1} (f^{\crossB}) 
		. 
	\end{align*}
	The statement follows then from the density of $\Schwartz(\syspace)$ in $\fxb$, and from the density of $\Schwartz(\R^d;\Alg^{\infty})$ in $L^1(\Ralg)$, and hence also in $\sXprodR = \Alg \rtimes^{\omega^B}_{\theta,\nicefrac{1}{2}} \R^d$. This means, $\fxb$ coincides with $\sFXprodR$ as claimed. 
\end{proof}
\begin{remark}
	In Definition \ref{psiDO_reloaded:relevant_cStar_algebras:defn:Cstar_subalgebras}, the algebra $\sFXprodR$ was defined as a closure of a set of smooth elements, but it can easily be guessed that non-smooth elements also belong to $\sFXprodR$, especially in the $\R^d$-variable. The partial Fourier transform of the elements in $L^1(\Ralg)$ belong to $\sFXprodR$, and these elements do not necessarily possess any smoothness except continuity.
\end{remark}
\begin{remark}
	By Lemma A.4 of \cite{Mantoiu_Purice_Richard:Cstar_algebraic_framework:2007}, for any $m < 0$ the set $\Fourier^{-1} \bigl ( \Hoermrdan{m}{1}{0} \bigr )$ is contained in $L^1(\Ralg)$, which implies that $\Hoermrdan{m}{1}{0} \subset \sFXprodR$. A trivial extension of the same lemma to arbitrary $\delta$ implies that $\Fourier^{-1} \bigl ( \Hoermrdan{-0}{1}{\delta} \bigr )$ is also contained in $L^1(\Ralg)$. But by a remark in \cite[p.~17]{Amrein_Boutet_Georgescu:C0_groups_commutator_methods:1996} such an inclusion is no longer true for $\rho \neq 1$. It follows that for $0 \leq \delta \leq \rho < 1$ many elements of $\Fourier^{-1} \bigl ( \Hoerrdan{-0} \bigr )$ only belong to $\CBA \setminus L^1(\Ralg)$.
\end{remark}
We finally state a result about how the algebra $\sFXprodR$ can be generated from a simpler set of its elements. It is an adaptation of \cite[Prop.~2.6]{Mantoiu_Purice_Richard:twisted_X_products:2004}.
\begin{prop}\label{psiDO_reloaded:relevant_cStar_algebras:prop:another_description_FXprod}
	The norm closure in $\AB$ of the subspaces generated by
	\begin{align*}
		\bigl \{ \varphi \magW f \; \vert \; \varphi \in \Alg, \, f \in \Schwartz({\R^d}^*) \bigr \} 
		&& 
		\mbox{and} 
		&&
		\bigl \{ f \magW \varphi \; \vert \; \varphi \in \Alg, \, f \in \Schwartz({\R^d}^*) \bigr \}
	\end{align*}
	are equal and coincide with the $C^*$-algebra $\sFXprodR$.
\end{prop}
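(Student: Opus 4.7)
The plan is to adapt the proof of Proposition~2.6 in \cite{Mantoiu_Purice_Richard:twisted_X_products:2004} for the non-magnetic case to the present magnetic setting, making essential use of Theorem~\ref{psiDO_reloaded:relevant_cStar_algebras:thm:description_FXprod_via_S}, which identifies $\sFXprodR$ with the image of the twisted crossed product $\sXprodR$ under the partial inverse Fourier transform. I would transfer the problem to $\sXprodR$: for $\varphi \in \Alg$ and $f \in \Schwartz({\R^d}^*)$ one has $\varphi \magW f = \Fourier^{-1}\bigl(\varphi \crossB (\Fourier f)\bigr)$ (and similarly for the other ordering), where $\varphi$ is viewed as an element of the multiplier algebra $\mathcal{M}(\sXprodR)$ via the constant-in-$x$ $\Alg$-valued function $\varphi$, and $\Fourier f \in \Schwartz(\R^d) \subset L^1(\R^d) \cap L^{\infty}(\R^d)$ is regarded as a scalar $\Alg$-valued function. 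So the problem reduces to showing that products of the form $\varphi \crossB g$ and $g \crossB \varphi$ with $\varphi \in \Alg$ and $g \in \Schwartz(\R^d)$ generate $\sXprodR$ as a $C^*$-algebra.

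The equality of the two closures would then follow by applying the involution ${}^{\crossB}$: one has $(\varphi \crossB g)^{\crossB} = g^{\crossB} \crossB \varphi^{\crossB}$, and since both $\Alg$ and $\Schwartz(\R^d)$ are stable under the corresponding involutions, each of the two generating sets is mapped into the other, giving the same norm closure in $\sXprodR$.

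The main step is to show that this closure is all of $\sXprodR$. By definition $\sXprodR$ is the $\norm{\cdot}_{\sXprodR}$-closure of $L^1(\R^d;\Alg)$, and simple tensors $x \mapsto \sum_{j=1}^{n} \varphi_j \, g_j(x)$ with $\varphi_j \in \Alg$ and $g_j \in \Schwartz(\R^d)$ form an $L^1$-dense subspace of $L^1(\R^d;\Alg)$. Hence it is enough to approximate such elementary tensors in the crossed-product norm by finite sums of the form $\varphi_j \crossB \tilde{g}_j$. A direct computation from the explicit formula for $\crossB$ shows that $\varphi \crossB g$ differs from the pointwise product $(\varphi \cdot g)(x) := \varphi \, g(x)$ only by terms involving translates $\theta_y[\varphi]$ and the $2$-cocycle $\omega^B$ integrated against $g$; these discrepancies tend to zero in $L^1(\R^d;\Alg)$ when $g$ is replaced by a family $g_\delta$ localized near the origin (an approximate unit in $L^1(\R^d)$), using norm-continuity of $y \mapsto \theta_y[\varphi]$ and strict continuity of $\omega^B$ together with the normalization $\omega^B(q;0,\cdot) = 1 = \omega^B(q;\cdot,0)$. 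This shows every elementary tensor, hence every element of $L^1(\R^d;\Alg)$, lies in the closure.

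The main obstacle will be making this approximation step quantitative: one must control both the magnetic twist $\omega^B$ and the $\theta$-action simultaneously in the $\sXprodR$-norm (not merely the $L^1$-norm), and show that the errors produced by the twist vanish as $g_\delta$ concentrates at $0 \in \R^d$. The key technical inputs are the strict continuity of $\omega^B$ as a $\mathcal{U}(\Alg)$-valued $2$-cocycle, the translation-continuity of $\theta$ on $\Alg$, and the elementary estimate $\snorm{F}_{\sXprodR} \leq \snorm{F}_{L^1(\R^d;\Alg)}$, which upgrades $L^1$-approximation to norm-approximation. Once this approximation lemma is established, the rest of the argument is routine density combined with the isomorphism of Theorem~\ref{psiDO_reloaded:relevant_cStar_algebras:thm:description_FXprod_via_S} to transport the statement back to $\sFXprodR$.
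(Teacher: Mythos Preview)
Your proposal is correct in outline and matches what the paper indicates: the paper does not give its own proof but simply states that the result ``is an adaptation of \cite[Prop.~2.6]{Mantoiu_Purice_Richard:twisted_X_products:2004}''. Your plan to transfer everything to $\sXprodR$ via Theorem~\ref{psiDO_reloaded:relevant_cStar_algebras:thm:description_FXprod_via_S}, use the involution for the equality of the two closures, and then prove density in $L^1(\R^d;\Alg)$ (upgrading via $\snorm{\cdot}_{\sXprodR}\le\snorm{\cdot}_{L^1}$) is exactly the intended adaptation.

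One imprecision in your density sketch is worth flagging. You argue that $\varphi\crossB g_\delta$ approximates the elementary tensor $\varphi\cdot g_\delta$ in $L^1(\R^d;\Alg)$ when $g_\delta$ is localized near $0$, and then conclude ``this shows every elementary tensor \dots\ lies in the closure''. But elementary tensors $\varphi\cdot g_\delta$ with $g_\delta$ concentrated at the origin do not span a dense subspace of $L^1(\R^d;\Alg)$; you need bumps centred at arbitrary points. The computation $(\varphi\crossB g)(x)=\theta_{-x/2}[\varphi]\,g(x)$ (using $\omega^B(0,\cdot)=1$) shows that for $g$ supported near $x_0$ one has $\varphi\crossB g\approx \theta_{-x_0/2}[\varphi]\otimes g$, not $\varphi\otimes g$. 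The fix is immediate: given $\psi\otimes h$ with $h\in\Cont_c(\R^d)$, choose a partition $h=\sum_k h_k$ with $\mathrm{supp}\,h_k$ contained in a ball of radius $\epsilon$ around $x_k$, set $\varphi_k:=\theta_{x_k/2}[\psi]\in\Alg$ (using translation invariance of $\Alg$), and estimate $\bnorm{\sum_k\varphi_k\crossB h_k-\psi\otimes h}_{L^1}$ by $\sup_{|y|\le\epsilon/2}\snorm{\theta_y[\psi]-\psi}_{\Alg}\cdot\snorm{h}_{L^1}$, which is small by norm-continuity of $\theta$. This is presumably what you had in mind when invoking ``norm-continuity of $y\mapsto\theta_y[\varphi]$'', but the step should be made explicit.
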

It is also easily observed that the $\Hoerrdan{-\infty} \equiv \mathcal{S}(\syspace)$ is really independent of $\rho$ and $\delta$. Part of our interest in the algebra $\sFXprodR$ is due to the following proposition and its corollary. We define
\begin{align*}
	\Hoerrdan{-0} := \bigcup_{m < 0} \Hoerrdan{m}
	. 
\end{align*}
Suitable bounded sequences in $\Hoerrdan{-0}$ in general only converge in the topology of $\Hoerrdan{0}$; this is the crucial point in the proof of the next proposition. 
\begin{prop}\label{psiDO_reloaded:relevant_cStar_algebras:prop:Sminus_in_fxb}
	For every $0 \leq \delta \leq \rho \leq 1$ with $\delta \neq 1$, the space $\Hoerrdan{-0}$ is contained in $\sFXprodR$.
\end{prop}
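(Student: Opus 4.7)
Pick any $f \in \Hoerrdan{m}$ with $m < 0$. The idea is to truncate $f$ in the $\xi$-variable with a smooth cutoff, produce approximants lying in the dense generating subspace $\Hoerrdan{-\infty} = \Schwartz(\syspace)$ of $\sFXprodR$, and then show these approximants converge to $f$ in the $\AB$-norm. More precisely, fix $\chi \in \Cont^{\infty}_c({\R^d}^*)$ with $\chi \equiv 1$ on a neighbourhood of $0$, and define
\begin{align*}
    f_n(x,\xi) := \chi(\xi/n) \, f(x,\xi), \qquad n \in \N .
\end{align*}
For each fixed $\xi$, $\chi(\xi/n)$ is a scalar and $f(\cdot,\xi) \in \Alg^{\infty}$, so $f_n(\cdot,\xi) \in \Alg^{\infty}$. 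Since $\chi(\cdot/n)$ has compact support, all $\xi$-derivatives of $f_n$ decay faster than any inverse polynomial in $\sexpval{\xi}$ uniformly in $x$, whence $f_n \in \Hoerrdan{-\infty} \subseteq \sFXprodR$ by Definition~\ref{psiDO_reloaded:relevant_cStar_algebras:defn:Cstar_subalgebras} and Theorem~\ref{psiDO_reloaded:relevant_cStar_algebras:thm:description_FXprod_via_S}.

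Next I would estimate $f - f_n = (1-\chi(\cdot/n)) \, f$ in the Fréchet topology of $\Hoerrd{0}$. By the Leibniz rule,
\begin{align*}
    \partial_x^a \partial_{\xi}^{\alpha} \bigl ( (1-\chi(\xi/n)) \, f(x,\xi) \bigr ) = \sum_{\beta \leq \alpha} \binom{\alpha}{\beta} \, \partial_{\xi}^{\alpha - \beta} \bigl ( 1-\chi(\xi/n) \bigr ) \, \partial_x^a \partial_{\xi}^{\beta} f(x,\xi) .
\end{align*}
The term $\beta = \alpha$ is supported in $\{ \abs{\xi} \geq c n \}$ for some $c > 0$, where the factor $\sexpval{\xi}^m$ is bounded by $\sexpval{cn}^m \to 0$ since $m < 0$; after absorbing the weight $\sexpval{\xi}^{\rho \abs{\alpha} - \delta \abs{a}}$ this yields a contribution that vanishes as $n \to \infty$. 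For $\beta < \alpha$ one gets a factor $n^{-\abs{\alpha-\beta}} \, (\partial^{\alpha-\beta}\chi)(\xi/n)$, supported in $\{ \abs{\xi} \leq C n \}$ and bounded by $n^{-\abs{\alpha-\beta}}$; the resulting contribution to the $(a,\alpha)$-seminorm of $\Hoerrd{0}$ is of size $n^{-\abs{\alpha-\beta}} \, (Cn)^{\max(0, m + \rho(\abs{\alpha}-\abs{\beta}))}$, which is at most $n^{\,m + (\abs{\alpha}-\abs{\beta})(\rho-1)}$ and therefore also tends to zero since $m < 0$ and $\rho \leq 1$. Consequently, $\snorm{f - f_n}_{0, a \alpha} \xrightarrow{n \to \infty} 0$ for every $a, \alpha \in \N_0^d$.

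To conclude, invoke Proposition~\ref{psiDO_reloaded:relevant_cStar_algebras:thm:continuous_embedding}: the embedding $\Hoerrd{0} \hookrightarrow \AB$ is continuous, so $\snorm{f - f_n}_B \to 0$. Because $\sFXprodR$ is a closed subspace of $\AB$ containing every $f_n$, this forces $f \in \sFXprodR$. Taking the union over $m < 0$ gives $\Hoerrdan{-0} \subseteq \sFXprodR$, as claimed. The main technical point is the uniform-in-$n$ control of the Leibniz terms with $\beta < \alpha$, where one must exploit the full strength of $\rho \leq 1$ together with $m < 0$ to avoid a blow-up from the $\sexpval{\xi}^{m + \rho(\abs{\alpha}-\abs{\beta})}$ factor on the support of $\chi(\cdot/n)$; the restriction $\delta \neq 1$ enters only indirectly, through the applicability of Proposition~\ref{psiDO_reloaded:relevant_cStar_algebras:thm:continuous_embedding}.
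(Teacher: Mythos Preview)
Your proof is correct and follows essentially the same route as the paper's: approximate $f \in \Hoerrdan{m}$, $m<0$, by $\chi(\epsilon\,\cdot\,)f \in \Schwartz(\syspace)$, show convergence in the $\Hoerrd{0}$-seminorms, and then invoke the continuous embedding $\Hoerrd{0}\hookrightarrow\AB$ to conclude. The only cosmetic differences are that the paper takes $\chi\in\Schwartz({\R^d}^*)$ rather than $\Cont^{\infty}_c$ and outsources the Leibniz estimate to references, while you spell it out explicitly.
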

\begin{proof}
	We adapt the proof of Proposition~1.1.11 in \cite{Hoermander:Fourier_integral_operators_1:1971} to show that any symbol $f \in \Hoerrdan{-0}$ is the limit of a sequence $ \bigl \{ f_{\epsilon} \bigr \}_{0 \leq \epsilon \leq 1} \in \mathcal{S}(\syspace)$ in the topology of $\Hoerrdan{-0}$, see also \cite[Sec.~1]{Grigis_Sjoestrand:microlocal_analysis:1994} for more details. This and Proposition~\ref{magWQ:important_results:continuity_and_selfadjointness:thm:magnetic_Calderon_Vaillancourt} will imply the result.
	
	Let $f \in \Hoerrdan{m}$ for some $m < 0$, $0 \leq \delta \leq \rho \leq 1$, $\delta \neq 1$, and let $\chi \in \mathcal{S}({\R^d}^*)$ with $\chi(0) = 1$. We set $f_{\epsilon}(x,\xi):= \chi(\epsilon \xi) \, f(x,\xi)$ for $0 \leq \epsilon \leq 1$. By using Proposition~\ref{psiDO_reloaded:PsiD_theory:anisotropic_symbol_classes:prop:properties_anisotropic_symbol_classes}~(ii), one has $f_{\epsilon} \in \mathcal{S}(\syspace)$ for all $\epsilon > 0$, and $\bigl \{ f_{\epsilon} \bigr \}_{0 \leq \epsilon \leq 1}$ is a bounded subset of $\Hoerrdan{m}$. Finally, one easily obtains that $f_{\epsilon}$ converges to $f$ as $\epsilon \rightarrow 0$ in the topology of $\Hoerrdan{0}$.
\end{proof}
\begin{remark}
	The same proof shows the density of $\mathcal{S}(\syspace)$ in $\Hoerrdan{m}$ with respect to the topology of $\Hoerrdan{m'}$ for arbitrary $m'>m$.
\end{remark}
\begin{cor}\label{psiDO_reloaded:relevant_cStar_algebras:cor:MBA_contained_mult_fxb}
	The $C^*$-algebra $\MBA$ is contained in the multiplier algebra $\mathcal{M}(\sFXprodR)$ of $\sFXprodR$.
\end{cor}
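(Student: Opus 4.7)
The plan is to check directly the defining property of the multiplier algebra: every element of $\MBA$, acting by left and right magnetic multiplication, sends $\sFXprodR$ to itself. Since both $\MBA$ and $\sFXprodR$ sit inside the $C^*$-algebra $\AB$, the product $\magW$ is jointly norm-continuous on $\MBA \times \sFXprodR$. Consequently it is enough to verify the inclusions $m \magW g, g \magW m \in \sFXprodR$ on a dense subset of generators of each factor: take $m$ in the generating set $\Hoermrdan{0}{0}{0}$ of $\MBA$ (cf.\ Definition~\ref{psiDO_reloaded:relevant_cStar_algebras:defn:Cstar_subalgebras}) and $g$ in the generating set $\Schwartz(\syspace) = \Hoeran{-\infty}$ of $\sFXprodR$ (cf.\ Theorem~\ref{psiDO_reloaded:relevant_cStar_algebras:thm:description_FXprod_via_S}).

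The key step is an application of the composition theorem for anisotropic Hörmander classes, Theorem~\ref{psiDO_reloaded:PsiD_theory:magnetic_composition:thm:composition_of_anisotropic_symbols}, with the parameters $(\rho,\delta) = (0,0)$ (which is the admissible degenerate case). For $m \in \Hoermrdan{0}{0}{0}$ and $g \in \Hoermrdan{m'}{0}{0}$ with arbitrary $m' \in \R$, the theorem yields
\begin{equation*}
    m \magW g, \; g \magW m \in \Hoermrdan{m'}{0}{0}.
\end{equation*}
In particular, if $g \in \Schwartz(\syspace) = \bigcap_{m' \in \R}\Hoermrdan{m'}{0}{0}$, then both products lie in $\Hoermrdan{m'}{0}{0}$ for every $m' \in \R$, hence in $\Hoermrdan{-0}{0}{0}$. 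The latter is contained in $\sFXprodR$ by Proposition~\ref{psiDO_reloaded:relevant_cStar_algebras:prop:Sminus_in_fxb} (applied with $\rho=\delta=0$, so $\delta \neq 1$ is satisfied).

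It remains to propagate from generators to the full algebras. On the $\MBA$-side, a limit $m_n \to m$ in $\AB$-norm with $m_n \in \Hoermrdan{0}{0}{0}$ gives $m_n \magW g \to m \magW g$ in $\AB$-norm for fixed $g$, and since $\sFXprodR$ is closed in $\AB$, the limit lies in $\sFXprodR$. On the $\sFXprodR$-side, approximate $g \in \sFXprodR$ in $\AB$-norm by Schwartz functions (using density of $\Schwartz(\syspace)$ in $\sFXprodR$ granted by Definition~\ref{psiDO_reloaded:relevant_cStar_algebras:defn:Cstar_subalgebras} and the fact that $\Schwartz(\syspace)$ is already $\magW$-stable and $*$-stable) and invoke norm-continuity of $\magW$ once more. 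The symmetric argument handles $g \magW m$, which exhibits $m$ as a two-sided multiplier.

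The only delicate point is the use of the composition theorem at the endpoint $(\rho,\delta) = (0,0)$, where the asymptotic expansion of $\magW$ does not produce gains in $\xi$-decay; but Theorem~\ref{psiDO_reloaded:PsiD_theory:magnetic_composition:thm:composition_of_anisotropic_symbols} is explicitly stated to include this case, so no additional work is needed. Everything else is formal manipulation with density and continuity in the $C^*$-algebra $\AB$.
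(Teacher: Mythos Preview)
Your proof is correct and follows essentially the same route as the paper's: establish that $\Schwartz(\syspace)$ is a two-sided ideal in $\Hoermrdan{0}{0}{0}$ via the composition theorem, then extend by density and norm-continuity of $\magW$ in $\AB$. One small remark: since you show $m \magW g \in \Hoermrdan{m'}{0}{0}$ for \emph{every} $m' \in \R$, you already have $m \magW g \in \bigcap_{m'} \Hoermrdan{m'}{0}{0} = \Schwartz(\syspace) \subset \sFXprodR$ directly, so the detour through $\Hoermrdan{-0}{0}{0}$ and Proposition~\ref{psiDO_reloaded:relevant_cStar_algebras:prop:Sminus_in_fxb} is unnecessary.
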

\begin{proof}
	The statement follows from the fact that $\mathcal{S}(\syspace)$ is a two-sided ideal in $\Hoermrdan{0}{0}{0}$ with respect to $\magWel$, from the definition of $\sFXprodR$ and $\MBA$, and a density argument.
\end{proof}
\begin{remark}
	Let us observe that $\Fourier \Calg^B_{\C} = \Cont_{\infty}({\R^d}^*)$ and $\mathfrak{M}^B_{\C} = \BCont_u ({\R^d}^*)$, while $\mathcal{M}(\Fourier \Calg^B_{\C}) = \BCont({\R^d}^*)$; so the inclusion could be strict. 
\end{remark}
%


\section{Inversion and affiliation} 
\label{psiDO_reloaded:inversion_and_affiliation}
%
Before we proceed, we need to quote some basic facts on so-called $\Psi^*$-algebras which are special $C^*$-subalgebras with Fréchet structure. The results are borrowed from \cite{Lauter:operator_theoretical_approach_melrose_algebras:1998} and some additional material can be found in \cite{Lauter_Monthubert_Nistor:spectral_invariance:2005}. 
\begin{defn}[$\Psi^*$-algebra\index{$\Psi^*$-algebra}]\label{psiDO_reloaded:inversion_affiliation:defn:Psi_star_algebra}
	Let $\Psi$ be a unital $C^*$-subalgebra of a $C^*$-algebra $\Alg$. We say that $\Psi$ is a $\Psi^*$-algebra if it is \emph{spectrally invariant} (or full), \ie 
	\begin{align*}
		\Psi \cap \Alg^{-1} = \Psi^{-1} 
	\end{align*}
	where $\Alg^{-1}$ and $\Psi^{-1}$ are the groups of invertible elements of $\Alg$ and $\Psi$, repsectively, and if $\Psi$ can be endowed with a Fréchet topology $\tau_{\Psi}$ such that $\Psi \hookrightarrow \Alg$ can be continuously embedded in $\Alg$. 
\end{defn}
The reason why $\Psi^*$-algebras are such a useful notion is the following Theorem: 
%
\begin{thm}[Corollary~2.5 in \cite{Lauter:operator_theoretical_approach_melrose_algebras:1998}]
	\label{psiDO_reloaded:inversion_affiliation:thm:Psi_star_sub_also_Psi_star}
	Let $\Psi \subseteq \Alg$ be a $\Psi^*$-algebra and $\Psi' \subseteq \Psi$ be a closed, symmetric subalgebra of $\Psi$ with unit. Then $\Psi' \hookrightarrow \Alg$ endowed with the restricted topology $\tau_{\Psi'} := \tau_{\Psi} \vert_{\Psi'}$ is again a $\Psi^*$-algebra. 
\end{thm}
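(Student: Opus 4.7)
The plan is to verify the three defining properties of a $\Psi^*$-algebra for $\Psi'$: the Fréchet structure together with the continuous embedding into $\Alg$, closure under involution and containment of the unit (which are part of the hypothesis), and spectral invariance $\Psi' \cap \Alg^{-1} = {\Psi'}^{-1}$. The first two are essentially automatic: since $\Psi'$ is by assumption $\tau_\Psi$-closed in the Fréchet space $(\Psi, \tau_\Psi)$, the restriction $\tau_{\Psi'}$ is again a Fréchet topology, and the embedding $\Psi' \hookrightarrow \Alg$ factors as $\Psi' \hookrightarrow \Psi \hookrightarrow \Alg$ and is thus continuous. (One should also check, as a separate bookkeeping step, that $\Psi'$ is norm-closed in $\Alg$, so that it really is a $C^*$-subalgebra; this follows from the hypothesis together with completeness of $\Psi'$ in $\tau_\Psi$.) This leaves spectral invariance as the only substantial content.

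To handle spectral invariance, I would first reduce to elements that are positive in $\Alg$. Given $a \in \Psi'$ with $a^{-1} \in \Alg$, the identity $a^{-1} = a^{\ast}(a a^{\ast})^{-1}$ together with symmetry of $\Psi'$ shows that it suffices to produce $b^{-1}$ inside $\Psi'$ for $b := a a^{\ast} \in \Psi'$. Since $b \geq 0$ in $\Alg$ and is invertible there, $\sigma_\Alg(b)$ is a compact subset of $(0,\infty)$; by spectral invariance of $\Psi$ in $\Alg$, the same holds for the $\Psi$-spectrum. The reduction to a self-adjoint element is essential, because in general $\C \setminus \sigma_\Alg(a)$ need not be connected and the connectedness argument below would fail.

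The main step is a standard Neumann-series-plus-connectedness argument for the resolvent map $z \mapsto (z-b)^{-1}$ on $U := \C \setminus \sigma_\Alg(b)$. Because $\sigma_\Alg(b) \subset \R$, the set $U$ is open and connected, and $0 \in U$. Define
\[
V := \bigl\{ z \in U \;\big\vert\; (z-b)^{-1} \in \Psi' \bigr\}.
\]
The large-$|z|$ Neumann series $-z^{-1}\sum_{n \geq 0}(b/z)^n$ has partial sums visibly in $\Psi'$ and converges in $\tau_\Psi$, so $V$ is non-empty. For openness, pick $z_0 \in V$: the local Neumann expansion $(z-b)^{-1} = \sum_{n\geq 0}(z_0-z)^n (z_0-b)^{-(n+1)}$ exhibits nearby $(z-b)^{-1}$ as a $\tau_\Psi$-limit of polynomials in $(z_0-b)^{-1} \in \Psi'$, and $\tau_\Psi$-closedness of $\Psi'$ puts the limit in $\Psi'$. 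The same closedness argument applied to convergent sequences of spectral parameters shows $V$ is closed in $U$. Connectedness then forces $V = U$, and in particular $-b^{-1} = (0-b)^{-1} \in \Psi'$, which finishes the proof.

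The principal obstacle I anticipate is topological rather than conceptual: one must know that the Neumann series in $\Psi$ converges in $\tau_\Psi$ (not merely in the $\Alg$-norm), and correspondingly that inversion is $\tau_\Psi$-continuous on the open group of invertibles of $\Psi$. These properties are not directly postulated in Definition~\ref{psiDO_reloaded:inversion_affiliation:defn:Psi_star_algebra}, but are standard consequences of the $\Psi^*$-framework: they follow from open-mapping/closed-graph arguments in unital Fréchet $\ast$-algebras together with the spectral invariance already built into $\Psi$. Once those topological underpinnings are granted, the reduction to the positive case plus the connectedness argument sketched above deliver the result cleanly.
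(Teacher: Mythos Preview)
The paper does not prove this theorem; it is quoted from Lauter's paper (Corollary~2.5) as an external input. Your argument is the standard one and is essentially correct: the reduction to a self-adjoint (positive) element, followed by the open--closed--connected argument for the resolvent set $V \subseteq U = \C \setminus \sigma_{\Alg}(b)$, is precisely how this result is proven in the literature. You have also correctly identified the real technical input: that $\Psi^{-1}$ is $\tau_\Psi$-open and inversion is $\tau_\Psi$-continuous (equivalently, that the resolvent $z \mapsto (z-b)^{-1}$ is $\tau_\Psi$-holomorphic on $U$). This does not follow trivially from Definition~\ref{psiDO_reloaded:inversion_affiliation:defn:Psi_star_algebra} alone, but is a theorem of Gramsch--Waelbroeck--Schweitzer for spectrally invariant Fréchet $\ast$-subalgebras of $C^*$-algebras; once granted, both the Neumann-series convergence (for openness and non-emptiness of $V$) and the sequential resolvent convergence (for closedness of $V$) go through as you describe.

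One small caveat: your parenthetical remark that $\Psi'$ must be shown to be norm-closed in $\Alg$ ``so that it really is a $C^*$-subalgebra'' is a red herring. The phrase ``$C^*$-subalgebra'' in Definition~\ref{psiDO_reloaded:inversion_affiliation:defn:Psi_star_algebra} is loosely stated (as the example $\Hoermr{0}{\rho} \hookrightarrow \AB$ makes clear, $\Psi$ is typically \emph{not} norm-closed); the operative structure is that of a unital Fréchet $\ast$-algebra continuously and densely embedded in a $C^*$-algebra, and your argument for norm-closedness from $\tau_\Psi$-completeness does not actually work. This does not affect the validity of your main argument.
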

Furthermore, $\Psi^*$-algebras also have a nice holomorphic functional calculus:\index{holomorphic functional calculus} let $\varphi \in \Psi \subseteq \Alg$ be an element of a $\Psi^*$-algebra and $f : \C \longrightarrow \C$ be a function which is holomorphic in a neighborhood of the spectrum $\sigma(\varphi) := \bigl \{ z \in \C \; \vert \; \varphi - z \, \id \mbox{ is invertible} \bigr \}$. Then 
\begin{align*}
	f(\varphi) := \frac{1}{2 \pi i} \int_{\Gamma} \dd z \, f(z) \, \bigl ( \varphi - z \, \id \bigr )^{-1} \in \Psi 
\end{align*}
is well-defined and again an element of the $\Psi^*$-algebra ($\Gamma$ is a contour surrounding $\sigma(\varphi)$). 
\medskip

\noindent
Recently, Iftimie, Măntoiu and Purice \cite{Iftimie_Mantoiu_Purice:commutator_criteria:2008} have proven that $\Hoermrd{0}{\rho}{0}$ is a $\Psi^*$-algebra in $\AB$ using the magnetic version of the Bony criterion.\index{Bony criterion} Combining this fact with Theorem~\ref{psiDO_reloaded:PsiD_theory:magnetic_composition:thm:composition_of_anisotropic_symbols} immediately yields 
%
\begin{prop}\label{psiDO_reloaded:inversion_affiliation:prop:inversion_of_Hoermander_symbol_Psi_star}
	Let $\Alg$ be a \emph{unital} $\R^d$-algebra; then $\Hoermrdan{0}{\rho}{0}$, $0 \leq \rho \leq 1$, is a $\Psi^*$-algebra, it is stable under the holomorphic functional calculus, $\bigl ( \Hoermrdan{0}{\rho}{0} \bigr )^{(-1)_B}$ is open and the map 
	\begin{align}
		{}^{(-1)_B} : \bigl ( \Hoermrdan{0}{\rho}{0} \bigr )^{(-1)_B} \longrightarrow \Hoermrdan{0}{\rho}{0} , \; 
		 f \mapsto f^{(-1)_B} 
		, 
	\end{align}
	is continuous. Here, we have used the superscript ${}^{(-1)_B}$ for the inverse with respect to the composition law $\magW$.
\end{prop}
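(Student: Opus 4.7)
The plan is to reduce the statement to an application of Theorem~\ref{psiDO_reloaded:inversion_affiliation:thm:Psi_star_sub_also_Psi_star} (Lauter's Corollary~2.5), using the non-anisotropic $\Psi^*$-result of Iftimie--M\u{a}ntoiu--Purice as the ambient $\Psi^*$-algebra. Concretely, the starting point is the fact, quoted just before the proposition, that $\Hoermrd{0}{\rho}{0}$ is a $\Psi^*$-algebra inside $\AB$ with its usual Fréchet topology (the one generated by the seminorms $\snorm{\cdot}_{0,a\alpha}$). All that has to be done is to verify that $\Hoermrdan{0}{\rho}{0}$ sits inside this ambient $\Psi^*$-algebra as a closed, symmetric, unital subalgebra; Lauter's theorem then gives the $\Psi^*$-property automatically, and the three remaining assertions (stability under holomorphic functional calculus, openness of the set of $\magW$-invertibles, continuity of the inversion map) are standard general features of any $\Psi^*$-algebra.

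\textbf{The verification, step by step.} First, $\Hoermrdan{0}{\rho}{0}$ is a closed subspace of $\Hoermrd{0}{\rho}{0}$ in the restricted topology by Proposition~\ref{psiDO_reloaded:PsiD_theory:anisotropic_symbol_classes:prop:properties_anisotropic_symbol_classes}(i). Second, it is a $\magW$-subalgebra: this is the special case $m_1 = m_2 = 0$ of Theorem~\ref{psiDO_reloaded:PsiD_theory:magnetic_composition:thm:composition_of_anisotropic_symbols}, so that $\Hoermrdan{0}{\rho}{0} \magW \Hoermrdan{0}{\rho}{0} \subseteq \Hoermrdan{0}{\rho}{0}$. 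Third, it is symmetric: pointwise complex conjugation is continuous, preserves smoothness and commutes with the derivations $\delta^a$, so it leaves $\Alg$ stable (since $\Alg$ is a $C^*$-subalgebra of $\BCont_u(\R^d)$) and therefore leaves $\Hoermrdan{0}{\rho}{0}$ stable under the involution $f \mapsto f^{\magW} = f^{\ast}$. Fourth, it is unital: because $\Alg$ is assumed unital, the constant symbol $1 \in \Alg^{\infty}$ belongs to $\Hoermrdan{0}{\rho}{0}$, and by Lemma~\ref{magWQ:extension:lem:magnetic_product_scalar_product} (extended to the Moyal algebra) it acts as the neutral element of $\magW$.

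\textbf{Application and conclusions.} With these four verifications in hand, Theorem~\ref{psiDO_reloaded:inversion_affiliation:thm:Psi_star_sub_also_Psi_star} applies directly: $\Hoermrdan{0}{\rho}{0}$ is a $\Psi^*$-algebra in $\AB$ (its Fréchet topology being the one inherited from $\Hoermrd{0}{\rho}{0}$, equivalently the intrinsic one coming from the seminorms $\snorm{\cdot}_{0,a\alpha}$). Spectral invariance then means exactly that for $f \in \Hoermrdan{0}{\rho}{0}$, the magnetic Moyal inverse $f^{(-1)_B}$ exists in $\AB$ if and only if it exists in $\Hoermrdan{0}{\rho}{0}$, so one can run the Cauchy integral
\[
\varphi(f) := \frac{1}{2\pi i} \int_{\Gamma} \dd z \, \varphi(z) \, (f - z)^{(-1)_B}
\]
inside $\Hoermrdan{0}{\rho}{0}$ for any $\varphi$ holomorphic on a neighborhood of $\sigma(f)$, yielding stability under the holomorphic functional calculus. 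Openness of the invertible set and continuity of ${}^{(-1)_B}$ on it are the standard consequences in any $\Psi^*$-algebra: openness because invertibility in $\AB$ is open and the Fréchet topology is finer than the $\AB$-norm topology, and continuity because it holds in $\AB$ and the inverse automatically lies in $\Hoermrdan{0}{\rho}{0}$ with derivatives controlled via the commutator/Bony estimates already encoded in the $\Psi^*$-structure of $\Hoermrd{0}{\rho}{0}$.

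\textbf{Main obstacle.} There is no genuinely hard analytic step left; the substantive content lies entirely in the two external inputs, namely the $\Psi^*$-property of $\Hoermrd{0}{\rho}{0}$ in $\AB$ (whose proof uses the magnetic Beals/Bony criteria) and the preservation of the anisotropy $\Alg$ under $\magW$ (Theorem~\ref{psiDO_reloaded:PsiD_theory:magnetic_composition:thm:composition_of_anisotropic_symbols}). The only point that deserves a small amount of care is to make sure that the topology on $\Hoermrdan{0}{\rho}{0}$ inherited from $\Hoermrd{0}{\rho}{0}$ really is Fréchet and continuously embeds into $\AB$; but this is immediate from the closedness of $\Hoermrdan{0}{\rho}{0}$ in $\Hoermrd{0}{\rho}{0}$ together with Proposition~\ref{psiDO_reloaded:relevant_cStar_algebras:thm:continuous_embedding}.
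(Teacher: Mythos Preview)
Your proposal is correct and follows exactly the route the paper takes: the paper dispatches this proposition in the single sentence preceding it, namely by combining the $\Psi^*$-property of $\Hoermrd{0}{\rho}{0}$ in $\AB$ (Iftimie--M\u{a}ntoiu--Purice) with the anisotropic composition result Theorem~\ref{psiDO_reloaded:PsiD_theory:magnetic_composition:thm:composition_of_anisotropic_symbols} and then invoking Lauter's Corollary~2.5. You have simply written out the verifications (closedness, subalgebra, symmetric, unital) that the paper leaves implicit.
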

%
In order to state the next lemma some notations are needed. For $m > 0$, $\lambda>0$ and $\xi \in {\R^d}^*$, set
\begin{align*}
	\mathfrak{p}_{m,\lambda}(\xi) := \expval{\xi}^m + \lambda 
	. 
\end{align*}
As $\Alg$ is unital, the map $\mathfrak{p}_{m,\lambda}$ is clearly an element of $\Hoermrdan{m}{1}{0}$ and its pointwise inverse an element of $\Hoermrdan{m}{1}{0}$. It has been proven in \cite[Thm.~1.8]{Mantoiu_Purice_Richard:Cstar_algebraic_framework:2007} that for $\lambda$ large enough, $\mathfrak{p}_{m,\lambda}$ is invertible with respect to the composition law $\magW$ and that its inverse $\mathfrak{p}_{m,\lambda}^{(-1)_B}$ belongs to $\sFXprodR$. So for any $m > 0$ we can fix $\lambda = \lambda(m)$ such that $\mathfrak{p}_{m,\lambda(m)}$ is invertible. Then, for arbitrary $m \in \R$ we set
\begin{align*}
	\mathfrak{r}_m :=
	\left \{
	\begin{matrix}
		\mathfrak{p}_{m , \lambda(m)} & \mbox{ for } m > 0 \\
		1 & \mbox{ for } m = 0 \\
		\mathfrak{p}_{\abs{m},\lambda(\abs{m})}^{(-1)_B} & \mbox{ for } m < 0 \\
	\end{matrix}
	\right .
	. 
\end{align*}
The relation $\mathfrak{r}_m^{(-1)_B} = \mathfrak{r}_{-m}$ clearly holds for all $m \in \R$. Let us still show an important property of $\mathfrak{r}_m$.
\begin{lem}
	For any $m \in \R$, one has $\mathfrak{r}_m \in \Hoermrdan{m}{1}{0}$.
\end{lem}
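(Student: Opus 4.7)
The proof splits naturally along the sign of $m$. For $m = 0$ the statement is immediate since $\mathfrak{r}_0 = 1$ and $\Alg$ is unital by Assumption~\ref{psiDO_reloaded:assumption:anisotropy_algebra}. For $m > 0$, $\mathfrak{r}_m(\xi) = \expval{\xi}^m + \lambda(m)$ depends only on $\xi$, so $\partial_x^a \mathfrak{r}_m = 0$ for every $|a| \geq 1$ and the zeroth-order term is the unit of $\Alg$ times a scalar function of $\xi$. Combined with the elementary estimate $|\partial_\xi^\alpha \expval{\xi}^m| \leq C_\alpha \expval{\xi}^{m-|\alpha|}$ and Lemma~\ref{psiDO_reloaded:magB_anisotropic_symbols:lem:equality_symbol_classes}, this gives $\mathfrak{r}_m \in \Hoermrdan{m}{1}{0}$ at once.

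The genuine content lies in the case $m < 0$, where $\mathfrak{r}_m = \mathfrak{p}_{|m|,\lambda(|m|)}^{(-1)_B}$ is a Moyal (not pointwise) inverse, and the magnetic twist introduces $x$-dependence through the components of $B$, which are assumed to lie in $\Alg^\infty$. My plan is to proceed in two steps. First, I would apply Theorem~\ref{magWQ:important_results:inversion:thm:inversion_Hoermander}(i) with $F := \mathfrak{p}_{|m|,\lambda(|m|)} \in \Hoermr{|m|}{1}$: its invertibility in the magnetic Moyal algebra and the fact that $\mathfrak{r}_{|m|} \magW \mathfrak{r}_m = 1 \in \mathfrak{C}^B$ are precisely the hypotheses of that theorem (the invertibility itself is the content of \cite[Thm.~1.8]{Mantoiu_Purice_Richard:Cstar_algebraic_framework:2007}, which was used to define $\lambda(m)$). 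This yields $\mathfrak{r}_m \in \Hoermr{m}{1}$ in the \emph{non-anisotropic} Hörmander class.

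Second, I would upgrade to the anisotropic class. The idea is to reduce to order zero where Proposition~\ref{psiDO_reloaded:inversion_affiliation:prop:inversion_of_Hoermander_symbol_Psi_star} applies: form
\begin{align*}
  h := \mathfrak{p}_{|m|,\lambda(|m|)}^{-1} \magW \mathfrak{p}_{|m|,\lambda(|m|)} \in \Hoermrdan{0}{1}{0},
\end{align*}
where the first factor is the \emph{pointwise} inverse (which depends only on $\xi$ and hence sits trivially in $\Hoermrdan{-|m|}{1}{0}$), and the second factor sits in $\Hoermrdan{|m|}{1}{0}$ by the easy case above. By Theorem~\ref{psiDO_reloaded:PsiD_theory:magnetic_composition:thm:composition_of_anisotropic_symbols}, $h \in \Hoermrdan{0}{1}{0}$, and the two-parameter asymptotic expansion (Theorem~\ref{psiDO_reloaded:PsiD_theory:magnetic_composition:thm:2_parameter_expansion_composition_of_anisotropic_symbols}) shows $h = 1 + r$ with $r \in \Hoermrdan{-1}{1}{0}$; in particular $h$ is Moyal-invertible in the $\Psi^*$-algebra $\Hoermrdan{0}{1}{0}$ (possibly after shifting $\lambda(|m|)$ to make $\snorm{r} < 1$, or via holomorphic functional calculus). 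One then reads off
\begin{align*}
  \mathfrak{r}_m = h^{(-1)_B} \magW \mathfrak{p}_{|m|,\lambda(|m|)}^{-1},
\end{align*}
where both factors on the right lie in anisotropic Hörmander classes by Proposition~\ref{psiDO_reloaded:inversion_affiliation:prop:inversion_of_Hoermander_symbol_Psi_star} and the trivial case, and another application of Theorem~\ref{psiDO_reloaded:PsiD_theory:magnetic_composition:thm:composition_of_anisotropic_symbols} delivers $\mathfrak{r}_m \in \Hoermrdan{m}{1}{0}$.

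The main obstacle will be the reduction to order zero: one must ensure that the algebraic factorisation used to descend from order $|m|$ to order $0$ produces an element whose Moyal remainder $r$ is genuinely in the anisotropic class of negative order, which is where the hypothesis that the components of $B$ lie in $\Alg^\infty$ is essential (via the form of the $(n,k)$ terms of $\magW$ computed in Chapter~\ref{asymptotics}, all structure constants involve only derivatives of $B$). Once this is in place, the $\Psi^*$-stability of $\Hoermrdan{0}{1}{0}$ does the heavy lifting and the anisotropic composition theorem propagates the conclusion back to order $m$.
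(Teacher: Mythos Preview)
Your proof is correct and follows essentially the same route as the paper: for $m<0$ both factor $\mathfrak{p}^{(-1)_B}$ through the \emph{pointwise} inverse $\mathfrak{p}^{-1}$ to reduce to an order-zero symbol, invoke the $\Psi^*$-property of $\Hoermrdan{0}{1}{0}$ to invert there, and compose back (the paper writes $\mathfrak{p}^{(-1)_B}=\mathfrak{p}^{-1}\magW(\mathfrak{p}\magW\mathfrak{p}^{-1})^{(-1)_B}$, you swap the factor order). Your preliminary step~1 establishing $\mathfrak{r}_m\in\Hoermr{m}{1}$ non-anisotropically is harmless but redundant, since step~2 already yields the anisotropic conclusion directly; and the invertibility of $h$ in $\AB$ requires no extra argument, as $\lambda(|m|)$ was chosen in \cite[Thm.~1.8]{Mantoiu_Purice_Richard:Cstar_algebraic_framework:2007} precisely so that this holds.
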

\begin{proof}
	For $m \geq 0$, the statement is trivial from the definition of $\mathfrak{r}_m$. But for $m < 0$, we will use ideas from the proof of \cite[Theorem~1.8]{Mantoiu_Purice_Richard:Cstar_algebraic_framework:2007}: for $\lambda$ large enough, $\mathfrak{p} := \mathfrak{p}_{\abs{m},\lambda(\abs{m})}$ has been shown to be invertible with respect to the composition law $\magW$, and that its inverse is given by the formula
	\begin{align}
		\mathfrak{p}^{(-1)_B}= \mathfrak{p}^{-1} \magW \bigl ( \mathfrak{p} \magW \mathfrak{p}^{-1} \bigr )^{(-1)_B}
		, 
		\label{psiDO_reloaded:inversion_affiliation:eqn:inverse_p_m}
	\end{align}
	where $\mathfrak{p}^{-1}$ is the pointwise inverse of $\mathfrak{p}$, and $\lambda$ has been chosen such that the Moyal inverse $\bigl ( \mathfrak{p} \magW \mathfrak{p}^{-1} \bigr )^{(-1)_B}$ is well defined and belongs to $\AB$. Furthermore, since $\mathfrak{p}^{-1}$ belongs to $\Hoermrdan{-m}{1}{0}$, the product $\mathfrak{p} \magW \mathfrak{p}^{-1}$ belongs to $\Hoermrdan{0}{1}{0}$. It then follows that the inverse of $\mathfrak{p} \magW \mathfrak{p}^{-1}$ also belongs to $\Hoermrdan{0}{1}{0}$ by the $\Psi^*$-property of $\Hoermrdan{0}{1}{0}$. One concludes by observing that the right-hand~side of \eqref{psiDO_reloaded:inversion_affiliation:eqn:inverse_p_m} belongs to $\Hoermrdan{-m}{1}{0}$, and corresponds to $\mathfrak{r}_m$ for $m < 0$.
\end{proof}
\begin{prop}\label{psiDO_reloaded:inversion_affiliation:prop:inversion_Hoermander_m}
	Let $m > 0$, $\rho \in [0,1]$ and $f \in \Hoermrdan{m}{\rho}{0}$. If $f$ is invertible in the magnetic Moyal algebra $\mathcal{M}^B(\Pspace)$ and $\mathfrak{r}_m \magW f^{(-1)_B} \in \AB$, then $f^{(-1)_B}$ belongs to $\Hoermrdan{-m}{\rho}{0}$.
\end{prop}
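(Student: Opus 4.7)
The strategy is to reduce the anisotropic statement to the (already known) non-anisotropic inversion result combined with the $\Psi^*$-algebra property of $\Hoermrdan{0}{\rho}{0}$ proved in Proposition~\ref{psiDO_reloaded:inversion_affiliation:prop:inversion_of_Hoermander_symbol_Psi_star}. The key object will be the regularized inverse $g := \mathfrak{r}_m \magW f^{(-1)_B}$, which by hypothesis lives in the ambient $C^*$-algebra $\AB$; the task is to show that $g$ is in fact in $\Hoermrdan{0}{\rho}{0}$, after which one simply multiplies on the left by $\mathfrak{r}_{-m}$ to recover $f^{(-1)_B}$ in the correct anisotropic class.

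First I would invoke Theorem~\ref{magWQ:important_results:inversion:thm:inversion_Hoermander} to conclude that $f^{(-1)_B} \in \Hoermr{-m}{\rho}$ as a (non-anisotropic) H\"ormander symbol; note that the hypotheses of that theorem are exactly what we are assuming. Consequently $g = \mathfrak{r}_m \magW f^{(-1)_B}$ makes sense in $\mathcal{M}^B(\Pspace)$ and, by the (non-anisotropic) composition result, belongs to $\Hoermr{0}{\rho}$; by hypothesis $g \in \AB$ as well. Next, compute the Moyal inverse of $g$ using associativity in $\mathcal{M}^B(\Pspace)$ and $\mathfrak{r}_m^{(-1)_B} = \mathfrak{r}_{-m}$:
\begin{align*}
  h := g^{(-1)_B} = \bigl ( \mathfrak{r}_m \magW f^{(-1)_B} \bigr )^{(-1)_B} = f \magW \mathfrak{r}_{-m}.
\end{align*}
Since $f \in \Hoermrdan{m}{\rho}{0}$ and $\mathfrak{r}_{-m} \in \Hoermrdan{-m}{1}{0} \subseteq \Hoermrdan{-m}{\rho}{0}$, Theorem~\ref{psiDO_reloaded:PsiD_theory:magnetic_composition:thm:composition_of_anisotropic_symbols} yields $h \in \Hoermrdan{0}{\rho}{0}$.

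The crucial step is now to upgrade $g$ from merely an element of $\AB$ to an element of $\Hoermrdan{0}{\rho}{0}$. This is exactly where the $\Psi^*$-property pays off: Proposition~\ref{psiDO_reloaded:inversion_affiliation:prop:inversion_of_Hoermander_symbol_Psi_star} says that $\Hoermrdan{0}{\rho}{0}$ is spectrally invariant in $\AB$. Since $h \in \Hoermrdan{0}{\rho}{0}$ and $h$ is invertible in $\AB$ with inverse $g$ (indeed $h \magW g = f \magW \mathfrak{r}_{-m} \magW \mathfrak{r}_m \magW f^{(-1)_B} = 1$, and similarly on the other side), spectral invariance immediately gives $g \in \Hoermrdan{0}{\rho}{0}$.

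Finally, one writes $f^{(-1)_B} = \mathfrak{r}_{-m} \magW g$ and applies Theorem~\ref{psiDO_reloaded:PsiD_theory:magnetic_composition:thm:composition_of_anisotropic_symbols} once more with $\mathfrak{r}_{-m} \in \Hoermrdan{-m}{1}{0}$ and $g \in \Hoermrdan{0}{\rho}{0}$ to conclude $f^{(-1)_B} \in \Hoermrdan{-m}{\rho}{0}$. The only potential obstacle I anticipate is bookkeeping: making sure all products are associative and well-defined in $\mathcal{M}^B(\Pspace)$ before invoking the spectral invariance, but this is routine given the framework of Chapter~\ref{magWQ:extension:magnetic_moyal_algebra}. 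Essentially all the work has been done in establishing that $\Hoermrdan{0}{\rho}{0}$ is a $\Psi^*$-algebra in $\AB$; the present proposition is really a corollary of that fact together with the anisotropic composition theorem.
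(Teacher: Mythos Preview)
Your proof is correct and follows essentially the same route as the paper's: show $h = f \magW \mathfrak{r}_{-m} \in \Hoermrdan{0}{\rho}{0}$, invoke the $\Psi^*$-property of $\Hoermrdan{0}{\rho}{0}$ in $\AB$ to conclude its inverse $g = \mathfrak{r}_m \magW f^{(-1)_B}$ lies in $\Hoermrdan{0}{\rho}{0}$, then multiply by $\mathfrak{r}_{-m}$. Your initial appeal to Theorem~\ref{magWQ:important_results:inversion:thm:inversion_Hoermander} is actually unnecessary---the paper dispenses with it, since $f^{(-1)_B} \in \mathcal{M}^B(\Pspace)$ and $g \in \AB$ are already hypotheses, and the $\Psi^*$-argument gives $g \in \Hoermrdan{0}{\rho}{0}$ directly without needing to know beforehand that $g \in \Hoermr{0}{\rho}$.
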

\begin{proof}
	Let us first observe that
	\begin{align*}
		f \magW \mathfrak{r}_{-m} \in  \Hoermrdan{m}{\rho}{0} \magW \Hoermrdan{-m}{1}{0} \subset \Hoermrdan{0}{\rho}{0}.
	\end{align*}
	Furthermore, this element is invertible in $\AB$ since its inverse $\bigl ( f \magW \mathfrak{r}_{-m} \bigr )^{(-1)_B}$ is equal to $\mathfrak{r}_m \magW f^{(-1)_B}$, which belongs to $\AB$. Then, by the $\Psi^*$-property of $\Hoermrdan{0}{\rho}{0}$, it follows that $\bigl ( f \magW \mathfrak{r}_{-m} \bigr )^{(-1)_B}$ belongs to $\Hoermrdan{0}{\rho}{0}$, and so does $\mathfrak{r}_m \magW f^{(-1)_B}$. Consequently, one has
	\begin{align*}
		f^{(-1)_B} &= \mathfrak{r}_{-m} \magW \bigl ( \mathfrak{r}_m \magW f^{(-1)_B} \bigr ) 
		= \mathfrak{r}_{-m} \magW \bigl ( f \magW \mathfrak{r}_{-m} \bigr )^{(-1)_B} 
		\\
		&
		\in \Hoermrdan{-m}{\rho}{0} \magW \Hoermrdan{0}{\rho}{0} 
		\subset \Hoermrdan{-m}{\rho}{0} 
		.
	\end{align*}
	This concludes the proof. 
\end{proof}
Elliptic, real-valued symbols are one class of functions which satisfy the assumptions of the above proposition. 
\begin{defn}\label{psiDO_reloaded:inversion_affiliation:defn:elliptic_symbol}
	A symbol $f \in \Hoerrdan{m}$ is called \emph{elliptic} if there exist two strictly positive constants $R$ and $C$ such that 
	\begin{align*} 
		\abs{f(x,\xi)} \geq C \expval{\xi}^m 
	\end{align*} 
	for all $x \in \R^d$ and $\abs{\xi} > R$.
\end{defn}
We are now in a position to state and prove our main theorem on inversion:
\begin{thm}[Inversion\index{Moyal resolvent}]\label{psiDO_reloaded:inversion_affiliation:thm:inversion}
	Let $f \in \Hoermrdan{m}{\rho}{0}$, $0 \leq \rho  \leq 1$, be a real-valued elliptic symbol of order $m > 0$. Then for any $z \in \C \setminus \R$, the function $f - z$ is invertible in the magnetic Moyal algebra $\mathcal{M}^B(\Pspace)$ and its inverse $(f - z)^{(-1)_B}$ belongs to $\Hoermrdan{-m}{\rho}{0}$.
\end{thm}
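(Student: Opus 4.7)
The plan is to verify the two hypotheses of Proposition~\ref{psiDO_reloaded:inversion_affiliation:prop:inversion_Hoermander_m} applied to $f - z$, namely (i) invertibility of $f - z$ in the magnetic Moyal algebra $\mathcal{M}^B(\Pspace)$, and (ii) the bound $\mathfrak{r}_m \magW (f - z)^{(-1)_B} \in \AB$. Once these are established, the proposition automatically yields $(f - z)^{(-1)_B} \in \Hoermrdan{-m}{\rho}{0}$, which is the conclusion. Note that $f - z$ inherits ellipticity and membership in $\Hoermrdan{m}{\rho}{0}$ from $f$, since $z$ is a constant and the leading order in $\xi$ is unaffected for $m > 0$.

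To obtain (i), I would first invoke Theorem~\ref{magWQ:important_results:continuity_and_selfadjointness:thm:selfadjointness_elliptic_symbols}: since $f$ is real-valued, elliptic, and of positive order $m$, the operator $\OpA(f)$ is selfadjoint on $H^m_A(\R^d)$, with $\Schwartz(\R^d)$ as a core. Hence for every $z \in \C \setminus \R$, the resolvent $\bigl(\OpA(f) - z\bigr)^{-1}$ is a bounded operator on $L^2(\R^d)$, so there exists a unique $g_z \in \AB$ with $\OpA(g_z) = \bigl(\OpA(f) - z\bigr)^{-1}$. To promote $g_z$ to an element of $\mathcal{M}^B(\Pspace)$, I would verify that the resolvent preserves $\Schwartz(\R^d)$ and extends continuously to $\Schwartz'(\R^d)$: indeed, iterating the resolvent identity in combination with the mapping properties $\OpA(f)^k : H^{km}_A(\R^d) \to L^2(\R^d)$ and the identification $\Schwartz(\R^d) = \bigcap_k H^{km}_A(\R^d)$ (together with the dual statement) shows that $\bigl(\OpA(f) - z\bigr)^{-1}$ belongs to $\mathcal{L}(\Schwartz(\R^d)) \cap \mathcal{L}(\Schwartz'(\R^d))$. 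By Proposition~\ref{magWQ:extension:magnetic_moyal_algebra:prop:mag_Moyal_algebra_L_S_L_Sprime}, this is exactly the condition for $g_z \in \mathcal{M}^B(\Pspace)$. Composing in the Moyal algebra then yields $g_z \magW (f - z) = 1 = (f - z) \magW g_z$, so $g_z = (f - z)^{(-1)_B}$.

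For (ii), I would use the fact that $\mathfrak{r}_m \in \Hoermrdan{m}{1}{0}$ (by the lemma just preceding Proposition~\ref{psiDO_reloaded:inversion_affiliation:prop:inversion_Hoermander_m}). By Proposition~\ref{magWQ:important_results:continuity_and_selfadjointness:prop:boundedness_psido}, $\OpA(\mathfrak{r}_m)$ defines a bounded operator $H^m_A(\R^d) \to L^2(\R^d)$. On the other hand, selfadjointness of $\OpA(f)$ on $H^m_A(\R^d)$ implies $\bigl(\OpA(f) - z\bigr)^{-1}$ maps $L^2(\R^d)$ continuously into its domain $H^m_A(\R^d)$. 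The composition $\OpA(\mathfrak{r}_m) \bigl(\OpA(f) - z\bigr)^{-1}$ is therefore bounded on $L^2(\R^d)$, equivalently $\mathfrak{r}_m \magW (f - z)^{(-1)_B} \in \AB$.

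The main obstacle I expect is the careful verification that $g_z$ really lies in the Moyal algebra $\mathcal{M}^B(\Pspace)$ — while this is morally clear from elliptic regularity, making the argument clean requires either iterating the resolvent identity through all magnetic Sobolev scales or quoting an analytic regularity result for magnetic pseudodifferential operators (such as the refinement of Proposition~6.31 in \cite{Iftimie_Mantoiu_Purice:commutator_criteria:2008}). Everything else is an application of the tools already assembled: selfadjointness, continuity on Sobolev scales, and Proposition~\ref{psiDO_reloaded:inversion_affiliation:prop:inversion_Hoermander_m}.
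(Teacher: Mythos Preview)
Your plan matches the paper's proof almost exactly: establish selfadjointness of $\OpA(f)$ to get $L^2$-invertibility, upgrade the inverse to the Moyal algebra via Proposition~\ref{magWQ:extension:magnetic_moyal_algebra:prop:mag_Moyal_algebra_L_S_L_Sprime}, verify $\mathfrak{r}_m \magW (f-z)^{(-1)_B} \in \AB$, and then invoke Proposition~\ref{psiDO_reloaded:inversion_affiliation:prop:inversion_Hoermander_m}. For step~(ii) the paper phrases things slightly differently --- it shows $\OpA\bigl((f-z)\magW\mathfrak{r}_{-m}\bigr)$ is a bijection on $L^2$ by citing Theorem~3.1 and Proposition~3.14 of \cite{Iftimie_Mantiou_Purice:magnetic_psido:2006} --- but your direct argument (resolvent maps $L^2\to H^m_A$, then $\OpA(\mathfrak{r}_m):H^m_A\to L^2$) is equivalent and arguably cleaner.

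One caution on step~(i): the identification $\Schwartz(\R^d) = \bigcap_k H^{km}_A(\R^d)$ is \emph{false} --- the intersection $H^\infty_A(\R^d)$ imposes smoothness but no decay, so it strictly contains $\Schwartz(\R^d)$. The Sobolev-scale iteration you sketch therefore only shows the resolvent preserves $H^\infty_A$, not $\Schwartz$. You correctly flag this as the delicate point; the paper's own argument here is brief and essentially appeals to the fact that $f-z\in\mathcal{M}^B(\Pspace)$ makes $\OpA(f-z)$ a continuous bijection on $\Schwartz$ (and dually on $\Schwartz'$), so its set-theoretic inverse is again continuous by the closed graph theorem. Either that route or the regularity result from \cite{Iftimie_Mantoiu_Purice:commutator_criteria:2008} you mention will close the gap.
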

\begin{proof}
	It has been proved in \cite[Thm.~4.1]{Iftimie_Mantiou_Purice:magnetic_psido:2006} that $\OpA(f)$ defines a selfadjoint operator with domain $H^m_A(\R^d)$ (Definition~\ref{magWQ:important_results:continuity_and_selfadjointness:defn:magnetic_Sobolev_space}) for any vector potential $A$ whose components belong to $\Cont^{\infty}_{\mathrm{pol}}(\R^d)$. In particular, $z$ does not belong to the spectrum of $\OpA(f)$, which is independent of $A$ by gauge covariance, and $\OpA(f)-z = \OpA(f-z)$ is invertible. Its inverse belongs to $\BopL$, which means that $(f-z)^{(-1)_B}$ belongs to $\AB$. 
	Hence, the Moyal resolvent $(f-z)^{(-1)_B}$ exists in the magnetic Moyal algebra: by Proposition~\ref{magWQ:extension:magnetic_moyal_algebra:prop:mag_Moyal_algebra_L_S_L_Sprime}, $\Op^A$ is an isomorphism between $\mathcal{M}^B(\Pspace)$ and $\mathcal{L} \bigl ( \Schwartz(\R^d) \bigr ) \cap \mathcal{L} \bigl ( \Schwartz'(\R^d) \bigr )$, and we conclude from $1 , f - z \in \mathcal{M}^B(\Pspace)$ that 
	\begin{align*}
		\Op^A(1) = \Op^A \bigl ( (f-z)^{(-1)_B} \magW (f-z) \bigr ) : \Schwartz^{(\prime)}(\R^d) \longrightarrow \Schwartz^{(\prime)}(\R^d) 
	\end{align*}
	holds. In other words, $\Op^A \bigl ( (f-z)^{(-1)_B} \bigr ) \in \mathcal{L} \bigl ( \Schwartz(\R^d) \bigr ) \cap \mathcal{L} \bigl ( \Schwartz'(\R^d) \bigr )$ and $(f-z)^{(-1)_B}$ exists in $\mathcal{M}^B(\Pspace)$. 
	Furthermore, Theorem~3.1 and Proposition~3.14 of \cite{Iftimie_Mantiou_Purice:magnetic_psido:2006} imply that $\OpA\bigl ( (f-z) \magW \mathfrak{r}_m^{(-1)_B} \bigr )$ is a bijection on $L^2(\R^d)$, and thus $\mathfrak{r}_m \magW (f-z)^{(-1)_B} = \bigl ( (f-z) \magW \mathfrak{r}_m^{(-1)_B} \bigr )^{(-1)_B} \in \AB$. Hence, the assumptions of Proposition~\ref{psiDO_reloaded:inversion_affiliation:prop:inversion_Hoermander_m} are satisfied and we conclude $(f - z)^{(-1)_B} \in \Hoermrdan{-m}{\rho}{0}$ for all $z \in \C \setminus \R$. 
\end{proof}
The preceding theorem implies we can formulate a principle of affiliation: 
\begin{thm}[Principle of affiliation\index{affiliation}]\label{psiDO_reloaded:inversion_affiliation:thm:affiliation}
	For $m > 0$ and $0 \leq \rho \leq 1$, any real-valued elliptic element $f \in \Hoermrdan{m}{\rho}{0}$ defines an observable affiliated to the $C^*$-algebra $\sFXprodR$.
\end{thm}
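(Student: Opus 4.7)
The plan is to construct a morphism $\Phi_f : \Cont_{\infty}(\R) \longrightarrow \sFXprodR$ and verify it is well-defined. The natural candidate is the Moyal functional calculus of $f$: for $\varphi \in \Cont_{\infty}(\R)$, set $\Phi_f(\varphi) := {\Op^A}^{-1}\bigl(\varphi(\Op^A(f))\bigr)$, where $A$ is any vector potential with components in $\Cont^{\infty}_{\mathrm{pol}}(\R^d)$ representing $B$. Since $f$ is real-valued and elliptic of positive order, Theorem~\ref{magWQ:important_results:continuity_and_selfadjointness:thm:selfadjointness_elliptic_symbols} guarantees that $\Op^A(f)$ is selfadjoint on $H^m_A(\R^d)$, so ordinary functional calculus produces a morphism of $\Cont_{\infty}(\R)$ into $\BopL$, and pulling back by ${\Op^A}^{-1}$ yields a morphism into $\AB$. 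By gauge covariance this morphism does not depend on the choice of $A$.

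The nontrivial point is to show that the image of $\Phi_f$ actually lands in the subalgebra $\sFXprodR \subseteq \AB$. By the characterization of affiliated observables through their resolvent family (see the discussion preceding Definition~\ref{algebraicPOV:affiliation:defn:resolvent_family}), it is enough to verify $\Phi_f(r_{z_0}) \in \sFXprodR$ for some single $z_0 \in \C \setminus \R$, since the $\C$-algebra generated by translates of $r_{z_0}$ is dense in $\Cont_{\infty}(\R)$ and $\sFXprodR$ is a norm-closed $\ast$-subalgebra of $\AB$. But $\Phi_f(r_z) = (f - z)^{(-1)_B}$ by construction, and Theorem~\ref{psiDO_reloaded:inversion_affiliation:thm:inversion} provides exactly the needed regularity: $(f - z)^{(-1)_B} \in \Hoermrdan{-m}{\rho}{0}$ for every $z \in \C \setminus \R$.

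Finally, since $m > 0$, one has $\Hoermrdan{-m}{\rho}{0} \subset \Hoerrdan{-0}$, and Proposition~\ref{psiDO_reloaded:relevant_cStar_algebras:prop:Sminus_in_fxb} yields the crucial inclusion $\Hoerrdan{-0} \subset \sFXprodR$. Thus $(f - z)^{(-1)_B} \in \sFXprodR$ for each $z \in \C \setminus \R$, so a fortiori for any fixed $z_0$, and the resolvent criterion implies the entire morphism $\Phi_f$ factors through $\sFXprodR$. Hence $f$ is affiliated to $\sFXprodR$.

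The main obstacle — and it is one that is already resolved by the preceding machinery rather than by anything new — is the transfer from a single resolvent belonging to $\sFXprodR$ to the whole functional calculus landing in $\sFXprodR$. The elegant way around a direct symbol-level approximation argument is to combine two inputs: the analytic fact that elliptic real-valued symbols of positive order have Moyal resolvents that drop two indices (Theorem~\ref{psiDO_reloaded:inversion_affiliation:thm:inversion}, itself a consequence of the $\Psi^*$-property of $\Hoermrdan{0}{\rho}{0}$), and the algebraic fact that symbols of negative order already lie in $\sFXprodR$ (Proposition~\ref{psiDO_reloaded:relevant_cStar_algebras:prop:Sminus_in_fxb}, which uses the density of $\Schwartz(\Pspace)$ in $\Hoermrdan{m}{\rho}{0}$ with respect to a weaker topology). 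The strategy is therefore short once these two building blocks are in place; the content lies entirely in the preceding sections.
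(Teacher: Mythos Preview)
Your proposal is correct and follows essentially the same route as the paper: both arguments hinge on Theorem~\ref{psiDO_reloaded:inversion_affiliation:thm:inversion} to place the Moyal resolvents $(f-z)^{(-1)_B}$ in $\Hoermrdan{-m}{\rho}{0}$, then on Proposition~\ref{psiDO_reloaded:relevant_cStar_algebras:prop:Sminus_in_fxb} to land in $\sFXprodR$, and finally on the Stone--Weierstrass/resolvent extension to produce the full morphism. The only cosmetic difference is that you first construct $\Phi_f$ as a morphism into $\AB$ via Hilbert-space functional calculus of $\Op^A(f)$ and then argue the range lies in the subalgebra, whereas the paper works intrinsically by verifying the resolvent identity for the family $\{(f-z)^{(-1)_B}\}$ directly in $\sFXprodR$ and extending abstractly; both packagings are valid.
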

\begin{proof}
	For $z \in \C\setminus\R$, let us set $r_z:=(\cdot-z)^{-1}$. We also define $\Phi(r_z) := (f-z)^{(-1)_B}$. We first prove that the family $\{\Phi(r_z) \; \vert \; z \in \C \setminus \R \}$ satisfies the resolvent equation. Indeed, for any two complex numbers $z , z' \in \C \setminus \R$, one has
	\begin{align*}
		(f-z)\magW \Phi(r_z)=1 \qquad \mbox{ and }\qquad (f-z')\magW \Phi(r_{z'})=1.
	\end{align*}
	By subtraction, one obtains $(f-z) \magW \bigl ( \Phi(r_z) - \Phi(r_{z'}) \bigr ) + (z' - z) \, \Phi(r_{z'}) = 0$. By multiplying with $\Phi(r_z)$ on the left and using the associativity, one then gets the resolvent equation
	\begin{align*}
		\Phi(r_z) - \Phi(r_{z'}) = (z - z') \, \Phi(r_z) \magW \Phi(r_{z'}) 
		. 
	\end{align*}
	We have thus obtained a map $\C \setminus \R \ni z \mapsto \Phi(r_z) \in \Hoermrdan{-m}{\rho}{0} \subset \sFXprodR$ due to Theorem~\ref{psiDO_reloaded:inversion_affiliation:thm:inversion} and Proposition \ref{psiDO_reloaded:relevant_cStar_algebras:prop:Sminus_in_fxb}. Furthermore, the relation $\Phi(r_z)^{\magW} \equiv \Phi(r_z)^* = \Phi(r_{{z}^*})$ clearly holds. A general argument presented in \cite[p.~364]{Amrein_Boutet_Georgescu:C0_groups_commutator_methods:1996} that once more relies on the Stone-Weierstrass theorem allows to extend the map $\Phi$ to a $C^*$-algebra morphism $\Cont_{\infty}(\R) \longrightarrow \sFXprodR$ in a unique way.
\end{proof}
%

\section{Spectral properties} 
\label{psiDO_reloaded:spectral}
%
The basis for our spectral analysis lies in the $C^*$-algebraic point of view: the starting point is a principle of affiliation for an observable and Chapter~\ref{algebraicPOV:affiliation} gives a brief overview of what is in our toolbox. Some of these tools have no Hilbert space meaning -- which is the main advantage of this point of view. In $C^*$-algebraic terms, the spectrum of an observable $H$ affiliated to a $C^*$-algebra $\Calg$ is defined as 
\begin{align*}
	\sigma(H) := \Bigl \{ \lambda \in \R \; \big \vert \; \forall \varphi \in \Cont_{\infty}(\R) : \varphi(\lambda) \neq 0 \Rightarrow \varphi(H) \neq 0 \Bigr \} 
	. 
\end{align*}
Similarly, if $\Ideal$ is a closed, two-sided idea in $\Calg$, then the $\Ideal$-essential spectrum can be seen to be 
\begin{align*}
	\sigma_{\Ideal}(H) := \sigma \bigl ( \pi_{\Ideal}(H) \bigr ) = \Bigl \{ \lambda \in \R \; \big \vert \; \forall \varphi \in \Cont_{\infty}(\R) : \varphi(\lambda) \neq 0 \Rightarrow \varphi(H) \not\in \Ideal \Bigr \}
	. 
\end{align*}
Of particular interest is $\Cont_{\infty}(\R^d) \rtimes^{\omega^B}_{\theta,\nicefrac{1}{2}} \R^d$ here -- or equivalently, its Fourier transform. In this case, the Schrödinger representation of this algebra is equal to the compact operators $\mathcal{K} \bigl ( L^2(\R^d) \bigr )$ (Theorem~\ref{algebraicPOV:twisted_crossed_products:prop:basic_properties_rep}), and taking the quotient 
\begin{align*}
	\bigl ( {\Alg \rtimes^{\omega^B}_{\theta,\nicefrac{1}{2}} \R^d} \bigr ) / \bigl ( {\Cont_{\infty}(\R^d) \rtimes^{\omega^B}_{\theta,\nicefrac{1}{2}} \R^d} \bigr ) \cong \bigl ( {\Alg} / {\Cont_{\infty}(\R^d)} \bigr ) \rtimes^{\omega^B}_{\theta,\nicefrac{1}{2}} \R^d
\end{align*}
is in some sense equivalent to considering a subalgebra of the so-called Calkin algebra ${\BopL} /$ ${\mathcal{K} \bigl ( L^2(\R^d) \bigr )}$. Intuitively, it is also clear that the essential spectrum is due to the `behavior at infinity' of the operator; in the $C^*$-algebraic framework, we can define these operators `located at infinity' rigorously as quantizations of asymptotic functions which live on quasi orbits. The work is based on the ideas of Măntoiu \cite{Mantoiu:Cstar_algebras_dynamical_systems_at_infinity:2002} as well as Georgescu and Iftimovici \cite{Georgescu_Iftimovici:Cstar_algebras_quantum_hamiltonians:2003}.

\subsection{Families of ideals} 
\label{psiDO_reloaded:spectral:ideals}
The basic idea is to replace the ideal $\Ideal$ with a family of ideals $\{ \Ideal_i \}_{i \in \Index}$ whose intersection is $\Ideal$. The next few bits have been taken from \cite{Mantoiu:Cstar_algebras_dynamical_systems_at_infinity:2002}. 
\begin{defn}[$\Ideal$-sufficient family of ideals\index{$\Ideal$-sufficient family of ideals}]
	Let $\Calg$ be a $C^*$-algebra and $\Ideal$ and ideal of $\Calg$. A family of ideals $\{ \Ideal_i \}_{i \in \Index}$ will be called $\Ideal$-sufficient if 
	\begin{enumerate}[(i)]
		\item $\bigcap_{i \in \Index} \Ideal_i = \Ideal$, 
		\item $\Ideal_i$ is not contained in $\Ideal_j$ if $i \neq j$. 
	\end{enumerate}
	When $\Ideal = \{ 0 \}$, instead of $\{ 0 \}$-sufficient, we simply say sufficient. 
\end{defn}
The next lemma relates an $\Ideal$-sufficient family of ideals $\{ \Ideal_i \}_{i \in \Index}$ with a sufficient family of ideals in ${\Calg} / {\Ideal}$. 
\begin{lem}
	Let $\{ \Ideal_i \}_{i \in \Index}$ be a $\Ideal$-sufficient family of ideals in $\Calg$. Then $\mathfrak{K}_i := {\Ideal_i} / {\Ideal}$ is an ideal in the quotient ${\Calg} / {\Ideal}$ and $\{ \mathfrak{K}_i \}_{i \in \Index}$ is a sufficient family of ideals in ${\Calg} / {\Ideal}$. 
\end{lem}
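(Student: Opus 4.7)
The plan is to verify the three defining properties in sequence: that each $\mathfrak{K}_i$ is a well-defined ideal in $\Calg/\Ideal$, that the intersection $\bigcap_{i \in \Index} \mathfrak{K}_i$ is trivial, and that the family $\{\mathfrak{K}_i\}_{i \in \Index}$ has no redundancies. Throughout, I would let $\pi_{\Ideal} : \Calg \longrightarrow \Calg / \Ideal$ denote the canonical projection and work with representatives via the standard identification $\mathfrak{K}_i = \pi_{\Ideal}(\Ideal_i)$.

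First I would observe that since $\Ideal = \bigcap_{j \in \Index} \Ideal_j \subseteq \Ideal_i$ for every $i$, the quotient $\Ideal_i / \Ideal$ is well-defined and coincides with $\pi_{\Ideal}(\Ideal_i)$. The image of a closed two-sided ideal under a surjective $*$-homomorphism whose kernel is contained in that ideal is again a closed two-sided ideal, so each $\mathfrak{K}_i$ is an ideal in $\Calg/\Ideal$. This is essentially the correspondence theorem for $C^*$-algebras and requires no real work.

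Next, for the intersection, I would take $[a] \in \bigcap_{i \in \Index} \mathfrak{K}_i$ with representative $a \in \Calg$. For each $i$, $[a] \in \mathfrak{K}_i$ means there exists $a_i \in \Ideal_i$ with $a - a_i \in \Ideal \subseteq \Ideal_i$; hence $a \in \Ideal_i$ for every $i$, and therefore $a \in \bigcap_{i \in \Index} \Ideal_i = \Ideal$ by the sufficiency assumption on $\{\Ideal_i\}_{i \in \Index}$. Thus $[a] = 0$, proving that $\bigcap_{i \in \Index} \mathfrak{K}_i = \{0\}$, which is the first condition for sufficiency.

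The step that requires slightly more attention is the irredundancy property: I need to show $\mathfrak{K}_i \not\subseteq \mathfrak{K}_j$ whenever $i \neq j$. Suppose towards a contradiction that $\mathfrak{K}_i \subseteq \mathfrak{K}_j$. For any $a \in \Ideal_i$, we would then have $[a] = \pi_{\Ideal}(a) \in \mathfrak{K}_j$, so there exists $b \in \Ideal_j$ with $a - b \in \Ideal$. Since $\Ideal \subseteq \Ideal_j$, this gives $a = b + (a-b) \in \Ideal_j$, and hence $\Ideal_i \subseteq \Ideal_j$, contradicting the assumption that $\{\Ideal_i\}_{i \in \Index}$ is $\Ideal$-sufficient. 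The main obstacle, if any, is just keeping careful track of the difference between $\mathfrak{K}_i \subseteq \mathfrak{K}_j$ in the quotient and $\Ideal_i \subseteq \Ideal_j$ in $\Calg$; but the containment $\Ideal \subseteq \Ideal_j$ makes the lift from the quotient back to $\Calg$ automatic, so no further machinery is required.
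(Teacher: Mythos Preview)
Your proof is correct and complete. The paper states this lemma without proof, treating it as a routine consequence of the correspondence between ideals containing $\Ideal$ and ideals of $\Calg/\Ideal$; your direct verification of each defining property is exactly the natural way to fill in the omitted argument.
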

The next result is the \emph{key} to decomposing the $\Ideal$-essential spectrum of an observable affiliated to $\Calg$ as the closure of the union of essential spectra with respect to larger ideals. 
\begin{prop}\label{psiDO_reloaded:spectral:prop:ideal_essential_spectrum}
	\begin{enumerate}[(i)]
		\item Let $\{ \Ideal_i \}_{i \in \Index}$ be a $\Ideal$-sufficient family of ideals in $\Calg$. Then there is a canonical monomorphism 
		\begin{align*}
			{\Calg} / {\Ideal} \hookrightarrow \prod_{i \in \Index} {\Calg} / {\Ideal_i} 
			. 
		\end{align*}
		\item Let $H$ be an observable affiliated to $\Calg$. For any $i \in \Index$, let us set $H_i := \pi_i(H)$ the image of $H$ by the canonical surjection $\pi_i : \Calg \longrightarrow {\Calg} / {\Ideal_i}$. Then 
		\begin{align*}
			\sigma_{\Ideal}(H) = \overline{\bigcup_{i \in \Index} \sigma(H_i)} 
			. 
		\end{align*}
	\end{enumerate}
\end{prop}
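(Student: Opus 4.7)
My plan splits cleanly along the two parts of the proposition, and part (i) will turn out to be the algebraic engine that drives part (ii).

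For part (i), the natural candidate for the map is $\Phi : \Calg/\Ideal \to \prod_{i \in \Index} \Calg/\Ideal_i$, $[a]_{\Ideal} \mapsto ([a]_{\Ideal_i})_{i \in \Index}$. Since $\Ideal = \bigcap_i \Ideal_i \subseteq \Ideal_j$ for each $j$, for each $i$ there is a canonical surjective $C^*$-morphism $q_i : \Calg/\Ideal \to \Calg/\Ideal_i$, and $\Phi$ is the product $\prod_i q_i$, hence a well-defined $\ast$-morphism. Well-definedness is immediate from $\Ideal \subseteq \Ideal_i$. For injectivity, if $\Phi([a]_\Ideal) = 0$ then $a \in \Ideal_i$ for every $i$, so $a \in \bigcap_i \Ideal_i = \Ideal$, giving $[a]_\Ideal = 0$. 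Because an injective $\ast$-morphism between $C^*$-algebras is automatically isometric, $\Phi$ is a (closed) monomorphism.

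For part (ii), I would first establish the easy inclusion $\overline{\bigcup_i \sigma(H_i)} \subseteq \sigma_{\Ideal}(H)$. For each $i$, since $\Ideal \subseteq \Ideal_i$, the morphism $\pi_i : \Calg \to \Calg/\Ideal_i$ factors through $\Calg/\Ideal$, so $H_i = \pi_i(H)$ is the image of $\pi_\Ideal(H)$ under the morphism $q_i$. Theorem~\ref{algebraicPOV:affiliation:thm:spectrum_of_image_of_H_through_pi} then gives $\sigma(H_i) \subseteq \sigma(\pi_\Ideal(H)) = \sigma_\Ideal(H)$, and since the $\Ideal$-essential spectrum is closed in $\R$, the closure of the union is also contained.

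For the reverse inclusion I would argue by contrapositive. Suppose $\lambda \in \R$ is not in $\overline{\bigcup_i \sigma(H_i)}$. Then there is an open neighborhood $V$ of $\lambda$ with $V \cap \sigma(H_i) = \emptyset$ for every $i$, and I can pick $\varphi \in \Cont_{\infty}(\R)$ with $\varphi(\lambda) \neq 0$ and $\supp\varphi \subseteq V$. By the very definition of $\sigma(H_i)$, the condition $\supp\varphi \cap \sigma(H_i) = \emptyset$ forces $\varphi(H_i) = 0$, i.e. $\pi_i\bigl(\varphi(H)\bigr) = 0$ for every $i \in \Index$. This is precisely the statement that $\Phi\bigl([\varphi(H)]_\Ideal\bigr) = 0$ in $\prod_i \Calg/\Ideal_i$, and injectivity of $\Phi$ from part (i) then yields $\varphi(H) \in \Ideal$. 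By the definition of the $\Ideal$-essential spectrum this means $\lambda \notin \sigma_\Ideal(H)$, completing the proof.

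The only place where there is any real content is this last step: the naive union $\bigcup_i \sigma(H_i)$ need not be closed, so it is essential to take the closure on the right-hand side, and the translation from ``$\pi_i(\varphi(H)) = 0$ for all $i$'' to ``$\varphi(H) \in \Ideal$'' is exactly what part (i) buys us. Everything else is just unwinding definitions together with the general spectral-shrinking principle for morphisms.
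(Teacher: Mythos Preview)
Your proof is correct and follows essentially the same route as the paper. Part (i) is identical in spirit: the kernel of the product map $(\pi_i)_{i \in \Index}$ is $\bigcap_i \Ideal_i = \Ideal$, hence the induced map on $\Calg/\Ideal$ is injective. For part (ii), the paper compresses your argument into two invocations: invariance of the spectrum under monomorphisms, and the fact that the spectrum of an observable affiliated to a direct product is the closure of the union of the component spectra (an adaptation of Proposition~\ref{algebraicPOV:affiliation:prop:fiber_decomposition_of_spectrum}). Your contrapositive argument for the hard inclusion is precisely an inline proof of that second fact, so you have unwound what the paper leaves as a reference.
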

\begin{proof}
	\begin{enumerate}[(i)]
		\item The kernel of $\pi_i$ is $\Ideal_i$. Thus, the kernel of $( \pi_i )_{i \in \Index} : \Calg \longrightarrow \prod_{i \in \Index} {\Calg} / {\Ideal_i}$ is $\Ideal$. 
		\item This follows from (i), the invariance of the spectrum under monomorphisms (Theorem~\ref{algebraicPOV:affiliation:thm:spectrum_of_image_of_H_through_pi}) and from the fact that the spectrum of an observable affiliated to a direct product is the closure of the union of spectra of its components (which follows from an adaption of Proposition~\ref{algebraicPOV:affiliation:prop:fiber_decomposition_of_spectrum}). 
	\end{enumerate}
\end{proof}
%


\subsection{Decomposition of $\Salg$ into quasi orbits} 
\label{psiDO_reloaded:spectral:decomposition_of_salg_into_quasi_orbits}
In our case, the ideal $\Ideal$ will be the twisted crossed product with respect to the algebra $\Cont_{\infty}(\R^d)$. Writing $\Cont_{\infty}(\R^d)$ as the intersection of algebras $\Alg^{\Quasi}$ consisting of functions that do not vanish everywhere at infinity but only in `certain directions,' we can decompose $\Cont_{\infty}(\R^d) \rtimes^{\omega^B}_{\theta,\nicefrac{1}{2}} \R^d$ as an intersection of twisted crossed products of the form $\Alg^{\Quasi} \rtimes^{\omega^B}_{\theta,\nicefrac{1}{2}} \R^d$. Let us define some preliminaries: for any commutative algebra $\Alg$, the Gelfand isomorphism $\Gelf : \Alg \longrightarrow \Cont_{\infty}(\Salg)$ maps any $\varphi \in \Alg$ onto a function $\tilde{\varphi} := \Gelf(\varphi)$ on the Gelfand spectrum $\Salg$ (see Chapter~\ref{algebraicPOV:twisted_crossed_products:gelfand_theory}). If $\Alg$ is an $\R^d$-algebra which contains $\Cont_{\infty}(\R^d)$, the Stone-Weierstrass theorem tells us that we can continuously and densely embed $\R^d$ into $\Salg$. This means, we can think of $\Gelf(\varphi)$ as the unique `extension' of $\varphi$, and for that reason, we will often not distinguish between these two objects. If $\Alg$ is also unital, then the Gelfand spectrum is compact and $\Salg$ is a \emph{compactification} of $\R^d$ and $\Salg \setminus \imath_{\Alg}(\R^d)$ are the points at infinity. 
\begin{example}
	Let $\Alg := \mathrm{span} \; \bigl \{ \varphi_0 + \varphi_1 \; \vert \; \varphi_0 \in \Cont_{\infty}(\R^d) , \; \varphi_1 \in \C \bigr \}$ be the algebra consisting of function that tend to a constant at infinity. Then the Gelfand spectrum $\Salg \cong \R^d \cup \{ \infty \}$ is the one-point compactification of $\R^d$. 
\end{example}
Viewing $\Salg$ as a compactification of $\R^d$, we extend the group law $\tau : \R^d \times \R^d \longrightarrow \R^d$ to a continuous map $\tau : \R^d \times \Salg \longrightarrow \Salg$ (which we denote by the same letter) and the triple $(\Salg , \tau , \R^d)$ is a topological dynamical system. Now $\Salg$ decomposes into orbits and quasi orbits: 
\begin{defn}
	Let $\kappa,\kappa'$ be two elements of $\Salg$. We set $\Orbit_{\kappa} := \bigl \{ \tau_x[\kappa] \mid x \in \R^d \bigr \}$ for \emph{the orbit of} $\kappa$\index{orbit} and $\Quasi_{\kappa}:=\overline{\Orbit_{\kappa}}$ for \emph{the quasi orbit of} $\kappa$,\index{quasi orbit} which is the closure of $\Orbit_{\kappa}$ in $\Salg$. $\Quasialg$ stands for the set of all the quasi orbits. Furthermore, we write
	\begin{enumerate}[(i)]
		\item $\kappa \cong \kappa'$ if $\Orbit_{\kappa} = \Orbit_{\kappa'}$,
		\item $\kappa \prec \kappa'$ if $\Quasi_{\kappa} \subseteq \Quasi_{\kappa'}$,
		\item $\kappa \sim \kappa'$ if $\kappa \prec \kappa'$ and $\kappa' \prec \kappa$, which is equivalent to $\Quasi_{\kappa} = \Quasi_{\kappa'}$.
	\end{enumerate}
\end{defn}
Similarly, we can now show under which circumstances representations are equivalent: 
\begin{defn}
	Let $\Calg$ be a $C^*$-algebra, and for $j \in \{1,2\}$ let $\pi_j :\Calg \longrightarrow \mathcal{B}(\Hil_j)$ be a representation in a Hilbert space $\Hil_j$. We write:
	\begin{enumerate}[(i)]
		\item $\pi_1 \prec \pi_2$ if $\ker \pi_1 \supset \ker \pi_2$,
		\item $\pi_1 \sim \pi_2$ if $\ker \pi_1 = \ker \pi_2$.
	\end{enumerate}
\end{defn}
If $\pi_1 \prec \pi_2$, then the spectrum of the image of an observable $H$ through $\pi_1$ affiliated to $\mathcal{B}(\Hil_1)$ is contained in that of the image through $\pi_2$, \ie if the kernel grows, the spectrum tends to shrink. 
\begin{lem}\label{psiDO_reloaded:spectral:lem:shrinking_spectrum}
	Let $H$ be an observable affiliated to a $C^*$-algebra $\Calg$. If $\pi_1$, $\pi_2$ are two representations of $\Calg$ in some Hilbert spaces $\Hil_j$ satisfying $\pi_1 \prec \pi_2$, then $\sigma \bigl (\pi_1 (H) \bigr ) \subseteq \sigma \bigl (\pi_2 (H) \bigr )$. In particular, if $\pi_1 \sim \pi_2$, then $\sigma \bigl (\pi_1 (H) \bigr ) = \sigma \bigl (\pi_2 (H) \bigr )$.
\end{lem}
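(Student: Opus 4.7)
The plan is to reduce the claim directly to Theorem~\ref{algebraicPOV:affiliation:thm:spectrum_of_image_of_H_through_pi}, which states that the spectrum of the image of an observable through a $C^*$-morphism is contained in the original spectrum. The key observation is that the hypothesis $\pi_1 \prec \pi_2$, \ie $\ker \pi_1 \supseteq \ker \pi_2$, is precisely what one needs to factor $\pi_1$ through $\pi_2$.

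More concretely, I would argue as follows. Since $\ker \pi_2 \subseteq \ker \pi_1$, standard $C^*$-algebra theory (first isomorphism theorem for $C^*$-algebras) yields a unique $\ast$-morphism $\tilde{\pi} : \pi_2(\Calg) \longrightarrow \pi_1(\Calg)$ satisfying $\tilde{\pi} \circ \pi_2 = \pi_1$. Applying this factorisation at the level of the observable, one gets for every $\varphi \in \Cont_{\infty}(\R)$ the identity
\begin{align*}
    \pi_1(H)(\varphi) = \pi_1\bigl(\varphi(H)\bigr) = \tilde{\pi}\bigl(\pi_2(\varphi(H))\bigr) = \tilde{\pi}\bigl(\pi_2(H)(\varphi)\bigr),
\end{align*}
so $\pi_1(H)$ is the image of the observable $\pi_2(H)$ through the morphism $\tilde{\pi}$. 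Invoking Theorem~\ref{algebraicPOV:affiliation:thm:spectrum_of_image_of_H_through_pi} then gives $\sigma\bigl(\pi_1(H)\bigr) \subseteq \sigma\bigl(\pi_2(H)\bigr)$, which is the desired inclusion. The second assertion, for $\pi_1 \sim \pi_2$, follows by applying the first statement twice, exchanging the roles of $\pi_1$ and $\pi_2$.

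Alternatively, and perhaps more in the spirit of the preceding subsection, one can give a one-line argument purely at the level of the defining condition of $\sigma(\cdot)$: if $\lambda \notin \sigma\bigl(\pi_2(H)\bigr)$, there exists $\varphi \in \Cont_{\infty}(\R)$ with $\varphi(\lambda) \neq 0$ and $\pi_2\bigl(\varphi(H)\bigr) = 0$, \ie $\varphi(H) \in \ker \pi_2 \subseteq \ker \pi_1$, so $\pi_1\bigl(\varphi(H)\bigr) = 0$, witnessing $\lambda \notin \sigma\bigl(\pi_1(H)\bigr)$. There is no real obstacle here; the only conceptual point that needs attention is to verify that the factorisation $\tilde{\pi}$ is well defined as a $C^*$-morphism on the (closed) image $\pi_2(\Calg)$, but this is a standard consequence of the closed range of $\ast$-morphisms between $C^*$-algebras.
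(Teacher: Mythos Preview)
Your proposal is correct and takes essentially the same approach as the paper: construct a $\ast$-morphism $M:\pi_2(\Calg)\to\pi_1(\Calg)$ with $M\circ\pi_2=\pi_1$ via the first isomorphism theorem, then invoke Theorem~\ref{algebraicPOV:affiliation:thm:spectrum_of_image_of_H_through_pi}. The paper spells out the factorisation through the quotients $\Calg/\ker\pi_j$ a bit more explicitly, but the argument is identical; your alternative direct argument via the definition of $\sigma(\cdot)$ is a correct shortcut the paper does not mention.
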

\begin{proof}
	For $j \in \{1,2\}$, let $p_j$ denote the morphisms $\Calg \longrightarrow \Calg \setminus \ker \pi_j$, and let $\tilde{\pi}_j$ denote the isomorphisms $\bigl (\Calg \setminus \ker \pi_j \bigr ) \longrightarrow \pi_j(\Calg) \subseteq \mathcal{B}(\Hil_j)$. One clearly has $\tilde{\pi}_j \circ p_j = \pi_j$. Since $\ker \pi_1 \supseteq \ker \pi_2$, there also exists a surjective morphism $m: \Calg \setminus \ker \pi_2 \longrightarrow \Calg \setminus \ker \pi_1$ which satisfies $m\circ p_2 = p_1$. Then, by setting $M:\pi_2(\Calg) \longrightarrow \pi_1(\Calg)$ by $M:=\tilde{\pi}_1 \circ m \circ \tilde{\pi}_2^{-1}$, it follows that $M$ is a surjective morphism and
	\begin{align*}
		M \circ \pi_2= (\tilde{\pi}_1 \circ m \circ \tilde{\pi}^{-1}_2) \circ (\tilde{\pi}_2 \circ p_2)  
		= \tilde{\pi}_1 \circ m \circ p_2 
		= \tilde{\pi}_1\circ p_1 = \pi_1 
		.
	\end{align*}
	At the level of spectra, this equality implies that 
	\begin{align*}
		\sigma \bigl (\pi_1 (H) \bigr ) = \sigma \bigl ( M(\pi_2 (H)) \bigr ) \subseteq \sigma \bigl (\pi_2 (H) \bigr ) 
		. 
	\end{align*}
\end{proof}
Let us consider a specific case: Let $\kappa \in \Salg$, $\varphi \in \Alg \cong \Cont_{\infty}(\Salg)$ and $u \in L^2(\R^d)$. Then 
\begin{align*}
	\bigl ( r_{\kappa}(\varphi) u \bigr )(x) := \bigl ( \theta_x[\varphi] \bigr )(\kappa) \, u(x) 
	= \varphi \bigl ( \tau_x[\kappa] \bigr ) \, u(x) =: \varphi \bigl ( \tau^{\kappa}[x] \bigr ) \, u(x) 
\end{align*}
and 
\begin{align*}
	\bigl ( T_{\kappa}(y) u \bigr )(x) := \omega^B(\kappa ; x , y) \, u(x + y) 
\end{align*}
form a covariant representation of the twisted $C^*$-dynamical system. 
\begin{defn}[Representation by evaluation\index{representation!by evaluation}]
	For $\kappa \in \Salg$, $F \in L^1(\R^d ; \Alg)$ and $f \in \Fourier^{-1} L^1(\R^d ; \Alg)$, we define 
	\begin{align}
		\bigl ( \Rep_{\kappa}(F) u \bigr )(x) :&= \int_{\R^d} \dd y \, F \bigl ( \tau^{\kappa} \bigl ( \tfrac{1}{2} (x + y) \bigr ) ; y - x \bigr ) \, \omega^B(\kappa ; x , y - x) \, u(y) 
	\end{align}
	as well as 
	\begin{align}
		\bigl ( \Op_{\kappa}(f) u \bigr )(x) := \int_{\R^d} \dd y \int_{{\R^d}^*} \dd \eta \, e^{-i (y - x) \cdot \eta} f \bigl ( \tau^{\kappa} \bigl ( \tfrac{1}{2} (x + y) \bigr ) ; \eta \bigr ) \, \omega^B(\kappa ; x , y - x) \, u(y) 
	\end{align}
	for any $u \in L^2(\R^d)$. These definitions extend to the twisted crossed product $\Alg \rtimes^{\omega^B}_{\theta , \nicefrac{1}{2}} \R^d$ and its Fourier transform, respectively. 
\end{defn}
%
Before we proceed, we need a technical lemma: 
\begin{lem}\label{psiDO_reloaded:spectral:lem:strong_continuity_of_Rep_kappa}
	For any $F \in \Alg \rtimes^{\omega^B}_{\theta , \nicefrac{1}{2}} \R^d$, the mapping $\Salg \ni \kappa \mapsto \Rep_{\kappa}(F) \in \BopL$ is strongly continuous.
\end{lem}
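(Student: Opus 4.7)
The plan is a standard density argument based on the uniform bound $\bnorm{\Rep_\kappa(F)}_{\BopL} \leq \snorm{F}_{\stwistedXprod}$, valid for every $\kappa \in \Salg$ because $\Rep_\kappa$ is, by construction, a representation of the twisted $C^*$-dynamical system $(\Alg,\theta,\omega^B,\R^d)$ (cf.~Proposition~\ref{algebraicPOV:twisted_crossed_products:prop:basic_properties_rep} applied fiberwise at $\kappa$). Given this \emph{equi-continuity} of the family $\{\Rep_\kappa\}_{\kappa \in \Salg}$, it suffices to verify the strong continuity of $\kappa \mapsto \Rep_\kappa(F)$ on a norm-dense subset of $\Alg \rtimes^{\omega^B}_{\theta,\nicefrac12} \R^d$: a $3\eps$-argument then extends the result to arbitrary $F$.

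For the dense set I would take $\Cont_c(\R^d;\Alg^\infty)$, which is dense in $L^1(\R^d;\Alg)$ and thus in the twisted crossed product. Fix $F$ in this space, $u \in L^2(\R^d)$ and a net $\kappa_n \to \kappa$ in $\Salg$. Writing out
\begin{align*}
	\bigl(\Rep_{\kappa_n}(F) u\bigr)(x) - \bigl(\Rep_{\kappa}(F) u\bigr)(x)
	= \int_{\R^d} \dd y \, \Bigl ( & F\bigl(\tau^{\kappa_n}(\tfrac12(x+y));y-x\bigr)\,\omega^B(\kappa_n;x,y-x)
	+ \\
	& - F\bigl(\tau^{\kappa}(\tfrac12(x+y));y-x\bigr)\,\omega^B(\kappa;x,y-x) \Bigr ) \, u(y),
\end{align*}
the integrand converges to $0$ pointwise in $(x,y)$: the action $\tau : \R^d \times \Salg \longrightarrow \Salg$ is jointly continuous, the Gelfand transform $\Gelf(F(\cdot\,;z))$ is continuous on $\Salg$ for each $z$, and $\omega^B(\cdot\,;x,y-x) = \Gelf\bigl(\omega^B(\,\cdot\,;x,y-x)\bigr)$ is likewise continuous on $\Salg$ because the $2$-cocycle has values in $\mathcal{U}(\Alg)$ viewed as $\Cont(\Salg;\T)$. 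Since $F$ is compactly supported in its second argument, the integrand is dominated by $\snorm{F}_{L^1(\R^d;\Alg)}^{\mathrm{ess}}\, \sabs{u(y)}\,\chi_K(y-x)$ for some compact $K$, which is integrable in $y$ uniformly in $n$; dominated convergence gives pointwise convergence of $\bigl(\Rep_{\kappa_n}(F)u\bigr)(x)$ to $\bigl(\Rep_{\kappa}(F)u\bigr)(x)$. A second application of dominated convergence (on the outer $L^2$-integral, with majorant furnished again by the $L^1$-bound on $F$ and Young's inequality) upgrades this to norm convergence in $L^2(\R^d)$.

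The main obstacle I anticipate is the handling of the cocycle term $\omega^B(\kappa;x,y-x)$: one must ensure that when $\kappa$ is a point at infinity (\ie $\kappa \in \Salg \setminus \imath_\Alg(\R^d)$), the expression still makes sense and depends continuously on $\kappa$. This is resolved by the assumption that the components of $B$ belong to $\Alg^\infty$ so that $\omega^B(\,\cdot\,;x,y)$ lies in $\Cont(\Salg;\T)$ for each fixed $(x,y)$, and by the fact that on bounded subsets of $\mathcal{U}(\Alg)$ the strict topology coincides with uniform convergence on compacta of $\Salg$ (as discussed in Chapter~\ref{algebraicPOV:twisted_crossed_products:special_case_of_xgroup_algebras}). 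Once this continuity is in place, the density/equi-continuity argument produces strong continuity of $\kappa \mapsto \Rep_\kappa(F)$ for every $F \in \Alg \rtimes^{\omega^B}_{\theta,\nicefrac12} \R^d$, as required.
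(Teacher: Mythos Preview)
Your proposal is correct and follows essentially the same approach as the paper: a density argument combined with the uniform bound $\bnorm{\Rep_\kappa(F)} \leq \snorm{F}_{\stwistedXprod}$, with the strong continuity on the dense set obtained via dominated convergence (the $L^2$ majorant being a convolution controlled by Young's inequality). The only cosmetic difference is that the paper works directly with $L^1(\R^d;\Alg)$ as the dense set rather than $\Cont_c(\R^d;\Alg^\infty)$, and it is terser about the pointwise continuity of the integrand in $\kappa$; your more explicit treatment of the cocycle term at points of $\Salg \setminus \imath_\Alg(\R^d)$ is a welcome clarification but not a different idea.
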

In the next proof and again further on, we shall denote by $\Cont_c(\mathcal{Y};\mathcal{Z})$ the set of continuous functions on $\mathcal{Y}$ with compact support and values in $\mathcal{Z}$; the topologies will be evident. We shall also drop the tilde in the notations, identifying elements of $\Alg$ with their isomorphic image in $\Cont_{\infty}(\Salg)$. 
%
\begin{proof}
	We prove the statement for $F \in L^1(\R^d ; \Alg) \subset \Alg \rtimes^{\omega^B}_{\theta , \nicefrac{1}{2}} \R^d$ and $u \in L^2(\R^d)$ and then use a density argument. Let $\kappa$ be an element of $\Salg$ and initially assume that $F \in L^1(\R^d ; \Alg) \subset \Alg \rtimes^{\omega^B}_{\theta , \nicefrac{1}{2}} \R^d$. For all $u \in L^2(\R^d)$ and almost all $x,y \in \R^d$, we can bound 
	\begin{align*}
		&\abs{F \bigl ( \tau^{\kappa} \bigl ( \tfrac{1}{2} (x + y) \bigr ) ; y - x \bigr ) \, \omega^B(\kappa ; x , y - x) \, u(y)} 
		\leq 
		\\
		&\qquad \qquad 
		\leq \babs{F \bigl ( \theta^{\kappa} \bigl ( \tfrac{1}{2} (x + y) \bigr ) ; y - x \bigr )} \, \babs{\omega^B(\kappa ; x , y - x)} \, \babs{u(y)} 
		\\
		&\qquad \qquad 
		\leq \bnorm{F(y - x)}_{\Alg} \, \babs{u(y)} 
	\end{align*}
	point-wise by something which is in $L^2(\R^d)$ and independent of $\kappa$ as it is the convolution of the $L^1(\R^d)$ function $\bnorm{F(- \, \cdot)}_{\Alg}$ and the $L^2(\R^d)$ function $u$. Thus, we can use Dominated Convergence to conclude that at least for $F \in L^1(\R^d ; \Alg)$, the map $\kappa \mapsto \Rep_{\kappa}(F)$ is strongly continuous. 
	
	If $F \in \Alg \rtimes^{\omega^B}_{\theta,\nicefrac{1}{2}} \R^d$ is an arbitrary element of the $C^*$-algebra, we can approximate it by a sequence $\{ F_n \}_{n \in \N}$ in $L^1(\R^d ; \Alg)$. Then the uniform bound 
	\begin{align*}
		\bnorm{\Rep_{\kappa}(F)}_{\mathcal{B}(L^2(\R^d))} \leq \bnorm{F}_{\stwistedXprod} 
	\end{align*}
	can be combined with an $\nicefrac{\eps}{3}$ argument to make the right-hand side of 
	\begin{align*}
		&\bnorm{\bigl ( \Rep_{\kappa}(F) - \Rep_{\kappa'}(F) \bigr ) u}_{\mathcal{B}(L^2(\R^d))} 
		\leq \\
		&\quad \leq \bnorm{\bigl ( \Rep_{\kappa}(F) - \Rep_{\kappa}(F_n) \bigr ) u}_{\mathcal{B}(L^2(\R^d))} 
		+ \bnorm{\bigl ( \Rep_{\kappa}(F_n) - \Rep_{\kappa'}(F_n) \bigr ) u}_{\mathcal{B}(L^2(\R^d))} 
		\\
		&\qquad \medskip
		+ \bnorm{\bigl ( \Rep_{\kappa'}(F_n) - \Rep_{\kappa'}(F) \bigr ) u}_{\mathcal{B}(L^2(\R^d))} 
		\\
		&\quad \leq 2 \bnorm{F - F_n}_{\stwistedXprod} \, \snorm{u}_{L^2(|R^d)} + \bnorm{\bigl ( \Rep_{\kappa}(F_n) - \Rep_{\kappa'}(F_n) \bigr ) u}_{\mathcal{B}(L^2(\R^d))} 
	\end{align*}
	arbitrarily small if we choose $n$ large enough and let $\kappa'$ tend to $\kappa$. 
\end{proof}
The representations associated to different $\kappa$ are related if they belong to the same orbit or quasi orbit. 
\begin{prop}\label{psiDO_reloaded:spectral:prop:relation_reps_by_eval}
	Let $\kappa , \kappa'$ be two elements of $\Salg$.
	\begin{enumerate}[(i)]
		\item If $\kappa \cong \kappa'$, then $\Rep_{\kappa} \cong \Rep_{\kappa'}$
		and $\Op_{\kappa} \cong \Op_{\kappa'}$.
		\item If $\kappa \prec \kappa'$, then $\Rep_{\kappa} \prec \Rep_{\kappa'}$ and $\Op_{\kappa} \prec \Op_{\kappa'}$.
		\item If $\kappa \sim \kappa'$, then $\Rep_{\kappa} \sim \Rep_{\kappa'}$ and $\Op_{\kappa} \sim \Op_{\kappa'}$.
	\end{enumerate}
\end{prop}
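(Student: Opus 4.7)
My plan is to prove (i) directly via a unitary intertwiner, then bootstrap to (ii) using the strong continuity established in Lemma~\ref{psiDO_reloaded:spectral:lem:strong_continuity_of_Rep_kappa}, and finally derive (iii) as the symmetric case of (ii). Throughout, the statements for $\Op_\kappa$ reduce to those for $\Rep_\kappa$ because $\Op_\kappa = \Rep_\kappa \circ \Fourier$ by construction, and $\Fourier$ is an isomorphism between $\sFXprodR$ and $\sXprodR$; so unitary equivalence, kernel-inclusion, or kernel-equality for $\{\Rep_\kappa\}$ translates verbatim to the corresponding statement for $\{\Op_\kappa\}$.

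For (i), suppose $\kappa \cong \kappa'$, so $\kappa' = \tau_{x_0}[\kappa]$ for some $x_0 \in \R^d$. The key relation is $\tau^{\kappa'}(x) = \tau^{\kappa}(x + x_0)$. I would write down a \emph{magnetic translation} unitary $U_{x_0}$ on $L^2(\R^d)$ of the form
\begin{align*}
(U_{x_0} u)(x) := \mu(\kappa; x, x_0) \, u(x+x_0)
\end{align*}
where the phase $\mu(\kappa; \cdot, \cdot)$ is dictated by the $2$-cocycle identity for $\omega^B$, namely
\begin{align*}
\omega^B(\tau_{x_0}[\kappa]; x, y) = \theta_{x_0}\bigl[\omega^B(\kappa; x, y)\bigr] \cdot (\text{$1$-coboundary terms in $x_0$}).
\end{align*}
Unfolding the integral kernel of $\Rep_{\kappa'}(F)$, performing the change of variables $x \mapsto x - x_0$, $y \mapsto y - x_0$, and using the cocycle identity to absorb the magnetic phase mismatch into $U_{x_0}$, one obtains $U_{x_0} \Rep_{\kappa}(F) U_{x_0}^{-1} = \Rep_{\kappa'}(F)$ for $F \in L^1(\R^d;\Alg)$, which extends to all $F \in \sXprodR$ by density. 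In particular, $\ker \Rep_{\kappa} = \ker \Rep_{\kappa'}$.

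For (ii), suppose $\kappa \prec \kappa'$, so $\kappa \in \Quasi_{\kappa'} = \overline{\Orbit_{\kappa'}}$. Choose a net $\{x_\alpha\}_{\alpha}$ in $\R^d$ such that $\tau_{x_\alpha}[\kappa'] \to \kappa$ in $\Salg$. Let $F \in \ker \Rep_{\kappa'}$. By (i), $\Rep_{\tau_{x_\alpha}[\kappa']} \cong \Rep_{\kappa'}$ for every $\alpha$, so $\ker \Rep_{\tau_{x_\alpha}[\kappa']} = \ker \Rep_{\kappa'}$ and thus $\Rep_{\tau_{x_\alpha}[\kappa']}(F) = 0$ for all $\alpha$. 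By Lemma~\ref{psiDO_reloaded:spectral:lem:strong_continuity_of_Rep_kappa}, the net $\Rep_{\tau_{x_\alpha}[\kappa']}(F)$ converges strongly to $\Rep_{\kappa}(F)$, hence $\Rep_{\kappa}(F) = 0$. This shows $\ker \Rep_{\kappa'} \subseteq \ker \Rep_{\kappa}$, i.e.\ $\Rep_\kappa \prec \Rep_{\kappa'}$. Statement (iii) is then immediate: $\kappa \sim \kappa'$ means both $\kappa \prec \kappa'$ and $\kappa' \prec \kappa$, so (ii) applied twice yields equal kernels.

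The main obstacle I expect is the explicit construction of the intertwining unitary $U_{x_0}$ in (i), more precisely identifying the correct phase $\mu(\kappa; x, x_0)$. The magnetic $2$-cocycle $\omega^B$ transforms nontrivially under the $\R^d$-action $\theta$, and the resulting phase mismatch under the substitution $x \mapsto x - x_0$ in the kernel of $\Rep_{\kappa'}(F)$ has to be exactly a $1$-coboundary in the first integration variable so that it can be gauged away by conjugation with a multiplication operator. The cocycle identity \eqref{algebraicPOV:twisted_crossed_products:eqn:sum_of_flux_triangles} is exactly the tool that guarantees such a coboundary exists, but bookkeeping the three involved variables $(\kappa, x, x_0)$ against the three arguments of $\omega^B$ is the delicate bit. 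Once (i) is established with the correct $U_{x_0}$, parts (ii) and (iii) follow by the soft strong-continuity argument sketched above.
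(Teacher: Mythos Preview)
Your plan is correct and matches the paper's approach essentially step for step: (i) via an explicit magnetic-translation intertwiner whose phase is fixed by the $2$-cocycle identity, (ii) via the strong-continuity lemma applied along an approximating sequence in the orbit of $\kappa'$, and (iii) by symmetry. The paper's only variation is that it checks the intertwining on the covariant pair $(r_\kappa,T_\kappa)$ rather than directly on the integrated form $\Rep_\kappa(F)$, which keeps the bookkeeping shorter; the explicit phase turns out to be $\mu(\kappa;x,x_0)=\omega^B(\kappa;-x_0,x+x_0)^*$, and the required identity $\omega^B(\kappa;-x_0,x+x_0)\,\omega^B(\kappa;x,y)=\omega^B(\tau_{-x_0}[\kappa];x+x_0,y)\,\omega^B(\kappa;-x_0,x+y+x_0)$ is exactly the $2$-cocycle relation you cite.
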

\begin{proof}
	We prove the statements for $\Rep_{\kappa}$ only, the corresponding ones for $\Op_{\kappa}$ follow
	from the fact $\sXprodR$ and $\sFXprodR$ are isomorphic.
	\begin{enumerate}[(i)]
		\item Since $\Orbit_{\kappa} =\Orbit_{\kappa'}$, there exists an element $x_0$ of $\R^d$ such that $\tau_{x_0}[\kappa'] = \kappa$. For $u \in L^2(\R^d)$ and $x \in \R^d$ we set
		\begin{align*}
			\bigl ( U_{\kappa \kappa'} u \bigr )(x):= {\omega^B(\kappa; -x_0 , x + x_0)}^* \, u(x + x_0) 
			.
		\end{align*}
		To show unitary equivalence of the two representations, it is enough to show that for all $\varphi \in \Alg$ and $y \in \R^d$ one has
		\begin{align*}
			U_{\kappa \kappa'} \, r_{\kappa'}(\varphi) = r_{\kappa}(\varphi) \, U_{\kappa \kappa'} 
			\quad \hbox{ and }\quad 
			U_{\kappa \kappa'} \, T_{\kappa'}(y) = T_{\kappa}(y) \, U_{\kappa \kappa'} 
			.
		\end{align*}
		The first one is obvious. The second one reduces to
		\begin{align*}
			\omega^B(\kappa; -x_0 , x + x_0) \, \omega^B(\kappa;x,y) = \omega^B \bigl (\tau_{-x_0}[\kappa] ; x + x_0 , y \bigr ) \, \omega^B(\kappa; -x_0 , x + y + x_0),
		\end{align*}
		which is true by the $2$-cocycle property of $\omega^B$. 
		\item If $\kappa \prec \kappa'$, there exists a sequence $(x_m)_{m\in\N}$ in $\R^d$ such that $\theta_{x_m}[\kappa'] \rightarrow \kappa$ in the larger quasi orbit $\Quasi_{\kappa'} \supseteq \Quasi_{\kappa}$. For $F \in \ker \Rep_{\kappa'}$, by point~(i), the continuity of $\theta$ and  Lemma~\ref{psiDO_reloaded:spectral:lem:strong_continuity_of_Rep_kappa}, one obtains
		\begin{align*}
			0 = \Rep_{\theta_{x_m}[\kappa']}(F) \xrightarrow{m \rightarrow \infty} \Rep_{\kappa}(F) = 0 
			. 
		\end{align*}
		Hence, $F$ is contained in the kernel of $\Rep_{\kappa}$. 
		\item is a direct consequence of (ii). 
	\end{enumerate}
\end{proof}
If $\Quasi \in \Quasialg$ is a quasi orbit, then there are two natural spaces associated to it: if $\kappa$ generates $\Quasi$ (\ie the set $\Orbit_{\kappa} := \bigl \{ \tau_x[\kappa] \mid x \in \R^d \bigr \}$ is dense in $\Quasi$), we define 
\begin{align*}
	\Alg_{\kappa} := \bigl \{ \varphi_{\kappa} := x \mapsto \tilde{\varphi} \bigl ( \tau^{\kappa}(x) \bigr ) \mid \tilde{\varphi} \in \Cont(\Salg) \bigr \} 
	. 
\end{align*}
It can be easily seen that if $\kappa'$ generates the same quasi orbit $\Quasi$, then $\Alg_{\kappa}$ and $\Alg_{\kappa'}$ are isomorphic. With a little abuse of notation, we will exploit this fact and write $\Alg_{\Quasi}$ instead. From the definition, we conclude that $\Alg_{\Quasi} \cong \Cont(\Quasi)$ is an $\R^d$-algebra as well: any $\varphi_{\kappa}$ belongs to $\BCont_u(\R^d)$. By restricting $\tilde{\varphi} \in \Cont(\Salg)$ to $\Quasi \subseteq \Salg$, we get a canonical epimorphism 
\begin{align*}
	\mathsf{p}_{\Quasi} : \Cont(\Salg) \longrightarrow \Cont(\Quasi) 
\end{align*}
and this map is needed to define a second algebra associated to each quasi orbit $\Quasi$. If $\kappa$ generates $\Quasi$, then 
\begin{align*}
	\pi_{\kappa} := \tau^{\kappa} \circ \mathsf{p}_{\Quasi} \circ \Gelf 
\end{align*}
maps $\Alg$ onto $\Alg_{\kappa}$. Note that $\Alg_{\kappa}$ has \emph{no reason to be contained in $\Alg$} in any way. The \emph{kernel} of $\pi_{\kappa}$, 
\begin{align*}
	\Alg^{\Quasi} := \bigl \{ \varphi \in \Alg \mid \Gelf(\varphi) \vert_{\Quasi} = 0 \bigr \} 
	, 
\end{align*}
is then the second relevant algebra. As the notation already suggests, $\Alg^{\Quasi}$ is independent of the choice of $\kappa$ for as long as it generates $\Quasi$; hence, by slight abuse of notation we will write $\pi_{\Quasi}$ instead of $\pi_{\kappa}$. All of these objects are related via a family of short exact sequences 
\begin{align}
	0 \longrightarrow \Alg_{\Quasi} \longrightarrow \Alg \overset{\pi_{\Quasi}}{\longrightarrow} \Alg^{\Quasi} \longrightarrow 0 
	\label{psiDO_reloaded:spectral:decomposition_of_salg_into_quasi_orbits:eqn:short_exact_sequence}
\end{align}
indexed by $\Quasi \in \Quasialg$, the set of all quasi orbits of the topological dynamical system $(\Salg , \tau , \R^d)$. This topological dynamical system is called \emph{minimal} if all orbits are dense, \ie if there is only a single quasi orbit. 

The previously proven principle of affiliation of real-valued, elliptic, anisotropic Hörmander symbols combined with these abstract arguments leads to 
\begin{thm}
	Let $f$ be any real and elliptic element of $\Hoermrdan{m}{\rho}{0}$ for $m >0$ and $\rho \in [0,1]$. Assume that each component $B_{jk}$ of the magnetic field belongs to $\Alg^{\infty}$, and let $A \in \Cont^{\infty}_{\mathrm{pol}}(\R^d)$ be a vector potential for $B$.
	\begin{enumerate}[(i)]
		\item If $\kappa \prec \kappa'$ then $\sigma \bigl ( \Op_\kappa(f) \bigr ) \subseteq \sigma \bigl ( \Op_{\kappa'}(f) \bigr ) \subseteq \sigma \bigl ( \Op^A(f) \bigr ) = \sigma(f)$, where $\sigma(f)$ is the spectrum of $f$ in $\sFXprod$.
		\item If the dynamical system $(\Salg,\tau,\R^d)$ is minimal, then all the operators $\Op_\kappa(f)$ have the same spectrum, which coincides with $\sigma(f)$.
	\end{enumerate}
\end{thm}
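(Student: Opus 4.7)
The plan is to invoke the principle of affiliation (Theorem~\ref{psiDO_reloaded:inversion_affiliation:thm:affiliation}) to regard $f$ as an observable affiliated to the $C^*$-algebra $\sFXprodR$, and then to apply the abstract spectral machinery of Chapter~\ref{algebraicPOV:affiliation} together with the orbit analysis of Proposition~\ref{psiDO_reloaded:spectral:prop:relation_reps_by_eval}. Since $\Op_\kappa$ and $\Op^A$ are all non-degenerate $*$-representations of the \emph{same} algebra $\sFXprodR$, the various spectra of the images of $f$ can be ordered through Lemma~\ref{psiDO_reloaded:spectral:lem:shrinking_spectrum} as soon as the kernels of these representations are comparable.

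For (i), the first inclusion follows by chaining Proposition~\ref{psiDO_reloaded:spectral:prop:relation_reps_by_eval}(ii), which lifts $\kappa \prec \kappa'$ to $\Op_\kappa \prec \Op_{\kappa'}$, with Lemma~\ref{psiDO_reloaded:spectral:lem:shrinking_spectrum}. For the second inclusion I would apply Proposition~\ref{algebraicPOV:twisted_crossed_products:prop:basic_properties_rep}(iii): the Schrödinger representation $\Rep^A$ of $\sXprodR$ is faithful, and by the Fourier $C^*$-isomorphism of Theorem~\ref{psiDO_reloaded:relevant_cStar_algebras:thm:description_FXprod_via_S} the representation $\Op^A = \Rep^A \circ \Fourier$ of $\sFXprodR$ inherits the faithfulness. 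Hence $\ker \Op^A = \{0\} \subseteq \ker \Op_{\kappa'}$, so $\Op_{\kappa'} \prec \Op^A$, and a second application of Lemma~\ref{psiDO_reloaded:spectral:lem:shrinking_spectrum} yields $\sigma(\Op_{\kappa'}(f)) \subseteq \sigma(\Op^A(f))$. Because a faithful morphism of $C^*$-algebras is isometric, Theorem~\ref{algebraicPOV:affiliation:thm:spectrum_of_image_of_H_through_pi} applied to $\Op^A$ promotes the remaining inclusion to the equality $\sigma(\Op^A(f)) = \sigma(f)$.

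For (ii), minimality of $(\Salg,\tau,\R^d)$ means every orbit is dense in $\Salg$, so every quasi orbit coincides with $\Salg$ and $\kappa \sim \kappa'$ for any two $\kappa, \kappa' \in \Salg$. Proposition~\ref{psiDO_reloaded:spectral:prop:relation_reps_by_eval}(iii) then gives $\Op_\kappa \sim \Op_{\kappa'}$, and Lemma~\ref{psiDO_reloaded:spectral:lem:shrinking_spectrum} applied in both directions tells us that all $\Op_\kappa(f)$ share one and the same spectrum. To identify this common spectrum with $\sigma(f)$ it is enough to exhibit a single $\kappa_0$ for which $\Op_{\kappa_0}$ is faithful, for then $\sigma(\Op_{\kappa_0}(f)) = \sigma(f)$ as in~(i). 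I would take $\kappa_0 := \imath_\Alg(0)$ in the dense orbit $\imath_\Alg(\R^d) \subseteq \Salg$; the cocycle identity $\omega^B = \delta^1(\lambda^A)$ from equation~\eqref{algebraicPOV:twisted_crossed_products:eqn:magnetic_2_cocycle} then allows one to rewrite $\Op_{\kappa_0}$ as the conjugate of $\Op^A$ by the bounded unitary multiplication operator $u \mapsto e^{-i\Gamma^A([0,\cdot])}\, u$, so $\Op_{\kappa_0}$ inherits the faithfulness of $\Op^A$.

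The main technical obstacle is the last unitary equivalence in~(ii): one must carefully bookkeep how the pseudotrivialization $\lambda^A$ absorbs the difference between the integral kernels of $\Op_{\kappa_0}$ and $\Op^A$ and verify that the resulting intertwiner is a genuinely bounded unitary rather than a merely formal phase. The argument is also sensitive to the dichotomy of Lemma~\ref{algebraicPOV:twisted_crossed_products:lem:C0_in_or_out}: when $\Cont_\infty(\R^d) \cap \Alg = \{0\}$ the embedding $\imath_\Alg$ need not be injective, and one may have to select $\kappa_0$ directly from a dense orbit of $\Salg$ rather than through a distinguished point of $\R^d$.
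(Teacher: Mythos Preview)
Your argument is correct and follows the paper's approach exactly (Proposition~\ref{psiDO_reloaded:spectral:prop:relation_reps_by_eval}, Lemma~\ref{psiDO_reloaded:spectral:lem:shrinking_spectrum}, the affiliation Theorem~\ref{psiDO_reloaded:inversion_affiliation:thm:affiliation}, and faithfulness of $\Op^A$); for~(ii) the paper merely says ``follows directly from~(i), since a minimal dynamical system has a single quasi-orbit,'' and your unitary-equivalence computation $\Op_{\imath_\Alg(0)} = e^{-i\Gamma^A([0,\cdot])}\,\Op^A\,e^{+i\Gamma^A([0,\cdot])}$ makes explicit the step that closes the remaining inclusion into an equality. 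The obstacle you flag is not one: $x \mapsto e^{-i\Gamma^A([0,x])}$ is a continuous unimodular function for any $A \in \Cont^\infty_{\mathrm{pol}}$, so the associated multiplication operator is automatically a genuine unitary on $L^2(\R^d)$.
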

\begin{proof}
	\begin{enumerate}[(i)]
		\item Use Proposition~\ref{psiDO_reloaded:spectral:prop:relation_reps_by_eval}, Lemma~\ref{psiDO_reloaded:spectral:lem:shrinking_spectrum}, Corollary~\ref{psiDO_reloaded:inversion_affiliation:thm:affiliation} and the faithfulness of the representation $\Op^A$.
		\item follows directly from (i), since a minimal dynamical system has a single quasi-orbit.
	\end{enumerate}
\end{proof}
%


\subsection{The essential spectrum of anisotropic magnetic operators}
\label{psiDO_reloaded:spectral:ess_spec}
Now that all the necessary notions have been introduced, we turn to the spectral analysis of magnetic pseudodifferential operators: our goal is to rewrite the quotient 
\begin{align*}
	\bigl ( {\Alg \rtimes^{\omega^B}_{\theta,\nicefrac{1}{2}} \R^d} \bigr ) / \bigl ( {\Cont_{\infty}(\R^d) \rtimes^{\omega^B}_{\theta,\nicefrac{1}{2}} \R^d} \bigr ) \cong \bigl ( {\Alg} / {\Cont_{\infty}(\R^d)} \bigr ) \rtimes^{\omega^B}_{\theta,\nicefrac{1}{2}} \R^d 
\end{align*}
as the intersection of larger ideals, \ie the twisted crossed products of a $\Cont_{\infty}(\R^d)$-sufficient family of ideals. By throwing away superfluous contributions, we can extract a $\Cont_{\infty}(\R^d)$-sufficient family of ideals from $\{ \Alg^{\Quasi} \}_{\Quasi \in \Quasialg}$. The decomposition of the Gelfand spectrum $\Salg$ into quasi orbits enables us to make the intuitive statement `the essential spectrum\index{spectrum!essential} depends on the behavior of the operator at infinity' into a theorem: for magnetic pseudodifferential operators, the behavior at infinity can be described by the the behavior of the algebra elements on 
\begin{align*}
	\SalgComp := \Salg \setminus \imath_{\Alg}(\R^d) 
\end{align*}
which are \emph{the points at infinity}. This closed set is left invariant by the $\R^d$-action and it inherits the compactness from $\Salg$. Furthermore, we can choose a covering of $\SalgComp$ by a collection $\Quasicover$ of maximal quasi orbits so that $\{ \Alg^{\Quasi} \}_{\Quasi \in \Quasicover}$ is a $\Cont_{\infty}(\R^d)$-sufficient family of ideals\index{$\Ideal$-sufficient family of ideals}, 
\begin{align*}
	\bigcap_{\Quasi \in \Quasicover} \Alg^{\Quasi} = \Cont_{\infty}(\R^d)
	. 
\end{align*}
This gives us a collection of abelian twisted $C^*$-dynamical systems $\bigl ( \Alg^{\Quasi} , \R^d , \theta , \omega^B \bigr )$ and a family of twisted crossed products 
\begin{align*}
	\Alg^{\Quasi} \rtimes^{\omega^B}_{\theta,\nicefrac{1}{2}} \R^d 
\end{align*}
whose intersection is the ideal of interest: 
\begin{align*}
	\bigcap_{\Quasi \in \Quasicover} \Alg^{\Quasi} \rtimes^{\omega^B}_{\theta,\nicefrac{1}{2}} \R^d = \Cont_{\infty}(\R^d) \rtimes^{\omega^B}_{\theta,\nicefrac{1}{2}} \R^d 
\end{align*}
The quotient of the original crossed product $\Alg \rtimes^{\omega^B}_{\theta,\nicefrac{1}{2}} \R^d$ with $\Alg^{\Quasi} \rtimes^{\omega^B}_{\theta,\nicefrac{1}{2}} \R^d$ can be canonically identified with 
\begin{align*}
	\bigl ( \Alg \rtimes^{\omega^B}_{\theta,\nicefrac{1}{2}} \R^d \bigr ) / \bigl ( \Alg^{\Quasi} \rtimes^{\omega^B}_{\theta,\nicefrac{1}{2}} \R^d \bigr ) \cong \bigl ( \Alg / \Alg^{\Quasi} \bigr ) \rtimes^{\omega^{B_{\Quasi}}}_{\theta,\nicefrac{1}{2}} \R^d
	&\cong \Cont(\Quasi) \rtimes^{\omega^{B_{\Quasi}}}_{\theta , \nicefrac{1}{2}} \R^d 
	\\
	&
	\cong \Alg_{\Quasi} \rtimes^{\omega^{B_{\Quasi}}}_{\theta , \nicefrac{1}{2}} \R^d 
\end{align*}
where the $2$-cocycle 
\begin{align*}
	\omega^{B_{\Quasi}} := \pi_{\Quasi} \bigl ( \Gelf(\omega^B) \bigr ) = \omega^{\pi_{\Quasi}(\Gelf(B))} 
\end{align*}
is determined by the asymptotic behavior of $B$ on that particular quasi orbit at infinity. If $\Alg$ is separable, this follows from \cite[Proposition~2.2]{Packer_Raeburn:twisted_X_products_2:1990}; the non-separable case can be proven by retracing the arguments in \cite{Georgescu_Iftimovici:Cstar_algebras_quantum_hamiltonians:2003} for the twisted case. Furthermore, $\Alg / \Alg^{\Quasi}$ can be identified with $\Alg_{\Quasi} \cong \Cont(\Quasi)$. 

We are in a position now to state the main theorems of this section: 
\begin{prop}\label{psiDO_reloaded:spectral:ess_spec:prop:ess_spec}
	Let $\Quasicover \subseteq \Quasialg$ be a covering of $\SalgComp$.
	\begin{enumerate}[(i)]
		\item There exists an injective morphism 
		\begin{align*}
			\bigl ( {\Alg \rtimes^{\omega^B}_{\theta,\nicefrac{1}{2}} \R^d} \bigr ) / \bigl ( {\Cont_{\infty}(\R^d) \rtimes^{\omega^B}_{\theta,\nicefrac{1}{2}} \R^d} \bigr ) \hookrightarrow \prod_{\Quasi \in \Quasicover} \Alg_{\Quasi} \rtimes^{\omega^{B_{\Quasi}}}_{\theta,\nicefrac{1}{2}} \R^d  
			. 
		\end{align*}
		\item If $H$ is an observable affiliated to $\Alg \rtimes^{\omega^B}_{\theta,\nicefrac{1}{2}} \R^d$ and $\Ideal := \Cont_{\infty}(\R^d) \rtimes^{\omega^B}_{\theta,\nicefrac{1}{2}} \R^d$, then we have
		\begin{align}
			\sigma_{\Ideal}(H) = \overline{\bigcup_{\Quasi \in \Quasicover} \sigma \bigl ( \pi^{\rtimes}_{\Quasi}(H) \bigr )} 
			\label{psiDO_reloaded:spectral:ess_spec:eqn:ess_spec}
		\end{align}
		where $\pi^{\rtimes}_{\Quasi} : \Alg \rtimes^{\omega^B}_{\theta , \nicefrac{1}{2}} \R^d \longrightarrow \Alg_{\Quasi} \rtimes^{\omega^{B_{\Quasi}}}_{\theta , \nicefrac{1}{2}} \R^d$ is the projection induced by $\pi_{\Quasi} : \Alg \longrightarrow \Alg_{\Quasi}$ (equation~\eqref{psiDO_reloaded:spectral:decomposition_of_salg_into_quasi_orbits:eqn:short_exact_sequence}) that localizes algebra elements to the quasi orbit $\Quasi$. 
	\end{enumerate}
\end{prop}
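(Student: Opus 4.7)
The plan is to assemble the statement out of three ingredients already present in the excerpt: the $\Ideal$-sufficiency machinery of Proposition~\ref{psiDO_reloaded:spectral:prop:ideal_essential_spectrum}, the decomposition of $\Salg$ into quasi orbits with its short exact sequences \eqref{psiDO_reloaded:spectral:decomposition_of_salg_into_quasi_orbits:eqn:short_exact_sequence}, and the identification of quotients of twisted crossed products as twisted crossed products of quotients (the Packer--Raeburn-type fact cited just above the statement). The whole proposition is then essentially the image under $\rtimes^{\omega^B}_{\theta,\nicefrac{1}{2}}\R^d$ of the corresponding classical decomposition on the commutative level.

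First I would check that $\{\Ideal_{\Quasi}\}_{\Quasi\in\Quasicover}$ with $\Ideal_{\Quasi}:=\Alg^{\Quasi}\rtimes^{\omega^B}_{\theta,\nicefrac{1}{2}}\R^d$ is an $\Ideal$-sufficient family in $\Alg\rtimes^{\omega^B}_{\theta,\nicefrac{1}{2}}\R^d$, with $\Ideal=\Cont_{\infty}(\R^d)\rtimes^{\omega^B}_{\theta,\nicefrac{1}{2}}\R^d$. Each $\Ideal_{\Quasi}$ is a closed two-sided ideal because $\Alg^{\Quasi}$ is a translation-invariant ideal of $\Alg$ and the twisted convolution respects this. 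The key point is the exactness
\begin{align*}
\bigcap_{\Quasi\in\Quasicover}\bigl(\Alg^{\Quasi}\rtimes^{\omega^B}_{\theta,\nicefrac{1}{2}}\R^d\bigr)=\Bigl(\bigcap_{\Quasi\in\Quasicover}\Alg^{\Quasi}\Bigr)\rtimes^{\omega^B}_{\theta,\nicefrac{1}{2}}\R^d=\Cont_{\infty}(\R^d)\rtimes^{\omega^B}_{\theta,\nicefrac{1}{2}}\R^d,
\end{align*}
which on a dense subset such as $L^1(\R^d;\Alg)$ is immediate from the pointwise characterization and then extends to the enveloping $C^*$-algebras by density together with the norm estimate $\snorm{f}_{\sXprodR}\leq\snorm{f}_{L^1}$. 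Maximality of the quasi orbits in $\Quasicover$ ensures that no $\Ideal_{\Quasi}$ is contained in another, completing the sufficiency.

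With this in place, part~(i) follows by applying Proposition~\ref{psiDO_reloaded:spectral:prop:ideal_essential_spectrum}(i): the canonical map
\begin{align*}
\bigl(\Alg\rtimes^{\omega^B}_{\theta,\nicefrac{1}{2}}\R^d\bigr)\big/\Ideal\hookrightarrow\prod_{\Quasi\in\Quasicover}\bigl(\Alg\rtimes^{\omega^B}_{\theta,\nicefrac{1}{2}}\R^d\bigr)\big/\Ideal_{\Quasi}
\end{align*}
is injective. It then remains to identify each factor on the right with $\Alg_{\Quasi}\rtimes^{\omega^{B_{\Quasi}}}_{\theta,\nicefrac{1}{2}}\R^d$. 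This is exactly where the short exact sequence \eqref{psiDO_reloaded:spectral:decomposition_of_salg_into_quasi_orbits:eqn:short_exact_sequence} enters: applying the (exact) functor $\cdot\rtimes^{\omega^B}_{\theta,\nicefrac{1}{2}}\R^d$ to $0\to\Alg^{\Quasi}\to\Alg\xrightarrow{\pi_{\Quasi}}\Alg_{\Quasi}\to 0$ and tracking the $2$-cocycle through the quotient gives the desired isomorphism, the $2$-cocycle passing to $\omega^{B_{\Quasi}}=\omega^{\pi_{\Quasi}(\Gelf(B))}$ as indicated in the text. Composing yields the stated monomorphism. Part~(ii) is then a direct application of Proposition~\ref{psiDO_reloaded:spectral:prop:ideal_essential_spectrum}(ii) to the observable $H$ and the $\Ideal$-sufficient family $\{\Ideal_{\Quasi}\}_{\Quasi\in\Quasicover}$, upon noting that the quotient map $\Alg\rtimes^{\omega^B}_{\theta,\nicefrac{1}{2}}\R^d\to\Alg_{\Quasi}\rtimes^{\omega^{B_{\Quasi}}}_{\theta,\nicefrac{1}{2}}\R^d$ is precisely $\pi^{\rtimes}_{\Quasi}$.

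The main technical obstacle I expect is the exactness of $\cdot\rtimes^{\omega^B}_{\theta,\nicefrac{1}{2}}\R^d$, i.e.\ the identification of the quotient of twisted crossed products as the twisted crossed product of the quotient. In the separable case this is \cite[Proposition~2.2]{Packer_Raeburn:twisted_X_products_2:1990}; the non-separable case (which is relevant, since we allow arbitrary unital $\R^d$-algebras $\Alg$) requires an adaptation of the argument of Georgescu and Iftimovici \cite{Georgescu_Iftimovici:Cstar_algebras_quantum_hamiltonians:2003} to the twisted setting, which is alluded to in the text just before the proposition. Verifying that this adaptation goes through without change, and in particular that the $2$-cocycle $\omega^{B_{\Quasi}}$ is well defined and strictly continuous on $\Alg_{\Quasi}\cong\Cont(\Quasi)$, is the only nontrivial step; all remaining pieces are bookkeeping.
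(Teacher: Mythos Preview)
Your proposal is correct and matches the paper's approach exactly: the paper's proof is the single line ``The proof follows immediately from Proposition~\ref{psiDO_reloaded:spectral:prop:ideal_essential_spectrum},'' because all the ingredients you spell out --- the $\Ideal$-sufficiency of $\{\Alg^{\Quasi}\rtimes^{\omega^B}_{\theta,\nicefrac{1}{2}}\R^d\}_{\Quasi\in\Quasicover}$, the quotient identification via Packer--Raeburn (resp.\ Georgescu--Iftimovici in the non-separable case), and the resulting $2$-cocycle $\omega^{B_{\Quasi}}$ --- were already assembled in the discussion immediately preceding the statement. Your write-up simply makes explicit what the paper left to that preamble.
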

\begin{proof}
	The proof follows immediately from Proposition~\ref{psiDO_reloaded:spectral:prop:ideal_essential_spectrum}. 
\end{proof}
If we apply the above proposition to magnetic pseudodifferential operators, we get the main result of this section: 
\begin{thm}\label{psiDO_reloaded:spectral:thm:essential_spectrum}
	Let $m > 0$, $0 \leq \rho \leq 1$ and let $\Quasicover \subset \Quasialg$ be a covering of $\SalgComp$. Then, for any real-valued elliptic element $f$ of $\Hoermrdan{m}{\rho}{0}$ one has
	\begin{align}
		\sigma_{\mathrm{ess}}(f) = \sigma_{\mathrm{ess}} \bigl ( \OpA(f) \bigr ) = \overline{\bigcup_{\Quasi \in \mathbf{Q}} \sigma \bigl ( \Op^{A_{\Quasi}} (f_{\Quasi}) \bigr )}
		, 
		\label{psiDO_reloaded:spectral:eqn:essential_spectrum}
	\end{align}
	%
	where $A$ and $A_{\Quasi}$ are continuous vector potentials for $B$ and $B_{\Quasi} := \pi_{\Quasi}(B)$, and the asymptotic symbol $f_{\Quasi} \in S^m_{\rho,0} \bigl ( {\R^d}^*;\Alg_{\Quasi}^{\infty} \bigr )$ is given as the image of $f$ through $\pi_{\Quasi}$.
\end{thm}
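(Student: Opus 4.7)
The plan is to combine the abstract algebraic decomposition of the essential spectrum (Proposition~\ref{psiDO_reloaded:spectral:ess_spec:prop:ess_spec}) with the principle of affiliation (Theorem~\ref{psiDO_reloaded:inversion_affiliation:thm:affiliation}) and the identification of the Schrödinger representation of the ``small'' twisted crossed product with the compact operators (Proposition~\ref{algebraicPOV:twisted_crossed_products:prop:basic_properties_rep}). More precisely, since $f$ is real-valued, elliptic, and of positive order, Theorem~\ref{psiDO_reloaded:inversion_affiliation:thm:affiliation} tells us that $f$ defines an observable affiliated to $\sFXprodR$, and through the isomorphism $\Fourier : \sXprodR \longrightarrow \sFXprodR$ it defines an observable affiliated to $\sXprodR = \Alg \rtimes^{\omega^B}_{\theta,\nicefrac{1}{2}} \R^d$. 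The representation $\Op^A$ is faithful (Proposition~\ref{algebraicPOV:twisted_crossed_products:prop:basic_properties_rep}~(iii)), so the algebraic spectrum $\sigma(f)$ coincides with the operator-theoretic spectrum of $\Op^A(f)$; hence the first equality $\sigma_{\mathrm{ess}}(f) = \sigma_{\mathrm{ess}}(\Op^A(f))$ holds.

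Next I would set $\Ideal := \Cont_{\infty}(\R^d) \rtimes^{\omega^B}_{\theta,\nicefrac{1}{2}} \R^d \subset \sXprodR$. By Proposition~\ref{algebraicPOV:twisted_crossed_products:prop:basic_properties_rep}~(i), $\Rep^A(\Ideal) = \mathcal{K} \bigl ( L^2(\R^d) \bigr )$, so under the faithful representation $\Rep^A$ (or equivalently $\Op^A$ after Fourier transform) the quotient $\sXprodR / \Ideal$ maps injectively into the Calkin algebra $\mathcal{B}\bigl(L^2(\R^d)\bigr)/\mathcal{K}\bigl(L^2(\R^d)\bigr)$. Consequently the $\Ideal$-essential spectrum $\sigma_{\Ideal}(f)$ equals the usual essential spectrum $\sigma_{\mathrm{ess}}(\Op^A(f))$. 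Now I would apply Proposition~\ref{psiDO_reloaded:spectral:ess_spec:prop:ess_spec}~(ii) with the covering $\Quasicover$ of $\SalgComp$, which yields
\begin{align*}
    \sigma_{\Ideal}(f) = \overline{\bigcup_{\Quasi \in \Quasicover} \sigma \bigl ( \pi^{\rtimes}_{\Quasi}(f) \bigr )}
\end{align*}
in the abstract setting, where the spectra on the right-hand side are taken in the quotient algebras $\Alg_{\Quasi} \rtimes^{\omega^{B_{\Quasi}}}_{\theta,\nicefrac{1}{2}} \R^d$.

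The remaining step is to represent each quotient faithfully and identify $\pi^{\rtimes}_{\Quasi}(f)$ with $f_{\Quasi} := \pi_{\Quasi}(f)$. Since $\Alg_{\Quasi} \subset \BCont_u(\R^d)$ is again a unital $\R^d$-algebra, Proposition~\ref{algebraicPOV:twisted_crossed_products:prop:basic_properties_rep}~(iii) applies: the Schrödinger representation $\Op^{A_{\Quasi}}$ of $\Alg_{\Quasi} \rtimes^{\omega^{B_{\Quasi}}}_{\theta,\nicefrac{1}{2}} \R^d$ on $L^2(\R^d)$, defined via any vector potential $A_{\Quasi}$ for $B_{\Quasi} = \pi_{\Quasi}(B)$ (which exists and can be chosen in $\Cont^{\infty}_{\mathrm{pol}}(\R^d)$ via the transversal gauge, since the components of $B_{\Quasi}$ lie in $\Alg_{\Quasi}^{\infty}$), is faithful, so it preserves spectra. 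A direct computation, using that $\pi_{\Quasi}$ is an algebra morphism commuting with derivations and evaluating smooth elements on the orbit through $\kappa$, shows that $f_{\Quasi}$ is a real-valued element of $S^m_{\rho,0}\bigl({\R^d}^*;\Alg_{\Quasi}^{\infty}\bigr)$ that inherits ellipticity from $f$; the principle of affiliation (Theorem~\ref{psiDO_reloaded:inversion_affiliation:thm:affiliation}) applied to $f_{\Quasi}$ then yields that $\pi^{\rtimes}_{\Quasi}(f)$ is the observable affiliated to $\Alg_{\Quasi} \rtimes^{\omega^{B_{\Quasi}}}_{\theta,\nicefrac{1}{2}} \R^d$ associated with $f_{\Quasi}$, and $\sigma \bigl ( \pi^{\rtimes}_{\Quasi}(f) \bigr ) = \sigma \bigl ( \Op^{A_{\Quasi}}(f_{\Quasi}) \bigr )$.

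The step I expect to be the most delicate is checking that $\pi^{\rtimes}_{\Quasi}$ correctly intertwines the affiliation structures: one must show that $\pi^{\rtimes}_{\Quasi}(\Phi_f(r_z)) = \Phi_{f_{\Quasi}}(r_z)$ for resolvents $r_z$ with $z \in \C \setminus \R$, which amounts to verifying that $\pi_{\Quasi}$, extended to symbols and then transported through Fourier transform, commutes with the Moyal inversion $(f - z)^{(-1)_B} \mapsto (f_{\Quasi} - z)^{(-1)_{B_{\Quasi}}}$. This should follow from the continuity of $\pi_{\Quasi}$ as a morphism, from the fact that both sides define the unique Moyal inverses (via the $\Psi^*$-algebra structure and Proposition~\ref{psiDO_reloaded:inversion_affiliation:prop:inversion_Hoermander_m}), and from the compatibility of the asymptotic expansion of $\magW$ with the projection $\pi_{\Quasi}$ (Theorem~\ref{psiDO_reloaded:PsiD_theory:magnetic_composition:thm:2_parameter_expansion_composition_of_anisotropic_symbols}), but care is needed because the $2$-cocycle itself changes under $\pi_{\Quasi}$, so one really works with morphisms of distinct twisted crossed products rather than within a fixed algebra.
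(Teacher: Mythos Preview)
Your proposal is correct and follows essentially the same route as the paper's proof: affiliate $f$ to $\sFXprodR$ via Theorem~\ref{psiDO_reloaded:inversion_affiliation:thm:affiliation}, identify $\sigma_{\Ideal}(f)$ with $\sigma_{\mathrm{ess}}(\Op^A(f))$ using that $\Rep^A$ is faithful and maps $\Cont_{\infty}(\R^d) \rtimes^{\omega^B}_{\theta,\nicefrac{1}{2}} \R^d$ onto the compacts (Proposition~\ref{algebraicPOV:twisted_crossed_products:prop:basic_properties_rep}), invoke the abstract decomposition of Proposition~\ref{psiDO_reloaded:spectral:ess_spec:prop:ess_spec}(ii), and then faithfully represent each quotient via $\Op^{A_{\Quasi}}$.

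The only difference is one of emphasis: you single out the compatibility $\pi^{\rtimes}_{\Quasi}\bigl((f-z)^{(-1)_B}\bigr) = (f_{\Quasi}-z)^{(-1)_{B_{\Quasi}}}$ as the delicate point and sketch a resolution via uniqueness of Moyal inverses and continuity of the morphism, whereas the paper dispatches this in one line by declaring that ``$f_{\Quasi}$ coincides with $\Fourier\pi^{\rtimes}_{\Quasi}(f)$.'' Your caution is justified---since $f-z \notin \sFXprodR$, one cannot literally push it through the $C^*$-morphism---but the resolution is lighter than you suggest: because $\pi^{\rtimes}_{\Quasi}$ is a $*$-morphism of $C^*$-algebras intertwining the two Moyal products, the image of the resolvent family $\{(f-z)^{(-1)_B}\}_{z\in\C\setminus\R}$ is automatically a resolvent family in $\Fourier\Calg^{B_{\Quasi}}_{\Alg_{\Quasi}}$, and it remains only to identify which observable it is. That identification follows from checking on the dense set $\Fourier^{-1}L^1(\R^d;\Alg)$ that $\Fourier^{-1}\circ\pi^{\rtimes}_{\Quasi}\circ\Fourier$ acts as $\pi_{\Quasi}$ on symbols (this is what the paper means by extending the morphism), so no appeal to asymptotic expansions or the $\Psi^*$-structure is needed here.
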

\begin{proof}
	In Proposition~\ref{algebraicPOV:twisted_crossed_products:prop:basic_properties_rep}, it was proven that $\OpA$ and $\Rep^A$ are faithful and irreducible as $\Cont_{\infty}(\R^d) \subseteq \Alg$. This means the essential spectrum of $\OpA(f)$ in the functional analytic sense coincides with $\sigma_{\mathrm{ess}}(f)$ in the algebraic sense. In particular, it does not depend on the choice of vector potential. 
	
	For any $g \in \Fourier^{-1} L^1(\R^d ; \Alg)$, the morphism 
	\begin{align*}
		g \mapsto \Fourier^{-1} \bigl ( \pi_{\Quasi}^{\rtimes} \bigl ( \Fourier g \bigr ) \bigr ) \in \Fourier^{-1} L^1(\R^d ; \Alg)
	\end{align*}
	extends to a morphism $\Fourier \pi_{\Quasi}^{\rtimes} : \sFXprodR \longrightarrow \Fourier \Calg^{B_{\Quasi}}_{\Alg_{\Quasi}}$. Equation~\eqref{psiDO_reloaded:spectral:eqn:essential_spectrum} can now be rewritten as 
	\begin{align*}
		\sigma_{\Ideal}(f) = \overline{\bigcup_{\Quasi \in \Quasicover} \sigma \bigl ( \Fourier \pi_{\Quasi}^{\rtimes}(f) \bigr )} 
	\end{align*}
	for any observable $f$ which is affilated to $\sFXprodR$. The ideal of interest $\Ideal \subset \sFXprodR$ in this case is the image of $\Cont_{\infty}(\R^d) \rtimes^{\omega^B}_{\theta , \nicefrac{1}{2}} \R^d$ through $\Fourier^{-1}$. Real-valued, elliptic, anisotropic symbols of positive order are affiliated to $\sFXprodR$ by Theorem~\ref{psiDO_reloaded:inversion_affiliation:thm:affiliation}. The algebras $\sFXprodR$ and $\Fourier \Calg^{B_{\Quasi}}_{\Alg_{\Quasi}}$ are faithfully represented by $\Op^A$ and $\Op^{A_{\Quasi}}$, respectively, and so we conclude 
	\begin{align*}
		\sigma_{\mathrm{ess}}(f) &\equiv \sigma_{\Ideal}(f) = \overline{\bigcup_{\Quasi \in \Quasicover} \sigma \bigl ( \Fourier \pi_{\Quasi}^{\rtimes}(f) \bigr )} 
		= \sigma_{\mathrm{ess}} \bigl ( \OpA(f) \bigr ) = \overline{\bigcup_{\Quasi \in \mathbf{Q}} \sigma \bigl ( \Op^{A_{\Quasi}} (f_{\Quasi}) \bigr )}
	\end{align*}
	where $f_{\Quasi}$ coincides with $\Fourier \pi_{\Quasi}^{\rtimes}(f)$. 
\end{proof}
This has an immediate consequence: if $\Cont_{\infty}(\R^d)$ is not contained in $\Alg$ (which is equivalent to saying $\Alg \cap \Cont_{\infty}(\R^d) = \{ 0 \}$), then the spectrum is purely essential. 
\begin{cor}
	If $\Cont_{\infty}(\R^d) \not\subseteq \Alg$, then observables $f$ affiliated to $\sFXprodR$ or $\sXprodR$, respectively, have purely essential spectrum, 
	\begin{align*}
		\sigma(f) = \sigma_{\mathrm{ess}}(f) 
		. 
	\end{align*}
\end{cor}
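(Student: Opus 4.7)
The plan is to show that under the hypothesis $\Cont_{\infty}(\R^d) \not\subseteq \Alg$, the would-be ideal of "compact" elements inside $\sXprodR$ (equivalently, inside $\sFXprodR$) is already the zero ideal, which then forces $\sigma(f) = \sigma_{\mathrm{ess}}(f)$ automatically.

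First I would apply the dichotomy of Lemma~\ref{algebraicPOV:twisted_crossed_products:lem:C0_in_or_out}: since $\Alg$ is a translation-invariant $C^*$-subalgebra of $\BCont_u(\R^d)$ and $\Cont_{\infty}(\R^d) \not\subseteq \Alg$, one has $\Cont_{\infty}(\R^d) \cap \Alg = \{0\}$. Consequently, the canonical projection $\pi : \BCont_u(\R^d) \longrightarrow \BCont_u(\R^d)/\Cont_{\infty}(\R^d)$ is injective when restricted to $\Alg$. Next, I embed $\sXprodR$ into the ambient twisted crossed product $\BCont_u(\R^d) \rtimes^{\omega^B}_{\theta,\nicefrac{1}{2}} \R^d$, inside which $\Ideal_0 := \Cont_{\infty}(\R^d) \rtimes^{\omega^B}_{\theta,\nicefrac{1}{2}} \R^d$ is a closed two-sided ideal, and whose Schrödinger representation coincides with $\mathcal{K}\bigl(L^2(\R^d)\bigr)$ by Proposition~\ref{algebraicPOV:twisted_crossed_products:prop:basic_properties_rep}(i). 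The translation-equivariant inclusion $\Alg \hookrightarrow \BCont_u(\R^d)$ and its quotient $\pi$ induce, by functoriality of the twisted crossed product for abelian (hence amenable) groups, a morphism $\pi^{\rtimes} : \BCont_u(\R^d) \rtimes^{\omega^B}_{\theta,\nicefrac{1}{2}} \R^d \longrightarrow (\BCont_u(\R^d)/\Cont_{\infty}(\R^d)) \rtimes^{\omega^B}_{\theta,\nicefrac{1}{2}} \R^d$ with kernel exactly $\Ideal_0$, whose restriction to $\sXprodR$ remains injective because $\pi\vert_\Alg$ is.

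It follows that $\sXprodR \cap \Ideal_0 = \ker\bigl(\pi^{\rtimes}\vert_{\sXprodR}\bigr) = \{0\}$, and applying faithfulness of $\Rep^A$ (Proposition~\ref{algebraicPOV:twisted_crossed_products:prop:basic_properties_rep}(iii)) gives $\Rep^A(\sXprodR) \cap \mathcal{K}\bigl(L^2(\R^d)\bigr) = \{0\}$. For an observable $f$ affiliated to $\sFXprodR$ and any $\lambda \in \sigma(f)$, pick $\varphi \in \Cont_{\infty}(\R)$ with $\varphi(\lambda) \neq 0$; then $\varphi(f) \neq 0$ in $\sFXprodR$, so $\Op^A\bigl(\varphi(f)\bigr) \not\in \mathcal{K}\bigl(L^2(\R^d)\bigr)$. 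This is the defining condition for $\lambda \in \sigma_{\mathrm{ess}}\bigl(\Op^A(f)\bigr)$, yielding $\sigma(f) \subseteq \sigma_{\mathrm{ess}}(f)$; the reverse inclusion is automatic.

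The main obstacle I anticipate is justifying that $\pi^{\rtimes}\vert_{\sXprodR}$ is injective, i.e.\ verifying exactness of the twisted-crossed-product functor along the injection $\Alg \hookrightarrow \BCont_u(\R^d)$. For $\R^d$-actions this is standard (amenability ensures the full and reduced crossed products coincide and that injectivity is preserved), but it should be spelled out; alternatively, one may bypass functoriality by exhibiting a faithful covariant representation of $(\Alg,\theta,\omega^B,\R^d)$ that factors through the quotient dynamical system, as in the construction of the regular representation used in Proposition~\ref{algebraicPOV:twisted_crossed_products:prop:basic_properties_rep}(iii).
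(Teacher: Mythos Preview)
Your argument is correct and is the natural spelling-out of what the paper leaves as an ``immediate consequence'' (no proof is given there): Lemma~\ref{algebraicPOV:twisted_crossed_products:lem:C0_in_or_out} forces $\Alg \cap \Cont_\infty(\R^d)=\{0\}$, hence $\sXprodR \cap \Ideal_0=\{0\}$ in the ambient $\BCont_u(\R^d)$-crossed product, so $\Rep^A(\sXprodR)\cap\mathcal{K}\bigl(L^2(\R^d)\bigr)=\{0\}$ and the Calkin projection is injective on the image. The obstacle you flag is genuine but routine for the amenable group $\R^d$: exactness of the twisted crossed product functor is covered by the Packer--Raeburn references cited in the paper, and the injectivity of $\sXprodR \hookrightarrow \BCont_u(\R^d)\rtimes^{\omega^B}_{\theta,\nicefrac{1}{2}}\R^d$ can alternatively be read off directly from Proposition~\ref{algebraicPOV:twisted_crossed_products:prop:basic_properties_rep}(iii), since the Schr\"odinger representation is faithful on both algebras and compatible with the $L^1$-level inclusion.
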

\begin{remark}
	It is now easy to compute the kernel of any of the representations $\Rep_{\kappa}$. Indeed, for $\Quasi = \Quasi_{\kappa}$, the representation $\Rep_{\kappa}$ can be written as the composition between the canonical projection
	\begin{align*}
		\Alg \rtimes^{\omega^B}_{\theta , \nicefrac{1}{2}} \R^d 
		\mapsto \Alg \rtimes^{\omega^B}_{\theta , \nicefrac{1}{2}} \R^d \big / \Alg^{\Quasi} \rtimes^{\omega^B}_{\theta , \nicefrac{1}{2}} \R^d 
		\cong \bigl ( \Alg / {\Alg^{\Quasi}} \bigr ) \rtimes^{\omega^B}_{\theta , \nicefrac{1}{2}} \R^d 
		\cong \Cont(\Quasi) \rtimes^{\omega^{B_{\Quasi}}}_{\theta , \nicefrac{1}{2}} \R^d
	\end{align*}
	and a Schrödinger representation defined by the transversal gauge. It follows that the kernel of $\Rep_{\kappa}$ is equal to $\Alg^{\Quasi} \rtimes^{\omega^B}_{\theta , \nicefrac{1}{2}} \R^d$. Theorem~\ref{psiDO_reloaded:spectral:thm:essential_spectrum} can then be rephrased as: Let $\{ \kappa_j \}_{j \in \Index}$ be a subset of $\SalgComp$ such that $\{ \Quasi_{\kappa_j} \}_{j \in \Index}$ is a covering of $\SalgComp$ by maximal quasi-orbits, then
	\begin{align} 
		\sigma_{\mathrm{ess}} \bigl ( \OpA(f) \bigr ) = \sigma_{\Ideal}(f)
		= \overline{\bigcup_{j \in \Index} \sigma \bigl ( \Op_{\kappa_j}(f) \bigr )} 
		. 
	\end{align}
\end{remark}
\begin{remark}
	Non-propagation results easily follow from this algebraic framework. They have been explicitly exhibited in the non-magnetic case in \cite{Amrein_Mantoiu_Purice:propagation_properties:2002} and in the magnetic case in \cite{Mantoiu_Purice_Richard:Cstar_algebraic_framework:2007}. In these references, the authors were mainly concerned with generalized Schrödinger operators and their results were stated for these operators. But the proof relies only on the $C^*$-algebraic framework, and the results extend \emph{mutatis mutandis} to the classes of symboles introduced in the present paper. For shortness, we do not present these propagation estimates here, but statements and proofs can easily be mimicked from these references.
\end{remark}
%

\chapter{Outlook} 
\label{further_research}
%
There are quite a few directions for further research and we would like to shortly discuss some of them in no particular order.

\section{Magnetic Weyl calculus for other gauge groups} 
\label{further_research:extension_other_gauge_groups}
Electromagnetism is a $U(1)$ gauge theory \cite{Marsden_Ratiu:intro_mechanics_symmetry:1999} and as such particularly simple: the gauge group is abelian and one-dimensional. If we replace $U(1)$ with a non-commu\-tative gauge group, what would the associated Weyl calculus look like? The simplest case is to consider $SU(2)$ or $SO(3)$ which may be helpful in the analysis of particles moving in $\R^d$ with additional internal degrees of freedom. This may offer an alternative point of view on molecular dynamics or serve as a toy model for non-commutative gauge theories such as dynamics on shape space \cite{Littlejohn_Reinsch:dynamics_shape_space:1997,Sattlegger:dynamics_shape_space:2010}.

\paragraph{Multiband molecular dynamics} 
\label{further_research:extension_other_gauge_groups:two_level_hamiltonians}
The semiclassical approximation \cite{Born_Oppenheimer:BO_approximation:1927} works tre\-mendously well in molecular dynamics calculations in theoretical chemistry. Here, the cores are modeled as classical particles which move in effective potentials generated by the electrons. Over the years, this approximation has been subject to a whole phletora of publications (\eg \cite{Spohn_Teufel:time-dep_BO_approx:2001,Martinez_Sordoni:Born-Oppenheimer:2002,Sordoni:Born-Oppenheimer:2003,PST:sapt:2002,PST:Born-Oppenheimer:2007}) and the mechanism, adiabatic decoupling, has been well-understood. One of the remaining problems is to figure out how to mix classical dynamics with quantum mechanics in the event of band crossings. Away from band crossings, classical dynamics that include quantum corrections describe the dynamics very well. Band transitions in the vicinity of conical crossings, however, fall out of the simple classical framework. There are algorithms (\eg hopping algorithms \cite{Lasser_Teufel:hopping:2005}) which mingle classical dynamics away from band crossings with quantum dynamics in the neighborhood of conical crossings (particles may hop from one band to the other). The problem is the lack of control over the phase: if the particle jumps bands, what is the phase factor associated to that? 

We propose an alternative point of view: the starting point is the hamiltonian 
\begin{align}
	H^{\mathcal{A}} = \tfrac{1}{2} (- i \eps \nabla_x - \eps \mathcal{A})^2 + V(\hat{x}) 
	\label{further_research:extension_other_gauge_groups:eqn:Born_Oppenheimer_hamiltonian}
\end{align}
acting on $L^2(\R^d , \C^2)$ where the potential $V$ and the \emph{Berry connection} $\mathcal{A}$ are suitable symmetric $2 \times 2$ matrix-valued functions. In a suitable basis, the potential $V = \bigl ( E_{\alpha} \delta_{\alpha \beta} \bigr )_{1 \leq \alpha , \beta \leq 2}$ consists of the energy band functions associated to the relevant bands $E_1$ and $E_2$, and the Berry connection $\mathcal{A} := i \bigl ( \sscpro{\varphi_{\alpha}}{\nabla_x \varphi_{\beta}}_{L^2(\R^{d N})} \bigr )_{1 \leq \alpha , \beta \leq 2}$ is determined by the electronic eigenfunctions associated to the relevant bands. The latter plays the same role as the magnetic vector potential $A$ and thus, the physics is determined by the \emph{Berry curvature} 
\begin{align*}
	\Omega := \dd \mathcal{A} + \mathcal{A} \wedge \mathcal{A} 
	. 
\end{align*}
The relation between connection and curvature contains the extra term $\mathcal{A} \wedge \mathcal{A}$ since the gauge group $SU(2)$ is non-commutative. 

Our ansatz is that we would like to think of hamiltonian~\eqref{further_research:extension_other_gauge_groups:eqn:Born_Oppenheimer_hamiltonian} as pseudomagnetic quantization of 
\begin{align*}
	h(x, \xi) = \tfrac{1}{2} \xi^2 \, \id_{\C^2} + V(x) 
	, 
\end{align*}
that is we would like to find a quantization procedure which maps $x$ onto multiplication with $x \otimes \id_{\C^2}$ on $L^2(\R^d,\C^2)$ and $\xi$ onto $-i \eps \nabla_x \otimes \id_{\C^2} + \eps \mathcal{A}(\hat{x})$. Once we have such a calculus at hand (Weyl quantization, dequantization/Wigner transform and Weyl product), we can formulate a semiclassical limit via an Egorov-type theorem. We hope to contribute to an improved understanding of the dynamics in the vicinity of band crossings and answer the following question: \emph{Is it possible to find classical equations of motion that include band transitions? } 

We envision the following strategy: 
\begin{enumerate}[(i)]
	\item Reformulate classical dynamics of a particle in a magnetic field as dynamics on the extended phase space $T^* (\R^d \times U(1))$. Physical observables must be gauge-\emph{in}variant functions, \ie those which do not depend on the $U(1)$ variable. 
	\item Make a phase space reduction in the sense of Marsden, Montgommery and Ratiu \cite{Marsden_Montgommery_Ratiu:phase_space_reduction:1990} and Marsden and Ratiu \cite[Chapter~6.7]{Marsden_Ratiu:intro_mechanics_symmetry:1999} by dividing out the gauge group $U(1)$. 
	\item Find a magnetic Weyl quantization on the extended phase space. 
	\item Repeat those three steps for the gauge group $SU(2)$. 
\end{enumerate}
Dealing with the differences means solving conceptual problems, \eg we need to consistently take into account that physical quantities are no longer gauge-\emph{in}variant, but rather gauge-\emph{co}variant. Some other issues which need to be addressed include: 
%
\begin{enumerate}[(i)]
	\item What is the relation between Berry curvature $\Omega = \dd \mathcal{A} + \mathcal{A} \wedge \mathcal{A}$ and possible choices of vector potentials $\mathcal{A}$? The non-trivial topological phase factors associated to band crossings are reminiscent of an Aharonov-Bohm-type effect and we do not expect the cohomology to be trivial. In other words, we not only need to specify $\Omega$, but also the `magnetic fluxes' associated to band crossings have to be part of the input. 
	\item What are the natural representations and what are the building block operators? Are they all equivalent? 
	\item What do the classical equations of motion look like? What is the meaning of offdiagonal elements? 
\end{enumerate}
%

\paragraph{Dynamics on shape space} 
\label{further_research:extension_other_gauge_groups:dynamics_on_shape_space}
Based on a publication by Littlejohn and Reinsch \cite{Littlejohn_Reinsch:dynamics_shape_space:1997}, D.~Sattlegger \cite{Sattlegger:dynamics_shape_space:2010} reiterated how to formulate dynamics on `shape space' as a gauge theory. The idea is simple: if we start with a non-relativistic $N$ particle system where particles interact via rotationally invariant pair potentials, there is an additional symmetry apart from total translations, namely total rotations. If each of the particles moves in $\R^3$, then the space which describes the shape of the molecule or body is given by $\R^{3(N-1)} / SO(3)$ where the group of rotations acts `democratically' on each factor of $\R^3$. Hence $\R^{3(N-1)} \longrightarrow \R^{3(N-1)} / SO(3)$ describes a bundle. If we exclude points of high symmetry, we arrive at a true fiber bundle\footnote{At points of high symmetry, the fiber is not all of $SO(3)$, but only a subgroup. } This fiber bundle is a quite complicated manifold: if $N \geq 4$, it is nontrivial [citation], \ie it cannot be written as a direct product of some space $M$ and $SO(3)$. This complexity is good for otherwise cats would not be able to land on their feet if they are dropped feet up. 

This system is a particularly good model system if one wants to understand non-commutative gauge groups and dynamics on manifolds. A simplified version may be a first step: we replace the shape space bundle by the direct product $R^{3K} \times SO(3)$ for some suitable $K \in \N$. Since $SO(3)$ shares the Lie algebra with $SU(2)$, this case is closely related to the two-level system mentioned before. 

\medskip

\noindent
Once these points have been properly understood, generalizations to more general gauge groups such as $U(N)$ and $SU(N)$ seem straightforward. 


\section{Electromagnetic Weyl calculus} 
\label{further_research:electromagnetic_weyl_calculus}
Magnetic Weyl calculus does not treat electric and magnetic fields symmetrically. If one wants to effectively deal with time-dependent systems, classical mechanics must be studied on \emph{extended phase space} $T^* (\R \times \R^d)$ where the additional coordiate in cofiguration space is time and its conjugate energy is necessary since there is no conservation of energy in time-dependent systems: one can extract energy from and put energy into the system. The magnetic field must be replaced with the electromagnetic field tensor 
\begin{align*}
	\mathcal{F} = \left (
	\begin{matrix}
		0 & -E \\
		E & B \\
	\end{matrix}
	\right ) = \dd \mathcal{A} 
\end{align*}
which is the exterior derivative of $\mathcal{A} = (\phi , A)$. Interesting new structures such as contact manifolds may appear \cite{Marsden_Ratiu:intro_mechanics_symmetry:1999}. This framework would also allow for electric fields with components in $\BCont^{\infty}$, \eg \emph{constant} electric fields. 


\section{Closer inspection of the semiclassical limit} 
\label{further_research:semiclassical_limit}
The assumptions in the proof of Theorem~\ref{asymptotics:semiclassical_limit:thm:semiclassical_limit_observables} are not optimal: if $B \neq \mathrm{constant}$, the standard proof presented in \cite{Robert:tour_semiclassique:1987} no longer ensures that derivatives of the classical flow are bounded. This is due to the crudeness of the method: taking the operator norm (and then applying the Gronwall lemma) destroys the important piece of information that magnetic fields do not change the energy of a system, they only change the particle's direction. It would be interesting to see under what circumstances the derivatives of the classical flow are bounded using more refined techniques. 

Furthermore, it will be interesting to see which results of non-magnetic semiclassical theory carry over after suitable modification. One example for such a result concerns the dynamics of molecules whose interaction may include Coulomb singularities: Castella, Jecko and Knauf \cite{Castella_Jecko_Knauf:semiclassical_estimates_singularities:2008} have elegantly proven the equivalence of the non-trapping condition on orbits of a certain energy to a resolvent estimate for systems with Coulombic singularities. What conditions have to be placed on the magnetic field so that this equivalence carries over? 


\section{Extension to algebra-valued symobls} 
\label{further_research:algebra_valued_symobls}
A lot of machinery introduced in Chapter~\ref{psiDO_reloaded} carries over if $\Alg$ is a general unital abelian algebra with an $\R^d$-action $\theta$. If this action is ergodic, then there exist invariant measures on the Gelfand spectrum $\Salg$ which are concentrated on one quasi orbit. This setting may prove useful in the study of ergodic random Schrödinger operators with \emph{random magnetic fields}. The analysis of operators with random magnetic fields seems to be very challenging as there are only a handful of publications (\eg \cite{Leschke_Warzel_Weichlein:energetic_dynamic_properties_random_B:2006}) compared to the cornucopia of results for the case of random potentials. Here, the natural ergodic measure on $\Salg$ singles out the orbit of `typical configurations' which are assumed by the system with probability $1$. The families of representations correspond to the realizations of the abstract pseudodifferential operator as an operator on, say, $L^2(\R^d)$ or $\ell^2(\Z^d)$. If the probability space (or rather: the topological dynamical system associated to the probability space) has a nontrivial topology, \eg nested quasi orbits $\Quasi' \subsetneq \Quasi$, it would be interesting to see the interplay between `almost sure' properties of the system and the topological structure of the topological dynamical system. First small steps in this direction were taken in \cite{Belmonte_Lein_Mantoiu:mag_twisted_actions:2010} and it stands to reason that the program of \cite{Lein_Mantoiu_Richard:anisotropic_mag_pseudo:2009} can be repeated in the more general framework of \cite{Belmonte_Lein_Mantoiu:mag_twisted_actions:2010}. 


\section{Modulation spaces} 
\label{further_research:modulation_spaces}
%
The extension of magnetic Weyl calculus rested on duality techniques using Schwartz functions and tempered distributions. The objects of interest are typically in the magnetic Moyal algebra $\mathcal{M}^B(\Pspace)$ whose quantization are exactly those operators which are in $\mathcal{L} \bigl ( \Schwartz(\R^d) \bigr ) \cap \mathcal{L} \bigl ( \Schwartz'(\R^d) \bigr )$. This places rather strong conditions on the regularity of the elements of $\mathcal{M}^B(\Pspace)$ since a countably \emph{infinite} number of seminorms needs to be controlled. If we are able to find a \emph{Banach space} $\mathcal{D}(\R^d)$ which sandwiches $L^2(\R^d)$ between itself and its dual, $\mathcal{D}(\R^d) \subset L^2(\R^d) \subset \mathcal{D}'(\R^d)$ and is compatible with the Fourier transform, then this would not only generalize the framework of magnetic Weyl calculus, it would simplify many technical issues because only one norm (\ie at most finitely many seminorms) need to be controlled. Hence, it stands to reason we can extend some results of magnetic Weyl calculus to symbols with much less regularity. 

In case of regular pseudodifferential theory, the answer was given by Feichtinger \cite{Feichtinger:modulation_spaces:1980} (see also \cite{Feichtinger:modulation_spaces:2002} for a more modern account) which was then modified by Măntoiu and Purice \cite{Mantoiu_Purice:modulation_mapping:2010}. It would be interesting to look for applications in the realm of mathematical physics which fall out of the usual pseudodifferential framework because of a lack of regularity. 


\appendix
\chapter{Oscillatory integrals} 
\label{appendix:oscillatory_integrals}
To extend integral expressions such as the Moyal product to broader classes of functions, it is necessary to introduce the concept of oscillatory integrals which give sense to expressions of the form 
\begin{align}
	g(X) := \int_{\R^{d'}} \dd Y \, e^{i \gamma(X,Y)} \, f(X,Y) 
	\label{appendix:oscillatory_integrals:eqn:osc_int} 
\end{align}
where $\gamma : \R^d \times \R^{d'} \longrightarrow \R$ is a phase factor and $f : \R^d \times \R^{d'} \longrightarrow \C$ a function. We will always assume that $\gamma$ is homogeneous of degree $1$, \ie for all $\lambda \geq 0$, we have $\gamma(\lambda X , Y) = \lambda \, \gamma(X,Y) = \gamma(X , \lambda Y)$. If $f(X,\cdot)$ is integrable for all $X \in \R^d$, then $g(X)$ exists for all $X$. If in addition $\babs{f(X,\cdot)}$ is in $L^1(\R^{d'})$ locally uniformly in $X$, then $X \mapsto g(X)$ is continuous by Dominated Convergence. What can one say if $f(X,\cdot) \not\in L^1(\R^{d'})$? 

Let us start with a simple example: from the introductory course on analysis, we know that the series $\sum_{k = 1}^{\infty} (-1)^k \tfrac{1}{k}$ is convergent, but not absolutely convergent. The reason is that the terms alternate signs and cancel each other. Heuristically, we can think of this series as a sum over 
\begin{align*}
	\frac{1}{k} - \frac{1}{k+1} = \frac{k+1 - k}{k(k+1)} = \frac{1}{k(k+1)} \sim \frac{1}{k^2}
\end{align*}
so that the decay is `faster' than naïvely expected.\footnote{Of course, it is in general not permissible to reorder terms in a series that is not absolutely convergent. } Similarly, even if the integrand of \eqref{appendix:oscillatory_integrals:eqn:osc_int} is not absolutely integrable, the oscillations caused by $e^{i \gamma}$ may remedy the situation and allow us to make sense of this expression after all. 

The idea is to interpret it as a tempered distribution.\footnote{Some authors, \eg Hörmander \cite{Hoermander:analysis_PDO1:1983,Hoermander:Fourier_integral_operators_1:1971}, use $\Cont^{\infty}_c(\R^d)$ as space of test functions rather than $\Schwartz(\R^d)$. Since we are interested in the general strategy rather than deep results, we shall ignore this and always work with Schwartz functions and tempered distributions. } For the remainder of the section, let us assume in addition that $\gamma$ and $f$ are $\Cont^{\infty}$ functions of their variables. Hörmander has shown \cite[Lemma~1.2.1]{Hoermander:Fourier_integral_operators_1:1971} that if $\gamma$ has no critical points for $X \neq 0$, then there exists a first-order differential operator $L$ in $X$ and $Y$ such that its adjoint satisfies 
\begin{align*}
	L^* e^{i \gamma} = e^{i \gamma} 
	. 
\end{align*}
In many applications, it is more convenient to use either a first-order pseudodifferential operator or a higher-order operator instead, say $L^2$. We will give some concrete examples later on. Then the distribution $g$ applied to a test function $\varphi \in \Schwartz(\R^d)$ 
\begin{align}
	\bigl ( g , \varphi \bigr ) &= \int_{\R^d} \dd X \int_{\R^{d'}} \dd Y \, e^{i \gamma(X,Y)} \, f(X,Y) \, \varphi(X) 
	\notag \\
	&= \int_{\R^d} \dd X \int_{\R^{d'}} \dd Y \, \bigl ( L^* e^{i \gamma(X,Y)} \bigr ) \, f(X,Y) \, \varphi(X) 
	\notag \\
	&= \int_{\R^d} \dd X \int_{\R^{d'}} \dd Y \, e^{i \gamma(X,Y)} \, L \bigl ( f(X,Y) \, \varphi(X) \bigr ) 
\end{align}
leads to a formal integral expression where we may insert powers of $L^*$ and use partial integration. If $L^N (f \, \varphi)$ is integrable in $X$ and $Y$ for $N \in \N_0$ large enough, then the oscillatory integral on the left-hand side 
\begin{align}
	\int_{\R^d} \dd X &\int_{\R^{d'}} \dd Y \, e^{i \gamma(X,Y)} \, f(X,Y) \, \varphi(X) := 
	\notag \\
	&:= \int_{\R^d} \dd X \int_{\R^{d'}} \dd Y \, e^{i \gamma(X,Y)} \, L^N \bigl ( f(X,Y) \, \varphi(X) \bigr ) 
	\label{appendix:oscillatory_integrals:eqn:definition_osc_int}
\end{align}
%
is \emph{defined} by the right-hand side which is an \emph{ordinary absolutely convergent} integral. To distinguish oscillatory integrals from ordinary integrals, some authors denote them by $\mathrm{O}_s-\int$ or $\mathrm{O}_s$ \cite{Kumanogo:pseudodiff:1981}, but but we shall not do so. Instead, we will either say explicitly in what sense the integration is to be understood or rely on the reader to interpret formulas properly. By extension, we define the distribution $g$ in terms of equation~\eqref{appendix:oscillatory_integrals:eqn:definition_osc_int}. Oscillatory integrals have very convenient properties: 
\begin{enumerate}[(i)]
	\item We may interchange limits and oscillatory integration. If the integrand depends continuously on a parameter, the oscillatory is also continuous in that parameter. 
	\item We may interchange integration with respect to a parameter and oscillatory integration. Hence, if the integrand depends smoothly on a parameter, for instance, and the oscillatory integrals of the derivatives exist, then the oscillatory integral depends smoothly on the parameter. 
	\item A version of Fubini's theorem holds and we may interchange the order of integration. 
\end{enumerate}
%
In other words, we are allowed to do what our analysis professors have taught us not to do! These properties are essential when working with oscillatory integrals. We refer the interested reader to \cite[Chapter~7.8]{Hoermander:analysis_PDO1:1983}, \cite[Chapter~1]{Hoermander:Fourier_integral_operators_1:1971} and \cite[Chapter~1]{Kumanogo:pseudodiff:1981}. 

To give a flavor of the type of manipulations involved, we give two examples: 
\begin{example}
	In the sense of oscillatory integrals, the Fourier transform of the constant function is up to a factor equal to $\delta \in \Schwartz'(\R^d)$, \ie 
	\begin{align*}
		\int_{\R^d} \dd x \, e^{i x \cdot \xi} = (2\pi)^d \, \delta(\xi) 
		. 
	\end{align*}
	This manipulation can be made rigorous either directly by rewriting the equation in terms of the duality bracket. Alternatively, we pick $\varphi , \chi \in \Schwartz(\R^d)$ such that $\chi(0) = 1$ and insert $\chi(\nicefrac{\cdot}{\eps})$ into the integral, 
	\begin{align*}
		\int_{\R^d} \dd \xi &\int_{\R^d} \dd x \, e^{i x \cdot \xi} \, \chi(\nicefrac{x}{\eps}) \, \varphi(\xi) = \int_{\R^d} \dd \xi \, \eps^d \, \hat{\chi}(- \eps \xi) \, \varphi(\xi) 
		\\
		&= (2\pi)^{\nicefrac{d}{2}} \, \int_{\R^d} \dd \xi \, \hat{\chi}(- \xi) \, \varphi(\nicefrac{\xi}{\eps}) 
		\xrightarrow{\eps \searrow 0} (2\pi)^{\nicefrac{d}{2}} \, \varphi(0) \, \int_{\R^d} \dd \xi \, \hat{\chi}(-\xi) 
		\\
		&= (2\pi)^d \, \varphi(0) \, \chi(0) 
		= (2\pi)^d \, \varphi(0) = \bigl ( (2\pi)^d \delta , \varphi \bigr ) 
		. 
	\end{align*}
	Whenever we may integrate out such an exponential factor, we will not detail the above regularization argument, it is implicitly understood. 
\end{example}
\begin{example}
	Let $f \in \Cont^{\infty}_{\mathrm{pol}}(\R^d)$ be polynomially bounded and smooth whose growth is no worse than $\sexpval{x}^m$ for some $m \in \R$, \ie there exists $C > 0$ such that 
	\begin{align*}
		\babs{f(x)} \leq C \, \sexpval{x}^m 
	\end{align*}
	holds for all $x \in \R^d$. Then we choose $L_{\xi} := 1 - \Delta_{\xi}$ as $L$ operator and the relation 
	\begin{align*}
		\sexpval{x}^{-2} L_{\xi} e^{- i x \cdot \xi} = e^{- x \cdot \xi}
	\end{align*}
	allows us to produce factors of $\sexpval{x}^{-2N}$. If we choose $N > \tfrac{1}{2}(m+d)$, then 
	\begin{align*}
		(\Fourier f)(\xi) &= \frac{1}{(2\pi)^{\nicefrac{d}{2}}} \int_{\R^d} \dd x \, e^{- i x \cdot \xi} \, f(x) 
		= \frac{1}{(2\pi)^{\nicefrac{d}{2}}} \int_{\R^d} \dd x \, \bigl ( \sexpval{x}^{-2N} \, L_{\xi}^N e^{- i x \cdot \xi} \bigr ) \, f(x) 
		\\
		&= L_{\xi}^N \frac{1}{(2\pi)^{\nicefrac{d}{2}}} \int_{\R^d} \dd x \,  e^{- i x \cdot \xi} \, \sexpval{x}^{-2N} \, f(x) 
		= L_{\xi}^N \bigl (\Fourier (\sexpval{x}^{-2N} f) \bigr )(\xi)
	\end{align*}
	holds in the sense of oscillatory integrals. The right-hand side is integrable by choice of $N$. Similarly, all derivatives of $\Fourier f$ exist as oscillatory integrals and we conclude $\Fourier f$ is smooth. 
\end{example}
Section~\ref{appendix:asymptotics:existence_osc_int} contains further non-trivial examples of oscillatory integrals. 


\chapter{Miscellaneous estimates} 
\label{appendix:misc}
\begin{lem}\label{appendix:oscillatory_integrals:lem:schwartz_functions_Lp_estimate}
	Let $f \in \Schwartz(\R^d)$. Then for each $1 \leq p < \infty$, the $L^p$ norm of $f$ can be dominated by a finite number of seminorms, 
	\begin{align*}
		\norm{f}_{L^p(\R^d)} \leq C_1(d) \norm{f}_{00} + C_2(d) \max_{\abs{a} = 2n(d)} \norm{f}_{a 0} 
		, 
	\end{align*}
	where $C_1(d) , C_2(d) \in \R^+$ and $n(d) \in \N_0$ only depend on the dimension of $\R^d$. Hence, $f \in L^p(\R^d)$. 
\end{lem}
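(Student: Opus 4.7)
The goal is to dominate $\norm{f}_{L^p}$ by a weighted supremum norm of $f$ multiplied by an integrable weight, and then to express that weighted supremum in terms of the stated seminorms.

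First I would pick an integer $n \equiv n(d)$ large enough that $\sexpval{x}^{-2n}$ is integrable against every $L^p$ integrand of interest, i.e.~such that $2np > d$ for the range of $p$ under consideration (it suffices to take $n$ strictly greater than $d/2$ in order to cover all $p \geq 1$). Then the elementary factorization
\begin{align*}
\abs{f(x)} = \sexpval{x}^{-2n} \, \sexpval{x}^{2n} \abs{f(x)}
\end{align*}
together with H\"older (or simply pulling the sup out) gives the bound
\begin{align*}
\norm{f}_{L^p(\R^d)} \leq \Bnorm{\sexpval{\cdot}^{2n} f}_{\infty} \cdot \Bigl ( \int_{\R^d} \dd x \, \sexpval{x}^{-2np} \Bigr )^{\nicefrac{1}{p}} ,
\end{align*}
where the integral on the right is a finite constant depending only on $d$ (and $p$).

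Next I would bound the weighted supremum in terms of the seminorms. Expanding $\sexpval{x}^{2n} = (1 + \abs{x}^2)^n$ as a polynomial in $x$ of degree $2n$, only monomials of degree $0$ and degree $2n$ (and intermediate even degrees) appear, and I can crudely estimate
\begin{align*}
\sexpval{x}^{2n} \leq c_1 + c_2 \max_{\abs{a} = 2n} \abs{x^a}
\end{align*}
with constants $c_1, c_2$ depending only on $n$ and $d$ (the intermediate-degree monomials are absorbed via the inequality $\abs{x}^{2k} \leq 1 + \abs{x}^{2n}$ for $0 \leq k \leq n$). Multiplying by $\abs{f(x)}$ and taking the supremum over $x$ yields
\begin{align*}
\Bnorm{\sexpval{\cdot}^{2n} f}_{\infty} \leq c_1 \, \norm{f}_{00} + c_2 \, \max_{\abs{a} = 2n} \norm{f}_{a 0} ,
\end{align*}
which combined with the first step gives the claim.

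The steps are essentially all routine; the only place calling for a modicum of care is matching the two constraints on $n$, namely that $n$ be large enough to make $\sexpval{x}^{-2np}$ integrable while simultaneously being expressible in the stated seminorm notation (hence the choice $2n$ rather than $n$). Once $n(d)$ is fixed with $n(d) > d/2$, no further delicate analysis is required and the two displayed estimates combine directly into the asserted inequality, which in turn implies $f \in L^p(\R^d)$ since both seminorms are finite for $f \in \Schwartz(\R^d)$.
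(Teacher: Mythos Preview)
Your proof is correct and follows essentially the same idea as the paper's: trade decay of $f$ against integrability of a weight. The only cosmetic difference is that the paper splits the integral into $\{|x| \leq 1\}$ and $\{|x| > 1\}$ and uses the homogeneous weight $|x|^{-2n}$ on the exterior, whereas you use the inhomogeneous weight $\sexpval{x}^{-2n}$ globally; your version is slightly cleaner since it avoids the case distinction, but both arrive at the same two seminorms in the same way.
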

\begin{proof}
	We split the integral on $\R^d$ into an integral over the unit ball centered at the origin and its complement: let $B_n := \max_{\abs{a} = 2n} \norm{f}_{a0}$, then 
	\begin{align*}
		\norm{f}_{L^p(\R^d)} &= \biggl ( \int_{\R^d} \dd x \, \abs{f(x)}^p \biggr )^{\nicefrac{1}{p}} 
		\\
		&
		\leq \biggl ( \int_{\abs{x} \leq 1} \dd x \, \abs{f(x)}^p \biggr )^{\nicefrac{1}{p}} + \biggl ( \int_{\abs{x} > 1} \dd x \, \abs{f(x)}^p \biggr )^{\nicefrac{1}{p}} 
		\\
		&\leq \norm{f}_{00} \, \biggl ( \int_{\abs{x} \leq 1} \dd x \, 1 \biggr )^{\nicefrac{1}{p}} +  \biggl ( \int_{\abs{x} > 1} \dd x \, \abs{f(x)}^p \frac{\abs{x}^{2np}}{\abs{x}^{2np}} \biggr )^{\nicefrac{1}{p}} 
		\\
		&\leq \mathrm{Vol}(B_1(0))^{\nicefrac{1}{p}} \, \norm{f}_{00} + B_n \, \biggl ( \int_{\abs{x} > 1} \dd x \, \frac{1}{\abs{x}^{2np}} \biggr )^{\nicefrac{1}{p}} 
		. 
	\end{align*}
	If we choose $n$ large enough, $\abs{x}^{-2np}$	is integrable and can be computed explicitly, and we get 
	\begin{align*}
		\norm{f}_{L^p(\R^d)} \leq C_1(d) \, \norm{f}_{00} + C_2(d) \, \max_{\abs{a} = 2n} \norm{f}_{a0} 
		. 
	\end{align*}
	This concludes the proof. 
\end{proof}
\begin{proof}[Lemma~\ref{magWQ:magnetic_weyl_calculus:lem:composition_of_operator_kernels}]
	We need to estimate the seminorms of $K_T \diamond K_S$: let $a , \alpha , b , \beta \in \N_0^d$ be multiindices and introduce the map $\Phi : (x,y,z) \mapsto (2\pi)^{- \nicefrac{d}{2}} \, K_T(x,z) \, K_S(z,y)$. Then $\Phi \in \Schwartz(\R^d \times \R^d \times \R^d)$ is a Schwartz function in all three variables. First, we need to show we can exchange differentiation with respect to $x$ and $y$ and integration with respect to $z$, \ie that for fixed $x$ and $y$ 
	\begin{align}
		x^a y^b &\partial_x^{\alpha} \partial_y^{\beta} (K_T \diamond K_S)(x,y) = 
		x^a y^b \partial_x^{\alpha} \partial_y^{\beta} \frac{1}{(2\pi)^{\nicefrac{d}{2}}} \int_{\R^d_x} \dd z \, K_T(x,z) \, K_S(z,y) 
		\notag \\
		&
		= \frac{1}{(2\pi)^{\nicefrac{d}{2}}} \int_{\R^d_x} \dd z \, x^a y^b \partial_x^{\alpha} \partial_y^{\beta} \bigl ( K_T(x,z) \, K_S(z,y) \bigr ) 
		= \int_{\R^d_x} \dd z \, x^a y^b \partial_x^{\alpha} \partial_y^{\beta} \Phi(x,y,z) 
		\label{weyl_calculus:weyl_product:eqn:composition_of_operator_kernels}
	\end{align}
	holds. We will do this by estimating $\babs{x^a y^b \partial_x^{\alpha} \partial_y^{\beta} \Phi(x,y,z)}$ uniformly in $x$ and $y$ by an integrable function $G(z)$ and then invoking Dominated Convergence. For fixed $x$ and $y$, we estimate the $L^1$ norm of $x^a y^b \partial_x^{\alpha} \partial_y^{\beta} \Phi(x,y,\cdot)$ from above by a finite number of seminorms of $\Phi(x,y,\cdot)$ with the help of Lemma~\ref{appendix:oscillatory_integrals:lem:schwartz_functions_Lp_estimate}, 
	\begin{align*}
		\int_{\R^d} \dd z \, &\babs{x^a y^b \partial_x^{\alpha} \partial_y^{\beta} \Phi(x,y,z)} = \bnorm{x^a y^b \partial_x^{\alpha} \partial_y^{\beta} \Phi(x,y,\cdot)}_{L^1(\R^d)} 
		\\
		&\qquad \leq C_1 \, \sup_{z \in \R^d} \babs{x^a y^b \partial_x^{\alpha} \partial_y^{\beta} \Phi(x,y,z)} + C_2 \, \max_{\abs{c} = 2 n} \sup_{z \in \R^d} \babs{x^a y^b \partial_x^{\alpha} \partial_y^{\beta} \Phi(x,y,z)} 
		\\
		&\qquad = C_1 \, \bnorm{x^a y^b \partial_x^{\alpha} \partial_y^{\beta} \Phi(x,y,\cdot)}_{0 0} + C_2 \, \max_{\abs{c} = 2 n} \bnorm{x^a y^b \partial_x^{\alpha} \partial_y^{\beta} \Phi(x,y,\cdot)}_{c 0} 
		. 
	\end{align*}
	Now we interchange $\sup$ and integration with respect to $z$, 
	\begin{align*}
		\sup_{x,y \in \R^d} \int_{\R^d} \dd z \, \babs{x^a y^b \partial_x^{\alpha} \partial_y^{\beta} \Phi(x,y,z)} &\leq \int_{\R^d} \dd z \, \sup_{x,y \in \R^d} \babs{x^a y^b \partial_x^{\alpha} \partial_y^{\beta} \Phi(x,y,z)} 
		\\
		&= \Bnorm{\sup_{x,y \in \R^d} \babs{x^a y^b \partial_x^{\alpha} \partial_y^{\beta} \Phi(x,y,\cdot)}}_{L^1(\R^d)}
		, 
	\end{align*}
	which can be estimated from above by 
	\begin{align*}
		\Bnorm{\sup_{x,y \in \R^d} \babs{x^a y^b \partial_x^{\alpha} \partial_y^{\beta} \Phi(x,y,\cdot)}}_{L^1(\R^d)} 
		&\leq C_1 \, \sup_{x,y \in \R^d} \sup_{z \in \R^d} \babs{x^a y^b \partial_x^{\alpha} \partial_y^{\beta} \Phi(x,y,z)} 
		+ \\
		&\quad 
		+ C_2 \, \max_{\abs{c} = 2 n} \sup_{x,y \in \R^d} \sup_{z \in \R^d} \babs{x^a y^b \partial_x^{\alpha} \partial_y^{\beta} \Phi(x,y,z)} 
		\\
		&= C_1 \, \bnorm{\Phi}_{a \alpha b \beta 0 0} + C_2 \, \max_{\abs{c} = 2 n} \bnorm{\Phi}_{a \alpha b \beta c 0} < \infty 
		. 
	\end{align*}
	Here $\bigl \{ \norm{\cdot}_{a \alpha b \beta c \gamma} \bigr \}_{a , \alpha , b , \beta , c , \gamma \in \N_0^d}$ is the family of seminorms associated to $\Schwartz(\R^d \times \R^d \times \R^d)$ which are defined by 
	\begin{align*}
		\norm{\Phi}_{a \alpha b \beta c \gamma} := \sup_{x,y,z \in \R^d} \babs{x^a y^b z^c \partial_x^{\alpha} \partial_x^{\beta} \partial_x^{\gamma} \Phi(x,y,z)} 
		. 
	\end{align*}
	This means we have found an \emph{integrable} function 
	\begin{align*}
		G(z) := \sup_{x,y \in \R^d} \babs{x^a y^b \partial_x^{\alpha} \partial_y^{\beta} \Phi(x,y,z)}
	\end{align*} 
	which dominates $x^a y^b \partial_x^{\alpha} \partial_y^{\beta} \Phi(x,y,z)$ for all $x , y \in \R^d$. Hence, exchanging differentiation and integration in equation~\eqref{weyl_calculus:weyl_product:eqn:composition_of_operator_kernels} is possible and we can bound the $\norm{\cdot}_{a \alpha b \beta}$ seminorm on $\Schwartz(\R^d \times \R^d)$ by 
	\begin{align*}
		\bnorm{K_T \diamond K_S}_{a \alpha b \beta} &= \sup_{x,y \in \R^d} \babs{x^a y^b \partial_x^{\alpha} \partial_y^{\beta} (K_T \diamond K_S)(x,y)} 
		\\
		&\leq C_1 \, \bnorm{\Phi}_{a \alpha b \beta 0 0} + C_2 \, \max_{\abs{c} = 2 n} \bnorm{\Phi}_{a \alpha b \beta c 0} < \infty 
		. 
	\end{align*}
	This means $K_T \diamond K_S \in \Schwartz(\R^d \times \R^d)$. 
\end{proof}
%

\chapter{Asymptotic expansion} 
\label{appendix:asymptotics}
\section{Expansion of the twister} 
\label{appendix:asymptotics:expansion_twister}
\begin{lem}
	\label{appendix:asymptotics:expansion_twister:lem:expansion_twister}
	Assume $B$ satisfies Assumption~\ref{asymptotics:assumption:bounded_fields}. Then we can expand $\gBe$ around $x$ to arbitrary order $N$ in powers of $\eps$: 
	\begin{align}
		\gBe(x,y,z) &= - \sum_{n = 1}^{N} \frac{\eps^n}{n!} \, \partial_{x_{j_1}} \cdots \partial_{x_{j_{n-1}}} B_{kl}(x) \, y_k \, z_l \, \left ( - \frac{1}{2} \right )^{n+1} \frac{1}{(n+1)^2} \sum_{c = 1}^n \noverk{n+1}{c} 
		\cdot \notag \\
		&\qquad \qquad \qquad \qquad \cdot 
		\bigl ( (1 - (-1)^{n+1}) c - (1 - (-1)^{c}) (n+1) \bigr ) 
		\cdot \notag \\
		&\qquad \qquad \qquad \qquad \cdot 
		y_{j_1} \cdots y_{j_{c-1}} z_{j_{c}} \cdots z_{j_{n-1}} 
		+ \notag \\
		&\qquad \qquad 
		+ R_N[\gBe](x,y,z) 
		\notag \\
		&=: - \sum_{n = 1}^{N} \eps^n \negmedspace \negmedspace \negmedspace \sum_{\abs{\alpha} + \abs{\beta} = n-1} \negmedspace \negmedspace \negmedspace 
		C_{n,\alpha,\beta} \, \partial_x^{\alpha} \partial_x^{\beta} B_{kl}(x) \, y_k z_l \, y^{\alpha} \, z^{\beta} + R_N[\gBe](x,y,z) 
		\notag \\
		&=: - \sum_{n = 1}^{N} \eps^n \, {\mathcal{L}}_n + R_N[\gBe](x,y,z) 
	\end{align}
	In particular, the flux is of order $\eps$ and the $n$th-order term is a sum of monomials in position of degree $n+1$ and each of the terms is a $\mathcal{BC}^{\infty}(\R^d_x,\mathcal{C}^{\infty}_{\mathrm{pol}}(\R^d_y \times \R^d_z))$ function. The remainder is a $\mathcal{BC}^{\infty}(\R^d,\mathcal{C}^{\infty}_{\mathrm{pol}}(\R^d \times \R^d))$ function that is $\ordere{N+1}$ and can be explicitly written as a bounded function of $x$, $y$ and $z$ as well as $N+2$ factors of $y$ and $z$. 
\end{lem}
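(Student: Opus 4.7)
The approach is to parametrize the triangle explicitly and Taylor expand the magnetic field components $B_{kl}$ around $x$. A convenient parametrization for the triangle with vertices $x_1 = x-\tfrac{\eps}{2}(y+z)$, $x_2 = x+\tfrac{\eps}{2}(y-z)$, $x_3 = x+\tfrac{\eps}{2}(y+z)$ uses barycentric coordinates; alternatively one may use the parametrization $\phi(s,t) = x + \eps\, s\bigl(t\,u_1 + (1-t)\,u_2\bigr)$ with $s\in[0,1]$, $t\in[-\tfrac{1}{2},+\tfrac{1}{2}]$ for suitable constant vectors $u_1,u_2$ depending on $y$ and $z$, which is the style used in the proof of Theorem~\ref{asymptotics:thm:lambda_expansion}. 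Pulling back the closed 2-form $B = \tfrac{1}{2}B_{kl}\, \dd x_k\wedge \dd x_j$, the scaled flux $\gamma^B_\eps$ becomes $\eps$ times an absolutely convergent double integral of $B_{kl}(x+\eps \cdot (\cdot))$ against monomials of degree $1$ in $y_k$ and $z_l$.

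The next step is to Taylor expand $B_{kl}$ in its argument around $x$ up to order $N-1$ with integral remainder. Each term of order $m$ in this Taylor expansion carries $m$ position-derivatives of $B_{kl}$ evaluated at $x$, together with $m$ factors of $\eps$ coming from the scaling inside the argument, and $m$ additional monomials in $y$ and $z$. Shifting the indexing by one (to account for the overall factor of $\eps$ already extracted), the resulting order in $\eps$ is exactly $n = m+1$. Integration against the barycentric variables yields purely combinatorial coefficients. These coefficients expand via the binomial theorem applied to $(t\,u_1+(1-t)\,u_2)^{n-1}$, producing a sum over $c$ with binomial weights $\binom{n+1}{c}$; the parity structure of the resulting $t$-integrals on $[-\tfrac{1}{2},+\tfrac{1}{2}]$ is what generates the factors $(1-(-1)^{n+1})c - (1-(-1)^c)(n+1)$ and the overall prefactor $\bigl(-\tfrac{1}{2}\bigr)^{n+1}\tfrac{1}{(n+1)^2}$. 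Routine rearrangement casts the principal part into the claimed form $\mathcal{L}_n = \sum_{\abs{\alpha}+\abs{\beta}=n-1} C_{n,\alpha,\beta}\, \partial_x^\alpha \partial_x^\beta B_{kl}(x)\, y_k z_l\, y^\alpha z^\beta$.

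The remainder $R_N[\gamma^B_\eps]$ arises from the integral remainder in the Taylor expansion and is therefore explicitly representable as
\begin{align*}
R_N[\gamma^B_\eps](x,y,z) = \eps^{N+1}\int_0^1 \dd s \int \dd t \, \varrho_N(s,t)\, \bigl(\partial_x^\alpha \partial_x^\beta B_{kl}\bigr)\bigl(x + \eps\, \Phi(s,t;y,z)\bigr)\, y_k z_l\, P(y,z;s,t),
\end{align*}
where $\abs{\alpha}+\abs{\beta}=N$, $P$ is a monomial of degree $N$ in $(y,z)$ (so that together with $y_kz_l$ one counts $N+2$ factors of $y,z$), $\Phi$ is uniformly bounded in $(s,t)$, and $\varrho_N$ is an integrable scalar weight. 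Since the components of $B$ are in $\BCont^\infty(\R^d)$ by Assumption~\ref{asymptotics:assumption:bounded_fields}, the $x$-dependence of $R_N$ is a bounded smooth function with bounded derivatives of all orders, uniformly in $\eps\leq 1$ and $(y,z)\in\R^d\times\R^d$. This gives the claimed $R_N\in \BCont^\infty(\R^d,\Cont^\infty_{\mathrm{pol}}(\R^d\times\R^d))$ membership and the $\ordere{N+1}$ estimate.

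\emph{Main obstacle.} The analytic content — interchanging derivatives and parameter integrals, verifying the integrability required for the remainder, and checking that each $\mathcal{L}_n$ lies in $\BCont^\infty(\R^d,\Cont^\infty_{\mathrm{pol}}(\R^d_y\times\R^d_z))$ — is straightforward because $B$ has bounded derivatives of all orders. The real work is the combinatorial bookkeeping that produces the precise closed-form coefficients. I expect the cleanest route is to first prove the qualitative statement ($\gamma^B_\eps = \sum \eps^n\tilde{\mathcal{L}}_n + R_N$ with $\tilde{\mathcal{L}}_n$ some explicit parameter integral of derivatives of $B$), and only afterwards compute the parameter integrals in closed form to identify $\tilde{\mathcal{L}}_n = -\mathcal{L}_n$ as stated, exploiting the antisymmetry $B_{kl}=-B_{lk}$ to drop terms proportional to $y_k y_l$ or $z_k z_l$ before integrating.
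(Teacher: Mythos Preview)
Your plan is sound and lands on the same analytic mechanism as the paper: reduce $\gamma^B_\eps$ to a parameter integral of $B_{kl}(x+\eps\cdot(\cdot))$ against polynomial weights in $y,z$, Taylor expand $B_{kl}$ at $x$ to order $N-1$ with integral remainder, then integrate out the parameters to produce the closed-form coefficients. The qualitative conclusions about $\mathcal{L}_n$ and $R_N$ you state are correct, and your observation about exploiting $B_{kl}=-B_{lk}$ is exactly what is needed to kill the $y_ky_l$ and $z_kz_l$ contributions.

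The one genuine difference is how the parameter integral is obtained. You propose a direct surface parametrization of the flux triangle and pull back the 2-form. The paper instead applies Stokes' theorem and writes $\gamma^B_\eps$ as three circulation integrals along the edges, then substitutes the \emph{transversal gauge} $A_l(x+a)=-\int_0^1\!\dd s\,B_{lj}(x+sa)\,sa_j$ into each edge integral. This produces three $(s,t)$-double integrals (one per edge) of the schematic form $\int_0^1\!\dd s\,s\int_{-1/2}^{1/2}\!\dd t\,B_{lj}(x+\eps s\,v(t,y,z))$, which is precisely the structure you recognised from the proof of the $\lambda$-expansion. The Taylor expansion is then carried out separately in each of the three pieces and the results recombined; correspondingly the remainder is assembled from three edge remainders rather than a single surface remainder. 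Your direct-surface route avoids this threefold bookkeeping at the cost of choosing a parametrization whose image is the correct triangle (note that your $\phi(s,t)=x+\eps s(tu_1+(1-t)u_2)$ as written parametrizes a triangle with a vertex at $x$, whereas $x$ is the centroid here, so the vectors $u_1,u_2$ need a shift). Either route delivers the same coefficients after the $t$-integration on $[-\tfrac12,\tfrac12]$ kills the odd powers, which is indeed the source of the parity factors $(1-(-1)^{n+1})c-(1-(-1)^c)(n+1)$.
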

\begin{proof}
	We choose the transversal gauge to represent $B$, \ie 
	\begin{align}
		A_l(x + a) &= - \int_0^1 \dd s \, B_{lj} (x + s a) \, s a_j 
		\label{appendix:eqn:transversal_gauge}
	\end{align}
	and rewrite the flux integral into three line integrals over the edges of the triangle. 
	\begin{align*}
		\gBe(x,y,z) &= \frac{1}{\eps} \int_0^1 \dd t \Bigl [ \eps \, (y_l + z_l) \, A_l \bigl (x + \eps (t - \nicefrac{1}{2}) (y+z) \bigr ) + \Bigr . \\
		&\qquad \qquad \Bigl .  
		- \eps \, y_l \, A_l \bigl ( x + \eps (t - \nicefrac{1}{2}) y - \tfrac{\eps}{2} z \bigr ) 
		- \eps \, z_l \, A_l \bigl ( x + \tfrac{\eps}{2} y + \eps (t - \nicefrac{1}{2}) z \bigr ) \Bigr ] 
		\\
		&= \eps \int_{-\nicefrac{1}{2}}^{+\nicefrac{1}{2}} \dd t \int_0^1 \dd s \, s \Bigl [ - B_{lj} \bigl ( x + \eps s t (y + z) \bigr ) \, (y_l + z_l) \, t (y_j + z_j) 
		+ \Bigr. \\
		&\qquad \qquad \qquad \qquad \qquad \qquad \Bigl .  
		+ B_{lj} \bigl ( x + \eps s \bigl ( t y - \tfrac{z}{2} \bigr ) \bigr ) \, y_l \, \bigl ( t y_j - \tfrac{z_j}{2} \bigr ) 
		+ \Bigr. \\
		&\qquad \qquad \qquad \qquad \qquad \qquad \Bigl .  
		+ B_{lj} \bigl ( x + \eps s \bigl ( \tfrac{y}{2} + t z \bigr ) \bigr ) \, z_l \, \bigl ( \tfrac{y_j}{2} + t z_j \bigr ) \Bigr ]
	\end{align*}
	All these terms have a prefactor of $\eps$ which stems from the explicit expression of transversal gauge. We will now Taylor expand each of the three terms up to $N-1$th order around $x$ (so that it is of $N$th order in $\eps$). 
	\begin{align*}
		\int_{-\nicefrac{1}{2}}^{+\nicefrac{1}{2}} \dd t \, \int_0^1 \dd s \, &B_{lj} \bigl ( x + \eps s t (y + z) \bigr ) \, \eps s (y_{j} + z_{j} ) = \\
		&= \int_0^1 \dd s \, \int_{-\nicefrac{1}{2}}^{+\nicefrac{1}{2}} \dd t \, \sum_{n = 0}^{N-1} \frac{\eps^{n+1}}{n!} s^{n+1} s^{-n} \, \partial_{x_{j_1}} \cdots \partial_{x_{j_n}} B_{lj_{n+1}}(x) 
		\cdot \\
		&\qquad \qquad \qquad \qquad \qquad \cdot 
		t^{n+1} \, \prod_{m = 1}^{n+1} (y_{j_m} + z_{j_m}) 
		+ R_{1 \, N \, l}(x,y,z)
	\end{align*}
	The remainder $R_{1 \, N \, l}$ is of order $N+1$ in $\eps$, bounded in $x$ and polynomially bounded in $y$ and $z$. It is a sum of monomials in $y$ and $z$ of degree $N+1$. 
	\begin{align*}
		R_{1 \, N \, l}(x,y,z) &= 
		\int_0^1 \dd \tau \int_0^1 \dd s  \int_{-\nicefrac{1}{2}}^{+\nicefrac{1}{2}} \dd t \, \frac{1}{(N-1)!} (1-\tau)^{N-1} 
		\cdot \\
		&\qquad \qquad \qquad \qquad \cdot 
		\partial_{\tau}^{N} B_{lj}(x + \eps \tau s t (y+z) ) \, \eps s t (y + z) 
		\\
		&= \eps^{N+1} \, \sum_{\sabs{\alpha} = N} \frac{N}{\alpha !} \int_0^1 \dd s  \int_{-\nicefrac{1}{2}}^{+\nicefrac{1}{2}} \dd t \, s \, t^{N+1} \, (y + z)^{\alpha} \, (y_j + z_j) \, 
		\cdot \\
		&\qquad \qquad \qquad \qquad \cdot 
		\int_0^1 \dd \tau \, (1 - \tau)^{N-1} \, \partial_x^{\alpha} B_{lj} (x + \eps \tau st (y+z)) 
	\end{align*}
	The $n$th order term in $\eps$ (the $n-1$th term of the Taylor expansion) reads 
	\begin{align*}
		\frac{\eps^{n}}{(n-1)!} &\int_0^1 \dd s \, s \, \int_{-\nicefrac{1}{2}}^{+\nicefrac{1}{2}} \dd t \, t^{n} \, \partial_{x_{j_1}} \cdots \partial_{x_{j_{n-1}}} B_{lj_{n}}(x) \, \prod_{m = 1}^{n} (y_{j_m} + z_{j_m}) = 
		\\
		&= \frac{1}{2} \frac{\eps^{n}}{n!} \, \left ( \frac{1}{2} \right )^{n+1} \frac{1 + (-1)^{n}}{n+1} \, \partial_{x_{j_1}} \cdots \partial_{x_{j_n}} B_{lj_{n}}(x) \, 
		\cdot \\
		&\qquad \qquad \cdot 
		\sum_{m = 0}^{n} \noverk{n}{m} y_{j_1} \cdots y_{j_m} z_{j_{m+1}} \cdots z_{j_{n}} 
		. 
	\end{align*}
	The other factors can be calculated in the same fashion: 
	\begin{align*}
		\frac{\eps^{n}}{(n-1)!} &\int_{-\nicefrac{1}{2}}^{+\nicefrac{1}{2}} \dd t \, \int_0^1 \dd s \, s^{n} s^{-(n-1)} \, \partial_{x_{j_1}} \cdots \partial_{x_{j_{n-1}}} B_{lj_{n}}(x) \, \prod_{m = 1}^{n} \bigl ( t y_{j_m} - \tfrac{1}{2} z_{j_m} \bigr ) = 
		\\
		&= \frac{\eps^{n}}{n!} \left ( \frac{1}{2} \right )^{n+2} \, \partial_{x_{j_1}} \cdots \partial_{x_{j_{n-1}}} B_{lj_{n}}(x) \, 
		\cdot \\
		&\qquad \qquad \cdot 
		\sum_{m = 0}^{n} \noverk{n}{m} \, \frac{(-1)^{n-m} +(-1)^{n}}{m+1} \, y_{j_1} \cdots y_{j_m} z_{j_{m+1}} \cdots z_{j_{n}} 
	\end{align*}
	The remainder is also of the correct order in $\eps$, namely $n$th order, contains $N+2$ $q$s and a $\mathcal{BC}^{\infty}(\R^d,\mathcal{C}^{\infty}_{\mathrm{pol}}(\R^d \times \R^d))$ function as prefactor: 
	\begin{align*}
		R_{2 \, N \, l}(x,y,z) &= \eps^{N+1} \, \sum_{\sabs{\alpha} = N} \frac{N}{\alpha !} \int_0^1 \dd s  \int_{-\nicefrac{1}{2}}^{+\nicefrac{1}{2}} \dd t \, s \, \bigl ( t y - \tfrac{z}{2} \bigr )^{\alpha} \, \bigl ( t y_j - \tfrac{z_j}{2} \bigr ) 
		\cdot \\
		&\qquad \qquad \qquad \qquad \cdot 
		\int_0^1 \dd \tau \, (1 - \tau)^{N-1} \, \partial_x^{\alpha} B_{lj} \bigl ( x + \eps \tau \bigl ( st y - \tfrac{z}{2} \bigr ) \bigr ) 
	\end{align*}
	The last term satisfies the same properties as $R_{1 \, N \, l}$: 
	\begin{align*}
		\int_{-\nicefrac{1}{2}}^{+\nicefrac{1}{2}} \dd t &\int_0^1 \dd s \, \frac{\eps^{n}}{(n-1)!} s^{n} s^{-(n-1)} \, \partial_{x_{j_1}} \cdots \partial_{x_{j_{n-1}}} B_{lj_{n}}(x) \, \prod_{m = 1}^{n} \bigl ( \tfrac{1}{2} y_{j_m} + t z_{j_m} \bigr ) = 
		\\
		&= \frac{\eps^{n}}{n!}  \, \left ( \frac{1}{2} \right )^{n+2} \, \partial_{x_{j_1}} \cdots \partial_{x_{j_{n-1}}} B_{lj_{n}}(x) \, 
		\cdot \\
		&\qquad \qquad \qquad \cdot 
		\sum_{m = 0}^{n} \noverk{n}{m} \, \frac{1 + (-1)^{n-m}}{n+1-m} \, y_{j_1} \cdots y_{j_m} z_{j_{m+1}} \cdots z_{j_{n}} 
	\end{align*}
	$R_{3 \, N \, l}$ satisfies the same properties as $R_{1 \, N \, l}$ and $R_{2 \, N \, l}$, 
	\begin{align*}
		R_{3 \, N \, l}(x,y,z) &= \eps^{N+1} \, \sum_{\sabs{\alpha} = N} \frac{N}{\alpha !} \int_0^1 \dd s  \int_{-\nicefrac{1}{2}}^{+\nicefrac{1}{2}} \dd t \, s \, \bigl ( \tfrac{y}{2} + t z \bigr )^{\alpha} \, \bigl ( \tfrac{y_j}{2} + t z_j \bigr )  
		\cdot \\
		&\qquad \qquad \qquad \qquad \cdot 
		\int_0^1 \dd \tau \, (1 - \tau)^{N-1} \, \partial_x^{\alpha} B_{lj} \bigl ( x + \eps \tau s \bigl ( \tfrac{y}{2} + t z \bigr ) \bigr ) 
		. 
	\end{align*}
	Put together, we obtain for the $n$th order term: 
	\begin{align*}
		\frac{1}{2} &\frac{\eps^{n}}{n!} \left ( \frac{1}{2} \right )^{n+1} \, \partial_{x_{j_1}} \cdots \partial_{x_{j_{n-1}}} B_{lj_{n}}(x) \, \sum_{m = 0}^{n} \noverk{n}{m} \cdot \\
		&\qquad \cdot \left [ \frac{1 + (-1)^{n}}{n+1} (y_l + z_l) - \frac{(-1)^{n-m} + (-1)^{n}}{m+1} y_l - \frac{1 + (-1)^{n-m}}{n+1-m} z_l \right ] \, 
		\cdot \\
		&\qquad \qquad \qquad \cdot 
		y_{j_1} \cdots y_{j_m} z_{j_{m+1}} \cdots z_{j_{n}} \\
		&= \frac{\eps^{n}}{n!} \left ( - \frac{1}{2} \right )^{n+1} \frac{1}{(n+1)^2} \, \sum_{\sabs{\alpha} + \sabs{\beta} = n-1} \partial_x^{\alpha} \partial_x^{\beta} B_{lk}(x) \, y_l z_k \cdot \\
		&\qquad \cdot \noverk{n+1}{\sabs{\alpha}+1} \bigl ( ( 1 - (-1)^{\sabs{\alpha}+1} ) (n+1) - (1 - (-1)^{n+1}) (\sabs{\alpha} + 1) \bigr ) \, y^{\alpha} z^{\beta}
	\end{align*}
	The total remainder of the expansion reads 
	\begin{align*}
		R_N[\gBe] &= y_l \, \bigl ( R_{1 \, N \, l} - R_{2 \, N \, l} \bigr ) + z_l \, \bigl ( R_{1 \, N \, l} - R_{3 \, N \, l} \bigr ) \in \mathcal{BC}^{\infty}(\R^d,\mathcal{C}^{\infty}_{\mathrm{pol}}(\R^d \times \R^d))
		. 
	\end{align*}
	In total, the remainder is a sum of monomials with bounded coefficients of degree $N+2$ while it is of $\ordere{N+1}$. 
\end{proof}
%


\section{Properties of derivatives of $\gBe$} 
\label{appendix:asymptotics:properties_mag_flux}
For convenience, we give two theorems found in \cite{Iftimie_Mantiou_Purice:magnetic_psido:2006} on the magnetic flux and its expontential which are needed to make the expansion rigorous: 
\begin{lem}\label{appendix:asymptotics:properties_mag_flux:lem:boundedness_mag_flux}
	If the magnetic field $B_{lj}$, $1 \leq l,j \leq n$, satisfies the usual conditions, then 
	\begin{align*}
		\partial_{x_j} \gBe &= D_{jk}(x,y,z) \, y_k + E_{jk}(x,y,z) \, z_k \\
		\partial_{y_j} \gBe &= D'_{jk}(x,y,z) \, y_k + E'_{jk}(x,y,z) \, z_k \\
		\partial_{z_j} \gBe &= D''_{jk}(x,y,z) \, y_k + E''_{jk}(x,y,z) \, z_k 
	\end{align*}
	where the coefficients $D_{jk} , \ldots , E''_{jk} \in \mathcal{BC}^{\infty}(\R^d \times \R^d \times \R^d)$, $1 \leq j,k \leq d$. 
\end{lem}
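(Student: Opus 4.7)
The plan is to apply Stokes' theorem to convert the sum of three line integrals defining $\gBe$ in the proof of Lemma~\ref{appendix:asymptotics:expansion_twister:lem:expansion_twister} into a single flux integral over the triangle with vertices $v_1 := x - \tfrac{\eps}{2}(y+z)$, $v_2 := x + \tfrac{\eps}{2}(y-z)$ and $v_3 := x + \tfrac{\eps}{2}(y+z)$, and then to parametrize this triangle explicitly. Since $v_2 - v_1 = \eps y$ and $v_3 - v_1 = \eps(y+z)$, the map $\phi(s,t) := v_1 + s\eps y + t \eps (y+z)$ sends the standard simplex $\{s,t \geq 0, \, s+t \leq 1\}$ onto the triangle with Jacobian proportional to $\eps^2 (y_k z_l - y_l z_k)$. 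Combining this with the antisymmetry $B_{kl} = -B_{lk}$ yields the clean factorization
\begin{align*}
	\gBe(x,y,z) = \eps \sum_{k,l} y_k \, z_l \, F_{kl}(x,y,z),
	\qquad
	F_{kl}(x,y,z) := \int_0^1 \dd s \int_0^{1-s} \dd t \, B_{kl}\bigl(\phi(s,t)\bigr),
\end{align*}
with $F_{kl}$ antisymmetric in $k,l$.

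The next step is to verify that $F_{kl}$ lies in $\mathcal{BC}^{\infty}(\R^d \times \R^d \times \R^d)$. Since $B_{kl} \in \mathcal{BC}^{\infty}(\R^d)$ by Assumption~\ref{asymptotics:assumption:bounded_fields} and $\phi$ is linear in each of $x,y,z$ with coefficients bounded by $\eps \leq 1$ on the compact simplex, every partial derivative of $F_{kl}$ in $x$, $y$ or $z$ can be exchanged with the $(s,t)$-integration via Dominated Convergence and reduces to an integral of a uniformly bounded derivative of $B_{kl}$ against a bounded polynomial in $\eps,s,t$. This yields uniform $\mathcal{BC}^{\infty}$-bounds for $F_{kl}$ and all its derivatives.

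The three claimed identities then follow by the product rule. For $\partial_{x_j}\gBe = \eps y_k z_l \, \partial_{x_j} F_{kl}$, one symmetrizes using $\partial_{x_j}F_{kl} = -\partial_{x_j}F_{lk}$ to rewrite the expression as $\tfrac{\eps}{2}(y_k z_l - y_l z_k) \partial_{x_j}F_{kl}$, which distributes into $D_{jk}(x,y,z) y_k + E_{jk}(x,y,z) z_k$ by absorbing a single $z$- or $y$-factor into the smooth coefficient built from $\partial_{x_j}F_{kl}$. For $\partial_{y_j}\gBe = \eps \delta_{jk} z_l F_{kl} + \eps y_k z_l \, \partial_{y_j} F_{kl}$, the first summand is already of the required form $E'_{jl} z_l$ with $E'_{jl} := \eps F_{jl} \in \mathcal{BC}^{\infty}$, while the bilinear remainder is handled exactly as in the $\partial_x$-case. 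The derivative $\partial_{z_j}\gBe$ is treated symmetrically, producing the leading contribution $D''_{jk} y_k$ with $D''_{jk} = -\eps F_{kj}$.

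The main obstacle is the redistribution step, where one must convert the bilinear combinations $y_k z_l$ into the claimed linear form $D \cdot y + E \cdot z$. The resolution rests on the antisymmetry of $F_{kl}$, which forces only the antisymmetric part of $y_k z_l$ to survive, together with the observation that absorbing a single $y$- or $z$-factor into the smooth coefficient leaves a function whose $x$-derivatives remain uniformly controlled. This is the sense in which the lemma is applied in the proof of Theorem~\ref{asymptotics:thm:asymptotic_expansion}, where only polynomial bounds of the form $\sexpval{y}^{\abs{a}} \sexpval{z}^{\abs{a}}$ on derivatives in $x$ are ultimately required, and these will be recorded in the subsequent Corollary~\ref{appendix:asymptotics:cor:properties_flux} that promotes the present bounds to the exponential $\oBel = e^{-i \lambda \gBe}$.
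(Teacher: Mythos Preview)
The paper does not prove this lemma; it is quoted from \cite{Iftimie_Mantiou_Purice:magnetic_psido:2006} without argument. So there is no ``paper's proof'' to compare against, and your job was to supply one from scratch.

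Your flux parametrization and the resulting factorization $\gBe(x,y,z) = \eps\, y_k z_l\, F_{kl}(x,y,z)$ with $F_{kl}\in\mathcal{BC}^{\infty}(\R^{3d})$ is exactly right, and it matches what the thesis itself records elsewhere (the function $\tilde B_{lj}$ in the proof of Theorem~\ref{asymptotics:thm:lambda_expansion}). This factorization is the real content behind the lemma and is what drives Corollary~\ref{appendix:asymptotics:cor:properties_flux}.

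The gap is precisely where you flag it, in the redistribution step. Your proposed fix --- using antisymmetry to rewrite $\eps y_k z_l\,\partial_{x_j}F_{kl}$ as $\tfrac{\eps}{2}(y_kz_l - y_lz_k)\partial_{x_j}F_{kl}$ and then ``absorbing a single $z$- or $y$-factor into the smooth coefficient'' --- does not produce coefficients in $\mathcal{BC}^{\infty}(\R^d\times\R^d\times\R^d)$. Absorbing $z_l$ gives $D_{jk} = \eps z_l\,\partial_{x_j}F_{kl}$, which grows linearly in $z$. More fundamentally: no bounded $D_{jk},E_{jk}$ can satisfy $D_{jk}y_k + E_{jk}z_k = y_kz_l\,G_{jkl}$ with $G$ bounded and not identically zero, since the right-hand side grows quadratically along $y=z=t\,e$ while the left grows at most linearly. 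Antisymmetry does not rescue this.

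So either the lemma as transcribed here overstates the regularity (the coefficients should only be $\mathcal{BC}^{\infty}$ in $x$ with polynomial growth in $(y,z)$), or the original statement in \cite{Iftimie_Mantiou_Purice:magnetic_psido:2006} is formulated differently. Your final paragraph essentially concedes this, and you are right that the downstream uses (Corollary~\ref{appendix:asymptotics:cor:properties_flux} and Lemma~\ref{appendix:asymptotics:existence_osc_int:lem:remainder}) only require polynomial bounds in $(y,z)$, which your factorization delivers immediately --- with exponent one larger than claimed, but that is harmless for the oscillatory-integral estimates. If you want to submit a proof, state and prove the weaker version; do not claim the $\mathcal{BC}^{\infty}(\R^{3d})$ coefficients, because for $\partial_{x_j}\gBe$ they do not exist.
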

A direct consequence of this is the following simple corollary: 
\begin{cor}\label{appendix:asymptotics:cor:properties_flux}
	If the magnetic field satisfies the usual conditions, then for any $a,b,c \in {\N_0}^d$, there exists $C_{abc} > 0$ such that 
	\begin{align*}
		\babs{\partial_x^a \partial_y^b \partial_z^c e^{- i \lambda \gBe(x,y,z)}} \leq C_{abc} \bigl ( \sexpval{y} + \sexpval{z} \bigr )^{\sabs{a} + \sabs{b} + \sabs{c}} 
		\leq \tilde{C}_{abc} \sexpval{y}^{\sabs{a} + \sabs{b} + \sabs{c}} \sexpval{z}^{\sabs{a} + \sabs{b} + \sabs{c}} 
	\end{align*}
	holds, \ie derivatives of $e^{- i \lambda \gBe(x,y,z)}$ are $\mathcal{C}^{\infty}_{\mathrm{pol}}$ functions in $y$ and $z$. 
\end{cor}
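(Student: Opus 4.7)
\textbf{Proof plan for Corollary~\ref{appendix:asymptotics:cor:properties_flux}.}

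The strategy is to combine the Faà di Bruno formula with Lemma~\ref{appendix:asymptotics:properties_mag_flux:lem:boundedness_mag_flux}. Since $\abs{e^{-i\lambda \gBe(x,y,z)}} = 1$, it suffices to bound the polynomial factor produced when the chain rule is iterated. Explicitly, for any multiindex $\gamma=(a,b,c) \in \N_0^d \times \N_0^d \times \N_0^d$ with $\abs{\gamma} \geq 1$, one has an identity of the form
\begin{align*}
\partial_x^a \partial_y^b \partial_z^c e^{-i\lambda \gBe} = e^{-i\lambda \gBe} \sum_{\pi} c_{\pi} \prod_{j=1}^{k(\pi)} (-i\lambda) \bigl ( \partial_x^{a_j} \partial_y^{b_j} \partial_z^{c_j} \gBe \bigr )
,
\end{align*}
where the sum runs over partitions $\pi$ of the multiindex $\gamma$ into $k(\pi)$ nonempty sub-multiindices $(a_j,b_j,c_j)$ with $\sum_j (a_j,b_j,c_j) = (a,b,c)$. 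In particular $k(\pi) \leq \abs{a}+\abs{b}+\abs{c}$ and each factor in the product has differentiation order at least one.

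The key step is to show that every derivative of $\gBe$ of order $\geq 1$ admits a representation of the form
\begin{align*}
\partial_x^{a'} \partial_y^{b'} \partial_z^{c'} \gBe(x,y,z) = \sum_{k=1}^d D_k(x,y,z)\, y_k + \sum_{k=1}^d E_k(x,y,z)\, z_k + F(x,y,z)
,
\end{align*}
with $D_k, E_k, F \in \BCont^{\infty}(\R^{3d})$. I prove this by induction on $\abs{a'}+\abs{b'}+\abs{c'}$. The base case $\abs{a'}+\abs{b'}+\abs{c'}=1$ is precisely Lemma~\ref{appendix:asymptotics:properties_mag_flux:lem:boundedness_mag_flux} (with $F \equiv 0$). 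For the inductive step, a further derivative either acts on the $\BCont^{\infty}$ coefficients $D_k, E_k, F$, which preserves the structure, or acts on the explicit factors $y_k$ or $z_k$, producing an additional constant-in-$(y,z)$ term that is absorbed into $F$. Consequently, every factor $\partial_x^{a_j}\partial_y^{b_j}\partial_z^{c_j}\gBe$ in the Faà di Bruno sum is bounded pointwise by $C_j \bigl ( \sexpval{y} + \sexpval{z} \bigr )$.

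Assembling these ingredients, the product of $k(\pi)$ such factors is bounded by $C_{\pi} \bigl ( \sexpval{y} + \sexpval{z} \bigr )^{k(\pi)}$, and since $k(\pi) \leq \abs{a}+\abs{b}+\abs{c}$ and $\sexpval{y}+\sexpval{z} \geq 2 \geq 1$, we can monotonically replace $k(\pi)$ by $\abs{a}+\abs{b}+\abs{c}$ to obtain a common bound. Summing over the finitely many partitions $\pi$ and using $\abs{e^{-i\lambda \gBe}} = 1$ yields the first inequality. The second inequality follows from the elementary estimate $\sexpval{y}+\sexpval{z} \leq 2\sexpval{y}\sexpval{z}$ (valid because $\sexpval{y},\sexpval{z} \geq 1$), raised to the $(\abs{a}+\abs{b}+\abs{c})$-th power. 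The only delicate point is the inductive stability of the functional form of the derivatives of $\gBe$, which is however straightforward once the base case from Lemma~\ref{appendix:asymptotics:properties_mag_flux:lem:boundedness_mag_flux} is in hand.
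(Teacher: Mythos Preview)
Your argument is correct and is precisely the ``direct consequence'' the paper has in mind: Faà di Bruno reduces the problem to bounding products of derivatives of $\gBe$, and your inductive step (each such derivative is affine in $(y,z)$ with $\BCont^{\infty}$ coefficients) is the right way to upgrade Lemma~\ref{appendix:asymptotics:properties_mag_flux:lem:boundedness_mag_flux} to higher order. The paper omits the proof entirely, so there is nothing further to compare.
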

%


\section{Existence of oscillatory integrals} 
\label{appendix:asymptotics:existence_osc_int}
To derive the adiabatic expansion, we have to ensure the existence of two types of oscillatory integrals, one is relevant for the $(n,k)$ term of the two-parameter expansion, the other is necessary to show existence of remainders and the $k$th term of the $\lambda$ expansion. 
\begin{lem}\label{appendix:asymptotics:existence_osc_int:lem:Lemma1}
	Let $f \in \Hoerrd{m}$, $\rho \in [0,1]$. Then for all multiindices $a, \alpha \in \N_0^d$ 
	\begin{align}
		G(X) := \frac{1}{(2\pi)^d} \int_{\Pspace} \dd Y \, e^{i \sigma(X,Y)} y^a \eta^{\alpha} (\Fs^{-1} f)(Y) = \bigl ( (-i \partial_{\xi})^{a} (+ i \partial_x)^{\alpha} f \bigr )(X) 
	\end{align}
	exists as an oscillatory integral and is in symbol class $\Hoerrd{m - \sabs{a} \rho}$. 
\end{lem}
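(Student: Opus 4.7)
The core identity to exploit is that the symplectic Fourier transform $\Fs$ intertwines multiplication by coordinates with partial differentiation. Concretely, since $\sigma(X,Y) = \xi\cdot y - x\cdot\eta$, one has
\begin{align*}
 y_j \, e^{i\sigma(X,Y)} = -i\,\partial_{\xi_j} e^{i\sigma(X,Y)},
 &&
 \eta_j \, e^{i\sigma(X,Y)} = +i\,\partial_{x_j} e^{i\sigma(X,Y)}.
\end{align*}
Iterating and formally moving the derivatives out of the integral (together with $\Fs\circ\Fs^{-1}=\id$) immediately yields the claimed identity
\begin{align*}
 G(X) = (-i\partial_{\xi})^{a}(i\partial_{x})^{\alpha} \, \Fs(\Fs^{-1}f)(X) = (-i\partial_{\xi})^{a}(i\partial_{x})^{\alpha} f(X).
\end{align*}
So the task reduces to (i) giving meaning to $G$ as an oscillatory integral in the sense of Appendix~\ref{appendix:oscillatory_integrals} and (ii) justifying the interchange of $\partial_X$ with the oscillatory integration.

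For (i), I would construct an $L$-operator of the usual Hörmander type adapted to the phase $\sigma(X,Y)$. Since $\sigma$ is bilinear, $\partial_{y_j}e^{i\sigma(X,Y)}=i\xi_j e^{i\sigma(X,Y)}$ and $\partial_{\eta_j}e^{i\sigma(X,Y)}=-ix_j e^{i\sigma(X,Y)}$, so a natural candidate is $L:=\sexpval{X}^{-2}(1-\Delta_Y)$, which satisfies $L^*e^{i\sigma(X,Y)}=e^{i\sigma(X,Y)}$ pointwise for $X\neq 0$. Applying $L^N$ and integrating by parts converts the integrand into $\sexpval{X}^{-2N}(1-\Delta_Y)^N\bigl(y^a\eta^\alpha(\Fs^{-1}f)(Y)\bigr)$. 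Because $f\in\Hoerrd{m}$ is a tempered distribution whose inverse symplectic Fourier transform acts continuously on Schwartz space by duality, the derivatives of $y^a\eta^\alpha(\Fs^{-1}f)$ are controlled in $\Schwartz'(\Pspace)$, and for $N$ sufficiently large the integral is absolutely convergent in the distributional pairing against test functions in $X$. This is the standard prescription \eqref{appendix:oscillatory_integrals:eqn:definition_osc_int} and it makes $G$ a well-defined tempered distribution on $\Pspace$.

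For (ii), once existence of $G$ is in place the interchange of differentiation in $X$ with the oscillatory integral follows from the general fact (property~(ii) on p.~\pageref{appendix:oscillatory_integrals:eqn:definition_osc_int}) that oscillatory integrals depending smoothly on a parameter may be differentiated under the integral sign. Alternatively, one can proceed by the standard regularization $\chi(\varepsilon Y)$ with $\chi\in\Schwartz(\Pspace)$, $\chi(0)=1$: for each $\varepsilon>0$ the integral is absolutely convergent and ordinary Lebesgue arguments justify the differentiation; then the limit $\varepsilon\searrow 0$ exists in $\Schwartz'$ and yields the identity $G=(-i\partial_\xi)^a(i\partial_x)^\alpha f$ pointwise in $X$, since the right-hand side is a smooth function.

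Finally, the symbol class statement is immediate from the defining seminorm estimates of $\Hoerrd{m}$: each $\partial_{\xi_j}$ improves $\xi$-decay by $\rho$, while the $x$-derivatives do not worsen the order (the lemma is applied in the $\delta=0$ regime of Chapter~\ref{asymptotics}, where $\Hoerrd{m}=\Hoermr{m}{\rho}$), so $(-i\partial_\xi)^a(i\partial_x)^\alpha f\in\Hoerrd{m-\sabs{a}\rho}$. I expect the main technical subtlety to be (i): keeping careful track of seminorm bounds for $\Fs^{-1}f$ acting on Schwartz functions in order to verify that $L^N$-integration by parts indeed produces an absolutely convergent integral for large enough $N$, rather than the routine algebraic manipulation in the identity itself.
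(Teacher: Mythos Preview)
Your identification of the key Fourier intertwining identity and the symbol-class conclusion are correct, and this is exactly what the paper uses. The difference lies in how you justify existence of $G$ as an oscillatory integral.

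The paper bypasses the Hörmander $L$-operator machinery entirely and works directly by duality: it observes that $G = \Fs \, \hat{y}^a \hat{\eta}^\alpha \, \Fs f$, where both $\Fs$ and multiplication by polynomials extend to $\Schwartz'(\Pspace)$ by duality. Pairing with a test function $\varphi \in \Schwartz(\Pspace)$ and pushing everything onto $\varphi$ gives $(G,\varphi) = (f, (+i\partial_\xi)^a(-i\partial_x)^\alpha \varphi) = ((-i\partial_\xi)^a(+i\partial_x)^\alpha f, \varphi)$ in one line. This is clean precisely because the ``amplitude'' $y^a\eta^\alpha(\Fs^{-1}f)(Y)$ is not a smooth function but a tempered distribution, so the textbook $L$-operator scheme (which is designed for $\Cont^\infty$ amplitudes of polynomial growth) is not the natural tool here.

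Your step~(i) can be made to work, but notice that your operator $L=\sexpval{X}^{-2}(1-\Delta_Y)$ produces decay in $X$, not in $Y$; what you are really doing when you pair with $\varphi(X)$ and integrate $X$ first is exactly the paper's duality computation in disguise. The $(1-\Delta_Y)^N$ acting on $y^a\eta^\alpha\Fs^{-1}f$ stays in $\Schwartz'$ and never becomes an integrable \emph{function} in $Y$, so the phrase ``absolutely convergent integral'' in your write-up should be replaced by ``well-defined distributional pairing''. Once you make that replacement, your argument and the paper's are the same.
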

\begin{proof}
	Since $f$ is a function of tempered growth, we can consider it as an element of $\Schwartz'(\R^{2d})$. Then, we can rewrite $G$ as $G = \Fs \hat{x}^a \hat{\xi}^{\alpha} \Fs$ where $\hat{x}$ and $\hat{\xi}$ are the multiplication operators initially defined on $\Schwartz(\R^{2d})$ which are extended to tempered distributions by duality. Then for any $\varphi \in \Schwartz(\R^{2d})$, we have 
	\begin{align*}
		\bigl ( G , \varphi \bigr ) &= \bigl ( \Fs \hat{x}^a \hat{\xi}^{\alpha} \Fs f , \varphi \bigr ) 
		= \bigl ( f , \Fs \hat{x}^a \hat{\xi}^{\alpha} \Fs \varphi \bigr ) 
		= \bigl ( f , (+i \partial_{\xi})^a (-i \partial_x)^{\alpha} \varphi \bigr ) 
		\\
		&= \bigl ( (-i \partial_{\xi})^a (+i \partial_x)^{\alpha} f , \varphi \bigr ) 
	\end{align*}
	where $( \cdot , \cdot )$ denotes the usual duality bracket. 
	
	Thus, the integral exists as an oscillatory integral. $G = (-i \partial_{\xi})^a (+i \partial_x)^{\alpha} f$ is also in the correct symbol class, namely $\Hoerrd{m - \sabs{a} \rho}$, and the lemma has been proven. 
\end{proof}
The next corollary is an immediate consequence and contains the relevant result for the term-by-term expansion of the magnetic product. 
\begin{cor}\label{appendix:asymptotics:existence_osc_int:Lemma2}
	Let $f \in \Hoerrd{m_1}$, $g \in \Hoerrd{m_2}$, $\rho \in [0,1]$ and $a , \alpha, b, \beta \in \N_0^d$ be arbitrary multiindices. Then for all functions $B \in \mathcal{BC}^{\infty}(\R^d)$ the oscillatory integral 
	\begin{align}
		G(X) := \frac{1}{(2\pi)^{2d}} \int_{\Pspace} \dd Y \int_{\Pspace} \dd Z e^{i \sigma(X,Y+Z)} B(x) \, y^a \eta^{\alpha} (\Fs^{-1} f)(Y) \, z^b \zeta^{\beta} (\Fs^{-1} g)(Z) 
	\end{align}
	exists, is in symbol class $\Hoerrd{m_1 + m_2 - (\sabs{a} + \sabs{b}) \rho}$ and yields 
	\begin{align}
		B(x) \, \bigl ( (-i \partial_{\xi})^a (+i \partial_x)^{\alpha} f \bigr )(X) \, \bigl ( (-i \partial_{\xi})^b (+i \partial_x)^{\beta} g \bigr )(X) 
		. 
	\end{align}
\end{cor}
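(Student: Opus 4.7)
The plan is to reduce the double oscillatory integral to a product of two single oscillatory integrals and then invoke Lemma~\ref{appendix:asymptotics:existence_osc_int:lem:Lemma1} on each factor. The key observation is that the phase separates, $e^{i \sigma(X,Y+Z)} = e^{i \sigma(X,Y)} \, e^{i \sigma(X,Z)}$, while $B(x)$ is independent of the integration variables $Y$ and $Z$, and the remaining integrand splits cleanly into a $Y$-part $y^a \eta^{\alpha} (\Fs^{-1} f)(Y)$ and a $Z$-part $z^b \zeta^{\beta} (\Fs^{-1} g)(Z)$. So formally one expects
\begin{align*}
	G(X) = B(x) \, \bigl ( (-i \partial_{\xi})^a (+i \partial_x)^{\alpha} f \bigr )(X) \, \bigl ( (-i \partial_{\xi})^b (+i \partial_x)^{\beta} g \bigr )(X) ,
\end{align*}
with each of the two parenthesized factors produced by Lemma~\ref{appendix:asymptotics:existence_osc_int:lem:Lemma1}.

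First, I would make the factorization rigorous by regularization. Pick $\chi \in \Schwartz(\Pspace)$ with $\chi(0) = 1$ and insert the cutoff $\chi(\varepsilon Y) \, \chi(\varepsilon Z)$ into the integrand. For each $\varepsilon > 0$ the resulting integral is absolutely convergent, so Fubini applies and the integral literally factors as $B(x) \cdot G_1^{\varepsilon}(X) \cdot G_2^{\varepsilon}(X)$, where $G_j^{\varepsilon}$ is the analogous regularization of the single-variable integral treated in Lemma~\ref{appendix:asymptotics:existence_osc_int:lem:Lemma1}. That lemma identifies the $\varepsilon \to 0$ limit of each $G_j^{\varepsilon}$ as a genuine Hörmander symbol, namely $G_1 = (-i \partial_{\xi})^a (+i \partial_x)^{\alpha} f \in \Hoerrd{m_1 - \sabs{a} \rho}$ and $G_2 = (-i \partial_{\xi})^b (+i \partial_x)^{\beta} g \in \Hoerrd{m_2 - \sabs{b} \rho}$, with convergence locally uniform in $X$. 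Passing to the limit $\varepsilon \to 0$ in the factored expression yields both the existence of the oscillatory integral $G(X)$ and the desired identity.

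The symbol-class assertion then follows from routine properties of Hörmander classes: since $B \in \BCont^{\infty}(\R^d)$ may be regarded as an element of $\Hoerrd{0}$ independent of $\xi$, and since the product of symbols satisfies $\Hoerrd{s_1} \cdot \Hoerrd{s_2} \subseteq \Hoerrd{s_1 + s_2}$ (\cf~Proposition~\ref{psiDO_reloaded:PsiD_theory:anisotropic_symbol_classes:prop:properties_anisotropic_symbol_classes}~(ii) with $\Alg = \BCont^{\infty}$), we get $G \in \Hoerrd{m_1 + m_2 - (\sabs{a} + \sabs{b}) \rho}$ as claimed.

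The one point requiring care is the legitimacy of the factorization at the level of the oscillatory integral itself, \ie independently of regularization. This is transparent once one notes that the defining $L$-operator recipe for the double integral can be chosen in product form $L = L_Y \, L_Z$, where $L_Y$ and $L_Z$ are the operators furnished by Hörmander's lemma for the $Y$- and $Z$-oscillatory integrals respectively. With this choice, iterated integration by parts in the double integral produces exactly the Cartesian product of the integration-by-parts recipes from Lemma~\ref{appendix:asymptotics:existence_osc_int:lem:Lemma1}, so the factorization is structural and not merely a limit artifact. Beyond this the argument is pure bookkeeping.
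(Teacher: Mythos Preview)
Your proof is correct and follows exactly the approach the paper intends: the paper remarks explicitly that the corollary is proved by writing the double integral as a product of two independent single integrals and applying the preceding lemma to each factor. Your regularization argument and the remark about choosing the $L$-operators in product form make precise what the paper leaves implicit as an ``immediate consequence.''
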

In the proof of Corollary~\ref{appendix:asymptotics:existence_osc_int:Lemma2} we have used that we could write the integrals as a \emph{product} of two independent integrals. There is, however, a second relevant type of oscillatory integral that cannot be `untangled.' Fortunately, we only need to ensure their existence and not evaluate them explicitly. Again, we will start with a simpler integral over only one phase space variable and then extend the ideas to the full integral in a corollary. 
\begin{lem}\label{appendix:asymptotics:existence_osc_int:lem:remainder}
	Assume $f \in \Hoerrd{m_1}$, $g \in \Hoerrd{m_2}$, $\rho \in [0,1]$, $\eps \in (0,1]$ and $\tau , \tau' \in [0,1]$. Furthermore, let $G_{\tau'} \in \mathcal{BC}^{\infty} \bigl ( \R^d_x,\mathcal{C}^{\infty}_{\mathrm{pol}}(\R^d_y \times \R^d_z) \bigr )$ be such that for all $c , c' , c'' \in \N_0^d$ 
	\begin{align*}
		\babs{\partial_x^c \partial_y^{c'} \partial_z^{c''} G_{\tau'}(x,y,z)} \leq C_{c c' c''} \bigl ( \sexpval{y} + \sexpval{z} \bigr )^{\sabs{c} + \sabs{c'} + \sabs{c''}} 
	\end{align*}
	holds for some finite constant $C_{c c' c''} > 0$. Then for all $a , \alpha , b , \beta \in \N_0^d$ and $\tau , \tau' \in [0,1]$ 
	\begin{align}
		I_{\tau \tau'}(x,\xi) &:= \frac{1}{(2 \pi)^{2d}} \int_{\Pspace} \dd Y \int_{\Pspace} \dd Z \, e^{i \sigma(X,Y+Z)} \, e^{i \tau \frac{\eps}{2} \sigma(Y,Z)} 
		\cdot \notag \\
		&\cdot 
		G_{\tau'}(x,y,z) \, y^a \eta^{\alpha} z^b \zeta^{\beta} \, (\Fs f)(Y) \, (\Fs g)(Z) 
		\label{appendix:asymptotics:existence_osc_int:eqn:remainder}
	\end{align}
	exists as an oscillatory integral in $\Hoerrd{m_1 + m_2 - (\sabs{a} + \sabs{b}) \rho}$. The map $(\tau , \tau') \mapsto I_{\tau \tau'}$ is continuous. 
\end{lem}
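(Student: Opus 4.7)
The plan is to reduce the oscillatory integral to an absolutely convergent one via integration by parts, after first absorbing the polynomial factors $y^a\eta^\alpha z^b\zeta^\beta$ into the symbols $f$ and $g$ so that only the tempered-growth factor $G_{\tau'}$ and the twister remain to be handled.

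First, I would apply the core identity behind Lemma~\ref{appendix:asymptotics:existence_osc_int:lem:Lemma1} pointwise in $Y, Z$ to move the monomials inside the symplectic Fourier transforms:
\[
y^a\eta^\alpha(\Fs f)(Y) = \Fs f_1(Y), \qquad z^b\zeta^\beta(\Fs g)(Z) = \Fs g_1(Z),
\]
with $f_1 := (-i\partial_\xi)^a(i\partial_x)^\alpha f \in \Hoerrd{m_1 - \abs{a}\rho}$ and $g_1 := (-i\partial_\xi)^b(i\partial_x)^\beta g \in \Hoerrd{m_2 - \abs{b}\rho}$. Since $G_{\tau'}(x,y,z)$ depends on $Y, Z$ only through $y, z$, this manipulation is allowed at the level of the distributions $\Fs f, \Fs g$ and reduces matters to showing that the monomial-free integral
\[
J_{\tau\tau'}(X) := \frac{1}{(2\pi)^{2d}}\iint dY\, dZ\, e^{i\sigma(X,Y+Z)}\, e^{i\tau\frac{\eps}{2}\sigma(Y,Z)}\, G_{\tau'}(x,y,z)\,(\Fs f_1)(Y)(\Fs g_1)(Z)
\]
exists as an oscillatory integral and lies in $\Hoerrd{m_1+m_2-(\abs{a}+\abs{b})\rho}$.

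Next, regularizing by a cutoff $\chi(\delta Y)\chi(\delta Z)$ with $\chi \in \Schwartz(\Pspace)$ and $\chi(0)=1$, I would unfold the symplectic Fourier transforms of $f_1, g_1$. Since $G_{\tau'}$ is independent of $\eta, \zeta$, the $\eta, \zeta$-integrations can then be carried out explicitly, producing delta distributions that fix the auxiliary spatial variables to $x - \frac{\tau\eps}{2}z$ and $x + \frac{\tau\eps}{2}y$ respectively. Passing to the limit $\delta \to 0$ yields the compact representation
\[
J_{\tau\tau'}(X) = \frac{1}{(2\pi)^{2d}}\iint dy\, dz\iint d\eta'\, d\zeta'\, e^{-i(y\cdot\eta'+z\cdot\zeta')}\, G_{\tau'}(x,y,z)\, f_1\bigl(x-\tfrac{\tau\eps}{2}z,\xi+\eta'\bigr)\, g_1\bigl(x+\tfrac{\tau\eps}{2}y,\xi+\zeta'\bigr),
\]
and existence as an oscillatory integral would be established by integrating by parts with the operator
\[
L := \sexpval{y}^{-2M}(1 - \Delta_{\eta'})^M \, \sexpval{z}^{-2M}(1 - \Delta_{\zeta'})^M,
\]
which preserves the phase $e^{-i(y\cdot\eta'+z\cdot\zeta')}$. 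For $M$ large enough, the $\sexpval{y}^{-2M}\sexpval{z}^{-2M}$ decay dominates the polynomial growth of $G_{\tau'}$, while applying $\Delta_{\eta'}, \Delta_{\zeta'}$ to $f_1, g_1$ effectively reduces their $\xi$-orders by $2M\rho$ each, making the integrand absolutely integrable.

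Once absolute convergence is established, differentiating $J_{\tau\tau'}$ in $x$ and $\xi$ under the integral distributes $x$-derivatives across $G_{\tau'}, f_1, g_1$ (still bounded by the same polynomial in $y, z$ by hypothesis on $G_{\tau'}$) and generates $\xi$-derivatives of $f_1, g_1$ that lower Hörmander orders by the expected $\rho|\cdot|$, producing the seminorm estimates that characterize $\Hoerrd{m_1+m_2-(\abs{a}+\abs{b})\rho}$. Continuity of $(\tau,\tau')\mapsto I_{\tau\tau'}$ in the Fréchet topology of this symbol space then follows from Dominated Convergence applied to the absolutely convergent form. The main obstacle is the bookkeeping in this last step: one must choose $M$ large enough, independently of the seminorm index being estimated, to simultaneously overcome the polynomial growth of every $\partial^c_x\partial^{c'}_y\partial^{c''}_z G_{\tau'}$ (bounded by $(\sexpval{y}+\sexpval{z})^{\abs{c}+\abs{c'}+\abs{c''}}$ by hypothesis) and the $\xi$-growth of $f_1, g_1$ and their derivatives, uniformly in $\tau, \tau' \in [0,1]$. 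Tracking how many $\eta', \zeta'$-derivatives are traded for $\xi$-decay versus how many $y, z$-derivatives are consumed in balancing $G_{\tau'}$ is the combinatorially delicate point of the argument.
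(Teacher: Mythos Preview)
Your reduction to the monomial-free case and the compact representation of $J_{\tau\tau'}$ are correct and match the paper's approach. The gap is in the integration-by-parts step: the operator
\[
L = \sexpval{y}^{-2M}(1-\Delta_{\eta'})^M\,\sexpval{z}^{-2M}(1-\Delta_{\zeta'})^M
\]
does \emph{not} make the integrand absolutely integrable. After transferring $(1-\Delta_{\eta'})^M$ onto $f_1(x-\tfrac{\tau\eps}{2}z,\xi+\eta')$, the result is still a symbol of order $m_1-\sabs{a}\rho$ in $\xi+\eta'$ because the identity term in $(1-\Delta_{\eta'})^M$ persists; you do not gain any decay in $\eta'$, and the $\eta',\zeta'$-integrals remain divergent for generic orders. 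Your claimed mechanism (``$\eta'$-derivatives reduce $\xi$-order by $2M\rho$'') is also vacuous when $\rho=0$, a case the lemma explicitly covers.

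What is missing is a \emph{second} round of integration by parts using the dual operators $\sexpval{\eta'}^{-2K_1}(1-\Delta_y)^{K_1}$ and $\sexpval{\zeta'}^{-2K_2}(1-\Delta_z)^{K_2}$, which produce the needed $\eta',\zeta'$-decay. The price is that $\Delta_y,\Delta_z$ now hit $G_{\tau'}(x,y,z)$, the weights $\sexpval{y}^{-2M},\sexpval{z}^{-2M}$, and the shifted symbols $g_1(x+\tfrac{\tau\eps}{2}y,\cdot)$, $f_1(x-\tfrac{\tau\eps}{2}z,\cdot)$, generating additional polynomial growth in $y,z$ of order up to $2K_1+2K_2$ from the hypothesis on $G_{\tau'}$. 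This forces the two-step choice the paper makes: first fix $K_1,K_2$ large enough to render the $\eta',\zeta'$-integrals convergent (depending only on $m_1,m_2,\sabs{\nu}$), and \emph{then} choose $M$ (the paper's $N_1,N_2$) large enough to absorb the extra $y,z$-growth just created. The combinatorics you allude to in your last paragraph live precisely here, not in a trade-off between $\eta'$-derivatives and $\xi$-decay.
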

\begin{proof}
	Let us rewrite the integral first, the result will serve as a definition for the oscillatory integral $I_{\tau \tau'}$: 
	\begin{align}
		I_{\tau \tau'}(x,\xi) &= \frac{1}{(2\pi)^{4d}} \int_{\Pspace} \dd Y \int_{\Pspace} \dd \tilde{Y} \int_{\Pspace} \dd Z \int_{\Pspace} \dd \tilde{Z} \, \bigl ( (+i \partial_{\tilde{\eta}})^a (-i \partial_{\tilde{y}})^{\alpha} e^{i \sigma(X - \tilde{Y} , Y)} \bigr ) \, 
		\cdot \notag \\
		&\qquad \qquad \cdot 
		\bigl ( (+i \partial_{\tilde{\zeta}})^b (-i \partial_{\tilde{z}})^{\beta} e^{i \sigma(X - \tilde{Z} , Z)} \bigr ) 
		e^{i \tau \frac{\eps}{2} \sigma(Y,Z)} \, G_{\tau'}(x,y,z) \, f(\tilde{Y}) \, g(\tilde{Z}) 
		\notag \\
		&= \frac{1}{(2\pi)^{4d}} \int_{\Pspace} \dd Y \int_{\Pspace} \dd \tilde{Y} \int_{\Pspace} \dd Z \int_{\Pspace} \dd \tilde{Z} \, e^{i \sigma(X - \tilde{Y} , Y)} \, e^{i \sigma(X - \tilde{Z} , Z)} \, e^{i \tau \frac{\eps}{2} \sigma(Y,Z)} 
		\cdot \notag \\
		&\qquad \qquad \cdot 
		G_{\tau'}(x,y,z) \, \bigl ( (-i \partial_{\tilde{\eta}})^a (+i \partial_{\tilde{y}})^{\alpha} f \bigr )(\tilde{Y}) \, \bigl ( (-i \partial_{\tilde{\zeta}})^b (+i \partial_{\tilde{z}})^{\beta} g \bigr )(\tilde{Z}) 
		\label{appendix:asymptotics:existence_osc_int:eqn:remainder_derivatives_converted}
	\end{align}
	By assumption $\partial_x^{\alpha} \partial_{\xi}^a f \in \Hoerrd{m_1 - \abs{a} \rho}$ as well as $\partial_x^{\beta} \partial_{\xi}^b g \in \Hoerrd{m_2 - \abs{b} \rho}$ and we see that it suffices to consider the case $a = b = \alpha = \beta = 0$. In this particular case, we estimate all seminorms: let $n , \nu \in \N_0^d$. Then we have to bound 
	\begin{align*}
		\partial_x^n \partial_{\xi}^{\nu} I_{\tau \tau'}(x,\xi) &= \sum_{\substack{a + b + c = n \\ \alpha + \beta = \nu}} \frac{1}{(2\pi)^{4d}} \int_{\Pspace} \dd Y \int_{\Pspace} \dd \tilde{Y} \int_{\Pspace} \dd Z \int_{\Pspace} \dd \tilde{Z} \, \bigl ( \partial_{x}^a \partial_{\xi}^{\alpha} e^{i \sigma(X - \tilde{Y} , Y)} \bigr ) \, 
		\cdot \\
		&\qquad \qquad \qquad \cdot 
		\bigl ( \partial_{x}^b \partial_{\xi}^{\beta} e^{i \sigma(X - \tilde{Z} , Z)} \bigr ) \, e^{i \tau \frac{\eps}{2} \sigma(Y,Z)} 
		\partial_x^c G_{\tau'}(x,y,z) \, f(\tilde{Y}) \, g(\tilde{Z}) 
		\\
		&= \sum_{\substack{a + b + c = n \\ \alpha + \beta = \nu}} \frac{1}{(2\pi)^{4d}} \int_{\Pspace} \dd Y \int_{\Pspace} \dd \tilde{Y} \int_{\Pspace} \dd Z \int_{\Pspace} \dd \tilde{Z} \, e^{i \sigma(X - \tilde{Y} , Y)} \, e^{i \sigma(X - \tilde{Z} , Z)} \, 
		\cdot \\
		&\qquad \qquad \qquad \cdot 
		e^{i \tau \frac{\eps}{2} \sigma(Y,Z)} \, \partial_x^c G_{\tau'}(x,y,z) \, \partial_{\tilde{y}}^a \partial_{\tilde{\eta}}^{\alpha} f(\tilde{Y}) \, \partial_{\tilde{z}}^b \partial_{\tilde{\zeta}}^{\beta} g (\tilde{Z}) 
		\\
		&= \sum_{\substack{a + b + c = n \\ \alpha + \beta = \nu}} \int_{\R^d} \dd y \int_{{\R^d}^*} \dd \eta \int_{\R^d} \dd z \int_{{\R^d}^*} \dd \zeta \, e^{i \eta \cdot y} \, e^{i \zeta \cdot z} \, \partial_x^c G_{\tau'}(x,y,z) 
		\cdot \\
		&\qquad \qquad \qquad \cdot 
		\partial_x^a \partial_{\xi}^{\alpha} f \bigl ( x - \tfrac{\tau \eps}{2} z , \xi - \eta \bigr ) \, \partial_x^b \partial_{\xi}^{\beta} g \bigl ( x + \tfrac{\tau \eps}{2} y , \xi - \zeta \bigr ) 
		. 
	\end{align*}
	from above by an integrable function. To do that, we insert powers of $\sexpval{y}^{-2}$, $\sexpval{z}^{-2}$, $\sexpval{\eta}^{-2}$ and $\sexpval{\zeta}^{-2}$ via the usual trick, \eg $\sexpval{y}^{-2} (1 - \Delta_{\eta}) e^{i \eta \cdot y} = e^{i \eta \cdot y}$. To simplify notation, we set $L_y := 1 - \Delta_y$; $L_z$, $L_{\eta}$ and $L_{\zeta}$ are defined analogously. Then, we have for any non-negative integers $N_1 , N_2 , K_1 , K_2$ 
	\begin{align}
		\partial_x^n \partial_{\xi}^{\nu} &I_{\tau \tau'}(x,\xi) = 
		\negmedspace \negmedspace 
		\sum_{\substack{a + b + c = n \\ \alpha + \beta = \nu}} 
		\negmedspace \negmedspace 
		(2\pi)^{-2d} \int_{\Pspace} \dd Y \int_{\Pspace} \dd Z \, \bigl ( \sexpval{y}^{-2N_1} L_{\eta}^{N_1} e^{i \eta \cdot y} \bigr ) \, \bigl ( \sexpval{z}^{-2N_2} L_{\zeta}^{N_2} e^{i \zeta \cdot z} \bigr ) \, 
		\cdot \notag \\
		&
		\negmedspace \negmedspace \negmedspace \negmedspace \negmedspace \negmedspace \negmedspace \negmedspace \negmedspace \negmedspace 
		\qquad \qquad \qquad \qquad \cdot 
		\partial_x^c G_{\tau'}(x,y,z) \, \partial_x^a \partial_{\xi}^{\alpha} f \bigl ( x - \tfrac{\tau \eps}{2} z , \xi - \eta \bigr ) \, \partial_x^b \partial_{\xi}^{\beta} g \bigl ( x + \tfrac{\tau \eps}{2} y , \xi - \zeta \bigr ) 
		\displaybreak[2] \notag \\
		&= 
		\negmedspace \negmedspace \negmedspace \negmedspace \negmedspace \negmedspace \negmedspace \negmedspace \negmedspace \negmedspace 
		\sum_{\substack{a + b + c = n \\ \alpha + \beta = \nu \\ \sabs{\alpha'} \leq 2 N_1 , \, \sabs{\beta'} \leq 2 N_2}} \negmedspace \negmedspace \negmedspace \negmedspace \negmedspace \negmedspace \negmedspace \negmedspace 
		C_{\alpha' \beta'} \int_{\Pspace} \dd Y \int_{\Pspace} \dd Z \, \bigl ( \sexpval{\eta}^{-2K_1} L_{y}^{N_1} e^{i \eta \cdot y} \bigr ) \, \bigl ( \sexpval{\zeta}^{-2K_2} L_{z}^{N_2} e^{i \zeta \cdot z} \bigr ) 
		\cdot \notag \\
		&
		\negmedspace \negmedspace \negmedspace \negmedspace \negmedspace \negmedspace \negmedspace \negmedspace \negmedspace \negmedspace 
		\qquad \qquad \qquad \qquad \cdot 
		\sexpval{y}^{-2N_1} \, \sexpval{z}^{-2N_2} \, \partial_x^c G_{\tau'}(x,y,z) 
		\cdot \notag \\
		&
		\negmedspace \negmedspace \negmedspace \negmedspace \negmedspace \negmedspace \negmedspace \negmedspace \negmedspace \negmedspace 
		\qquad \qquad \qquad \qquad \cdot 
		\partial_x^a \partial_{\xi}^{\alpha + \alpha'} f \bigl ( x - \tfrac{\tau \eps}{2} z , \xi - \eta \bigr ) \, \partial_x^b \partial_{\xi}^{\beta + \beta'} g \bigl ( x + \tfrac{\tau \eps}{2} y , \xi - \zeta \bigr ) 
		\displaybreak[2] \notag \\
		&= 
		\negmedspace \negmedspace \negmedspace \negmedspace \negmedspace \negmedspace \negmedspace \negmedspace \negmedspace \negmedspace 
		\sum_{\substack{a + b + c = n \\ \alpha + \beta = \nu \\ \sabs{\alpha'} \leq 2 N_1 , \, \sabs{\beta'} \leq 2 N_2 \\ \sabs{a'} + \sabs{b'} + \sabs{c'} \leq 2 K_1 \\ \sabs{a''} + \sabs{b''} + \sabs{c''} \leq 2 K_2 }} 
		\negmedspace \negmedspace \negmedspace \negmedspace \negmedspace \negmedspace \negmedspace \negmedspace \negmedspace \negmedspace 
		C^{a b c \alpha \beta}_{\alpha' \beta' a' b' c' a'' b'' c''} (\eps \tau)^{\sabs{a''} + \sabs{b'}} \int_{\Pspace} \dd Y \int_{\Pspace} \dd Z \, e^{i \eta \cdot y} \, e^{i \zeta \cdot z} \, \sexpval{y}^{-2N_1} \, \sexpval{z}^{-2N_2} 
		\cdot \notag \\
		&
		\negmedspace \negmedspace \negmedspace \negmedspace \negmedspace \negmedspace \negmedspace \negmedspace \negmedspace \negmedspace 
		\qquad \qquad \qquad \qquad \cdot 
		\sexpval{\eta}^{-2K_1} \, \sexpval{\zeta}^{-2K_2} \, \varphi_{N_1 a'}(y) \, \varphi_{N_2 b''}(z) \, \partial_x^c \partial_y^{x'} \partial_z^{c''} G_{\tau'}(x,y,z) 
		\cdot \notag \\
		&
		\negmedspace \negmedspace \negmedspace \negmedspace \negmedspace \negmedspace \negmedspace \negmedspace \negmedspace \negmedspace 
		\qquad \qquad \qquad \qquad \cdot 
		\partial_x^{a + a''} \partial_{\xi}^{\alpha + \alpha'} f \bigl ( x - \tfrac{\tau \eps}{2} z , \xi - \eta \bigr ) \, \partial_x^{b + b'} \partial_{\xi}^{\beta + \beta'} g \bigl ( x + \tfrac{\tau \eps}{2} y , \xi - \zeta \bigr ) 
		\label{appendix:existence:remainder:eqn:typical_int}
	\end{align}
	Here, the \emph{bounded} functions $\varphi_{N a}$ are defined by $\partial_x^a \sexpval{x}^{-2N} =: \sexpval{x}^{-2N} \, \varphi_{N a}(x)$ for all $N \in \N_0$, $a \in \N_0^d$, and the constants appearing in the sum are defined implicitly. We now estimate the absolute value of each of the terms in the integral in order to find $N_1 , N_2 , K_1 , K_2 \in \N_0$ large enough so that the right-hand side of the above consists of a finite sum of integrable functions. Using the assumptions on $G_{\tau'}$ and the standard estimate $\sexpval{\xi - \eta}^m \leq \sexpval{\xi}^m \sexpval{\eta}^{\sabs{m}}$, we can bound the integrand of the right-hand side of~\eqref{appendix:existence:remainder:eqn:typical_int} in absolute value by 
	\begin{align*}
		&\sum_{\substack{a + b + c = n \\ \alpha + \beta = \nu \\ \sabs{\alpha'} \leq 2 N_1 , \, \sabs{\beta'} \leq 2 N_2 \\ \sabs{a'} + \sabs{b'} + \sabs{c'} \leq 2 K_1 \\ \sabs{a''} + \sabs{b''} + \sabs{c''} \leq 2 K_2 }} \negmedspace \negmedspace \negmedspace \negmedspace \negmedspace \negmedspace \negmedspace \negmedspace \negmedspace 
		\tilde{C}^{a b c \alpha \beta}_{\alpha' \beta' a' b' c' a'' b'' c''} \, \sexpval{y}^{-2N_1 + \sabs{c} + \sabs{c'} + \sabs{c''}} \sexpval{z}^{-2N_2 + \sabs{c} + \sabs{c'} + \sabs{c''}} 
		\cdot \\
		&\qquad \qquad \qquad \qquad \cdot 
		\sexpval{\eta}^{-2K_1} \sexpval{\zeta}^{-2K_2} 
		\sexpval{\xi - \eta}^{m_1 - (\sabs{\alpha} + \sabs{\alpha'}) \rho} \, \sexpval{\xi - \zeta}^{m_2 - (\sabs{\beta} + \sabs{\beta'}) \rho}
		\\
		&\qquad \qquad \leq 
		C \expval{\xi}^{m_1 + m_2 - \abs{\nu} \rho} \, \sexpval{y}^{-2 N_1 + \abs{n} + 2 K_1 + 2 K_2} \, \sexpval{z}^{-2 N_2 + \abs{n} + 2 K_1 + 2 K_2} \, 
		\cdot \\
		&\qquad \qquad \qquad \qquad \cdot 
		\sum_{\alpha + \beta = \nu} \sexpval{\eta}^{-2 K_1 + \sabs{m_1 - \sabs{\alpha} \rho}} \, \sexpval{\zeta}^{-2 K_2 + \sabs{m_2 - \sabs{\beta} \rho}} 
		\\
		&\qquad \qquad \leq 
		\tilde{C} \sexpval{\xi}^{m_1 + m_2 - \sabs{\nu} \rho} \, \sexpval{y}^{-2 N_1 + \sabs{n} + 2 K_1 + 2 K_2} \, \sexpval{z}^{-2 N_2 + \sabs{n} + 2 K_1 + 2 K_2} \, 
		\cdot \\
		&\qquad \qquad \qquad \qquad \cdot 
		\sexpval{\eta}^{-2 K_1 + \sabs{m_1} + \sabs{\nu} \rho} \, \sexpval{\zeta}^{-2 K_2 + \sabs{m_2} + \sabs{\nu} \rho} 
		. 
	\end{align*}
	Choosing $K_1$ and $K_2$ such that $-2 K_j + \sabs{m_j} + \sabs{\nu} \rho < - d$, $j = 1 , 2$, ensures integrability in $\eta$ and $\zeta$. Now that $K_1$ and $K_2$ are fixed, we choose $N_1$ and $N_2$ such that $- 2 N_j + \abs{n} + 2 K_1 + 2 K_2 < - d$ and the right-hand side of the above is an integrable function in $y$, $\eta$, $z$ and $\zeta$ which dominates the absolute value of~\eqref{appendix:existence:remainder:eqn:typical_int}. Thus, we have shown 
	\begin{align*}
		\babs{\partial_x^n \partial_{\xi}^{\nu} I_{\tau \tau'}(x,\xi)} \leq C_{n \nu} \sexpval{\xi}^{m_1 + m_2 - \sabs{\nu} \rho} 
	\end{align*}
	for all $n , \nu \in \N_0^d$ and hence $I_{\tau \tau'}$ exists in $\Hoerrd{m_1 + m_2}$ if the exponents of $y$, $\eta$, $z$ and $\zeta$ in equation~\eqref{appendix:asymptotics:existence_osc_int:eqn:remainder} all vanish, $a = \alpha = b = \beta = 0$. Similarly, for general $a , \alpha , b , \beta \in \N_0^d$, we conclude $I_{\tau \tau'} \in \Hoerrd{m_1 + m_2 - (\sabs{a} + \sabs{b}) \rho}$. Since the above bounds are uniform in $\tau$ and $\tau'$, the continuity of $(\tau , \tau') \mapsto I_{\tau \tau'}$ in the Fréchet topology of $\Hoerrd{m_1 + m_2 - (\sabs{a} + \sabs{b}) \rho}$ follows from dominated convergence. 
\end{proof}
%



\backmatter


\printbibliography
\printindex

\end{document}
